\renewcommand{\theequation}{\arabic{section}.\arabic{equation}}
\DeclareMathOperator{\re}{Re}
\DeclareMathOperator{\sgn}{sgn}
\DeclareMathOperator{\spa}{span}
\newtheorem{theorem}{Theorem}[section]
\newtheorem{corollary}[theorem]{Corollary}
\newtheorem{proposition}[theorem]{Proposition}
\newtheorem{lemma}[theorem]{Lemma}
\newtheorem{definition}[theorem]{Definition}
\newtheorem{remark}[theorem]{Remark}
\setlist[enumerate]{leftmargin=*} 
\begin{document}

\title[3D Fermi gas in the Gross-Pitaevskii regime]{The second order Huang-Yang formula to the 3D Fermi gas: the Gross-Pitaevskii regime}
\date{\today}
\author[X. Chen]{Xuwen Chen}
\address{Department of Mathematics, University of Rochester, Rochester, NY 14627, USA}

\email{xuwenmath@gmail.com}

\author[J. Wu]{Jiahao Wu}
\address{School of Mathematical Sciences, Peking University, Beijing, 100871, China}
\email{wjh00043@gmail.com}

\author[Z. Zhang]{Zhifei Zhang}
\address{School of Mathematical Sciences, Peking University, Beijing, 100871, China}

\email{zfzhang@math.pku.edu.cn}

\begin{abstract}
   For a system of $N$ Fermions of spin $1/2$, with its interaction potential of scattering length $a$, the classical Huang-Yang formula \cite{huangyang} states that the energy density $e(\rho)$ is of the form
  \begin{equation*}
    e(\rho)=\frac{3}{5}(3\pi^2)^{\frac{2}{3}}\rho^{\frac{5}{3}}+2\pi a\rho^2
    +\frac{12}{35}(11-2\ln2)3^{\frac{1}{3}}\pi^{\frac{2}{3}}a^2\rho^{\frac{7}{3}}
    +o(\rho^{\frac{7}{3}}).
  \end{equation*}
  We consider a general system of $N$ Fermions of spin $1/\mathbf{q}$ with the scattering length of the interaction potential at the scale $a\propto N^{-1}$, that is, in the Gross-Pitaevskii regime. We prove the 2nd order ground state energy approximation corresponding to the Huang-Yang formula. The thermodynamic limit case shares a similar logic, and could be dealt with in a separate paper.
\end{abstract}

\maketitle
\tableofcontents

\section{Introduction}\label{intro}
\par During 1950s, the formulations of the quantum-mechanical many-body problem for Boson or Fermion systems, and their physical interpretations, were famous problems among physicists. K.Huang, T.D.Lee and C.N.Yang investigated this problem, and have reached two famous formulae. For a system of $N$ Fermions of spin $1/2$, with half particles spin up and the other half spin down, and a two-body interaction potential of scattering length $a_0$, inside a box with edge length $L$, the classical Huang-Yang formula \cite{huangyang} predicts up to the second order that, the energy density $e(\rho)$, which is defined by the ground state energy $E_N$ per volume $L^3$, in the thermodynamic limit (i.e. in the limit where firstly, $N\to\infty$ with $\rho=N/L^3$ fixed, and then $\rho\to0$), is of the form
  \begin{equation}\label{no1}
    e(\rho)=\frac{3}{5}(3\pi^2)^{\frac{2}{3}}\rho^{\frac{5}{3}}+2\pi a_0\rho^2
    +\frac{12}{35}(11-2\ln2)3^{\frac{1}{3}}\pi^{\frac{2}{3}}a_0^2\rho^{\frac{7}{3}}
    +o(\rho^{\frac{7}{3}}).
  \end{equation}
For Boson systems, with particle density $\rho$ and scattering length $a_0$, Lee, Huang and Yang wrote down the energy density in the thermodynamic limit, in \cite{huangyang,LHY106,TDLEECNYANG112},
 \begin{equation}\label{no2}
     e(\rho)
     =4\pi\rho^2 a_0\left(1+\frac{128}{15\sqrt{\pi}}(\rho a_0^3)^{1/2}+
     o((\rho a_0^3)^{1/2})\right).
\end{equation}

\par For decades, mathematicians long to verify these two result rigorously up to the mathematical standard. Many important and interesting rigorous studies of Bosons were collected and produced by Lieb, Seiringer, Solovej and Yngvason in \cite{lieb2005mathematics}. For Fermions, \cite{2005fermiphy} gave a rigorous proof to the first order of Huang-Yang formula. Moreover, one can see \cite{workshopfermi} for many significant and pioneering references.

\par For (\ref{no2}), the second order Lee-Huang-Yang formula for Boson systems in the Gross-Pitaevskii regime has been mathematically proved in the first work \cite{2018Bogoliubov} by C.Boccato, C.Brennecke, S.Cenatiempo, and B.Schlein, after many preparations like in \cite{boccatoBrenCena2020optimal,BEC2018,BEC2002}. Then the second order Lee-Huang-Yang formula in the thermodynamics limit had been proved in \cite{bose2ndthermo1,bose2ndthermo2} by S.Fournais and J.P.Solovej, with delicate space localization and treatment of the boundary condition. For the low temperature $T>0$ case, \cite{haberberger2023,haberberger2024} gave the Lee-Huang-Yang formula, with localization of the volume. At the moment, (\ref{no1}) for the Fermion case is still open. For energy approximation to Fermion systems in the thermodynamics limit, \cite{dilutefermiBog,fermiupper,giacomelli2024} gave the result up to the first order via Bogoliubov theory, then very recently, an upper bound towards the second order Huang-Yang formula has been obtained in \cite{2024huangyangformulalowdensityfermi}. Particularly, the ground state energy expansion for spin-polarized Fermi gas in the thermodynamic regime can be seen in \cite{polarizedupper,polarizedlower}. Another important approximation problem is the mean-field limit. For Boson systems, \cite{meanfield2011seiR} first gave a proof of Bogoliubov theory in the mean field regime. Later on, \cite{spectrum} predicts the lowest part of the spectrum of the many-body Hamiltonian of Bosons in the mean-ﬁeld regime, covering a large class of interacting Bose gases. For Fermion systems, \cite{meanfieldfermi2019,meanfieldfermi2021} extracted the correlation energy of weakly interacting Fermi gases in high density regime, via the method of Bosonization\footnote{We are not using Bosonization in this paper.} developed recently. More recent works include, for example \cite{Seiringerfermi,Phanfermimeanfield,christiansen2024correlationenergyelectrongas}.


\par In this paper, we prove the second order behaviour of Fermion systems (\ref{no1}) in the \textbf{Gross-Pitaevskii regime}, in other words, the scattering length of the interaction potential is at the scale $aL^{-1}\propto N^{-1}$. During our study, we normalize $L=1$. The Gross-Pitaevskii (G-P) limit is not only highly relevant\footnote{The dynamics case of the GP limit has also generated many nice work, see for example, \cite{dynamicChenHol,dy2006,dy2007,dy2009,dy2010}.}, and has served as the base case to the thermodynamic limit in terms of mathematics, as its proof provides important instinct for the thermodynamic limit, but also practical in the aspect of experimental and computational physics. In fact, it could be considered as a diagonal case of the thermodynamic limit in the sense that $\lim_{\rho=1/N^2\to0}$, comparing to the original definition of thermodynamic limit $\lim_{\rho\to0}\lim_{N\to\infty,\rho\,\text{fixed}}$. Moreover, different from the Boson case, for Fermion systems, in both of the G-P and thermodynamic regimes, we can deduce a-priori excitation estimates (see Lemma \ref{lemma prori BEC} and \cite{2005fermiphy,dilutefermiBog}), for a family of approximate ground state $\{\psi_j\}$, say the \textbf{Gross-Pitaevskii} regime:
\begin{equation*}
 \langle\mathcal{N}_{ex}\psi_j,\psi_j\rangle\lesssim N^{\frac{2}{3}},
\end{equation*}
and for the \textbf{Thermodynamic} regime:
\begin{equation*}
 \langle\mathcal{N}_{ex}\psi_j,\psi_j\rangle\lesssim L^3\rho^{\frac{7}{6}}.
\end{equation*}
 Here $\mathcal{N}_{ex}$ is the number of excitation operator, which represents the number of excited particles of the Fermion system. These estimates make it possible for us to use directly the Bogoliubov renormalization techniques to improve this excitation estimate, which is the key to the lower bound. Therefore, the thermodynamic limit case shares a similar logic, and could be dealt with in a separate paper.

\par We consider a system of $N$ Fermions of spin $1/\mathbf{q}$ with $\mathbf{q}\in\mathbb{N}$ the number of spin states, trapped in the 3D torus $\Lambda_{L}=[-\frac{L}{2},\frac{L}{2}]^3\subset\mathbb{R}^3$ with periodic boundary conditions. Throughout the discussion of this paper, we always normalize $L=1$, and we may denote $\Lambda=[-\frac{1}{2},\frac{1}{2}]^3$. Denote the set of $\mathbf{q}$ spin states by $\mathcal{S}_\mathbf{q}=\{\varsigma_i,i=1,\dots,\mathbf{q}\}$. In particular, we let $\mathcal{S}_2=\{\uparrow,\downarrow\}$. For $i=1,\dots,N$, let ${z}_i=(x_i,\sigma_i)\in\Lambda_L\times\mathcal{S}_\mathbf{q}$ describes the position and the spin state of the i-th particle. The wave function of the system should be in the Hilbert space $L_a^2\big((\Lambda_L\times\mathcal{S}_\mathbf{q})^N\big)\eqqcolon \mathcal{H}^{\wedge N}$ consisting of functions that are anti-symmetric with respect to permutations of the N particles, which is appropriate for describing a system of Fermions. Our N-body Hamilton operator is given by
\begin{align}\label{Hamiltonian}
 H_{N}=\sum_{j=1}^{N}-\Delta_{{x}_j}+\sum_{1\leq i<j\leq N}v_{a}({x}_i-{x}_j)
\end{align}
acting on the Hilbert space $\mathcal{H}^{\wedge N}$ with
\begin{equation}\label{interaction potential}
  v_a({x})=\frac{1}{a^2}v\left(\frac{{x}}{a}\right)
\end{equation}
 and $a=N^{-1}$, i.e. in the Gross-Pitaevskii regime. We require the interaction potential $v$ to be non-negative, radially-symmetric and compactly supported in a 3D ball. Moreover, we assume $v$ has scattering length $\mathfrak{a}_0\geq0$ (therefore the scattering length of $v_a$ is $a\mathfrak{a}_0$). From here on out, we will always assume a function to be with these three properties whenever it is referred to as the \textit{interaction potential}.

\par The main object of our study is the ground state energy of $H_N$, which we denote by
\begin{equation}\label{ground state energy}
  E_{N,\mathbf{q}}=\inf_{\substack{\psi\in \mathcal{H}^{\wedge N}\\
\Vert\psi\Vert_2=1}}\langle H_N\psi,\psi\rangle.
\end{equation}
In fact, we mainly consider the case that there are $N_{\varsigma_i}$ Fermions of spin $\varsigma_i$, and $N=\sum_{i=1}^\mathbf{q}N_{\varsigma_i}$. Without loss of generality, we demand for each spin $\varsigma_i$, $N_{\varsigma_i}$ is at the scale of $N$. The admissible wave functions describing such a system form a subspace of $\mathcal{H}^{\wedge N}$ given by
\begin{equation}\label{subspace H^wedge N}
\begin{aligned}
  \mathcal{H}^{\wedge N}(\{N_{\varsigma_i}\})=\big\{\psi\in\mathcal{H}^{\wedge N}\vert\, &\text{${z}_j=(x_j,\sigma_j)\in\Lambda_L\times\mathcal{S}_\mathbf{q}$,}\\
   &\text{if there exists some fixed $1\leq i\leq q$}\\
   &\text{such that $\#\{\sigma_j=\varsigma_i\}\neq N_{\varsigma_i}$}
   \Rightarrow\psi(z_1,\dots,z_N)=0\big\}.
\end{aligned}
\end{equation}
Similarly, we denote the ground state energy of $H_N$ on $\mathcal{H}^{\wedge N}(\{N_{\varsigma_i}\})$ by
\begin{equation}\label{ground state energy N_i}
   E_{\{N_{\varsigma_i}\},\mathbf{q}}=\inf_{\substack{\psi\in \mathcal{H}^{\wedge N}(\{N_{\varsigma_i}\})\\
\Vert\psi\Vert_2=1}}\langle H_N\psi,\psi\rangle.
\end{equation}

\par For $\sigma\in \mathcal{S}_\mathbf{q}$, we write the Fermi ball\footnote{The Fermi ball is also known as the Fermi sea in some literature.} $B_F^{\sigma}=\{k\in2\pi\mathbb{Z}^3,\,
\vert k\vert\leq k_F^{\sigma}\}$ such that its cardinality $\#B_F^{\sigma}=N_\sigma$. For simplicity, in this paper we will always consider suitable $N_{\sigma}$ that precisely fills the Fermi ball\footnote{If we consider arbitrary $N_\sigma$ that may not completely fills the Fermi ball, we will encounter a complex algebraic correction term involving the Dedekind $\zeta$-function which is not of our interests (See Remark \ref{remark4}).}. For any integer $R>1$, we denote the set of integer points inside a 3-dimensional ball of radius $R^{1/2}$ by
\begin{equation}\label{lattice in 3d ball}
  \mathcal{B}_R\coloneqq\{x\in\mathbb{Z}^3,\,\big\vert\, \vert x\vert^2\leq R\}.
\end{equation}
As one can see in \cite{HeathBrown+1999+883+892}, as $R$ grows larger, the number of lattice points in $\mathcal{B}_R$ can be approximated by
\begin{equation}\label{approximate lattice 3d ball}
  \#\mathcal{B}_R=\frac{4}{3}\pi R^{\frac{3}{2}}+O(R^{\frac{21}{32}+\varepsilon})
\end{equation}
for any $\varepsilon>0$. It follows directly that
\begin{equation}\label{N_sigma approximate}
  N_{\sigma}=\frac{4}{3}\pi\Big(\frac{k_F^{\sigma}}{2\pi}\Big)^3+
  O\Big(\Big(\frac{k_F^{\sigma}}{2\pi}\Big)^{\frac{21}{16}+\varepsilon}\Big)
\end{equation}
for any $\varepsilon>0$ when $N_{\sigma}\to\infty$. Therefore we obtain
\begin{equation}\label{k_F^sigma}
  k_F^{\sigma}=(6\pi^2)^{\frac{1}{3}}N_{\sigma}^{\frac{1}{3}}
  +O(N_{\sigma}^{-\frac{11}{48}+\varepsilon}).
\end{equation}
We can always choose $\vert k_F^{\sigma}\vert^2\in4\pi^2\mathbb{Z}$, so that for any $k\notin B^\sigma_F$, we have the lower bound
\begin{equation}\label{lower bound k_F}
  \vert k\vert^2-\vert k_F^{\sigma}\vert^2\geq 4\pi^2.
\end{equation}
On the other hand, the number of lattice points on the 2-dimensional sphere $\partial\mathcal{B}_R$ (if nonempty), as $R$ grows larger, can be approximated by
\begin{equation}\label{approximate lattice 2d sphere}
  \#\partial\mathcal{B}_R=R^{\frac{1}{2}+\varepsilon}
\end{equation}
for any $\varepsilon>0$. This result can be seen in, for example, \cite[(1.1)]{Linniksergodicmethod}. As a direct consequence, the number of lattice points on the Fermi surface $\partial B_{F}^\sigma$ is of order $N_\sigma^{1/3+\varepsilon}$. This fact is crucial in our discussions below.

\par With the aforementioned notations, our main result is the following 2nd order Huang-Yang formula on the ground state energy, which is (\ref{no1}) in the Gross-Pitaevskii regime.
\begin{theorem}\label{core}
Let $H_N$ be given in (\ref{Hamiltonian}), assume $a=N^{-1}$, and $N=\sum_{i=1}^\mathbf{q}N_{\varsigma_i}$ such that for each spin $\varsigma_i$, $N_{\varsigma_i}$ is at the scale of $N$ and precisely fills the Fermi ball $B_F^{\varsigma_i}$, the following formula holds when $N$ is large enough:
  \begin{equation}\label{core formula}
  \begin{aligned}
    E_{\{N_{\varsigma_i}\},\mathbf{q}}=\sum_{i=1}^{\mathbf{q}}\sum_{k\in B_F^{\varsigma_i}}\vert k\vert^2
    +4\pi a\mathfrak{a}_0\sum_{i\neq j}N_{\varsigma_i}N_{\varsigma_j}+E^{(2)}+O(N^{\frac{1}{3}-\frac{1}{360}}).
  \end{aligned}
  \end{equation}
  with $E^{(2)}$ given by (\ref{rewrt E^(2)}), and we write it formally as
  \begin{equation}\label{E^(2)}
    E^{(2)}=2(4\pi a\mathfrak{a}_0)^2
    \sum_{\substack{\sigma,\nu\in\mathcal{S}_\mathbf{q}\\ \sigma\neq\nu}}
    \sum_{\substack{p,q\\k\neq0}}\Big(\frac{\chi_{p\in B^{\sigma}_F}\chi_{q\in B^{\nu}_F}}{2\vert k\vert^2}-\frac{
    \chi_{p-k\notin B^{\sigma}_F}\chi_{q+k\notin B^{\nu}_F}
    \chi_{p\in B^{\sigma}_F}\chi_{q\in B^{\nu}_F}}{\big(\vert q+k\vert^2+\vert p-k\vert^2-\vert q\vert^2-\vert p\vert^2\big)}\Big).
  \end{equation}
  The order of $E^{(2)}$ is not greater than $N^{1/3+\varepsilon}$ for any fixed $\varepsilon>0$ small enough as long as $N$ is correspondingly large enough.
\end{theorem}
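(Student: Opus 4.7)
The plan is to derive matching upper and lower bounds for $E_{\{N_{\varsigma_i}\},\mathbf{q}}$ that agree up to the asserted error $O(N^{1/3-1/360})$. Since this error is smaller than $E^{(2)}\sim N^{1/3+\varepsilon}$ itself, I will need a two-step renormalization: first extract the leading Hartree energy $4\pi a\mathfrak{a}_0\sum_{i\neq j}N_{\varsigma_i}N_{\varsigma_j}$, then resolve the next-order correction. The organizing framework is a particle-hole (excitation) unitary $R$ that takes the filled Slater determinant on $\bigcup_i B_F^{\varsigma_i}$ to the vacuum $\Omega$ of an excitation Fock space, swapping creation and annihilation operators below the Fermi surface so that $\mathcal{N}_{ex}$ counts excited modes. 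Under $R$ the Hamiltonian decomposes as
\begin{equation*}
R^{\ast} H_N R = E_{\mathrm{kin}} + \mathcal{Q}_N + \mathcal{C}_N + \mathcal{V}_N,
\end{equation*}
where $E_{\mathrm{kin}}=\sum_i\sum_{k\in B_F^{\varsigma_i}}|k|^2$ and $\mathcal{Q}_N,\mathcal{C}_N,\mathcal{V}_N$ are quadratic, cubic, and quartic in the excitation operators. The a priori bound $\langle \mathcal{N}_{ex}\psi,\psi\rangle \lesssim N^{2/3}$ from Lemma \ref{lemma prori BEC} controls every subsequent estimate.

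For the upper bound I will take a trial state of the form $R\, e^{B}\Omega$ where $B$ is anti-Hermitian and quadratic in the excitation operators, with kernel built from the solution $f$ of the zero-energy scattering equation for $v_a$, restricted to pairs with one momentum inside $B_F^\sigma$ and the other outside $B_F^\nu$ for different spin $\sigma\neq\nu$. Expanding $\langle H_N R e^B\Omega, R e^B\Omega\rangle$ via Baker-Campbell-Hausdorff, the diagonal pairing produces the Hartree contribution, while the cross-terms between $\mathcal{V}_N$ and the kinetic gap yield precisely the second-order Born combination in (\ref{E^(2)}), up to errors controllable by weighted Sobolev norms of $f$ and moments of $\mathcal{N}_{ex}$ on the trial state.

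For the lower bound I will conjugate $R^\ast H_N R$ by the same $e^B$. The key algebraic identity is that $\mathcal{V}_N + [\mathcal{Q}_N^{\mathrm{kin}},B] + \tfrac{1}{2}[[\mathcal{Q}_N^{\mathrm{kin}},B],B]$ extracts the Hartree term and replaces $\mathcal{V}_N$ by a renormalized quadratic interaction of effective coupling $8\pi a\mathfrak{a}_0$ between excitations of opposite spin. Diagonalizing the resulting renormalized quadratic Hamiltonian—perturbatively, because the kinetic gap outside the Fermi sea is of order $N^{2/3}$ while the interaction strength is of order $aN\sim 1$—yields $E^{(2)}$ as the second-order shift. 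Simultaneously, the renormalization improves the excitation bound beyond $N^{2/3}$, which is what allows the cubic operator $\mathcal{C}_N$ and the remaining quartic remainders to be absorbed as subleading corrections; $\mathcal{C}_N$ itself is split into a high-momentum piece absorbed by the renormalized kinetic form and a low-momentum piece requiring a separate Bogoliubov-type step, following the Bosonic strategy of \cite{2018Bogoliubov} adapted to the anti-symmetric setting.

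The main obstacle is tracking the error at the scale $N^{1/3-1/360}$, one order below the correlation energy itself. This demands: (i) sharp control of lattice points on the Fermi surface, since the indicator product $\chi_{p\in B^{\sigma}_F}\chi_{p-k\notin B^{\sigma}_F}$ in (\ref{E^(2)}) is sensitive to the surface geometry captured by (\ref{approximate lattice 2d sphere}) and forces a careful separation of surface and bulk contributions; (ii) a Bogoliubov kernel that is simultaneously accurate enough to renormalize $\mathcal{V}_N$ to the required order and regular enough that $e^B$ preserves higher moments of $\mathcal{N}_{ex}$; and (iii) verifying that in the Fermion setting only inter-spin pairs ($\sigma\neq\nu$) survive at the scale of $E^{(2)}$, because anti-symmetry and exchange cancellation force intra-spin cubic and quartic residues to contribute at a strictly smaller order than $N^{1/3-1/360}$.
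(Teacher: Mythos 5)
Your high-level plan is in the same family as the paper's, but there are two structural gaps that would make the argument fail at the claimed precision $O(N^{1/3-1/360})$, which sits only $N^{1/360}$ below $E^{(2)}$ itself.

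First, the upper-bound trial state $R\,e^B\Omega$ with a single quadratic renormalization cannot produce $E^{(2)}$. The kernel $\eta_k$ coming from the scattering equation depends only on $k$; after $e^{-B}H_Ne^B$ you still have uncancelled operators $\mathcal{V}'_{21}+\Omega$ of the form $\sum (W_k + \eta_k\,k\cdot(q-p))\,a^*a^*aa$, and what cancels these is a second transformation $e^{\tilde B}$ whose kernel
\[
\xi_{k,q,p}^{\nu,\sigma}=\frac{-\big(W_k+\eta_kk(q-p)\big)}{\tfrac12\big(|q+k|^2+|p-k|^2-|q|^2-|p|^2\big)}
\]
depends on all of $k,p,q$. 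Half of $E^{(2)}$ (the paper's $C_3$) comes from $\langle[\mathcal{V}'_{21}+\Omega,\tilde B]\rangle$, so a trial state without $\tilde B$ misses it entirely, leaving an error of order $N^{1/3+\varepsilon}$, not $N^{1/3-1/360}$. Your phrase ``diagonalizing the resulting renormalized quadratic Hamiltonian perturbatively'' gestures at this, but in the fermionic excitation Fock space the relevant transformation is quartic in the $a$'s (two particles, two holes, all crossing the Fermi surface) with $p,q$-dependent coefficients — not a standard quadratic Bogoliubov diagonalization — and the trial state has to carry the same $e^{\tilde B}$ if the upper and lower bounds are to close.

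Second, the role of the cubic renormalization is not merely to absorb $\mathcal{C}_N$ as a remainder. After $e^{B}$, a quadratic-in-$\mathcal{N}_{ex}$ piece $Q_{\mathcal{G}_N}=-\sum_k\hat v_k\eta_k\sum_{\sigma\neq\nu}N_\sigma\mathcal{N}_{ex,\nu}$ appears, which on approximate ground states is of order $N^{2/3}$, far above the allowed error. The cubic conjugation $e^{B'}$ is chosen precisely so that $\int\!\!\int e^{-sB'}[\mathcal{V}_3,B']e^{sB'}$ produces an equal and opposite $+\sum_k\hat v_k\eta_k\sum_{\sigma\neq\nu}N_\sigma\mathcal{N}_{ex,\nu}$; this exact cancellation is a fermionic anti-symmetry effect with no bosonic analogue, and it is what the paper describes as the cubic step ``competing at second order.'' Your description — high-momentum piece absorbed kinetically, low-momentum piece handled by a further Bogoliubov step ``following \cite{2018Bogoliubov}'' — is exactly the bosonic intuition the paper warns does not transfer: without the cancellation of $Q_{\mathcal{G}_N}$ you cannot reach the excitation estimate $H_N-E_0\gtrsim\mathcal{N}_{ex}-N^{1/3-1/360}$, and the lower bound loses a factor $N^{1/3}$.

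Beyond these, a few smaller points: (i) the a-priori bound $\mathcal{N}_{ex}\lesssim N^{2/3}$ must be upgraded through the localization in $\mathcal{N}_{ex}$ (the $f_M,g_M$ partition with $M\sim N^{2/3}$), not just by the $e^B$ conjugation alone; (ii) the trial-state upper bound also requires a uniform-in-$N$ control of all moments of $\mathcal{N}_{ex}$ under $e^{t\tilde B}$, which is nontrivial precisely because $\tilde B$ has the $p,q$-dependent kernel and needs the $\ell^2_k$ bound $\sum_{p,k}|\xi^{\nu,\sigma}_{k,q,p}|^2\lesssim a^2N^{2/3+\varepsilon}$; and (iii) the intra-spin cancellation you invoke in point (iii) is real, but it happens already inside $\mathcal{V}_0$ (where $k=0$ and $p=q+k$ diagonal terms cancel between opposite- and same-spin pairs), not at a later stage, so one has to be careful about where the $\sigma\neq\nu$ restriction actually enters $E^{(2)}$.
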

\begin{remark}\label{remark1}
 The series stated in (\ref{E^(2)}) converges in the sense that for some cut-off constant $c_N=4\sup_{\sigma}k_F^\sigma$ (depending on $N$),
      \begin{equation}\label{rewrt E^(2)}
        \begin{aligned}
         \frac{E^{(2)}}{2(4\pi a\mathfrak{a}_0)^2}&\coloneqq
    \sum_{\substack{\sigma,\nu\in\mathcal{S}_\mathbf{q}\\ \sigma\neq\nu}}
    \sum_{\substack{p,q,k\\0<\vert k\vert\leq c_N}}\Big(\frac{\chi_{p\in B^{\sigma}_F}\chi_{q\in B^{\nu}_F}}{2\vert k\vert^2}-\frac{
    \chi_{p-k\notin B^{\sigma}_F}\chi_{q+k\notin B^{\nu}_F}
    \chi_{p\in B^{\sigma}_F}\chi_{q\in B^{\nu}_F}}{\big(\vert q+k\vert^2+\vert p-k\vert^2-\vert q\vert^2-\vert p\vert^2\big)}\Big)\\
    &+\frac{1}{2}\sum_{\substack{\sigma,\nu\in\mathcal{S}_\mathbf{q}\\ \sigma\neq\nu}}
    \sum_{\substack{p,q,k,\\\vert k\vert> c_N}}
    \Big(\frac{\chi_{p\in B^{\sigma}_F}\chi_{q\in B^{\nu}_F}}{\vert k\vert^2}
    -\frac{
    \chi_{p\in B^{\sigma}_F}\chi_{q\in B^{\nu}_F}}{\big(\vert q+k\vert^2+\vert p-k\vert^2-\vert q\vert^2-\vert p\vert^2\big)}\\
    &\quad\quad\quad\quad\quad\quad\quad\quad
    -\frac{
    \chi_{p\in B^{\sigma}_F}\chi_{q\in B^{\nu}_F}}{\big(\vert q-k\vert^2+\vert p+k\vert^2-\vert q\vert^2-\vert p\vert^2\big)}\Big)
        \end{aligned}
      \end{equation}
      This rearrangement is legitimate since $\vert k\vert> c_N$ ensures $\vert p\pm k\vert>k_F^{\sigma}$ and $\vert q\pm k\vert>k_F^\nu$. The first term on the right hand side of (\ref{rewrt E^(2)}) is a finite sum (for fixed $N$), and the second term converges absolutely. It is easy to figure out that the second term on the right hand side of (\ref{rewrt E^(2)}) is of order $N^{7/3}$, while for the first term, it is more subtle since it involves the number of lattice point on the Fermi surface $\partial B_F^\sigma$ (see the discussion below (\ref{C_2+C_3 conclusion}), which points out that it is of order $N^{7/3+\varepsilon}$). A thorough analysis of $E^{(2)}$ is given in Lemma \ref{Ground State}.
\end{remark}

\begin{remark}\label{remark2}
  As usual and pointed out by, for example, \cite[(1.13)]{2018Bogoliubov} or \cite{beyondGP}, to recover the Huang-Yang formula in the thermodynamics limit, we can rescale (\ref{core formula}) by  formally dividing (\ref{core formula}) by $L^5$, substituting $a\mathfrak{a}_0$ by $a_0/L$ with $\rho=N/L^3$ fixed as $L\to\infty$. We also set $\mathbf{q}=2$, $N_{\uparrow}=N_{\downarrow}=N/2$. Then the first term of (\ref{core formula}) becomes
      \begin{equation}\label{0th}
        \frac{2}{(2\pi)^3}\int_{\vert x\vert<(3\pi^2\rho)^{\frac{1}{3}}}\vert x\vert^2dx
        =\frac{3}{5}(3\pi^2)^{\frac{2}{3}}\rho^{\frac{5}{3}}.
      \end{equation}
      the second term of (\ref{core formula}) turns to
      \begin{equation}\label{1st}
        2\pi a_0\rho^2
      \end{equation}
      while the remaining terms have the leading order (with $r_0=(3\pi^2\rho)^{\frac{1}{3}}$)
      \begin{equation}\label{2nd}
      \begin{aligned}
      &\frac{4(4\pi a_0)^2}{(2\pi)^9}
      \int_{\vert q\vert<r_0}\int_{\vert p\vert<r_0}\int_{k\in\mathbb{R}^3}
      \Big(\frac{1}{2\vert k\vert^2}-
      \frac{\chi_{p-k\notin B^{\sigma}_F}\chi_{q+k\notin B^{\nu}_F}}
    {\big(\vert q+k\vert^2+\vert p-k\vert^2-\vert q\vert^2-\vert p\vert^2\big)}\Big)\\
        &=-\frac{8(4\pi a_0)^2}{(2\pi)^9}
        \int_{\vert x_1\vert<r_0}
        \int_{\vert x_2\vert<r_0}
        \int_{\vert x_3\vert<r_0}
        \int_{x_4}
        \frac{\delta(x_1+x_2=x_3+x_4)}{\vert x_1\vert^2+\vert x_2\vert^2
        -\vert x_3\vert^2-\vert x_4\vert^2}\\
        &=\frac{12}{35}(11-2\ln2)3^{\frac{1}{3}}\pi^{\frac{2}{3}}a_0^2\rho^{\frac{7}{3}}
        \end{aligned}
      \end{equation}
      exactly matching the Huang-Yang formula. We will give a brief proof in Appendix \ref{analysis} showing the first equality in the middle of (\ref{2nd}) holds. For the evaluation of the integral on the second line, see, for example, \cite{soviet}\footnote{In \cite[Appendix]{soviet}, the authors gave a sketch of how to compute the integral in the middle of (\ref{2nd}). Though the final result of $E^{(2)}$ was correct, the integral shown in equation \cite[(9)]{soviet} was 2 times larger than the correct value. The reason maybe that the choice of $U$ in equation \cite[(2)]{soviet} should be $U=2\pi a\hbar^2/m$ and the choice of $Q_{ik}$ in equation \cite[(4)]{soviet} should be $Q_{ik}=\chi_{ik}$ rather than $\chi_{ik}/2$}. Similarly, if we set $\mathbf{q}=4$, we can recover \cite[(64)]{huangyang} in the original Huang-Yang's paper.
\end{remark}

\begin{remark}\label{remark3}
   For Boson systems in the Gross-Pitaevskii regime, there is a term which is considered as a finite-volume correction to the scattering length $\mathfrak{a}_0$ in the second order energy approximation of order $1$, see \cite[(1.11)]{2018Bogoliubov} and \cite[(1.18)]{me}. For Fermion systems, we also come up with the exact same correction term during our computation, see (\ref{define e_0}), but its order $1$ is obviously less than the second order term, which is no less than $N^{1/3}$, and therefore this correction is absorbed into the small error.
\end{remark}

\begin{remark}\label{remark4}
If we consider arbitrary $N_\sigma$ that may not perfectly fill the Fermi ball, we will encounter a complex algebraic correction term involves the Dedekind $\zeta$-function. More precisely, there is a correction term to (\ref{core formula}) given by
          \begin{equation}\label{correction lattice}
            4\pi a\mathfrak{a}_0\sum_{i\neq j}
            \Big(\#B_F^{\varsigma_i}\#B_F^{\varsigma_j}-N_{\varsigma_i}N_{\varsigma_j}\Big).
          \end{equation}
          The difference $(\#B_F^{\varsigma_i}-N_{\varsigma_i})$ counts the cardinality of a subset of the Fermi surface $\partial B_{F}^\sigma$. Due to the observation that the cardinality of Fermi surface $\partial B_{F}^\sigma$ is of order $N_\sigma^{1/3+\varepsilon}$, this correction term is of order $N^{1/3+\varepsilon}$. Under the scale argument presented in Remark \ref{remark2}, we know that this term vanishes in the corresponding thermodynamics limit, and is hence not the intrinsic quality of Fermion systems, but a discrete surface correction to the Gross-Pitaevskii limit.
\end{remark}

\subsection{Organization of the Paper}\
\par We here sketch the grand scheme of the proof of Theorem \ref{core}. To obtain the second order formula (\ref{core formula}), we would first need to prove an \textquotedblleft optimal BEC\textquotedblright\, type excitation result, like the ones in \cite{spectrum,bogsimplified}, for our Fermion
systems. We are going to conjugate $H_N$ with several suitable
unitary operators, which are more
well-known as renormalizations. After three steps of renormalizations, where we collect the results in Propositions \ref{p1}-\ref{p3}, we analyse the order and compute the explicit form of the second order energy in Lemma \ref{Ground State}. With Proposition \ref{p3} and Lemma \ref{Ground State}, the excitation Hamiltonian has the form
\begin{equation*}
\mathcal{U}^*H_N\mathcal{U}
=E_0+\mathcal{K}_s+e^{-\tilde{B}}\mathcal{V}_4e^{\tilde{B}}+
\mathcal{E},
\end{equation*}
where $E_0$ is the expected energy expansion up to the second order. $\mathcal{K}_s$ and $\mathcal{V}_4$ are excited kinetic and potential energy operators respectively, and both of them are non-negative operators. $\mathcal{E}$ is a small error that can be absorbed by $\mathcal{K}_s$ and $e^{-\tilde{B}}\mathcal{V}_4e^{\tilde{B}}$ for approximate ground states. These preparations allow us to derive Proposition \ref{Optimal BEC} via the method of localization estimates. Proposition \ref{Optimal BEC}, which is considered as an excitation estimate:
\begin{equation*}
    H_N-E_0\gtrsim \mathcal{N}_{ex}-N^{\frac{1}{3}-\frac{1}{360}},
  \end{equation*}
 is the key to the lower bound in Theorem \ref{core}, since the number of excitation operator $\mathcal{N}_{ex}$ is non-negative. The upper bound (\ref{uppper}) also follows by Proposition \ref{p3} and Lemma \ref{Ground State} by choosing a suitable test function. Here, the concluding argument in reaching the lower bound of Theorem \ref{core} has more than one version (see for example \cite{meanfieldfermi2021}), though they are sort of equivalent, once Lemma \ref{Ground State} and Proposition \ref{Optimal BEC} are proved.
\par We start by rewriting $H_N$ using creation and annihilation operators
\begin{equation*}
  H_N=\mathcal{K}+\mathcal{V}=\sum_{k,\sigma}\vert k\vert^2a_{k,\sigma}^*a_{k,\sigma}
  +\frac{1}{2}\sum_{k,p,q,\sigma,\nu}\hat{v}_ka^*_{p-k,\sigma}a^*_{q+k,\nu}a_{q,\nu}a_{p,\sigma},
\end{equation*}
where $\hat{v}_k$ is the Fourier coefficient of $v_a$ on $\Lambda$, see (\ref{define v_k}). Inspired by \cite{2018Bogoliubov,me,bogsimplified}, we conjugate the Hamiltonian $H_N$ with a unitary operator, the quadratic renormalization $e^{B}$ (we leave the precise definitions of $a$, $a^*$ to Section \ref{fock space formalism}), with:
\begin{equation*}
  \begin{aligned}
  B&=\frac{1}{2}\sum_{k,p,q,\sigma,\nu}
  \eta_k(a^*_{p-k,\sigma}a^*_{q+k,\nu}a_{q,\nu}a_{p,\sigma}-h.c.)\chi_{p-k\notin B^{\sigma}_F}\chi_{q+k\notin B^{\nu}_F}\chi_{p\in B^{\sigma}_F}\chi_{q\in B^{\nu}_F}.
  \end{aligned}
\end{equation*}
The coefficients $\eta_k$ are defined through the 3D scattering equation with Neumann boundary condition
\begin{equation*}
  \left\{\begin{aligned}
  &(-\Delta_{x}+\frac{1}{2}v)f_\ell=\lambda_\ell f_\ell,\quad \vert
  x\vert\leq \frac{\ell}{a},\\
  &\left.\frac{\partial f_\ell}{\partial \mathbf{n}}\right\vert_{\vert
   x\vert=\frac{\ell}{a}}=0,\quad \left.f_\ell\right\vert_{\vert
   x\vert=\frac{\ell}{a}}=1.
  \end{aligned}\right.
\end{equation*}
with $\ell=N^{-\frac{1}{3}-\alpha}$ for some $\alpha>0$ to be determined later.

\par At this point, for comparison purpose, let us recall some facts and previous works regarding the Boson systems. For Boson system, the Hamiltonian $H_N$ defined by (\ref{Hamiltonian}) reads
\begin{equation*}
  H_N=\mathcal{K}+\mathcal{V}=\sum_{k}\vert k\vert^2a_{k}^*a_{k}
  +\frac{1}{2}\sum_{k,p,q}\hat{v}_ka^*_{p-k}a^*_{q+k}a_{q}a_{p}.
\end{equation*}
Due to the existence of Bose-Einstein condensation in the Boson case, a Bosonic intuition is that each $a_0$ contributes to the ground state energy of order $\sqrt{N}$ (since $\Vert a_0\Vert\leq\sqrt{N}$ in the Boson case). This intuition is useful in distinguishing the main part of $H_N$ that may contribute to the ground state energy up to the second order, and the subordinate part which can be considered as small error. Very differently, the anti-symmetry of Fermion systems implies that there may be some \textquotedblleft negative\textquotedblright\, energy which would eliminate other positive energy. This phenomenon leads to the failure of the above Bosonic intuition, and makes it difficult to figure out the order of the energy contribution directly from the forms of the operators.

\par The existence of \textquotedblleft negative\textquotedblright\, energy can be demonstrated more precisely via the \textquotedblleft Bosonization\textquotedblright\, operator $b_{k,\sigma}$. In \cite{meanfieldfermi2019,meanfieldfermi2021} for example, a \textquotedblleft Bosonization\textquotedblright\, method was introduced to deal with the Fermionic renormalizations:
\begin{equation*}
  b_{k,\sigma}^*\sim \frac{1}{\sqrt{N}_\sigma}\sum_{p}a^*_{p-k,\sigma}a_{p,\sigma}
  \chi_{p-k\notin B_F^\sigma}\chi_{p\in B_F^\sigma}.
\end{equation*}
By the anti-symmetry of Fermion systems, we find that
\begin{equation*}
  [b_{k,\sigma},b_{k^\prime,\sigma^\prime}^*]
  =\delta_{k,k\prime}\delta_{\sigma,\sigma^\prime}
  -\delta_{\sigma,\sigma^\prime}\varepsilon_{k,k^\prime}
\end{equation*}
where $\varepsilon_{k,k^\prime}$ is an approximately positive operator. This operator $\varepsilon_{k,k^\prime}$ can not be considered as a small residue when we add them altogether. They eventually become the \textquotedblleft negative\textquotedblright\, energy. This Bosonization method is proved to be indeed useful in the mean-field regime and leads to several nice works like \cite{meanfieldfermi2019,meanfieldfermi2021}, where $\varepsilon_{k,k^\prime}$ can be handled by decomposing paired particle operator $b$ into smaller pieces localized in disjoint patches on the Fermi surface. As far as this paper concerns, we do not adopt this paired particle operator $b$ in Gross-Pitaevskii regime, to define the quadratic renormalization $e^B$ above, and we in fact take a direct approach, in order to demonstrate more explicitly the generative mechanism of the \textquotedblleft negative\textquotedblright\, energy\footnote{We do not know if the Bosonization will work in this paper's setting and we do not know if the unbosonizable parts is small like in \cite{meanfieldfermi2019,meanfieldfermi2021}}.

\par To this end, we are motivated by a comparable, and also difficult but totally different mechanism we dealt with before. Such aforementioned Bosonic intuition also fails during the study \cite{me} of the 2nd order 2D behaviour of 3D Bose gas in the G-P regime (though it is generated by different mechanics \textendash the smallness of the thickness of the region $d\ll L$). More precisely, if we want to reach the correct second order energy expansion, the intuition from the 3D Boson system \cite{2018Bogoliubov} leads to an extra \textbf{cubic renormalizaiton}. On the other hand, the instinct from the 3D-to-2D Boson system \cite{me}, implies this cubic renormalization may generate energy which is large enough to compete with the 2nd order (see Lemmas \ref{lemma V_22,23,1} and \ref{cal [V_3,B'] lemma}, and compare them with \cite[Lemma 15]{bogsimplified} or \cite[Lemma 8.9]{me} in the Boson case). Therefore, after the computation of the quadratic renormalization $e^B$, we further conjugate the Hamiltonian with another unitary operator we observe, the cubic renormalization $e^{B^\prime}$, with
\begin{equation*}
    B^\prime=\sum_{k,p,q,\sigma,\nu}
  \eta_k(a^*_{p-k,\sigma}a^*_{q+k,\nu}a_{q,\nu}a_{p,\sigma}-h.c.)\chi_{p-k\notin B^{\sigma}_F}\chi_{q+k\notin B^{\nu}_F}\chi_{q\in A^\nu_{F,\kappa}}
  \chi_{p\in B^{\sigma}_F}.
\end{equation*}
Here
\begin{equation*}
  A_{F,\kappa}^\nu=\{k\in2\pi\mathbb{Z}^3,\,k_F^\nu<\vert k\vert\leq k_F^\nu+N^\kappa\}.
\end{equation*}
We introduce $\kappa<0$ as a cut-off parameter, to decompose the interaction between low and high momenta. We leave the thorough discussion about $A_{F,\kappa}^\nu$ and the method of decomposition of high-low momenta to Section \ref{number of p ops}.

\par Notice that the $B^\prime$ chosen above can not be written in a cubic form of Bosonization operator $b$, and we refer it as a cubic renormalization since there are three $a$ in its definition whose subscripts (momenta) are out of the corresponding Fermi ball, in corresponding to the cubic renormalization in \cite{bogsimplified,me}. Although it is not clear in the first place whether the designation of $B^\prime$ follows any intuition from \cite{2018Bogoliubov,me,bogsimplified}, $B^\prime$ works appropriately, and helps us reaching the expected result. We sort of guess our way forward and somehow feel our way is general. Despite the definition is in a way, parallel to the Boson case, the calculation is completely Fermionic, distinguishing itself from the method of Bosonizaiton. In fact, the exact size of the energy contribution from the pure Fermionic part of $B^\prime$ is also unclear initially. The only way to justify this definition is a comprehensive evaluation.

\par Another reason we do not apply the Bosonization method but do it directly (and it worked), is that the Bogoliubov coefficients $\xi_{k,q,p}^{\sigma,\nu}$ introduced later in (\ref{later}) are dependent on $p$ and $q$ as well. It would be intricate to handle the relation between the Bosonization operator $b$ and the  Bogoliubov coefficients $\xi_{k,q,p}^{\sigma,\nu}$, in comparison to the direct definition of $\tilde{B}$ below.

\par The above two renormalizations actually capture
the correct correlation structures generating the first order energy and part of the second order energy. This allows us to
apply the Bogoliubov transformation denoted by $e^{\tilde{B}}$ with
\begin{equation*}
  \tilde{B}=\sum_{k,p,q,\sigma,\nu}
  \xi_{k,q,p}^{\nu,\sigma}(a^*_{p-k,\sigma}a^*_{q+k,\nu}a_{q,\nu}a_{p,\sigma}-h.c.)
  \chi_{p-k\notin B^{\sigma}_F}\chi_{q+k\notin B^{\nu}_F}\chi_{p\in B^{\sigma}_F}\chi_{q\in B^{\nu}_F},
\end{equation*}
where $\xi_{k,q,p}^{\nu,\sigma}$ are given by
\begin{equation}\label{later}
  \xi_{k,q,p}^{\nu,\sigma}=\frac{-\big(W_k+\eta_kk(q-p)\big)}
  {\frac{1}{2}\big(\vert q+k\vert^2+\vert p-k\vert^2-\vert q\vert^2-\vert p\vert^2\big)}
  \chi_{p-k\notin B^{\sigma}_F}\chi_{q+k\notin B^{\nu}_F}\chi_{p\in B^{\sigma}_F}\chi_{q\in B^{\nu}_F},
\end{equation}
where
\begin{equation*}
  W_p=\frac{\lambda_{\ell}}{a^2}\left(\widehat{\chi_{\ell}}\left(
  \frac{p}{2\pi}\right)+\eta_p\right)=\left\vert p\right\vert^2\eta_p+\frac{1}{2}
  \sum_{q\in2\pi\mathbb{Z}^3}\hat{v}_{p-q}\eta_q+ \frac{1}{2}\hat{v}_p.
\end{equation*}
More precise analysis of $\eta$ and $\xi$ will be demonstrated in Section \ref{coeff}. The Bogoliubov coefficients $\xi_{k,q,p}^{\nu,\sigma}$ is chosen intentionally so that
\begin{equation*}
  \frac{1}{2}\big(\vert q+k\vert^2+\vert p-k\vert^2-\vert q\vert^2-\vert p\vert^2\big)\xi_{k,q,p}^{\nu,\sigma}+W_k+\eta_kk(q-p)=0.
\end{equation*}
The extra $p,q$\textendash dependence of $\xi_{k,q,p}^{\nu,\sigma}$ (comparing to $\eta_k$ which depends only on $k$) makes it much more technically difficult to evaluate the conjugation. Different from the method of reduction to pair excitations $b^*$ on patches introduced in, for example, \cite{meanfieldfermi2019,meanfieldfermi2021}, an important technique throughout our evaluation is the decomposition of the interaction of different momentum levels (see Section \ref{number of p ops} and Lemma \ref{lemma ineqn K_s} for more details). This idea is analogous to the decomposition of different frequencies when dealing with nonlinear interactions in the terminology of the analysis of nonlinear PDE. In fact, during our computation, the main energy comes from the exchange between
 \begin{equation*}
   [a_{p-k,\sigma}^*a_{q+k,\nu}^*a_{q,\nu}a_{p,\sigma}+h.c.,a_{r-l,\tau}^*a_{s+l,\varpi}^*
   a_{s,\varpi}a_{r,\tau}+h.c.]
 \end{equation*}
 where $p-k,q+k,r-l,s+l$ are momenta outside of Fermi ball, and $p,q,r,s$ are momenta inside of Fermi ball. The commutator has two forms:
 \begin{enumerate}[label=\textbullet]
   \item $a^*a^*aa$: When all four subscripts (momenta) are inside the Fermi ball, (using the PDE's terminology, when it has zero frequency), their interaction contains main energy up to the 2nd order.
   \item $a^*a^*a^*aaa$: Since there are at most four momenta inside or outside of Fermi ball, no zero frequency arises in this form. For momentum that is far away from the boundary of Fermi ball $\partial B_F$, we use the excited kinetic energy $\mathcal{K}_s$ to control it, as in (\ref{ineqn N_h and N_i}). For momentum which is near the surface, we use the usual operator norm to control it, as in (\ref{ineqn N_l and N_s}). Generally speaking, we have to bound all the terms of this form.
 \end{enumerate}
\vspace{1em}
\par We layout this paper as follows. In Section \ref{fock space sub}, for the sake of completeness, we present the formalism of the Fock space $F_N$. In Section \ref{cao}, we give a precise definition of creation and annihilation operators, and in Section \ref{number of p ops}, we introduce various particle number operators, and develop the technique of decomposition of the interaction of different momentum levels. In Section \ref{2ndquanpri}, we collect the prerequisites for renormalizations and the proof of Theorem \ref{core}. In Section \ref{coeff}, we discuss the coefficients $\eta_p$ and $\xi_{k,q,p}^{\nu,\sigma}$ in detail, and provide useful estimates about these coefficients. In Section \ref{mainpropo}, we gather the results of the three steps of renormalizations in Propositions \ref{p1}-\ref{p3}. The detailed proofs of these three propositions are left to Sections \ref{qua}-\ref{bog}. In Section \ref{proofofmain}, we conclude the proof of Theorem \ref{core} by first computing the explicit form of the second order energy $E^{(2)}$ in Section \ref{constant}, and then establishing the excitation estimate in Section \ref{ee}. Matching upper and lower bounds, and hence Theorem \ref{core} follow in Section \ref{proof core}.

\section{Fock Space for Fermions}\label{fock space formalism}
\subsection{Fock Space Formalism}\label{fock space sub}
\
\par For $k\in2\pi\mathbb{Z}^3$ and $\sigma\in\mathcal{S}_\mathbf{q}$, we choose the orthonormal basis of $\mathcal{H}=L^2(\Lambda_L\times\mathcal{S}_\mathbf{q})\cong L^2(\Lambda_L;\mathbb{C}^q)$ by
\begin{equation}\label{orthonormal basis H}
  f_{k,\sigma}(x,\nu)=f_k(x)\chi_{\nu=\sigma}(\nu)=\frac{\chi_{\nu=\sigma}(\nu)}{L^{\frac{3}{2}}}
  e^{ik\cdot x/L},\quad z=(x,\nu)=\Lambda_L\times\mathcal{S}_\mathbf{q}.
\end{equation}
For non-negative integers $n,m$, and $\psi\in L^2\big((\Lambda_L\times\mathcal{S}_\mathbf{q})^n\big)$, $\varphi\in L^2\big((\Lambda_L\times\mathcal{S}_\mathbf{q})^m\big)$, we define the anti-symmetric wedge product between $\psi$ and $\varphi$
\begin{equation*}\label{wedge product}
\begin{aligned}
  \psi\wedge\varphi(z_1,\dots,z_{n+m})=& \frac{1}{\sqrt{n!m!(n+m)!}}\\
&\times\sum_{\varrho\in S_{n+m}}\sgn{\varrho}\,\psi(z_{\varrho(1)},\dots,z_{\varrho(n)})
  \varphi(z_{\varrho(n+1)},\dots,z_{\varrho(n+m)}).
\end{aligned}
\end{equation*}
Under these definitions, it is straight-forward to verify $\psi\wedge\varphi\in L^2_a\big((\Lambda_L\times\mathcal{S}_\mathbf{q})^{n+m}\big)$, and the wedge product is associative and anti-commutative, i.e.
\begin{equation}\label{anti-commutative}
  \psi\wedge\varphi=(-1)^{nm}\varphi\wedge\psi.
\end{equation}
Under the definition of the wedge product, the notation $L_a^2\big((\Lambda_L\times\mathcal{S}_\mathbf{q})^N\big)= \mathcal{H}^{\wedge N}$ holds rigorously and we find an orthonormal basis of $\mathcal{H}^{\wedge N}$ given by
\begin{equation}\label{orthonormal basis H wedge N}
  f_{s_N}=\bigwedge_{i=1}^{N} f_{k_i,\sigma_i}=f_{k_1,\sigma_1}\wedge\cdots\wedge f_{k_N,\sigma_N}.
\end{equation}
Here we set the subscript $s_N$ satisfies the following order condition:
\begin{equation}\label{s_N}
  \begin{aligned}
  &s_N=\{(k_1,\sigma_1),\dots,(k_N,\sigma_N)
  \vert\,k_i\in2\pi\mathbb{Z}^3,\sigma_i\in\mathcal{S}_\mathbf{q}\}\\
  &\vert k_1\vert\leq\vert k_2\vert\leq\cdots\leq\vert k_N\vert\\
  &\text{if $\vert k_i\vert=\vert k_{i+1}\vert$, then $k_i\leq k_{i+1}$ in the lexicographical order sense}\\
  &\text{if $k_i=k_{i+1}$, then for some $j<l$, $k_i=\varsigma_j$ and $k_{i+1}=\varsigma_l$}
  \end{aligned}
\end{equation}
 For the orthonormal basis of the subspace $\mathcal{H}^{\wedge N}(\{N_{\varsigma_i}\})$, we first additionally set
\begin{equation}\label{s_N subspace}
  s_N=\{(k_1,\sigma_1),\dots,(k_N,\sigma_N)
  \vert\,k_i\in2\pi\mathbb{Z}^3,\sigma_i\in\mathcal{S}_\mathbf{q},
  \#\{\sigma_i=\varsigma_j\}\leq N_{\varsigma_j}\}.
\end{equation}
We denote the admissible set of all $s_N$ satisfying (\ref{s_N}) and (\ref{s_N subspace}) by $\mathcal{S}^{(N)}$. Therefore $\{f_{s_N}\,\big\vert\,s_N\in \mathcal{S}^{(N)}\}$ forms an orthonormal basis of the subspace $\mathcal{H}^{\wedge N}(\{N_{\varsigma_i}\})$. With a slight abuse of notation, we may also denote, for $1\leq n\leq N$, the set of all $s_n$ satisfying (\ref{s_N}) and (\ref{s_N subspace}) by $\mathcal{S}^{(n)}$.

\par Recall that in this paper, we normalize $L=1$. For $1\leq n\leq N$, we denote the subset of $\mathcal{S}^{(n)}$ which contains subscripts inside of the Fermi ball by
\begin{equation}\label{define P^(N)}
  \mathcal{P}^{(n)}=\{s_n\in\mathcal{S}^{(n)}\vert\, \text{$k_i\in B^{\sigma}_F$ if $\sigma_i=\sigma$}\}.
\end{equation}
We define additionally $\mathcal{P}^{(0)}=\{s_0\}$, $\mathcal{H}^{\wedge0}=\mathbb{C}$ and $f_{s_0}=1$. Hence we write the one-particle excited space of $\mathcal{H}$ by
\begin{equation}\label{define F}
  \mathcal{F}=\spa\{f_{s_1}\vert\, s_1\in\mathcal{P}^{(1)}\}\subset\mathcal{H}.
\end{equation}
For $\psi\in\mathcal{H}^{\wedge N}(\{N_{\varsigma_i}\})$, we decompose
\begin{equation}\label{rewrite psi for fock space}
  \psi=\sum_{n=0}^{N}\sum_{s_n\in\mathcal{P}^{(n)}}f_{s_n}\wedge\alpha^{(s_n)}
  =\sum_{n=0}^{N}\sum_{s_n\in\mathcal{P}^{(n)}}a^*_{k_1,\sigma_1}\dots a^*_{k_n,\sigma_n}\alpha^{(s_n)}.
\end{equation}
with $\alpha^{(s_n)}\in (\mathcal{F}^{\perp})^{\wedge(N-n)}$. So we define the Fock space for Fermions by
\begin{equation}\label{fock space}
  F_N\coloneqq \bigoplus_{n=0}^{N}\bigoplus_{s_n\in\mathcal{P}^{(n)}}
  (\mathcal{F}^{\perp})^{\wedge(N-n)}.
\end{equation}
Notice that by (\ref{rewrite psi for fock space}), we have
\begin{equation}\label{unitary relation}
  \Vert\psi\Vert_2^2=\sum_{n=0}^{N}\sum_{s_n\in\mathcal{P}^{(n)}}\Vert\alpha^{(s_n)}\Vert_2^2.
\end{equation}
Hence if we let $U_N:\mathcal{H}^{\wedge N}(\{N_{\varsigma_i}\})\to F_N$ with
\begin{equation}\label{define U_N}
  U_N\psi=\bigoplus_{n=0}^{N}\bigoplus_{s_n\in\mathcal{P}^{(n)}}\alpha^{(s_n)},
\end{equation}
we know that $U_N$ is an isometric operator from (\ref{unitary relation}) .

\subsection{Creation and Annihilation Operators}\label{cao}
\
\par We define the creation and annihilation operators on $\mathcal{H}^{\wedge N}(\{N_{\varsigma_i}\})$ by
\begin{equation}\label{define a*(f) & a(g)}
\left.\begin{aligned}
  (a^*(f)\psi)({z}_1,\dots,{z}_{n+1}) &= \frac{1}{\sqrt{n+1}}\sum_{i=1}^{n+1}(-1)^{i+1}f(z_i)
\psi(z_1,\dots,\hat{z_i},\dots,z_{n+1}) \\
  (a(g)\psi)(z_1,\dots,z_{n-1}) &= \sqrt{n}\int_{\Lambda_{L}\times\mathcal{S}_\mathbf{q}}\overline{g}(z)
\psi(z,z_1,\dots,z_{n-1})dz, \quad n\geq 1
\end{aligned}\right.
\end{equation}
It is straight-forward to check the following anti-commutator formula:
\begin{equation}\label{anticommutator}
\begin{aligned}
  &\{a(g),a^*(f)\}=a(g)a^*(f)+a^*(f)a(g)=\langle f,g\rangle,\\
  &\{a(g),a(f)\}=\{a^*(g),a^*(f)\}=0.
\end{aligned}
\end{equation}
Using (\ref{anticommutator}), we can bound the norm of creation and annihilation operators by
\begin{equation}\label{norm bound  of creation and annihilation}
  \Vert a(g)\Vert\leq\Vert g\Vert_2,\quad\Vert a^*(f)\Vert\leq \Vert f\Vert_2.
\end{equation}
Using (\ref{orthonormal basis H}), we denote
\begin{equation}\label{define a_k,sigma a*_k,sigma}
  a_{k,\sigma}=a(f_{k,\sigma}),\quad a^*_{k,\sigma}=a^*(f_{k,\sigma}).
\end{equation}
We can check directly by definition that
\begin{equation}\label{formulae of crea and annihi ops}
  \begin{aligned}
  a^*_{k,\sigma}\psi&=f_{k,\sigma}\wedge\psi\\
  a_{k,\sigma}\psi_{s_N}&=\left\{\begin{aligned}
  &(-1)^{i-1}\psi_{s_N\backslash\{(k,\sigma)\}},\quad \text{if for some $(k,\sigma)=(k_i,\sigma_i)$}\\
  &0,\quad \text{if no such $i$ exists}
  \end{aligned}\right.
  \end{aligned}
\end{equation}
From (\ref{formulae of crea and annihi ops}), we know that $a_{k,\sigma}^2=0$.

\par Using (\ref{anticommutator}), we can deduce some computation results we use multiple times in this paper.
\begin{lemma}\label{lem commutator}\
\begin{enumerate}[label=(\arabic*)]
 \item We have
 \begin{equation}\label{quartic commu}
  \begin{aligned}
    &[a^*_{p_1,\sigma}a^*_{q_1,\nu}a_{q_2,\nu}a_{p_2,\sigma},
    a^*_{r_1,\tau}a^*_{s_1,\pi}a_{s_2,\pi}a_{r_2,\tau}]\\
    =&(\delta_{p_2,r_1}\delta_{\sigma,\tau}\delta_{q_2,s_1}\delta_{\nu,\pi}
    -\delta_{q_2,r_1}\delta_{\nu,\tau}\delta_{p_2,s_1}\delta_{\sigma,\pi})
    a^*_{p_1,\sigma}a^*_{q_1,\nu}a_{s_2,\pi}a_{r_2,\tau}\\
    &+(\delta_{p_1,s_2}\delta_{\sigma,\pi}\delta_{q_1,r_2}\delta_{\nu,\tau}
    -\delta_{p_1,r_2}\delta_{\sigma,\tau}\delta_{q_1,s_2}\delta_{\nu,\pi})
    a^*_{r_1,\tau}a^*_{s_1,\pi}a_{q_2,\nu}a_{p_2,\sigma}\\
   &+a^*_{p_1,\sigma}a^*_{q_1,\nu}
    (\delta_{q_2,r_1}\delta_{\nu,\tau}a^*_{s_1,\pi}a_{p_2,\sigma}
    -\delta_{p_2,r_1}\delta_{\sigma,\tau}a^*_{s_1,\pi}a_{q_2,\nu})a_{s_2,\pi}a_{r_2,\tau}\\
    &+a^*_{p_1,\sigma}a^*_{q_1,\nu}
    (\delta_{p_2,s_1}\delta_{\sigma,\pi}a^*_{r_1,\tau}a_{q_2,\nu}
   -\delta_{q_2,s_1}\delta_{\nu,\pi}a^*_{r_1,\tau}a_{p_2,\sigma})a_{s_2,\pi}a_{r_2,\tau}\\
    &+a^*_{r_1,\tau}a^*_{s_1,\pi}
  (\delta_{p_1,r_2}\delta_{\sigma,\tau}a^*_{q_1,\nu}a_{s_2,\pi}
   -\delta_{p_1,s_2}\delta_{\sigma,\pi}a^*_{q_1,\nu}a_{r_2,\tau})a_{q_2,\nu}a_{p_2,\sigma}\\
    &+a^*_{r_1,\tau}a^*_{s_1,\pi}
    (\delta_{q_1,s_2}\delta_{\nu,\pi}a^*_{p_1,\sigma}a_{r_2,\tau}
    -\delta_{q_1,r_2}\delta_{\nu,\tau}a^*_{p_1,\sigma}a_{s_2,\pi})a_{q_2,\nu}a_{p_2,\sigma}.
  \end{aligned}
  \end{equation}

\item We also have
\begin{equation}\label{quadratic commu}
 \begin{aligned}
 &[a^*_{k,\sigma}a_{k,\sigma},a^*_{r_1,\tau}a^*_{s_1,\pi}a_{s_2,\pi}a_{r_2,\tau}]\\
 =&(\delta_{k,r_1}\delta_{\sigma,\tau}+\delta_{k,s_1}\delta_{\sigma,\pi}
 -\delta_{k,r_2}\delta_{\sigma,\tau}-\delta_{k,s_2}\delta_{\sigma,\pi})
 a^*_{r_1,\tau}a^*_{s_1,\pi}a_{s_2,\pi}a_{r_2,\tau}.
 \end{aligned}
\end{equation}

\item We let for $j=1,2$,
\begin{equation*}
\mathcal{D}_j=\sum_{k,\sigma}d^{(j)}_{k,\sigma}a_{k,\sigma}^*a_{k,\sigma},
\end{equation*}
then $[\mathcal{D}_1,\mathcal{D}_2]=0$.

\item Let $\mathcal{D}=\sum_{k,\sigma}d_{k,\sigma}a_{k,\sigma}^*a_{k,\sigma}$ and $f(z)=\sum_{k,\sigma}
c_{k,\sigma}f_{k,\sigma}(z)\in\mathcal{H}$, then
\begin{equation}\label{com D a(f)}
  [\mathcal{D},a(f)]=-\sum_{k,\sigma}d_{k,\sigma}\overline{c_{k,\sigma}}a_{k,\sigma}.
\end{equation}
In particular, for $S_2\subset S_1\subset2\pi\mathbb{Z}^3\times\mathcal{S}_{\mathbf{q}}$ such that $d_{k,\sigma}=\chi_{(k,\sigma)\in S_1}$ and $c_{k,\sigma}=\chi_{(k,\sigma)\in S_2}c_{k,\sigma}$, we have
\begin{equation}\label{com D a(f) 2}
  [\mathcal{D},a(f)]=-a(f).
\end{equation}

\end{enumerate}
\end{lemma}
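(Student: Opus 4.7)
The plan is to derive all four identities from the canonical anti-commutation relations \eqref{anticommutator}. The central tool, which I would verify first by a one-line expansion, is the fermionic derivation identity
\begin{equation*}
[AB, C] = A\{B, C\} - \{A, C\}B,
\end{equation*}
valid whenever $A, B, C$ are linear combinations of fermionic creation/annihilation operators. Applied twice this gives the four-factor version
\begin{equation*}
[AB, CD] = A\{B, C\}D - AC\{B, D\} + \{A, C\}DB - C\{A, D\}B,
\end{equation*}
which together with the ordinary Leibniz rule $[X_1 X_2, Y] = X_1[X_2, Y] + [X_1, Y]X_2$ is the workhorse for everything below.

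For part (2), I would split the quartic $a^*_{r_1,\tau}a^*_{s_1,\pi}a_{s_2,\pi}a_{r_2,\tau}$ into two pairs and apply the four-factor formula with $A = a^*_{k,\sigma}$, $B = a_{k,\sigma}$. The like-type anti-commutators vanish, leaving four contributions from the mixed $\{a, a^*\}$'s, each producing a Kronecker delta pair in momentum and spin; their signs match the four terms on the right of \eqref{quadratic commu} directly. Part (3) is then an immediate specialization: replacing the quartic by the diagonal $a^*_{k',\sigma'}a_{k',\sigma'}$ collapses the four contributions to two equal terms with opposite sign that cancel.

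Part (1) is the most laborious. I would set $X_1 = a^*_{p_1,\sigma}a^*_{q_1,\nu}$, $X_2 = a_{q_2,\nu}a_{p_2,\sigma}$, $Y_1 = a^*_{r_1,\tau}a^*_{s_1,\pi}$, $Y_2 = a_{s_2,\pi}a_{r_2,\tau}$, and apply the Leibniz rule
\begin{equation*}
[X_1 X_2, Y_1 Y_2] = X_1[X_2, Y_1]Y_2 + X_1 Y_1[X_2, Y_2] + [X_1, Y_1]Y_2 X_2 + Y_1[X_1, Y_2]X_2.
\end{equation*}
The middle two inner commutators $[X_2, Y_2]$ and $[X_1, Y_1]$ vanish identically, since they pair only annihilations with annihilations or only creations with creations. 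Expanding the surviving $[X_2, Y_1]$ and $[X_1, Y_2]$ by the four-factor formula produces eight terms, each carrying a single $\{a, a^*\}$ Kronecker delta pair. A further anti-commutation inside the six-operator string $X_1 \cdot (\cdot) \cdot Y_2$ (or $Y_1 \cdot (\cdot) \cdot X_2$) splits each of these into a fully-contracted four-operator piece with two delta pairs, contributing to the first two groups of \eqref{quartic commu}, and a residual six-operator piece with one delta pair, contributing to the last four groups. Regrouping the eight pieces according to which indices are contracted produces the stated six-group formula.

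For part (4), the three-factor identity yields $[a^*_{k,\sigma}a_{k,\sigma}, a(f)] = a^*_{k,\sigma}\{a_{k,\sigma}, a(f)\} - \{a^*_{k,\sigma}, a(f)\}a_{k,\sigma} = -\langle f, f_{k,\sigma}\rangle a_{k,\sigma} = -\overline{c_{k,\sigma}}a_{k,\sigma}$, and summing against $d_{k,\sigma}$ gives \eqref{com D a(f)}; the corollary \eqref{com D a(f) 2} follows because $\chi_{(k,\sigma)\in S_1}\,\overline{c_{k,\sigma}} = \overline{c_{k,\sigma}}$ whenever $c$ is supported on $S_2\subset S_1$. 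The main obstacle is purely the sign and ordering bookkeeping in part (1): the conceptual content is nothing more than the CAR, but recognizing the eight expanded terms and their further contractions as precisely the six canonical groups of \eqref{quartic commu}, with the correct signs, requires careful organization.
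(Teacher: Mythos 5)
Your proposal is correct and proceeds by the standard route the paper clearly intends: reduce everything to the canonical anti-commutation relations (\ref{anticommutator}) via the fermionic derivation identities, the Leibniz rule, and the observation that pairs of like-type operators commute. The paper itself gives no explicit proof of Lemma~\ref{lem commutator} (it is stated as a direct consequence of (\ref{anticommutator})), and your organization of part (1) --- killing $[X_1,Y_1]$ and $[X_2,Y_2]$ by like-type commutation, expanding $[X_2,Y_1]$ and $[X_1,Y_2]$ by the four-factor formula, and then performing one extra anti-commutation inside the six-operator strings to separate the fully-contracted two-delta pieces from the residual one-delta pieces --- reproduces the six groups of (\ref{quartic commu}) exactly.

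Two very small points of wording worth cleaning up. First, in part (4) you write $\{a^*_{k,\sigma},a(f)\}=\langle f,f_{k,\sigma}\rangle$; with the paper's convention $\{a(g),a^*(f)\}=\langle f,g\rangle$ (so $\langle f,g\rangle=\int f\,\overline{g}$) this should read $\langle f_{k,\sigma},f\rangle$, which does equal $\overline{c_{k,\sigma}}$, so the final answer is right but the intermediate inner-product is flipped. Second, part (3) is not literally a specialization of the formula (\ref{quadratic commu}) --- the diagonal quadratic $a^*_{k',\sigma'}a_{k',\sigma'}$ has fewer factors than the quartic --- but rather a re-application of the same four-factor computation, under which two delta-pairs vanish (creation with creation, annihilation with annihilation) and the remaining two are equal with opposite sign. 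Your argument is evidently this, so no gap; just say "same computation" rather than "specialization."
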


\subsection{Particle Number Operators}\label{number of p ops}
\
\par We define for $\sigma\in\mathcal{S}_\mathbf{q}$, the particle number operator
\begin{equation}\label{define op N_sigma}
  \mathcal{N}=\sum_{\sigma\in\mathcal{S}_\mathbf{q}}
  \sum_{k\in2\pi\mathbb{Z}^3}a_{k,\sigma}^{\ast}a_{k,\sigma},\quad
  \mathcal{N}_{\sigma}=\sum_{k\in2\pi\mathbb{Z}^3}a_{k,\sigma}^{\ast}a_{k,\sigma}.
\end{equation}
It is easy to verify on $\mathcal{H}^{\wedge N}$,
\begin{equation}\label{sum N_sigma}
  \sum_{i=1}^{\mathbf{q}}\mathcal{N}_{\varsigma_i}=\mathcal{N}
  =N,
\end{equation}
and in particular, for $\psi\in\mathcal{H}^{\wedge N}(\{N_{\varsigma_i}\})$, we have
\begin{equation}\label{cal N_sigma}
  \mathcal{N}_{\sigma}\psi=N_\sigma\psi.
\end{equation}
We define the number of excitation operators
\begin{equation}\label{number of exicited particles ops}
  \mathcal{N}_{ex}=\sum_{\sigma,k\notin B^{\sigma}_F}a_{k,\sigma}^*a_{k,\sigma},\quad
  \mathcal{N}_{ex,\sigma}=\sum_{k\notin B^{\sigma}_F}a_{k,\sigma}^*a_{k,\sigma}.
\end{equation}
Recall that we demand for each spin $\varsigma_i$, $N_{\varsigma_i}$ precisely fills the Fermi ball $B_F^{\varsigma_i}$. Using (\ref{anticommutator}) and (\ref{number of exicited particles ops}), we find, on $\mathcal{H}^{\wedge N}(\{N_{\varsigma_i}\})$,
\begin{equation}\label{number of exicited particles ops2}
\begin{aligned}
  \sum_{\sigma,k\in B^{\sigma}_F}a_{k,\sigma}a^*_{k,\sigma}
  &=\sum_{\sigma,k\in B^{\sigma}_F}(1-a^*_{k,\sigma}a_{k,\sigma})\\
  &=N-\sum_{\sigma,k\in B^{\sigma}_F}a^*_{k,\sigma}a_{k,\sigma}=\mathcal{N}_{ex}.
\end{aligned}
\end{equation}
Similarly, we also have, on $\mathcal{H}^{\wedge N}(\{N_{\varsigma_i}\})$,
\begin{equation}\label{number of exicited particles ops3}
  \sum_{k\in B^{\sigma}_F}a_{k,\sigma}a^*_{k,\sigma}=\mathcal{N}_{ex,\sigma}.
\end{equation}
For $x\in\Lambda$ and $\sigma\in\mathcal{S}_\mathbf{q}$, we define
\begin{equation}\label{g_x,sigma h_x,sigma}
  g_{x,\sigma}(z)=\sum_{k\in B^{\sigma}_F}e^{ikx}f_{k,\sigma}(z),\quad
  h_{x,\sigma}(z)=\sum_{k\notin B^{\sigma}_F}e^{ikx}f_{k,\sigma}(z),\quad
  z\in \Lambda\times\mathcal{S}_\mathbf{q}.
\end{equation}
Using (\ref{define a*(f) & a(g)}) and (\ref{number of exicited particles ops2}), we reach
\begin{equation}\label{useful eqn1}
  \int_{\Lambda}\sum_{\sigma}a(g_{x,\sigma})a^*(g_{x,\sigma})dx=\mathcal{N}_{ex}.
\end{equation}
With a slightly abuse of notations\footnote{In fact, $a(h_{x,\sigma})$ is well-defined on a dense subset of $\mathcal{H}^{\wedge N}(\{N_{\varsigma_i}\})$, say, the subset of smooth compactly-supported functions.}, we also have
\begin{equation}\label{useful eqn2}
   \int_{\Lambda}\sum_{\sigma}a^*(h_{x,\sigma})a(h_{x,\sigma})dx=\mathcal{N}_{ex}.
\end{equation}

\par Due to technical reasons, for $r<k_F^\sigma<R$ (Here we omit the $\sigma$ dependence of $R$ and $r$), we introduce the following notations
\begin{equation}\label{partition 3D space}
  \begin{aligned}
  P_{F,R}^\sigma&=\{k\in2\pi\mathbb{Z}^3,\,\vert k\vert>R\}\\
  A_{F,R}^\sigma&=\{k\in2\pi\mathbb{Z}^3,\,k_F^\sigma<\vert k\vert\leq R\}\\
  \underline{A}^\sigma_{F,r}&=\{k\in2\pi\mathbb{Z}^3,\,r<\vert k\vert\leq k^\sigma_F\}\\
  \underline{B}^\sigma_{F,r}&=\{k\in2\pi\mathbb{Z}^3,\,\vert k\vert\leq r\}
  \end{aligned}
\end{equation}
With the notations in (\ref{partition 3D space}), we define, for $r<k_F^\sigma<R$ (once again we omit the $\sigma$ dependence of $R$ and $r$), the momentum projections
\begin{equation}\label{tools1}
  \begin{aligned}
  &H_{x,\sigma}[R](z)=\sum_{k\in P^{\sigma}_{F,R}}e^{ikx}f_{k,\sigma}(z),\quad
  L_{x,\sigma}[R](z)=\sum_{k\in A^{\sigma}_{F,R}}e^{ikx}f_{k,\sigma}(z)\\
  &S_{x,\sigma}[R](z)=\sum_{k\in \underline{A}^\sigma_{F,r}}e^{ikx}f_{k,\sigma}(z),\quad
  I_{x,\sigma}[R](z)=\sum_{k\in \underline{B}^\sigma_{F,r}}e^{ikx}f_{k,\sigma}(z)
  \end{aligned}
\end{equation}
and the corresponding particle number operators
\begin{equation}\label{tools2}
  \begin{aligned}
  \mathcal{N}_{high}[R]&\coloneqq \sum_{\sigma,k\in P_{F,R}^{\sigma}}a^*_{k,\sigma}a_{k,\sigma}
  =\int_{\Lambda}\sum_{\sigma}a^*(H_{x,\sigma}[R])a(H_{x,\sigma}[R])dx\\
  \mathcal{N}_{low}[R]&\coloneqq \sum_{\sigma,k\in A_{F,R}^{\sigma}}a^*_{k,\sigma}a_{k,\sigma}
  =\int_{\Lambda}\sum_{\sigma}a^*(L_{x,\sigma}[R])a(L_{x,\sigma}[R])dx\\
   \mathcal{N}_{surface}[r]&\coloneqq \sum_{\sigma,k\in \underline{A}_{F,r}^{\sigma}}a_{k,\sigma}a^*_{k,\sigma}
  =\int_{\Lambda}\sum_{\sigma}a(S_{x,\sigma}[R])a^*(S_{x,\sigma}[R])dx\\
  \mathcal{N}_{inner}[r]&\coloneqq \sum_{\sigma,k\in \underline{B}_{F,r}^{\sigma}}a_{k,\sigma}a^*_{k,\sigma}
  =\int_{\Lambda}\sum_{\sigma}a(I_{x,\sigma}[R])a^*(I_{x,\sigma}[R])dx
  \end{aligned}
\end{equation}
For the sake of succinctness, we may write $\mathcal{N}_h=\mathcal{N}_{high}[R]$ and so on. For $\sharp=h,l,s,i$, the $\mathcal{N}_{\sharp,\sigma}$ version of (\ref{tools2}) can be defined similarly. For some constant $\delta$ such that $R=k_F^\sigma+N^{\delta}$ or $r=k_F^\sigma-N^\delta$, we apply the short hand $\mathcal{N}_{\sharp}[\delta]=\mathcal{N}_\sharp[R]\,\text{or}\,\mathcal{N}_\sharp[r]$ to indicate the $\delta$-dependence, which plays an important role in our discussion. The same notations are also applied to (\ref{partition 3D space}) and (\ref{tools1}). We can easily check that
\begin{equation}\label{tools3}
  \begin{aligned}
    &h_{x,\sigma}=H_{x,\sigma}+L_{x,\sigma},\quad g_{x,\sigma}=S_{x,\sigma}+I_{x,\sigma}\\
    &\mathcal{N}_{ex}=\mathcal{N}_h+\mathcal{N}_l=\mathcal{N}_{s}+\mathcal{N}_i
  \end{aligned}
\end{equation}

\section{Second Quantization and A-Priori Estimates}\label{2ndquanpri}
\par Writing from (\ref{Hamiltonian}), we have
\begin{equation}\label{second quantization H_N}
  H_N=\mathcal{K}+\mathcal{V}=\sum_{k,\sigma}\vert k\vert^2a_{k,\sigma}^*a_{k,\sigma}
  +\frac{1}{2}\sum_{k,p,q,\sigma,\nu}\hat{v}_ka^*_{p-k,\sigma}a^*_{q+k,\nu}a_{q,\nu}a_{p,\sigma},
\end{equation}
where
\begin{equation}\label{define v_k}
  \hat{v}_k=\int_{\Lambda}v_a(x)\overline{f_k}(x)dx=a\int_{\mathbb{R}^3}v(x)e^{-iakx}dx.
\end{equation}
We decompose the potential part $\mathcal{V}$ as
\begin{equation}\label{split V}
  \mathcal{V}=\mathcal{V}_0+\mathcal{V}_1+\mathcal{V}_{21}+\mathcal{V}_{22}+\mathcal{V}_{23}
  +\mathcal{V}_3+\mathcal{V}_4,
\end{equation}
with
\begin{equation}\label{split V detailed}
  \begin{aligned}
  &\mathcal{V}_0=\frac{1}{2}\sum_{k,p,q,\sigma,\nu}
  \hat{v}_ka^*_{p-k,\sigma}a^*_{q+k,\nu}a_{q,\nu}a_{p,\sigma}\chi_{p,p-k\in B^{\sigma}_F}\chi_{q,q+k\in B^{\nu}_F}\\
  &\mathcal{V}_1=\sum_{k,p,q,\sigma,\nu}
  \hat{v}_k(a^*_{p-k,\sigma}a^*_{q+k,\nu}a_{q,\nu}a_{p,\sigma}+h.c.)\chi_{p-k\notin B^{\sigma}_F}\chi_{p\in B^{\sigma}_F}\chi_{q,q+k\in B^{\nu}_F}\\
  &\mathcal{V}_{21}=\frac{1}{2}\sum_{k,p,q,\sigma,\nu}
  \hat{v}_k(a^*_{p-k,\sigma}a^*_{q+k,\nu}a_{q,\nu}a_{p,\sigma}+h.c.)\chi_{p-k\notin B^{\sigma}_F}\chi_{q+k\notin B^{\nu}_F}\chi_{p\in B^{\sigma}_F}\chi_{q\in B^{\nu}_F}\\
  &\mathcal{V}_{22}=\frac{1}{2}\sum_{k,p,q,\sigma,\nu}
  \hat{v}_ka^*_{p-k,\sigma}a^*_{q+k,\nu}a_{q,\nu}a_{p,\sigma}\\
  &\quad\quad\quad\quad\quad\quad\quad\quad
  \times(\chi_{p,p-k\notin B^{\sigma}_F}\chi_{q,q+k\in B^{\nu}_F}+\chi_{p,p-k\in B^{\sigma}_F}\chi_{q,q+k\notin B^{\nu}_F})\\
  &\mathcal{V}_{23}=\frac{1}{2}\sum_{k,p,q,\sigma,\nu}
  \hat{v}_k(a^*_{p-k,\sigma}a^*_{q+k,\nu}a_{q,\nu}a_{p,\sigma}+h.c.)\chi_{p-k\notin B^{\sigma}_F}\chi_{q+k\in B^{\nu}_F}\chi_{p\in B^{\sigma}_F}\chi_{q\notin B^{\nu}_F}\\
  &\mathcal{V}_3=\sum_{k,p,q,\sigma,\nu}
  \hat{v}_k(a^*_{p-k,\sigma}a^*_{q+k,\nu}a_{q,\nu}a_{p,\sigma}+h.c.)\chi_{p-k\notin B^{\sigma}_F}\chi_{p\in B^{\sigma}_F}\chi_{q,q+k\notin B^{\nu}_F}\\
  &\mathcal{V}_4=\frac{1}{2}\sum_{k,p,q,\sigma,\nu}
  \hat{v}_ka^*_{p-k,\sigma}a^*_{q+k,\nu}a_{q,\nu}a_{p,\sigma}\chi_{p,p-k\notin B^{\sigma}_F}\chi_{q,q+k\notin B^{\nu}_F}
  \end{aligned}
\end{equation}
For the kinetic energy, we put down
\begin{equation}\label{K_s}
  {\mathcal{K}}_{s}=\sum_{\sigma,k}\vert k\vert^2a^*_{k,\sigma}a_{k,\sigma}
  -\sum_{\sigma,k\in B_F^{\sigma}}\vert k\vert^2.
\end{equation}
Using (\ref{number of exicited particles ops3}), we can rewrite $\mathcal{K}_s$ on $\mathcal{H}^{\wedge N}(\{N_{\varsigma_i}\})$ that
\begin{equation}\label{K_s re}
  \mathcal{K}_{s}=\sum_{\sigma,k\notin B_F^{\sigma}}\big(\vert k\vert^2 -\vert k_F^{\sigma}\vert^2\big) a^*_{k,\sigma}a_{k,\sigma}+\sum_{\sigma,k\in B_F^{\sigma}}\big(\vert k_F^{\sigma}\vert^2-\vert k\vert^2 \big) a_{k,\sigma}a^*_{k,\sigma}
\end{equation}
The next lemma contains fundamental estimates about $\mathcal{K}_s$.
\begin{lemma}\label{lemma ineqn K_s}\
\begin{enumerate}[label=(\arabic*)]
  \item With previous notation, we have
  \begin{equation}\label{ineqn K_s origin}
    \mathcal{N}_{ex}\lesssim\mathcal{K}_s.
  \end{equation}
   \item For $-\frac{1}{3}\leq\delta_1<\frac{1}{3}$ and $-\frac{1}{3}\leq\delta_2\leq\frac{1}{3}$, we have
   \begin{equation}\label{ineqn N_h and N_i}
     \mathcal{N}_h[\delta_2]\lesssim N^{-\frac{1}{3}-\delta_2}\mathcal{K}_s,\quad
     \mathcal{N}_i[\delta_1]\lesssim N^{-\frac{1}{3}-\delta_1}\mathcal{K}_s
   \end{equation}
   and
   \begin{equation}\label{ineqn N_l and N_s}
     \#A^\sigma_{F,\delta_2}\lesssim N^{\frac{2}{3}+\delta_2+\varepsilon(\delta_2)},\quad
     \#\underline{A}^\sigma_{F,\delta_1}\lesssim N^{\frac{2}{3}+\delta_1+\varepsilon(\delta_1)}
   \end{equation}
   for $r=k_F^\sigma-N^{\delta_1}$ and $R=k_F^\sigma+N^{\delta_2}$ and
   \begin{equation}\label{epsilon delta}
     \varepsilon(\delta)=\left\{
     \begin{aligned}
     &\text{$0$, if $\delta>-\frac{11}{48}$}\\
     &\text{an arbitrary small constant, otherwise}
     \end{aligned}\right.
   \end{equation}
    Moreover, if $\delta_2\geq\frac{1}{3}$, we have
   \begin{equation}\label{ineqn N_h delta>1/3}
     \mathcal{N}_h[\delta_2]\lesssim N^{-2\delta_2}\mathcal{K}_s,
     \quad \#A^\sigma_{F,\delta_2}\lesssim N^{3\delta_2}.
   \end{equation}
  \item For some $-\frac{1}{3}\leq\delta\leq\frac{1}{3}$, the following inequality holds:
 \begin{equation}\label{ineqn K_s}
 \mathcal{N}_{ex}\lesssim N^{-\frac{1}{3}-\delta}\mathcal{K}_s+N^{\frac{2}{3}+\delta+\varepsilon(\delta)}.
 \end{equation}
 \end{enumerate}
 \end{lemma}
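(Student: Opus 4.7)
The plan is to establish the three parts by exploiting the manifestly non-negative rewriting (\ref{K_s re}) of $\mathcal{K}_s$ on $\mathcal{H}^{\wedge N}(\{N_{\varsigma_i}\})$, the spectral gap (\ref{lower bound k_F}) at the Fermi surface, and the lattice-point counts (\ref{approximate lattice 3d ball}) and (\ref{approximate lattice 2d sphere}). Part (1) is immediate: both sums in (\ref{K_s re}) are non-negative term-by-term on $\mathcal{H}^{\wedge N}(\{N_{\varsigma_i}\})$, and combining the exterior piece with (\ref{lower bound k_F}) gives $\mathcal{K}_s \geq 4\pi^2 \sum_{\sigma,\, k\notin B_F^\sigma} a_{k,\sigma}^* a_{k,\sigma} = 4\pi^2\,\mathcal{N}_{ex}$.

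For the first half of (\ref{ineqn N_h and N_i}) I would sharpen this pointwise gap. For $k \in P_{F,R}^\sigma$ with $R = k_F^\sigma + N^{\delta_2}$, the asymptotic $k_F^\sigma \sim N^{1/3}$ from (\ref{k_F^sigma}) gives $|k|^2 - |k_F^\sigma|^2 \geq R^2 - |k_F^\sigma|^2 \gtrsim N^{1/3+\delta_2}$, so each $a_{k,\sigma}^* a_{k,\sigma}$ is dominated by $N^{-1/3-\delta_2}(|k|^2 - |k_F^\sigma|^2)\, a_{k,\sigma}^* a_{k,\sigma}$, and summation against the exterior piece of (\ref{K_s re}) closes the bound. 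The inner bound follows the same pattern using the interior piece of (\ref{K_s re}) together with the analogue $\sum_{k\in\underline{B}_{F,r}^\sigma} a_{k,\sigma} a_{k,\sigma}^* = \mathcal{N}_{i,\sigma}$ of (\ref{number of exicited particles ops3}) and the gap $|k_F^\sigma|^2 - r^2 \gtrsim N^{1/3+\delta_1}$; here the restriction $\delta_1 < 1/3$ is precisely what prevents the quadratic correction $N^{2\delta_1}$ from swallowing the main term $2 k_F^\sigma N^{\delta_1}$. For (\ref{ineqn N_h delta>1/3}) the argument is even easier: when $\delta_2 \geq 1/3$, $R \geq 2 k_F^\sigma$ and hence $|k|^2 - |k_F^\sigma|^2 \geq |k|^2/2 \gtrsim N^{2\delta_2}$.

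For the counting bounds (\ref{ineqn N_l and N_s}) and the ball count in (\ref{ineqn N_h delta>1/3}) I would split on the size of $\delta$. When $\delta > -11/48$, I write $\#A_{F,\delta}^\sigma = \#\mathcal{B}_{R^2} - \#\mathcal{B}_{(k_F^\sigma)^2}$ and apply (\ref{approximate lattice 3d ball}): the volume term is $\sim N^{2/3+\delta}$ while the error is $O(R^{21/16+\varepsilon}) = O(N^{7/16+\varepsilon})$, which is subdominant exactly because $7/16 < 2/3 + \delta$, so one may take $\varepsilon(\delta) = 0$. When $\delta \leq -11/48$, the 3D asymptotic is no longer sharp and I would instead cover the shell by 2D spheres: at most $\lesssim N^{1/3+\delta}$ distinct values of $|k|^2/(4\pi^2) \in \mathbb{Z}_{\geq 0}$ lie in the shell, each sphere carries $\lesssim N^{1/3+\varepsilon'}$ lattice points by (\ref{approximate lattice 2d sphere}), and multiplication yields $\#A_{F,\delta}^\sigma \lesssim N^{2/3+\delta+\varepsilon'}$ for arbitrary small $\varepsilon' > 0$, exactly as in (\ref{epsilon delta}); the inner count and the ball count in (\ref{ineqn N_h delta>1/3}) are handled identically. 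Finally, (\ref{ineqn K_s}) comes out of the decomposition $\mathcal{N}_{ex} = \mathcal{N}_h[\delta] + \mathcal{N}_l[\delta]$ from (\ref{tools3}): the high part is bounded by (\ref{ineqn N_h and N_i}), and the low part satisfies $\mathcal{N}_l[\delta] \leq \sum_\sigma \#A_{F,\delta}^\sigma$ by (\ref{ineqn N_l and N_s}) because each $a_{k,\sigma}^* a_{k,\sigma}$ has spectrum $\subset\{0,1\}$ by the Pauli principle. The main obstacle is precisely this thin-shell count for $\delta \leq -11/48$, where the paper's standing lattice asymptotic (\ref{approximate lattice 3d ball}) ceases to furnish the heuristic volume bound and one must descend to the sphere-by-sphere count, absorbing the resulting arithmetic $\varepsilon$-loss into $\varepsilon(\delta)$.
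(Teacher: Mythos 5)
Your proof is correct and follows essentially the same route as the paper: part (1) from the spectral gap (\ref{lower bound k_F}) applied to the non-negative expression (\ref{K_s re}), part (2) by sharpening the gap to $\gtrsim N^{1/3+\delta}$ on the cut-off shells and counting lattice points via (\ref{approximate lattice 3d ball}) for $\delta > -11/48$ and via (\ref{approximate lattice 2d sphere}) in the thin-shell regime, and part (3) by the splitting $\mathcal{N}_{ex} = \mathcal{N}_h[\delta] + \mathcal{N}_l[\delta]$ with the operator-norm bound on the low piece. Your Pauli-principle argument for $\mathcal{N}_l[\delta]\leq\sum_\sigma\#A_{F,\delta}^\sigma$ is the same bound the paper obtains through $\int_\Lambda\|a(L_{x,\sigma})\|^2\,dx$; the paper's proof is merely more terse.
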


\begin{proof}

\par (\ref{ineqn K_s origin}) and (\ref{ineqn N_h and N_i}) comes directly from (\ref{K_s re}), (\ref{lower bound k_F}) and (\ref{tools2}), while (\ref{ineqn N_l and N_s}) follows from (\ref{approximate lattice 3d ball}), (\ref{k_F^sigma}) and (\ref{approximate lattice 2d sphere}).

\par The proof of (\ref{ineqn K_s}) is similar to \cite[Lemma 3.6]{dilutefermiBog}. First we notice that
\begin{equation*}
  \mathcal{N}_l[\delta]=\int_{\Lambda}\sum_{\sigma}a^*(L_{x,\sigma})a(L_{x,\sigma})dx
  \leq \sum_{\sigma}\int_{\Lambda}\Vert a(L_{x,\sigma})\Vert^2dx\leq
  \sum_{\sigma}\#A^\sigma_{F,\delta}
\end{equation*}
where the last inequality comes from (\ref{norm bound  of creation and annihilation}). Using the notations defined in (\ref{tools2}), we take $R=k_F^\sigma+N^{\delta}$. By (\ref{tools3}), we have
\begin{equation*}
  \mathcal{N}_{ex}=\mathcal{N}_{l}[\delta]+\mathcal{N}_h[\delta].
\end{equation*}
By (\ref{ineqn N_h and N_i}) and (\ref{ineqn N_l and N_s}) we deduce (\ref{ineqn K_s}) immediately.
\end{proof}

\par Next we give an a-prior ground energy estimate of $E_{\{N_{\varsigma_i}\},\mathbf{q}}$.
\begin{lemma}\label{0 order approximation lemma}
\begin{equation}\label{0 order approximation}
 E_{\{N_{\varsigma_i}\},\mathbf{q}}=\sum_{\sigma,k\in B_F^{\sigma}}\vert k\vert^2+O(N).
\end{equation}
\end{lemma}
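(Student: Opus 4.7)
The plan is to match an upper and a lower bound for $E_{\{N_{\varsigma_i}\},\mathbf{q}}$ up to an $O(N)$ error, the lower bound coming from manifest positivity and the upper from testing against the Fermi-sea Slater determinant.

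For the lower bound, $v \geq 0$ yields $\mathcal{V} \geq 0$ as an operator, so $H_N \geq \mathcal{K}$. The rewriting (\ref{K_s re}) expresses $\mathcal{K}_s$ as a manifestly non-negative operator on $\mathcal{H}^{\wedge N}(\{N_{\varsigma_i}\})$: for $k \notin B_F^\sigma$ the coefficient $|k|^2 - |k_F^\sigma|^2 \geq 0$ by (\ref{lower bound k_F}), while for $k \in B_F^\sigma$ the coefficient $|k_F^\sigma|^2 - |k|^2 \geq 0$ by the definition of the Fermi ball. Hence on the constrained space $\mathcal{K} = \mathcal{K}_s + \sum_{\sigma, k \in B_F^\sigma} |k|^2 \geq \sum_{\sigma, k \in B_F^\sigma}|k|^2$, and taking the infimum yields $E_{\{N_{\varsigma_i}\},\mathbf{q}} \geq \sum_{\sigma, k \in B_F^\sigma}|k|^2$.

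For the upper bound, test against the Fermi-sea Slater determinant $\psi_{FS}$ obtained by filling $B_F^{\varsigma_i}$ completely for each spin $\varsigma_i$; by (\ref{s_N}) and (\ref{orthonormal basis H wedge N}) this is a unit vector in $\mathcal{H}^{\wedge N}(\{N_{\varsigma_i}\})$. Since $\mathcal{N}_{ex}\psi_{FS}=0$, (\ref{K_s re}) gives $\langle \mathcal{K}\psi_{FS},\psi_{FS}\rangle = \sum_{\sigma, k \in B_F^\sigma}|k|^2$. For the potential, Wick's theorem (equivalently, repeated application of the anticommutator (\ref{anticommutator})) expresses $\langle a^*_{p-k,\sigma} a^*_{q+k,\nu} a_{q,\nu} a_{p,\sigma}\psi_{FS},\psi_{FS}\rangle$ as the sum of a direct contraction (forcing $k=0$ and $p \in B_F^\sigma$, $q \in B_F^\nu$) and an exchange contraction (forcing $\sigma=\nu$ and $k=p-q$ with $p,q \in B_F^\sigma$), producing
\begin{equation*}
\langle \mathcal{V}\psi_{FS},\psi_{FS}\rangle
= \frac{\hat{v}_0}{2} \sum_{\sigma,\nu} N_\sigma N_\nu
- \frac{1}{2} \sum_{\sigma} \sum_{p,q \in B_F^\sigma} \hat{v}_{p-q}.
\end{equation*}
By (\ref{define v_k}) we have $\hat{v}_0 = a\|v\|_{L^1} = O(N^{-1})$ and $|\hat{v}_k| \leq \hat{v}_0$ (since $v \geq 0$), so each sum is bounded in absolute value by $\hat{v}_0 \cdot N^2 = O(N)$. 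This gives $E_{\{N_{\varsigma_i}\},\mathbf{q}} \leq \sum_{\sigma, k \in B_F^\sigma}|k|^2 + O(N)$, closing the estimate.

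There is no serious obstacle --- this is a standard a-priori estimate of the type appearing in \cite{2005fermiphy,dilutefermiBog}. The only bookkeeping required is the sign assignment in the two fermionic Wick contractions and the observation that the Gross-Pitaevskii scaling $\hat{v}_0 \sim N^{-1}$ exactly cancels the $O(N^2)$ count of interacting pairs to produce the advertised $O(N)$ remainder. Refining this error to $O(N^{1/3 - 1/360})$ (the content of Theorem \ref{core}) is precisely the work reserved for the three renormalizations $e^B$, $e^{B'}$, $e^{\tilde{B}}$ introduced in the later sections and collected in Propositions \ref{p1}--\ref{p3}.
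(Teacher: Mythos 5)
Your proof is correct and follows essentially the same route as the paper: the lower bound from $\mathcal{K}_s,\mathcal{V}\geq 0$, and the upper bound by testing against the filled Fermi-sea Slater determinant. The only cosmetic difference is that you evaluate $\langle\mathcal{V}\psi_{FS},\psi_{FS}\rangle$ by a direct Wick contraction on that particular state, whereas the paper invokes its operator identity for $\mathcal{V}_0$ (Lemma \ref{lemma V_0}); both yield the same direct-plus-exchange expression bounded by $\hat{v}_0 N^2=O(N)$.
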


\begin{proof}
\par From (\ref{second quantization H_N}), (\ref{K_s re}) and $v\geq 0$, we know that
\begin{equation*}
  H_N-\sum_{\sigma,k\in B_F^{\sigma}}\vert k\vert^2=\mathcal{K}_s+\mathcal{V}\geq0.
\end{equation*}
On the other hand, we can test against
\begin{equation*}
  \psi=\bigwedge_{\sigma,k\in B_F^{\sigma}}f_{k,\sigma}.
\end{equation*}
that is, $\psi=f_{s_N}$, where $s_N$ is the unique element of $\mathcal{P}^{(N)}$ defined in (\ref{define P^(N)}). Then using (\ref{formulae of crea and annihi ops}), we can calculate directly
\begin{equation*}
  \langle\mathcal{K}\psi,\psi\rangle=\sum_{\sigma,k\in B_F^{\sigma}}\vert k\vert^2.
\end{equation*}
Using Lemma \ref{lemma V_0} below, we can calculate
\begin{equation*}
\begin{aligned}
  \langle\mathcal{V}\psi,\psi\rangle&=\langle\mathcal{V}_0\psi,\psi\rangle
  =\frac{\hat{v}_0}{2}\sum_{\sigma\neq\nu}N_\sigma N_\nu+o(N)=O(N),
\end{aligned}
\end{equation*}
since $\vert\hat{v}_0\vert\lesssim a$. Thence we conclude the proof.
\end{proof}

\par In order to reach our final result, we need an \textquotedblleft Excitations Estimate\textquotedblright (See Proposition \ref{Optimal BEC}). In this section, we give an a-priori excitations estimate in Lemma \ref{lemma prori BEC}. We shall refer to the states which are close enough to the ground state of $H_N$ (in the sense of energy) as approximate ground state. This definition had already arisen in, for example, \cite{dilutefermiBog}.
\begin{definition}[Approximate ground state]\label{def approximate ground state}
  Let $N_j\to\infty$ as $j\to\infty$. If for any $j$, $\psi_j\in\mathcal{H}^{\wedge N_j}(\{N_{\varsigma_i,j}\})$ satisfies $\Vert\psi_j\Vert=1$ and
  \begin{equation}\label{approximate ground state}
    \Big\vert\langle H_{N_j}\psi_j,\psi_j\rangle-\sum_{\sigma,k\in B_F^{\sigma}}\vert k\vert^2\Big\vert\lesssim N_j.
  \end{equation}
  Then we call $\{\psi_j\}$ a family of approximate ground state of order $0$. If, moreover, $\{\psi_j\}$ satisfies
  \begin{equation}\label{approximate ground state 1st order}
    \Big\vert\langle H_{N_j}\psi_j,\psi_j\rangle-\sum_{\sigma,k\in B_F^{\sigma}}\vert k\vert^2-4\pi a\mathfrak{a}_0\sum_{\sigma\neq\nu}N_\sigma N_\nu\Big\vert\lesssim N_j^{\delta}.
  \end{equation}
  for some $\delta<1$, then we call $\{\psi_j\}$ a family of approximate ground state of order $1$.
\end{definition}

That is, Lemma \ref{0 order approximation lemma} says the true ground states of $H_N$ can form a family of approximate ground state of order $0$. We will see later that the true ground state energy can be expanded in the form of (\ref{approximate ground state 1st order}). That is, the above definition is well-defined. With Definition \ref{def approximate ground state}, we can state our a-priori excitation estimate.

\begin{lemma}\label{lemma prori BEC}
Let $\{\psi_j\}$ be a family of approximate ground state of order $0$, then
\begin{equation}\label{prori BEC}
  \langle\mathcal{N}_{ex}\psi_j,\psi_j\rangle\lesssim N^{\frac{2}{3}}.
\end{equation}
\end{lemma}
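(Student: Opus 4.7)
The plan is to combine the energy bound defining an approximate ground state of order $0$ with the kinetic control already recorded in Lemma \ref{lemma ineqn K_s}, and in particular the key a-priori inequality (\ref{ineqn K_s}) that trades off $\mathcal{K}_s$ against $\mathcal{N}_{ex}$.

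First I would exploit the non-negativity of the interaction. Since $v\geq 0$ implies $\mathcal{V}\geq 0$ as a second-quantized operator, the decomposition $H_N - \sum_{\sigma,k\in B_F^\sigma}|k|^2 = \mathcal{K}_s + \mathcal{V}$ immediately yields the operator inequality $\mathcal{K}_s \leq H_N - \sum_{\sigma,k\in B_F^\sigma}|k|^2$. Testing this against $\psi_j$ and invoking the approximate ground state hypothesis (\ref{approximate ground state}) gives
\begin{equation*}
\langle \mathcal{K}_s \psi_j,\psi_j\rangle \;\leq\; \langle H_{N_j}\psi_j,\psi_j\rangle - \sum_{\sigma,k\in B_F^\sigma}|k|^2 \;\lesssim\; N_j.
\end{equation*}

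Next I would invoke part (3) of Lemma \ref{lemma ineqn K_s} with the specific choice $\delta=0$, which is allowed since $-1/3 \leq 0 \leq 1/3$, and note that $\varepsilon(0)=0$ because $0 > -11/48$. This gives
\begin{equation*}
\mathcal{N}_{ex} \;\lesssim\; N^{-\frac{1}{3}}\,\mathcal{K}_s + N^{\frac{2}{3}}
\end{equation*}
as operators. Taking expectation in $\psi_j$ and inserting the kinetic bound from the previous step yields
\begin{equation*}
\langle \mathcal{N}_{ex}\psi_j,\psi_j\rangle \;\lesssim\; N_j^{-\frac{1}{3}}\cdot N_j + N_j^{\frac{2}{3}} \;\lesssim\; N_j^{\frac{2}{3}},
\end{equation*}
which is exactly (\ref{prori BEC}).

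There is no real obstacle here: the whole content of the lemma is packaged into Lemma \ref{lemma ineqn K_s}(3), whose proof already handled the delicate geometric counting of lattice points near the Fermi surface via (\ref{approximate lattice 3d ball}) and (\ref{k_F^sigma}). The choice $\delta=0$ is optimal in the sense that it balances the two contributions $N^{-1/3-\delta}\langle \mathcal{K}_s \rangle \sim N^{2/3-\delta}$ and $N^{2/3+\delta}$ against one another. Any other admissible $\delta$ produces a strictly worse bound, so the exponent $2/3$ appearing in (\ref{prori BEC}) is the natural output of this argument.
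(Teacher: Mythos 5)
Your proof is correct and follows the same approach as the paper: bound $\langle\mathcal{K}_s\psi_j,\psi_j\rangle\lesssim N$ using $\mathcal{V}\geq0$ and the order-$0$ approximate ground state hypothesis, then feed this into (\ref{ineqn K_s}) with the optimal choice $\delta=0$. Your extra sentence explaining why $\delta=0$ balances the two terms simply makes explicit what the paper compresses into the phrase ``optimize it over $\delta$.''
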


\begin{proof}
  \par Since $\{\psi_j\}$ is a family of approximate ground state of order $0$, we can deduce from (\ref{approximate ground state}) and the fact that $\mathcal{K}_s,\mathcal{V}\geq0$:
  \begin{equation}\label{K_spsi_j psi_j}
    \langle\mathcal{K}_s\psi_j,\psi_j\rangle\lesssim N.
  \end{equation}
  Insert (\ref{K_spsi_j psi_j}) into (\ref{ineqn K_s}) and optimize it over $\delta$ by choosing $\delta=0$, we reach the bound (\ref{prori BEC}).
\end{proof}

\par Before we get down to the evaluations of three renormalizations, we establish some useful estimates about the terms in (\ref{split V detailed}).

\begin{lemma}\label{lemma V_0}
We can write $\mathcal{V}_0$ defined in (\ref{split V detailed}) by
\begin{equation}\label{est V_0}
  \mathcal{V}_0=\frac{\hat{v}_0}{2}\sum_{\sigma\neq\nu}(N_{\sigma}-\mathcal{N}_{ex,\sigma})
(N_{\nu}-\mathcal{N}_{ex,\nu})+\mathcal{R}_0+\mathcal{E}_{\mathcal{V}_0},
\end{equation}
where
\begin{equation}\label{R_0}
  \mathcal{R}_0=\frac{1}{2}\sum_{k,p,q,\sigma,\nu}
  \hat{v}_ka_{q,\nu}a_{p,\sigma}a^*_{p-k,\sigma}a^*_{q+k,\nu}\chi_{p,p-k\in B^{\sigma}_F}\chi_{q,q+k\in B^{\nu}_F}
\end{equation}
satisfies $\mathcal{R}_0\geq0$ and
\begin{equation}\label{est R_0}
  \pm\mathcal{R}_0\lesssim Na\mathcal{N}_{ex},
\end{equation}
and
\begin{equation}\label{est R_0 true}
  \pm\mathcal{R}_0\lesssim N^{-\frac{1}{6}}\mathcal{N}_{ex}
  +N^{\frac{1}{6}}\mathcal{N}_{i}[0].
\end{equation}
Moreover, the error can be bounded by
\begin{equation}\label{est E_V_0}
  \pm\mathcal{E}_{\mathcal{V}_0}\lesssim a \mathcal{N}_{ex}^2+N^{\frac{8}{3}}a^{3}.
\end{equation}
\end{lemma}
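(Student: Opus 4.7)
My plan is to reorder the quartic $a^\ast a^\ast aa$ appearing in $\mathcal{V}_0$ into the reverse-normal order $aa\,a^\ast a^\ast$, which is exactly $\mathcal{R}_0$, and to collect the four Wick contractions as a quadratic correction. Iterating (\ref{anticommutator}) with $\alpha=(p-k,\sigma)$, $\beta=(q+k,\nu)$, $\gamma=(q,\nu)$, $\delta=(p,\sigma)$ gives
\begin{equation*}
a^\ast_\alpha a^\ast_\beta a_\gamma a_\delta
=a_\gamma a_\delta a^\ast_\alpha a^\ast_\beta
+\delta_{\beta,\gamma}a^\ast_\alpha a_\delta-\delta_{\beta,\delta}a^\ast_\alpha a_\gamma
+\delta_{\alpha,\gamma}a_\delta a^\ast_\beta-\delta_{\alpha,\delta}a_\gamma a^\ast_\beta.
\end{equation*}
Substituted into $\mathcal{V}_0$, the leading summand reproduces $\mathcal{R}_0$. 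The diagonal deltas $\delta_{\beta,\gamma}$ and $\delta_{\alpha,\delta}$ force $k=0$, and with (\ref{cal N_sigma})--(\ref{number of exicited particles ops3}) they collect to $\frac{\hat v_0}{2}N^2-\hat v_0 N\mathcal{N}_{ex}$; the exchange deltas $\delta_{\alpha,\gamma}$ and $\delta_{\beta,\delta}$ force $\sigma=\nu$, and after symmetrising $p\leftrightarrow q$ (using $\hat v_{-k}=\hat v_k$) they collapse to
\begin{equation*}
\tfrac12\sum_\sigma\sum_{p,q\in B^\sigma_F}\hat v_{p-q}\;-\;\sum_\sigma\sum_{p\in B^\sigma_F}\Bigl(\sum_{q\in B^\sigma_F}\hat v_{p-q}\Bigr)a^\ast_{p,\sigma}a_{p,\sigma}.
\end{equation*}

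To identify the claimed main term I write $M=\frac{\hat v_0}{2}\sum_{\sigma\neq\nu}(N_\sigma-\mathcal{N}_{ex,\sigma})(N_\nu-\mathcal{N}_{ex,\nu})=\frac{\hat v_0}{2}(N-\mathcal{N}_{ex})^2-\frac{\hat v_0}{2}\sum_\sigma(N_\sigma-\mathcal{N}_{ex,\sigma})^2$; its first summand cancels the diagonal correction up to a residual $-\frac{\hat v_0}{2}\mathcal{N}_{ex}^2$. The crucial quantitative input is $|\hat v_k-\hat v_0|\lesssim a^3|k|^2$ for $|k|\ll 1/a$, obtained by Taylor expanding (\ref{define v_k}) to second order and noting that the linear term vanishes by the radial symmetry of $v$. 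Combined with $|p-q|\le 2k_F^\sigma\sim N^{1/3}$ and $N_\sigma\lesssim N$ this yields
\begin{equation*}
\Bigl|\sum_{q\in B^\sigma_F}\hat v_{p-q}-\hat v_0 N_\sigma\Bigr|\lesssim a^3\cdot N\cdot N^{2/3}=N^{-4/3},\qquad
\Bigl|\sum_\sigma\sum_{p,q\in B^\sigma_F}\hat v_{p-q}-\hat v_0\sum_\sigma N_\sigma^2\Bigr|\lesssim N^{-1/3},
\end{equation*}
so after expanding $(N_\sigma-\mathcal{N}_{ex,\sigma})^2$ the $\sum_\sigma N_\sigma^2$ and $\sum_\sigma N_\sigma\mathcal{N}_{ex,\sigma}$ contributions cancel against the exchange c-number and density pieces, leaving only $\frac{\hat v_0}{2}\bigl(\sum_\sigma\mathcal{N}_{ex,\sigma}^2-\mathcal{N}_{ex}^2\bigr)$ plus an $O(N^{-1/3})=O(N^{8/3}a^3)$ remainder. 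Since $\sum_\sigma\mathcal{N}_{ex,\sigma}^2\le\mathcal{N}_{ex}^2$, this produces (\ref{est E_V_0}).

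For positivity of $\mathcal{R}_0$ I pass to position space via $\psi^\sigma(x)=a(g_{x,\sigma})=\sum_{p\in B^\sigma_F}e^{-ipx}a_{p,\sigma}$ from (\ref{g_x,sigma h_x,sigma}); a direct Fourier identification using $v_a(x-y)=\sum_k\hat v_k e^{ik(x-y)}$ and the total-momentum constraint gives
\begin{equation*}
\mathcal{R}_0=\tfrac12\sum_{\sigma,\nu}\int_\Lambda\!\int_\Lambda v_a(x-y)\bigl(\psi^\sigma(x)^\ast\psi^\nu(y)^\ast\bigr)^{\!\ast}\bigl(\psi^\sigma(x)^\ast\psi^\nu(y)^\ast\bigr)\,dx\,dy,
\end{equation*}
whose integrand is pointwise non-negative since $v_a\ge 0$, so $\mathcal{R}_0\ge 0$. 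The rough bound (\ref{est R_0}) follows by combining this representation with Cauchy--Schwarz in the inner Hilbert space, $\|\psi^\sigma(x)^\ast\|^2\lesssim N_\sigma$ (using (\ref{norm bound  of creation and annihilation}) and the Fermi-ball count), $|\hat v_0|\lesssim a$, and the identity $\int\psi^\nu(y)\psi^\nu(y)^\ast dy=\mathcal{N}_{ex,\nu}$ from (\ref{useful eqn1}). For the sharper bound (\ref{est R_0 true}) I decompose each inner momentum into a Fermi-surface layer $\{k_F^\sigma-1<|p|\le k_F^\sigma\}$ (controlled by $\mathcal{N}_i[0]$ through (\ref{tools2})) and a bulk piece, and apply Cauchy--Schwarz pairing one surface index against one bulk index; the Fermi-surface count $\#\partial B^\sigma_F\lesssim N^{1/3+\varepsilon}$ from (\ref{approximate lattice 2d sphere}) then balances the weights to $N^{-1/6}$ and $N^{1/6}$.

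The main obstacle is the sharpening (\ref{est R_0 true}): because all four momenta in $\mathcal{R}_0$ lie inside Fermi balls, no kinetic energy $\mathcal{K}_s$ is available to absorb them, and the improvement beyond the crude $Na\mathcal{N}_{ex}$ bound must come purely from fermionic counting on $\partial B_F^\sigma$. Balancing the surface/bulk decomposition against this $N^{1/3+\varepsilon}$ count so as to produce the symmetric $N^{\pm 1/6}$ weights is the delicate bookkeeping step.
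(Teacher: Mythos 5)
Your normal-ordering computation, the identification of the main term, the error bound (\ref{est E_V_0}), the position-space representation proving $\mathcal{R}_0\ge 0$, and the rough bound (\ref{est R_0}) are all correct and follow essentially the paper's route; the only organizational difference is that you reorder the full quartic at once and read off all four contractions, whereas the paper first splits $\mathcal{V}_0$ into the $k=0$, exchange-diagonal, and remainder pieces and reorders only the last one -- your version is a bit cleaner and the residual $\frac{\hat v_0}{2}\bigl(\sum_\sigma\mathcal{N}_{ex,\sigma}^2-\mathcal{N}_{ex}^2\bigr)$ you isolate is exactly what the paper absorbs into $\mathcal{E}_{\mathcal{V}_{03}}$.

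There is, however, a genuine gap in your argument for the refined bound (\ref{est R_0 true}). First, the shell $\{k_F^\sigma-1<|p|\le k_F^\sigma\}$ is the \emph{surface} region $\underline{A}^\sigma_{F,0}$, whose number operator is $\mathcal{N}_s[0]$, not $\mathcal{N}_i[0]$; $\mathcal{N}_i[0]$ counts the complementary interior $\{|p|\le k_F^\sigma-1\}$. This is not just a labeling issue: there is no useful bound of $\mathcal{N}_s$ by $\mathcal{K}_s$ (the kinetic gap degenerates at the Fermi surface), so a term $N^{1/6}\mathcal{N}_s[0]$ would be useless downstream, and the lemma genuinely asserts $N^{1/6}\mathcal{N}_i[0]$. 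Second, the count you invoke, $\#\partial B_F^\sigma\lesssim N^{1/3+\varepsilon}$, is the count of the \emph{exact} sphere; the width-one shell $\underline{A}^\sigma_{F,0}$ contains $\lesssim N^{2/3}$ lattice points (by (\ref{ineqn N_l and N_s}) with $\delta_1=0$), and it is this $N^{2/3}$ that produces the exponent $-1/6$. The working argument is: split $a^*(g_{x,\sigma})=a^*(S_{x,\sigma}[0])+a^*(I_{x,\sigma}[0])$ in only one slot of $\mathcal{R}_0$; for the surface piece use the \emph{operator norm} $\Vert S_{x,\sigma}[0]\Vert\lesssim N^{1/3}$ together with $\Vert g_{y,\nu}\Vert\lesssim N^{1/2}$ and $\Vert v_a\Vert_1\lesssim a$ to get $N^{5/6}a\,\mathcal{N}_{ex}=N^{-1/6}\mathcal{N}_{ex}$; for the interior piece keep both $\Vert g\Vert$ factors ($\lesssim N$), apply Cauchy--Schwarz pairing $\Vert a^*(g_{y,\nu})\psi\Vert$ against $\Vert a^*(I_{x,\sigma}[0])\psi\Vert$ to get $Na\,\langle\mathcal{N}_{ex}\rangle^{1/2}\langle\mathcal{N}_i[0]\rangle^{1/2}$, and only then choose the AM--GM weights $N^{\mp 1/6}$ (so that this matches the surface piece and so that $N^{1/6}\mathcal{N}_i[0]\lesssim N^{-1/6}\mathcal{K}_s$ via (\ref{ineqn N_h and N_i})). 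The $\pm 1/6$ does not come from the sphere count; it comes from $N^{1/3}\cdot N^{1/2}\cdot a$ on the surface side and from a free choice of AM--GM weight on the interior side.
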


\begin{proof}
\par We rewrite
\begin{equation*}
  \mathcal{V}_0=\mathcal{V}_{01}+\mathcal{V}_{02}+\mathcal{V}_{03},
\end{equation*}
where
\begin{equation*}
  \begin{aligned}
  &\mathcal{V}_{01}=\frac{1}{2}\sum_{k,p,q,\sigma,\nu}
  \hat{v}_ka^*_{p-k,\sigma}a^*_{q+k,\nu}a_{q,\nu}a_{p,\sigma}\chi_{p,p-k\in B^{\sigma}_F}\chi_{q,q+k\in B^{\nu}_F}\chi_{k=0}\\
  &\mathcal{V}_{02}=\frac{1}{2}\sum_{k,p,q,\sigma,\nu}
  \hat{v}_ka^*_{p-k,\sigma}a^*_{q+k,\nu}a_{q,\nu}a_{p,\sigma}\chi_{p,p-k\in B^{\sigma}_F}\chi_{q,q+k\in B^{\nu}_F}\chi_{k\neq0}\chi_{\sigma=\nu}\chi_{p=q+k}\\
  &\mathcal{V}_{03}=\frac{1}{2}\sum_{k,p,q,\sigma,\nu}
  \hat{v}_ka^*_{p-k,\sigma}a^*_{q+k,\nu}a_{q,\nu}a_{p,\sigma}\chi_{p,p-k\in B^{\sigma}_F}\chi_{q,q+k\in B^{\nu}_F}\chi_{k\neq0}\\
  &\quad\quad\quad\quad\quad\quad\quad\quad
  \times(1-\chi_{\sigma=\nu}\chi_{p=q+k})
  \end{aligned}
\end{equation*}

\par Using (\ref{anticommutator}), we have
\begin{equation*}
  a^*_{p,\sigma}a^*_{q,\nu}a_{q,\nu}a_{p,\sigma}=
  a^*_{p,\sigma}a_{p,\sigma}a^*_{q,\nu}a_{q,\nu}-\delta_{p,q}\delta_{\sigma,\nu}
  a^*_{p,\sigma}a_{p,\sigma},
\end{equation*}
which leads to
\begin{equation}\label{cal V_01}
  \begin{aligned}
  \mathcal{V}_{01}&=\frac{1}{2}\sum_{p,q,\sigma,\nu}\hat{v}_0a^*_{p,\sigma}a^*_{q,\nu}
  a_{q,\nu}a_{p,\sigma}\chi_{p\in B^{\sigma}_F}\chi_{q\in B^{\nu}_F}\\
  &=\frac{\hat{v}_0}{2}\sum_{\sigma,\nu}(N_{\sigma}-\mathcal{N}_{ex,\sigma})
  (N_{\nu}-\delta_{\sigma,\nu}-\mathcal{N}_{ex,\nu})\\
  &=\frac{\hat{v}_0}{2}\sum_{\sigma\neq\nu}(N_{\sigma}-\mathcal{N}_{ex,\sigma})
  (N_{\nu}-\mathcal{N}_{ex,\nu})
  +\frac{\hat{v}_0}{2}\sum_{\sigma}(N_{\sigma}-\mathcal{N}_{ex,\sigma})
  (N_{\nu}-1-\mathcal{N}_{ex,\nu}).
  \end{aligned}
\end{equation}

\par For $\mathcal{V}_{02}$, by a change of variables, we have
\begin{equation}\label{cal V_02 1}
\begin{aligned}
  \mathcal{V}_{02}&=\frac{1}{2}\sum_{k,p,\sigma}
  \hat{v}_ka^*_{p-k,\sigma}a^*_{p,\sigma}a_{p-k,\sigma}a_{p,\sigma}\chi_{p,p-k\in B^{\sigma}_F}\chi_{k\neq0}\\
  &=\frac{1}{2}\sum_{p,q,\sigma}
  \hat{v}_{p-q}a^*_{q,\sigma}a^*_{p,\sigma}a_{q,\sigma}a_{p,\sigma}\chi_{p,q\in B^{\sigma}_F}\chi_{p-q\neq0}.
\end{aligned}
\end{equation}
Notice from (\ref{define v_k}) and the fact that $v(-x)=v(x)$, we have, for some $\theta\in(0,1)$, that
\begin{equation}\label{v_k-v_0}
  \begin{aligned}
  \big\vert\hat{v}_k-\hat{v}_0\big\vert
  &=\Big\vert a\int_{\mathbb{R}^3}v(x)\big(e^{-iakx}-1\big)dx\Big\vert\\
  &=\Big\vert a\int_{\mathbb{R}^3}v(x)\big(-iak\cdot x-
  \theta^2a^2e^{-iak\cdot\theta x} k\otimes k(y,y) \big)dx\Big\vert\\
  &\leq a^3\vert k\vert^2\int_{\mathbb{R}^3}\vert x\vert^2v(x)dx\lesssim a^3\vert k\vert^2.
  \end{aligned}
\end{equation}
Then for $\psi\in\mathcal{H}^{\wedge N}(\{N_{\varsigma_i}\})$,
\begin{equation}\label{cal V_02 2}
\begin{aligned}
  &\Big\vert\sum_{p,q,\sigma}
  \big(\hat{v}_{p-q}-\hat{v}_0\big)\chi_{p,q\in B^{\sigma}_F}\chi_{p-q\neq0}
  \langle a^*_{q,\sigma}a^*_{p,\sigma}a_{q,\sigma}a_{p,\sigma}\psi,\psi\rangle\Big\vert\\
  &\lesssim \sum_{p,q,\sigma}
  a^3\vert p-q\vert^2\chi_{p,q\in B^{\sigma}_F}\chi_{p-q\neq0}
  \Vert a_{q,\sigma}a_{p,\sigma}\psi\Vert^2
  \lesssim N^{\frac{8}{3}}a^3\Vert\psi\Vert^2,
\end{aligned}
\end{equation}
where we have used (\ref{norm bound  of creation and annihilation}) and (\ref{k_F^sigma}) in the last inequality. On the other hand, since $a_{p,\sigma}a_{q,\sigma}=0$ when $p=q$, we have
\begin{equation}\label{cal V_02 3}
\begin{aligned}
  \frac{1}{2}\sum_{p,q,\sigma}
  \hat{v}_{0}a^*_{q,\sigma}a^*_{p,\sigma}a_{q,\sigma}a_{p,\sigma}\chi_{p,q\in B^{\sigma}_F}\chi_{p-q\neq0}
  &=\frac{1}{2}\sum_{p,q,\sigma}
  \hat{v}_{0}a^*_{q,\sigma}a^*_{p,\sigma}a_{q,\sigma}a_{p,\sigma}\chi_{p,q\in B^{\sigma}_F}\\
  &=\frac{1}{2}\sum_{p,q,\sigma}
  \hat{v}_{0}a^*_{q,\sigma}(\delta_{p,q}-a_{q,\sigma}a^*_{p,\sigma})a_{p,\sigma}\chi_{p,q\in B^{\sigma}_F}\\
  &=-\frac{\hat{v}_0}{2}\sum_{\sigma}(N_{\sigma}-\mathcal{N}_{ex,\sigma})
  (N_{\sigma}-1-\mathcal{N}_{ex,\sigma}).
\end{aligned}
\end{equation}
Combing (\ref{cal V_02 1}), (\ref{cal V_02 2}) and (\ref{cal V_02 3}), we have
\begin{equation}\label{cal V_02}
  \mathcal{V}_{02}=-\frac{\hat{v}_0}{2}\sum_{\sigma}(N_{\sigma}-\mathcal{N}_{ex,\sigma})
  (N_{\sigma}-1-\mathcal{N}_{ex,\sigma})+\mathcal{E}_{\mathcal{V}_{02}},
\end{equation}
where
\begin{equation}\label{est E_V_02}
  \pm\mathcal{E}_{\mathcal{V}_{02}}\lesssim N^{\frac{8}{3}}a^3.
\end{equation}

\par For $\mathcal{V}_{03}$, we first rewrite it using (\ref{anticommutator})
\begin{equation}\label{cal V_03 1}
\begin{aligned}
  &\mathcal{V}_{03}=\frac{1}{2}\sum_{k,p,q,\sigma,\nu}
  \hat{v}_ka_{q,\nu}a_{p,\sigma}a^*_{p-k,\sigma}a^*_{q+k,\nu}\chi_{p,p-k\in B^{\sigma}_F}\chi_{q,q+k\in B^{\nu}_F}\chi_{k\neq0}\\
  &\quad\quad\quad\quad\quad\quad\quad\quad
  \times(1-\chi_{\sigma=\nu}\chi_{p=q+k}).
\end{aligned}
\end{equation}
Notice that $1=\chi_{k\neq0}(1-\chi_{\sigma=\nu}\chi_{p=q+k})+\chi_{k=0}
  +\chi_{k\neq0}\chi_{\sigma=\nu}\chi_{p=q+k}$. We use (\ref{number of exicited particles ops2}) and (\ref{number of exicited particles ops3}) to calculate
\begin{equation}\label{cal V_03 2}
  \frac{1}{2}\sum_{p,q,\sigma,\nu}
  \hat{v}_0a_{q,\nu}a_{p,\sigma}a^*_{p,\sigma}a^*_{q,\nu}\chi_{p\in B^{\sigma}_F}
  \chi_{q\in B^{\nu}_F}=\frac{\hat{v}_0}{2}\mathcal{N}_{ex}^2,
\end{equation}
and
\begin{equation}\label{cal V_03 3}
\begin{aligned}
  &\frac{1}{2}\sum_{k,p,\sigma}
  \hat{v}_0a_{p-k,\sigma}a_{p,\sigma}a^*_{p-k,\sigma}a^*_{p,\sigma}\chi_{p,p-k\in B^{\sigma}_F}
  \chi_{k\neq0}\\
  =&-\frac{1}{2}\sum_{k,p,\sigma}
  \hat{v}_0a_{p,\sigma}a_{p-k,\sigma}a^*_{p-k,\sigma}a^*_{p,\sigma}\chi_{p,p-k\in B^{\sigma}_F}
  \chi_{k\neq0}\\
  =&-\frac{1}{2}\sum_{k,p,\sigma}
  \hat{v}_0a_{p,\sigma}a_{p-k,\sigma}a^*_{p-k,\sigma}a^*_{p,\sigma}\chi_{p,p-k\in B^{\sigma}_F}\\
  =&-\frac{\hat{v}_0}{2}\sum_{\sigma}\mathcal{N}_{ex,\sigma}^2.
\end{aligned}
\end{equation}
On the other hand, using (\ref{norm bound  of creation and annihilation}) and (\ref{v_k-v_0}), we can bound
\begin{equation}\label{cal V_03 4}
\begin{aligned}
  \pm\frac{1}{2}\sum_{k,p,\sigma}
  (\hat{v}_k-\hat{v}_0)a_{p-k,\sigma}a_{p,\sigma}a^*_{p-k,\sigma}a^*_{p,\sigma}\chi_{p,p-k\in B^{\sigma}_F}\chi_{k\neq0}\lesssim N^{\frac{8}{3}}a^{3}.
\end{aligned}
\end{equation}
Combining (\ref{cal V_03 1}), (\ref{cal V_03 2}), (\ref{cal V_03 3}) and (\ref{cal V_03 4}) we obtain
\begin{equation}\label{cal V_03 5}
  \mathcal{V}_{03}=\frac{1}{2}\sum_{k,p,q,\sigma,\nu}
  \hat{v}_ka_{q,\nu}a_{p,\sigma}a^*_{p-k,\sigma}a^*_{q+k,\nu}\chi_{p,p-k\in B^{\sigma}_F}\chi_{q,q+k\in B^{\nu}_F}+\mathcal{E}_{\mathcal{V}_{03}}.
\end{equation}
where
\begin{equation}\label{cal V_03 6}
  \pm\mathcal{E}_{\mathcal{V}_{03}}\lesssim a \mathcal{N}_{ex}^2+N^{\frac{8}{3}}a^{3}.
\end{equation}
Recall the definition of $g_{x,\sigma}$ and $h_{x,\sigma}$ (\ref{g_x,sigma h_x,sigma}), we rewrite (\ref{cal V_03 5}) away from $\mathcal{E}_{\mathcal{V}_{03}}$ by
\begin{equation}\label{cal V 03 7}
\begin{aligned}\mathcal{R}_0\coloneqq
  &\frac{1}{2}\sum_{k,p,q,\sigma,\nu}
  \hat{v}_ka_{q,\nu}a_{p,\sigma}a^*_{p-k,\sigma}a^*_{q+k,\nu}\chi_{p,p-k\in B^{\sigma}_F}\chi_{q,q+k\in B^{\nu}_F}\\
  =&\frac{1}{2}\sum_{p_1,p_2,q_1,q_2,\sigma,\nu}
  \hat{v}_{p_1-q_1}a_{p_2,\nu}a_{p_1,\sigma}a^*_{q_1,\sigma}a^*_{q_2,\nu}\chi_{p_1,q_1\in B^{\sigma}_F}\chi_{p_2,q_2\in B^{\nu}_F}\delta_{p_1-q_1,q_2-p_2}\\
  =&\frac{1}{2}\sum_{p_1,p_2,q_1,q_2,\sigma,\nu}
  \int_{\Lambda^2}v_a(x)e^{-i(p_1-q_1)x}e^{-i(p_1-q_1-q_2+p_2)y}
  a_{p_2,\nu}a_{p_1,\sigma}a^*_{q_1,\sigma}a^*_{q_2,\nu}\\
  &\quad\quad\quad\quad\quad\quad\quad\quad
  \times\chi_{p_1,q_1\in B^{\sigma}_F}\chi_{p_2,q_2\in B^{\nu}_F}dxdy\\
  =&\frac{1}{2}\sum_{\sigma,\nu}
  \int_{\Lambda^2}v_a(x-y)
  a(g_{y,\nu})a(g_{x,\sigma})a^*(g_{x,\sigma})a^*(g_{y,\nu})dxdy.
  \end{aligned}
\end{equation}
Then (\ref{est V_0}), (\ref{R_0}) and (\ref{est E_V_0}) in Lemma \ref{lemma V_0} follow by combining (\ref{cal V_01}), (\ref{cal V_02}), (\ref{est E_V_02}), (\ref{cal V_03 5}-\ref{cal V 03 7}).
\par Moreover, we know that for any $\psi\in\mathcal{H}^{\wedge N}(\{N_{\varsigma_i}\})$,
\begin{equation}\label{cal V_03 8}
  \begin{aligned}
  \langle\mathcal{R}_0\psi,\psi\rangle&=\frac{1}{2}\sum_{\sigma,\nu}
  \int_{\Lambda^2}v_a(x-y)
  \langle a(g_{y,\nu})a(g_{x,\sigma})a^*(g_{x,\sigma})a^*(g_{y,\nu})\psi,\psi\rangle dxdy\\
  &=\frac{1}{2}\sum_{\sigma,\nu}
  \int_{\Lambda^2}v_a(x-y)
   \Vert a^*(g_{x,\sigma})a^*(g_{y,\nu})\psi\Vert^2 dxdy,
  \end{aligned}
\end{equation}
which immediately yields $\mathcal{R}_0\geq0$. On the other hand, we can bound $\mathcal{R}_0$ using (\ref{norm bound  of creation and annihilation}) and (\ref{useful eqn1}),
\begin{equation}\label{cal V_03 9}
\begin{aligned}
   \vert\langle\mathcal{R}_0\psi,\psi\rangle\vert
   &\leq \sum_{\sigma,\nu}
  \int_{\Lambda^2}v_a(x-y)
   \Vert g_{y,\nu}\Vert \Vert g_{x,\sigma}\Vert \Vert a^*(g_{y,\nu})\psi\Vert\cdot\Vert a^*(g_{x,\sigma})\psi\Vert dxdy\\
  &\lesssim N\Vert v_a\Vert \langle\mathcal{N}_{ex}\psi,\psi\rangle
  \lesssim Na \langle\mathcal{N}_{ex}\psi,\psi\rangle,
  \end{aligned}
\end{equation}
which is (\ref{est R_0}). On the other hand, we decompose $\mathcal{R}_0=\mathcal{R}_{0,1}+\mathcal{R}_{0,2}$ with
\begin{equation}\label{rewrt R_0}
  \begin{aligned}
  \mathcal{R}_{0,1}&=\frac{1}{2}\sum_{\sigma,\nu}
  \int_{\Lambda^2}v_a(x-y)
  a(g_{y,\nu})a(g_{x,\sigma})a^*(S_{x,\sigma}[0])a^*(g_{y,\nu}) dxdy\\
  \mathcal{R}_{0,2}&=\frac{1}{2}\sum_{\sigma,\nu}
  \int_{\Lambda^2}v_a(x-y)
  a(g_{y,\nu})a(g_{x,\sigma})a^*(I_{x,\sigma}[0])a^*(g_{y,\nu}) dxdy
  \end{aligned}
\end{equation}
We bound
\begin{equation}\label{bound R_0,1}
  \begin{aligned}
   \vert\langle\mathcal{R}_{0,1}\psi,\psi\rangle\vert
   &\leq \sum_{\sigma,\nu}
  \int_{\Lambda^2}v_a(x-y)
   \Vert g_{y,\nu}\Vert \Vert S_{x,\sigma}[0]\Vert \Vert a^*(g_{y,\nu})\psi\Vert\cdot\Vert a^*(g_{x,\sigma})\psi\Vert dxdy\\
  &\lesssim N^{\frac{5}{6}}\Vert v_a\Vert \langle\mathcal{N}_{ex}\psi,\psi\rangle
  \lesssim N^{-\frac{1}{6}} \langle\mathcal{N}_{ex}\psi,\psi\rangle,
  \end{aligned}
\end{equation}
and
\begin{equation}\label{bound R_0,2}
  \begin{aligned}
   \vert\langle\mathcal{R}_{0,2}\psi,\psi\rangle\vert
   &\leq \sum_{\sigma,\nu}
  \int_{\Lambda^2}v_a(x-y)
   \Vert g_{y,\nu}\Vert \Vert g_{x,\sigma}\Vert \Vert a^*(g_{y,\nu})\psi\Vert\cdot\Vert a^*(I_{x,\sigma})\psi\Vert dxdy\\
  &\lesssim N\Vert v_a\Vert \langle\mathcal{N}_{ex}\psi,\psi\rangle^\frac{1}{2}
  \langle\mathcal{N}_{i}[0]\psi,\psi\rangle^\frac{1}{2}\\
  &\lesssim N^{-\frac{1}{6}} \langle\mathcal{N}_{ex}\psi,\psi\rangle
  +N^{\frac{1}{6}}\langle\mathcal{N}_{i}[0]\psi,\psi\rangle.
  \end{aligned}
\end{equation}
(\ref{bound R_0,1}) and (\ref{bound R_0,2}) give (\ref{est R_0 true}).
\end{proof}

\begin{lemma}\label{lemma V_22,23,1}
We can directly bound $\mathcal{V}_{1}$ in (\ref{split V detailed}) by
\begin{equation}\label{est V_1}
  \pm\mathcal{V}_{1}\lesssim N^{-\frac{1}{6}}\mathcal{N}_{ex}
  +N^{\frac{1}{6}}\mathcal{N}_{h}[0],
\end{equation}
and rewrite $\mathcal{V}_{22}$ and $\mathcal{V}_{23}$ in (\ref{split V detailed}) by
\begin{equation}\label{est V_22}
  \mathcal{V}_{22}=\sum_{\sigma,\nu}\hat{v}_0N_\nu\mathcal{N}_{ex,\sigma}
  +\mathcal{E}_{\mathcal{V}_{22}},
\end{equation}
\begin{equation}\label{est V_23}
  \mathcal{V}_{23}=-\sum_{\sigma}\hat{v}_0N_{\sigma}\mathcal{N}_{ex,\sigma}
  +\mathcal{E}_{\mathcal{V}_{23}},
\end{equation}
with
\begin{equation}\label{est E_V_22}
  \pm\mathcal{E}_{\mathcal{V}_{22}}\lesssim N^{-\frac{1}{6}}\mathcal{N}_{ex}
  +N^{\frac{1}{6}}\mathcal{N}_{h}[0].
\end{equation}
\begin{equation}\label{est E_V_23}
  \pm\mathcal{E}_{\mathcal{V}_{23}}\lesssim N^{-\frac{1}{6}}\mathcal{N}_{ex}
  +N^{\frac{1}{6}}\mathcal{N}_{h}[0]+N^{-2}\mathcal{K}_s.
\end{equation}
\end{lemma}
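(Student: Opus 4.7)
The three estimates share a common strategy, paralleling the treatment of $\mathcal{R}_0$ in Lemma \ref{lemma V_0}: rewrite each operator in position space via the kernels $g_{x,\sigma}$, $h_{x,\sigma}$ from (\ref{g_x,sigma h_x,sigma}), extract a closed-form leading contribution from the diagonal ($k=0$, or same-spin contracted) piece, and bound the oscillatory remainder through Cauchy--Schwarz combined with the two smallnesses $\hat{v}_0 \lesssim a \sim N^{-1}$ and $\|L_{x,\sigma}\|_2^2 \lesssim \#A^{\sigma}_{F,0} \lesssim N^{2/3+\varepsilon}$ from (\ref{ineqn N_l and N_s}).

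Following the change of variables in (\ref{cal V 03 7}), I would first rewrite
\begin{align*}
\mathcal{V}_1 &= \sum_{\sigma,\nu}\int_{\Lambda^2} v_a(x-y)\, a^*(h_{x,\sigma})a^*(g_{y,\nu})a(g_{y,\nu})a(g_{x,\sigma})\,dxdy + h.c.,\\
\mathcal{V}_{22} &= \sum_{\sigma,\nu}\int_{\Lambda^2} v_a(x-y)\, a^*(h_{x,\sigma})a^*(g_{y,\nu})a(g_{y,\nu})a(h_{x,\sigma})\,dxdy,\\
\mathcal{V}_{23} &= \tfrac{1}{2}\sum_{\sigma,\nu}\int_{\Lambda^2} v_a(x-y)\bigl[a^*(h_{x,\sigma})a^*(g_{y,\nu})a(h_{y,\nu})a(g_{x,\sigma}) + h.c.\bigr]dxdy,
\end{align*}
where in $\mathcal{V}_{22}$ the two $\chi$-terms in (\ref{split V detailed}) collapse into one under the $(\sigma,\nu)$-swap $(p-k,p)\leftrightarrow(q,q+k)$. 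The leading term of $\mathcal{V}_{22}$ is extracted by applying $a(g_{y,\nu})a^*(g_{y,\nu}) = N_\nu - a^*(g_{y,\nu})a(g_{y,\nu})$ after commuting $a^*(g_{y,\nu})a(g_{y,\nu})$ past $a^*(h_{x,\sigma})$ (they commute since $g\perp h$) and then integrating $\int v_a(x-y)dy = \hat{v}_0$; this produces exactly $\sum_{\sigma,\nu}\hat{v}_0 N_\nu \mathcal{N}_{ex,\sigma}$ together with an error $-\hat{v}_0\mathcal{N}_{ex}^2$ plus a position-dependent remainder. For $\mathcal{V}_{23}$, two applications of $\{a,a^*\}=\langle\cdot,\cdot\rangle$ yield $a^*(h_x)a^*(g_y)a(h_y)a(g_x) = a^*(h_x)a(g_x)a^*(g_y)a(h_y) - \delta_{\sigma,\nu}\langle g_y,g_x\rangle_\sigma a^*(h_x)a(h_y)$, and the $\delta_{\sigma,\nu}$ contraction piece, after Fourier evaluation of $\langle g_y,g_x\rangle_\sigma = \sum_{k\in B^{\sigma}_F}e^{ik(x-y)}$ and integration against $v_a(x-y)$, yields the advertised $-\sum_\sigma \hat{v}_0 N_\sigma \mathcal{N}_{ex,\sigma}$.

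For the quantitative bounds, the key step is the decomposition $h_{x,\sigma} = H_{x,\sigma}[0] + L_{x,\sigma}[0]$, in exact analogy with the $g = S+I$ split used to prove (\ref{est R_0 true}). In every $L$-piece, the bound $\|L_{x,\sigma}\|_2 \lesssim N^{1/3+\varepsilon/2}$ combines with the $\|g_{\cdot,\nu}\|_2\leq N^{1/2}$ factors and $\hat{v}_0 \lesssim N^{-1}$ to give the prefactor $N^{1/2}\cdot N^{1/3}\cdot N^{-1} = N^{-1/6}$, after controlling $\int v_a \|a^*(g_{y,\nu})\psi\|^2 dx dy \lesssim \hat{v}_0\langle\mathcal{N}_{ex}\psi,\psi\rangle$ as in (\ref{bound R_0,1}). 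In every $H$-piece, the operator inequality $a^*(H_{x,\sigma})a(H_{x,\sigma}) \geq 0$ together with Cauchy-Schwarz-Young at parameter $\epsilon \sim N^{-1/6}$ produces the $N^{1/6}\mathcal{N}_h[0]$ contribution, mirroring (\ref{bound R_0,2}). The leftover error $-\hat{v}_0\mathcal{N}_{ex}^2$ in $\mathcal{V}_{22}$ is handled by splitting $\mathcal{N}_{ex}=\mathcal{N}_l[0]+\mathcal{N}_h[0]$ and using $\mathcal{N}_l[0]^2 \leq \#A^\sigma_{F,0}\,\mathcal{N}_l[0] \lesssim N^{2/3+\varepsilon}\mathcal{N}_l[0]$ together with $\mathcal{N}_h[0]^2 \leq N\mathcal{N}_h[0]$, yielding $\hat{v}_0\mathcal{N}_{ex}^2 \lesssim N^{-1/3+\varepsilon}\mathcal{N}_{ex} + \mathcal{N}_h[0]$, absorbable into the stated bound.

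The main obstacle is producing the sharp $N^{-1/6}$ prefactor: it cannot be obtained by a single global Cauchy-Schwarz, but requires the finer splitting $h=H+L$ adapted to a shell of unit thickness about the Fermi surface, exactly as in the $S/I$ dichotomy used to handle (\ref{bound R_0,1})-(\ref{bound R_0,2}). The extra $N^{-2}\mathcal{K}_s$ term appearing only in (\ref{est E_V_23}) comes from bounding the factorized piece $a^*(h_x)a(g_x)\cdot a^*(g_y)a(h_y)$ of $\mathcal{V}_{23}$: here the $h$-factors at large momentum cannot be absorbed into any operator of unit $L^2$-norm, so one factor must be transferred to $\mathcal{K}_s$ via (\ref{ineqn N_h and N_i}), and tracking the optimal Young's inequality trade-off then produces the exponent $N^{-2}$.
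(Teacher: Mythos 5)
Your overall strategy --- position-space rewriting, extraction of the constant-coefficient leading terms via the anticommutation relations, and the splitting $h_{x,\sigma}=H_{x,\sigma}[0]+L_{x,\sigma}[0]$ with $\Vert L_{x,\sigma}[0]\Vert_2^2\lesssim N^{2/3+\varepsilon}$ to produce the $N^{-1/6}\mathcal{N}_{ex}+N^{1/6}\mathcal{N}_h[0]$ bounds --- is exactly the paper's, and your treatment of $\mathcal{V}_1$ and $\mathcal{V}_{22}$ is essentially correct. (The ``extra error $-\hat v_0\mathcal{N}_{ex}^2$'' you introduce for $\mathcal{V}_{22}$ does not actually arise from the substitution $a(g_{y,\nu})a^*(g_{y,\nu})=N_\nu-a^*(g_{y,\nu})a(g_{y,\nu})$: the exact remainder is just $-\sum\int v_a\,a^*(h_{x,\sigma})a(g_{y,\nu})a^*(g_{y,\nu})a(h_{x,\sigma})$, which you then split into $H$ and $L$ pieces. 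Since you bound the spurious term acceptably anyway, this is only a bookkeeping slip.)

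The genuine gap is in $\mathcal{V}_{23}$, at the contracted piece. After the contraction you are left with $-\sum_{\sigma}\sum_{p\in B_F^\sigma}\sum_{q\notin B_F^\sigma}\hat v_{p-q}\,a^*_{q,\sigma}a_{q,\sigma}$, and replacing $\hat v_{p-q}$ by $\hat v_0$ is \emph{not} free: the error $\sum_{p,q}(\hat v_{p-q}-\hat v_0)a^*_{q,\sigma}a_{q,\sigma}$ runs over all excited momenta $q$, which are unbounded, so $\vert p-q\vert$ is not $O(N^{1/3})$. This is precisely where the paper invokes $\vert\hat v_{p-q}-\hat v_0\vert\lesssim a^3\vert p-q\vert^2$ from (\ref{v_k-v_0}) together with $\vert p-q\vert^2\lesssim(\vert q\vert^2-\vert k_F^\sigma\vert^2)+\vert k_F^\sigma\vert^2+\vert p\vert^2$; the first summand, weighted by $a^3$ and summed over the $\sim N$ values of $p$, is the sole source of the $N^{-2}\mathcal{K}_s$ term in (\ref{est E_V_23}). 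Your proposal omits this step entirely and instead attributes $N^{-2}\mathcal{K}_s$ to the factorized quartic piece $a^*(h_{x,\sigma})a(g_{x,\sigma})a^*(g_{y,\nu})a(h_{y,\nu})$; that attribution is incorrect --- that piece is bounded by $N^{-1/6}\mathcal{N}_{ex}+N^{1/6}\mathcal{N}_h[0]$ by the very same $H/L$ Cauchy--Schwarz you use elsewhere, and no Young-type trade-off on it produces a kinetic-energy term with weight $N^{-2}$. As written, your extraction of $-\sum_\sigma\hat v_0 N_\sigma\mathcal{N}_{ex,\sigma}$ therefore leaves an uncontrolled remainder.
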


\begin{proof}

  \par Using (\ref{g_x,sigma h_x,sigma}), we rewrite the three operators by
\begin{equation}\label{rewrite V_1,22,23}
  \begin{aligned}
  &\mathcal{V}_{1}=\frac{1}{2}\sum_{\sigma,\nu}
  \int_{\Lambda^2}v_a(x-y)
  a^*(h_{x,\sigma})a(g_{y,\nu})a(g_{x,\sigma})a^*(g_{y,\nu})dxdy+h.c.\\
  &\mathcal{V}_{22}=\sum_{\sigma,\nu}
  \int_{\Lambda^2}v_a(x-y)
  a^*(h_{x,\sigma})a^*(g_{y,\nu})a(g_{y,\nu})a(h_{x,\sigma})dxdy\\
  &\mathcal{V}_{23}=-\sum_{\sigma,\nu}
  \int_{\Lambda^2}v_a(x-y)
  a^*(h_{x,\sigma})a^*(g_{y,\nu})a(g_{x,\sigma})a(h_{y,\nu})dxdy
  \end{aligned}
\end{equation}
We decompose $\mathcal{V}_1=\mathcal{V}_{1,1}+\mathcal{V}_{1,2}$ with
\begin{equation}\label{rewrt V_1}
  \begin{aligned}
  \mathcal{V}_{1,1}&=\frac{1}{2}\sum_{\sigma,\nu}
  \int_{\Lambda^2}v_a(x-y)
  a^*(H_{x,\sigma}[0])a(g_{y,\nu})a(g_{x,\sigma})a^*(g_{y,\nu})dxdy+h.c.\\
  \mathcal{V}_{1,2}&=\frac{1}{2}\sum_{\sigma,\nu}
  \int_{\Lambda^2}v_a(x-y)
  a^*(L_{x,\sigma}[0])a(g_{y,\nu})a(g_{x,\sigma})a^*(g_{y,\nu})dxdy+h.c.
  \end{aligned}
\end{equation}
By Lemma \ref{lemma ineqn K_s} and standard Cauchy-Schwarz, we bound
\begin{equation}\label{bound V_1,1 V_1,2}
  \begin{aligned}
  \pm\mathcal{V}_{1,1}&\lesssim N^{-\frac{1}{6}}\mathcal{N}_{ex}
  +N^{\frac{1}{6}}\mathcal{N}_{h}[0]\\
  \pm\mathcal{V}_{1,2}&\lesssim N^{-\frac{1}{6}}\mathcal{N}_{ex}
  \end{aligned}
\end{equation}
which yield (\ref{est V_1}).

\par From (\ref{anticommutator}) we know that
\begin{equation*}
  a(g_{y,\nu})a^*(g_{y,\nu})+a^*(g_{y,\nu})a(g_{y,\nu})=\Vert g_{y,\nu}\Vert^2=N_\nu,
\end{equation*}
therefore we rewrite $\mathcal{V}_{22}$ by
\begin{equation}\label{rewrt V_22}
  \mathcal{V}_{22}=\sum_{\sigma,\nu}\hat{v}_0N_\nu\mathcal{N}_{ex,\sigma}
  +\mathcal{V}_{22,1}+\mathcal{V}_{22,2},
\end{equation}
with
\begin{equation}\label{rewrt V_22 detailed}
  \begin{aligned}
  \mathcal{V}_{22,1}&=-\sum_{\sigma,\nu}
  \int_{\Lambda^2}v_a(x-y)
  a^*(h_{x,\sigma})a(g_{y,\nu})a^*(g_{y,\nu})a(H_{x,\sigma}[0])dxdy\\
  \mathcal{V}_{22,2}&=-\sum_{\sigma,\nu}
  \int_{\Lambda^2}v_a(x-y)
  a^*(h_{x,\sigma})a(g_{y,\nu})a^*(g_{y,\nu})a(L_{x,\sigma}[0])dxdy
  \end{aligned}
\end{equation}
By Lemma \ref{lemma ineqn K_s} and standard Cauchy-Schwarz, we bound
\begin{equation}\label{bound V_22,1 V_22,2}
  \begin{aligned}
  \pm\mathcal{V}_{22,1}&\lesssim N^{-\frac{1}{6}}\mathcal{N}_{ex}
  +N^{\frac{1}{6}}\mathcal{N}_{h}[0]\\
  \pm\mathcal{V}_{22,2}&\lesssim N^{-\frac{1}{6}}\mathcal{N}_{ex}
  \end{aligned}
\end{equation}
which yield (\ref{est V_22}) and (\ref{est E_V_22}).

\par For $\mathcal{V}_{23}$, we switch back to the momentum space and write it as $\mathcal{V}_{23}=\mathcal{V}_{23,1}+\mathcal{V}_{23,2}$ with
\begin{equation}\label{rewrt V_23 1st step}
  \begin{aligned}
  \mathcal{V}_{23,1}&=-\frac{1}{2}\sum_{k,p,q,\sigma,\nu}
  \hat{v}_k\delta_{p,q+k}\delta_{\sigma,\nu}
  (a^*_{p-k,\sigma}a_{q,\nu}+h.c.)\chi_{p-k\notin B^{\sigma}_F}\chi_{q+k\in B^{\nu}_F}\chi_{p\in B^{\sigma}_F}\chi_{q\notin B^{\nu}_F}\\
  \mathcal{V}_{23,2}&=\frac{1}{2}\sum_{k,p,q,\sigma,\nu}
  \hat{v}_k(a^*_{p-k,\sigma}a_{p,\sigma}a^*_{q+k,\nu}a_{q,\nu}+h.c.)\chi_{p-k\notin B^{\sigma}_F}\chi_{q+k\in B^{\nu}_F}\chi_{p\in B^{\sigma}_F}\chi_{q\notin B^{\nu}_F}
  \end{aligned}
\end{equation}
The second term can be bounded on position space. We put it as $\mathcal{V}_{23,2}=\mathcal{V}_{23,21}+\mathcal{V}_{23,22}$ with
\begin{equation}\label{rewrt V_23,2}
\begin{aligned}
  \mathcal{V}_{23,21}&=\sum_{\sigma,\nu}
  \int_{\Lambda^2}v_a(x-y)
  a^*(h_{x,\sigma})a(g_{x,\sigma})a^*(g_{y,\nu})a(H_{y,\nu}[0])dxdy\\
  \mathcal{V}_{23,22}&=\sum_{\sigma,\nu}
  \int_{\Lambda^2}v_a(x-y)
  a^*(h_{x,\sigma})a(g_{x,\sigma})a^*(g_{y,\nu})a(L_{y,\nu}[0])dxdy
\end{aligned}
\end{equation}
By Lemma \ref{lemma ineqn K_s} and standard Cauchy-Schwarz, we bound
\begin{equation}\label{bound V_23,21 V_23,22}
  \begin{aligned}
  \pm\mathcal{V}_{23,21}&\lesssim N^{-\frac{1}{6}}\mathcal{N}_{ex}
  +N^{\frac{1}{6}}\mathcal{N}_{h}[0]\\
  \pm\mathcal{V}_{23,22}&\lesssim N^{-\frac{1}{6}}\mathcal{N}_{ex}
  \end{aligned}
\end{equation}
For the first term in (\ref{rewrt V_23 1st step}), we notice that
\begin{equation}\label{rewrt V_23,1}
\begin{aligned}
  \mathcal{V}_{23,1}&=-\sum_{p,q,\sigma}
  \hat{v}_{p-q}a^*_{q,\sigma}a_{q,\sigma}\chi_{p\in B^{\sigma}_F}\chi_{q\notin B^{\sigma}_F}\\
  &=-\sum_{p,q,\sigma}
  (\hat{v}_{0}+\hat{v}_{p-q}-\hat{v}_{0})a^*_{q,\sigma}a_{q,\sigma}\chi_{p\in B^{\sigma}_F}\chi_{q\notin B^{\sigma}_F}\\
  &=-\sum_{\sigma}\hat{v}_0N_{\sigma}\mathcal{N}_{ex,\sigma}+\sum_{p,q,\sigma}
  (\hat{v}_{p-q}-\hat{v}_{0})a^*_{q,\sigma}a_{q,\sigma}\chi_{p\in B^{\sigma}_F}\chi_{q\notin B^{\sigma}_F}
\end{aligned}
\end{equation}
Using (\ref{v_k-v_0}), we bound, for $\psi\in\mathcal{H}^{\wedge N}(\{N_{\varsigma_i}\})$
\begin{equation}\label{bound V_23,1}
  \begin{aligned}
  &\Big\vert\sum_{p,q,\sigma}
  (\hat{v}_{p-q}-\hat{v}_{0})\chi_{p\in B^{\sigma}_F}\chi_{q\notin B^{\sigma}_F}
  \langle a^*_{q,\sigma}a_{q,\sigma}\psi,\psi\rangle\Big\vert\\
  &\lesssim \sum_{p,q,\sigma}a^3\vert p-q\vert^2\chi_{p\in B^{\sigma}_F}\chi_{q\notin B^{\sigma}_F}
  \Vert a_{q,\sigma}\psi\Vert^2\\
  &\lesssim \sum_{p,q,\sigma}a^3\big(\vert q\vert^2-\vert k_F^\sigma\vert^2+\vert k_F^\sigma\vert^2+\vert p\vert^2\big)\chi_{p\in B^{\sigma}_F}\chi_{q\notin B^{\sigma}_F}
  \Vert a_{q,\sigma}\psi\Vert^2\\
  &\lesssim N^{-2}\langle\mathcal{K}_s\psi,\psi\rangle+
  N^{-\frac{4}{3}}\langle\mathcal{N}_{ex}\psi,\psi\rangle.
  \end{aligned}
\end{equation}
We conclude (\ref{est V_23}) and (\ref{est E_V_23}) by collecting (\ref{rewrt V_23 1st step}-\ref{bound V_23,1}).
\end{proof}

\section{Coefficients of Renormalizations}\label{coeff}
\subsection{Scattering Equation}\label{scattering eqn sec}
\par We first consider the following ground state energy equation with Neumann boundary
condition for some parameter $\ell\in(0,\frac{1}{2})$
\begin{equation}\label{asymptotic energy pde on the ball}
  \left\{\begin{aligned}
  &(-\Delta_{x}+\frac{1}{2}v)f_\ell=\lambda_\ell f_\ell,\quad \vert
  x\vert\leq \frac{\ell}{a},\\
  &\left.\frac{\partial f_\ell}{\partial \mathbf{n}}\right\vert_{\vert
   x\vert=\frac{\ell}{a}}=0,\quad \left.f_\ell\right\vert_{\vert
   x\vert=\frac{\ell}{a}}=1.
  \end{aligned}\right.
\end{equation}
Equation (\ref{asymptotic energy pde on the ball}) has been thoroughly
analysed, and one can consult
\cite{dy2006,2018Bogoliubov,boccatoBrenCena2020optimal} for details. We define
$w_{\ell}=1-f_\ell$, and make constant extensions to both of $f_\ell$
and $w_\ell$ outside of the 3D closed ball $\overline{B}_{\frac{\ell}{a}}$ such
that $f_\ell\in H^2_{loc}(\mathbb{R}^3)$ and $w_\ell\in H^2(\mathbb{R}^3)$. By
scaling we let
\begin{equation}\label{scaling f_l &w_l}
  \widetilde{f}_\ell(x)=f_\ell\Big(\frac{x}{a}\Big),\quad
  \widetilde{w}_\ell(x)=w_\ell\Big(\frac{x}{a}\Big).
\end{equation}
Regarding $\widetilde{w}_\ell$ as a periodic function on the torus $\Lambda$, we
observe that it satisfies the equation
\begin{equation}\label{asymptotic energy pde on the torus}
  \Big(-\Delta_{x}+\frac{1}{2a^2}v\big(\frac{x}{a}\big)\Big)
 \widetilde{w}_\ell(x)
  =\frac{1}{2a^2}v\big(\frac{x}{a}\big)
-\frac{\lambda_\ell}{a^2}\big(1-\widetilde{w}_\ell(x)\big)
  \chi_{\ell}(x),\quad x\in\Lambda.
\end{equation}
Here $\chi_{\ell}$ is the characteristic function of the closed 3D ball
$\overline{B}_{\ell}$, and we choose suitable $\ell\in(0,\frac{1}{2})$ so that
$\overline{B}_{\ell}\subset\Lambda_d$. Standard elliptic equation theory grants the
uniqueness of solution to equation (\ref{asymptotic energy pde on the torus}). By Fourier
transform, (\ref{asymptotic energy pde on the torus}) is equivalent to its discrete version
 \begin{equation}\label{discrete asymptotic energy pde on the torus}
\begin{aligned}
\left\vert p\right\vert^2\widetilde{w}_{\ell,p}+\frac{1}{2}
  \sum_{q\in2\pi\mathbb{Z}^3}\hat{v}_{p-q}\widetilde{w}_{\ell,q}
=\frac{1}{2}
  \hat{v}_p+\frac{\lambda_\ell}{a^2}
  \widetilde{w}_{\ell,p}-\frac{\lambda_\ell}{a^2}\widehat{\chi_{\ell}}
  \left(\frac{p}{2\pi}\right),
\end{aligned}
\end{equation}
where $p$ is an arbitrary 3D vector in $2\pi\mathbb{Z}^3$ and the Fourier coefficients are
given by
\begin{equation*}
  \widetilde{w}_{\ell,p}=\int_{\Lambda}\widetilde{w}_{\ell}(x)
  e^{-ipx}dx,\quad
\widehat{\chi_{\ell}}
  \left(\frac{p}{2\pi}\right)
=\int_{\Lambda}\chi_{\ell}(x)
   e^{-ipx}dx.
\end{equation*}
The required properties of $f_\ell$ and $w_\ell$ are collected in the next lemma. For detailed proof, one can consult, for example, \cite[Lemma 3.1]{me}.
\begin{lemma}\label{fundamental est of v,w,lambda}
Let $v$ be a smooth interaction potential with scattering length $\mathfrak{a}_0$. Recall that
an interaction potential should be a radially-symmetric, compactly supported and non-negative
function. Let $f_\ell$, $\lambda_\ell$, $w_\ell$ and
$\widetilde{w}_{\ell,p}$ be defined as above. Then for parameter $\ell\in
(0,\frac{1}{2})$ satisfying $\frac{a}{\ell}<C$ for a small constant C independent of
$a$ and $\ell$, there exist some constants, also denoted as C, independent of $a$, and $\ell$ such that following estimates hold true for $\frac{a}{\ell}$ small
enough.
\begin{enumerate}[label=(\arabic*)]
  \item The asymptotic estimate of ground state energy $\lambda_\ell$ is
  \begin{equation}\label{est of lambda_l}
          \left\vert\lambda_\ell-\frac{3\mathfrak{a}_0a^3}{\ell^3}
          \left(1+\frac{9}{5}\frac{\mathfrak{a}_0a}{\ell}\right)\right\vert\leq
          \frac{C\mathfrak{a}_0^3a^5}{\ell^5}.
  \end{equation}
  \item $f_\ell$ is radially symmetric and smooth away from the boundary of
      $B_{\frac{\ell}{a}}$ and there is a certain constant $0<c<1$ independent of
      $a$ and $\ell$ such that
  \begin{equation}\label{est of f_l}
    0<c\leq f_\ell(x)\leq1.
  \end{equation}
  Moreover, for any integer $0\leq k\leq3$
  \begin{equation}\label{est of w_l and grad w_l}
        \vert D_{x}^kw_\ell(x)\vert\leq\frac{C}{1+\vert
         x\vert^{k+1}}.
  \end{equation}
  \item We have
  \begin{equation}\label{est of int vf_l}
          \left\vert\int_{\mathbb{R}^3}v(x)f_\ell(x)dx
          -8\pi\mathfrak{a}_0\left(1+\frac{3}{2}
          \frac{\mathfrak{a}_0a}{\ell}\right)\right\vert\leq
          \frac{C\mathfrak{a}_0^3a^2}{\ell^2},
  \end{equation}
  and
        \begin{equation}\label{est of int w_l}
          \left\vert\frac{a^2}{\ell^2}\int_{\mathbb{R}^3}
          w_\ell(x)dx-\frac{2}{5}\pi\mathfrak{a}_0\right\vert\leq
          \frac{C\mathfrak{a}_0^2a}{\ell}.
        \end{equation}
  \item For all $p\in2\pi\mathbb{Z}^3\backslash\{0\}$
  \begin{equation}\label{est of w_l,p}
          \vert\widetilde{w}_{\ell,p}\vert\leq\frac{Ca}{\vert p\vert^2}
  \end{equation}
\end{enumerate}
\end{lemma}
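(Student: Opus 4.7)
The lemma is a package of scattering-equation estimates of the sort that appears repeatedly in the Gross-Pitaevskii literature (e.g.\ \cite{2018Bogoliubov,boccatoBrenCena2020optimal,dy2006,me}). Since the authors themselves point to \cite[Lemma 3.1]{me} for a full argument, my plan is to sketch the essential ingredients rather than re-derive every inequality. The overall strategy is to compare $f_\ell$ and $w_\ell$ to their natural analogues on all of $\mathbb{R}^3$: the zero-energy scattering solution $f_\infty$ satisfying $(-\Delta+\tfrac12 v)f_\infty=0$ with $f_\infty(x)=1-\mathfrak{a}_0 a/|x|$ for $|x|\ge\ell/a$ (and hence $w_\infty(x)=\mathfrak{a}_0 a/|x|$ outside the support of $v$). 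The key input is that $f_\ell$ is a small Neumann-truncation perturbation of $f_\infty$, with the perturbation governed by the small parameter $a/\ell$.

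First, I would handle the eigenvalue identity (1). Integrating the equation $(-\Delta+\tfrac12 v)f_\ell=\lambda_\ell f_\ell$ over $B_{\ell/a}$ and using the Neumann boundary condition to kill the Laplacian's boundary term yields the exact identity
\begin{equation*}
\lambda_\ell\int_{B_{\ell/a}}f_\ell\,dx=\tfrac{1}{2}\int_{B_{\ell/a}}v\,f_\ell\,dx.
\end{equation*}
The numerator is controlled by (3), whose proof I'd do first: using $f_\ell=1-w_\ell$, expand $\tfrac12\int v\,f_\ell=\tfrac12\int v-\tfrac12\int v\,w_\ell$, invoke the definition of the scattering length for $f_\infty$ and then estimate $\int v(w_\ell-w_\infty)$ by a variational/perturbation argument calibrated by $a/\ell$. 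The denominator is $\int_{B_{\ell/a}}f_\ell=|B_{\ell/a}|-\int w_\ell$, and the integral of $w_\ell$ follows from $w_\ell(x)\approx \mathfrak{a}_0 a/|x|$ on the bulk. Dividing gives (\ref{est of lambda_l}), with the $\frac{9}{5}\frac{\mathfrak{a}_0 a}{\ell}$ correction coming precisely from the $\int w_\ell$ contribution (which produces (\ref{est of int w_l})).

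Next I would turn to the pointwise bounds in (2). Radial symmetry is inherited from the radial potential by uniqueness of the ground state of the Neumann problem. The bound $c\le f_\ell\le 1$ follows from the maximum principle applied to $f_\ell$ (which is the ground eigenfunction, so positive, with boundary value $1$) together with a lower bound obtained by comparison with $f_\infty$. The derivative bounds $|D^k w_\ell(x)|\le C/(1+|x|^{k+1})$ come from the representation of $w_\ell$ outside the support of $v$ as a harmonic correction decaying like $1/|x|$, combined with interior Schauder estimates for the elliptic equation (\ref{asymptotic energy pde on the ball}) inside that support. The smoothness of $v$ ensures $k=0,1,2,3$ are all accessible.

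Finally, for the Fourier bound (4), I would work from the discrete equation (\ref{discrete asymptotic energy pde on the torus}). Rearranging,
\begin{equation*}
|p|^2\widetilde{w}_{\ell,p}=\tfrac12\hat{v}_p-\tfrac12\sum_{q\in 2\pi\mathbb{Z}^3}\hat{v}_{p-q}\widetilde{w}_{\ell,q}+\frac{\lambda_\ell}{a^2}\bigl(\widetilde{w}_{\ell,p}-\widehat{\chi_\ell}(p/2\pi)\bigr).
\end{equation*}
Each term on the right is $O(a)$: $|\hat v_p|\lesssim a$ by (\ref{define v_k}), the convolution term is controlled by $\|\hat v\|_{\ell^1}\|\widetilde{w}_\ell\|_{\ell^\infty}$ which is bounded using the pointwise decay of $w_\ell$ from part (2), and the $\lambda_\ell/a^2$ factor is $O(a/\ell^3)$ by (1) and so contributes at most $O(a)$ after the smallness of $\widetilde{w}_{\ell,p}-\widehat{\chi_\ell}(p/2\pi)$ is taken into account. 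Dividing by $|p|^2$ yields (\ref{est of w_l,p}).

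The main technical obstacle, and the reason the authors offload this to \cite{me}, is the simultaneous bookkeeping required to extract the second-order constant $\frac{9}{5}\frac{\mathfrak{a}_0 a}{\ell}$ in (\ref{est of lambda_l}): one has to iterate the comparison $f_\ell\leftrightarrow f_\infty$ one more step and track the $\int w_\ell$ correction, rather than just the leading Born-like contribution. Everything else is a clean application of maximum principle, elliptic regularity, and Parseval/convolution estimates.
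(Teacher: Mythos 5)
The paper itself does not prove Lemma \ref{fundamental est of v,w,lambda}; it refers the reader to \cite[Lemma 3.1]{me}, so there is no in-paper argument to compare against. Your sketch follows the standard route in the Gross--Pitaevskii literature and captures the essential mechanisms correctly. In particular, the treatment of (1) via the integrated Neumann identity $\lambda_\ell\int_{B_{\ell/a}} f_\ell=\tfrac12\int v f_\ell$ is the right starting point, and your account of where the $\tfrac{9}{5}\frac{\mathfrak{a}_0 a}{\ell}$ coefficient comes from is correct: expanding numerator $\tfrac12\int v f_\ell\approx 4\pi\mathfrak{a}_0(1+\tfrac32\frac{\mathfrak{a}_0a}{\ell})$ and denominator $\int f_\ell\approx\tfrac43\pi\frac{\ell^3}{a^3}(1-\tfrac{3\mathfrak{a}_0a}{10\ell})$ and combining gives $\tfrac32+\tfrac{3}{10}=\tfrac{9}{5}$. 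The treatment of (2) via maximum principle, comparison to the full-space zero-energy solution $f_\infty$, and interior elliptic regularity is also the standard and correct route.

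There is, however, a concrete gap in your argument for (4). You bound the convolution $\tfrac12\sum_q\hat v_{p-q}\widetilde{w}_{\ell,q}$ by $\Vert\hat v\Vert_{\ell^1}\Vert\widetilde{w}_\ell\Vert_{\ell^\infty}$ and claim this is controlled by the pointwise decay of $w_\ell$. This pairing does not give the required $O(a)$. Indeed $\hat v_k=a\int v(x)e^{-iakx}dx$ is of size $\sim a$ on $|k|\lesssim a^{-1}$ and rapidly decaying beyond, so $\Vert\hat v\Vert_{\ell^1}\sim a\cdot a^{-3}=a^{-2}$; and $\Vert\widetilde{w}_\ell\Vert_{\ell^\infty}\sim a\ell^2$ (cf.\ (\ref{est of eta_0})). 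The product is $\sim\ell^2/a$, which is much larger than $a$ since $\ell\gg a$ in the regime considered. The bound you need instead is obtained by recognizing the convolution as the $p$-th Fourier coefficient of the pointwise product $v_a(x)\widetilde{w}_\ell(x)$ on $\Lambda$, so that
\begin{equation*}
\Big\vert\sum_q\hat v_{p-q}\widetilde{w}_{\ell,q}\Big\vert\leq\Vert v_a\widetilde{w}_\ell\Vert_{L^1(\Lambda)}=a\int_{\mathbb{R}^3} v(y)\vert w_\ell(y)\vert\,dy\leq a\Vert v\Vert_{L^1}\sup_{y\in\operatorname{supp}v}\vert w_\ell(y)\vert\lesssim a,
\end{equation*}
uniformly in $p$, where the last step uses (\ref{est of w_l and grad w_l}) with $k=0$. (Using instead the other pairing $\Vert\hat v\Vert_{\ell^\infty}\Vert\widetilde w_\ell\Vert_{\ell^1}$ would work too, but would be circular here since the $\ell^1$ control of $\eta_p$ is Lemma \ref{l1 lemma}, which itself rests on (\ref{est of w_l,p}).) With this substitution, the rest of your derivation of (\ref{est of w_l,p}) by dividing the rearranged discrete equation by $|p|^2$ goes through.
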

\begin{remark}\label{remark 3d scat eqn}
The construction of $w_\ell$ can not ensure smoothness on the boundary of
$B_{\frac{\ell}{a}}$, but we still use the notation $D_{x}^kw_\ell$ to
represent the $k$-th derivative of $w_\ell$ away from the boundary of
$B_{\frac{\ell}{a}}$. Moreover, since $w_\ell$ is supported on
$B_{\frac{\ell}{a}}$, the integrals concerning $D^k_{x}w_\ell$ always
mean integrating inside of $B_{\frac{\ell}{a}}$ unless otherwise specified.
\end{remark}

\par With Lemma \ref{fundamental est of v,w,lambda}, we thereafter define for all
$p\in2\pi\mathbb{Z}^3$
\begin{equation}\label{eta_p}
  \eta_p= -\widetilde{w}_{\ell,p}.
\end{equation}
(\ref{discrete asymptotic energy pde on the torus}) then reads
\begin{equation}\label{eqn of eta_p}
\begin{aligned}
  \left\vert p\right\vert^2\eta_p+\frac{1}{2}
  \sum_{q\in2\pi\mathbb{Z}^3}\hat{v}_{p-q}\eta_q
=-\frac{1}{2}
  \hat{v}_p+\frac{\lambda_\ell}{a^2}
 \eta_p+\frac{\lambda_\ell}{a^2}\widehat{\chi_{\ell}}
  \left(\frac{p}{2\pi}\right).
\end{aligned}
\end{equation}
Since $w_\ell$ is real-valued and radially symmetric, we have
$\eta_p=\eta_{-p}=\overline{\eta_p}$. Moreover, we let $\eta\in L^2(\Lambda)$ be the
function with Fourier coefficients $\eta_p$, then with (\ref{est of w_l and grad w_l}), we deduce
\begin{equation}\label{est of eta and eta_perp}
\begin{aligned}
  \Vert\eta\Vert_2^2&=\int_{\vert x\vert\leq
  \ell}
  \left\vert\widetilde{w}_{\ell}(x)\right\vert^2dx=a^3
  \int_{\vert y\vert\leq\frac{\ell}{a}}\vert
  w_{\ell}(y)\vert^2y\\
&\leq
  a^3\int_{\vert y\vert\leq\frac{\ell}{a}}\frac{C}{\vert
  y\vert^2} dy
  =Ca^2\ell.
\end{aligned}
\end{equation}
Similarly we have
\begin{align}
  \Vert\nabla_{x}\eta\Vert_2^2
=&\int_{\vert x\vert\leq \ell}
  \left\vert\nabla_{x}\widetilde{w}_{\ell}(x)\right\vert^2dx
=a\int_{\vert y\vert\leq\frac{\ell}{a}}\vert\nabla_{y}
w_{\ell}(y)\vert^2dy\nonumber\\
\leq& Ca\left(\int_{\vert y\vert\leq1}dy+
\int_{1<\vert y\vert\leq\frac{\ell}{a}}\frac{dy}{\vert
y\vert^4}\right)
\leq Ca,
\label{est of grad eta}
\end{align}
as well as
\begin{equation}\label{est of D^2 eta & D^3 eta}
  \Vert D^2_{x}\eta_\perp\Vert^2_2=\Vert
  D^2_{x}\eta\Vert^2_2\leq\frac{C}{a},\quad
  \Vert D^3_{x}\eta_\perp\Vert^2_2=\Vert
  D^3_{x}\eta\Vert^2_2\leq\frac{C}{a^3}.
\end{equation}
We can bound $\eta_p$ for all $p\in2\pi\mathbb{Z}^3$ in the same way
\begin{equation}\label{est of eta_0}
  \vert\eta_p\vert\leq\int_{\vert x\vert\leq
  \ell}\widetilde{w}_{\ell}
  (x)dx={a^3}\int_{\vert
  y\vert\leq\frac{\ell}{a}}w_{\ell}(y)dy
  \leq\frac{a^3}{d}\int_{\vert y\vert\leq\frac{\ell}{a}}\frac{C}{\vert
  y\vert} dy
  \leq Ca\ell^2,
\end{equation}
and in particular, $\Vert\eta\Vert_1\lesssim a\ell^2$. Similarly by (\ref{est of w_l and grad w_l}), we have
\begin{equation}\label{L1 grad eta and laplace eta}
  \Vert\nabla\eta\Vert_1\lesssim a\ell,\quad\Vert D^2\eta\Vert_1\lesssim a\ln(\ell a^{-1}).
\end{equation}
 In addition, since $\widetilde{f}_{\ell}=1-\widetilde{w}_\ell$, we deduce via
Plancherel's equality that
\begin{equation}\label{int vf_l}
  \int_{\mathbb{R}^3}v(x)f_{\ell}(x)dx
=\frac{1}{a}\sum_{p}\hat{v}_p\eta_p
  +\frac{1}{a}\hat{v}_0.
\end{equation}
Combining (\ref{int vf_l}) with (\ref{est of int vf_l}) we find
\begin{equation}\label{有用的屎}
  \frac{1}{a}\sum_{p}\hat{v}_p\eta_p
  +\frac{1}{a}\hat{v}_0=8\pi\mathfrak{a}_0\left(1+\frac{3}{2}
          \frac{\mathfrak{a}_0a}{\ell}\right)+O\left(
          \frac{\mathfrak{a}_0^3a^2}{\ell^2}\right)
\end{equation}

\par For further usage, we let, for $p\in2\pi\mathbb{Z}^3$,
\begin{equation}\label{define W_p}
  W_p=\frac{\lambda_{\ell}}{a^2}\left(\widehat{\chi_{\ell}}\left(
  \frac{p}{2\pi}\right)+\eta_p\right).
\end{equation}
It is also easy to verify that $W_p=W_{-p}=\overline{W_p}$. With $W_p$ defined, equation (\ref{eqn of eta_p}) reads
\begin{equation}\label{eqn of eta_p rewrt}
  \left\vert p\right\vert^2\eta_p+\frac{1}{2}
  \sum_{q\in2\pi\mathbb{Z}^3}\hat{v}_{p-q}\eta_q+ \frac{1}{2}\hat{v}_p=W_p.
\end{equation}
Let $W=\sum W_p\phi_p^{(d)}\in L^2(\Lambda)$ be the function with Fourier coefficients
$W_p$, then
\begin{equation}\label{define W(x)}
  W(x)=\frac{\lambda_{\ell}}{a^2}
(\chi_{\ell}(x)-\widetilde{w}_{\ell}
  (x)).
\end{equation}
From (\ref{est of lambda_l}) and (\ref{est of f_l}) we know that $W$ is supported and smooth
in the 3D ball $B_{d\ell}$, and
\begin{equation}\label{L^infty W}
  0<W(x)\leq\frac{Ca}{\ell^3}.
\end{equation}
Using Lemma \ref{fundamental est of v,w,lambda}, (\ref{est of eta and eta_perp}) and (\ref{est
of eta_0}) we can estimate under the assumptions that $a\to0$ and
$\frac{a}{\ell}<C$
\begin{equation}\label{L1&L2 norm of W 3dscatt}
 \Vert W\Vert_2\leq Ca\ell^{-\frac{3}{2}},\quad
  \Vert W\Vert_1\leq Ca,
\end{equation}
and
\begin{equation}\label{sum_pW_peta_p 3dscatt}
  \vert W_p\vert\leq Ca,\quad\Big\vert\sum_{p\neq0}W_p\eta_p\Big\vert\leq
  Ca^2\ell^{-1}.
\end{equation}
It is also useful to recall that the Fourier transform of the 3D radial symmertic function
$\chi_{\ell}$ is given by
\begin{equation}\label{fourier chi_dl}
  \widehat{\chi_{\ell}}\left(
\frac{p}{2\pi}\right)=\frac{4\pi}{\vert p\vert^2}
\left(\frac{\sin(\ell\vert p\vert)}{\vert p\vert}
-\ell\cos(\ell\vert p\vert)\right)
\end{equation}
Formula (\ref{fourier chi_dl}) together with (\ref{est of w_l,p}) tell us respectively that
for $p\neq0$
\begin{equation}\label{moron2}
  \vert\eta_p\vert\leq Ca\vert p\vert^{-2},
\quad\left\vert\widehat{\chi_{\ell}}\left(
\frac{p}{2\pi}\right)\right\vert\leq
C\ell\vert p\vert^{-2}.
\end{equation}
We combine (\ref{moron2}) with (\ref{define W_p}) and (\ref{est of lambda_l}) to find
\begin{equation}\label{moron3}
  \vert W_p\vert\leq C\frac{a}{\ell^2}\vert p\vert^{-2}.
\end{equation}
\par Moreover, using (\ref{eqn of eta_p rewrt}), we can prove the following useful $\ell^1$
estimate of $\{\eta_p\}$.
\begin{lemma}\label{l1 lemma}
Let $\{\eta_p\}$ be defined in (\ref{eta_p}). Assume that $a$ tends to $0$
and $\frac{1}{a}>\frac{C}{\ell}$ for some universal constant $C$. Then we have, for some
universal constant, also denoted by $C$,
\begin{equation}\label{l1 norm of eta_p lemma}
  \sum_{p\neq0}\vert\eta_p\vert\leq C.
\end{equation}
\end{lemma}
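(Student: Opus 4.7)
The plan is to solve the scattering equation (\ref{eqn of eta_p rewrt}) for $\eta_p$ and bound the $\ell^1$-norm of $\eta$ term by term. For $p\neq 0$ I rewrite
\begin{equation*}
\eta_p=\frac{1}{|p|^2}\left(W_p-\frac{1}{2}\hat{v}_p-\frac{1}{2}\sum_{q\in 2\pi\mathbb{Z}^3}\hat{v}_{p-q}\eta_q\right),
\end{equation*}
so it suffices to show that each of $\sum_{p\neq 0}|W_p|/|p|^2$, $\sum_{p\neq 0}|\hat{v}_p|/|p|^2$, and $\sum_{p\neq 0}|(\hat{v}*\eta)_p|/|p|^2$ is $O(1)$ uniformly in $a$ and $\ell$.

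The main tool will be the following splitting estimate: whenever a sequence $c_p$ on $2\pi\mathbb{Z}^3$ admits two bounds $|c_p|\leq M_0$ and $|c_p|\leq M_2/|p|^2$, the three-dimensional tail sums $\sum_{0<|p|\leq R}|p|^{-2}\lesssim R$ and $\sum_{|p|>R}|p|^{-4}\lesssim R^{-1}$, applied with $R=\sqrt{M_2/M_0}$, yield $\sum_{p\neq 0}|c_p|/|p|^2\lesssim\sqrt{M_0 M_2}$. Applied to $W_p$ with $M_0\sim a$ (from $\|W\|_1\lesssim a$ in (\ref{L1&L2 norm of W 3dscatt})) and $M_2\sim a/\ell^2$ (from (\ref{moron3})), this gives $\sum|W_p|/|p|^2\lesssim a/\ell$, which is $O(1)$ by the hypothesis $a/\ell<C$. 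Applied to $\hat{v}_p=a\check{v}(ap)$ with $M_0=\|v_a\|_1\sim a$ and $M_2=\|\Delta v_a\|_1\sim 1/a$, it gives $\sum|\hat{v}_p|/|p|^2\lesssim 1$.

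The main obstacle is the convolution term, since the pointwise bound $|\eta_p|\lesssim a/|p|^2$ from (\ref{moron2}) is by itself \emph{not} $\ell^1$-summable in three dimensions ($\sum_{p\neq 0}|p|^{-2}=\infty$), and any Cauchy--Schwarz argument based on the $H^2$-norm $\|\widetilde{w}_\ell\|_{H^2}\lesssim a^{-1/2}$ is too lossy. The remedy is to identify $\sum_q\hat{v}_{p-q}\eta_q=-(v_a\widetilde{w}_\ell)^\wedge_p$ and estimate the product $g:=v_a\widetilde{w}_\ell$ as a single smooth, compactly supported function. Under the hypothesis $\ell>Ca$, the support $B_{ar_0}$ of $v_a$ (with $r_0$ the support radius of $v$) lies strictly inside the smooth region of $\widetilde{w}_\ell$, so $g\in C^\infty_c(B_{ar_0})$. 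Using the scalings $\|\nabla^k v_a\|_1\lesssim a^{1-k}$ and $\|\nabla^k\widetilde{w}_\ell\|_\infty\lesssim a^{-k}$ for $k=0,1,2$, I obtain $\|g\|_1\lesssim a$ and $\|\Delta g\|_1\lesssim 1/a$, so the splitting estimate with $M_0\sim a$, $M_2\sim 1/a$ (split at $|p|=1/a$) gives $\sum|\hat{g}_p|/|p|^2\lesssim 1$. Summing the three bounds closes the estimate.
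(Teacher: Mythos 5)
Your proof is correct and, as far as one can tell, follows the same strategy as the paper's: the paper itself does not display a proof but defers to \cite[Lemma~3.3]{me}, so there is no in-text argument to compare against line by line. The two observations you isolate are exactly the ones that make the argument go through. First, the $L^\infty$--$|p|^{-2}$ interpolation $\sum_{p\neq 0}|c_p|/|p|^{2}\lesssim\sqrt{M_0M_2}$, applied to $W_p$ with $M_0\sim a$, $M_2\sim a\ell^{-2}$, gives $\sum|W_p|/|p|^{2}\lesssim a/\ell\lesssim 1$ (which is precisely the content of the companion Lemma~\ref{W_p lemma}, cited to the same source). Second, and this is the crucial point, because $\operatorname{supp}v_a\subset B_{ar_0}$ sits strictly inside the region where $\widetilde w_\ell$ is smooth whenever $a/\ell$ is small, the product $g=v_a\widetilde w_\ell$ is a genuine $C_c^\infty$ function whose derivatives scale like those of $v_a$, so $\|g\|_1\lesssim a$, $\|\Delta g\|_1\lesssim a^{-1}$, and the same interpolation closes the estimate; the naive bound $|\eta_p|\lesssim a|p|^{-2}$ alone is, as you observe, not summable. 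One small streamlining: the combination $\tfrac12\hat v_p+\tfrac12\sum_q\hat v_{p-q}\eta_q$ is exactly $Z_p$ with $Z=\tfrac12 v_a\widetilde f_\ell$ (see (\ref{define Z_p})--(\ref{define Z(x)})), which is again smooth, compactly supported in $B_{ar_0}$, and satisfies $\|Z\|_1\lesssim a$, $\|\Delta Z\|_1\lesssim a^{-1}$ by the same scaling; treating it as a single object lets you skip the separate $\hat v_p$ estimate and apply the interpolation only twice, once to $W_p$ and once to $Z_p$.
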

For $\{W_p\}$, Lemma \ref{l1 lemma} is the following:
\begin{lemma}\label{W_p lemma}
Assume that $a$ tends to $0$ and $\frac{1}{a}>\frac{C}{\ell}$ for
some universal constant $C$. Then we have, for some universal constant, also denoted by $C$,
\begin{equation}\label{sum W_p M_dp^-2}
  \sum_{p\neq0}\vert W_p\vert\cdot
\vert p\vert^{-2}\leq \frac{Ca}{\ell}.
\end{equation}
\end{lemma}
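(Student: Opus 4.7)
The plan is to decompose the sum at the natural scale $|p|\sim \ell^{-1}$ and use two complementary bounds on $W_p$ that each individually fail to give the claimed rate.

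For the low-frequency part $0<|p|\leq \ell^{-1}$, I would invoke the uniform bound $|W_p|\leq Ca$ from \eqref{sum_pW_peta_p 3dscatt} (which itself follows from $\|W\|_1 \leq Ca$ in \eqref{L1&L2 norm of W 3dscatt}, since $|W_p|\leq \|W\|_1$). This combines with the standard three-dimensional lattice estimate
\begin{equation*}
\sum_{\substack{p\in 2\pi\mathbb{Z}^3 \\ 0<|p|\leq R}}|p|^{-2} \;\lesssim\; R,
\end{equation*}
obtained by comparison with the spherical integral $\int_{0}^{R} r^{-2}\cdot r^{2}\,dr\sim R$, to yield a contribution of order $Ca\cdot \ell^{-1}$.

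For the high-frequency part $|p|>\ell^{-1}$, I would use the pointwise decay bound $|W_p|\leq Ca\ell^{-2}|p|^{-2}$ from \eqref{moron3}, which itself exploits the $|p|^{-2}$ decay of both $\widehat{\chi_{\ell}}$ and $\eta_p$ established in \eqref{moron2}. Combined with the tail estimate
\begin{equation*}
\sum_{\substack{p\in 2\pi\mathbb{Z}^3 \\ |p|>R}}|p|^{-4} \;\lesssim\; R^{-1},
\end{equation*}
obtained similarly by comparison with $\int_{R}^{\infty}r^{-4}\cdot r^{2}\,dr\sim R^{-1}$, this yields a contribution of order $Ca\ell^{-2}\cdot\ell=Ca/\ell$. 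Summing the two contributions gives the desired bound.

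The only conceptual subtlety is the choice of splitting scale. Neither bound alone suffices: using $|W_p|\leq Ca$ at all $p$ would produce the divergent sum $\sum_{p\neq 0}|p|^{-2}$ over the whole lattice, while using $|W_p|\leq Ca\ell^{-2}|p|^{-2}$ at all $p$ would produce the finite but oversized constant $Ca\ell^{-2}\sum_{p\neq 0}|p|^{-4}=O(a\ell^{-2})$. The crossover at $|p|\sim \ell^{-1}$ is precisely the balance point that extracts the sharper rate $a\ell^{-1}$, and the argument is otherwise a routine pair of integral comparisons; I expect no significant technical obstacle beyond choosing this cutoff.
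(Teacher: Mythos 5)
Your proof is correct, and the decomposition at the crossover scale $|p|\sim\ell^{-1}$ is precisely the natural argument here; the paper itself defers to \cite[Lemma 3.3]{me} for the details, but what you have written is a complete, self-contained proof by the standard method for such lattice sums. The two ingredients you cite — the uniform bound $|W_p|\leq Ca$ from (\ref{sum_pW_peta_p 3dscatt}) (which indeed follows from $\|W\|_1\lesssim a$ in (\ref{L1&L2 norm of W 3dscatt})) and the decaying bound $|W_p|\leq Ca\ell^{-2}|p|^{-2}$ from (\ref{moron3}) — are both established in the paper, and the two Riemann-sum comparisons $\sum_{0<|p|\leq R}|p|^{-2}\lesssim R$ and $\sum_{|p|>R}|p|^{-4}\lesssim R^{-1}$ are elementary. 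The assumption $a/\ell<C$ enters implicitly through the validity of (\ref{moron3}), which is fine.
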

The detailed proofs of the above two lemma can be seen in \cite[Lemma 3.3]{me}.

\subsection{Coefficients of the Bogoliubov Transformation}\label{coeffsub}
\
\par Recall that in Section \ref{intro}, we set $\ell=N^{-\frac{1}{3}-\alpha}$ introduced in (\ref{asymptotic energy pde on the ball}) for some $\alpha>0$. We let the coefficients of the Bogoliubov transformation be:
\begin{equation}\label{define xi_k,q,p,nu,sigma 0000}
  \xi_{k,q,p}^{\nu,\sigma}=\frac{-\big(W_k+\eta_kk(q-p)\big)}
  {\frac{1}{2}\big(\vert q+k\vert^2+\vert p-k\vert^2-\vert q\vert^2-\vert p\vert^2\big)}
  \chi_{p-k\notin B^{\sigma}_F}\chi_{q+k\notin B^{\nu}_F}\chi_{p\in B^{\sigma}_F}\chi_{q\in B^{\nu}_F}.
\end{equation}
The collisional energy certainly satisfies the usual hyper plane and parabola parametrization:
\begin{equation}\label{useful formula}
  \frac{1}{2}\big(\vert q+k\vert^2+\vert p-k\vert^2-\vert q\vert^2-\vert p\vert^2\big)
  =\vert k\vert^2+k\cdot(q-p)
  =\Big\vert k+\frac{q-p}{2}\Big\vert^2-\Big\vert\frac{q-p}{2}\Big\vert^2.
\end{equation}
In this section, we collect various of useful estimates of $\xi_{k,q,p}^{\nu,\sigma}$. For simplicity, for $p\in B_F^\sigma$ and $q\in B_F^\nu$, we let
\begin{equation}\label{define xi sec4}
  \xi_{q,p}^{\nu,\sigma}(x)=\sum_{k} \xi_{k,q,p}^{\nu,\sigma}e^{ikx}.
\end{equation}
\begin{lemma}\label{L1 Linfty est lemma}
For $p\in B_F^{\sigma}$ and $q\in B_F^{\nu}$ fixed, and $c_N=4\sup_\sigma k_F^{\sigma}$,
\begin{equation}\label{L1 Linfty est}
\begin{aligned}
&\sum_{k}\Big\vert\frac{W_k}{\vert k\vert^2+k(q-p)}\chi_{p-k\notin B_F^\sigma}
\chi_{q+k\notin B_F^\nu}\Big\vert\lesssim a\ell^{-1}\\
&\sum_{\vert k\vert<c_N}\Big\vert\frac{\eta_kk(q-p)}{\vert k\vert^2+k(q-p)}\chi_{p-k\notin B_F^\sigma}
\chi_{q+k\notin B_F^\nu}\Big\vert\\
&+\sum_{\vert k\vert\geq c_N}\Big\vert\Big(\frac{\eta_kk(q-p)}{\vert k\vert^2+k(q-p)}-\frac{\eta_kk(q-p)}{\vert k\vert^2-k(q-p)}
\Big)\chi_{p-k\notin B_F^\sigma}
\chi_{q+k\notin B_F^\nu}\Big\vert\lesssim aN^{\frac{2}{3}}\ell
\end{aligned}
\end{equation}
Moreover, we have the following naive bound
\begin{equation}\label{too naive}
  \sum_{k}\Big\vert\frac{\eta_kk(q-p)}{\vert k\vert^2+k(q-p)}\chi_{p-k\notin B_F^\sigma}
\chi_{q+k\notin B_F^\nu}\Big\vert\lesssim aN^{\frac{1}{3}}\ln N.
\end{equation}
\end{lemma}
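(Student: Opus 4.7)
The plan is to dyadically split each sum according to the size of $|k|$ relative to $c_N=4\sup_\sigma k_F^\sigma$ and $1/\ell$, combined with pointwise lower bounds on the denominator coming from the indicator constraints. Two standing observations drive everything: (a) for $|k|\geq c_N$, $|k(q-p)|\leq |k|\cdot 2\max_\sigma k_F^\sigma \leq |k|^2/2$, hence $|k|^2+k(q-p)\geq |k|^2/2$; (b) under both indicators, (\ref{lower bound k_F}) together with $p\in B_F^\sigma$ and $q\in B_F^\nu$ yields $|p-k|^2-|p|^2\geq 4\pi^2$ and $|q+k|^2-|q|^2\geq 4\pi^2$, so the algebraic identity $2(|k|^2+k(q-p))=(|p-k|^2-|p|^2)+(|q+k|^2-|q|^2)$ forces denom $\geq 4\pi^2$ throughout the summation.

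For part 1, the piece $|k|\geq c_N$ contributes at most $2\sum_{k\neq 0}|W_k|/|k|^2\lesssim a/\ell$ by (a) and Lemma \ref{W_p lemma}. In the complementary region I use $|W_k|\lesssim a$ from (\ref{sum_pW_peta_p 3dscatt}) and reduce to estimating $\sum_{|k|<c_N}\chi_k/(|k|^2+k(q-p))$. Writing the denominator as $|k+(q-p)/2|^2-|(q-p)/2|^2$ and noting that on the lattice it takes values only in $4\pi^2\mathbb{Z}_{>0}$, I apply a layer-cake decomposition: by (\ref{approximate lattice 2d sphere}) the number of $k$ with $|k|<c_N$ and denom $=4\pi^2 n$ is bounded by the count of shifted-lattice points on a sphere of squared radius $n+|(q-p)/2|^2\lesssim N^{2/3}$, yielding $\lesssim N^{1/3+\varepsilon}$. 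Summing $1/n$ over $n\leq c_N^2\sim N^{2/3}$ gives $\lesssim N^{1/3+\varepsilon}\ln N$, so the low-momentum contribution is $\lesssim aN^{1/3+\varepsilon}\ln N$, which is in turn $\lesssim a/\ell$ once $\varepsilon<\alpha$.

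Part 2's high-momentum piece exploits the algebraic cancellation
\begin{equation*}
\frac{\eta_k k(q-p)}{|k|^2+k(q-p)}-\frac{\eta_k k(q-p)}{|k|^2-k(q-p)}=-\frac{2\eta_k(k(q-p))^2}{|k|^4-(k(q-p))^2},
\end{equation*}
whose new denominator satisfies $|k|^4-(k(q-p))^2\geq 3|k|^4/4$ by (a), producing a summand $\lesssim |\eta_k||q-p|^2/|k|^2$. Splitting at $|k|=1/\ell$ and using $|\eta_k|\lesssim a\ell^2$ from (\ref{est of eta_0}) for $c_N\leq|k|\leq 1/\ell$ and $|\eta_k|\lesssim a/|k|^2$ from (\ref{moron2}) for $|k|>1/\ell$, the shell sum in either range yields $\lesssim a\ell|q-p|^2\lesssim aN^{2/3}\ell$. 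For the small-$|k|$ piece of part 2, I use $|\eta_k|\lesssim a\ell^2$ (valid since $c_N\leq 1/\ell$), $|k|,|q-p|\leq c_N$ and the layer-cake bound from part 1; the resulting estimate $\lesssim a\ell^2 N^{1+\varepsilon}\ln N$ is smaller than $aN^{2/3}\ell$ whenever $\alpha>\varepsilon$.

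For part 3 I reapply this three-range partition without the cancellation: the small-$|k|$ piece matches part 2, while for $|k|\geq c_N$ the bound $|k(q-p)|/\text{denom}\leq 2|q-p|/|k|$ from (a), combined with the two-range $|\eta_k|$ estimates, gives a contribution dominated by $|k|>1/\ell$, where $\sum|k|^{-3}$ produces the logarithm in $aN^{1/3}\ln N$. The main technical obstacle throughout is the lattice-point count on the spheres $|k+(q-p)/2|^2=|(q-p)/2|^2+n$, where the $N^\varepsilon$ loss from (\ref{approximate lattice 2d sphere}) must be absorbed by taking $\alpha$ in $\ell=N^{-1/3-\alpha}$ suitably larger than $\varepsilon$.
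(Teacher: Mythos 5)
Your treatment of the first two inequalities is correct and organized differently from the paper's: you partition by $|k|$ (relative to $c_N$ and $\ell^{-1}$) and run a layer-cake over the level sets of the denominator $|k|^2+k(q-p)\in 4\pi^2\mathbb{Z}_{>0}$, whereas the paper partitions by $|p-k|$ relative to the Fermi radius and a cutoff $R_0$, isolating the shell $k_F^\sigma<|p-k|\leq k_F^\sigma+O(1)$ where the sphere count (\ref{approximate lattice 2d sphere}) enters, and using radial Riemann sums elsewhere. The two routes have the same content; your level-set count for the shifted spheres is precisely the paper's (\ref{N^1/3+epsilon case 3 1}), and you should include its small-radius case (trivial volume bound when the sphere radius is $\lesssim N^{1/9}$), since (\ref{approximate lattice 2d sphere}) is only an asymptotic statement. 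Your cancellation identity and the two-range $|\eta_k|$ bounds in part 2 reproduce the paper's Cases IV--VI, and your use of Lemma \ref{W_p lemma} for the high-momentum piece of part 1 is a slightly more direct substitute for the paper's Cauchy--Schwarz in Case I.

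The gap is in (\ref{too naive}). For $|k|>1/\ell$ you bound the summand by $2|\eta_k||q-p|/|k|\lesssim aN^{1/3}|k|^{-3}$ and assert that $\sum|k|^{-3}$ "produces the logarithm". But $\sum_{|k|>1/\ell}|k|^{-3}$ diverges logarithmically at infinity (in three dimensions $\sum_{|k|>R}|k|^{-s}$ converges only for $s>3$), and the indicators impose no upper cutoff since they are satisfied for all large $k$. The pointwise decay $|\eta_k|\lesssim a|k|^{-2}$ from (\ref{moron2}) cannot be improved, so the far tail genuinely requires a different mechanism. The paper's Case I with $R_0=a^{-1}=N$ supplies it: Cauchy--Schwarz pairing $|k|\,|\eta_k|\in\ell^2$, with norm $\Vert\nabla\eta\Vert_2\lesssim a^{1/2}$ from (\ref{est of grad eta}), against $\big(\sum_{|p-k|\geq N}(|p-k|^2-|p|^2)^{-2}\big)^{1/2}\lesssim N^{-1/2}$, which gives $a^{1/2}N^{1/3}N^{-1/2}=aN^{1/3}$ for the region $|k|\gtrsim N$; the middle range $\ell^{-1}\lesssim|k|\lesssim N$ then yields the $\ln N$ from your $\sum|k|^{-3}$ computation with both endpoints finite. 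You need to add this third range to close part 3.
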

\begin{proof}
  \par We first proof the first inequality in (\ref{L1 Linfty est}) and (\ref{too naive}) altogether. Here we take $R_0\geq\ell^{-1}\gg N^{\frac{1}{3}}\ln N$ as some cut-off parameter to be determined, and denote any universal constant by $C$.
\begin{flushleft}
  \textbf{Case $\mathrm{I}$: $\vert p-k\vert\geq R_0$}
\end{flushleft}
In this case, we use (\ref{L1&L2 norm of W 3dscatt}) to bound
\begin{equation}\label{L1 est case 1 W}
  \begin{aligned}
  \sum_{k}\frac{\vert W_k\vert}{\vert k\vert^2+k(q-p)}&\chi_{p-k\notin B_F^\sigma}
\chi_{q+k\notin B_F^\nu}\lesssim
\sum_{k}\frac{\vert W_k\vert}{\vert p-k\vert^2-\vert p\vert^2}\chi_{p-k\notin B_F^\sigma}
\chi_{q+k\notin B_F^\nu}\\
&\lesssim
\Big(\sum_k\vert W_k\vert^2\Big)^{\frac{1}{2}}
\Big(\sum_{\vert p-k\vert\geq R_0}\frac{1}{\big(\vert p-k\vert^2-\vert p\vert^2\big)^2}\Big)^{\frac{1}{2}}\\
&\lesssim a\ell^{-\frac{3}{2}}\Big(\int_{R_0-C}^{\infty}\frac{(r^2+r+1)dr}{\big(r^2-\vert p\vert^2\big)^2}\Big)^{\frac{1}{2}}
\lesssim a\ell^{-\frac{3}{2}}R_0^{-\frac{1}{2}}.
  \end{aligned}
\end{equation}
On the other hand, we can use (\ref{est of grad eta}) and the fact that $\vert p-q\vert\lesssim N^{\frac{1}{3}}$ to bound
\begin{equation}\label{L1 est case 1 W eta}
  \begin{aligned}
  \sum_{k}\frac{\vert \eta_kk(q-p)\vert}{\vert k\vert^2+k(q-p)}&\chi_{p-k\notin B_F^\sigma}
\chi_{q+k\notin B_F^\nu}\lesssim
\sum_{k}\frac{\vert \eta_kk(q-p)\vert}
{\vert p-k\vert^2-\vert p\vert^2}\chi_{p-k\notin B_F^\sigma}
\chi_{q+k\notin B_F^\nu}\\
&\lesssim N^{\frac{1}{3}}
\Big(\sum_k\vert \eta_kk\vert^2\Big)^{\frac{1}{2}}
\Big(\sum_{\vert p-k\vert\geq R_0}\frac{1}{\big(\vert p-k\vert^2-\vert p\vert^2\big)^2}\Big)^{\frac{1}{2}}\\
&\lesssim a^{\frac{1}{2}}N^{\frac{1}{3}}
\Big(\int_{R_0-C}^{\infty}\frac{(r^2+r+1)dr}{\big(r^2-\vert p\vert^2\big)^2}\Big)^{\frac{1}{2}}
\lesssim a^{\frac{1}{2}}N^{\frac{1}{3}}R_0^{-\frac{1}{2}}.
  \end{aligned}
\end{equation}

\begin{flushleft}
  \textbf{Case $\mathrm{II}$: $k_F^\sigma+1+(12\pi^2)^{1/2}<\vert p-k\vert< R_0$}
\end{flushleft}
In this case, we use (\ref{sum_pW_peta_p 3dscatt}) to bound
\begin{equation}\label{L1 est case 2 W}
  \begin{aligned}
  \sum_{k}\frac{\vert W_k\vert}{\vert k\vert^2+k(q-p)}&\chi_{p-k\notin B_F^\sigma}
\chi_{q+k\notin B_F^\nu}\lesssim
\sum_{k}\frac{\vert W_k\vert}{\vert p-k\vert^2-\vert p\vert^2}\chi_{p-k\notin B_F^\sigma}
\chi_{q+k\notin B_F^\nu}\\
&\lesssim
\sup_k\vert W_k\vert
\sum_{k_F^{\sigma}+C<\vert p-k\vert< R_0}\frac{1}{\vert p-k\vert^2-\vert p\vert^2}\\
&\lesssim a\int_{k_F^\sigma+1}^{R_0}\frac{(r^2+r+1)dr}{r^2-\vert p\vert^2}
\lesssim aR_0.
  \end{aligned}
\end{equation}
On the other hand, since we set $R_0\geq\ell^{-1}$, we can use (\ref{est of w_l,p}), (\ref{eta_p}) and (\ref{est of eta_0}) to bound for $k_F^\sigma+1+(12\pi^2)^{1/2}<\vert p-k\vert< \ell^{-1}$:
\begin{equation}\label{L1 est case 2 W eta}
  \begin{aligned}
  \sum_{k}\frac{\vert \eta_kk(q-p)\vert}{\vert k\vert^2+k(q-p)}&\chi_{p-k\notin B_F^\sigma}
\chi_{q+k\notin B_F^\nu}\lesssim
\sum_{k}\frac{\vert \eta_kk(q-p)\vert}
{\vert p-k\vert^2-\vert p\vert^2}\chi_{p-k\notin B_F^\sigma}
\chi_{q+k\notin B_F^\nu}\\
&\lesssim N^{\frac{1}{3}}
\sup_k\vert \eta_k\vert^{\frac{1}{2}}\sup_k\vert k\vert\vert\eta_k\vert^\frac{1}{2}
\sum_{k_F^{\sigma}+C<\vert p-k\vert< \ell^{-1}}\frac{1}{\vert p-k\vert^2-\vert p\vert^2}\\
&\lesssim aN^{\frac{1}{3}}\ell\int_{k_F^\sigma+1}^{\ell^{-1}}
\frac{(r^2+r+1)dr}{r^2-\vert p\vert^2}
\lesssim aN^{\frac{1}{3}}.
  \end{aligned}
\end{equation}
On the other hand, for $\ell^{-1}\leq\vert p-k\vert<R_0$, we notice that since $\vert p\vert\lesssim N^{\frac{1}{3}}$,
\begin{equation*}
  \ell^{-1}=N^{\frac{1}{3}+\alpha}\lesssim
  \vert p-k\vert-\vert p\vert\leq\vert k\vert\leq \vert p-k\vert+\vert p\vert\lesssim R_0
\end{equation*}
and therefore
\begin{equation*}
  \vert p-k\vert^2-\vert p\vert^2\gtrsim \vert k\vert^2.
\end{equation*}
Hence we use directly (\ref{est of w_l,p}) and (\ref{eta_p}) (i.e. $\vert\eta_k\vert\lesssim  a\vert k\vert^{-2}$) to bound for the case $\ell^{-1}\leq\vert p-k\vert<R_0$,
\begin{equation}\label{L1 est case 2 W eta 2}
  \begin{aligned}
  \sum_{k}\frac{\vert \eta_kk(q-p)\vert}{\vert k\vert^2+k(q-p)}&\chi_{p-k\notin B_F^\sigma}
\chi_{q+k\notin B_F^\nu}\lesssim
\sum_{k}\frac{\vert \eta_kk(q-p)\vert}
{\vert p-k\vert^2-\vert p\vert^2}\chi_{p-k\notin B_F^\sigma}
\chi_{q+k\notin B_F^\nu}\\
&\lesssim aN^{\frac{1}{3}}
\sum_{\ell^{-1}\lesssim\vert k\vert\lesssim R_0 }\frac{1}{\vert k\vert^3}\\
&\lesssim aN^{\frac{1}{3}}\int_{\ell^{-1}}^{R_0}
\frac{(r^2+r+1)dr}{r^3}
\lesssim aN^{\frac{1}{3}}\ln R_0.
  \end{aligned}
\end{equation}

\begin{flushleft}
  \textbf{Case $\mathrm{III}$: $k_F^\sigma<\vert p-k\vert\leq k_F^\sigma+1+(12\pi^2)^{1/2}$}
\end{flushleft}
In this case, we use (\ref{approximate lattice 2d sphere}) and (\ref{sum_pW_peta_p 3dscatt}) to bound
\begin{equation}\label{L1 est case 3 W}
  \begin{aligned}
  \sum_{k}\frac{\vert W_k\vert}{\vert k\vert^2+k(q-p)}&\chi_{p-k\notin B_F^\sigma}
\chi_{q+k\notin B_F^\nu}\lesssim
\sum_{k}\frac{\vert W_k\vert}{\vert p-k\vert^2-\vert p\vert^2}\chi_{p-k\notin B_F^\sigma}
\chi_{q+k\notin B_F^\nu}\\
&\lesssim
N^{\frac{1}{3}+\varepsilon}\sup_k\vert W_k\vert
\sum_{\substack{R^2\in4\pi^2\mathbb{Z}\\k_F^{\sigma}<R\leq k_F^\sigma+C}}\frac{1}{R^2-\vert p\vert^2}\\
&\lesssim aN^{\frac{1}{3}+\varepsilon}\Big(1+
\int_{(k_F^\sigma)^2+1}^{(k_F^\sigma+C)^2}\frac{dr}{r-\vert p\vert^2}\Big)
\lesssim aN^{\frac{1}{3}+\varepsilon^\prime}.
  \end{aligned}
\end{equation}
for any $\varepsilon^\prime>\varepsilon>0$ small but fixed.
We then conclude the first inequality in (\ref{L1 Linfty est}) by choosing $R_0=\ell^{-1}$ and $\varepsilon^\prime<\alpha$. On the other hand, we use (\ref{approximate lattice 2d sphere}) and (\ref{sum_pW_peta_p 3dscatt}) to bound
\begin{equation}\label{L1 est case 3 W eta}
  \begin{aligned}
  \sum_{k}\frac{\vert \eta_kk(q-p)\vert}{\vert k\vert^2+k(q-p)}&\chi_{p-k\notin B_F^\sigma}
\chi_{q+k\notin B_F^\nu}\lesssim
\sum_{k}\frac{\vert \eta_kk(q-p)\vert}
{\vert p-k\vert^2-\vert p\vert^2}\chi_{p-k\notin B_F^\sigma}
\chi_{q+k\notin B_F^\nu}\\
&\lesssim
N^{\frac{2}{3}+\varepsilon}\sup_k\vert \eta_k\vert^{\frac{1}{2}}\sup_k\vert k\vert\vert\eta_k\vert^\frac{1}{2}
\sum_{\substack{R^2\in4\pi^2\mathbb{Z}\\k_F^{\sigma}<R\leq k_F^\sigma+C}}\frac{1}{R^2-\vert p\vert^2}\\
&\lesssim a\ell N^{\frac{2}{3}+\varepsilon}\Big(1+
\int_{(k_F^\sigma)^2+1}^{(k_F^\sigma+C)^2}\frac{dr}{r-\vert p\vert^2}\Big)
\lesssim a\ell N^{\frac{2}{3}+\varepsilon^\prime}.
  \end{aligned}
\end{equation}
for any $\varepsilon^\prime>\varepsilon>0$ small but fixed. We then conclude (\ref{too naive}) by choosing $R_0=a^{-1}=N$ and $\varepsilon^\prime<\alpha$.

\par For the second inequality in (\ref{L1 Linfty est}), we also analyse it case by case.

\begin{flushleft}
  \textbf{Case $\mathrm{IV}$: $\vert k\vert\geq c_N$}
\end{flushleft}
In this case, we first notice that since $\vert k\vert>c_N$ and $\vert p\vert,\vert q\vert\leq\frac{1}{4}c_N\lesssim N^{\frac{1}{3}}$ implies
\begin{equation*}
   \vert k\vert^2\pm k(q-p)\geq \frac{1}{2}\vert k\vert^2.
\end{equation*}
Therefore,
\begin{equation}\label{L1 est case 1 eta}
\begin{aligned}
  &\Big\vert\frac{1}{\vert k\vert^2+k(q-p)}-\frac{1}{\vert k\vert^2-k(q-p)}\Big\vert\\
  &=\Big\vert\frac{2k(q-p)}{\big(\vert k\vert^2+k(q-p)\big)
  \big(\vert k\vert^2-k(q-p)\big)}\Big\vert
  \lesssim \frac{N^{\frac{1}{3}}}{\vert k\vert^3}.
\end{aligned}
\end{equation}
Hence we can use (\ref{est of eta and eta_perp} and (\ref{est of eta_0}) to bound
\begin{equation}\label{L1 est case 1 eta 2}
 \begin{aligned}
 &\sum_{\vert k\vert\geq c_N}\Big\vert\Big(\frac{\eta_kk(q-p)}{\vert k\vert^2+k(q-p)}-\frac{\eta_kk(q-p)}{\vert k\vert^2-k(q-p)}
\Big)\chi_{p-k\notin B_F^\sigma}
\chi_{q+k\notin B_F^\nu}\Big\vert\\
&\lesssim N^{\frac{2}{3}}\sum_{\vert k\vert\geq c_N}\frac{\vert\eta_k\vert}{\vert k\vert^2}
\lesssim aN^{\frac{2}{3}}\ell
 \end{aligned}
\end{equation}
where the last inequality can be deduced in a similar approach like we just did in Cases \textrm{I} and \textrm{II}, by choosing the cut-off parameter $R_0=\ell^{-1}$.

\begin{flushleft}
  \textbf{Case $\mathrm{V}$: $k_F^\sigma+1+(12\pi^2)^{1/2}<\vert p-k\vert\;\textmd{and}\;\vert k\vert<c_N$}
\end{flushleft}
Notice that $\vert p\vert,\vert q\vert<c_N\lesssim N^{\frac{1}{3}}$. Then similar to (\ref{L1 est case 2 W}) in Case \textrm{II}, we can bound using (\ref{est of eta_0}), in this case
\begin{equation}\label{L1 est case 2 eta}
  \sum_{\vert k\vert<c_N}\Big\vert\frac{\eta_kk(q-p)}{\vert k\vert^2+k(q-p)}\chi_{p-k\notin B_F^\sigma}
\chi_{q+k\notin B_F^\nu}\Big\vert\lesssim a\ell^2N\ln N.
\end{equation}

\begin{flushleft}
  \textbf{Case $\mathrm{VI}$: $k_F^\sigma<\vert p-k\vert\leq k_F^\sigma+1+(12\pi^2)^{1/2}$}
\end{flushleft}
Notice that $\vert p\vert,\vert q\vert<c_N\lesssim N^{\frac{1}{3}}$. Then similar to (\ref{L1 est case 3 W}) in Case \textrm{III}, we can bound using (\ref{est of eta_0}), in this case
\begin{equation}\label{L1 est case 3 eta}
  \sum_{\vert k\vert<c_N}\Big\vert\frac{\eta_kk(q-p)}{\vert k\vert^2+k(q-p)}\chi_{p-k\notin B_F^\sigma}
\chi_{q+k\notin B_F^\nu}\Big\vert\lesssim a\ell^2N^{1+\varepsilon}
\end{equation}
for any $\varepsilon>0$ small but fixed.
We then conclude the second inequality in (\ref{L1 Linfty est}) by choosing $\varepsilon<\alpha$.
\end{proof}

\begin{lemma}\label{N^1/3+epsilon lemma}
  Let
 \begin{equation}\label{define S_q,p lemma}
   \mathfrak{S}_{q,p}^{\nu,\sigma}
   \coloneqq\{k\in2\pi\mathbb{Z}^3\,\big\vert\,\vert k\vert^2+k\cdot(q-p)=0\},
 \end{equation}
 then for $p\in B_F^{\sigma}$ and $q\in B_F^{\nu}$ fixed, we have
 \begin{equation}\label{N^1/3+epsilon}
   \sum_{k\in2\pi\mathbb{Z}^3}\Big\vert\frac{\big(1-\chi_{p-k\notin B_F^{\sigma}}
     \chi_{q+k\notin B_F^\nu}\big)\chi_{k\notin \mathfrak{S}_{q,p}^{\nu,\sigma}}}{\vert k\vert^2+k(q-p)}\Big\vert\lesssim N^{\frac{1}{3}+\varepsilon}
 \end{equation}
 for any $\varepsilon>0$ small enough as long as $N$ is correspondingly large enough.
\end{lemma}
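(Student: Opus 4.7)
The plan is to reduce the sum to a counting problem for lattice points on two-dimensional spheres, and then invoke the Linnik-type estimate (\ref{approximate lattice 2d sphere}). First I observe that the indicator $1 - \chi_{p-k\notin B_F^\sigma}\chi_{q+k\notin B_F^\nu}$ is supported on the set
$$\{k\in 2\pi\mathbb{Z}^3:\, |p-k|\leq k_F^\sigma\}\cup\{k\in 2\pi\mathbb{Z}^3:\, |q+k|\leq k_F^\nu\},$$
which is contained in a union of two balls of radius $O(N^{1/3})$ in $2\pi\mathbb{Z}^3$. Consequently, on the support of the summand, $|k|\lesssim N^{1/3}$, and also $|q-p|\lesssim N^{1/3}$ since $p\in B_F^\sigma$, $q\in B_F^\nu$.

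Next I rescale by writing $k=2\pi\tilde k$ and $q-p=2\pi\tilde d$ with $\tilde k,\tilde d\in\mathbb{Z}^3$. Then
$$|k|^2+k\cdot(q-p)=4\pi^2\bigl(|\tilde k|^2+\tilde k\cdot\tilde d\bigr)=4\pi^2\, n(\tilde k),$$
with $n(\tilde k)\in\mathbb{Z}$, and the condition $k\notin\mathfrak{S}_{q,p}^{\nu,\sigma}$ is precisely $n(\tilde k)\neq 0$. On the support we have $|n(\tilde k)|=|\tilde k\cdot(\tilde k+\tilde d)|\leq |\tilde k|(|\tilde k|+|\tilde d|)\lesssim N^{2/3}$. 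Grouping terms by the value of $n$, the sum is bounded by
$$\frac{1}{4\pi^2}\sum_{\substack{n\in\mathbb{Z}\\ 0<|n|\lesssim N^{2/3}}}\frac{\#\{\tilde k\in\mathbb{Z}^3:\, n(\tilde k)=n,\ |\tilde k|\lesssim N^{1/3}\}}{|n|}.$$

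To count the level sets, I use the algebraic identity $|2\tilde k+\tilde d|^2=4n(\tilde k)+|\tilde d|^2$. Hence $2\tilde k+\tilde d\in\mathbb{Z}^3$ lies on a sphere (centered at $0$) of squared radius $4n+|\tilde d|^2\lesssim N^{2/3}$. Applying (\ref{approximate lattice 2d sphere}) with $R=4n+|\tilde d|^2$, the number of such integer points is at most $C_\varepsilon N^{1/3+\varepsilon/2}$ for any fixed $\varepsilon>0$. Consequently,
$$\sum_k\frac{\bigl(1-\chi_{p-k\notin B_F^\sigma}\chi_{q+k\notin B_F^\nu}\bigr)\chi_{k\notin\mathfrak{S}_{q,p}^{\nu,\sigma}}}{\bigl||k|^2+k(q-p)\bigr|}\ \lesssim\ N^{1/3+\varepsilon/2}\!\!\!\sum_{1\leq|n|\lesssim N^{2/3}}\!\!\frac{1}{|n|}\ \lesssim\ N^{1/3+\varepsilon/2}\ln N\ \lesssim\ N^{1/3+\varepsilon},$$
for $N$ large enough, as claimed.

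The main obstacle is the possible half-integer shift arising when some component of $\tilde d$ is odd, which a priori prevents a direct application of (\ref{approximate lattice 2d sphere}) to the shifted sphere $|\tilde k+\tilde d/2|^2=n+|\tilde d|^2/4$. This is circumvented by the doubling trick $2\tilde k+\tilde d\in\mathbb{Z}^3$, which returns to the integer lattice at the cost of harmless constants. A secondary subtle point is that one must verify that the spheres are nonempty (so that the estimate (\ref{approximate lattice 2d sphere}) applies) only for values of $n$ actually attained; but this is automatic since we count only $n$ in the image of the map $\tilde k\mapsto n(\tilde k)$.
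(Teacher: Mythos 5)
Your proof is correct, and it takes a genuinely different route from the paper's. The paper's argument splits the sum into four cases depending on how close $|k + \tfrac{q-p}{2}|$ is to $|\tfrac{q-p}{2}|$: in the two ``far'' cases it compares the sum to an explicit integral (Riemann-sum argument giving $N^{1/3}\ln N$), and only in the two ``near'' cases does it invoke the Linnik bound (\ref{approximate lattice 2d sphere}), with a further dichotomy $R\gtrsim N^{1/9}$ versus $R\lesssim N^{1/9}$ to cope with small radii. Your argument is more uniform: after noting the support forces $|k|\lesssim N^{1/3}$, you foliate directly by the integer value $n=|k|^2+k\cdot(q-p)$, bound each level set by the Linnik count applied to the sphere $|2\tilde k+\tilde d|^2 = 4n+|\tilde d|^2$, and sum the harmonic tail $\sum_{1\le |n|\lesssim N^{2/3}} |n|^{-1}\lesssim\ln N$. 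This unifies the near and far regimes into a single computation. Your doubling trick $k\mapsto 2\tilde k+\tilde d$ also addresses explicitly the half-integer-shift issue (the center $\tfrac{q-p}{2}$ need not lie in the lattice), which the paper's Case III glosses over. One small point worth making explicit in your write-up: the Linnik bound as cited is an asymptotic statement for large $R$; for $R=4n+|\tilde d|^2$ of size $O(1)$ one should note that the level set is trivially bounded by $O(1)\lesssim N^{1/3}$ (the paper does make this distinction around (\ref{N^1/3+epsilon case 3 1})), though if one reads (\ref{approximate lattice 2d sphere}) as a uniform $O(R^{1/2+\varepsilon})$ bound this is automatic.
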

\begin{proof}
  \par Notice since we demand $p\in B_F^{\sigma}$ and $q\in B_F^{\nu}$, then $\big(1-\chi_{p-k\notin B_F^{\sigma}}\chi_{q+k\notin B_F^\nu}\big)\neq 0$
  implies $\vert k\vert\lesssim N^{\frac{1}{3}}$. We then analyse (\ref{N^1/3+epsilon}) by cases. We denote $(q-p)/2\coloneqq r_0$ for the time being, apparently $\vert r_0\vert\lesssim N^{\frac{1}{3}}$. We also denote any universal constant by $C$.
\begin{flushleft}
  \textbf{Case $\mathrm{I}$: $\vert k+r_0\vert>\vert r_0\vert+1+(12\pi^2)^{1/2}$}
\end{flushleft}
In this case, we can use Riemann sum to bound
\begin{equation}\label{N^1/3+epsilon case 1}
\begin{aligned}
  \sum_{k}\frac{\big(1-\chi_{p-k\notin B_F^{\sigma}}
     \chi_{q+k\notin B_F^\nu}\big)\chi_{k\notin \mathfrak{S}_{q,p}^{\nu,\sigma}}}
     {\vert k+r_0\vert^2-\vert r_0\vert^2}
     \lesssim\int_{\vert r_0\vert+1}^{CN^{\frac{1}{3}}}
     \frac{(r^2+r+1)dr}{r^2-\vert r_0\vert^2}
     \lesssim N^{\frac{1}{3}}\ln N.
\end{aligned}
\end{equation}
\begin{flushleft}
  \textbf{Case $\mathrm{II}$: $\vert k+r_0\vert<\vert r_0\vert-1-(12\pi^2)^{1/2}$}
\end{flushleft}
If such $k$ exists, we can calculate as in Case $\mathrm{I}$,
\begin{equation}\label{N^1/3+epsilon case 2}
\begin{aligned}
  \sum_{k}\frac{\big(1-\chi_{p-k\notin B_F^{\sigma}}
     \chi_{q+k\notin B_F^\nu}\big)\chi_{k\notin \mathfrak{S}_{q,p}^{\nu,\sigma}}}
     {\vert r_0\vert^2-\vert k+r_0\vert^2}
     &\lesssim\frac{1}{\vert r_0\vert^2+1}+\int_{0}^{\vert r_0\vert-1}
     \frac{(r^2+r+1)dr}{\vert r_0\vert^2-r^2}\\
     &\lesssim N^{\frac{1}{3}}\ln N.
\end{aligned}
\end{equation}
\begin{flushleft}
  \textbf{Case $\mathrm{III}$: $\vert r_0\vert<\vert k+r_0\vert\leq\vert r_0\vert+1+(12\pi^2)^{1/2}$}
\end{flushleft}
We first claim that for $R\lesssim N^{\frac{1}{3}}$ such that $R^2\in \pi^2\mathbb{Z}$,
\begin{equation}\label{N^1/3+epsilon case 3 1}
  \#\{k\in2\pi\mathbb{Z}^3\,\big\vert\,\vert k+r_0\vert=R\}\lesssim N^{\frac{1}{3}+\varepsilon}
\end{equation}
for any $\varepsilon>0$ small enough as long as $N$ is correspondingly large enough. In fact, when $R\gtrsim N^{\frac{1}{9}}$, we can obtain (\ref{N^1/3+epsilon case 3 1}) directly by (\ref{approximate lattice 2d sphere}). On the other hand, when $R\lesssim N^{\frac{1}{9}}$, we have the following naive bound
\begin{equation*}
  \begin{aligned}
  \#\{k\in2\pi\mathbb{Z}^3\,\big\vert\,\vert k+r_0\vert=R\}\lesssim
  \#\{k\in2\pi\mathbb{Z}^3\,\big\vert\,\vert k+r_0\vert\leq CN^{\frac{1}{9}}\}
  \lesssim N^{\frac{1}{3}}.
  \end{aligned}
\end{equation*}
Therefore, in this case we have
\begin{equation}\label{N^1/3+epsilon case 3 2}
  \begin{aligned}
  &\sum_{k}\frac{\big(1-\chi_{p-k\notin B_F^{\sigma}}
     \chi_{q+k\notin B_F^\nu}\big)\chi_{k\notin \mathfrak{S}_{q,p}^{\nu,\sigma}}}
     {\vert k+r_0\vert^2-\vert r_0\vert^2}
\lesssim N^{\frac{1}{3}+\varepsilon}
  \sum_{\substack{R^2\in\pi^2\mathbb{Z}\\ \vert r_0\vert<R\leq\vert r_0\vert+C}}
  \frac{1}{R^2-\vert r_0\vert^2}\\
  &\lesssim N^{\frac{1}{3}+\varepsilon}\Big(1+
  \int_{\vert r_0\vert^2+1}^{(\vert r_0\vert+C)^2}\frac{dr}{r-\vert r_0\vert^2}\Big)
  \lesssim N^{\frac{1}{3}+\varepsilon^\prime}.
\end{aligned}
\end{equation}
for any $\varepsilon^\prime>0$ small enough as long as $N$ is correspondingly large enough.
\begin{flushleft}
  \textbf{Case $\mathrm{IV}$: $\vert r_0\vert-1-(12\pi^2)^{1/2}\leq\vert k+r_0\vert<\vert r_0\vert$}
\end{flushleft}
The proof is similar to Case $\mathrm{III}$, and therefore we omit further details.
\end{proof}

\begin{lemma}\label{energy est bog lem}
  For $p\in B_F^{\sigma}$ and $q\in B_F^{\nu}$ fixed, we have
  \begin{equation}\label{energy est bog}
    \begin{aligned}
    \sum_{k}\Big\vert\frac{W_k^2}{\vert p-k\vert^2-\vert k_F^\sigma\vert^2}\chi_{p-k\notin B_F^\sigma}\Big\vert&\lesssim a^2\ell^{-1}\\
\sum_{k}\Big\vert\frac{\big(\eta_kk(q-p)\big)^2}{\vert p-k\vert^2-\vert k_F^\sigma\vert^2}\chi_{p-k\notin B_F^\sigma}\Big\vert&\lesssim a^2N^{\frac{2}{3}}\ell
    \end{aligned}
  \end{equation}
  and therefore
  \begin{equation}\label{energy est bog true}
    \sum_{k}\big\vert\big(W_k+\eta_kk(q-p)\big)\xi_{k,q,p}^{\nu,\sigma}\big\vert\lesssim
    a^2\ell^{-1}.
  \end{equation}
  Furthermore, for some constant $u>0$ and $\delta=\frac{1}{3}+u$, recall that $P_{F,\delta}^\sigma$ has been defined in (\ref{partition 3D space}), we have
    \begin{equation}\label{energy est bog u}
    \begin{aligned}
    \sum_{k}\Big\vert\frac{W_k^2}{\vert p-k\vert^2-\vert k_F^\sigma\vert^2}\chi_{p-k\in P_{F,\delta}^\sigma}\Big\vert&\lesssim a^2\ell^{-2}N^{-\delta}\\
\sum_{k}\Big\vert\frac{\big(\eta_kk(q-p)\big)^2}{\vert p-k\vert^2-\vert k_F^\sigma\vert^2}\chi_{p-k\in P_{F,\delta}^\sigma}\Big\vert&\lesssim a^2N^{\frac{2}{3}}N^{-\delta}
    \end{aligned}
  \end{equation}
  Moreover,
   \begin{equation}\label{energy est bog laplace eta}
    \begin{aligned}
    \sum_{k}\Big\vert\frac{\eta_k^2\vert k\vert^4}{\vert p-k\vert^2-\vert k_F^\sigma\vert^2}\chi_{p-k\notin B_F^\sigma}\Big\vert\lesssim a
    \end{aligned}
  \end{equation}
\end{lemma}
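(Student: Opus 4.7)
My plan is to adapt the case-splitting strategy from the proof of Lemma \ref{L1 Linfty est lemma} to the quadratic weights $W_k^2$, $(\eta_k k(q-p))^2$ and $\eta_k^2\vert k\vert^4$ paired with the denominator $\vert p-k\vert^2-\vert k_F^\sigma\vert^2$. The pivotal observation is that on the set $\{p-k\notin B_F^\sigma\}$ this denominator is bounded below by $4\pi^2$ via (\ref{lower bound k_F}), while in the far regime $\vert p-k\vert\gg k_F^\sigma$ it grows like $\vert k\vert^2$, which can absorb the weights via the $L^2$ estimates $\Vert W\Vert_2^2\lesssim a^2\ell^{-3}$, $\Vert\eta\Vert_2^2\lesssim a^2\ell$, $\Vert\nabla\eta\Vert_2^2\lesssim a$, and $\Vert D^2\eta\Vert_2^2\lesssim a^{-1}$ provided in Section \ref{scattering eqn sec}. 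For the first inequality of (\ref{energy est bog}), I would partition $\vert p-k\vert$ into three regimes with cutoff $R_0=\ell^{-1}$: (i) $\vert p-k\vert\geq R_0$, controlled by Cauchy--Schwarz against $\Vert W\Vert_2^2\lesssim a^2\ell^{-3}$; (ii) $k_F^\sigma+C<\vert p-k\vert<R_0$, controlled by the pointwise bound $\vert W_k\vert\lesssim a$ combined with the radial Riemann-sum estimate $\int r^2\,dr/(r^2-\vert k_F^\sigma\vert^2)\lesssim k_F^\sigma\log N$; and (iii) $k_F^\sigma<\vert p-k\vert\leq k_F^\sigma+C$, controlled by the sphere lattice count (\ref{approximate lattice 2d sphere}) together with the trivial denominator bound $\geq 4\pi^2$. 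For the second inequality of (\ref{energy est bog}), the same three-way partition applies, but now the far regime uses $\vert p-k\vert^2-\vert k_F^\sigma\vert^2\gtrsim\vert k\vert^2$ to reduce the sum to $\Vert\eta\Vert_2^2\lesssim a^2\ell$, while the moderate and surface regimes exploit the pointwise bound $\vert k\vert\vert\eta_k\vert\lesssim a\ell$ (which follows from $\vert\eta_k\vert\lesssim\min(a\ell^2,a\vert k\vert^{-2})$); the extra factor $\vert q-p\vert^2\lesssim N^{2/3}$ then produces the stated $a^2 N^{2/3}\ell$.

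The estimate (\ref{energy est bog true}) follows immediately by combining (\ref{energy est bog}) with the denominator comparison
\[
\tfrac{1}{2}\bigl(\vert q+k\vert^2+\vert p-k\vert^2-\vert q\vert^2-\vert p\vert^2\bigr)\geq\tfrac{1}{2}\bigl(\vert p-k\vert^2-\vert k_F^\sigma\vert^2\bigr)\geq 0,
\]
which is valid on the cutoff set defining $\xi_{k,q,p}^{\nu,\sigma}$ because $\vert p\vert\leq k_F^\sigma$ and $\vert q+k\vert^2-\vert q\vert^2\geq 0$ whenever $q+k\notin B_F^\nu$ and $\vert q\vert\leq k_F^\nu$. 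Cauchy--Schwarz together with $(W_k+\eta_kk(q-p))^2\leq 2W_k^2+2(\eta_kk(q-p))^2$ then gives
\[
\sum_k\bigl\vert(W_k+\eta_kk(q-p))\xi_{k,q,p}^{\nu,\sigma}\bigr\vert\lesssim\sum_k\frac{W_k^2+(\eta_kk(q-p))^2}{\vert p-k\vert^2-\vert k_F^\sigma\vert^2}\chi_{p-k\notin B_F^\sigma}\lesssim a^2\ell^{-1}+a^2N^{2/3}\ell,
\]
and the second term is absorbed into the first since $\ell\lesssim N^{-1/3}$ by the choice $\ell=N^{-1/3-\alpha}$.

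For (\ref{energy est bog u}) the stronger constraint $\vert p-k\vert>k_F^\sigma+N^\delta$ with $\delta=\tfrac{1}{3}+u$ forces $\vert p-k\vert^2-\vert k_F^\sigma\vert^2\gtrsim k_F^\sigma N^\delta\gtrsim N^{1/3+\delta}$, which eliminates the surface regime (iii) altogether and supplies an additional factor of $N^{-\delta}$ in the bounds of regimes (i) and (ii). Finally, for (\ref{energy est bog laplace eta}) I split at $\vert k\vert=2k_F^\sigma$: in the outer piece $\vert p-k\vert^2-\vert k_F^\sigma\vert^2\gtrsim\vert k\vert^2$ so the sum collapses to $\sum_{\vert k\vert\geq 2k_F^\sigma}\vert k\vert^2\eta_k^2\leq\Vert\nabla\eta\Vert_2^2\lesssim a$; in the inner piece the pointwise bound $\vert k\vert^2\vert\eta_k\vert\lesssim a$ (coming from $\vert\eta_k\vert\lesssim a\vert k\vert^{-2}$) reduces the problem to exactly the Riemann-sum / lattice-count argument used in (ii)--(iii). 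The main technical obstacle throughout is the surface regime where the denominator can degenerate to $O(1)$; the sphere lattice estimate $\#\partial\mathcal{B}_R\lesssim R^{1+\varepsilon}$ of (\ref{approximate lattice 2d sphere}) saves the bound there, and the resulting $N^\varepsilon$ loss on the Fermi surface is what forces the parameter $\alpha>0$ in $\ell=N^{-1/3-\alpha}$ to be strictly positive, mirroring the mechanism already present in Lemma \ref{L1 Linfty est lemma}.
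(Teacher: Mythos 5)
Your treatment of (\ref{energy est bog}), (\ref{energy est bog true}) and (\ref{energy est bog laplace eta}) follows essentially the same route as the paper: run the three-regime case analysis of Lemma \ref{L1 Linfty est lemma} with the denominator $|p-k|^2-|k_F^\sigma|^2$, reduce the quadratic weights to the linear ones via $W_k^2\lesssim a|W_k|$ and $(\eta_kk(q-p))^2\lesssim aN^{2/3}|\eta_k|$, and use the denominator comparison $\frac12(|q+k|^2+|p-k|^2-|q|^2-|p|^2)\geq\frac12(|p-k|^2-|k_F^\sigma|^2)$ for (\ref{energy est bog true}). (One small caveat on (\ref{energy est bog laplace eta}): the cutoff $|k|=2k_F^\sigma$ is too aggressive — when $|p|=k_F^\sigma$ and $|k|=2k_F^\sigma$ the quantity $|p-k|^2-|k_F^\sigma|^2$ can be as small as $0$; any cutoff $Ck_F^\sigma$ with $C>2$ works.)

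The gap is in (\ref{energy est bog u}). Your claim that the constraint $p-k\in P_{F,\delta}^\sigma$ ``supplies an additional factor of $N^{-\delta}$ in the bounds of regimes (i) and (ii)'' does not hold up. Take regime (ii), the moderate range $k_F^\sigma+N^\delta<|p-k|<R_0=\ell^{-1}$ (nonempty only when $u<\alpha$): there $\frac{r^2}{r^2-|k_F^\sigma|^2}\sim1$ since $r\gg k_F^\sigma$, so $\int_{k_F^\sigma+N^\delta}^{R_0}\frac{r^2\,dr}{r^2-|k_F^\sigma|^2}\lesssim R_0=\ell^{-1}$ and the regime bound remains $a^2\ell^{-1}$, \emph{not} $a^2\ell^{-1}N^{-\delta}$. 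That this is still $\lesssim a^2\ell^{-2}N^{-\delta}$ is a coincidence of the restriction $u\leq\alpha$ (for which $\ell N^\delta\lesssim1$), not a consequence of a multiplicative factor; and for $u>\alpha$ regime (ii) vanishes and regime (i) needs a different estimate. The paper sidesteps the three-regime decomposition entirely: on $P_{F,\delta}^\sigma$ one automatically has $|k|\gtrsim N^\delta\gg k_F^\sigma$ and hence $|p-k|^2-|k_F^\sigma|^2\gtrsim|k|^2$, and the pointwise decay $|W_k|^2\lesssim a^2\ell^{-2}|k|^{-2}$ (from interpolating $|W_k|\lesssim a$ with $|W_k|\lesssim a\ell^{-2}|k|^{-2}$, equation (\ref{moron3})), together with $|\eta_kk(q-p)|^2\lesssim a^2N^{2/3}|k|^{-2}$, reduces each summand to $|k|^{-4}$ and the whole estimate to the single tail sum $\sum_{|k|\gtrsim N^\delta}|k|^{-4}\lesssim N^{-\delta}$, valid uniformly in $u>0$. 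This is the missing idea: trade the case-split for the $|k|^{-2}$ decay of the weight, so the extra power of the denominator does the work directly.
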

\begin{proof}
  \par Notice that since
  \begin{equation*}
    \begin{aligned}
    \sum_{k}\big\vert\big(W_k+\eta_kk(q-p)\big)\xi_{k,q,p}^{\nu,\sigma}\big\vert
    &=\sum_k\frac{\big(W_k+\eta_kk(q-p)\big)^2}{\vert k\vert^2+k(q-p)}
     \chi_{p-k\notin B^{\sigma}_F}\chi_{q+k\notin B^{\nu}_F}\\
     &\lesssim\sum_{k}\frac{W_k^2+\big(\eta_kk(q-p)\big)^2}{\vert p-k\vert^2-\vert k_F^\sigma\vert^2}\chi_{p-k\notin B_F^\sigma}.
    \end{aligned}
  \end{equation*}
  Then (\ref{energy est bog true}) follows directly from (\ref{energy est bog}).
  \par The proof of (\ref{energy est bog}) is rather similar to (\ref{L1 Linfty est}), we only need to notice that by (\ref{est of w_l,p}) and (\ref{eta_p}),
  \begin{equation*}
    \big\vert\eta_kk(q-p)\big\vert^2\lesssim aN^{\frac{2}{3}}\vert\eta_k\vert.
  \end{equation*}
  The proof of (\ref{energy est bog laplace eta}) is also of a similar approach as well (see \textbf{Cases $\mathrm{I}$-$\mathrm{III}$} in the proof of Lemma \ref{L1 Linfty est lemma}), but this time we have
  \begin{equation*}
    \eta_k^2\vert k\vert^4\lesssim a^2.
  \end{equation*}
  Also notice that in \textbf{Case $\mathrm{I}$}, i.e. when $\vert p-k\vert>R_0\gg N^\frac{1}{3}\ln N$, we have $\vert k\vert\gtrsim R_0$ and therefore
  \begin{equation*}
    \vert p-k\vert^2-\vert k_F^\sigma\vert^2
   \geq\big(\vert k\vert-\vert p\vert\big)^2-\vert k_F^\sigma\vert^2\gtrsim\vert k\vert^2,
  \end{equation*}
  which together with (\ref{est of grad eta}) leads to
  \begin{equation*}
    \sum_{k,\vert p-k\vert>R_0}\frac{\eta_k^2\vert k\vert^4}{\vert p-k\vert^2
    -\vert k_F^\sigma\vert^2}\lesssim\Vert\nabla\eta\Vert_2^2\lesssim a.
  \end{equation*}
  \par For the proof of (\ref{energy est bog u}), we first notice that $p-k\in P_{F,\delta}^\sigma$ implies $\vert k\vert\gtrsim N^{\delta}$ and moreover
  \begin{equation*}
    \vert p-k\vert^2-\vert k_F^\sigma\vert^2\gtrsim \vert k\vert^2.
  \end{equation*}
  On the other hand, from (\ref{moron2}) and (\ref{moron3}), we know that
  \begin{equation}\label{W_p p^-2}
    \vert W_k\vert^2\lesssim \frac{a^2\ell^{-2}}{\vert k\vert^2},\quad
    \vert \eta_kk(q-p)\vert^2\lesssim \frac{a^2N^{\frac{2}{3}}}{\vert k\vert^2}.
  \end{equation}
  Therefore, for some universal constant $C>0$,
  \begin{equation}\label{proof energy est bog u}
    \begin{aligned}
    \sum_{k}\Big\vert\frac{W_k^2}{\vert p-k\vert^2-\vert k_F^\sigma\vert^2}\chi_{p-k\in P_{F,\delta}^\sigma}\Big\vert&\lesssim\int_{CN^\delta}^{\infty}
    \frac{a^2\ell^{-2}(r^2+r+1)dr}{r^4}
    \lesssim a^2\ell^{-2}N^{-\delta}\\
\sum_{k}\Big\vert\frac{\big(\eta_kk(q-p)\big)^2}{\vert p-k\vert^2-\vert k_F^\sigma\vert^2}\chi_{p-k\in P_{F,\delta}^\sigma}\Big\vert&
\lesssim\int_{CN^\delta}^{\infty}
    \frac{a^2N^{\frac{2}{3}}(r^2+r+1)dr}{r^4}
\lesssim a^2N^{\frac{2}{3}}N^{-\delta}
    \end{aligned}
  \end{equation}
\end{proof}

\begin{lemma}\label{xi l2 lemma}
  For $p\in B_F^\sigma$ and $q\in B_F^\nu$, we have
  \begin{equation}\label{xi l2}
    \sum_{p\in B_F^\sigma}\sum_{k}\vert\xi_{k,q,p}^{\nu,\sigma}\vert^2\lesssim a^2N^{\frac{2}{3}+\varepsilon}
  \end{equation}
  with $\varepsilon>0$ small enough as long as $N$ is correspondingly large enough. On the other hand, recall the discussion we have carried out around (\ref{partition 3D space}), for some $0<\delta<\frac{1}{3}$, we have
  \begin{equation}\label{xi l2 cut off}
    \begin{aligned}
    \sum_{p\in \underline{B}_{F,\delta}^\sigma}\sum_{k}
    \frac{\vert\xi_{k,q,p}^{\nu,\sigma}\vert^2}
    {\vert k_F^\sigma\vert^2-\vert p\vert^2}&\lesssim a^2N^{\frac{1}{3}-\delta}\\
    \sum_{p\in B_F^\sigma}\sum_{k}
    \frac{\vert\xi_{k,q,p}^{\nu,\sigma}\vert^2\chi_{p-k\in P_{F,\delta}^\sigma}}
    {\vert p-k\vert^2-\vert k_F^\sigma\vert^2}
    &\lesssim a^2N^{\frac{1}{3}-\delta}\ln N
    \end{aligned}
  \end{equation}
\end{lemma}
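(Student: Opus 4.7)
All three inequalities hinge on one algebraic input. Since $p\in B_F^{\sigma}$, $q\in B_F^{\nu}$, $p-k\notin B_F^{\sigma}$ and $q+k\notin B_F^{\nu}$, identity (\ref{useful formula}) gives
\[
\vert k\vert^{2}+k\cdot(q-p)=\tfrac{1}{2}\bigl[(\vert p-k\vert^{2}-\vert p\vert^{2})+(\vert q+k\vert^{2}-\vert q\vert^{2})\bigr]\geq\tfrac{1}{2}\max\bigl\{\vert p-k\vert^{2}-\vert k_F^{\sigma}\vert^{2},\;\vert k_F^{\sigma}\vert^{2}-\vert p\vert^{2}\bigr\},
\]
so by arithmetic--geometric mean
\[
\bigl(\vert k\vert^{2}+k\cdot(q-p)\bigr)^{2}\geq\tfrac{1}{4}\bigl(\vert p-k\vert^{2}-\vert k_F^{\sigma}\vert^{2}\bigr)\bigl(\vert k_F^{\sigma}\vert^{2}-\vert p\vert^{2}\bigr).
\]
Since $\vert\xi_{k,q,p}^{\nu,\sigma}\vert^{2}=(W_k+\eta_k k(q-p))^{2}/(\vert k\vert^{2}+k(q-p))^{2}$, this trades the squared collisional denominator for the single factor $\vert p-k\vert^{2}-\vert k_F^{\sigma}\vert^{2}$ that is already controlled by the $L^{1}$-type $k$-sums of Lemma \ref{energy est bog lem}.

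For the first inequality of (\ref{xi l2 cut off}) I would divide by $\vert k_F^{\sigma}\vert^{2}-\vert p\vert^{2}$ and feed in the display above to obtain
\[
\sum_{k}\frac{\vert\xi_{k,q,p}^{\nu,\sigma}\vert^{2}}{\vert k_F^{\sigma}\vert^{2}-\vert p\vert^{2}}\leq\frac{4}{(\vert k_F^{\sigma}\vert^{2}-\vert p\vert^{2})^{2}}\sum_{k}\frac{(W_k+\eta_k k(q-p))^{2}\chi_{p-k\notin B_F^{\sigma}}\chi_{q+k\notin B_F^{\nu}}}{\vert p-k\vert^{2}-\vert k_F^{\sigma}\vert^{2}}\lesssim\frac{a^{2}\ell^{-1}}{(\vert k_F^{\sigma}\vert^{2}-\vert p\vert^{2})^{2}}
\]
by (\ref{energy est bog}). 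A Riemann-sum/shell estimate, using the sphere count (\ref{approximate lattice 2d sphere}) to pass from the lattice to the continuum, yields $\sum_{p\in\underline{B}_{F,\delta}^{\sigma}}(\vert k_F^{\sigma}\vert^{2}-\vert p\vert^{2})^{-2}\lesssim N^{-\delta+\varepsilon}$, so multiplying by $a^{2}\ell^{-1}=a^{2}N^{1/3+\alpha}$ gives the stated bound once $\alpha,\varepsilon$ are taken small. For the third inequality I would instead use $(\vert k\vert^{2}+k(q-p))^{2}\geq\tfrac{1}{4}(\vert p-k\vert^{2}-\vert k_F^{\sigma}\vert^{2})^{2}$ together with the restriction $p-k\in P_{F,\delta}^{\sigma}$ (which forces $\vert p-k\vert^{2}-\vert k_F^{\sigma}\vert^{2}\gtrsim N^{1/3+\delta}$) to replace the cubic denominator $(\vert p-k\vert^{2}-\vert k_F^{\sigma}\vert^{2})^{3}$ by $N^{2/3+2\delta}(\vert p-k\vert^{2}-\vert k_F^{\sigma}\vert^{2})$; Lemma \ref{energy est bog lem} again bounds the $k$-sum and a further Riemann-sum handles the $p$-sum, the $\ln N$ emerging from the logarithmic divergence of the corresponding continuum integral near the Fermi surface.

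For (\ref{xi l2}) the very same algebraic input produces the uniform estimate, valid for every $p\in B_F^{\sigma}$ with $\vert p\vert<k_F^{\sigma}$,
\[
\sum_{k}\vert\xi_{k,q,p}^{\nu,\sigma}\vert^{2}\leq\frac{8}{\vert k_F^{\sigma}\vert^{2}-\vert p\vert^{2}}\sum_{k}\frac{(W_k+\eta_k k(q-p))^{2}\chi_{p-k\notin B_F^{\sigma}}\chi_{q+k\notin B_F^{\nu}}}{\vert p-k\vert^{2}-\vert k_F^{\sigma}\vert^{2}}\lesssim\frac{a^{2}\ell^{-1}}{\vert k_F^{\sigma}\vert^{2}-\vert p\vert^{2}}.
\]
The one-power shell estimate $\sum_{p\in B_F^{\sigma},\,\vert p\vert<k_F^{\sigma}}(\vert k_F^{\sigma}\vert^{2}-\vert p\vert^{2})^{-1}\lesssim N^{1/3+\varepsilon}$ (via (\ref{approximate lattice 2d sphere})) then gives $\sum_{p,\,\vert p\vert<k_F^{\sigma}}\sum_{k}\vert\xi\vert^{2}\lesssim a^{2}\ell^{-1}N^{1/3+\varepsilon}=a^{2}N^{2/3+\alpha+\varepsilon}$; for the remaining lattice points on the Fermi sphere $p\in\partial B_F^{\sigma}$ I would use the crude uniform bound $\sum_{k}\vert\xi\vert^{2}\lesssim a^{2}\ell^{-1}$ (a direct consequence of $\vert k\vert^{2}+k(q-p)\geq 2\pi^{2}$ by (\ref{lower bound k_F}) and Lemma \ref{energy est bog lem}) together with $\#\partial B_F^{\sigma}\lesssim N^{1/3+\varepsilon}$, again contributing $\lesssim a^{2}N^{2/3+\alpha+\varepsilon}$. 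Taking $\alpha$ sufficiently small then closes the proof.

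The hardest step will be the near-Fermi-surface region, where $\vert k_F^{\sigma}\vert^{2}-\vert p\vert^{2}$ degenerates and neither the AM--GM denominator bound nor the crude uniform bound $a^{2}\ell^{-1}$ alone is strong enough. The resolution I have in mind combines the dyadic gain from $(\vert k_F^{\sigma}\vert^{2}-\vert p\vert^{2})^{-1}$, which upgrades the $p$-sum from the crude $\#B_F^{\sigma}\sim N$ to the tight $N^{1/3+\varepsilon}$, with the number-theoretic sparsity $\#\partial B_F^{\sigma}\lesssim N^{1/3+\varepsilon}$ that isolates the true surface lattice points.
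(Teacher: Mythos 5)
Your AM--GM device $(\vert k\vert^{2}+k(q-p))^{2}\geq\tfrac14(\vert p-k\vert^{2}-\vert k_F^{\sigma}\vert^{2})(\vert k_F^{\sigma}\vert^{2}-\vert p\vert^{2})$ is correct, but it is lossy in a way that matters, and the paper avoids it entirely. The paper only uses the pointwise bound $\vert W_k+\eta_kk(q-p)\vert\lesssim a$ from (\ref{useful bound}), keeps the full square $(\vert p-k\vert^{2}-\vert p\vert^{2})^{-2}$ (see (\ref{4.8 1st})), and evaluates the double lattice sum (\ref{xi l2preposition}) by a three-case analysis in which the $k$-sum $\sum_{k}(\vert p-k\vert^{2}-\vert p\vert^{2})^{-2}\chi_{p-k\notin B_F^{\sigma}}\lesssim (k_F^{\sigma}-C-\vert p\vert)^{-1}$ converges by itself because the kernel decays like $\vert k\vert^{-4}$. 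Your split replaces this kernel by $(\vert p-k\vert^{2}-\vert k_F^{\sigma}\vert^{2})^{-1}(\vert k_F^{\sigma}\vert^{2}-\vert p\vert^{2})^{-1}$, whose $k$-sum diverges linearly and is only rescued by the decay of $W_k$ beyond $\vert k\vert\sim\ell^{-1}$ — that is exactly where the $a^{2}\ell^{-1}$ of (\ref{energy est bog}) comes from. Consequently you obtain $a^{2}\ell^{-1}N^{1/3+\varepsilon}=a^{2}N^{2/3+\alpha+\varepsilon}$ for (\ref{xi l2}) and $a^{2}N^{1/3+\alpha-\delta+\varepsilon}$ for the first line of (\ref{xi l2 cut off}). ``Taking $\alpha$ sufficiently small'' is not available to you: $\alpha$ is the fixed exponent in the scattering cutoff $\ell=N^{-1/3-\alpha}$, pinned down by unrelated constraints later (ultimately $\alpha=1/180$), while the lemma's $\varepsilon$ is used downstream as a parameter that can be made arbitrarily small independently of $\alpha$. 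So you have proved a strictly weaker statement.

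The third inequality is where the argument actually breaks rather than merely degrades. After you bound $(\vert k\vert^{2}+k(q-p))^{2}\geq\tfrac14(\vert p-k\vert^{2}-\vert k_F^{\sigma}\vert^{2})^{2}$ and replace the cubic denominator by $N^{2/3+2\delta}(\vert p-k\vert^{2}-\vert k_F^{\sigma}\vert^{2})$, the $k$-sum via (\ref{energy est bog u}) gives $N^{-2/3-2\delta}\cdot a^{2}\ell^{-2}N^{-\delta}=a^{2}N^{2\alpha-3\delta}$ \emph{uniformly in $p$}; there is no residual $p$-dependence, so no ``logarithmic divergence near the Fermi surface'' can emerge, and the $p$-sum simply multiplies by $\#B_F^{\sigma}\sim N$, yielding $a^{2}N^{1+2\alpha-3\delta}$. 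For $\delta<\tfrac13$ this exceeds the target $a^{2}N^{1/3-\delta}\ln N$ by at least $N^{2/3+2\alpha-2\delta}$. The paper's proof (\ref{case5 l2}) spends only one power of $N^{-1/3-\delta}$ on the restriction $p-k\in P_{F,\delta}^{\sigma}$ and keeps the $p$-dependent factor $(k_F^{\sigma}+CN^{\delta}-\vert p\vert)^{-1}$ from the exact $k$-sum, so that the subsequent $p$-sum costs only $N^{2/3}\ln N$ instead of $N$. You need to preserve that interplay; the uniform-in-$p$ bound you extract from Lemma \ref{energy est bog lem} cannot recover it.
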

\begin{proof}
  \par Since $p\in B_F^\sigma$ and $q\in B_F^\nu$, recall (\ref{k_F^sigma}), we know that $\vert p\vert, \vert q\vert\lesssim N^{\frac{1}{3}}$. Also recall that $\ell=N^{-\frac{1}{3}-\alpha}$ for some $\alpha>0$. These observations together with (\ref{est of w_l,p}), (\ref{eta_p}), (\ref{est of eta_0}) and (\ref{sum_pW_peta_p 3dscatt}) yield
  \begin{equation}\label{useful bound}
    \big\vert W_k+\eta_kk(q-p)\big\vert\lesssim a.
  \end{equation}
  Therefore, by definition (\ref{define xi_k,q,p,nu,sigma 0000}, we have
  \begin{equation}\label{4.8 1st}
   \vert\xi_{k,q,p}^{\nu,\sigma}\vert^2\lesssim \frac{a^2\chi_{p-k\notin B_F^\sigma}
   \chi_{p\in B_F^\sigma}}{\big(\vert p-k\vert^2-\vert p\vert^2\big)^2}.
  \end{equation}
  Hence in order to prove (\ref{xi l2}), we only need to show that
  \begin{equation}\label{xi l2preposition}
    \sum_{p\in B_F^\sigma}\sum_{k}\frac{\chi_{p-k\notin B_F^\sigma}}
    {\big(\vert p-k\vert^2-\vert p\vert^2\big)^2}\lesssim N^{\frac{2}{3}+\varepsilon}
  \end{equation}
  for any $\varepsilon>0$. We let $0< \delta<\frac{1}{3}$ and denote any universal constant by $C$ and specifically, $C_0=(12\pi^2)^{1/2}$ for short.
  \begin{flushleft}
    \textbf{Case $\mathrm{I}$: $\vert p\vert<k_F^\sigma-1-2(12\pi^2)^{1/2}$}
  \end{flushleft}
  In this case, we have
  \begin{equation}\label{case1 l2}
    \begin{aligned}
    \sum_{p}\sum_{k}\frac{\chi_{p-k\notin B_F^\sigma}}
    {\big(\vert p-k\vert^2-\vert p\vert^2\big)^2}&
    \lesssim\sum_{p}\int_{k_F^\sigma-C_0}^{\infty}\frac{(r^2+r+1)dr}
    {\big(r^2-\vert p\vert^2\big)^2}
    \lesssim \sum_{p}\frac{1}{k_F^\sigma-C_0-\vert p\vert}
    \\
    &\lesssim \int_{0}^{k_F^\sigma-C_0-1}\frac{(r^2+r+1)dr}{k_F^\sigma-C_0-r}
    \lesssim N^{\frac{2}{3}}\ln N.
    \end{aligned}
  \end{equation}

   \begin{flushleft}
    \textbf{Case $\mathrm{II}$: $\vert p\vert\geq k_F^\sigma-1-2(12\pi^2)^{1/2}$,
    $\vert p-k\vert> k_F^\sigma+1+2(12\pi^2)^{1/2}$}
  \end{flushleft}
  In this case, we have
  \begin{equation}\label{case2 l2}
    \begin{aligned}
    \sum_{p}\sum_{k}\frac{\chi_{p-k\notin B_F^\sigma}}
    {\big(\vert p-k\vert^2-\vert p\vert^2\big)^2}&
    \lesssim\sum_{p}\int_{k_F^\sigma+1+C_0}^{\infty}\frac{(r^2+r+1)dr}
    {\big(r^2-\vert p\vert^2\big)^2}
    \lesssim \sum_{p}\frac{1}{k_F^\sigma+1+C_0-\vert p\vert}
    \\
    &\lesssim \int_{k_F^\sigma-1-2C_0}^{k_F^\sigma+C_0}\frac{(r^2+r+1)dr}
    {k_F^\sigma+1+C_0-r}
    \lesssim N^{\frac{2}{3}}.
    \end{aligned}
  \end{equation}

  \begin{flushleft}
    \textbf{Case $\mathrm{III}$: $\vert p\vert\geq k_F^\sigma-1-2(12\pi^2)^{1/2}$,
    $\vert p-k\vert\leq k_F^\sigma+1+2(12\pi^2)^{1/2}$}
  \end{flushleft}
  In this case, we use (\ref{approximate lattice 2d sphere}) to bound
  \begin{equation}\label{case3 l2}
    \begin{aligned}
    &\sum_{p}\sum_{k}\frac{\chi_{p-k\notin B_F^\sigma}}
    {\big(\vert p-k\vert^2-\vert p\vert^2\big)^2}
    \lesssim N^{\frac{1}{3}+\varepsilon}\sum_{p}
    \sum_{\substack{R^2\in 4\pi^2\mathbb{Z}\\k_F^\sigma<R\leq k_F^\sigma+C}}
    \frac{1}{(R^2-\vert p\vert^2)^2}\\
   &\lesssim N^{\frac{1}{3}+\varepsilon}\sum_{p}
    \Big\{\chi_{\vert p\vert<k_F^\sigma}\int_{\vert k_F^\sigma\vert^2}^{
    \vert k_F^\sigma+C\vert^2}\frac{dt}{(t-\vert p\vert^2)^2}+
    \chi_{\vert p\vert=k_F^\sigma}\Big(1+\int_{\vert k_F^\sigma\vert^2+1}^{
    \vert k_F^\sigma+C\vert^2}\frac{dt}{(t-\vert p\vert^2)^2}\Big)\Big\}\\
    &\lesssim N^{\frac{1}{3}+\varepsilon}\sum_{p}\Big(\frac{\chi_{\vert p\vert<k_F^\sigma}}{\vert k_F^\sigma\vert^2-\vert p\vert^2}+\chi_{\vert p\vert=k_F^\sigma}\Big)\\
    &\lesssim N^{\frac{2}{3}+2\varepsilon}\sum_{\substack{R^2\in 4\pi^2\mathbb{Z}\\k_F^\sigma-C\leq R\leq k_F^\sigma}}\Big(\frac{\chi_{R<k_F^\sigma}}{\vert k_F^\sigma\vert^2-R^2}+\chi_{R=k_F^\sigma}\Big)\\
    &
    \lesssim N^{\frac{2}{3}+2\varepsilon}\Big(1+\int_{\vert k_F^\sigma-C\vert^2}^{\vert k_F^\sigma\vert^2-1}\frac{dt}{\vert k_F^\sigma\vert^2-t}\Big)\lesssim N^{\frac{2}{3}+\varepsilon^\prime}
    \end{aligned}
  \end{equation}
  with $0<2\varepsilon<\varepsilon^\prime$ small but fixed.
  \vspace{1em}
  \par Combining (\ref{case1 l2}-\ref{case3 l2}), we reach (\ref{xi l2preposition}) and hence (\ref{xi l2}).
  \par For the proof of the first inequality in (\ref{xi l2 cut off}), we first notice that for $p\in \underline{B}_{F,\delta}^\sigma$,
  \begin{equation*}
    \vert k_F^\sigma\vert^2-\vert p\vert^2\gtrsim N^{\frac{1}{3}}\big(\vert k_F^\sigma\vert-\vert p\vert\big).
  \end{equation*}
  We then use an approach similar to Case $\mathrm{I}$:
  \begin{equation}\label{case4 l2}
    \begin{aligned}
     &\sum_{p\in \underline{B}_{F,\delta}^\sigma}\sum_{k}
    \frac{\vert\xi_{k,q,p}^{\nu,\sigma}\vert^2}
    {\vert k_F^\sigma\vert^2-\vert p\vert^2}\lesssim
    \sum_{p\in \underline{B}_{F,\delta}^\sigma}\sum_{k}
    \frac{a^2N^{-\frac{1}{3}}\chi_{p-k\notin B_F^\sigma}}{\big(\vert p-k\vert^2-\vert p\vert^2\big)^2
    \big(\vert k_F^\sigma\vert-\vert p\vert\big)}\\
    &\lesssim \sum_{p\in \underline{B}_{F,\delta}^\sigma}
    \int_{k_F^\sigma-C_0}^{\infty}\frac{a^2N^{-\frac{1}{3}}(r^2+r+1)dr}
    {\big(r^2-\vert p\vert^2\big)^2 \big(\vert k_F^\sigma\vert-\vert p\vert\big)}
    \lesssim \sum_{p\in \underline{B}_{F,\delta}^\sigma}
    \frac{a^2N^{-\frac{1}{3}}}{\big(k_F^\sigma-C_0-\vert p\vert\big)^2}\\
    &\lesssim a^2N^{-\frac{1}{3}}
    \int_{0}^{k_F^\sigma-CN^\delta}\frac{(r^2+r+1)dr}{(k_F^\sigma-C_0-r)^2}
    \lesssim a^2N^{\frac{1}{3}-\delta}.
    \end{aligned}
  \end{equation}
  \par For the proof of the second inequality in (\ref{xi l2 cut off}), we notice that for $p-k\in P_{F,\delta}^\sigma$,
  \begin{equation*}
    \vert p-k\vert^2-\vert k_F^\sigma\vert^2\gtrsim N^{\frac{1}{3}+\delta}.
  \end{equation*}
  We then use an approach similar to Case $\mathrm{II}$:
   \begin{equation}\label{case5 l2}
    \begin{aligned}
    &\sum_{p}\sum_{k}\frac{\vert\xi_{k,q,p}^{\nu,\sigma}\vert^2
    \chi_{p-k\in P_{F,\delta}^\sigma}}
    {\vert p-k\vert^2 -\vert k_F^\sigma\vert^2}
    \lesssim
    \sum_{p}\sum_{k}\frac{a^2N^{-\frac{1}{3}-\delta}
    \chi_{p-k\in P_{F,\delta}^\sigma}}
    {\big(\vert p-k\vert^2-\vert p\vert^2\big)^2
    \big(\vert p-k\vert^2 -\vert k_F^\sigma\vert^2\big)}
    \\
    &\lesssim
    \sum_{p}\int_{k_F^\sigma+CN^\delta}^{\infty}\frac{a^2N^{-\frac{1}{3}-\delta}(r^2+r+1)dr}
    {\big(r^2-\vert p\vert^2\big)^2}\lesssim \sum_{p}\frac{a^2N^{-\frac{1}{3}-\delta}}
    {k_F^\sigma+CN^\delta-\vert p\vert}\\
    &\lesssim a^2N^{-\frac{1}{3}-\delta}
    \int_{0}^{k_F^\sigma+C_0}\frac{(r^2+r+1)dr}
    {k_F^\sigma+CN^\delta-r}
    \lesssim a^2N^{\frac{1}{3}-\delta}\ln N.
    \end{aligned}
  \end{equation}
\end{proof}

\section{Excitation Hamiltonians and Main Propositions}\label{mainpropo}
\par From here on out, as we have claimed, we always select
\begin{equation}\label{choose l 0}
  \ell=N^{-\frac{1}{3}-\alpha}
\end{equation}
with $0<\alpha<\frac{1}{6}$. As we move forward, further requirement on $\alpha$ is needed.
\subsection{Quadratic Renormalization}\
\par We define the skew-symmetric operator $B$ by
\begin{equation}\label{define B 0}
  B=\frac{1}{2}(A-A^*),
\end{equation}
where
\begin{equation}\label{define A 0}
  A=\sum_{k,p,q,\sigma,\nu}
  \eta_ka^*_{p-k,\sigma}a^*_{q+k,\nu}a_{q,\nu}a_{p,\sigma}\chi_{p-k\notin B^{\sigma}_F}\chi_{q+k\notin B^{\nu}_F}\chi_{p\in B^{\sigma}_F}\chi_{q\in B^{\nu}_F},
\end{equation}
with $\eta_k$ defined in (\ref{eta_p}). $A$ is considered as a quadratic type operator. We consider the following excitation Hamiltonian
\begin{equation}\label{define G_N 0}
  \mathcal{G}_N\coloneqq e^{-B}H_Ne^B.
\end{equation}
The effect of the quadratic renormalization is recorded below.
\begin{proposition}\label{p1}
Recall that $\ell=N^{-\frac{1}{3}-\alpha}$ with $0<\alpha<\frac{1}{6}$. We have
  \begin{equation}\label{define G_N p1}
  \begin{aligned}
  \mathcal{G}_N=C_{\mathcal{G}_N}+Q_{\mathcal{G}_N}
  +\mathcal{K}+\mathcal{V}_{4}+\mathcal{V}_{21}^\prime
  +\Omega+\mathcal{V}_3+\mathcal{E}_{\mathcal{G}_N}
  \end{aligned}
\end{equation}
with
\begin{equation}\label{define C_G_N and Q_G_N p1}
  \begin{aligned}
  C_{\mathcal{G}_N}&=\frac{1}{2}\Big(\hat{v}_0+\sum_{k}\hat{v}_k\eta_k\Big)
  \sum_{\sigma\neq\nu}N_\sigma N_\nu\\
  &+\sum_{k,p,q,\sigma,\nu}W_k\eta_k\chi_{p-k\notin B^{\sigma}_F}\chi_{q+k\notin B^{\nu}_F}\chi_{p\in B^{\sigma}_F}\chi_{q\in B^{\nu}_F}\\
  &-\sum_{k,p,q,\sigma}W_k\eta_{k+q-p}\chi_{p-k,q+k\notin B^{\sigma}_F}\chi_{p,q\in B^{\sigma}_F}\chi_{p\neq q}\\
  &-\sum_{k,p,q,\sigma,\nu}\eta_k^2k(q-p)(1-\chi_{p-k\notin B^{\sigma}_F}\chi_{q+k\notin B^{\nu}_F})\chi_{p\in B^{\sigma}_F}\chi_{q\in B^{\nu}_F}\\
  &-\sum_{k,p,q,\sigma}\eta_{k+q-p}\eta_kk(q-p)\chi_{p-k,q+k\notin B^{\sigma}_F}\chi_{p,q\in B^{\sigma}_F}\chi_{p\neq q}\\
  Q_{\mathcal{G}_N}&=-\sum_{k}\hat{v}_k\eta_k\sum_{\sigma\neq\nu}N_\sigma \mathcal{N}_{ex,\nu}
  \end{aligned}
\end{equation}
and
\begin{equation}\label{define V_21' and Omega p1}
  \begin{aligned}
  \mathcal{V}_{21}^\prime&=\sum_{k,p,q,\sigma,\nu}
  W_k(a^*_{p-k,\sigma}a^*_{q+k,\nu}a_{q,\nu}a_{p,\sigma}+h.c.)\chi_{p-k\notin B^{\sigma}_F}\chi_{q+k\notin B^{\nu}_F}\chi_{p\in B^{\sigma}_F}\chi_{q\in B^{\nu}_F}\\
  \Omega&=\sum_{k,p,q,\sigma,\nu}
  \eta_kk(q-p)(a^*_{p-k,\sigma}a^*_{q+k,\nu}a_{q,\nu}a_{p,\sigma}+h.c.)\\
  &\quad\quad\quad\quad\quad\quad\quad\quad\quad
  \times\chi_{p-k\notin B^{\sigma}_F}\chi_{q+k\notin B^{\nu}_F}\chi_{p\in B^{\sigma}_F}\chi_{q\in B^{\nu}_F}
  \end{aligned}
\end{equation}
Moreover, for $\mu>0$ and $0<\gamma<\alpha$, we have
\begin{equation}\label{bound E_G_N}
 \begin{aligned}
 \pm\mathcal{E}_{\mathcal{G}_N}\lesssim& P_{\mathcal{G}_N}(\mathcal{N}_{ex})
 +\big(N^{-\frac{1}{6}}+N^{-\alpha-\gamma}+N^{-\frac{1}{2}+\alpha+2\gamma}\big)\mathcal{K}_s\\
 &+\big(\mu+N^{-\gamma}\big)\mathcal{V}_4+N^{\frac{1}{3}-\frac{\alpha}{2}}
 \end{aligned}
\end{equation}
with
\begin{equation}\label{bound P_G_N}
\begin{aligned}
  P_{\mathcal{G}_N}(\mathcal{N}_{ex})=&N^{-\alpha}(\mathcal{N}_{ex}+1)
  +\big(N^{-\frac{2}{3}-\frac{\alpha}{2}}+N^{-\frac{7}{9}+\alpha+\frac{7}{3}\gamma}\big)
  (\mathcal{N}_{ex}+1)^2\\
  &+\mu^{-1}N^{-\frac{4}{3}-\alpha}(\mathcal{N}_{ex}+1)^3.
\end{aligned}
\end{equation}
\end{proposition}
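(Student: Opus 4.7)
The plan is to compute $\mathcal{G}_N = e^{-B}H_N e^B$ via the Duhamel/BCH expansion
\begin{equation*}
e^{-B}Xe^B = X + \int_0^1 e^{-sB}[X,B]e^{sB}\,ds,
\end{equation*}
iterated once or twice depending on the operator $X$. I would start by applying the splitting $H_N = \mathcal{K} + \mathcal{V}_0 + \mathcal{V}_1 + \mathcal{V}_{21}+\mathcal{V}_{22}+\mathcal{V}_{23}+\mathcal{V}_3+\mathcal{V}_4$ from (\ref{split V detailed}) and conjugating each piece separately. The key algebraic engine is Lemma \ref{lem commutator}(1): computing $[\mathcal{K},B]$ and $[\mathcal{V}_{21},B]$ produces quartic operators whose coefficients combine through the scattering equation (\ref{eqn of eta_p rewrt})
\begin{equation*}
|k|^2\eta_k + \tfrac{1}{2}\sum_q \hat{v}_{k-q}\eta_q + \tfrac{1}{2}\hat{v}_k = W_k.
\end{equation*}
This is what renormalizes the bare coupling $\tfrac12\hat v_k$ inside $\mathcal{V}_{21}$ to the effective coupling $W_k$ appearing in $\mathcal{V}_{21}'$, while the off-diagonal factor $k\cdot(q-p)$ coming from expanding $|p-k|^2+|q+k|^2-|p|^2-|q|^2 = 2|k|^2 + 2k\cdot(q-p)$ in $[\mathcal{K},B]$ survives as the operator $\Omega$.

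For the constant part $C_{\mathcal{G}_N}$ and the number-like part $Q_{\mathcal{G}_N}$, I would handle $e^{-B}\mathcal{V}_0 e^B$ using Lemma \ref{lemma V_0}: its leading piece is $\tfrac{\hat v_0}{2}\sum_{\sigma\ne\nu}(N_\sigma-\mathcal{N}_{ex,\sigma})(N_\nu-\mathcal{N}_{ex,\nu})$, which after conjugation produces both the zeroth-order block in $C_{\mathcal{G}_N}$ (matching with the $\sum_k \hat v_k\eta_k$ term once the $[\mathcal{V}_0,B]$ contractions are evaluated) and the linear-in-$\mathcal{N}_{ex,\nu}$ piece that forms $Q_{\mathcal{G}_N}$. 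The remaining constants inside $C_{\mathcal{G}_N}$ \textemdash the four sums involving $W_k\eta_k$, $W_k\eta_{k+q-p}$, $\eta_k^2 k(q-p)$, and $\eta_{k+q-p}\eta_k k(q-p)$ \textemdash arise from fully contracting $[A,A^*]$ and $[[\mathcal{K},B],B]$, $[[\mathcal{V}_{21},B],B]$ against the Fermi-ball projections; here the combinatorial bookkeeping for which subscripts sit inside vs. outside $B_F^\sigma$ is crucial but routine using (\ref{anticommutator}). The operators $\mathcal{V}_{22}, \mathcal{V}_{23}$ and $\mathcal{V}_1$ are absorbed into $Q_{\mathcal{G}_N}+\mathcal{E}_{\mathcal{G}_N}$ using Lemma \ref{lemma V_22,23,1}, noting that $\mathcal{V}_1$'s leading block is cancelled by $[\tfrac12(A-A^*),\mathcal V_0]$ contributions. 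The term $\mathcal V_4$ and $\mathcal V_3$ are left unrenormalized at this stage (they will be handled by the cubic and Bogoliubov transformations later).

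For the error bound (\ref{bound E_G_N}), every remainder is either (i) a sub-leading Duhamel tail of the form $\int_0^1(1-s)e^{-sB}[[X,B],B]e^{sB}\,ds$ for $X\in\{\mathcal{K},\mathcal{V}_{21}\}$, (ii) the full conjugation of a lower-order operator $X\in\{\mathcal V_1,\mathcal V_{22},\mathcal V_{23}\}$, or (iii) a $\mathcal{V}_4$-interaction commutator. Each of these is estimated by pulling out $\eta$ via Cauchy\textendash Schwarz in position space and invoking the $L^2$/$L^\infty$ bounds on $\eta$ from (\ref{est of eta and eta_perp})\textendash(\ref{est of eta_0}), together with the powers-of-$N$ conversions obtained from Lemma \ref{lemma ineqn K_s} that trade $\mathcal{N}_h[\delta]$ for $N^{-1/3-\delta}\mathcal{K}_s$. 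The free parameters $\mu$ and $\gamma$ come from Young-type splittings used when controlling $[\mathcal V_4,B]$ by $\mu \mathcal V_4 + \mu^{-1}(\cdots)$, and when trading $\mathcal N_h[\gamma]$-type terms against $\mathcal K_s$.

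The main obstacle will be the error estimate rather than the algebraic identities. The difficulty is that many of the natural estimates, when performed naively, cost $N^{2/3}$ from the size of the Fermi ball and cannot be absorbed into $N^{1/3-\alpha/2}$. One has to exploit the kinetic gain in (\ref{ineqn N_h and N_i}) \emph{on both sides} of the commutator\textemdash i.e., for the outgoing momenta $p-k, q+k$ one uses $\mathcal{K}_s$, while for the incoming $p,q\in B_F^\sigma$ one uses the counting bound (\ref{ineqn N_l and N_s}) on a thin shell of width $N^\gamma$ around the Fermi surface, and the bulk inside $\underline{B}_{F,\gamma}^\sigma$ must again be paid for with excited kinetic energy (after a further commutation). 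Balancing these three mechanisms is what forces $\ell = N^{-1/3-\alpha}$ with $0<\alpha<1/6$ and produces the tripartite error structure in (\ref{bound E_G_N})\textendash(\ref{bound P_G_N}).
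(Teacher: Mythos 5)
Your overall skeleton (Duhamel expansion of $e^{-B}H_Ne^{B}$, cancellation via the scattering equation, error control by trading shell number operators for $\mathcal{K}_s$ near the Fermi surface) matches the paper, and your diagnosis of where the real difficulty lies is accurate. However, the central algebraic mechanism as you state it is wrong, and executing the proposal as written would leave an error of order $N$.

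The renormalization of $\tfrac12\hat v_k$ to $W_k$ does \emph{not} come from combining $[\mathcal K,B]$ with $[\mathcal V_{21},B]$. The scattering equation (\ref{eqn of eta_p rewrt}) reads $|k|^2\eta_k+\tfrac12\sum_l\hat v_{k-l}\eta_l+\tfrac12\hat v_k=W_k$: the first summand is supplied by $[\mathcal K,B]$, the \emph{convolution} term by the main part $\Theta_m$ of $[\mathcal V_4,B]$ (see (\ref{[V_4,B]}) and (\ref{Theta})), and the last by $\mathcal V_{21}$ itself at zeroth Duhamel order. This is exactly the content of $\Gamma=[\mathcal K+\mathcal V_4,B]+\mathcal V_{21}-\mathcal V_{21}^\prime-\Omega$ in (\ref{define Gamma}), which then reduces to the small remainders $\Theta_d+\Theta_r$ in (\ref{Gamma}). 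Your proposal instead relegates $[\mathcal V_4,B]$ entirely to the error list (your item (iii), controlled by a $\mu\mathcal V_4+\mu^{-1}(\cdots)$ splitting); but $\Theta_m$ is of the same size as $\mathcal V_{21}$ itself ($\pm\Theta_m\lesssim\mathcal V_4+N$ by (\ref{Theta_m bound})), so treating it as error destroys the expansion at order $N$, far above the target precision $N^{1/3-\alpha/2}$. Meanwhile $[\mathcal V_{21},B]$ has the wrong operator structure for this purpose: its leading, doubly contracted part produces precisely the constant $\tfrac12\sum_k\hat v_k\eta_k\sum_{\sigma\neq\nu}N_\sigma N_\nu$ and the quadratic $Q_{\mathcal G_N}$ (Lemma \ref{cal com V_21 lemma}), while its remaining sextic pieces are errors; it never yields a pairing quartic with a convolved coefficient. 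Two smaller misattributions follow the same pattern: the $\sum_k\hat v_k\eta_k$ constant and $Q_{\mathcal G_N}$ do not come from $e^{-B}\mathcal V_0e^{B}$ (that conjugation yields only $\tfrac{\hat v_0}{2}\sum_{\sigma\neq\nu}N_\sigma N_\nu$ plus errors, Lemma \ref{cal eB on V_0,1,22,23 lemma}), and no cancellation of $\mathcal V_1$ against $[B,\mathcal V_0]$ is needed or available --- $\mathcal V_1$ is simply absorbed into the error by (\ref{est V_1}). You would need to reorganize the bookkeeping around the identity $[\mathcal K+\mathcal V_4,B]+\mathcal V_{21}=\mathcal V_{21}^\prime+\Omega+\Gamma$ before the error analysis you describe can begin.
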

We leave the proof of Proposition \ref{p1} to Section \ref{qua}.

\subsection{Cubic Renormalization}\
\par For the cubic renormalization, we let the skew-symmetric operator be
\begin{equation}\label{define B' 0}
  B^\prime=A^\prime-{A^\prime}^*,
\end{equation}
where
\begin{equation}\label{define A' 0}
  A^\prime=\sum_{k,p,q,\sigma,\nu}
  \eta_ka^*_{p-k,\sigma}a^*_{q+k,\nu}a_{q,\nu}a_{p,\sigma}\chi_{p-k\notin B^{\sigma}_F}\chi_{q+k\notin B^{\nu}_F}\chi_{q\in A^\nu_{F,\kappa}}
  \chi_{p\in B^{\sigma}_F}.
\end{equation}
Here, recall (\ref{partition 3D space}),
\begin{equation}\label{A^nu_F,kappa 0}
  A_{F,\kappa}^\nu=\{k\in2\pi\mathbb{Z}^3,\,k_F^\nu<\vert k\vert\leq k_F^\nu+N^\kappa\}
\end{equation}
with $-\frac{11}{48}<\kappa<0$ to be determined. Due to technical reason, we introduce this cut-off parameter $\kappa$ to decompose the interaction between low and high momenta. Different from the Boson case studied, for example, in \cite{2018Bogoliubov}, the cubic renormalization for the Fermion system neutralize the extra quadratic energy $ Q_{\mathcal{G}_N}$ defined in (\ref{define C_G_N and Q_G_N p1}), since Fermion system is anti-symmetry. We consider the following excitation Hamiltonian
\begin{equation}\label{define J_N 0}
  \begin{aligned}
  \mathcal{J}_N\coloneqq e^{-B^\prime}\mathcal{G}_Ne^{B^\prime}
  \end{aligned}
\end{equation}
The effect of the cubic renormalization is recorded below.
\begin{proposition}\label{p2}
Under the same assumptions in Proposition \ref{p1}, we set $\mu=N^{-\gamma}$ and
$\kappa=-2(\alpha+\gamma)$, with the further requirement that $0<\gamma<\alpha<\frac{11}{192}$.
\begin{equation}\label{cal J_N p2}
  \mathcal{J}_N=C_{\mathcal{G}_N}
  +\mathcal{K}+\mathcal{V}_{4}+\mathcal{V}_{21}^\prime
  +\Omega+\mathcal{E}_{\mathcal{J}_N}
\end{equation}
where
\begin{equation}\label{bound E_J_N p2}
  \begin{aligned}
  \pm\mathcal{E}_{\mathcal{J}_N}\lesssim& P_{\mathcal{J}_N}(\mathcal{N}_{ex})
  +N^{-\gamma}\big(\mathcal{V}_4+(\mathcal{N}_{ex}+1)\big)\\
  &+\big(N^{-\alpha-\gamma}+N^{-\frac{1}{3}+2\alpha+3\gamma}
  \big)\mathcal{K}_s+N^{\frac{1}{3}-\frac{\alpha}{2}}
  \end{aligned}
\end{equation}
with
\begin{equation}\label{define P_J_N p2}
  \begin{aligned}
  P_{\mathcal{J}_N}(\mathcal{N}_{ex})=&
  N^{-\frac{4}{3}-\alpha+\gamma}(\mathcal{N}_{ex}+1)^3
  +\big(N^{-\frac{2}{3}-\frac{\alpha}{2}}
  +N^{-\frac{7}{9}+\alpha+\frac{7}{3}\gamma}\big)(\mathcal{N}_{ex}+1)^2\\
  &+N^{-\frac{1}{3}-\frac{3}{2}\alpha-\gamma}(\mathcal{N}_{ex}+1)^{\frac{3}{2}}.
  \end{aligned}
\end{equation}
\end{proposition}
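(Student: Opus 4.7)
The strategy is the standard Duhamel expansion
\begin{equation*}
\mathcal{J}_N = \mathcal{G}_N + \int_0^1 e^{-sB'}[\mathcal{G}_N, B']e^{sB'}\,ds,
\end{equation*}
iterated to second order where needed, applied to the decomposition of $\mathcal{G}_N$ from Proposition \ref{p1}. The scalar $C_{\mathcal{G}_N}$ is unaffected. The remaining commutators $[X, B']$ for $X \in \{Q_{\mathcal{G}_N}, \mathcal{K}, \mathcal{V}_4, \mathcal{V}_{21}', \Omega, \mathcal{V}_3\}$ must conspire so that the cubic $\mathcal{V}_3$ and the quadratic $Q_{\mathcal{G}_N}$ are completely absorbed, while everything else either reproduces a term in (\ref{cal J_N p2}) or fits into the error template (\ref{bound E_J_N p2}).

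The first cancellation I would execute is that of $\mathcal{V}_3$ via $[\mathcal{K}, B']$. Since $\mathcal{K}$ is diagonal in momentum, (\ref{quartic commu}) and (\ref{useful formula}) yield that $[\mathcal{K}, a^*_{p-k,\sigma}a^*_{q+k,\nu}a_{q,\nu}a_{p,\sigma}]$ produces the factor $2|k|^2 + 2k\cdot(q-p)$ times the original quartic. Substituting the scattering relation (\ref{eqn of eta_p rewrt}) in the form $2|k|^2\eta_k = 2W_k - \hat{v}_k - \sum_l\hat{v}_{k-l}\eta_l$ exposes a $-\hat{v}_k$ piece that cancels $\mathcal{V}_3$ on the range $q\in A^\nu_{F,\kappa}$, together with $2W_k$ and $2k\cdot(q-p)\eta_k$ pieces replicating the structures of $\mathcal{V}_{21}'$ and $\Omega$ but localized in the thin shell of width $N^\kappa$ and hence small via the bound $\#A^\nu_{F,\kappa}\lesssim N^{2/3+\kappa+\varepsilon}$ in (\ref{ineqn N_l and N_s}). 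The leftover piece of $\mathcal{V}_3$ with $q\notin A^\nu_{F,\kappa}$ is separated from the Fermi surface by at least $N^\kappa$, so by (\ref{ineqn N_h and N_i}) it can be traded against $\mathcal{V}_4$ and $\mathcal{K}_s$; the calibration $\mu=N^{-\gamma}$ and $\kappa=-2(\alpha+\gamma)$ is exactly what forces the outcome into the $N^{-\gamma}\mathcal{V}_4$ slot of (\ref{bound E_J_N p2}).

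The cancellation of $Q_{\mathcal{G}_N}$ must emerge from the fully contracted quadratic part of $[\mathcal{V}_3, B']$, together with a matching contribution from the second-order Duhamel term $\tfrac{1}{2}[[\mathcal{K}+\mathcal{V}_3,B'],B']$. Applying (\ref{quartic commu}) to pair two creation operators against two annihilation operators of the same momentum, and then invoking (\ref{有用的屎}), is expected to produce precisely the expression $-Q_{\mathcal{G}_N}$, with the Fermionic anti-symmetry (through the sign pattern of (\ref{quartic commu})) responsible for the number-operator structure. The residual sextic $a^*a^*a^*aaa$ contributions from $[\mathcal{V}_3,B']$, $[\mathcal{V}_{21}',B']$, and $[\Omega,B']$ are then reduced to the templates $(\mathcal{N}_{ex}+1)^m$ in (\ref{define P_J_N p2}) using Cauchy--Schwarz together with the coefficient estimates of Lemmas \ref{L1 Linfty est lemma}, \ref{N^1/3+epsilon lemma}, \ref{energy est bog lem}, and the high/low momentum splitting developed around (\ref{partition 3D space})--(\ref{tools2}).

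The main obstacle is the bookkeeping of these sextic remainders: each must be pinned down at the exact rate dictated by (\ref{define P_J_N p2}), which demands repeatedly pairing factors of $\mathcal{N}_{ex}$ against $\mathcal{K}_s$ at the rate $N^{-1/3-\delta}$ from Lemma \ref{lemma ineqn K_s}, while keeping the prefactor of $\mathcal{K}_s$ in (\ref{bound E_J_N p2}) below one. The Duhamel flow itself is controlled by a Gr\"onwall-type bound $e^{-sB'}(\mathcal{N}_{ex}+1)^m e^{sB'}\lesssim (\mathcal{N}_{ex}+1)^m$, which follows from showing that $[\mathcal{N}_{ex}, B']$ is essentially a bounded multiple of $B'$. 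Verifying that the prescribed window $0<\gamma<\alpha<11/192$ accommodates every exponent appearing in the bound is then a matter of arithmetic.
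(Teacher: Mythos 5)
Your proposal follows essentially the same route as the paper: the iterated Duhamel expansion, the observation that $[\mathcal{K}+\mathcal{V}_4,B']+\mathcal{V}_3$ collapses via the scattering equation (\ref{eqn of eta_p rewrt}) into shell-localized $W_k$/$\eta_k k(q-p)$ pieces plus the off-shell remainder of $\mathcal{V}_3$ traded against $N^{-\gamma}\mathcal{V}_4$ and $\mathcal{K}_s$, and the cancellation of $Q_{\mathcal{G}_N}$ by the contracted quadratic part of the second-order term $\tfrac12[\mathcal{V}_3,B']$, all controlled by the Gr\"onwall bound on $e^{-sB'}(\mathcal{N}_{ex}+1)^m e^{sB'}$. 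The only point worth making explicit is that the $-\sum_l\hat v_{k-l}\eta_l$ piece exposed by your substitution is exactly what the main part of $[\mathcal{V}_4,B']$ must absorb (this is the role of $\Theta'_m$ in the paper's $\Gamma'$), which your sketch leaves implicit.
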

We leave the proof of Proposition \ref{p2} to Section \ref{cub}.

\subsection{Bogoliubov Transformation}\
\par The ending Bogoliubov transformation uses the skew-symmetric opreator
\begin{equation}\label{define B tilde 0}
  \tilde{B}=\frac{1}{2}(\tilde{A}-\tilde{A}^*),
\end{equation}
where
\begin{equation}\label{define A tilde 0}
  \tilde{A}=\sum_{k,p,q,\sigma,\nu}
  \xi_{k,q,p}^{\nu,\sigma}a^*_{p-k,\sigma}a^*_{q+k,\nu}a_{q,\nu}a_{p,\sigma}\chi_{p-k\notin B^{\sigma}_F}\chi_{q+k\notin B^{\nu}_F}\chi_{p\in B^{\sigma}_F}\chi_{q\in B^{\nu}_F},
\end{equation}
with the coefficients given by (\ref{define xi_k,q,p,nu,sigma 0000}) in Section \ref{coeffsub}. We recall here
\begin{equation}\label{define xi_k,q,p,nu,sigma 0}
  \xi_{k,q,p}^{\nu,\sigma}=\frac{-\big(W_k+\eta_kk(q-p)\big)}
  {\frac{1}{2}\big(\vert q+k\vert^2+\vert p-k\vert^2-\vert q\vert^2-\vert p\vert^2\big)}
  \chi_{p-k\notin B^{\sigma}_F}\chi_{q+k\notin B^{\nu}_F}\chi_{p\in B^{\sigma}_F}\chi_{q\in B^{\nu}_F}.
\end{equation}
We consider the following excitation Hamiltonian
\begin{equation}\label{define Z_N 0}
  \begin{aligned}
  \mathcal{Z}_N\coloneqq e^{-\tilde{B}}\mathcal{J}_Ne^{\tilde{B}}
  \end{aligned}
\end{equation}
The effect of the Bogoliubov transformation is recorded below.
\begin{proposition}\label{p3}
  Under the same assumptions in Proposition \ref{p2}, with the further assumption that $0<\gamma<\alpha<\frac{1}{168}$, we have
  \begin{equation}\label{define Z_N p3}
    \mathcal{Z}_N=C_{\mathcal{Z}_N}
  +\mathcal{K}+e^{-\tilde{B}}\mathcal{V}_{4}e^{\tilde{B}}
  +\mathcal{E}_{\mathcal{Z}_N}
  \end{equation}
  with
  \begin{equation}\label{define C_Z_N}
    \begin{aligned}
    C_{\mathcal{Z}_N}=&C_{\mathcal{G}_N}
    +\sum_{k,p,q,\sigma,\nu}\big(W_k+\eta_kk(q-p)\big)\xi_{k,q,p}^{\nu,\sigma}
  \chi_{p-k\notin B^{\sigma}_F}\chi_{q+k\notin B^{\nu}_F}
  \chi_{p\in B^{\sigma}_F}\chi_{q\in B^{\nu}_F}\\
  &-\sum_{k,p,q,\sigma}\big(W_k+\eta_kk(q-p)\big)\xi_{(k+q-p),p,q}^{\sigma,\sigma}
  \chi_{p-k,q+k\notin B^{\sigma}_F}\chi_{p,q\in B^{\sigma}_F}\chi_{p\neq q}
    \end{aligned}
  \end{equation}
  and
  \begin{equation}\label{bound E_Z_N}
  \begin{aligned}
    \pm\mathcal{E}_{\mathcal{Z}_N}\lesssim
  &\big(N^{-\alpha-\gamma}
  +N^{-\frac{1}{3}+2\alpha+3\gamma}+N^{-\frac{25}{231}+\frac{33}{7}\alpha+
  \frac{23}{7}\gamma+\varepsilon}\big)
  \big(\mathcal{K}_s+N^{\frac{1}{3}+\alpha}\big)\\
  &+e^{-\tilde{B}}P_{\mathcal{J}_N}(\mathcal{N}_{ex})e^{\tilde{B}}
  +N^{-\gamma}e^{-\tilde{B}}\mathcal{V}_4e^{\tilde{B}}\\
  &+\big(N^{-\gamma}+N^{-\frac{1}{21}+\frac{33}{7}\alpha+\frac{23}{7}\gamma+\varepsilon}\big)
  (\mathcal{N}_{ex}+1)+N^{\frac{1}{3}-\frac{\alpha}{2}}.
  \end{aligned}
  \end{equation}
  for any $\varepsilon>0$ small enough.
\end{proposition}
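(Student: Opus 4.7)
The plan is to start from the output (\ref{cal J_N p2}) of Proposition \ref{p2} and push through the Bogoliubov conjugation term by term. The constant $C_{\mathcal{G}_N}$ is scalar and hence invariant, $\mathcal{V}_{4}$ is retained in the form $e^{-\tilde{B}}\mathcal{V}_{4}e^{\tilde{B}}$ (which appears unchanged on the right of (\ref{define Z_N p3})), and $e^{-\tilde{B}}\mathcal{E}_{\mathcal{J}_N}e^{\tilde{B}}$ will be absorbed into the final error using standard Fermionic particle-number control $e^{-\tilde{B}}(\mathcal{N}_{ex}+1)^m e^{\tilde{B}}\lesssim (\mathcal{N}_{ex}+1)^m$, proved by Duhamel combined with a Gronwall argument based on the $\ell^2$-bound for $\xi_{k,q,p}^{\nu,\sigma}$ from Lemma \ref{xi l2 lemma}. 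The nontrivial analysis therefore reduces to computing $e^{-\tilde{B}}(\mathcal{K}+\mathcal{V}_{21}^\prime+\Omega)e^{\tilde{B}}$.

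The heart of the argument is the algebraic cancellation engineered into (\ref{define xi_k,q,p,nu,sigma 0}). Using Lemma \ref{lem commutator}(2) momentum-by-momentum, one obtains
\[
[\mathcal{K},\tilde{A}] = \sum_{k,p,q,\sigma,\nu}\bigl(|p-k|^2+|q+k|^2-|p|^2-|q|^2\bigr)\xi_{k,q,p}^{\nu,\sigma}\,a^*_{p-k,\sigma}a^*_{q+k,\nu}a_{q,\nu}a_{p,\sigma}\,\chi_{p-k\notin B_F^{\sigma}}\chi_{q+k\notin B_F^{\nu}}\chi_{p\in B_F^{\sigma}}\chi_{q\in B_F^{\nu}},
\]
and combining with the corresponding expression for $[\mathcal{K},\tilde{A}^*]$ together with the defining identity $\tfrac{1}{2}(|q+k|^2+|p-k|^2-|q|^2-|p|^2)\xi_{k,q,p}^{\nu,\sigma}+W_k+\eta_k k(q-p)=0$ yields the exact equality $[\mathcal{K},\tilde{B}]=-(\mathcal{V}_{21}^\prime+\Omega)$. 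The linear-interpolation Duhamel trick along the path $F(s)=e^{-s\tilde{B}}(\mathcal{K}+s(\mathcal{V}_{21}^\prime+\Omega))e^{s\tilde{B}}$ gives $F'(s)=s\cdot e^{-s\tilde{B}}[\mathcal{V}_{21}^\prime+\Omega,\tilde{B}]e^{s\tilde{B}}$, so
\[
e^{-\tilde{B}}(\mathcal{K}+\mathcal{V}_{21}^\prime+\Omega)e^{\tilde{B}} = \mathcal{K} + \int_0^1 s\,e^{-s\tilde{B}}\bigl[\mathcal{V}_{21}^\prime+\Omega,\tilde{B}\bigr]e^{s\tilde{B}}\,ds,
\]
trading the quadratic operators for the iterated commutator of two quartic operators of the same type.

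Next I expand $[\mathcal{V}_{21}^\prime+\Omega,\tilde{B}]$ via the quartic commutator formula (\ref{quartic commu}). The two-$\delta$ contractions collapse to quartic operators which, under the built-in $\chi$-projections forcing certain momenta inside and others outside the Fermi ball, reduce to scalar sums matching precisely the two extra terms in (\ref{define C_Z_N})---the direct same-momenta contraction producing $(W_k+\eta_k k(q-p))\xi_{k,q,p}^{\nu,\sigma}$ and the same-spin exchange contraction producing the second sum involving $\xi_{(k+q-p),p,q}^{\sigma,\sigma}$ with shifted momentum $k+q-p$. Absolute convergence of these scalar sums is guaranteed by (\ref{energy est bog true}) of Lemma \ref{energy est bog lem} combined with the Fermi-surface lattice bound of Lemma \ref{N^1/3+epsilon lemma}. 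The remaining partial contractions yield sextic operators of the form $a^*a^*a^*aaa$, which must be bounded using the high/low momentum splitting from Section \ref{number of p ops}: contributions where any excited momentum lies in $P_{F,\delta}^{\sigma}$ (with cutoff $\delta=\tfrac{1}{3}+u$ for some small $u>0$) are absorbed by $\mathcal{K}_s$ via the energy estimates (\ref{energy est bog u}) of Lemma \ref{energy est bog lem}, while the thin surface-layer pieces are handled by Cauchy-Schwarz against $\mathcal{N}_{ex}$ using the $\ell^2$-estimates (\ref{xi l2 cut off}) of Lemma \ref{xi l2 lemma}.

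The main obstacle, responsible for the appearance of the delicate exponent $-\tfrac{25}{231}$ in (\ref{bound E_Z_N}), is that the $p,q$-dependence of $\xi_{k,q,p}^{\nu,\sigma}$ prevents the naive $\ell^1$/$\ell^\infty$ kernel arguments available in the Bosonic setting (cf.\ \cite{bogsimplified,me}). One must instead balance the kinetic absorption on $P_{F,\delta}^{\sigma}$ against the $N^{1/3+\varepsilon}$ Fermi-surface cardinality, carefully optimizing $u$ jointly with the preexisting parameters $\alpha,\gamma,\kappa$ fixed in Proposition \ref{p2}. A secondary, somewhat routine but lengthy, bookkeeping task is to track which pieces inside $\mathcal{E}_{\mathcal{J}_N}$ get inflated into $e^{-\tilde{B}}\mathcal{V}_4 e^{\tilde{B}}$ under conjugation (these must be retained with the small prefactor $N^{-\gamma}$ to be absorbed later), and which ones keep their $\mathcal{N}_{ex}$- or $\mathcal{K}_s$-form. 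Once the optimization is performed, the sextic operator bounds together with the conjugated $\mathcal{E}_{\mathcal{J}_N}$ and the Duhamel integral remainder assemble into exactly the form (\ref{bound E_Z_N}).
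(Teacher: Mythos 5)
Your overall strategy matches the paper's: exploit the algebraic identity $[\mathcal{K},\tilde{B}]=-(\mathcal{V}_{21}^\prime+\Omega)$ (the paper's Lemma \ref{control eBtiled on K_s lemma}, stating $\tilde{\Gamma}=0$) to trade the quadratic operators for an iterated commutator integral, extract the two scalar sums of (\ref{define C_Z_N}) from the full $\delta$-contractions in $[\mathcal{V}_{21}^\prime+\Omega,\tilde{B}]$ (Lemma \ref{cal com bog lemma}), and bound the residual sextic pieces with the high/low momentum decomposition and the estimates of Lemmas \ref{energy est bog lem} and \ref{xi l2 lemma}. Your Duhamel interpolation $F(s)=e^{-s\tilde{B}}(\mathcal{K}+s(\mathcal{V}_{21}^\prime+\Omega))e^{s\tilde{B}}$ is a clean repackaging of the paper's double integral $\int_0^1\!\int_t^1\!ds\,dt$, and your account of where the exponent $-25/231=-1/21-2/33$ comes from is accurate.

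The gap is in your opening step. You assert the ``standard Fermionic particle-number control''
\[
e^{-\tilde{B}}(\mathcal{N}_{ex}+1)^m e^{\tilde{B}}\lesssim(\mathcal{N}_{ex}+1)^m,
\]
claiming it follows by Duhamel--Gronwall from the $\ell^2$-bound of Lemma \ref{xi l2 lemma}. This is false for $\tilde{B}$: the paper's Lemma \ref{control eBtilde on N_ex lemma} only yields
\[
e^{-t\tilde{B}}(\mathcal{N}_{ex}+1)e^{t\tilde{B}}\lesssim(\mathcal{N}_{ex}+1)+N^{-\frac{2}{33}+\varepsilon}\bigl(\mathcal{K}_s+N^{\frac{1}{3}+\alpha}\bigr),
\]
and the extra $\mathcal{K}_s$ term is not an artifact. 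Unlike $\eta$ (which satisfies $\|\eta\|_2\lesssim a\ell^{1/2}$ and so makes the Gronwall iteration close cheaply for $e^{B}$), the Bogoliubov kernel $\xi^{\nu,\sigma}_{k,q,p}$ has no small cutoff length: for $p,q$ within an annulus of width $N^{d}$ of the Fermi surface its $\ell^2$-mass in $k$ blows up, and the best one can do (via (\ref{xi l2 cut off}) and the splitting into $\underline{A}_{F,d}$ and $\underline{B}_{F,d}$) is offload the surface-layer contribution onto $\mathcal{K}_s$ with a prefactor optimized at $d=1/11$ (hence $N^{-2/33}$). Consequently you cannot absorb $e^{-\tilde{B}}\mathcal{E}_{\mathcal{J}_N}e^{\tilde{B}}$ by an $\mathcal{N}_{ex}$-only Gronwall argument. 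Worse, this forces a nontrivial ordering of the argument: you first need $\tilde{\Gamma}=0$ and the bound $\pm(\mathcal{V}_{21}^\prime+\Omega)\lesssim\mathcal{K}_s+N^{1/3+\alpha}$ (Lemma \ref{contol V_21'Omega [K,B]} at $\beta=0$) to close the $\mathcal{K}_s$-Gronwall (Lemma \ref{control eBtiled on K_s lemma}), and only then can the $\mathcal{N}_{ex}$-Gronwall (Lemma \ref{control eBtilde on N_ex lemma}) be closed, and only then can $\mathcal{E}_{\mathcal{J}_N}$ be absorbed. Your step 1 and step 2 are therefore genuinely intertwined, not sequential as you present them, and the resulting error in (\ref{bound E_Z_N}) correctly carries the additional $N^{-2/33+\varepsilon}(\mathcal{K}_s+N^{1/3+\alpha})$ loss that your proposed clean $\mathcal{N}_{ex}$-bound would have erased.
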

We leave the proof of Proposition \ref{p3} to Section \ref{bog}.

\section{Proof of the Main Theorem}\label{proofofmain}
\subsection{On the Ground State Energy}\label{constant}
\begin{lemma}\label{Ground State}
Under the same assumptions in Proposition \ref{p3},
   \begin{equation}\label{computation}
  \begin{aligned}
    C_{\mathcal{Z}_N}=4\pi a\mathfrak{a}_0\sum_{\sigma\neq\nu}
    N_{\sigma}N_{\nu}+E^{(2)}+O(N^{\frac{1}{3}-\alpha+\varepsilon}).
  \end{aligned}
  \end{equation}
  with some $0<\varepsilon<\alpha$ small enough but fixed. For some cut-off constant $c_N=4\sup_{\sigma}k_F^\sigma$ (depending on $N$),
      \begin{equation}\label{rewrt E^(2) lemma}
        \begin{aligned}
         \frac{E^{(2)}}{2(4\pi a\mathfrak{a}_0)^2}&\coloneqq
    \sum_{\substack{\sigma,\nu\in\mathcal{S}_\mathbf{q}\\ \sigma\neq\nu}}
    \sum_{\substack{p,q,k\\0<\vert k\vert\leq c_N}}\Big(\frac{\chi_{p\in B^{\sigma}_F}\chi_{q\in B^{\nu}_F}}{2\vert k\vert^2}-\frac{
    \chi_{p-k\notin B^{\sigma}_F}\chi_{q+k\notin B^{\nu}_F}
    \chi_{p\in B^{\sigma}_F}\chi_{q\in B^{\nu}_F}}{\big(\vert q+k\vert^2+\vert p-k\vert^2-\vert q\vert^2-\vert p\vert^2\big)}\Big)\\
    &+\frac{1}{2}\sum_{\substack{\sigma,\nu\in\mathcal{S}_\mathbf{q}\\ \sigma\neq\nu}}
    \sum_{\substack{p,q,k,\\\vert k\vert> c_N}}
    \Big(\frac{\chi_{p\in B^{\sigma}_F}\chi_{q\in B^{\nu}_F}}{\vert k\vert^2}
    -\frac{
    \chi_{p\in B^{\sigma}_F}\chi_{q\in B^{\nu}_F}}{\big(\vert q+k\vert^2+\vert p-k\vert^2-\vert q\vert^2-\vert p\vert^2\big)}\\
    &\quad\quad\quad\quad\quad\quad\quad\quad
    -\frac{
    \chi_{p\in B^{\sigma}_F}\chi_{q\in B^{\nu}_F}}{\big(\vert q-k\vert^2+\vert p+k\vert^2-\vert q\vert^2-\vert p\vert^2\big)}\Big)
        \end{aligned}
      \end{equation}
      We can formally write (\ref{rewrt E^(2) lemma}) as
  \begin{equation}\label{E^(2) 6}
    E^{(2)}=2(4\pi a\mathfrak{a}_0)^2
    \sum_{\substack{\sigma,\nu\in\mathcal{S}_\mathbf{q}\\ \sigma\neq\nu}}
    \sum_{\substack{p,q\\k\neq0}}\Big(\frac{\chi_{p\in B^{\sigma}_F}\chi_{q\in B^{\nu}_F}}{2\vert k\vert^2}-\frac{
    \chi_{p-k\notin B^{\sigma}_F}\chi_{q+k\notin B^{\nu}_F}
    \chi_{p\in B^{\sigma}_F}\chi_{q\in B^{\nu}_F}}{\big(\vert q+k\vert^2+\vert p-k\vert^2-\vert q\vert^2-\vert p\vert^2\big)}\Big)
  \end{equation}
  Moreover, we have
  \begin{equation}\label{order E^2}
    E^{(2)}=O(N^{\frac{1}{3}+\varepsilon}).
  \end{equation}
\begin{proof}
By Propositions \ref{p1} and \ref{p3}, we write
\begin{equation}\label{rewrt C_Z_N}
  C_{\mathcal{Z}_N}=C_1+C_2+C_3
\end{equation}
with
\begin{equation}\label{details C_Z_N}
  \begin{aligned}
  C_1&=\frac{1}{2}\Big(\hat{v}_0+\sum_{k}\hat{v}_k\eta_k\Big)
  \sum_{\sigma\neq\nu}N_\sigma N_\nu\\
  C_2&=\sum_{k,p,q,\sigma,\nu}W_k\eta_k\chi_{p-k\notin B^{\sigma}_F}\chi_{q+k\notin B^{\nu}_F}\chi_{p\in B^{\sigma}_F}\chi_{q\in B^{\nu}_F}\\
  &-\sum_{k,p,q,\sigma}W_k\eta_{k+q-p}\chi_{p-k,q+k\notin B^{\sigma}_F}\chi_{p,q\in B^{\sigma}_F}\chi_{p\neq q}\\
  &+\sum_{k,p,q,\sigma,\nu}\eta_k^2k(q-p)\chi_{p-k\notin B^{\sigma}_F}\chi_{q+k\notin B^{\nu}_F}\chi_{p\in B^{\sigma}_F}\chi_{q\in B^{\nu}_F}\\
  &-\sum_{k,p,q,\sigma}\eta_{k+q-p}\eta_kk(q-p)\chi_{p-k,q+k\notin B^{\sigma}_F}\chi_{p,q\in B^{\sigma}_F}\chi_{p\neq q}\\
  C_3&=\sum_{k,p,q,\sigma,\nu}\big(W_k+\eta_kk(q-p)\big)\xi_{k,q,p}^{\nu,\sigma}
  \chi_{p-k\notin B^{\sigma}_F}\chi_{q+k\notin B^{\nu}_F}
  \chi_{p\in B^{\sigma}_F}\chi_{q\in B^{\nu}_F}\\
  &-\sum_{k,p,q,\sigma}\big(W_k+\eta_kk(q-p)\big)\xi_{(k+q-p),p,q}^{\sigma,\sigma}
  \chi_{p-k,q+k\notin B^{\sigma}_F}\chi_{p,q\in B^{\sigma}_F}\chi_{p\neq q}
  \end{aligned}
\end{equation}
Notice that in the definition of $C_2$ we have used the relation that
\begin{equation*}
  \begin{aligned}
  &-\sum_{k,p,q,\sigma,\nu}\eta_k^2k(q-p)(1-\chi_{p-k\notin B^{\sigma}_F}\chi_{q+k\notin B^{\nu}_F})\chi_{p\in B^{\sigma}_F}\chi_{q\in B^{\nu}_F}\\
  &=\sum_{k,p,q,\sigma,\nu}\eta_k^2k(q-p)\chi_{p-k\notin B^{\sigma}_F}\chi_{q+k\notin B^{\nu}_F}\chi_{p\in B^{\sigma}_F}\chi_{q\in B^{\nu}_F}.
  \end{aligned}
\end{equation*}
 We then compute (\ref{rewrt C_Z_N}) one by one.
\begin{flushleft}
  \textbf{Analysis of $C_1$:}
\end{flushleft}
\par Using (\ref{有用的屎}), we immediately obtain (recall that $\ell=N^{-\frac{1}{3}-\alpha}$)
\begin{equation}\label{C_11}
  C_1=\big(4\pi a\mathfrak{a}_0+6\pi a^2\mathfrak{a}_0^2\ell^{-1}\big)
  \sum_{\sigma\neq\nu}N_{\sigma}N_{\nu}+O(N^{-\frac{1}{3}+2\alpha}).
\end{equation}

\begin{flushleft}
  \textbf{Analysis of $C_2$:}
\end{flushleft}
\par Using (\ref{est of w_l,p}), (\ref{eta_p}), (\ref{est of eta_0}) and (\ref{sum_pW_peta_p 3dscatt}), we can bound
\begin{equation}\label{analysis of C_2 1st}
  \begin{aligned}
  \Big\vert\sum_{k,q,p,\sigma,\nu}W_k\eta_k\big(1-\chi_{p-k\notin B^{\sigma}_F}
  \chi_{q+k\notin B^{\nu}_F}\big)\chi_{p\in B^{\sigma}_F}\chi_{q\in B^{\nu}_F}\Big\vert
  &\lesssim N^{\frac{1}{3}-2\alpha}\\
  \Big\vert\sum_{k,q,p,\sigma}W_k\eta_{k+q-p}\big(1-\chi_{p-k,q+k\notin B^{\sigma}_F}
  \big)\chi_{p,q\in B^{\sigma}_F}\Big\vert
  &\lesssim N^{\frac{1}{3}-2\alpha}\\
  \Big\vert\sum_{k,q,p,\sigma,\nu}\eta_k^2k(q-p)\big(1-\chi_{p-k\notin B^{\sigma}_F}
  \chi_{q+k\notin B^{\nu}_F}\big)\chi_{p\in B^{\sigma}_F}\chi_{q\in B^{\nu}_F}\Big\vert
  &\lesssim N^{\frac{1}{3}-4\alpha}\\
  \Big\vert\sum_{k,q,p,\sigma}\eta_k\eta_{k+q-p}k(q-p)\big(1-\chi_{p-k,q+k\notin B^{\sigma}_F}
  \big)\chi_{p,q\in B^{\sigma}_F}\Big\vert
  &\lesssim N^{\frac{1}{3}-4\alpha}\\
  \Big\vert\sum_{k,p,q,\sigma}W_k\eta_{k+q-p}\chi_{p-k,q+k\notin B^{\sigma}_F}\chi_{p,q\in B^{\sigma}_F}\chi_{p= q}\Big\vert
  &\lesssim N^{-\frac{2}{3}+\alpha}
  \end{aligned}
\end{equation}
On the other hand, we write for $k\in2\pi\mathbb{Z}^3$,
\begin{equation}\label{define Z_p}
  Z_k\coloneqq \frac{1}{2}
  \sum_{l\in2\pi\mathbb{Z}^3}\hat{v}_{k-l}\eta_l+ \frac{1}{2}\hat{v}_k,
\end{equation}
and also denote $Z(x)$ the function with Fourier coefficients $Z_k$
\begin{equation}\label{define Z(x)}
  Z(x)=\sum_{k}Z_ke^{ikx}=\frac{1}{2}v_a(x)\big(1+\eta(x)\big).
\end{equation}
 Using (\ref{eqn of eta_p rewrt}), we find
 \begin{equation}\label{C_2,1}
   \begin{aligned}
   \big(\vert k\vert^2+k(q-p)\big)\eta_k
   &=\big(W_k+k(q-p)\eta_k\big)-Z_k\\
   \big(\vert k\vert^2+k(q-p)\big)\eta_{k+q-p}
   &=\big(W_{k+q-p}+(k+q-p)(p-q)\eta_k\big)-Z_{k+q-p}
   \end{aligned}
 \end{equation}
 Notice that
 \begin{equation}\label{C_2,2}
   \vert k\vert^2+k\cdot(q-p)
  =\Big\vert k+\frac{q-p}{2}\Big\vert^2-\Big\vert\frac{q-p}{2}\Big\vert^2.
 \end{equation}
 Recall (\ref{define S_q,p lemma}), for fixed $p\in B_F^\sigma$ and $q\in B^{\nu}_F$
 \begin{equation}\label{define S_q,p}
   \mathfrak{S}_{q,p}^{\nu,\sigma}
   \coloneqq\{k\in2\pi\mathbb{Z}^3\,\big\vert\,\vert k\vert^2+k\cdot(q-p)=0\}.
 \end{equation}
 By (\ref{approximate lattice 3d ball}), (\ref{approximate lattice 2d sphere}) and (\ref{C_2,2}), we can bound (see (\ref{N^1/3+epsilon case 3 1}))
 \begin{equation}\label{bound number S_q,p}
   \#\mathfrak{S}_{q,p}^{\nu,\sigma}\lesssim N^{\frac{1}{3}+\varepsilon}
 \end{equation}
 for any $\varepsilon>0$ small enough as long as $N$ is correspondingly large enough.
  Therefore, similar to (\ref{analysis of C_2 1st}), we can bound
  \begin{equation}\label{analysis of C_2 2nd}
  \begin{aligned}
  \Big\vert\sum_{k,q,p,\sigma,\nu}W_k\eta_k\chi_{k\in\mathfrak{S}_{q,p}^{\nu,\sigma}}
  \chi_{p\in B^{\sigma}_F}\chi_{q\in B^{\nu}_F}\Big\vert
  &\lesssim N^{-\frac{1}{3}-2\alpha+\varepsilon}\\
  \Big\vert\sum_{k,q,p,\sigma}W_k\eta_{k+q-p}\chi_{k\in\mathfrak{S}_{q,p}^{\sigma,\sigma}}
  \chi_{p,q\in B^{\sigma}_F}\Big\vert
  &\lesssim N^{-\frac{1}{3}-2\alpha+\varepsilon}\\
  \Big\vert\sum_{k,q,p,\sigma,\nu}\eta_k^2k(q-p)\chi_{k\in\mathfrak{S}_{q,p}^{\nu,\sigma}}
  \chi_{p\in B^{\sigma}_F}\chi_{q\in B^{\nu}_F}\Big\vert
  &\lesssim N^{-\frac{1}{3}-4\alpha+\varepsilon}\\
  \Big\vert\sum_{k,q,p,\sigma}\eta_k\eta_{k+q-p}k(q-p)
  \chi_{k\in\mathfrak{S}_{q,p}^{\sigma,\sigma}}
  \chi_{p,q\in B^{\sigma}_F}\Big\vert
  &\lesssim N^{-\frac{1}{3}-4\alpha+\varepsilon}
  \end{aligned}
\end{equation}
Combing (\ref{analysis of C_2 1st}), (\ref{C_2,1}) and (\ref{analysis of C_2 2nd}), we get to
\begin{equation}\label{rewrt C_2}
  \begin{aligned}
  C_2&=\sum_{k,p,q,\sigma,\nu}\frac{\big( W_k+\eta_kk(q-p)\big)^2}
  {\vert k\vert^2+k(q-p)}\chi_{k\notin\mathfrak{S}_{q,p}^{\nu,\sigma}}
  \chi_{p\in B^{\sigma}_F}\chi_{q\in B^{\nu}_F}\\
  &-\sum_{k,p,q,\sigma}\frac{\big( W_k+\eta_kk(q-p)\big)
  \big(W_{k+q-p}+\eta_{k+q-p}(k+q-p)(p-q)\big)}
  {\vert k\vert^2+k(q-p)}\\
  &\quad\quad\quad\quad
  \times\chi_{k\notin\mathfrak{S}_{q,p}^{\sigma,\sigma}}
  \chi_{p,q\in B^{\sigma}_F}\\
  &-\sum_{k,p,q,\sigma,\nu}\frac{\big( W_k+\eta_kk(q-p)\big)Z_k}
  {\vert k\vert^2+k(q-p)}\chi_{k\notin\mathfrak{S}_{q,p}^{\nu,\sigma}}
  \chi_{p\in B^{\sigma}_F}\chi_{q\in B^{\nu}_F}\\
  &+\sum_{k,p,q,\sigma}\frac{\big( W_k+\eta_kk(q-p)\big)Z_{k+q-p}}
  {\vert k\vert^2+k(q-p)}\chi_{k\notin\mathfrak{S}_{q,p}^{\sigma,\sigma}}
  \chi_{p,q\in B^{\sigma}_F}+O(N^{\frac{1}{3}-2\alpha})
  \end{aligned}
\end{equation}
\par Next, we are going to prove that, for the last two terms in (\ref{rewrt C_2}),
\begin{equation}\label{about Z_k}
  \begin{aligned}
  &-\sum_{k,p,q,\sigma,\nu}\frac{\big( W_k+\eta_kk(q-p)\big)Z_k}
  {\vert k\vert^2+k(q-p)}\chi_{k\notin\mathfrak{S}_{q,p}^{\nu,\sigma}}
  \chi_{p\in B^{\sigma}_F}\chi_{q\in B^{\nu}_F}\\
  &\quad\quad+\sum_{k,p,q,\sigma}\frac{\big( W_k+\eta_kk(q-p)\big)Z_{k+q-p}}
  {\vert k\vert^2+k(q-p)}\chi_{k\notin\mathfrak{S}_{q,p}^{\sigma,\sigma}}
  \chi_{p,q\in B^{\sigma}_F}\\
  &=-\sum_{\sigma\neq\nu}\sum_{p,q}\sum_{k\neq0}\frac{W_kZ_k}{\vert k\vert^2}
  \chi_{p\in B^{\sigma}_F}\chi_{q\in B^{\nu}_F}\\
    &+(4\pi a\mathfrak{a}_0)^2\sum_{\substack{\sigma,\nu\in\mathcal{S}_\mathbf{q}\\ \sigma\neq\nu}}
    \sum_{\substack{p,q,k,\\0<\vert k\vert\leq c_N}}
    \Big(\frac{\chi_{p\in B^{\sigma}_F}\chi_{q\in B^{\nu}_F}}{\vert k\vert^2}
    -\frac{
    2\chi_{k\notin\mathfrak{S}_{q,p}^{\nu,\sigma}}
    \chi_{p\in B^{\sigma}_F}\chi_{q\in B^{\nu}_F}}{\big(\vert q+k\vert^2+\vert p-k\vert^2-\vert q\vert^2-\vert p\vert^2\big)}\Big)\\
  &+(4\pi a\mathfrak{a}_0)^2\sum_{\substack{\sigma,\nu\in\mathcal{S}_\mathbf{q}\\ \sigma\neq\nu}}
    \sum_{\substack{p,q,k,\\\vert k\vert> c_N}}
    \Big(\frac{\chi_{p\in B^{\sigma}_F}\chi_{q\in B^{\nu}_F}}{\vert k\vert^2}
    -\frac{
    \chi_{p\in B^{\sigma}_F}\chi_{q\in B^{\nu}_F}}{\big(\vert q+k\vert^2+\vert p-k\vert^2-\vert q\vert^2-\vert p\vert^2\big)}\\
    &\quad\quad\quad\quad\quad\quad\quad\quad
    -\frac{
    \chi_{p\in B^{\sigma}_F}\chi_{q\in B^{\nu}_F}}{\big(\vert q-k\vert^2+\vert p+k\vert^2-\vert q\vert^2-\vert p\vert^2\big)}\Big)+O(N^{\frac{1}{3}-2\alpha+\varepsilon})
  \end{aligned}
\end{equation}
for any $0<\varepsilon<\alpha$ fixed. First we notice that for $p\in B_F^\sigma$ and $q\in B_F^\nu$,
\begin{equation}\label{Z_k+q-p-Z_k}
\begin{aligned}
  \vert Z_{k+q-p}-Z_k\vert&=
  \Big\vert\int_{\Lambda}\frac{1}{2}v_a(x)\big(1+\eta(x)\big)
  e^{-ikx}\big(e^{-i(q-p)x}-1\big)dx\Big\vert\\
  &\lesssim\int_{\Lambda}v_a(x)\vert q-p\vert\cdot\vert x\vert dx
  \lesssim a^2N^{\frac{1}{3}}.
\end{aligned}
\end{equation}
Also, by (\ref{sum_pW_peta_p 3dscatt}), (\ref{est of w_l,p}), (\ref{eta_p}) and (\ref{est of eta_0}), we have
\begin{equation}\label{sup W_k+k(q-p)eta_k}
  \vert W_k\vert+\vert\eta_kk(q-p)\vert\lesssim a.
\end{equation}
Therefore, we combine (\ref{Z_k+q-p-Z_k}), (\ref{sup W_k+k(q-p)eta_k}) and estimates (\ref{L1 Linfty est}), (\ref{too naive}) in Lemma \ref{L1 Linfty est lemma} and (\ref{N^1/3+epsilon}) in Lemma \ref{N^1/3+epsilon lemma} to find
\begin{equation}\label{finite corr 1}
  \Big\vert\sum_{k,p,q,\sigma}\frac{\big(W_k+\eta_kk(q-p)\big)
  \big(Z_{k+q-p}-Z_k\big)}{\vert k\vert^2+k(q-p)}
  \chi_{k\notin\mathfrak{S}_{q,p}^{\sigma,\sigma}}
  \chi_{p,q\in B^{\sigma}_F}\Big\vert\lesssim N^{-\frac{1}{3}}\ln N.
\end{equation}
Secondly, since $Z_k=Z_{-k}$ and $\vert Z_k\vert\lesssim a$, we can apply (\ref{L1 Linfty est}) in Lemma \ref{L1 Linfty est lemma}, together with (\ref{N^1/3+epsilon}) in Lemma \ref{N^1/3+epsilon lemma} and estimate (\ref{est of eta_0}) to obtain
\begin{equation}\label{finite corr 2}
  \Big\vert\sum_{\sigma\neq\nu}\sum_{k,p,q}\frac{\eta_kk(q-p)Z_k}
  {\vert k\vert^2+k(q-p)}\chi_{k\notin\mathfrak{S}_{q,p}^{\nu,\sigma}}
  \chi_{p\in B^{\sigma}_F}\chi_{q\in B^{\nu}_F}\Big\vert\lesssim N^{\frac{1}{3}-\alpha}.
\end{equation}
Now we have
\begin{equation}\label{about Z_k 2}
  \begin{aligned}
  &-\sum_{k,p,q,\sigma,\nu}\frac{\big( W_k+\eta_kk(q-p)\big)Z_k}
  {\vert k\vert^2+k(q-p)}\chi_{k\notin\mathfrak{S}_{q,p}^{\nu,\sigma}}
  \chi_{p\in B^{\sigma}_F}\chi_{q\in B^{\nu}_F}\\
  &\quad\quad+\sum_{k,p,q,\sigma}\frac{\big( W_k+\eta_kk(q-p)\big)Z_{k+q-p}}
  {\vert k\vert^2+k(q-p)}\chi_{k\notin\mathfrak{S}_{q,p}^{\sigma,\sigma}}
  \chi_{p,q\in B^{\sigma}_F}\\
  &=-\sum_{\sigma\neq\nu}\sum_{k,p,q}\Big(\frac{W_kZ_k}{\vert k\vert^2+k(q-p)}
  -\frac{W_kZ_k}{\vert k\vert^2}\Big)\chi_{k\notin\mathfrak{S}_{q,p}^{\nu,\sigma}}
  \chi_{p\in B^{\sigma}_F}\chi_{q\in B^{\nu}_F}\\
  &\quad\quad-\sum_{\sigma\neq\nu}\sum_{p,q}\sum_{k\neq0}\frac{W_kZ_k}{\vert k\vert^2}
  \chi_{p\in B^{\sigma}_F}\chi_{q\in B^{\nu}_F}\\
  &\quad\quad+\sum_{\sigma\neq\nu}\sum_{p,q}\sum_{k\neq0}\frac{W_kZ_k}{\vert k\vert^2}
  \chi_{k\in\mathfrak{S}_{q,p}^{\nu,\sigma}}
  \chi_{p\in B^{\sigma}_F}\chi_{q\in B^{\nu}_F}
  +O(N^{\frac{1}{3}-\alpha})
  \end{aligned}
\end{equation}
To reach (\ref{about Z_k}), we compute (\ref{about Z_k 2}) one by one. For the first term, we divide it into two cases.
\begin{flushleft}
  \textbf{Case $\mathrm{I}$: $\vert k\vert>c_N$}
\end{flushleft}
In this case, since $W_{k}=W_{-k}$, $Z_k=Z_{-k}$, we can rearrange the sum by
\begin{equation}\label{rearrangement W_kZ_k}
  \begin{aligned}
  &-\sum_{\sigma\neq\nu}\sum_{k,p,q}\Big(\frac{W_kZ_k}{\vert k\vert^2+k(q-p)}
  -\frac{W_kZ_k}{\vert k\vert^2}\Big)\chi_{k\notin\mathfrak{S}_{q,p}^{\nu,\sigma}}
  \chi_{p\in B^{\sigma}_F}\chi_{q\in B^{\nu}_F}\\
  &=-\frac{1}{2}\sum_{\sigma\neq\nu}\sum_{k,p,q}\Big(\frac{W_kZ_k}{\vert k\vert^2+k(q-p)}
  +\frac{W_kZ_k}{\vert k\vert^2-k(q-p)}
  -\frac{2W_kZ_k}{\vert k\vert^2}\Big)\chi_{k\notin\mathfrak{S}_{q,p}^{\nu,\sigma}}
  \chi_{p\in B^{\sigma}_F}\chi_{q\in B^{\nu}_F}\\
  &=-\sum_{\sigma\neq\nu}\sum_{k,p,q}\frac{W_kZ_k\big(k(q-p)\big)^2}
  {\vert k\vert^2\big(\vert k\vert^2+k(q-p)\big)\big(\vert k\vert^2-k(q-p)\big)}
  \chi_{k\notin\mathfrak{S}_{q,p}^{\nu,\sigma}}
  \chi_{p\in B^{\sigma}_F}\chi_{q\in B^{\nu}_F}.
  \end{aligned}
\end{equation}
Now we write
\begin{equation}\label{ob 1}
  W_kZ_k=(4\pi a\mathfrak{a}_0)^2+Z_k(W_k-4\pi a\mathfrak{a}_0)+4\pi a\mathfrak{a}_0
  (Z_k-4\pi a\mathfrak{a}_0).
\end{equation}
Notice that $c_N\sim N^{\frac{1}{3}}$, and since $\vert k\vert>c_N$ implies $k\notin\mathfrak{S}_{q,p}^{\nu,\sigma}$ and
\begin{equation}\label{ob 2}
   \vert k\vert^2\pm k(q-p)\geq \frac{1}{2}\vert k\vert^2.
\end{equation}
Also we have from Section \ref{scattering eqn sec},
\begin{equation}\label{useful bound Z_k, W_k}
  \begin{aligned}
  \vert Z_k\vert\lesssim a,&\quad \vert W_k\vert\lesssim a\\
  \vert Z_k-Z_0\vert\lesssim a^3\vert k\vert^2,&\quad
  \vert W_k-W_0\vert\lesssim a\ell^2\vert k\vert^2\\
  \vert Z_0-4\pi a\mathfrak{a}_0\vert\lesssim a^2\ell^{-1},&\quad
  \vert W_0-4\pi a\mathfrak{a}_0\vert\lesssim a^2\ell^{-1}
  \end{aligned}
\end{equation}
Therefore, we use (\ref{useful bound Z_k, W_k}) to bound for $\vert k\vert>c_N$,
\begin{equation}\label{C_2 case 1 1}
  \begin{aligned}
  &\Big\vert\sum_{\sigma\neq\nu}\sum_{k,p,q}\frac{Z_k(W_0-4\pi a\mathfrak{a}_0)\big(k(q-p)\big)^2}
  {\vert k\vert^2\big(\vert k\vert^2+k(q-p)\big)\big(\vert k\vert^2-k(q-p)\big)}
  \chi_{k\notin\mathfrak{S}_{q,p}^{\nu,\sigma}}
  \chi_{p\in B^{\sigma}_F}\chi_{q\in B^{\nu}_F}\Big\vert
  \lesssim N^{-\frac{1}{3}+\alpha}\\
  &\Big\vert\sum_{\sigma\neq\nu}\sum_{k,p,q}\frac{4\pi a\mathfrak{a}_0(Z_0-4\pi a\mathfrak{a}_0)\big(k(q-p)\big)^2}
  {\vert k\vert^2\big(\vert k\vert^2+k(q-p)\big)\big(\vert k\vert^2-k(q-p)\big)}
  \chi_{k\notin\mathfrak{S}_{q,p}^{\nu,\sigma}}
  \chi_{p\in B^{\sigma}_F}\chi_{q\in B^{\nu}_F}\Big\vert
  \lesssim N^{-\frac{1}{3}+\alpha}
  \end{aligned}
\end{equation}
On the other hand, we choose a cut-off constant $R_0\gg N^{\frac{1}{3}}\ln N$ to be determined, for $\vert k\vert>R_0$, we can use the first line of (\ref{useful bound Z_k, W_k}) to bound
\begin{equation}\label{C_2 case 1 2}
  \begin{aligned}
  &\Big\vert\sum_{\sigma\neq\nu}\sum_{k,p,q}\frac{Z_k(W_k-W_0)\big(k(q-p)\big)^2}
  {\vert k\vert^2\big(\vert k\vert^2+k(q-p)\big)\big(\vert k\vert^2-k(q-p)\big)}
  \chi_{k\notin\mathfrak{S}_{q,p}^{\nu,\sigma}}
  \chi_{p\in B^{\sigma}_F}\chi_{q\in B^{\nu}_F}\Big\vert
  \lesssim N^{\frac{2}{3}}R_0^{-1}\\
  &\Big\vert\sum_{\sigma\neq\nu}\sum_{k,p,q}\frac{4\pi a\mathfrak{a}_0(Z_k-Z_0)\big(k(q-p)\big)^2}
  {\vert k\vert^2\big(\vert k\vert^2+k(q-p)\big)\big(\vert k\vert^2-k(q-p)\big)}
  \chi_{k\notin\mathfrak{S}_{q,p}^{\nu,\sigma}}
  \chi_{p\in B^{\sigma}_F}\chi_{q\in B^{\nu}_F}\Big\vert
  \lesssim N^{\frac{2}{3}}R_0^{-1}
  \end{aligned}
\end{equation}
For $c_N<\vert k\vert<R_0$, we use the second line of (\ref{useful bound Z_k, W_k}) to obtain
\begin{equation}\label{C_2 case 1 3}
  \begin{aligned}
  &\Big\vert\sum_{\sigma\neq\nu}\sum_{k,p,q}\frac{Z_k(W_k-W_0)\big(k(q-p)\big)^2}
  {\vert k\vert^2\big(\vert k\vert^2+k(q-p)\big)\big(\vert k\vert^2-k(q-p)\big)}
  \chi_{k\notin\mathfrak{S}_{q,p}^{\nu,\sigma}}
  \chi_{p\in B^{\sigma}_F}\chi_{q\in B^{\nu}_F}\Big\vert
  \lesssim N^{\frac{2}{3}}\ell^2R_0\\
  &\Big\vert\sum_{\sigma\neq\nu}\sum_{k,p,q}\frac{4\pi a\mathfrak{a}_0(Z_k-Z_0)\big(k(q-p)\big)^2}
  {\vert k\vert^2\big(\vert k\vert^2+k(q-p)\big)\big(\vert k\vert^2-k(q-p)\big)}
  \chi_{k\notin\mathfrak{S}_{q,p}^{\nu,\sigma}}
  \chi_{p\in B^{\sigma}_F}\chi_{q\in B^{\nu}_F}\Big\vert
  \lesssim N^{\frac{2}{3}}a^2R_0
  \end{aligned}
\end{equation}
We then choose $R_0=\ell^{-1}$. Therefore, we conclude that for $\vert k\vert>c_N$
\begin{equation}\label{C_2 case 1 conclusion}
  \begin{aligned}
  &-\sum_{\sigma\neq\nu}\sum_{k,p,q}\Big(\frac{W_kZ_k}{\vert k\vert^2+k(q-p)}
  -\frac{W_kZ_k}{\vert k\vert^2}\Big)\chi_{k\notin\mathfrak{S}_{q,p}^{\nu,\sigma}}
  \chi_{p\in B^{\sigma}_F}\chi_{q\in B^{\nu}_F}\\
  &=\frac{(4\pi a\mathfrak{a}_0)^2}{2}\sum_{\sigma\neq\nu}\sum_{k,p,q}\Big(\frac{2}{\vert k\vert^2}-\frac{1}{\vert k\vert^2+k(q-p)}
  -\frac{1}{\vert k\vert^2-k(q-p)}
  \Big)\\
  &\quad\quad\quad\quad\quad\quad\quad\quad
  \times
  \chi_{p\in B^{\sigma}_F}\chi_{q\in B^{\nu}_F}+O(N^{\frac{1}{3}-\alpha}).
  \end{aligned}
\end{equation}
It is easy to check the order of the leading term on the right hand side is not greater than $N^{\frac{1}{3}}$.
\begin{flushleft}
  \textbf{Case $\mathrm{II}$: $\vert k\vert\leq c_N$}
\end{flushleft}
Using the observation (\ref{ob 1}) and the useful bounds (\ref{useful bound Z_k, W_k}), together with Lemma \ref{N^1/3+epsilon lemma} (the same result still holds for $\vert k\vert\leq c_N$), we obtain the following result for $\vert k\vert\leq c_N\lesssim N^{\frac{1}{3}}$,
\begin{equation}\label{C_2 case 2}
  \begin{aligned}
  &-\sum_{\sigma\neq\nu}\sum_{k,p,q}\Big(\frac{W_kZ_k}{\vert k\vert^2+k(q-p)}
  -\frac{W_kZ_k}{\vert k\vert^2}\Big)\chi_{k\notin\mathfrak{S}_{q,p}^{\nu,\sigma}}
  \chi_{p\in B^{\sigma}_F}\chi_{q\in B^{\nu}_F}\\
  &=(4\pi a\mathfrak{a}_0)^2\sum_{ \sigma\neq\nu}
    \sum_{p,q,k}
    \Big(\frac{1}{\vert k\vert^2}
    -\frac{
    2}{\big(\vert q+k\vert^2+\vert p-k\vert^2-\vert q\vert^2-\vert p\vert^2\big)}\Big)
    \\
    &\quad\quad\quad\quad\quad\quad\quad\quad\times
    \chi_{k\notin\mathfrak{S}_{q,p}^{\nu,\sigma}}
    \chi_{p\in B^{\sigma}_F}\chi_{q\in B^{\nu}_F}+O(N^{\frac{1}{3}-2\alpha+\varepsilon})
  \end{aligned}
\end{equation}
for any $\varepsilon>0$ small but fixed.

\par For the third term in (\ref{about Z_k 2}), notice that $k\in\mathfrak{S}_{q,p}^{\nu,\sigma}$ implies $\vert k\vert\lesssim N^{\frac{1}{3}}$. We then use (\ref{ob 1}), (\ref{useful bound Z_k, W_k}) and (\ref{bound number S_q,p}) to get to
\begin{equation}\label{3rd term about Z_k2}
  \begin{aligned}
  &\sum_{\sigma\neq\nu}\sum_{p,q}\sum_{k\neq0}\frac{W_kZ_k}{\vert k\vert^2}
  \chi_{k\in\mathfrak{S}_{q,p}^{\nu,\sigma}}
  \chi_{p\in B^{\sigma}_F}\chi_{q\in B^{\nu}_F}\\
  &\quad\quad=
  \sum_{\sigma\neq\nu}\sum_{p,q}\sum_{k\neq0}\frac{(4\pi a\mathfrak{a}_0)^2}{\vert k\vert^2}
  \chi_{k\in\mathfrak{S}_{q,p}^{\nu,\sigma}}
  \chi_{p\in B^{\sigma}_F}\chi_{q\in B^{\nu}_F}+O(N^{\frac{1}{3}-\alpha}).
  \end{aligned}
\end{equation}
Plugging (\ref{C_2 case 1 conclusion}), (\ref{C_2 case 2}) and (\ref{3rd term about Z_k2}) into (\ref{about Z_k 2}), we reach (\ref{about Z_k}).

\begin{flushleft}
  \textbf{Analysis of $C_2+C_3$:}
\end{flushleft}
Plugging (\ref{about Z_k}) into (\ref{rewrt C_2}), recall the definition of $\xi_{k,q,p}^{\sigma,\nu}$ in (\ref{define xi_k,q,p,nu,sigma 0000}), and notice the fact that $p-k\notin B_F^\sigma$ and $q+k\notin B_F^\nu$ implies $k\notin\mathfrak{S}_{q,p}^{\nu,\sigma}$, we then get to
\begin{equation}\label{C_2+C_3}
  \begin{aligned}
  &C_2+C_3=-\sum_{\sigma\neq\nu}\sum_{p,q}\sum_{k\neq0}\frac{W_kZ_k}{\vert k\vert^2}
  \chi_{p\in B^{\sigma}_F}\chi_{q\in B^{\nu}_F}\\
  &+\sum_{k,p,q,\sigma,\nu}\frac{\big( W_k+\eta_kk(q-p)\big)^2}
  {\vert k\vert^2+k(q-p)}\chi_{k\notin\mathfrak{S}_{q,p}^{\nu,\sigma}}
  (1-\chi_{p-k\notin B^\sigma_F}\chi_{q+k\notin B^\nu_F})\chi_{p\in B^{\sigma}_F}\chi_{q\in B^{\nu}_F}\\
  &-\sum_{k,p,q,\sigma}\frac{\big( W_k+\eta_kk(q-p)\big)
  \big(W_{k+q-p}+\eta_{k+q-p}(k+q-p)(p-q)\big)}
  {\vert k\vert^2+k(q-p)}\\
  &\quad\quad\quad\quad
  \times\chi_{k\notin\mathfrak{S}_{q,p}^{\sigma,\sigma}}
  (1-\chi_{p-k\notin B^\sigma_F}\chi_{q+k\notin B^\sigma_F})
  \chi_{p,q\in B^{\sigma}_F}\\
    &+(4\pi a\mathfrak{a}_0)^2\sum_{\substack{\sigma,\nu\in\mathcal{S}_\mathbf{q}\\ \sigma\neq\nu}}
    \sum_{\substack{p,q,k,\\0<\vert k\vert\leq c_N}}
    \Big(\frac{\chi_{p\in B^{\sigma}_F}\chi_{q\in B^{\nu}_F}}{\vert k\vert^2}
    -\frac{
    2\chi_{k\notin\mathfrak{S}_{q,p}^{\nu,\sigma}}
    \chi_{p\in B^{\sigma}_F}\chi_{q\in B^{\nu}_F}}{\big(\vert q+k\vert^2+\vert p-k\vert^2-\vert q\vert^2-\vert p\vert^2\big)}\Big)\\
  &+(4\pi a\mathfrak{a}_0)^2\sum_{\substack{\sigma,\nu\in\mathcal{S}_\mathbf{q}\\ \sigma\neq\nu}}
    \sum_{\substack{p,q,k,\\\vert k\vert> c_N}}
    \Big(\frac{\chi_{p\in B^{\sigma}_F}\chi_{q\in B^{\nu}_F}}{\vert k\vert^2}
    -\frac{
    \chi_{p\in B^{\sigma}_F}\chi_{q\in B^{\nu}_F}}{\big(\vert q+k\vert^2+\vert p-k\vert^2-\vert q\vert^2-\vert p\vert^2\big)}\\
    &\quad\quad\quad\quad\quad\quad\quad\quad
    -\frac{
    \chi_{p\in B^{\sigma}_F}\chi_{q\in B^{\nu}_F}}{\big(\vert q-k\vert^2+\vert p+k\vert^2-\vert q\vert^2-\vert p\vert^2\big)}\Big)+O(N^{\frac{1}{3}-2\alpha+\varepsilon})
  \end{aligned}
\end{equation}
As $(1-\chi_{p-k\notin B^\sigma_F}\chi_{q+k\notin B^\nu_F})\neq0$ implies $\vert k\vert\lesssim N^{\frac{1}{3}}$, for the second and third terms of (\ref{C_2+C_3}), we can use Lemma \ref{N^1/3+epsilon lemma} together with estimates (\ref{est of eta_0}) and (\ref{sum_pW_peta_p 3dscatt}) to deduce
\begin{equation}\label{C_2+C_3 part 1}
  \begin{aligned}
  &\sum_{k,p,q,\sigma,\nu}\frac{\big( W_k+\eta_kk(q-p)\big)^2}
  {\vert k\vert^2+k(q-p)}\chi_{k\notin\mathfrak{S}_{q,p}^{\nu,\sigma}}
  (1-\chi_{p-k\notin B^\sigma_F}\chi_{q+k\notin B^\nu_F})\chi_{p\in B^{\sigma}_F}\chi_{q\in B^{\nu}_F}\\
  &-\sum_{k,p,q,\sigma}\frac{\big( W_k+\eta_kk(q-p)\big)
  \big(W_{k+q-p}+\eta_{k+q-p}(k+q-p)(p-q)\big)}
  {\vert k\vert^2+k(q-p)}\\
  &\quad\quad\quad\quad
  \times\chi_{k\notin\mathfrak{S}_{q,p}^{\sigma,\sigma}}
  (1-\chi_{p-k\notin B^\sigma_F}\chi_{q+k\notin B^\sigma_F})
  \chi_{p,q\in B^{\sigma}_F}\\
  &=\sum_{k,p,q,\sigma,\nu}\frac{ W_k^2}
  {\vert k\vert^2+k(q-p)}\chi_{k\notin\mathfrak{S}_{q,p}^{\nu,\sigma}}
  (1-\chi_{p-k\notin B^\sigma_F}\chi_{q+k\notin B^\nu_F})\chi_{p\in B^{\sigma}_F}\chi_{q\in B^{\nu}_F}\\
  &-\sum_{k,p,q,\sigma}\frac{W_k
  W_{k+q-p}}
  {\vert k\vert^2+k(q-p)}
  \chi_{k\notin\mathfrak{S}_{q,p}^{\sigma,\sigma}}
  (1-\chi_{p-k\notin B^\sigma_F}\chi_{q+k\notin B^\sigma_F})
  \chi_{p,q\in B^{\sigma}_F}+O(N^{\frac{1}{3}-2\alpha+\varepsilon})
  \end{aligned}
\end{equation}
for any $\varepsilon>0$ small but fixed. Moreover, using Lemma \ref{N^1/3+epsilon lemma} together with the fact that
\begin{equation}\label{W_k diff}
  \vert W_{k+q-p}-W_k\vert\lesssim a\ell\vert q-p\vert\lesssim a\ell N^{\frac{1}{3}},
\end{equation}
we attain from \ref{C_2+C_3 part 1} that
\begin{equation}\label{C_2+C_3 part2}
  \begin{aligned}
  &\sum_{k,p,q,\sigma,\nu}\frac{ W_k^2}
  {\vert k\vert^2+k(q-p)}\chi_{k\notin\mathfrak{S}_{q,p}^{\nu,\sigma}}
  (1-\chi_{p-k\notin B^\sigma_F}\chi_{q+k\notin B^\nu_F})\chi_{p\in B^{\sigma}_F}\chi_{q\in B^{\nu}_F}\\
  &-\sum_{k,p,q,\sigma}\frac{W_k
  W_{k+q-p}}
  {\vert k\vert^2+k(q-p)}
  \chi_{k\notin\mathfrak{S}_{q,p}^{\sigma,\sigma}}
  (1-\chi_{p-k\notin B^\sigma_F}\chi_{q+k\notin B^\sigma_F})
  \chi_{p,q\in B^{\sigma}_F}\\
  &=\sum_{\sigma\neq\nu}\sum_{k,p,q}\frac{ W_k^2}
  {\vert k\vert^2+k(q-p)}\chi_{k\notin\mathfrak{S}_{q,p}^{\nu,\sigma}}
  (1-\chi_{p-k\notin B^\sigma_F}\chi_{q+k\notin B^\nu_F})\chi_{p\in B^{\sigma}_F}\chi_{q\in B^{\nu}_F}\\
  &\quad\quad\quad\quad+O(N^{\frac{1}{3}-\alpha+\varepsilon})
  \end{aligned}
\end{equation}
for any $\varepsilon>0$ small but fixed. Once again we use (\ref{useful bound Z_k, W_k}) to get to
\begin{equation}\label{C_2+C_3 part 3}
  \begin{aligned}
  &\sum_{\sigma\neq\nu}\sum_{k,p,q}\frac{ W_k^2}
  {\vert k\vert^2+k(q-p)}\chi_{k\notin\mathfrak{S}_{q,p}^{\nu,\sigma}}
  (1-\chi_{p-k\notin B^\sigma_F}\chi_{q+k\notin B^\nu_F})\chi_{p\in B^{\sigma}_F}\chi_{q\in B^{\nu}_F}\\
  &=\sum_{\sigma\neq\nu}\sum_{k,p,q}\frac{ (4\pi a\mathfrak{a}_0)^2}
  {\vert k\vert^2+k(q-p)}\chi_{k\notin\mathfrak{S}_{q,p}^{\nu,\sigma}}
  (1-\chi_{p-k\notin B^\sigma_F}\chi_{q+k\notin B^\nu_F})\chi_{p\in B^{\sigma}_F}\chi_{q\in B^{\nu}_F}\\
  &\quad\quad\quad\quad+O(N^{\frac{1}{3}-2\alpha+\varepsilon})
  \end{aligned}
\end{equation}
Combining (\ref{C_2+C_3}-\ref{C_2+C_3 part 3}), we deduce
\begin{equation}\label{C_2+C_3 conclusion}
   \begin{aligned}
  &C_2+C_3=-\sum_{\sigma\neq\nu}\sum_{p,q}\sum_{k\neq0}\frac{W_kZ_k}{\vert k\vert^2}
  \chi_{p\in B^{\sigma}_F}\chi_{q\in B^{\nu}_F}\\
    &+(4\pi a\mathfrak{a}_0)^2\sum_{\substack{\sigma,\nu\in\mathcal{S}_\mathbf{q}\\ \sigma\neq\nu}}
    \sum_{\substack{p,q,k,\\0<\vert k\vert\leq c_N}}
    \Big(\frac{\chi_{p\in B^{\sigma}_F}\chi_{q\in B^{\nu}_F}}{\vert k\vert^2}
    -\frac{
    2\chi_{p-k\notin B^\sigma_F}\chi_{q+k\notin B^\nu_F}
    \chi_{p\in B^{\sigma}_F}\chi_{q\in B^{\nu}_F}}{\big(\vert q+k\vert^2+\vert p-k\vert^2-\vert q\vert^2-\vert p\vert^2\big)}\Big)\\
  &+(4\pi a\mathfrak{a}_0)^2\sum_{\substack{\sigma,\nu\in\mathcal{S}_\mathbf{q}\\ \sigma\neq\nu}}
    \sum_{\substack{p,q,k,\\\vert k\vert> c_N}}
    \Big(\frac{\chi_{p\in B^{\sigma}_F}\chi_{q\in B^{\nu}_F}}{\vert k\vert^2}
    -\frac{
    \chi_{p\in B^{\sigma}_F}\chi_{q\in B^{\nu}_F}}{\big(\vert q+k\vert^2+\vert p-k\vert^2-\vert q\vert^2-\vert p\vert^2\big)}\\
    &\quad\quad\quad\quad\quad\quad\quad\quad
    -\frac{
    \chi_{p\in B^{\sigma}_F}\chi_{q\in B^{\nu}_F}}{\big(\vert q-k\vert^2+\vert p+k\vert^2-\vert q\vert^2-\vert p\vert^2\big)}\Big)+O(N^{\frac{1}{3}-\alpha+\varepsilon})
  \end{aligned}
\end{equation}
where the order of the term on the second line of (\ref{C_2+C_3 conclusion}) is no greater than $N^{\frac{1}{3}+\varepsilon}$ according to an argument by combining (\ref{L1 est case 2 W}) and (\ref{L1 est case 3 W}).

\begin{flushleft}
  \textbf{Conclusion of Lemma \ref{Ground State}:}
\end{flushleft}
Combining (\ref{C_11}) and (\ref{C_2+C_3 conclusion}), and recall the definition of $E^{(2)}$ in (\ref{rewrt E^(2) lemma}), we have
\begin{equation}\label{conclusion constant}
\begin{aligned}
  C_{\mathcal{Z}_N}=&\big(4\pi a\mathfrak{a}_0+6\pi a^2\mathfrak{a}_0^2\ell^{-1}\big)
  \sum_{\sigma\neq\nu}N_{\sigma}N_{\nu}
  -\sum_{\sigma\neq\nu}\sum_{p,q}\sum_{k\neq0}\frac{W_kZ_k}{\vert k\vert^2}
  \chi_{p\in B^{\sigma}_F}\chi_{q\in B^{\nu}_F}\\
  &+E^{(2)}+O(N^{\frac{1}{3}-\alpha+\varepsilon})
\end{aligned}
\end{equation}
for any $\varepsilon>0$ small but fixed. We claim that
\begin{equation}\label{bose}
\begin{aligned}
  6\pi a^2\mathfrak{a}_0^2\ell^{-1}
  \sum_{\sigma\neq\nu}N_{\sigma}N_{\nu}
  -\sum_{\sigma\neq\nu}\sum_{p,q}\sum_{k\neq0}\frac{W_kZ_k}{\vert k\vert^2}
  \chi_{p\in B^{\sigma}_F}\chi_{q\in B^{\nu}_F}=\mathfrak{e}+O(N^{-\frac{1}{3}+2\alpha})
\end{aligned}
\end{equation}
where $\mathfrak{e}=O(1)$ independent of $\ell$ is given by
\begin{equation}\label{define e_0}
    \mathfrak{e}=\Big(2a^2\mathfrak{a}_0^2
-\lim_{M\to\infty}\sum_{\substack{p\in\mathbb{Z}^3\backslash\{0\}\\
\vert p_1\vert,\vert p_2\vert,\vert p_3\vert\leq
M}}\frac{4a^2\mathfrak{a}_0^2\cos(\vert p\vert)}{\vert p\vert^2}\Big)
\sum_{\sigma\neq\nu}N_{\sigma}N_{\nu}
\end{equation}
This term also appears in the second order behaviour of Boson systems in the Gross-Pitaevskii regime, for detailed proof one can check \cite[Lemma 5.4]{2018Bogoliubov} or \cite[Lemma 5.1]{me}.
\end{proof}

\end{lemma}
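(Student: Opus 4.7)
The plan is to split $C_{\mathcal{Z}_N}$ into the three pieces $C_1,C_2,C_3$ as in the statement, evaluate each in closed form modulo acceptable errors, and then combine them so that the $\ell$-dependent divergences cancel between $C_1$ and $C_2+C_3$ up to a bounded constant that is absorbed in the error. First, $C_1$ is immediate from the scattering estimate (\ref{有用的屎}): it produces $4\pi a\mathfrak{a}_0\sum_{\sigma\neq\nu}N_\sigma N_\nu$ together with the $\ell$-dependent correction $6\pi a^2\mathfrak{a}_0^2\ell^{-1}\sum_{\sigma\neq\nu}N_\sigma N_\nu$ and an error of order $N^{-1/3+2\alpha}$, which is acceptable for $\alpha$ small.

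For $C_2$, the crucial identity is the scattering equation rewritten in the form $(|k|^2+k(q-p))\eta_k=(W_k+\eta_k k(q-p))-Z_k$, with the analogous identity for $\eta_{k+q-p}$. Substituting these into $C_2$ converts it into a sum of two structures: a quadratic expression in $(W_k+\eta_k k(q-p))$ that matches the leading form of $C_3$, and a linear-in-$Z_k$ remainder. Before performing the division I will first verify that the contributions where the denominator $|k|^2+k(q-p)$ vanishes, i.e.\ $k\in\mathfrak{S}_{q,p}^{\nu,\sigma}$, are negligible: using $\#\mathfrak{S}_{q,p}^{\nu,\sigma}\lesssim N^{1/3+\varepsilon}$ together with the pointwise bounds on $W_k$, $\eta_k$, and $\eta_k k(q-p)$ from Section \ref{coeff}, each excluded summand contributes at most $N^{1/3-2\alpha+\varepsilon}$ in total, which is within the claimed error.

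The heart of the computation is then in $C_2+C_3$. After the substitution the quadratic-in-$W+\eta k(q-p)$ pieces from $C_2$ combine with the $\xi$-terms from $C_3$ so that, using the definition of $\xi_{k,q,p}^{\nu,\sigma}$ and the observation that $p-k\notin B_F^\sigma,q+k\notin B_F^\nu$ forces $k\notin\mathfrak{S}_{q,p}^{\nu,\sigma}$, only the \textbf{on-shell} part of the sum survives and yields the $E^{(2)}$ piece (after the algebraic symmetrization $k\mapsto -k$ that produces the $|k|>c_N$ convergent representation in (\ref{rewrt E^(2) lemma})). To justify the symmetrization and control the conditionally convergent tail one splits $|k|\leq c_N$ from $|k|>c_N$; in the high regime $|k|^2\pm k(q-p)\geq\tfrac12|k|^2$, so one may use the pointwise bounds $|Z_k-4\pi a\mathfrak{a}_0|\lesssim a^2\ell^{-1}+a^3|k|^2$ and $|W_k-4\pi a\mathfrak{a}_0|\lesssim a^2\ell^{-1}+a\ell^2|k|^2$ to replace $W_kZ_k$ by $(4\pi a\mathfrak{a}_0)^2$ up to an error made optimal by choosing the cutoff $R_0=\ell^{-1}$. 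In the low regime $|k|\leq c_N$ one uses Lemmas \ref{L1 Linfty est lemma} and \ref{N^1/3+epsilon lemma} directly.

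After these manipulations the linear-in-$Z$ remainder of $C_2+C_3$ produces exactly the term $-\sum_{\sigma\neq\nu}\sum_{p,q}\sum_{k\neq 0}W_kZ_k/|k|^2\chi_{p\in B_F^\sigma}\chi_{q\in B_F^\nu}$. Combining this with the $6\pi a^2\mathfrak{a}_0^2\ell^{-1}$ piece of $C_1$ gives precisely the expression analyzed in \cite[Lemma 5.4]{2018Bogoliubov} (or \cite[Lemma 5.1]{me}) in the Boson setting, which equals the $O(1)$ finite-volume constant $\mathfrak{e}$ defined in (\ref{define e_0}) up to $O(N^{-1/3+2\alpha})$; since $\mathfrak{e}=O(1)\ll N^{1/3-\alpha+\varepsilon}$ it is absorbed into the error as claimed in Remark \ref{remark3}. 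Finally, the order estimate $E^{(2)}=O(N^{1/3+\varepsilon})$ follows by applying the combined argument of (\ref{L1 est case 2 W}) and (\ref{L1 est case 3 W}) to the $|k|\leq c_N$ sum (where the Fermi-surface contribution yields the $N^{1/3+\varepsilon}$) and by absolute convergence to the $|k|>c_N$ sum. The main obstacle will be the bookkeeping around the resonant set $\mathfrak{S}_{q,p}^{\nu,\sigma}$: because the denominator vanishes there, one cannot blindly apply the scattering identity, and the cardinality bound $\#\mathfrak{S}_{q,p}^{\nu,\sigma}\lesssim N^{1/3+\varepsilon}$ coming from the lattice-points-on-a-sphere estimate (\ref{approximate lattice 2d sphere}) must be combined carefully with the available $a$-smallness of $W_k$, $Z_k$, and $\eta_k k(q-p)$ to keep each resonant contribution strictly below $N^{1/3-\alpha+\varepsilon}$.
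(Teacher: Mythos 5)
Your proposal follows essentially the same route as the paper's proof: the same decomposition $C_{\mathcal{Z}_N}=C_1+C_2+C_3$, the same use of the scattering identity $(|k|^2+k(q-p))\eta_k=(W_k+\eta_kk(q-p))-Z_k$ to merge the quadratic pieces of $C_2$ with $C_3$, the same resonant-set bookkeeping via $\#\mathfrak{S}_{q,p}^{\nu,\sigma}\lesssim N^{1/3+\varepsilon}$, the same $k\mapsto-k$ symmetrization with cutoff $R_0=\ell^{-1}$ to replace $W_kZ_k$ by $(4\pi a\mathfrak{a}_0)^2$, and the same absorption of the $\ell^{-1}$ divergence together with $-\sum W_kZ_k/|k|^2$ into the $O(1)$ constant $\mathfrak{e}$ via the Boson-case lemma. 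The argument is correct and matches the paper in all essential steps.
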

\subsection{Excitation Estimate}\label{ee}
\
\par Putting together Proposition \ref{p3} and Lemma \ref{Ground State}, we let
\begin{equation}\label{U*H_NU}
 \mathcal{Z}_N=\mathcal{U}^*H_N\mathcal{U}
=E_0+\mathcal{K}_s+e^{-\tilde{B}}\mathcal{V}_4e^{\tilde{B}}+
\mathcal{E}
\end{equation}
where
\begin{equation}\label{define U}
  \mathcal{U}=e^Be^{B^\prime}e^{\tilde{B}}
\end{equation}
and
\begin{equation}\label{define E_0}
  E_0=\sum_{\sigma,k\in B_F^{\sigma}}\vert k\vert^2+4\pi a\mathfrak{a}_0\sum_{\sigma\neq\nu}N_{\sigma}N_{\nu}+E^{(2)}
\end{equation}
We have the following estimate regarding the number of excitation states (sort of an optimal BEC estimate as in the Boson case, see for example, \cite{me}):
\begin{proposition}\label{Optimal BEC}
  \begin{equation}\label{Exci Esti Optimal BEC}
    H_N- E_0\gtrsim\mathcal{N}_{ex}-N^{\frac{1}{3}-\frac{1}{360}}.
  \end{equation}
\end{proposition}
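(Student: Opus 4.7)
\par The strategy is to feed the factorization
\begin{equation*}
\mathcal{Z}_N - E_0 = \mathcal{K}_s + e^{-\tilde{B}}\mathcal{V}_4 e^{\tilde{B}} + \mathcal{E}_{\mathcal{Z}_N}
\end{equation*}
obtained by combining Proposition \ref{p3} and Lemma \ref{Ground State} into the identity $H_N - E_0 = \mathcal{U}(\mathcal{Z}_N - E_0)\mathcal{U}^*$ with $\mathcal{U} = e^B e^{B^\prime}e^{\tilde{B}}$. Because $\mathcal{K}_s$ and $e^{-\tilde{B}}\mathcal{V}_4 e^{\tilde{B}}$ are non-negative and $\mathcal{K}_s\gtrsim \mathcal{N}_{ex}$ by (\ref{ineqn K_s origin}), it suffices to establish the operator inequality
\begin{equation*}
  \mathcal{K}_s + e^{-\tilde{B}}\mathcal{V}_4 e^{\tilde{B}} + \mathcal{E}_{\mathcal{Z}_N} \gtrsim \mathcal{N}_{ex} - N^{\frac{1}{3}-\frac{1}{360}}
\end{equation*}
on the excitation Fock space and then push it back through $\mathcal{U}$. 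The latter step is automatic once one has the Gronwall-type bound $\mathcal{U}(\mathcal{N}_{ex}+1)\mathcal{U}^* \gtrsim \mathcal{N}_{ex}+1 - C$, which follows from the smallness of the coefficient norms $\|\eta\|_2^2\lesssim a^2\ell$ in (\ref{est of eta and eta_perp}) and $\sum|\xi|^2\lesssim a^2 N^{2/3+\varepsilon}$ in (\ref{xi l2}).

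\par The bulk of the contributions to $\mathcal{E}_{\mathcal{Z}_N}$ listed in (\ref{bound E_Z_N}) carry small $N^{-\delta}$ prefactors in front of $\mathcal{K}_s$, $e^{-\tilde{B}}\mathcal{V}_4 e^{\tilde{B}}$, or $\mathcal{N}_{ex}+1$; under the parameter constraints $0<\gamma<\alpha<1/168$ inherited from Proposition \ref{p3} these are absorbed directly into the corresponding positive operators, while the purely constant contributions $N^{-\delta}\cdot N^{1/3+\alpha}$ and $N^{1/3-\alpha/2}$ fit within $N^{1/3-1/360}$ after an appropriate choice of $\alpha$ and $\gamma$. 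The single obstructing term is the polynomial $e^{-\tilde{B}} P_{\mathcal{J}_N}(\mathcal{N}_{ex}) e^{\tilde{B}}$ of (\ref{define P_J_N p2}), whose $(\mathcal{N}_{ex}+1)^3$, $(\mathcal{N}_{ex}+1)^2$, and $(\mathcal{N}_{ex}+1)^{3/2}$ monomials cannot be absorbed by simply inserting the a-priori estimate $\mathcal{N}_{ex}\lesssim N^{2/3}$ of Lemma \ref{lemma prori BEC}: the cubic monomial alone would then produce $N^{-4/3-\alpha+\gamma}\cdot N^2 = N^{2/3-\alpha+\gamma}$, far above the desired target.

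\par To overcome this I will implement an IMS-type localization in $\mathcal{N}_{ex}$. Pick smooth $f,g:[0,\infty)\to[0,1]$ with $f^2+g^2\equiv 1$, $f\equiv 1$ on $[0,1]$, $f\equiv 0$ on $[2,\infty)$, and set $f_M = f(\mathcal{N}_{ex}/M)$, $g_M = g(\mathcal{N}_{ex}/M)$ at a scale $M$ of order $N^{2/3}$. Since $\mathcal{K}$ commutes with $\mathcal{N}_{ex}$, only the excitation-changing pieces $\mathcal{V}_1$, $\mathcal{V}_{21}$, $\mathcal{V}_{23}$, $\mathcal{V}_3$ of $\mathcal{V}$ enter $[f_M, H_N]$, and by Lemma \ref{lemma V_22,23,1} the IMS commutator remainder is of order $M^{-1}$ times positive operators already under control. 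The IMS splitting decomposes any state $\xi = \mathcal{U}^*\psi$ into a low-excitation part $f_M\xi$ on which $\mathcal{N}_{ex}\leq 2M$, and a high-excitation part $g_M\xi$ on which $\mathcal{N}_{ex}\geq M$. On the low part, passing $e^{\pm\tilde{B}}$ through $\mathcal{N}_{ex}$ via Gronwall bounds every monomial in $P_{\mathcal{J}_N}(\mathcal{N}_{ex})$ by $\mathcal{N}_{ex}$ with a polynomially decaying prefactor, e.g.\ the cubic piece becomes $N^{-\alpha+\gamma}\mathcal{N}_{ex}$, which is absorbable into $\mathcal{K}_s$. On the high part, the combination of $\mathcal{K}_s\gtrsim \mathcal{N}_{ex}$ with the crude coercivity $H_N - E_0 \geq \mathcal{K}_s - CN$ (obtained from $\mathcal{V}\geq 0$ together with Lemmas \ref{lemma V_0} and \ref{lemma V_22,23,1}) provides enough positivity to dominate the polynomial error by exploiting the extra factor $\mathcal{N}_{ex}/M\geq 1$ available on this sector.

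\par The principal technical obstacle is the precise quantitative balancing of all these ingredients: one must select $M$, $\alpha$, $\gamma$, and the small $\varepsilon$ appearing in the Fermi-surface point-counting estimates of Lemmas \ref{N^1/3+epsilon lemma} and \ref{xi l2 lemma} so that the low-sector absorption, the high-sector domination, and the IMS localization remainder are all no worse than $N^{1/3-1/360}$. It is exactly this joint optimization against the exponents in (\ref{bound E_Z_N}) and (\ref{define P_J_N p2}) that produces the specific constant $1/360$ in the statement; the coefficient estimates, kinetic controls, and excitation-number manipulations required to execute it have already been assembled in Section \ref{coeff} and Lemma \ref{lemma ineqn K_s}.
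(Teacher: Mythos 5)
Your overall architecture matches the paper's: an IMS localization in $\mathcal{N}_{ex}$ at scale $M\sim N^{2/3}$, absorption of the polynomial errors $e^{-\tilde{B}}P_{\mathcal{J}_N}(\mathcal{N}_{ex})e^{\tilde{B}}$ on the low-excitation sector via $\mathcal{N}_{ex}^n\leq M^{n-1}\mathcal{N}_{ex}$ there, and a separate treatment of the high-excitation sector. However, your high-sector argument has a genuine gap. You propose to conclude from the crude coercivity $H_N-E_0\geq\mathcal{K}_s-CN$ together with $\mathcal{K}_s\gtrsim\mathcal{N}_{ex}$ and the fact that $\mathcal{N}_{ex}\geq M/2$ on $\operatorname{supp} g_M$. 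But on that sector the only lower bound these ingredients give is $g_M(H_N-E_0)g_M\gtrsim \mathcal{N}_{ex}g_M^2-CNg_M^2$ with $\mathcal{N}_{ex}$ guaranteed only to be of size $N^{2/3}\ll N$; the factor $\mathcal{N}_{ex}/M\geq 1$ amplifies the positive term but does nothing to the constant $-CN$, which therefore cannot be absorbed. The unrefined bound $\mathcal{K}_s\gtrsim\mathcal{N}_{ex}$ of (\ref{ineqn K_s origin}) simply does not produce $O(N)$ of positivity from $O(N^{2/3})$ excitations. The missing ingredient is the Fermi-surface refinement: either the quantitative kinetic inequality (\ref{ineqn K_s}) with $\delta=0$, which gives $\mathcal{K}_s\gtrsim N^{1/3}(\mathcal{N}_{ex}-cN^{2/3})$ and hence $\gtrsim N$ on the high sector once the constant $C$ in $M=CN^{2/3}$ is chosen large, or (as the paper does) a contradiction argument: if $\inf_{\psi\in Y}\frac{1}{N}\langle(H_N-E_0)\psi,\psi\rangle$ were not bounded below by a positive constant, one would obtain approximate ground states of order $0$ supported on $\{\mathcal{N}_{ex}\geq M/2\}$, contradicting the a-priori estimate $\langle\mathcal{N}_{ex}\psi_j,\psi_j\rangle\leq cN^{2/3}$ of Lemma \ref{lemma prori BEC} for $C>2c$. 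Both routes rest on the lattice-point counting near the Fermi surface; your proposal invokes neither.

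A secondary omission: the IMS remainder is $\pm\mathcal{E}_M\lesssim M^{-2}(\mathcal{V}_4+N)$, and the resulting term $-N^{-4/3}\mathcal{V}_4$ is not "already under control" — $\mathcal{V}_4$ is not a priori bounded by $\mathcal{K}_s$ or $\mathcal{N}_{ex}$. The paper needs the additional chain (\ref{22222})--(\ref{55555}), i.e.\ a second pass through the conjugations to establish $\mathcal{G}_N-E_0\gtrsim\mathcal{V}_4-N^{2/3-\frac{1}{360}}(\mathcal{N}_{ex}+1)$, before the $\mathcal{V}_4$ contribution can be discarded. Your proposal should account for this step as well.
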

\begin{proof}
  \par We let $f,g:\mathbb{R}\to[0,1]$ be smooth functions such that $f(s)^2+g(s)^2=1$ for all
$s\in\mathbb{R}$, and $f(s)=1$ for $s\leq\frac{1}{2}$, $f(s)=0$ for $s\geq1$. For
\begin{equation}\label{define M}
  M=CN^{\frac{2}{3}}
\end{equation}
with some large but universal constant $C>0$ to be determined, we define
\begin{equation}\label{f_M g_M}
  f_M(s)=f(s/M),\quad g_M(s)=g(s/M).
\end{equation}
Then we can calculate directly
\begin{equation}\label{H_N region I}
  H_N=f_M(\mathcal{N}_{ex})H_Nf_M(\mathcal{N}_{ex})+
g_M(\mathcal{N}_{ex})H_Ng_M(\mathcal{N}_{ex})+\mathcal{E}_M,
\end{equation}
where
\begin{equation}\label{E_M define}
  \mathcal{E}_M=\frac{1}{2}
\left([f_M(\mathcal{N}_{ex}),[f_M(\mathcal{N}_{ex}),H_N]]
+[g_M(\mathcal{N}_{ex}),[g_M(\mathcal{N}_{ex}),H_N]]\right).
\end{equation}
We claim that
\begin{equation}\label{bound E_M}
  \pm\mathcal{E}_M\lesssim M^{-2}(\mathcal{V}_4+N).
\end{equation}
We leave the proof of (\ref{bound E_M}) to Lemma \ref{lemma bound E_M} below.

\par Now we deal with the first term on the right hand side of (\ref{H_N region I}). Using
(\ref{U*H_NU}), we can obtain (we write $f_M=f_M(\mathcal{N}_{ex})$ for short)
\begin{equation}\label{f_MH_Nf_M}
\begin{aligned}
  f_MH_Nf_M=&f_M\mathcal{U}\mathcal{U}^*H_N\mathcal{U}\mathcal{U}^*f_M\\
  =&f_M\mathcal{U}\big(E_0+\mathcal{K}_s+e^{-\tilde{B}}\mathcal{V}_4e^{\tilde{B}}+
\mathcal{E}\big)\mathcal{U}^*f_M.
\end{aligned}
\end{equation}
Let $\psi\in\mathcal{H}^{\wedge N}(\{N_{\varsigma_i}\})$, by the definition
of $f$, we can verify on the Fock space that for all $n\in\frac{1}{2}\mathbb{N}$ and $n>1$
\begin{equation}\label{N_ex<M}
  \langle\mathcal{N}_{ex}^nf_M\psi,f_M\psi\rangle
\leq M^{n-1}\langle \mathcal{N}_{ex}f_M\psi,
f_M\psi\rangle.
\end{equation}
By (\ref{N_ex<M}) and together with Lemmas \ref{control eB on N_ex lemma}, \ref{control eB' on N_ex lemma} and \ref{control eBtilde on N_ex lemma}, we obtain
\begin{equation}\label{N_ex<M f_M}
  \begin{aligned}
  &f_M\mathcal{U}e^{-\tilde{B}}(\mathcal{N}_{ex}+1)^ne^{\tilde{B}}\mathcal{U}^*f_M
  =f_Me^Be^{{B}^\prime}(\mathcal{N}_{ex}+1)^ne^{-{B}^\prime}e^{-B}f_M\\
  &\quad\quad\quad\quad\lesssim
  f_M(\mathcal{N}_{ex}+1)^nf_M\lesssim M^{n-1}f_M(\mathcal{N}_{ex}+1)f_M\\
  &\quad\quad\quad\quad=
  M^{n-1}f_M\mathcal{U}\mathcal{U}^*(\mathcal{N}_{ex}+1)\mathcal{U}\mathcal{U}^*f_M\\
  &\quad\quad\quad\quad\lesssim
  M^{n-1}f_M\mathcal{U}\big[(\mathcal{N}_{ex}+1)+N^{-\frac{2}{33}}
  \big(\mathcal{K}_s+N^{\frac{1}{3}+\alpha}\big)\big]\mathcal{U}^*f_M.
  \end{aligned}
\end{equation}
With the help of (\ref{N_ex<M f_M}) and (\ref{ineqn K_s origin}) i.e. $\mathcal{N}_{ex}\lesssim\mathcal{K}_s$,  we apply Proposition \ref{p3} and Lemma \ref{Ground State} with $\varepsilon>0$ small enough but fixed, and
\begin{equation}\label{choose alpha}
   \alpha=2\gamma=\frac{1}{180}
\end{equation}
to reach
\begin{equation}\label{f_ME_f_M}
\begin{aligned}
  f_M\mathcal{U}\mathcal{E}\mathcal{U}^*f_M
  \gtrsim -N^{-\frac{1}{360}}f_M\mathcal{U}\big\{
  \mathcal{K}_s+e^{-\tilde{B}}\mathcal{V}_4e^{\tilde{B}}+N^{\frac{1}{3}}\big\}
  \mathcal{U}^*f_M.
\end{aligned}
\end{equation}
Since $\mathcal{K}_s\gtrsim\mathcal{N}_{ex}$ and $\mathcal{V}_4\geq0$, we then combine (\ref{f_MH_Nf_M}) and (\ref{f_ME_f_M}) to attain
\begin{equation}\label{mechante1}
  f_M(H_N-E_0)f_M
\gtrsim f_M(\mathcal{N}_+)^2(\mathcal{N}_{ex}-N^{\frac{1}{3}-\frac{1}{360}}).
\end{equation}

\par Now we turn to the second term on the right hand side of (\ref{H_N region I}). We are
going to prove by contradiction, that (we write $g_M=g_M(\mathcal{N}_{ex})$ for short)
\begin{equation}\label{mechante2}
   g_M(H_N-E_0)g_M\gtrsim Ng_M(\mathcal{N}_+)^2
\geq \mathcal{N}_{ex}g_M(\mathcal{N}_{ex})^2.
\end{equation}
By the definition of $g_M$, we observe that
\begin{align*}
g_M(\mathcal{N}_+)(H_N-E_0)g_M(\mathcal{N}_+)
\geq\left(\inf_{\psi\in Y,\Vert\psi\Vert_2=1}\frac{1}{N}
\langle(H_N-E_0)\psi,\psi\rangle\right)Ng_M(\mathcal{N}_+)^2.
\end{align*}
Here $Y\subset \mathcal{H}^{\wedge N}(\{N_{\varsigma_i}\})$ is explicitly given by
\begin{equation}\label{Y}
  Y=U_N^*\bigoplus_{n=0}^{N-\frac{M}{2}}\bigoplus_{s_n\in\mathcal{P}^{(n)}}
  (\mathcal{F}^{\perp})^{\wedge(N-n)}.
\end{equation}
 Then to prove (\ref{mechante2}) it is sufficient to prove
\begin{equation}\label{mechante2.1}
  \inf_{\psi\in Y,\Vert\psi\Vert_2=1}\frac{1}{N}
\langle(H_N-E_0)\psi,\psi\rangle\gtrsim1.
\end{equation}
As we have already attained from Lemmas \ref{0 order approximation lemma} and \ref{Ground State} that
\begin{equation}\label{mechante2222}
  \inf_{\psi\in Y,\Vert\psi\Vert_2=1}\frac{1}{N}
\langle(H_N-E_0)\psi,\psi\rangle
\geq \inf_{\substack{\psi\in \mathcal{H}^{\wedge N}(\{N_{\varsigma_i}\})\\\Vert\psi\Vert_2=1}}\frac{1}{N}
\langle (H_N-E_0)\psi,\psi\rangle\gtrsim -1,
\end{equation}
assuming (\ref{mechante2.1}) is not true, we can find a family of $\{N_j\}$
in the Gross-Piaevskii regime, and $\psi_j\in Y_j$ (the subscript $j$ implies this
space $Y$ depends on $j$) with $\Vert\psi_j\Vert_2=1$, such that
\begin{align*}
   \frac{1}{N_j}\Big\vert\langle H_{N_j}\psi_j,\psi_j\rangle-\sum_{\sigma,k\in B_F^{\sigma}}\vert k\vert^2\Big\vert\lesssim 1.
\end{align*}
That is to say, $\{\psi_j\}$ is a family of approximate ground state of order $0$. Then by Lemma \ref{lemma prori BEC} we infer that
\begin{equation}\label{mechante2.2}
  \langle\mathcal{N}_{ex}\psi_j,\psi_j\rangle\leq
cN_j^{\frac{2}{3}}.
\end{equation}
On the other hand, since $\psi_j\in Y_j$, and recall the definition of $M$ (\ref{define M}), we have
\begin{equation}\label{mechante2.3}
\frac{1}{N_j}\langle\mathcal{N}_{ex}\psi_j,\psi_j\rangle\geq
\frac{M_j}{2N_j}=\frac{1}{2}CN_j^{-\frac{1}{3}}
> cN_j^{-\frac{1}{3}}.
\end{equation}
which contradicts (\ref{mechante2.2}) since we can choose $C>2c$. Hence (\ref{mechante2.1}) and thus (\ref{mechante2}) hold.

\par To conclude (\ref{Exci Esti Optimal BEC}), we first combine (\ref{H_N region I}), (\ref{bound E_M}), (\ref{mechante1}) and (\ref{mechante2}) to attain
\begin{equation}\label{11111}
  H_N-E_0\gtrsim \mathcal{N}_{ex}-N^{\frac{1}{3}-\frac{1}{360}}-N^{-\frac{4}{3}}\mathcal{V}_4.
\end{equation}
By Lemmas \ref{control eB on N_ex lemma} and \ref{corollary control eB on V_4}, we know
\begin{equation}\label{22222}
  \mathcal{G}_N-E_0\gtrsim \mathcal{N}_{ex}-N^{\frac{1}{3}-\frac{1}{360}}-N^{-\frac{4}{3}}\mathcal{V}_4.
\end{equation}
On the other hand, we apply Proposition \ref{p3} and Lemma \ref{Ground State} again with (\ref{choose alpha}) and the naive fact that, for any $n>1$ and $n\in\frac{1}{2}\mathbb{N}$
\begin{equation}\label{naive}
  \begin{aligned}
  (\mathcal{N}_{ex}+1)^n\lesssim N^{n-1}(\mathcal{N}_{ex}+1),\quad
  0\leq\mathcal{N}_{ex}\lesssim\mathcal{K}_s
  \end{aligned}
\end{equation}
we reach
\begin{equation}\label{33333}
  \mathcal{Z}_N-E_0\gtrsim e^{-\tilde{B}}\mathcal{V}_4e^{\tilde{B}}
  -N^{\frac{2}{3}-\frac{1}{360}}e^{-\tilde{B}}(\mathcal{N}_{ex}+1)e^{\tilde{B}}.
\end{equation}
By Lemma \ref{control eB' on N_ex lemma} and Corollary \ref{control eB' on K,V_4}, and using again (\ref{naive}), we attain
\begin{equation}\label{44444}
   \mathcal{G}_N-E_0\gtrsim \mathcal{V}_4
  -N^{\frac{2}{3}-\frac{1}{360}}(\mathcal{N}_{ex}+1).
\end{equation}
Combining (\ref{22222}) and (\ref{44444}), we obtain
\begin{equation}\label{55555}
  \mathcal{G}_N-E_0\gtrsim \mathcal{N}_{ex}-N^{\frac{1}{3}-\frac{1}{360}}.
\end{equation}
Thus we conclude the proof by applying Lemma \ref{control eB on N_ex lemma} on (\ref{55555}).
\end{proof}

\begin{lemma}\label{lemma bound E_M}
Let $\mathcal{E}_M$ be defined in (\ref{E_M define}), we have (\ref{bound E_M}), which is
  \begin{equation}\label{bound E_M lemma}
  \pm\mathcal{E}_M\lesssim M^{-2}(\mathcal{V}_4+N).
\end{equation}
\end{lemma}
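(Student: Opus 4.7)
The plan is as follows. First observe that $\mathcal{K}=\sum_{k,\sigma}|k|^{2}a^{\ast}_{k,\sigma}a_{k,\sigma}$ and $\mathcal{N}_{ex}=\sum_{\sigma,k\notin B_{F}^{\sigma}}a^{\ast}_{k,\sigma}a_{k,\sigma}$ are both diagonal quadratic operators, so part~(3) of Lemma~\ref{lem commutator} gives $[\mathcal{N}_{ex},\mathcal{K}]=0$ and hence $[f_{M}(\mathcal{N}_{ex}),\mathcal{K}]=[g_{M}(\mathcal{N}_{ex}),\mathcal{K}]=0$. Only the potential contributes to $\mathcal{E}_{M}$. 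Next, using part~(2) of Lemma~\ref{lem commutator}, every quartic monomial in $\mathcal{V}$ shifts $\mathcal{N}_{ex}$ by a definite integer $j\in\{-2,-1,0,1,2\}$, namely $\chi_{p-k\notin B_{F}^{\sigma}}+\chi_{q+k\notin B_{F}^{\nu}}-\chi_{p\notin B_{F}^{\sigma}}-\chi_{q\notin B_{F}^{\nu}}$. This produces the decomposition $\mathcal{V}=\sum_{j=-2}^{2}\mathcal{V}^{(j)}$ with $[\mathcal{N}_{ex},\mathcal{V}^{(j)}]=j\mathcal{V}^{(j)}$ and $(\mathcal{V}^{(j)})^{\ast}=\mathcal{V}^{(-j)}$; inspection of (\ref{split V detailed}) shows that the shift-$0$ piece is exactly
\begin{equation*}
\mathcal{V}^{(0)}=\mathcal{V}_{0}+\mathcal{V}_{22}+\mathcal{V}_{23}+\mathcal{V}_{4},
\end{equation*}
while $\mathcal{V}_{1},\mathcal{V}_{21},\mathcal{V}_{3}$ (and their adjoints) carry the shifts $\pm 1,\pm 2$.

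Let $P_{n}$ be the spectral projection onto $\{\mathcal{N}_{ex}=n\}$, so $\mathcal{V}^{(j)}=\sum_{n}P_{n+j}\mathcal{V}P_{n}$. For $F\in\{f_{M},g_{M}\}$ the functional calculus yields
\begin{equation*}
[F(\mathcal{N}_{ex}),[F(\mathcal{N}_{ex}),\mathcal{V}^{(j)}]]=\sum_{n}\bigl(F(n+j)-F(n)\bigr)^{2}P_{n+j}\mathcal{V}P_{n}.
\end{equation*}
The hypothesis $F=f(\cdot/M)$ or $F=g(\cdot/M)$ with smooth $f,g$ yields $|F(n+j)-F(n)|\leq C|j|/M$. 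Writing $X_{m,n}=\langle P_{m}\mathcal{V}P_{n}\psi,\psi\rangle$ and invoking $\mathcal{V}\geq 0$ (visible from the position-space representation $\mathcal{V}=\tfrac{1}{2}\sum_{\sigma,\nu}\int v_{a}(x-y)\,\psi^{\ast}(x,\sigma)\psi^{\ast}(y,\nu)\psi(y,\nu)\psi(x,\sigma)\,dx\,dy$), the Cauchy--Schwarz inequality gives $|X_{m,n}|\leq\sqrt{X_{m,m}X_{n,n}}$, hence, by AM--GM and the shift-invariance $\sum_{n}X_{n+j,n+j}=\sum_{n}X_{n,n}$,
\begin{equation*}
\bigl|\langle[F,[F,\mathcal{V}]]\psi,\psi\rangle\bigr|\leq\frac{C}{M^{2}}\sum_{j\ne 0}j^{2}\sum_{n}X_{n,n}\lesssim\frac{1}{M^{2}}\langle\mathcal{V}^{(0)}\psi,\psi\rangle.
\end{equation*}
Summing the $f_{M}$- and $g_{M}$-contributions yields $\pm\mathcal{E}_{M}\lesssim M^{-2}\mathcal{V}^{(0)}$.

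It remains to show the form inequality $\mathcal{V}_{0}+\mathcal{V}_{22}+\mathcal{V}_{23}\lesssim N$ on $\mathcal{H}^{\wedge N}(\{N_{\varsigma_{i}}\})$, which combined with $\mathcal{V}^{(0)}=\mathcal{V}_{0}+\mathcal{V}_{22}+\mathcal{V}_{23}+\mathcal{V}_{4}$ delivers the required bound. Here I would \emph{not} quote the error estimates (\ref{est E_V_22}) or (\ref{est E_V_23}) verbatim, because the $N^{-2}\mathcal{K}_{s}$ and $N^{1/6}\mathcal{N}_{h}[0]$ contributions there are not absorbed into $\mathcal{V}_{4}+N$; instead I would return to the position-space formulas of (\ref{cal V 03 7}) and (\ref{rewrite V_1,22,23}) and re-estimate directly. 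Using $\|a(g_{x,\sigma})\|\leq\sqrt{N_{\sigma}}$, $a^{\ast}(g_{y,\nu})a(g_{y,\nu})\leq\|g_{y,\nu}\|^{2}=N_{\nu}$, $\int_{\Lambda}v_{a}(x-y)\,dy=\hat v_{0}\lesssim a$, $\int_{\Lambda}a^{\ast}(h_{x,\sigma})a(h_{x,\sigma})\,dx=\mathcal{N}_{ex,\sigma}\leq N$, and $aN\sim 1$, a one-line Cauchy--Schwarz gives $0\leq\mathcal{V}_{0},\mathcal{V}_{22}\lesssim aN_{\sigma}N_{\nu}\lesssim N$ and $|\langle\mathcal{V}_{23}\psi,\psi\rangle|\leq\sum_{\sigma,\nu}\int v_{a}(x-y)\|a(g_{x,\sigma})a(h_{y,\nu})\psi\|\,\|a(h_{x,\sigma})a(g_{y,\nu})\psi\|\,dx\,dy\lesssim N_{\sigma}\hat v_{0}\langle\mathcal{N}_{ex}\psi,\psi\rangle\lesssim N$. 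This is the only genuinely delicate step: since $\mathcal{V}^{(0)}$ is not pointwise dominated by the full positive operator $\mathcal{V}$, the bound has to be produced block by block in position space, taking advantage of the fact that in each of $\mathcal{V}_{0},\mathcal{V}_{22},\mathcal{V}_{23}$ one of the four factors is an in-Fermi-ball operator whose norm is at most $\sqrt{N_{\sigma}}$. Plugging this into the previous display yields $\pm\mathcal{E}_{M}\lesssim M^{-2}(\mathcal{V}_{4}+N)$, as claimed.
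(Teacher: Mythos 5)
Your argument is correct, and it reaches (\ref{bound E_M lemma}) by a genuinely different route than the paper. The paper computes the double commutator explicitly via (\ref{com N_ex a_k,sigma}), obtaining the three off-diagonal families $\mathcal{E}_{h,1},\mathcal{E}_{h,21},\mathcal{E}_{h,3}$ of (\ref{split com [h,[h,H_N]]}) (coming from $\mathcal{V}_1,\mathcal{V}_{21},\mathcal{V}_3$; the kinetic part and the shift-zero potential pieces drop out exactly as you say), and then bounds each family separately by a position-space Cauchy--Schwarz that pairs the out-of-Fermi-ball legs with $\mathcal{V}_4^{1/2}$ and the in-ball legs with $N$, together with $\vert h(s+2)-h(s)\vert^4\lesssim M^{-4}$. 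You instead bound the entire off-diagonal contribution by the block-diagonal part $\mathcal{V}^{(0)}=\sum_n P_n\mathcal{V}P_n$ of the positive operator $\mathcal{V}$, using $\vert\langle P_m\mathcal{V}P_n\psi,\psi\rangle\vert\le\big(\langle P_m\mathcal{V}P_m\psi,\psi\rangle\,\langle P_n\mathcal{V}P_n\psi,\psi\rangle\big)^{1/2}$ and the finiteness of the shift set $j\in\{\pm1,\pm2\}$, and then prove separately that $\mathcal{V}^{(0)}-\mathcal{V}_4=\mathcal{V}_0+\mathcal{V}_{22}+\mathcal{V}_{23}\lesssim N$ as forms. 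Both steps of yours check out: the identification of $\mathcal{V}^{(0)}$ from (\ref{split V detailed}) is correct, and your caution in not quoting (\ref{est E_V_22})--(\ref{est E_V_23}) is warranted, since the $N^{1/6}\mathcal{N}_h[0]$ and $N^{-2}\mathcal{K}_s$ errors there are not controlled by $\mathcal{V}_4+N$; the direct position-space bounds $\Vert a^\sharp(g_{x,\sigma})\Vert\le\sqrt{N_\sigma}$, $a^*(g_{y,\nu})a(g_{y,\nu})\le N_\nu$, $\Vert v_a\Vert_1\lesssim a$ and $aN\sim1$ do give $\pm(\mathcal{V}_0+\mathcal{V}_{22}+\mathcal{V}_{23})\lesssim N$. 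What each approach buys: yours is more structural, makes transparent that only the diagonal of $\mathcal{V}$ can survive in the bound and generalizes immediately to any bounded number of shifts, at the price of the extra form inequality for $\mathcal{V}^{(0)}$; the paper's is more computational but produces $M^{-2}(\mathcal{V}_4+N)$ in a single Cauchy--Schwarz per term without needing to control $\mathcal{V}_0,\mathcal{V}_{22},\mathcal{V}_{23}$ at all.
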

\begin{proof}
  Using the fact that
\begin{equation}\label{com N_ex a_k,sigma}
  [\mathcal{N}_{ex},a_{k,\sigma}]=-\chi_{k\notin B_F^\sigma}a_{k,\sigma},
\end{equation}
and together with (\ref{split V detailed}), we can calculate for any bounded real function $h$ point-wisely defined on $\mathbb{R}$,
\begin{equation}\label{cal com [h,[h,H_N]]}
  \begin{aligned}
  [h(\mathcal{N}_{ex}),[h(\mathcal{N}_{ex}),H_N]]=\mathcal{E}_{h,1}+\mathcal{E}_{h,21}
  +\mathcal{E}_{h,3}
  \end{aligned}
\end{equation}
with
\begin{equation}\label{split com [h,[h,H_N]]}
  \begin{aligned}
  &\mathcal{E}_{h,1}=\sum_{k,p,q,\sigma,\nu}
  \hat{v}_k\big\{a^*_{p-k,\sigma}a^*_{q+k,\nu}a_{q,\nu}a_{p,\sigma}
  \big(h(\mathcal{N}_{ex}+1)-h(\mathcal{N}_{ex})\big)^2+h.c.\big\}\\
  &\quad\quad\quad\quad\quad\quad\quad\quad\times
  \chi_{p-k\notin B^{\sigma}_F}\chi_{p\in B^{\sigma}_F}\chi_{q,q+k\in B^{\nu}_F}\\
  &\mathcal{E}_{h,21}=\frac{1}{2}\sum_{k,p,q,\sigma,\nu}
  \hat{v}_k\big\{a^*_{p-k,\sigma}a^*_{q+k,\nu}a_{q,\nu}a_{p,\sigma}
  \big(h(\mathcal{N}_{ex}+2)-h(\mathcal{N}_{ex})\big)^2+h.c.\big\}\\
  &\quad\quad\quad\quad\quad\quad\quad\quad\times
  \chi_{p-k\notin B^{\sigma}_F}\chi_{q+k\notin B^{\nu}_F}\chi_{p\in B^{\sigma}_F}\chi_{q\in B^{\nu}_F}\\
  &\mathcal{E}_{h,3}=\sum_{k,p,q,\sigma,\nu}
  \hat{v}_k\big\{a^*_{p-k,\sigma}a^*_{q+k,\nu}a_{q,\nu}a_{p,\sigma}
  \big(h(\mathcal{N}_{ex}+1)-h(\mathcal{N}_{ex})\big)^2+h.c.\big\}\\
  &\quad\quad\quad\quad\quad\quad\quad\quad\times
  \chi_{p-k\notin B^{\sigma}_F}\chi_{p\in B^{\sigma}_F}\chi_{q,q+k\notin B^{\nu}_F}
  \end{aligned}
\end{equation}
Then (\ref{bound E_M lemma}) is deduced by controlling these there terms individually, for $h=f_M,\,g_M$. We treat $\mathcal{E}_{f_M,21}$ as a typical example, other terms can be bounded similarly. Let $\psi\in\mathcal{H}^{\wedge N}(\{N_{\varsigma_i}\})$, estimating on position space, we know that, since
\begin{equation}\label{E_M 1}
  \begin{aligned}
  \mathcal{E}_{f_M,21}=&\frac{1}{2}\sum_{\sigma,\nu}\int_{\Lambda^2}
  v_a(x-y)a^*(h_{x,\sigma})a^*(h_{y,\nu})a(g_{y,\nu})a(g_{x,\sigma})\\
  &\quad\quad\quad\quad\times
  \big(h(\mathcal{N}_{ex}+1)-h(\mathcal{N}_{ex})\big)^2dxdy+h.c.
  \end{aligned}
\end{equation}
then
\begin{equation}\label{E_M_2}
  \begin{aligned}
  &\vert\langle \mathcal{E}_{f_M,21},\psi,\psi\rangle\vert
  \lesssim \Big(\sum_{\sigma,\nu}\int_{\Lambda^2}
  v_a(x-y)\langle a^*(h_{x,\sigma})a^*(h_{y,\nu})a(h_{y,\nu})a(h_{x,\sigma})\psi,
  \psi\rangle\Big)^{\frac{1}{2}}\\
  &\quad\quad\quad\quad\times
  \Big(\sum_{\sigma,\nu}\int_{\Lambda^2}
  v_a(x-y) N^2\langle\big(h(\mathcal{N}_{ex}+2)-h(\mathcal{N}_{ex})\big)^4 \psi,
  \psi\rangle\Big)^{\frac{1}{2}}
  \end{aligned}
\end{equation}
Notice that for $h=f_M,\,g_M$ and $t>0$,
\begin{equation}\label{mechante4}
\begin{aligned}
\vert h(s+t)-h(s)\vert&=0,\quad \text{$s<\frac{M}{2}-t$ or $s>M$}\\
 \vert h(s+t)-h(s)\vert&\lesssim\frac{t}{M},\quad\text{$\frac{M}{2}-t\leq s\leq M$}
\end{aligned}
\end{equation}
Therefore,
\begin{equation}\label{mechante5}
  \pm \big(h(\mathcal{N}_{ex}+2)-h(\mathcal{N}_{ex})\big)^4\lesssim M^{-4},
\end{equation}
which yields
\begin{equation}\label{mechante6}
  \pm\mathcal{E}_{f_M,21}\lesssim M^{-2}(\mathcal{V}_4+N).
\end{equation}
\end{proof}

\subsection{Proof of Theorem \ref{core}}\label{proof core}
\begin{proof}[Upper bound]
  \par We can choose the test function
\begin{equation*}
  \psi=e^Be^{B^\prime}e^{\tilde{B}}\bigwedge_{\sigma,k\in B_F^{\sigma}}f_{k,\sigma}.
\end{equation*}
In other words, $\psi=e^Be^{B^\prime}e^{\tilde{B}}f_{s_N}$, where $s_N$ is the unique element of $\mathcal{P}^{(N)}$. Notice that $\Vert\psi\Vert=1$. By definition,
\begin{equation*}
  \langle H_N\psi,\psi\rangle=\langle\mathcal{Z}_Nf_{s_N},f_{s_N}\rangle.
\end{equation*}
We then apply Proposition \ref{p3} and Lemma \ref{Ground State} (In other words, using equation (\ref{U*H_NU})) with (\ref{choose alpha}). This time we claim that for $\vert t\vert\leq 1$ and $n\in\frac{1}{2}\mathbb{N}$,
\begin{equation}\label{claim appB}
  \langle\mathcal{N}_{ex}^ne^{t\tilde{B}}f_{s_N},e^{t\tilde{B}}f_{s_N}\rangle\lesssim 1.
\end{equation}
Moreover we use Lemmas \ref{control eBtiled on K_s lemma} and \ref{control eBtilde on V_4 lemma}. On the other hand, we can calculate directly using (\ref{formulae of crea and annihi ops}) to obtain
\begin{equation*}
  \mathcal{V}_4f_{s_N},\,\mathcal{N}_{ex}f_{s_N},\,\mathcal{K}_sf_{s_N}=0.
\end{equation*}
Thus we conclude that
\begin{equation}\label{uppper}
  \langle\mathcal{Z}_Nf_{s_N},f_{s_N}\rangle-E_0\lesssim N^{\frac{1}{3}-\frac{1}{360}},
\end{equation}
which gives the upper bound of $E_{\{N_{\varsigma_i}\},\mathbf{q}}$. We detour the proof of (\ref{claim appB}) to Appendix \ref{claim}.
\end{proof}

\begin{proof}[Lower bound]
  \par The lower bound of $E_{\{N_{\varsigma_i}\},\mathbf{q}}$ is a direct consequence of the excitations estimate (\ref{Exci Esti Optimal BEC}) from Proposition \ref{Optimal BEC} since obviously $\mathcal{N}_{ex}\geq0$.
\end{proof}

\section{Quadratic Renormalization}\label{qua}
\par In this section, we analyse the excitation Hamiltonian $\mathcal{G}_N$ defined in (\ref{define G_N 0}) and prove Proposition \ref{p1}. Recall that we always set
\begin{equation}\label{choose l}
  \ell=N^{-\frac{1}{3}-\alpha}
\end{equation}
with $0<\alpha<\frac{1}{6}$. Recall (\ref{define B 0}) and (\ref{define A 0})
\begin{equation}\label{define B}
  B=\frac{1}{2}(A-A^*)
\end{equation}
and
\begin{equation}\label{define A}
  A=\sum_{k,p,q,\sigma,\nu}
  \eta_ka^*_{p-k,\sigma}a^*_{q+k,\nu}a_{q,\nu}a_{p,\sigma}\chi_{p-k\notin B^{\sigma}_F}\chi_{q+k\notin B^{\nu}_F}\chi_{p\in B^{\sigma}_F}\chi_{q\in B^{\nu}_F}.
\end{equation}

\par Using (\ref{second quantization H_N}) and Newton-Leibniz formula,  we rewrite $\mathcal{G}_N$ by
\begin{equation}\label{define G_N}
  \begin{aligned}
  \mathcal{G}_N\coloneqq e^{-B}H_Ne^B=&\mathcal{K}+\mathcal{V}_{4}+\mathcal{V}_{21}^\prime
  +\Omega+e^{-B}\big(\mathcal{V}_{0}
  +\mathcal{V}_{1}+\mathcal{V}_{22}+\mathcal{V}_{23}+\mathcal{V}_{3}\big)e^B\\
  &+\int_{0}^{1}e^{-tB}\Gamma e^{tB}dt
  +\int_{0}^{1}\int_{0}^{t}e^{-sB}[\mathcal{V}_{21}^\prime+\Omega,B]e^{sB}dsdt\\
  &+\int_{0}^{1}\int_{t}^{1}e^{-sB}[\mathcal{V}_{21},B]e^{sB}dsdt
  \end{aligned}
\end{equation}
with
\begin{equation}\label{define V_21' and Omega}
  \begin{aligned}
  \mathcal{V}_{21}^\prime&=\sum_{k,p,q,\sigma,\nu}
  W_k(a^*_{p-k,\sigma}a^*_{q+k,\nu}a_{q,\nu}a_{p,\sigma}+h.c.)\chi_{p-k\notin B^{\sigma}_F}\chi_{q+k\notin B^{\nu}_F}\chi_{p\in B^{\sigma}_F}\chi_{q\in B^{\nu}_F}\\
  \Omega&=\sum_{k,p,q,\sigma,\nu}
  \eta_kk(q-p)(a^*_{p-k,\sigma}a^*_{q+k,\nu}a_{q,\nu}a_{p,\sigma}+h.c.)\\
  &\quad\quad\quad\quad\quad\quad\quad\quad\quad
  \times\chi_{p-k\notin B^{\sigma}_F}\chi_{q+k\notin B^{\nu}_F}\chi_{p\in B^{\sigma}_F}\chi_{q\in B^{\nu}_F}
  \end{aligned}
\end{equation}
and
\begin{equation}\label{define Gamma}
  \Gamma=[\mathcal{K}+\mathcal{V}_4,B]+\mathcal{V}_{21}-\mathcal{V}_{21}^\prime-\Omega.
\end{equation}
We then prove Proposition \ref{p1} by analysing each term on the right hand side of (\ref{define G_N}) up to some error terms that may be discarded in the large $N$ limit. Corresponding results and detailed proofs are collected in Lemmas \ref{cal eB on V_0,1,22,23 lemma}-\ref{cal com V_21 lemma}. To control some of the error terms that may not contribute to the ground state energy up to the second order, we first Lemmas \ref{control eB on N_ex lemma}, \ref{variant lem 7.1 lemma}, \ref{corollary control eB on V_4} and Corollary \ref{corollary control eB on K_s} that control the action of $e^B$ on $\mathcal{N}_{ex}$, $\mathcal{K}_s$ and $\mathcal{V}_4$, which are fairly useful in the up-coming discussion. Lemma \ref{control eB on N_ex lemma} collects estimates that control the action of $e^B$ on several kinds of particle number operators defined in Section \ref{number of p ops}.

\begin{lemma}\label{control eB on N_ex lemma}
  Let $\ell$ be chosen as in (\ref{choose l}). Then for $n\in\frac{1}{2}\mathbb{Z}$, $\vert t\vert\leq1$, we have
  \begin{align}
    e^{-tB}(\mathcal{N}_{ex}+1)^ne^{tB} & \lesssim (\mathcal{N}_{ex}+1)^n \label{control eB on N_ex}\\
    \pm\big(e^{-tB}(\mathcal{N}_{ex}+1)^ne^{tB}-(\mathcal{N}_{ex}+1)^n\big) & \lesssim
    \ell^{\frac{1}{2}}(\mathcal{N}_{ex}+1)^n\label{control diff eB on N_ex}
  \end{align}
  Moreover, (\ref{control eB on     N_ex}) and (\ref{control diff eB on N_ex}) still hold if we replace the $\mathcal{N}_{ex}$ on the left hand side by $\mathcal{N}_{ex,\sigma}$. On the other hand, for $-\frac{1}{3}\leq\delta_1<\frac{1}{3}$ and $\delta_2\geq-\frac{1}{3}$,
  \begin{align}
    e^{-tB}\mathcal{N}_{h}[\delta_2]e^{tB}&\lesssim \mathcal{N}_{h}[\delta_2]+\ell(\mathcal{N}_{ex}+1)\label{control eB on N_h}\\
    e^{-tB}\mathcal{N}_{i}[\delta_1]e^{tB}&\lesssim \mathcal{N}_{i}[\delta_1]+ \mathcal{N}_{h}[\delta_1]+\ell(\mathcal{N}_{ex}+1)\label{control eB on N_i}\\
    e^{-tB}\mathcal{N}_{h}[\delta_2]\mathcal{N}_{ex}e^{tB}&\lesssim
    \mathcal{N}_{h}[\delta_2]\mathcal{N}_{ex}+\ell(\mathcal{N}_{ex}+1)^2\label{jjbound}
  \end{align}
\end{lemma}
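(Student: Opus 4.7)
The plan is to prove each inequality by a Grönwall-type argument applied to $t\mapsto e^{-tB}X e^{tB}$, where $X$ is the relevant particle number operator. The basic template is: compute $[X,B]$, bound it in terms of $X$ (plus error) with a small prefactor that decays as $\ell\to 0$, and conclude via the differential inequality
\[
\frac{d}{dt}\langle e^{-tB}X e^{tB}\psi,\psi\rangle \;=\; \langle e^{-tB}[X,B]e^{tB}\psi,\psi\rangle.
\]
All the commutators in question turn out to be controlled by $\ell^{1/2}(\mathcal{N}_{ex}+1)^{n}$ plus lower-order pieces, which makes Grönwall effective since $\ell=N^{-1/3-\alpha}$ is small.

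First I would handle $n=1$ in (\ref{control eB on N_ex}) and (\ref{control diff eB on N_ex}). Using (\ref{quadratic commu}) and the momentum constraints $p,q\in B_F$, $p-k,q+k\notin B_F$ built into $A$, one checks directly that $[\mathcal{N}_{ex},A]=2A$, hence $[\mathcal{N}_{ex},B]=A+A^{*}$. The core estimate is
\[
|\langle (A+A^{*})\psi,\psi\rangle|\;\lesssim\;\ell^{1/2}\langle(\mathcal{N}_{ex}+1)\psi,\psi\rangle.
\]
To obtain this, write $A$ in position space as $A=-\sum_{\sigma,\nu}\int \eta(x-y)\, c_{x,\sigma}c_{y,\nu}\,dxdy$ with $c_{x,\sigma}=a^{*}(h_{x,\sigma})a(g_{x,\sigma})$, apply Cauchy--Schwarz with weight $|\eta(x-y)|^{2}$, and combine the operator bound $\|a(g_{x,\sigma})\|_{\mathrm{op}}\leq N^{1/2}$ with the integral identity $\sum_{\sigma}\int a^{*}(h_{x,\sigma})a(h_{x,\sigma})\,dx=\mathcal{N}_{ex}$ from (\ref{useful eqn2}); the smallness comes from $\|\eta\|_{2}^{2}\lesssim a^{2}\ell$ from (\ref{est of eta and eta_perp}) together with $a=N^{-1}$, so that $N^{2}\|\eta\|_{2}^{2}\lesssim \ell$. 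Splitting $\psi=\sum_{n}P_{n}\psi$ into excitation sectors (on which $A$ shifts by $\pm 2$) lets one trade the factor $\mathcal{N}_{ex}$ against the remaining copy of $(\mathcal{N}_{ex}+1)^{1/2}$ without losing a power of $N$. Grönwall then gives (\ref{control eB on N_ex}) for $n=1$, and the integrated form of the same inequality yields (\ref{control diff eB on N_ex}).

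Next I would pass to general $n\in\tfrac{1}{2}\mathbb{Z}$. For positive integer $n$, iterate the commutator via the shift identity $(\mathcal{N}_{ex}+1)^{n}A=A(\mathcal{N}_{ex}+3)^{n}$ and the mean value bound $(\mathcal{N}_{ex}+3)^{n}-(\mathcal{N}_{ex}+1)^{n}\lesssim (\mathcal{N}_{ex}+1)^{n-1}$, so that $\pm[(\mathcal{N}_{ex}+1)^{n},B]\lesssim\ell^{1/2}(\mathcal{N}_{ex}+1)^{n}$; Grönwall gives (\ref{control eB on N_ex}). Half-integer and negative powers follow from the Löwner--Heinz operator monotonicity applied to the integer case, using that conjugation by the unitary $e^{tB}$ commutes with functional calculus. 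The statement with $\mathcal{N}_{ex,\sigma}$ is identical, since each spin component of the excitation count is shifted by $2$ only through terms of $A$ involving that spin, and the same Cauchy--Schwarz works spin by spin.

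For the cut-off versions (\ref{control eB on N_h})--(\ref{jjbound}), the plan is to decompose $[\mathcal{N}_{h}[\delta_{2}],B]$ (and similarly for $\mathcal{N}_{i}[\delta_{1}]$) using the identity $\mathcal{N}_{ex}=\mathcal{N}_{h}[\delta]+\mathcal{N}_{l}[\delta]=\mathcal{N}_{s}[\delta']+\mathcal{N}_{i}[\delta']$ from (\ref{tools3}). The ``bulk'' piece reduces to $[\mathcal{N}_{ex},B]$ already handled, while the ``boundary'' piece involves only those momenta $p-k$ or $q+k$ lying in the thin shell $A_{F,\delta_{2}}^{\sigma}$, and gives an error of order $\ell(\mathcal{N}_{ex}+1)$ via a Cauchy--Schwarz that uses $\|\eta\|_{1}\lesssim a\ell^{2}$ from (\ref{est of eta_0}) instead of $\|\eta\|_{2}$. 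The mixed bound (\ref{jjbound}) follows by the same strategy applied to the product $\mathcal{N}_{h}[\delta_{2}]\mathcal{N}_{ex}$, using that both factors shift by bounded amounts under conjugation and their cross-commutators with $B$ are controlled by the previous two steps.

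The main obstacle is the sharpness of the core estimate on $A$: naive operator-norm bounds produce factors of $N$ from the Fermi-sea sums and are hopelessly large, so one must exploit both the sector-shift structure $[\mathcal{N}_{ex},A]=2A$ and the smallness $\|\eta\|_{2}\lesssim a\ell^{1/2}\sim N^{-1}\ell^{1/2}$ simultaneously. Balancing these in the Cauchy--Schwarz so that exactly one copy of $(\mathcal{N}_{ex}+1)^{1/2}$ survives on each side, while the $N^{2}$ coming from the phase-space volume of the Fermi seas is absorbed by $\|\eta\|_{2}^{2}$, is the delicate point; once this is done, all bounds in the lemma follow by standard Grönwall and by bookkeeping of cut-offs.
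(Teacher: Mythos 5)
Your treatment of \eqref{control eB on N_ex} and \eqref{control diff eB on N_ex} matches the paper: you correctly compute $[\mathcal{N}_{ex},B]=A+A^{*}$, pass to position space, and run Cauchy--Schwarz with $(\mathcal{N}_{ex}+3)^{\pm1/2}$ weights so that one factor of $\mathcal{N}_{ex}$ survives on each side, with smallness $N\|\eta\|_{2}\lesssim\ell^{1/2}$ absorbing the Fermi-sea volume; Gr\"onwall and Newton--Leibniz then give the $n=1$ case. Your extension to general $n$ (iterated commutator for integer $n$, L\"owner--Heinz for half-integer and negative $n$) is a reasonable route; the paper delegates this to a reference but it is the same circle of ideas.

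However, your plan for the cut-off estimates \eqref{control eB on N_h}--\eqref{jjbound} does not produce the stated bounds. You propose to write $\mathcal{N}_{h}[\delta_{2}]=\mathcal{N}_{ex}-\mathcal{N}_{l}[\delta_{2}]$, treat $[\mathcal{N}_{ex},B]$ as the ``bulk'' with the already-established bound, and control a thin-shell ``boundary'' by $\ell(\mathcal{N}_{ex}+1)$. The bulk term alone, though, is of size $\ell^{1/2}(\mathcal{N}_{ex}+1)$, and $\ell^{1/2}(\mathcal{N}_{ex}+1)\not\lesssim\mathcal{N}_{h}[\delta_{2}]+\ell(\mathcal{N}_{ex}+1)$ (take a state with no high-momentum excitations but many low ones). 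So this decomposition loses half a power of $\ell$ and proves a strictly weaker estimate than \eqref{control eB on N_h}. The paper instead computes the commutator $[\mathcal{N}_{h}[\delta_{2}],A]$ \emph{directly}: the commutator automatically carries the cut-off $\chi_{p-k\in P_{F,\delta_{2}}^{\sigma}}$ in place of $\chi_{p-k\notin B_{F}^{\sigma}}$, so in position space the relevant creation operator is $a^{*}(H_{x,\sigma}[\delta_{2}])$ rather than $a^{*}(h_{x,\sigma})$. The Cauchy--Schwarz then puts $\langle\mathcal{N}_{h}[\delta_{2}]\cdot,\cdot\rangle^{1/2}$ on one side and $N\|\eta\|_{2}\langle(\mathcal{N}_{ex}+1)\cdot,\cdot\rangle^{1/2}$ on the other, and Young's inequality converts $N\|\eta\|_{2}\,\mathcal{N}_{h}^{1/2}(\mathcal{N}_{ex}+1)^{1/2}$ into $\mathcal{N}_{h}[\delta_{2}]+N^{2}\|\eta\|_{2}^{2}(\mathcal{N}_{ex}+1)=\mathcal{N}_{h}[\delta_{2}]+\ell(\mathcal{N}_{ex}+1)$. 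Without placing the cut-off particle number operator on one leg of the Cauchy--Schwarz, the factor of $\ell$ (as opposed to $\ell^{1/2}$) cannot be obtained. Your appeal to $\|\eta\|_{1}\lesssim a\ell^{2}$ for the ``boundary'' piece does not rescue this: using the $L^{1}$ norm of $\eta$ forces larger $N$-powers from the unbounded operator-norm estimates and is not what produces the $\ell$ factor here.

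The $\mathcal{N}_{i}$ bound \eqref{control eB on N_i} has an additional subtlety you do not address: $[\mathcal{N}_{i}[\delta_{1}],A]$ restricts the annihilation operator to $a(I_{x,\sigma}[\delta_{1}])$, but this alone does not permit the Cauchy--Schwarz weighting used for $\mathcal{N}_{h}$, because $\|I_{x,\sigma}\|\sim N^{1/2}$ is not small. The paper therefore further decomposes the remaining $a^{*}(h_{x,\sigma})$ into $a^{*}(H_{x,\sigma}[\delta_{1}])+a^{*}(L_{x,\sigma}[\delta_{1}])$ and treats the two pieces differently; this is why $\mathcal{N}_{h}[\delta_{1}]$ appears on the right-hand side of \eqref{control eB on N_i} alongside $\mathcal{N}_{i}[\delta_{1}]$. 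Your proposal has no mechanism that would generate that $\mathcal{N}_{h}[\delta_{1}]$ term.
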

\begin{proof}
  \par Calculating directly using Lemma \ref{lem commutator} yields $[\mathcal{N}_{ex},A]=2A$ and therefore $[\mathcal{N}_{ex}, B]=(A+A^*)$. For $\psi\in\mathcal{H}^{\wedge N}(\{N_{\varsigma_i}\})$, we can bound
  \begin{equation}\label{Lemma7.1 1}
  \begin{aligned}
    &\vert\langle A\psi,\psi\rangle\vert
    =\Big\vert\sum_{\sigma,\nu}\int_{\Lambda^2}\eta(x-y)\langle a^*(h_{x,\sigma})a^*(h_{y,\nu})
    a(g_{y,\nu})a(g_{x,\sigma})\psi,\psi\rangle dxdy\Big\vert\\
    &\lesssim\sum_{\sigma,\nu}\int_{\Lambda^2}\vert\eta(x-y)\vert
    \Vert (\mathcal{N}_{ex}+3)^{-\frac{1}{2}}a(h_{y,\nu})a(h_{x,\sigma})\psi\Vert
    \Vert (\mathcal{N}_{ex}+3)^{\frac{1}{2}}a(g_{y,\nu})a(g_{x,\sigma})\psi\Vert\\
    &\lesssim \Big(\sum_{\sigma,\nu}\int_{\Lambda^2}\langle
    a^*(h_{x,\sigma})a^*(h_{y,\nu})a(h_{y,\nu})a(h_{x,\sigma})
     (\mathcal{N}_{ex}+1)^{-\frac{1}{2}}\psi,(\mathcal{N}_{ex}+1)^{-\frac{1}{2}}\psi
     \rangle\Big)^{\frac{1}{2}}\\
     &\quad\quad\quad\quad
     \times\Big(\sum_{\sigma,\nu}\int_{\Lambda^2}\vert\eta(x-y)\vert^2N^2
     \langle(\mathcal{N}_{ex}+3)^{\frac{1}{2}}\psi,\psi\rangle\Big)^{\frac{1}{2}}\\
     &\lesssim N\Vert\eta\Vert_2\langle(\mathcal{N}_{ex}+1)\psi,\psi\rangle
     \lesssim \ell^{\frac{1}{2}}\langle(\mathcal{N}_{ex}+1)\psi,\psi\rangle
    \end{aligned}
  \end{equation}
  where we have used (\ref{est of eta and eta_perp}) in the last inequality. By the fact that
  \begin{equation*}
    \frac{d}{dt}\big(e^{-tB}(\mathcal{N}_{ex}+1)e^{tB}\big)
    =e^{-tB}[\mathcal{N}_{ex},B]e^{tB},
  \end{equation*}
  we reach (\ref{control eB on     N_ex}) and (\ref{control diff eB on N_ex}) for $n=1$ by Gronwall's inequality and Newton-Leibniz formula. For the proof of (\ref{control eB on     N_ex}) and (\ref{control diff eB on N_ex}) for arbitrary $n\in\frac{1}{2}\mathbb{Z}$, one can see the proof of \cite[Lemma 7.1]{me} for details. Moreover, (\ref{control eB on     N_ex}) and (\ref{control diff eB on N_ex}) still hold if we replace the $\mathcal{N}_{ex}$ on the left hand side by $\mathcal{N}_{ex,\sigma}$. Since by Lemma \ref{lem commutator}, for fixed $\sigma\in\mathcal{S}_{\mathbf{q}}$, we have
  \begin{equation*}
    [\mathcal{N}_{ex,\sigma},A]=2\sum_{k,p,q,\nu}
  \eta_ka^*_{p-k,\sigma}a^*_{q+k,\nu}a_{q,\nu}a_{p,\sigma}\chi_{p-k\notin B^{\sigma}_F}\chi_{q+k\notin B^{\nu}_F}\chi_{p\in B^{\sigma}_F}\chi_{q\in B^{\nu}_F},
  \end{equation*}
  thus we can argue similar to (\ref{Lemma7.1 1}).
  \par The proofs of (\ref{control eB on N_h}) and (\ref{control eB on N_i}) are also similar. Take the proof of (\ref{control eB on N_h}) for instance, by Lemma \ref{lem commutator}, we have
  \begin{equation*}
    [\mathcal{N}_{h}[\delta_2],A]=2\sum_{k,p,q,\sigma,\nu}
  \eta_ka^*_{p-k,\sigma}a^*_{q+k,\nu}a_{q,\nu}a_{p,\sigma}\chi_{p-k\in P^{\sigma}_{F,\delta_2}}\chi_{q+k\notin B^{\nu}_F}\chi_{p\in B^{\sigma}_F}\chi_{q\in B^{\nu}_F}.
  \end{equation*}
  Following the evaluation in (\ref{Lemma7.1 1}), we can bound
  \begin{equation}\label{Lemma7.1 2}
  \begin{aligned}
    &\frac{1}{2}\big\vert\langle [\mathcal{N}_{h}[\delta_2],A]\psi,\psi\rangle\big\vert
    =\Big\vert\sum_{\sigma,\nu}\int_{\Lambda^2}\eta(x-y)\langle a^*(H_{x,\sigma}[\delta_2])a^*(h_{y,\nu})
    a(g_{y,\nu})a(g_{x,\sigma})\psi,\psi\rangle\Big\vert\\
    &\lesssim\sum_{\sigma,\nu}\int_{\Lambda^2}\vert\eta(x-y)\vert
    \Vert (\mathcal{N}_{ex}+3)^{-\frac{1}{2}}a(h_{y,\nu})a(H_{x,\sigma}[\delta_2])\psi\Vert\\
    &\quad\quad\quad\quad\times
    \Vert (\mathcal{N}_{ex}+3)^{\frac{1}{2}}a(g_{y,\nu})a(g_{x,\sigma})\psi\Vert\\
    &\lesssim \Big(\sum_{\sigma,\nu}\int_{\Lambda^2}\langle
    a^*(H_{x,\sigma}[\delta_2])a^*(h_{y,\nu})a(h_{y,\nu})(\mathcal{N}_{ex}+2)^{-1}
    a(H_{x,\sigma}[\delta_2])
     \psi,\psi
     \rangle\Big)^{\frac{1}{2}}\\
     &\quad\quad\quad\quad
     \times\Big(\sum_{\sigma,\nu}\int_{\Lambda^2}\vert\eta(x-y)\vert^2N^2
     \langle(\mathcal{N}_{ex}+3)^{\frac{1}{2}}\psi,\psi\rangle\Big)^{\frac{1}{2}}\\
     &\lesssim N\Vert\eta\Vert_2
     \langle\mathcal{N}_h[\delta_2]\psi,\psi\rangle^{\frac{1}{2}}
     \langle(\mathcal{N}_{ex}+1)\psi,\psi\rangle^{\frac{1}{2}}
     \lesssim\langle\mathcal{N}_h[\delta_2]\psi,\psi\rangle
     + \ell\langle(\mathcal{N}_{ex}+1)\psi,\psi\rangle.
    \end{aligned}
  \end{equation}
  Combining (\ref{Lemma7.1 2}) with (\ref{control eB on     N_ex}), we reach (\ref{control eB on N_h}) via Gronwall's inequality. On the other hand, we also have
  \begin{equation*}
    [\mathcal{N}_{i}[\delta_1],A]=2\sum_{k,p,q,\sigma,\nu}
  \eta_ka^*_{p-k,\sigma}a^*_{q+k,\nu}a_{q,\nu}a_{p,\sigma}\chi_{p-k\notin B^{\sigma}_{F}}\chi_{q+k\notin B^{\nu}_F}\chi_{p\in \underline{B}^{\sigma}_{F,\delta_1}}\chi_{q\in B^{\nu}_F}.
  \end{equation*}
  Switching to the position space, we rewrite $[\mathcal{N}_{i}[\delta_1],A]=A_{\delta_1,1}+A_{\delta_1,2}$ with
  \begin{equation}\label{A_delta_1 1and2}
    \begin{aligned}
    A_{\delta_1,1}&=2\sum_{\sigma,\nu}\int_{\Lambda^2}\eta(x-y) a^*(H_{x,\sigma}[\delta_1])a^*(h_{y,\nu})
    a(g_{y,\nu})a(I_{x,\sigma}[\delta_1])dxdy\\
     A_{\delta_1,2}&=2\sum_{\sigma,\nu}\int_{\Lambda^2}\eta(x-y) a^*(L_{x,\sigma}[\delta_1])a^*(h_{y,\nu})
    a(g_{y,\nu})a(I_{x,\sigma}[\delta_1])dxdy
    \end{aligned}
  \end{equation}
  Similar to (\ref{Lemma7.1 2}), we can bound $A_{\delta_1,1}$ by
  \begin{equation}\label{bound A_delta_1,1}
   \pm A_{\delta_1,1}\lesssim \mathcal{N}_h[\delta_1]+\ell(\mathcal{N}_{ex}+1),
  \end{equation}
  since $\Vert a(I_{x,\sigma}[\delta_1])\Vert^2\leq N$. For the second term, we can bound
  \begin{equation}\label{bound A_delta_1,2}
    \begin{aligned}
    \vert\langle A_{\delta_1,2}\psi,\psi\rangle\vert
    &\lesssim\sum_{\sigma,\nu}\int_{\Lambda^2}
    \vert\eta(x-y)\vert\Vert (\mathcal{N}_{ex}+3)^{-\frac{1}{2}}a^*(I_{x,\sigma}[\delta_1])
    a(h_{y,\nu})a(L_{x,\sigma}[\delta_1])\psi\Vert\\
    &\quad\quad\quad\quad
    \times\Vert(\mathcal{N}_{ex}+3)^{\frac{1}{2}}a(g_{y,\nu})\psi\Vert \\
    &\lesssim\langle\mathcal{N}_i[\delta_1]\psi,\psi\rangle
    +N^{-\frac{1}{3}+\delta_1+\varepsilon(\delta_1)}\ell
    \langle(\mathcal{N}_{ex}+1)\psi,\psi\rangle
    \end{aligned}
  \end{equation}
  where we have used (\ref{ineqn N_l and N_s}) to bound $\Vert a^*(L_{x,\sigma}[\delta_1])\Vert$. Since $\delta_1<\frac{1}{3}$, we can combine (\ref{bound A_delta_1,1}), (\ref{bound A_delta_1,2}) and (\ref{control eB on N_h}) to reach (\ref{control eB on N_i}) via Gronwall's inequality.

  \par For the proof of (\ref{jjbound}), we notice that
  \begin{equation*}
    [\mathcal{N}_{h}[\delta_2]\mathcal{N}_{ex},A]=[\mathcal{N}_h[\delta_2],A]\mathcal{N}_{ex}
    +\mathcal{N}_h[\delta_2][\mathcal{N}_{ex},A].
  \end{equation*}
  Therefore, via (\ref{Lemma7.1 1}), (\ref{Lemma7.1 2}), and  the fact that $\mathcal{N}_h[\delta_2]\leq\mathcal{N}_{ex}$, we can bound
  \begin{equation}\label{Lemma7.1 3}
    \pm[\mathcal{N}_{h}[\delta_2]\mathcal{N}_{ex},A]\lesssim
    \mathcal{N}_{h}[\delta_2]\mathcal{N}_{ex}+\ell(\mathcal{N}_{ex}+1)^2.
  \end{equation}
  Thence (\ref{jjbound}) can be deduced via Gronwall's inequality.
\end{proof}

\par The next lemma is a useful variant of Lemma \ref{control eB on N_ex lemma}. We will use it in the proofs of Lemmas \ref{control eB on V_3 lemma}-\ref{cal com V_21 lemma}.
\begin{lemma}\label{variant lem 7.1 lemma}
 For $\varphi\in L^1\cap L^\infty(\Lambda)$ and $\varphi\geq0$, we define
 \begin{equation}\label{define N[phi_ex]}
   \mathcal{N}_{ex}[\varphi]\coloneqq\sum_{\sigma}
   \int_{\Lambda}\varphi(x)a^*(h_{x,\sigma})a(h_{x,\sigma})dx.
 \end{equation}
 Then $\mathcal{N}_{ex}[\varphi]$ is at least well-defined on the subset of smooth compactly-supported functions. For any $\vert t\vert\leq1$, we have

 \begin{equation}\label{variant lem7.1}
   e^{-tB}\mathcal{N}_{ex}[\varphi]e^{tB}\lesssim \mathcal{N}_{ex}[\varphi]
   +\ell\Vert\varphi\Vert_1(\mathcal{N}_{ex}+1).
 \end{equation}
 Moreover, for any $n\in\frac{1}{2}\mathbb{Z}$,
 \begin{equation}\label{variant lem important}
    e^{-tB}\mathcal{N}_{ex}[\varphi](\mathcal{N}_{ex}+1)^ne^{tB}\lesssim \mathcal{N}_{ex}[\varphi](\mathcal{N}_{ex}+1)^n
   +\ell\Vert\varphi\Vert_1(\mathcal{N}_{ex}+1)^{n+1}.
 \end{equation}
\end{lemma}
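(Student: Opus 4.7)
The plan is to adapt the Gronwall strategy of Lemma~\ref{control eB on N_ex lemma} to the weighted excitation operator $\mathcal{N}_{ex}[\varphi]$. First I would compute $[\mathcal{N}_{ex}[\varphi],A]$ in position space. Since each $g_{y,\nu}$ is supported on Fourier modes inside the Fermi ball while each $h_{z,\sigma}$ is supported outside, the commutator only touches the two creation factors of $A$. Evaluating $[a^*(h_{z,\sigma})a(h_{z,\sigma}),a^*(h_{x,\sigma})]=a^*(h_{z,\sigma})\langle h_{x,\sigma},h_{z,\sigma}\rangle$ and integrating against $\varphi(z)dz$, a discrete Poisson-type identity (valid on modes outside $B_F^\sigma$) produces a main contribution proportional to $\varphi(x)a^*(h_{x,\sigma})$ plus a low-frequency residual. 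After symmetrizing in $x\leftrightarrow y$, this yields
\begin{equation*}
[\mathcal{N}_{ex}[\varphi],A]=2\sum_{\sigma,\nu}\int_{\Lambda^2}\eta(x-y)\varphi(x)a^*(h_{x,\sigma})a^*(h_{y,\nu})a(g_{y,\nu})a(g_{x,\sigma})dxdy+\mathcal{R}_\varphi.
\end{equation*}

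The main term is bounded by inserting $(\mathcal{N}_{ex}+3)^{1/2}(\mathcal{N}_{ex}+3)^{-1/2}$ between the pair of creation operators and the pair of annihilation operators, mimicking (\ref{Lemma7.1 1}) but now tracking the weight $\varphi(x)$. Since $[\mathcal{N}_{ex},a(g_x)]=0$ while $[\mathcal{N}_{ex},a(h_x)]=-a(h_x)$, the factor $(\mathcal{N}_{ex}+C)^{\pm 1/2}$ passes freely through the $a(g)$-side and shifts cleanly past the $a(h)$-side. Cauchy-Schwarz then splits the integral: the $g$-side factor is controlled via $\Vert a(g_{x,\sigma})\Vert^2\leq N_\sigma$, the operator bound $\int\varphi(x)a^*(g_{x,\sigma})a(g_{x,\sigma})dx\leq N_\sigma\Vert\varphi\Vert_1$ (a direct consequence of $\Vert a(g_x)\Vert^2\leq N_\sigma$ together with $\varphi\geq 0$), and $\Vert\eta\Vert_2^2\lesssim a^2\ell$ from (\ref{est of eta and eta_perp}); since $Na\sim 1$, this produces the clean factor $\ell\Vert\varphi\Vert_1\langle(\mathcal{N}_{ex}+1)\psi,\psi\rangle$. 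The $h$-side factor reduces, via the identity $\sum_\nu\int a^*(h_{y,\nu})a(h_{y,\nu})dy=\mathcal{N}_{ex}$ and one commutation of $\mathcal{N}_{ex}$ past $a(h_x)$, to $\langle\mathcal{N}_{ex}[\varphi](\mathcal{N}_{ex}-1)(\mathcal{N}_{ex}+1)^{-1}\psi,\psi\rangle\leq\langle\mathcal{N}_{ex}[\varphi]\psi,\psi\rangle$. AM-GM then delivers the key commutator bound
\begin{equation*}
\pm\langle[\mathcal{N}_{ex}[\varphi],B]\psi,\psi\rangle\lesssim\langle\mathcal{N}_{ex}[\varphi]\psi,\psi\rangle+\ell\Vert\varphi\Vert_1\langle(\mathcal{N}_{ex}+1)\psi,\psi\rangle.
\end{equation*}

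Gronwall applied to $F(t)=\langle e^{-tB}\mathcal{N}_{ex}[\varphi]e^{tB}\psi,\psi\rangle$, combined with $e^{-tB}(\mathcal{N}_{ex}+1)e^{tB}\lesssim\mathcal{N}_{ex}+1$ from Lemma~\ref{control eB on N_ex lemma}, then yields (\ref{variant lem7.1}). For (\ref{variant lem important}) the same method applies to $\mathcal{N}_{ex}[\varphi](\mathcal{N}_{ex}+1)^n$: the Leibniz rule splits $[\mathcal{N}_{ex}[\varphi](\mathcal{N}_{ex}+1)^n,B]$ into a piece controlled by the commutator bound above multiplied by $(\mathcal{N}_{ex}+1)^n$ and a piece controlled by the $[(\mathcal{N}_{ex}+1)^n,B]$ estimate multiplied by $\mathcal{N}_{ex}[\varphi]$, after which Gronwall closes the argument. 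The main obstacle is showing that the residual $\mathcal{R}_\varphi$ (from the Fermi-ball truncation of the Poisson kernel) is absorbed into the same bound without producing a spurious $\Vert\varphi\Vert_\infty$: one must pair the low-momentum $\hat\varphi$-coefficients appearing in $\mathcal{R}_\varphi$ against the low-momentum annihilator $a(g_x)$ on the same subspace via Parseval, recovering the $\Vert\varphi\Vert_1$ control before the Cauchy-Schwarz step.
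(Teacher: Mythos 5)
Your overall strategy --- compute $[\mathcal{N}_{ex}[\varphi],B]$, prove the two-sided bound $\pm[\mathcal{N}_{ex}[\varphi],B]\lesssim\mathcal{N}_{ex}[\varphi]+\ell\Vert\varphi\Vert_1(\mathcal{N}_{ex}+1)$ by Cauchy--Schwarz, then invoke Gronwall together with $e^{-tB}(\mathcal{N}_{ex}+1)e^{tB}\lesssim\mathcal{N}_{ex}+1$, and treat $\mathcal{N}_{ex}[\varphi](\mathcal{N}_{ex}+1)^n$ via Leibniz --- is exactly the paper's route. The difference, and the gap, is in how the commutator is written. You decompose
\begin{equation*}
[\mathcal{N}_{ex}[\varphi],A]=2\sum_{\sigma,\nu}\int_{\Lambda^2}\eta(x-y)\varphi(x)\,a^*(h_{x,\sigma})a^*(h_{y,\nu})a(g_{y,\nu})a(g_{x,\sigma})\,dx\,dy+\mathcal{R}_\varphi,
\end{equation*}
obtained by pretending the high-momentum projection kernel $\sum_{p\notin B_F^\sigma}e^{ip(x_1-x)}$ is $\delta(x_1-x)$, and you relegate the mismatch to $\mathcal{R}_\varphi$. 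But $\mathcal{R}_\varphi$ is \emph{not} a small perturbation: it has the same structure and the same size as your ``main term'' --- it is the same expression with the inner momentum sum running over $p\in B_F^\sigma$ instead of $p\notin B_F^\sigma$, and there is no reason for the Fermi ball to be a small fraction of momentum space here. Your closing remark (``one must pair the low-momentum $\hat\varphi$-coefficients... against $a(g_x)$ via Parseval'') correctly identifies what has to happen, but you never carry it out, and as written the estimate for $\mathcal{R}_\varphi$ is simply missing.

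The paper sidesteps the whole issue by \emph{not} splitting the projection kernel. It packages both $\eta$ and the projection $\sum_{p\notin B_F^\sigma}e^{ip(x_1-x)}$ into a single one-particle kernel
\begin{equation*}
k_{y_1,y_2,\sigma}(z)=\sum_{q\in B^{\sigma}_F}\Big(\sum_{p\notin B_F^\sigma}\int_{\Lambda}\eta(x-y_1)e^{iqx}e^{ip(y_2-x)}dx\Big)f_{q,\sigma}(z),
\end{equation*}
rewrites the commutator as $2\sum_{\sigma,\nu}\int\varphi(x)a^*(h_{x,\sigma})a^*(h_{x_2,\nu})a(g_{x_2,\nu})a(k_{x_2,x,\sigma})\,dx_2\,dx$, and closes the Cauchy--Schwarz with the single uniform bound $\int_\Lambda\Vert k_{y_1,y_2,\sigma}\Vert_2^2\,dy_j\leq\sum_{q\in B_F^\sigma}\Vert\eta\Vert_2^2$, giving $N^2\Vert\varphi\Vert_1\Vert\eta\Vert_2^2\lesssim\ell\Vert\varphi\Vert_1$ in one step. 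If you insist on the delta-plus-residual decomposition, you must reproduce essentially the same kernel estimate for $\mathcal{R}_\varphi$ with the opposite indicator ($p\in B_F^\sigma$), which is true but needs to be stated; as it stands, your proof is incomplete there, and also slightly wasteful since the split buys you nothing.
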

\begin{proof}
  \par Notice that
  \begin{equation*}
    \mathcal{N}_{ex}[\varphi]=\sum_{\tau,r,r^\prime}\int_{\Lambda}
    \varphi(x)e^{i(r-r^\prime)x}a^*_{r,\tau}a_{r^\prime,\tau}dx.
  \end{equation*}
  Formal calculation yields
  \begin{equation*}
  \begin{aligned}
    [&a^*(h_{x,\sigma})a(h_{x,\sigma}),A]\\
    &=2\sum_{k,p,q,\nu}
    \eta_ke^{-i(p-k) x}a^*(h_{x,\sigma})
    a^*_{q+k,\nu}a_{q,\nu}a_{p,\sigma}\chi_{p-k\notin B^{\sigma}_F}\chi_{q+k\notin B^{\nu}_F}\chi_{p\in B^{\sigma}_F}\chi_{q\in B^{\nu}_F}.
    \end{aligned}
  \end{equation*}
  Adopting the notation
  \begin{equation*}
    k_{y_1,y_2,\sigma}(z)\coloneqq\sum_{q\in B^{\sigma}_F}\Big(
    \sum_{p\notin B_F^\sigma}\int_{\Lambda}\eta(x-y_1)
    e^{iqx}e^{ip(y_2-x)}dx\Big)f_{q,\sigma}(z),
  \end{equation*}
  one can check that for $j=1,2$
  \begin{equation*}
    \int_{\Lambda}\Vert k_{y_1,y_2,\sigma}\Vert^2_2dy_j\leq\sum_{q\in B_F^\sigma}
    \Vert\eta\Vert_2^2.
  \end{equation*}
  Therefore, evaluating in the position space, we find
  \begin{equation}\label{[N_ex[varphi],A]}
    \begin{aligned}
    [\mathcal{N}_{ex}[\varphi],A]&=2\sum_{\sigma,\nu}\int_{\Lambda^3}
    \eta(x_1-x_2)\varphi(x)\Big(\sum_{p_1\notin B_F^\sigma}e^{ip_1(x_1-x)}\Big)\\
    &\quad\quad\quad\quad\times
    a^*(h_{x,\sigma})a^*(h_{x_2,\nu})a(g_{x_2,\nu})a(g_{x_1,\sigma})dx_1dx_2dx\\
    &=2\sum_{\sigma,\nu}\int_{\Lambda^2}
    \varphi(x)
    a^*(h_{x,\sigma})a^*(h_{x_2,\nu})a(g_{x_2,\nu})a(k_{x_2,x,\sigma})dx_2dx
    \end{aligned}
  \end{equation}
  and hence we have
  \begin{equation}\label{[N_ex[varphi],A] bound}
    \begin{aligned}
    \pm[\mathcal{N}_{ex}[\varphi],B]&\lesssim
    \mathcal{N}_{ex}[\varphi]
   +N^2\Vert\varphi\Vert_1\Vert\eta\Vert_2^2(\mathcal{N}_{ex}+1)\\
   &\lesssim\mathcal{N}_{ex}[\varphi]
   +\ell\Vert\varphi\Vert_1(\mathcal{N}_{ex}+1).
    \end{aligned}
  \end{equation}
  We then conclude (\ref{variant lem7.1}) in Lemma \ref{variant lem 7.1 lemma} by combining (\ref{[N_ex[varphi],A] bound}), (\ref{control eB on     N_ex}) and Gronwall's inequality.
  \par For the proof of (\ref{variant lem important}), we notice that
  \begin{equation*}
    [\mathcal{N}_{ex}[\varphi](\mathcal{N}_{ex}+1)^n,A]=
    [\mathcal{N}_{ex}[\varphi],A](\mathcal{N}_{ex}+1)^n
    +\mathcal{N}_{ex}[\varphi][(\mathcal{N}_{ex}+1)^n,A].
  \end{equation*}
  Similar to the proof of (\ref{control eB on     N_ex}) for arbitrary $n\in\frac{1}{2}\mathbb{Z}$, we can show
  \begin{equation*}
    \pm[\mathcal{N}_{ex}[\varphi](\mathcal{N}_{ex}+1)^n,A]\lesssim
     \mathcal{N}_{ex}[\varphi](\mathcal{N}_{ex}+1)^n
   +\ell\Vert\varphi\Vert_1(\mathcal{N}_{ex}+1)^{n+1}.
  \end{equation*}
  Then (\ref{variant lem important}) follows by Gronwall's inequality. We omit further details.
\end{proof}

\par The next Lemma \ref{contol V_21'Omega [K,B]} is an important preliminary which allows us to bound the action of $e^B$ on the excited kinetic energy operator $\mathcal{K}_s$ defined in (\ref{K_s}). As we will see, it is also useful in proving Lemma \ref{control eBtiled on K_s lemma}.

\begin{lemma}\label{contol V_21'Omega [K,B]}
Let $0<\alpha<\frac{1}{6}$. For $\beta\geq0$ and $2\alpha+5\beta<\frac{1}{3}$, we have,
\begin{align}
  \pm\mathcal{V}_{21}^\prime &\lesssim \big(N^{-\beta}+N^{2\alpha+
  5\beta-\frac{1}{3}}\big)\mathcal{K}_s+\ell^{-1}N^{\beta}\label{est V_21'} \\
  \pm\Omega &\lesssim \big(N^{-\beta}+N^{5\beta-4\alpha-\frac{1}{3}}\big)
  \mathcal{K}_s+N^{\frac{2}{3}+\beta}\ell\label{est Omega}\\
  \pm [\mathcal{K},B]&\lesssim \mathcal{K}_s+N(\mathcal{N}_{ex}+1)\label{est [K,B]}
\end{align}
\end{lemma}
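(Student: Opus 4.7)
The plan is to establish the three bounds by combining operator Cauchy--Schwarz, splitting the momentum sums at the scale $R=k_F^\sigma+N^\beta$, and the scattering equation (\ref{eqn of eta_p rewrt}).

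For $\pm\mathcal{V}_{21}'$ and $\pm\Omega$, I would apply termwise the operator inequality $\pm(MN^*+NM^*)\le\epsilon MM^*+\epsilon^{-1}NN^*$ with $M=c_{k,p,q}a^*_{p-k,\sigma}a^*_{q+k,\nu}$ and $N^*=a_{q,\nu}a_{p,\sigma}$, where $c_{k,p,q}$ is $W_k$ or $\eta_k k\cdot(q-p)$ respectively. Choosing $\epsilon$ to match the collisional energy $|p-k|^2-|k_F^\sigma|^2$, the $\epsilon MM^*$ piece collapses to a weighted excitation number that can be absorbed into $\mathcal{K}_s$ via (\ref{ineqn N_h and N_i}), while the $\epsilon^{-1}NN^*$ piece reduces to $a^*_pa^*_qa_qa_p$ on the Fermi sea, whose prefactor is controlled by the coefficient estimates of Lemmas \ref{L1 Linfty est lemma} and \ref{energy est bog lem}. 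To get the precise exponents, I decompose each $a^*_{p-k,\sigma}$ as a high/low part according to the cutoff $R=k_F^\sigma+N^\beta$, in the spirit of (\ref{partition 3D space})--(\ref{tools2}): when $p-k\in P^\sigma_{F,\beta}$ the decay (\ref{energy est bog u}) supplies the factor $N^{-\beta}\mathcal{K}_s$; when $p-k\in A^\sigma_{F,\beta}$ I instead use $\#A^\sigma_{F,\beta}\lesssim N^{2/3+\beta+\varepsilon}$ from (\ref{ineqn N_l and N_s}) together with $\|W\|_\infty\lesssim a\ell^{-3}$ (or, for $\Omega$, $\|\nabla\eta\|_2^2\lesssim a$ via (\ref{est of grad eta})), paying $N^\beta$ for the effective number of surface modes. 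Summing these pieces and balancing produces the residual $N^{2\alpha+5\beta-\frac13}\mathcal{K}_s$ and $\ell^{-1}N^\beta$ (respectively $N^{2/3+\beta}\ell$) terms.

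For $\pm[\mathcal{K},B]$, a direct computation via Lemma \ref{lem commutator}(1) gives
\[
[\mathcal{K},A]=2\sum_{k,p,q,\sigma,\nu}\eta_k\bigl(|k|^2+k\cdot(q-p)\bigr)
a^*_{p-k,\sigma}a^*_{q+k,\nu}a_{q,\nu}a_{p,\sigma}\,\chi_{p-k\notin B^{\sigma}_F}\chi_{q+k\notin B^{\nu}_F}\chi_{p\in B^{\sigma}_F}\chi_{q\in B^{\nu}_F}.
\]
Substituting the scattering equation (\ref{eqn of eta_p rewrt}) to eliminate $|k|^2\eta_k=W_k-\tfrac12\hat v_k-\tfrac12\sum_q\hat v_{k-q}\eta_q$, and Hermitizing, $[\mathcal{K},B]$ becomes a linear combination of $\mathcal{V}_{21}'$, $\mathcal{V}_{21}$, $\Omega$, and a convolution remainder. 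Each of these is a quartic operator of type $a^*a^*aa$ whose prefactor is $O(a)$ uniformly in $k$ (using Lemma \ref{l1 lemma} for the convolution piece), and a crude Cauchy--Schwarz of the form above with $\epsilon=1$ immediately gives the bound $\mathcal{K}_s+N(\mathcal{N}_{ex}+1)$: the $\epsilon^{-1}NN^*$ half yields $a\cdot a^*_pa^*_qa_qa_p$ summed over $p\in B_F^\sigma$ and $q\in B_F^\nu$, bounded by $N(\mathcal{N}_{ex}+1)$, while the $\epsilon MM^*$ half together with a discrete integration by parts against $\eta_k$ produces $\mathcal{K}_s$.

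The main obstacle is the first part: the coefficients $W_k,\eta_k$ have two natural regimes joined at $|k|\sim\ell^{-1}$, and the $\chi_{p-k\notin B_F^\sigma}\chi_{q+k\notin B_F^\nu}$ restrictions intertwine the high/low splitting on the Fermi surface with the decay in $k$. Extracting the precise exponent $N^{2\alpha+5\beta-1/3}$ forces one to pair the $W$- or $\nabla\eta$-decay against the kinetic cost termwise rather than via the cruder $L^2$-norms $\|W\|_2\lesssim a\ell^{-3/2}$, and to invoke the surface lattice-count (\ref{approximate lattice 2d sphere}) to control the low sector; otherwise the proof produces powers of $N$ much larger than $\ell^{-1}N^\beta$ and loses the required smallness.
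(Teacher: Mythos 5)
Your overall plan (weighted operator Cauchy--Schwarz against the collisional energy, a momentum-shell decomposition near the Fermi surface, and the coefficient decays from Lemmas \ref{L1 Linfty est lemma} and \ref{energy est bog lem}) is the right shape and matches the paper's strategy for the first two bounds, but the specific choices contain gaps that would prevent the claimed exponents from emerging.

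For (\ref{est V_21'}) and (\ref{est Omega}), you split only one excited variable at a single cutoff $R = k_F^\sigma + N^\beta$ and invoke (\ref{energy est bog u}) on the high block. But (\ref{energy est bog u}) is proved only for $\delta = \tfrac13 + u$ with $u>0$, while the hypothesis $2\alpha + 5\beta < \tfrac13$ forces $\beta < \tfrac1{15}$, so that estimate simply does not apply at your cutoff. The paper's proof uses \emph{two} independent cutoffs: it splits the excited momentum at $\delta_2$ (high $H[\delta_2]$ vs.\ low $L[\delta_2]$) and, separately, the inner/Fermi-sea momentum at $\delta_1$ (inner $I[\delta_1]$ vs.\ surface $S[\delta_1]$), producing the three pieces $\mathcal{V}_{21,1}',\mathcal{V}_{21,2}',\mathcal{V}_{21,3}'$ of (\ref{rewrt V_21'}); the high-block coefficient sum is handled with the unrestricted estimate (\ref{energy est bog}), and the optimized parameters are $\delta_2 = \alpha + 2\beta$, $\delta_1 = -2\beta$ (respectively $\delta_2 = -\alpha + 2\beta$, $\delta_1 = 2\alpha - 2\beta$ for $\Omega$). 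Without the second, Fermi-sea split you cannot produce the $N^{2\alpha + 5\beta - 1/3}\mathcal{K}_s$ residual, and the $\ell^{-1}N^\beta$ constant term also depends on the surface block $\mathcal{V}_{21,3}'$.

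For (\ref{est [K,B]}), substituting the scattering equation (\ref{eqn of eta_p rewrt}) to write $[\mathcal{K},B]$ as a combination of $\mathcal{V}_{21}'$, $\mathcal{V}_{21}$, $\Omega$ and a convolution term is a legitimate alternative route (it is what the paper does later to analyze $\Gamma$). However, the ``crude Cauchy--Schwarz with $\epsilon = 1$'' step does not give $\mathcal{K}_s + N(\mathcal{N}_{ex}+1)$. Pointwise $O(a)$ on the coefficient is not what controls the sum: after symmetrizing to $a_p^*a_q^*a_qa_p$ on the Fermi sea, the prefactor of the $\mathcal{V}_{21}$ piece is $\sum_k|\hat v_k|$, and $\hat v_k = a\hat v(ak)$ gives $\sum_k|\hat v_k| \sim a\cdot a^{-3} = N^2$, so that half of the $\mathcal{V}_{21}$ bound is of order $N^2\cdot N^2 = N^4$, far exceeding $N(\mathcal{N}_{ex}+1)$; moreover the convolution piece naturally contributes $\mathcal{V}_4$ (cf.\ (\ref{Theta_m bound})), which does not appear in the stated bound. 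The paper avoids all of this by \emph{not} substituting the scattering equation for (\ref{est [K,B]}): it writes $[\mathcal{K},B] = \Omega_m + \Omega$ with $\Omega_m$ carrying $\eta_k|k|^2$, and bounds $\Omega_m$ directly via a weighted Cauchy--Schwarz against $(|p_1|^2 - |k_F^\sigma|^2)^{\pm 1/2}$ together with the decay estimate (\ref{energy est bog laplace eta}), $\sum_k \eta_k^2|k|^4\big/(|p-k|^2 - |k_F^\sigma|^2)\chi_{p-k\notin B_F^\sigma}\lesssim a$, which exploits $|\eta_k||k|^2\lesssim a$. Your ``discrete integration by parts against $\eta_k$'' is not precise enough to supply this weight, and as written the argument does not close.
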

\begin{proof}
  \par We first prove (\ref{est V_21'}). We adopt only within this proof that the notation $k_{p,x,\sigma}$
  \begin{equation*}
    k_{p,x,\sigma}(z)\coloneqq\sum_{q\in B_F^\sigma}W_{q-p}e^{i(q-p)x}f_{q,\sigma}(z).
  \end{equation*}
  Notice that by (\ref{norm bound  of creation and annihilation}), we have
  \begin{equation*}
    \Vert a(k_{p,x,\sigma})\Vert^2\leq\Vert k_{p,x,\sigma}\Vert^2
    = \sum_{q\in B_F^\sigma}\vert W_{q-p}\vert^2.
  \end{equation*}
  Let $-\frac{11}{48}<\delta_1,\delta_2<\frac{1}{3}$, we can rewrite $\mathcal{V}_{21}^\prime=\mathcal{V}_{21,1}^\prime
  +\mathcal{V}_{21,2}^\prime+\mathcal{V}_{21,3}^\prime$ with
  \begin{equation}\label{rewrt V_21'}
    \begin{aligned}
    \mathcal{V}_{21,1}^\prime&=\sum_{\sigma,\nu,p_1}\chi_{p_1\notin B_F^{\sigma}}
    \int_{\Lambda}a^*_{p_1,\sigma}a^*(H_{x,\nu}[\delta_2])a(g_{x,\nu})a(k_{p_1,x,\sigma})dx
    +h.c.\\
    \mathcal{V}_{21,2}^\prime&=\sum_{\sigma,\nu,p_1}\chi_{p_1\notin B_F^{\sigma}}
    \int_{\Lambda}a^*_{p_1,\sigma}a^*(L_{x,\nu}[\delta_2])a(I_{x,\nu}[\delta_1])
    a(k_{p_1,x,\sigma})dx+h.c.\\
    \mathcal{V}_{21,3}^\prime&=\sum_{\sigma,\nu,p_1}\chi_{p_1\notin B_F^{\sigma}}
    \int_{\Lambda}a^*_{p_1,\sigma}a^*(L_{x,\nu}[\delta_2])a(S_{x,\nu}[\delta_1])
    a(k_{p_1,x,\sigma})dx+h.c.
    \end{aligned}
  \end{equation}
 For $\psi\in\mathcal{H}^{\wedge N}(\{N_{\varsigma_i}\})$, we can bound the first term $\mathcal{V}_{21,1}^\prime$ by
 \begin{equation}\label{V_21,1' bound}
 \begin{aligned}
   &\vert\langle\mathcal{V}_{21,1}^\prime\psi,\psi\rangle\vert\\
   &\lesssim \Big(\sum_{\sigma,\nu,p_1}\chi_{p_1\notin B_F^{\sigma}}
   \big(\vert p_1\vert^2-\vert k_F^\sigma\vert^2\big)
    \\
    &\quad\quad\int_{\Lambda}\langle a^*_{p_1,\sigma}a^*(H_{x,\nu}[\delta_2])
    (\mathcal{N}_h[\delta_2]+2)^{-1}a(H_{x,\nu}[\delta_2])a_{p_1,\sigma}\psi,\psi\rangle
    dx\Big)^{\frac{1}{2}}\\
    &\quad\quad\times\Big(\sum_{\sigma,\nu,p_1}\chi_{p_1\notin B_F^{\sigma}}
   \big(\vert p_1\vert^2-\vert k_F^\sigma\vert^2\big)^{-1}\int_{\Lambda} \Vert a(g_{x,\nu})a(k_{p_1,x,\sigma})
    (\mathcal{N}_h[\delta_2]+2)^{\frac{1}{2}}\psi\Vert^2 dx\Big)^{\frac{1}{2}}\\
    &\lesssim \Big(N\sum_{\sigma,p_1,q_1}\frac{\chi_{p_1\notin B_F^{\sigma}}
    \chi_{q_1\in B_F^{\sigma}}W_{q_1-p_1}^2}
    {\vert p_1\vert^2-\vert k_F^\sigma\vert^2}\Big)^{\frac{1}{2}}
    \langle\mathcal{K}_s\psi,\psi\rangle^{\frac{1}{2}}
    \langle(\mathcal{N}_h[\delta_2]+1)\psi,\psi\rangle^{\frac{1}{2}}\\
    &\lesssim \Big(N^{-\beta}+\ell^{-1}N^{\beta-\frac{1}{3}-\delta_2}\Big)
    \langle\mathcal{K}_s\psi,\psi\rangle
    +\ell^{-1}N^\beta\langle\psi,\psi\rangle
 \end{aligned}
 \end{equation}
 where in the last inequality, we have used (\ref{ineqn N_h and N_i}) and (\ref{energy est bog}). Recall that $\ell=N^{-\frac{1}{3}-\alpha}$, to optimize, we choose $\delta_2=\alpha+2\beta$.
 \par For the second term in (\ref{rewrt V_21'}), we can bound similarly
 \begin{equation}\label{V_21,2' bound}
    \begin{aligned}
   &\vert\langle\mathcal{V}_{21,2}^\prime\psi,\psi\rangle\vert\\
   &\lesssim \Big(\sum_{\sigma,\nu,p_1}\chi_{p_1\notin B_F^{\sigma}}
   \big(\vert p_1\vert^2-\vert k_F^\sigma\vert^2\big)
    \\
    &\quad\quad\int_{\Lambda}\langle a^*_{p_1,\sigma}a(I_{x,\nu}[\delta_1])
    (\mathcal{N}_i[\delta_1]+2)^{-1}a^*(I_{x,\nu}[\delta_1])a_{p_1,\sigma}\psi,\psi\rangle
    dx\Big)^{\frac{1}{2}}\\
    &\times\Big(\sum_{\sigma,\nu,p_1}\chi_{p_1\notin B_F^{\sigma}}
   \big(\vert p_1\vert^2-\vert k_F^\sigma\vert^2\big)^{-1}\int_{\Lambda} \Vert a^*(L_{x,\nu}[\delta_2])a(k_{p_1,x,\sigma})
    (\mathcal{N}_i[\delta_1]+2)^{\frac{1}{2}}\psi\Vert^2 dx\Big)^{\frac{1}{2}}\\
    &\lesssim \Big(N^{\frac{2}{3}+\delta_2}\sum_{\sigma,p_1,q_1}\frac{\chi_{p_1\notin B_F^{\sigma}}
    \chi_{q_1\in B_F^{\sigma}}W_{q_1-p_1}^2}
    {\vert p_1\vert^2-\vert k_F^\sigma\vert^2}\Big)^{\frac{1}{2}}
    \langle\mathcal{K}_s\psi,\psi\rangle^{\frac{1}{2}}
    \langle(\mathcal{N}_i[\delta_1]+1)\psi,\psi\rangle^{\frac{1}{2}}\\
    &\lesssim \Big(N^{-\beta}+N^{2\alpha+3\beta-\frac{1}{3}-\delta_1}\Big)
    \langle\mathcal{K}_s\psi,\psi\rangle
    +N^{2\alpha+3\beta}\langle\psi,\psi\rangle
 \end{aligned}
 \end{equation}
 For the third term in (\ref{rewrt V_21'}), we write
 \begin{equation}\label{V_21,3' bound}
   \begin{aligned}
   \vert\langle\mathcal{V}_{21,3}^\prime\psi,\psi\rangle\vert
   &=\Big\vert\sum_{\sigma,\nu}\int_{\Lambda^2}
   W(x-y)a^*(h_{x,\sigma})a^*(L_{y,\nu}[\delta_2])a(S_{y,\nu}[\delta_1])
    a(g_{x,\sigma})dxdy\Big\vert\\
    &\lesssim\sum_{\sigma,\nu}\int_{\Lambda^2}
     N^{\frac{7}{6}+\frac{\delta_1}{2}+\frac{\delta_2}{2}}
     W(x-y)\Vert a(h_{x,\sigma})\psi\Vert\Vert\psi\Vert dxdy\\
     &\lesssim N^{\frac{7}{6}+\frac{\delta_1}{2}+\frac{\delta_2}{2}}\Vert W\Vert_1
     \langle\mathcal{N}_{ex}\psi,\psi\rangle^{\frac{1}{2}}
     \langle\psi,\psi\rangle^{\frac{1}{2}}\\
     &\lesssim N^{-\beta}\langle\mathcal{K}_s\psi,\psi\rangle
     +N^{\frac{1}{3}+\alpha+3\beta+\delta_1}\langle\psi,\psi\rangle
   \end{aligned}
 \end{equation}
 We then choose $\delta_1=-2\beta$. Combining (\ref{V_21,1' bound}-\ref{V_21,3' bound}), we reach (\ref{est V_21'}).
 \par The proof of (\ref{est Omega}) is similar to (\ref{est V_21'}), we only need to notice that this time we use the second inequality in (\ref{energy est bog}) rather than the first one, and replace the bound $\Vert W\Vert_1\lesssim a$ given in (\ref{L1&L2 norm of W 3dscatt}) by $N^{\frac{1}{3}}\Vert\nabla\eta\Vert_1\lesssim aN^{\frac{1}{3}}\ell$ given in (\ref{L1 grad eta and laplace eta}). We also choose $\delta_2=-\alpha+2\beta$ and $\delta_1=2\alpha-2\beta$ for (\ref{est Omega}).
 \par For the proof of (\ref{est [K,B]}), we first notice by Lemma \ref{lem commutator} that
 \begin{equation}\label{[K,A]}
 \begin{aligned}
   [\mathcal{K},B]&=\sum_{k,p,q,\sigma,\nu}
  \eta_k\big(\vert k\vert^2+k(q-p)\big)(a^*_{p-k,\sigma}a^*_{q+k,\nu}a_{q,\nu}a_{p,\sigma}+h.c.)\\
  &\quad\quad\quad\quad\quad\quad\quad\quad\quad
  \times\chi_{p-k\notin B^{\sigma}_F}\chi_{q+k\notin B^{\nu}_F}\chi_{p\in B^{\sigma}_F}\chi_{q\in B^{\nu}_F}\\
  &\eqqcolon\Omega_{m}+\Omega.
 \end{aligned}
 \end{equation}
 We have bounded $\Omega$ in (\ref{est Omega}) by choosing $\beta=0$,
  \begin{equation}\label{Omega bound}
    \pm\Omega\lesssim\mathcal{K}_s+N^{\frac{1}{3}-\alpha},
  \end{equation}
  while the estimate of $\Omega_m$ is similar to (\ref{V_21,1' bound}). In fact, letting
  \begin{equation*}
    k_{p,x,\sigma}(z)\coloneqq\sum_{q\in B_F^\sigma}\eta_{q-p}\vert
    q-p\vert^2e^{iqx}f_{q,\sigma}(z),
  \end{equation*}
 we have
 \begin{equation}\label{Omeg_m bound}
 \begin{aligned}
  &\vert\langle\Omega_m\psi,\psi\rangle\vert
  =\Big\vert\sum_{\sigma,\nu,p_1}\chi_{p_1\notin B_F^{\sigma}}
    \int_{\Lambda}\langle
    a^*_{p_1,\sigma}a^*(h_{x,\nu})a(g_{x,\nu})a(k_{p_1,x,\sigma})\psi,\psi\rangle
    dx+h.c.\Big\vert\\
   &\lesssim \Big(\sum_{\sigma,\nu,p_1}\chi_{p_1\notin B_F^{\sigma}}
   \big(\vert p_1\vert^2-\vert k_F^\sigma\vert^2\big)
    \\
    &\quad\quad\int_{\Lambda}\langle a^*_{p_1,\sigma}a^*(h_{x,\nu})
    (\mathcal{N}_{ex}+2)^{-1}a(h_{x,\nu})a_{p_1,\sigma}\psi,\psi\rangle
    dx\Big)^{\frac{1}{2}}\\
    &\quad\quad\times\Big(\sum_{\sigma,\nu,p_1}\chi_{p_1\notin B_F^{\sigma}}
   \big(\vert p_1\vert^2-\vert k_F^\sigma\vert^2\big)^{-1}\int_{\Lambda} \Vert a(g_{x,\nu})a(k_{p_1,x,\sigma})
    (\mathcal{N}_{ex}+2)^{\frac{1}{2}}\psi\Vert^2 dx\Big)^{\frac{1}{2}}\\
    &\lesssim \Big(N\sum_{\sigma,p_1,q_1}\frac{\chi_{p_1\notin B_F^{\sigma}}
    \chi_{q_1\in B_F^{\sigma}}\eta_{q_1-p_1}^2\vert q_1-p_1\vert^4}
    {\vert p_1\vert^2-\vert k_F^\sigma\vert^2}\Big)^{\frac{1}{2}}
    \langle\mathcal{K}_s\psi,\psi\rangle^{\frac{1}{2}}
    \langle(\mathcal{N}_{ex}+1)\psi,\psi\rangle^{\frac{1}{2}}\\
    &\lesssim \langle\mathcal{K}_s\psi,\psi\rangle
    +N\langle(\mathcal{N}_{ex}+1)\psi,\psi\rangle
 \end{aligned}
 \end{equation}
 where we have used (\ref{energy est bog laplace eta}) in the last inequality. Combining (\ref{[K,A]}-\ref{Omeg_m bound}) we reach (\ref{est [K,B]}).
\end{proof}

\par As a direct consequence of (\ref{est [K,B]}) in Lemma \ref{contol V_21'Omega [K,B]} (together with (\ref{control eB on     N_ex}) in Lemma \ref{control eB on N_ex lemma} and Gronwall's inequality), we have the following estimate of the action of $e^B$ on $\mathcal{K}_s$, since $[\mathcal{K}_s,B]=[\mathcal{K},B]$.
\begin{corollary}\label{corollary control eB on K_s}
  For $\vert t\vert\leq1$, we have
  \begin{equation}\label{control eB on K_s}
    e^{-tB}\mathcal{K}_se^{tB}\lesssim \mathcal{K}_s+N(\mathcal{N}_{ex}+1).
  \end{equation}
\end{corollary}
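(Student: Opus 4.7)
The plan is to deduce this corollary from the differential inequality approach used for Lemma \ref{control eB on N_ex lemma}, together with the commutator estimate (\ref{est [K,B]}). First I would observe that $\mathcal{K}_s - \mathcal{K}$ is a scalar on $\mathcal{H}^{\wedge N}(\{N_{\varsigma_i}\})$ (see (\ref{K_s})), so $[\mathcal{K}_s, B] = [\mathcal{K}, B]$, which means the commutator we need to control is exactly the one estimated in (\ref{est [K,B]}).

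Next I would set $F(t) = \langle e^{-tB}\mathcal{K}_s e^{tB}\psi, \psi\rangle$ for $\psi \in \mathcal{H}^{\wedge N}(\{N_{\varsigma_i}\})$ with $\|\psi\| = 1$ and compute
\begin{equation*}
  \frac{d}{dt} F(t) = \langle e^{-tB}[\mathcal{K}_s, B] e^{tB}\psi, \psi\rangle = \langle e^{-tB}[\mathcal{K}, B] e^{tB}\psi, \psi\rangle.
\end{equation*}
Applying (\ref{est [K,B]}) to the vector $e^{tB}\psi$ gives
\begin{equation*}
  \Big\vert \frac{d}{dt}F(t)\Big\vert \lesssim \langle e^{-tB}\mathcal{K}_s e^{tB}\psi, \psi\rangle + N \langle e^{-tB}(\mathcal{N}_{ex}+1) e^{tB}\psi, \psi\rangle.
\end{equation*}
For the second term, I would invoke (\ref{control eB on N_ex}) from Lemma \ref{control eB on N_ex lemma}, which gives $e^{-tB}(\mathcal{N}_{ex}+1)e^{tB} \lesssim (\mathcal{N}_{ex}+1)$ uniformly in $\vert t\vert \leq 1$. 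Substituting yields
\begin{equation*}
  \Big\vert \frac{d}{dt}F(t)\Big\vert \lesssim F(t) + N\langle (\mathcal{N}_{ex}+1)\psi,\psi\rangle.
\end{equation*}

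Finally, Gronwall's inequality applied to this differential inequality on the interval $[0,t]$ (or $[t,0]$ when $t < 0$), together with the initial value $F(0) = \langle \mathcal{K}_s\psi, \psi\rangle$, produces
\begin{equation*}
  F(t) \lesssim \langle \mathcal{K}_s\psi,\psi\rangle + N\langle (\mathcal{N}_{ex}+1)\psi,\psi\rangle
\end{equation*}
for all $\vert t\vert \leq 1$, which is exactly (\ref{control eB on K_s}) as an operator inequality on $\mathcal{H}^{\wedge N}(\{N_{\varsigma_i}\})$. The proof is essentially mechanical given the two preceding results; there is no real obstacle, only the bookkeeping of assembling (\ref{est [K,B]}) and (\ref{control eB on N_ex}) into the Gronwall estimate.
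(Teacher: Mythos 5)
Your proof is correct and follows exactly the route the paper has in mind: the paper states the corollary as a direct consequence of (\ref{est [K,B]}), (\ref{control eB on N_ex}), and Gronwall's inequality, using the observation that $[\mathcal{K}_s,B]=[\mathcal{K},B]$. Your write-up simply makes that one-line remark explicit, with the same ingredients in the same order.
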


\par The next lemma controls the action of $e^B$ on the excited potential energy operator $\mathcal{V}_4$. It is also useful in the proof of Proposition \ref{Optimal BEC}.
\begin{lemma}\label{corollary control eB on V_4}
  For $\vert t\vert\leq1$, we have
  \begin{equation}\label{control eB on V_4}
    e^{-tB}\mathcal{V}_4e^{tB}\lesssim \mathcal{V}_4+N.
  \end{equation}
\end{lemma}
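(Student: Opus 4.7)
The plan is to use the same Gronwall-type strategy that drove Lemma \ref{control eB on N_ex lemma} and Corollary \ref{corollary control eB on K_s}: differentiate
$$\frac{d}{dt}\bigl(e^{-tB}\mathcal{V}_4\,e^{tB}\bigr)=e^{-tB}[\mathcal{V}_4,B]\,e^{tB},$$
and reduce everything to a commutator estimate of the form $\pm[\mathcal{V}_4,B]\lesssim \mathcal{V}_4+\mathcal{R}$, where $\mathcal{R}$ is a nonnegative operator which is itself controlled, up to universal constants, by its conjugation under $e^{tB}$ and whose expectation is at most of order $N$ on $\mathcal{H}^{\wedge N}(\{N_{\varsigma_i}\})$. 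Given such a bound, Gronwall immediately yields $e^{-tB}\mathcal{V}_4\,e^{tB}\lesssim \mathcal{V}_4+N$, because any $(\mathcal{N}_{ex}+1)$-factor that arises is at most $N$ on our sector and is preserved (up to constants) by Lemma \ref{control eB on N_ex lemma}.

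To analyse $[\mathcal{V}_4,B]=\tfrac{1}{2}([\mathcal{V}_4,A]-[\mathcal{V}_4,A^*])$, I would write both operators in position space,
\begin{align*}
\mathcal{V}_4&=\tfrac{1}{2}\sum_{\sigma,\nu}\int_{\Lambda^2}v_a(x-y)\,a^*(h_{x,\sigma})a^*(h_{y,\nu})a(h_{y,\nu})a(h_{x,\sigma})\,dxdy,\\
A&=\sum_{\sigma,\nu}\int_{\Lambda^2}\eta(u-w)\,a^*(h_{u,\sigma})a^*(h_{w,\nu})a(g_{w,\nu})a(g_{u,\sigma})\,dudw,
\end{align*}
and move one block through the other using the CAR. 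The decisive simplification is the Fermi-ball orthogonality $\{a(g_{\cdot,\sigma}),a^*(h_{\cdot,\nu})\}=\langle h_{\cdot,\nu},g_{\cdot,\sigma}\rangle=0$, which kills every potential contraction between an outer $a^*(h)$ of $\mathcal{V}_4$ and an inner $a(g)$ of $A$ (and the conjugate pairing between $a^*$'s of $A^*$ and $a(h)$'s of $\mathcal{V}_4$). The only surviving pairings contract an $a(h)$ from $\mathcal{V}_4$ with an $a^*(h)$ from $A$, producing sextic normal-ordered terms in which the $a(g_{w,\nu})a(g_{u,\sigma})$ pair from $A$ survives intact, together with quartic remainders in which a second $a^\sharp(h)$-pair has collapsed to the projection kernel $\sum_{k\notin B_F^\sigma}e^{ik\cdot(\cdot)}$.

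Each surviving sextic is then estimated by a Cauchy--Schwarz that reproduces a full copy of $\mathcal{V}_4$: splitting $v_a(x-y)=v_a(x-y)^{1/2}v_a(x-y)^{1/2}$, one factor pairs with $a^*(h_{x,\sigma})a^*(h_{y,\nu})$ to recover $\mathcal{V}_4^{1/2}$, while the other factor multiplies the $\eta$-dependent tail. The residual integrations are bounded by
$$\int_{\Lambda^4}v_a(x-y)|\eta(u-w)|^2\,dx\,dy\,du\,dw\lesssim \|v_a\|_1\,\|\eta\|_2^2\lesssim a^3\ell,$$
combined with $\|a(g_{\cdot,\nu})\|^2\leq N_\nu\lesssim N$ and the Fock-space identity $\int_\Lambda \|a(h_{x,\sigma})\psi\|^2\,dx=\langle\mathcal{N}_{ex,\sigma}\psi,\psi\rangle$ from (\ref{useful eqn2}). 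The smallness of $\|\eta\|_2^2\lesssim a^2\ell$ absorbs the factor of $N$ coming from $\|a(g)\|^2$, so that each surviving term is bounded by $\mathcal{V}_4+C(\mathcal{N}_{ex}+1)$ with a constant $C$ independent of $N$. The quartic contracted remainders, smaller by an extra power of $a$, are handled analogously using $\|\eta\|_\infty\lesssim a\ell^2$ together with $\|v_a\|_1\lesssim a$.

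The main obstacle I anticipate is not any single estimate but the combinatorial bookkeeping of the surviving contractions and designing the Cauchy--Schwarz split so that exactly one copy of $v_a$ is used to reconstruct $\mathcal{V}_4$ while the other is used to absorb the $\eta$-weight, never losing a power of $N$. Once this is done cleanly for every surviving term, one obtains $\pm[\mathcal{V}_4,B]\lesssim \mathcal{V}_4+(\mathcal{N}_{ex}+1)$, and Gronwall combined with $\mathcal{N}_{ex}\lesssim N$ and Lemma \ref{control eB on N_ex lemma} completes the proof of (\ref{control eB on V_4}).
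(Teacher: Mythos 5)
Your overall strategy — compute $[\mathcal{V}_4,B]$, exploit the Fermi-ball orthogonality $\{a(g_{\cdot,\sigma}),a^*(h_{\cdot,\nu})\}=0$ to kill the $a^*(h)$–$a(g)$ contractions, split the surviving terms into quartic (double-contracted) and sextic (single-contracted) pieces, Cauchy--Schwarz each against a copy of $\mathcal{V}_4$, and close with Gronwall and Lemma \ref{control eB on N_ex lemma} — is exactly the paper's route; compare the decomposition $[\mathcal{V}_4,B]=\Theta_m+\Theta_d+\Theta_r$ in (\ref{Theta}) and the bounds (\ref{Theta_m bound}), (\ref{Theta d bound true}), (\ref{Theta_r bound}).

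There is, however, a substantive error in your treatment of the quartic main term, and it is not cosmetic: that term is precisely where the $+N$ in (\ref{control eB on V_4}) originates. You assert $\|\eta\|_\infty\lesssim a\ell^2$ and on that basis call the quartic remainders ``smaller by an extra power of $a$.'' But $a\ell^2$ is the bound on the Fourier coefficients $\eta_p$ from (\ref{est of eta_0}); the sup-norm of the position-space function $\eta(x)=-\widetilde{w}_\ell(x)$ satisfies only $\|\eta\|_{L^\infty}\lesssim 1$ (by (\ref{est of f_l}), $|\widetilde{w}_\ell|\leq 1-c$ with $c$ a fixed constant). After the $v_a^{1/2}\cdot v_a^{1/2}$ Cauchy--Schwarz, the quartic main term is therefore bounded by $\mathcal{V}_4+N^2\|v_a\|_1\|\eta\|_{L^\infty}^2\lesssim\mathcal{V}_4+N$, not by $\mathcal{V}_4+o(1)$. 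Consequently the intermediate bound $\pm[\mathcal{V}_4,B]\lesssim\mathcal{V}_4+(\mathcal{N}_{ex}+1)$ that you feed into Gronwall is false and strictly stronger than what can be proved; the correct bound is $\pm[\mathcal{V}_4,B]\lesssim\mathcal{V}_4+N$. (The only place a bound involving $\sup_p|\eta_p|$ genuinely enters is the projection-kernel correction $\Theta_d$, where the paper uses $\sum_{|l|\lesssim N^{1/3}}|\eta_l|\lesssim\ell^2$ to obtain (\ref{Theta d bound true}).) Gronwall still delivers (\ref{control eB on V_4}) from the correct commutator bound, so the end result is recoverable, but the argument as you wrote it does not hold, and the relative sizes of the quartic and sextic contributions are the opposite of what you claim.
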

\begin{proof}
  \par By Lemma \ref{lem commutator}, we have
  \begin{equation}\label{[V_4,B]}
    \begin{aligned}
    [\mathcal{V}_4,B]=\Theta_m+\Theta_d+\Theta_r
    \end{aligned}
  \end{equation}
  with
  \begin{equation}\label{Theta}
    \begin{aligned}
    \Theta_m&=\frac{1}{2}\sum_{k,p,q,\sigma,\nu}\Big(\sum_{l}\hat{v}_{k-l}\eta_l\Big)
    (a^*_{p-k,\sigma}a^*_{q+k,\nu}a_{q,\nu}a_{p,\sigma}+h.c.)\\
    &\quad\quad\times
    \chi_{p-k\notin B^{\sigma}_F}\chi_{q+k\notin B^{\nu}_F}
    \chi_{p\in B^{\sigma}_F}\chi_{q\in B^{\nu}_F}\\
    \Theta_d&=\frac{1}{2}\sum_{l,k,p,q,\sigma,\nu}\hat{v}_{k-l}\eta_l
    (a^*_{p-k,\sigma}a^*_{q+k,\nu}a_{q,\nu}a_{p,\sigma}+h.c.)\\
    &\quad\quad\times
    \chi_{p-k\notin B^{\sigma}_F}\chi_{q+k\notin B^{\nu}_F}
    \chi_{p\in B^{\sigma}_F}\chi_{q\in B^{\nu}_F}
    (\chi_{p-l\notin B^{\sigma}_F}\chi_{q+l\notin B^{\nu}_F}-1)\\
    \Theta_r&=-\sum_{k,p,q,l,r,s,\sigma,\nu,\varpi}
    \hat{v}_{k}\eta_l
    (a^*_{p-k,\sigma}a^*_{q+k,\nu}a^*_{s+l,\varpi}a_{q,\nu}
    a_{s,\varpi}a_{r,\sigma}+h.c.)\\
    &\quad\quad\times
    \chi_{p-k,p\notin B^{\sigma}_F}\chi_{q+k,q\notin B^{\nu}_F}
    \chi_{r\in B_F^\sigma}
    \chi_{s+l\notin B_F^\varpi}\chi_{s\in B_F^\varpi}\delta_{p,r-l}
    \end{aligned}
  \end{equation}
  Notice that switching to position space, we have
  \begin{equation}\label{V_4position}
    \mathcal{V}_4=\frac{1}{2}\sum_{\sigma,\nu}
    \int_{\Lambda^2}v_a(x-y)a^*(h_{x,\sigma})a^*(h_{y,\nu})
    a(h_{y,\nu})a(h_{x,\sigma})dxdy.
  \end{equation}
  For the first term (main part) in (\ref{Theta}), since $\Vert v_a\Vert_1\lesssim a$, we can easily bound
  \begin{equation}\label{Theta_m bound}
    \begin{aligned}
    \pm\Theta_m&=\pm\frac{1}{2}\sum_{\sigma,\nu}\int_{\Lambda^2}
    v_a(x-y)\eta(x-y)a^*(h_{x,\sigma})a^*(h_{y,\nu})
    a(g_{y,\nu})a(g_{x,\sigma})dxdy+h.c.\\
    &\lesssim \mathcal{V}_4+N^2\Vert v_a\Vert_1\Vert\eta\Vert_{\infty}^2\lesssim
    \mathcal{V}_4+N
    \end{aligned}
  \end{equation}
 \par For the second term (difference part) in (\ref{Theta}), notice that for any $p\in B_F^\sigma$ and $q\in B_F^\nu$, $(\chi_{p-l\notin B^{\sigma}_F}\chi_{q+l\notin B^{\nu}_F}-1)\neq0$ implies $\vert l\vert\lesssim N^{\frac{1}{3}}$. Using
   \begin{equation*}
     (\chi_{p-l\notin B^{\sigma}_F}\chi_{q+l\notin B^{\nu}_F}-1)
     =-\chi_{p-l\in B^\sigma_F}-\chi_{q+l\in B^{\nu}_F}+
     \chi_{p-l\in B^\sigma_F}\chi_{q+l\in B^{\nu}_F}
   \end{equation*}
   we decompose $\Theta_d=\Theta_{d,1}+\Theta_{d,2}+\Theta_{d,3}$. For fixed $l$, we let
   \begin{equation*}
     g_{l,x,\sigma}(z)\coloneqq\sum_{p\in B_F^\sigma}-\chi_{p-l\in B^\sigma_F}e^{ipx}f_{p,\sigma}(z).
   \end{equation*}
   Obviously, $\Vert g_{l,x,\sigma}\Vert_2^2\leq N_\sigma$. Therefore,
  \begin{equation}\label{Theta_d bound}
    \begin{aligned}
  \Theta_{d,1}
    =\sum_{l,\sigma,\nu}\frac{\eta_l}{2}
    \int_{\Lambda^2}v_a(x-y)e^{-il(x-y)}
    a^*(h_{x,\sigma})a^*(h_{y,\nu})
    a(g_{y,\nu})a(g_{l,x,\sigma}) dxdy+h.c.
    \end{aligned}
  \end{equation}
  and we can bound it by
  \begin{equation}\label{Theta d bound true}
    \pm\Theta_{d,1}
    \lesssim\sum_{\sigma,\nu}\sum_{\vert l\vert\lesssim N^{\frac{1}{3}}}
    \vert\eta_l\vert(\mathcal{V}_4+N)
    \lesssim \ell^2(\mathcal{V}_4+N).
  \end{equation}
  $\Theta_{d,2}$ and $\Theta_{d,3}$ can be bounded similarly.
  \par For the third term (residue term) in (\ref{Theta}), we let
  \begin{equation*}
    k_{y_1,y_2,\sigma}(z)\coloneqq\sum_{q\in B^{\sigma}_F}\Big(
    \sum_{p\notin B_F^\sigma}\int_{\Lambda}\eta(x-y_1)
    e^{iqx}e^{ip(y_2-x)}dx\Big)f_{q,\sigma}(z),
  \end{equation*}
  which has the property that, for $j=1,2$
  \begin{equation*}
    \int_{\Lambda}\Vert k_{y_1,y_2,\sigma}\Vert^2_2dy_j\leq\sum_{q\in B_F^\sigma}
    \Vert\eta\Vert_2^2.
  \end{equation*}
  Thus, evaluating in the position space, we have
  \begin{equation}\label{Theta_r bound}
    \begin{aligned}
    \pm\Theta_r
    &=\mp\sum_{\sigma,\nu,\varpi}
    \int_{\Lambda^3}v_a(x_1-x_3)a^*(h_{x_3,\sigma})a^*(h_{x_1,\nu})
    a^*(h_{x_2,\varpi})\\
    &\quad \quad \quad \quad \times
    a(h_{x_1,\nu})a(g_{x_2,\varpi})a(k_{x_2,x_3,\sigma})dx_1dx_2dx_3+h.c.\\
    &\lesssim \mathcal{V}_4+N^2\Vert v_a\Vert_1\Vert\eta\Vert^2_2
    (\mathcal{N}_{ex}+1)^2\lesssim\mathcal{V}_4+N\ell.
    \end{aligned}
  \end{equation}
  \par We can therefore conclude (\ref{control eB on V_4}), by combining (\ref{[V_4,B]}-\ref{Theta_r bound}) and Gronwall's inequality.
\end{proof}

\par The next four lemmas collect the results about the right hand side of (\ref{define G_N}).
\begin{lemma}\label{cal eB on V_0,1,22,23 lemma}
  \begin{equation}\label{cal eB on V_0,1,22,23}
    e^{-B}\big(\mathcal{V}_{0}
  +\mathcal{V}_{1}+\mathcal{V}_{22}+\mathcal{V}_{23}\big)e^B
  =\frac{\hat{v}_0}{2}\sum_{\sigma\neq\nu}N_{\sigma}N_\nu+\mathcal{E}_0,
  \end{equation}
  with
  \begin{equation}\label{bound E_0}
    \pm\mathcal{E}_0\lesssim N^{-\frac{1}{6}}(\mathcal{K}_s+1).
  \end{equation}
\end{lemma}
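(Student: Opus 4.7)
The proof unfolds in three main steps: an algebraic decomposition and cancellation at the level of the bare operators, an observation that the main contribution is a scalar, and a term-by-term control of the remaining error under the quadratic conjugation.

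The plan is first to invoke Lemmas \ref{lemma V_0} and \ref{lemma V_22,23,1} to write
\begin{align*}
\mathcal{V}_0+\mathcal{V}_{22}+\mathcal{V}_{23}
&=\tfrac{\hat{v}_0}{2}\!\!\sum_{\sigma\neq\nu}\!(N_\sigma-\mathcal{N}_{ex,\sigma})(N_\nu-\mathcal{N}_{ex,\nu})+\!\!\sum_{\sigma,\nu}\hat{v}_0 N_\nu\mathcal{N}_{ex,\sigma}-\sum_\sigma\hat{v}_0 N_\sigma\mathcal{N}_{ex,\sigma}\\
&\quad+\mathcal{R}_0+\mathcal{E}_{\mathcal{V}_0}+\mathcal{E}_{\mathcal{V}_{22}}+\mathcal{E}_{\mathcal{V}_{23}}.
\end{align*}
A direct expansion shows that the contributions linear in $\mathcal{N}_{ex,\sigma}$ cancel exactly, leaving
\begin{equation*}
\mathcal{V}_0+\mathcal{V}_1+\mathcal{V}_{22}+\mathcal{V}_{23}=\tfrac{\hat{v}_0}{2}\sum_{\sigma\neq\nu}N_\sigma N_\nu+\widetilde{\mathcal{E}},\qquad \widetilde{\mathcal{E}}:=\mathcal{R}_0+\mathcal{V}_1+\tfrac{\hat{v}_0}{2}\!\!\sum_{\sigma\neq\nu}\!\mathcal{N}_{ex,\sigma}\mathcal{N}_{ex,\nu}+\mathcal{E}_{\mathcal{V}_0}+\mathcal{E}_{\mathcal{V}_{22}}+\mathcal{E}_{\mathcal{V}_{23}}.
\end{equation*}
Since $\frac{\hat{v}_0}{2}\sum_{\sigma\neq\nu}N_\sigma N_\nu$ acts as a scalar on $\mathcal{H}^{\wedge N}(\{N_{\varsigma_i}\})$, it commutes with $e^B$, and we obtain $\mathcal{E}_0=e^{-B}\widetilde{\mathcal{E}}e^B$.

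Next I would bound each piece of $e^{-B}\widetilde{\mathcal{E}}e^B$ using the operator-level control provided by the pre-conjugation estimates (\ref{est R_0 true}), (\ref{est V_1}), (\ref{est E_V_22}), (\ref{est E_V_23}), together with the conjugation bounds in Lemma \ref{control eB on N_ex lemma}. Schematically, $\mathcal{R}_0\lesssim N^{-1/6}\mathcal{N}_{ex}+N^{1/6}\mathcal{N}_i[0]$ pre-conjugation becomes, after applying $e^{\pm B}$ and using (\ref{control eB on     N_ex}), (\ref{control eB on N_i}),
\begin{equation*}
\pm e^{-B}\mathcal{R}_0 e^B\lesssim N^{-1/6}(\mathcal{N}_{ex}+1)+N^{1/6}\bigl(\mathcal{N}_i[0]+\mathcal{N}_h[0]+\ell(\mathcal{N}_{ex}+1)\bigr),
\end{equation*}
and the bounds $\mathcal{N}_h[0],\mathcal{N}_i[0]\lesssim N^{-1/3}\mathcal{K}_s$ from (\ref{ineqn N_h and N_i}) together with $\mathcal{N}_{ex}\lesssim\mathcal{K}_s$ and $\ell=N^{-1/3-\alpha}$ turn the right-hand side into $N^{-1/6}(\mathcal{K}_s+1)$. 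The terms $e^{-B}\mathcal{V}_1 e^B$, $e^{-B}\mathcal{E}_{\mathcal{V}_{22}}e^B$, $e^{-B}\mathcal{E}_{\mathcal{V}_{23}}e^B$ are handled identically, noting that the $N^{-2}\mathcal{K}_s$ contribution in (\ref{est E_V_23}) produces via Corollary \ref{corollary control eB on K_s} at most $N^{-2}\mathcal{K}_s+N^{-1}(\mathcal{N}_{ex}+1)$, which is absorbed.

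The main obstacle I foresee is the quadratic piece $a\mathcal{N}_{ex}^2$ appearing both in the explicit $\hat{v}_0\sum_{\sigma\neq\nu}\mathcal{N}_{ex,\sigma}\mathcal{N}_{ex,\nu}$ remainder and in $\mathcal{E}_{\mathcal{V}_0}$. Applying (\ref{control eB on     N_ex}) with $n=2$ gives $e^{-B}a\mathcal{N}_{ex}^2 e^B\lesssim a(\mathcal{N}_{ex}+1)^2$, and the naive estimate $(\mathcal{N}_{ex}+1)^2\leq N(\mathcal{N}_{ex}+1)$ combined with $a=N^{-1}$ only yields $\mathcal{N}_{ex}+1\lesssim\mathcal{K}_s+1$, which is worse than the target by the factor $N^{1/6}$. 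I would overcome this via the interpolation $\mathcal{N}_{ex}\lesssim N^{-1/3}\mathcal{K}_s+N^{2/3}$ from Lemma \ref{lemma ineqn K_s} applied to one copy of $\mathcal{N}_{ex}$, so that $a\mathcal{N}_{ex}^2\lesssim N^{-4/3}\mathcal{K}_s\mathcal{N}_{ex}+N^{-1/3}\mathcal{N}_{ex}\lesssim N^{-1/3}\mathcal{K}_s$ using also $\mathcal{N}_{ex}\leq N$ in the first summand (so that $\mathcal{K}_s\mathcal{N}_{ex}\leq N\mathcal{K}_s$ as commuting operators). The constant contribution $N^{8/3}a^3=N^{-1/3}$ in (\ref{est E_V_0}) is trivially absorbed. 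Assembling all the pieces and using $0<\alpha<1/6$ produces $\pm\mathcal{E}_0\lesssim N^{-1/6}(\mathcal{K}_s+1)$, completing the proof.
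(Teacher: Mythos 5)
Your proof is correct and is essentially the expanded version of the paper's one-line proof, which simply invokes Lemmas \ref{lemma V_0}, \ref{lemma V_22,23,1}, \ref{lemma ineqn K_s}, \ref{control eB on N_ex lemma} and Corollary \ref{corollary control eB on K_s}. The algebraic cancellation of the $N_\sigma\mathcal{N}_{ex,\nu}$-type terms, the observation that the remaining scalar commutes with $e^B$, the use of the one-sided conjugation bounds (which suffice since all operators in the error estimates are nonnegative and unitary conjugation preserves operator order), and the interpolation $\mathcal{N}_{ex}\lesssim N^{-1/3}\mathcal{K}_s+N^{2/3}$ combined with $\mathcal{N}_{ex}\leq N$ to absorb $a\mathcal{N}_{ex}^2$ are precisely the intended steps.
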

\begin{proof}
  \par Lemmas \ref{lemma ineqn K_s}, \ref{lemma V_0}, \ref{lemma V_22,23,1}, \ref{control eB on N_ex lemma} and Corollary \ref{corollary control eB on K_s} together yield Lemma \ref{cal eB on V_0,1,22,23 lemma}.
\end{proof}

\begin{lemma}\label{control eB on V_3 lemma}
\begin{equation}\label{contron eB on V_3}
  e^{-B}\mathcal{V}_3e^{B}=\mathcal{V}_3+\mathcal{E}_{\mathcal{V}_3},
\end{equation}
where
\begin{equation}\label{est E_V_3}
\begin{aligned}
  \pm\mathcal{E}_{\mathcal{V}_3}&\lesssim (N^{-\frac{1}{6}-\frac{\alpha}{2}}+\mu)\mathcal{V}_4
  +\mu^{-1}N^{-\frac{4}{3}-\alpha}(\mathcal{N}_{ex}+1)^3\\
  &+N^{-2\alpha}(\mathcal{K}_s+N^{\frac{1}{3}})+N^{-\alpha}
    (\mathcal{N}_{ex}+1)+N^{-\frac{2}{3}-\alpha}(\mathcal{N}_{ex}+1)^2
\end{aligned}
\end{equation}
for some $\mu>0$.
\end{lemma}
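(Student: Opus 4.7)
By the fundamental theorem of calculus, $e^{-B}\mathcal{V}_3 e^{B}-\mathcal{V}_3=\int_0^1 e^{-tB}[\mathcal{V}_3,B]e^{tB}dt$, and since $\mathcal{V}_3^*=\mathcal{V}_3$ while $B^*=-B$, we have $[\mathcal{V}_3,B]=\tfrac{1}{2}([\mathcal{V}_3,A]+[\mathcal{V}_3,A]^*)$. The plan is therefore to expand $[\mathcal{V}_3,A]$ via the commutator identity (\ref{quartic commu}) in Lemma \ref{lem commutator}(1), classify the resulting terms by the usual \emph{contraction count}, and bound them with the techniques already developed in Lemmas \ref{lemma V_22,23,1}, \ref{contol V_21'Omega [K,B]}, and \ref{corollary control eB on V_4}.

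Since $\mathcal{V}_3$ has the structure $a^*_{\mathrm{out}}a^*_{\mathrm{out}}a_{\mathrm{out}}a_{\mathrm{in}}$ (three momenta outside the Fermi ball, one inside) and $A$ has $a^*_{\mathrm{out}}a^*_{\mathrm{out}}a_{\mathrm{in}}a_{\mathrm{in}}$, a direct inspection of (\ref{quartic commu}) shows that every \emph{double}-contraction term between the non-$h.c.$\ part of $\mathcal{V}_3$ and $A$ is forced to identify an inside index with an outside index, and hence vanishes. Only single-contraction $6$-body terms survive from this pairing. From the $h.c.$\ part of $\mathcal{V}_3$, which is $a^*_{\mathrm{in}}a^*_{\mathrm{out}}a_{\mathrm{out}}a_{\mathrm{out}}$, both $6$-body residuals and genuine $4$-body double-contraction pieces appear, and these $4$-body terms are what force the bound to contain $\mathcal{K}_s$ and $(\mathcal{N}_{ex}+1)$ contributions. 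I would organize the resulting decomposition as $[\mathcal{V}_3,B]=\mathcal{T}_{6}+\mathcal{T}_{4,K}+\mathcal{T}_{4,N}$ and estimate each block separately.

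For the six-body residual $\mathcal{T}_6$, I would switch to position space, writing each term in the schematic form
\begin{equation*}
\sum_{\sigma,\nu,\tau}\int_{\Lambda^3} v_a(x-y)\,\eta(\cdot)\,a^*(h)a^*(h)a^*(h)a(g)a(h)a(k_{\cdot,\cdot,\tau})\,dxdydz
\end{equation*}
with the kernel $k_{\cdot,\cdot,\tau}$ coming from the single contraction (as in (\ref{[N_ex[varphi],A]})). Applying Cauchy--Schwarz with parameter $\mu>0$, splitting $v_a=\sqrt{v_a}\cdot\sqrt{v_a}$, pairs one factor with $\mathcal{V}_4$ (giving $\mu\mathcal{V}_4$) and the other with $\mu^{-1}$ times an integral of $\eta^2$-type, which by (\ref{est of eta and eta_perp}) produces $\mu^{-1}\Vert\eta\Vert_2^2 N(\mathcal{N}_{ex}+1)^3\lesssim \mu^{-1}N^{-4/3-\alpha}(\mathcal{N}_{ex}+1)^3$. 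The additional $N^{-1/6-\alpha/2}\mathcal{V}_4$ arises from the terms where a pointwise bound $\Vert\eta\Vert_\infty^2\lesssim 1$ (or rather $\Vert\eta\Vert_2 N^{1/2}$) can be extracted from the $\eta$-kernel before Cauchy--Schwarz.

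For the four-body pieces $\mathcal{T}_{4,K}$ and $\mathcal{T}_{4,N}$, the strategy parallels the proof of (\ref{est V_21'}) in Lemma \ref{contol V_21'Omega [K,B]}: fix one outside momentum $p_1$, introduce an auxiliary $\sigma$-wave function $k_{p_1,x,\tau}$ carrying the product coefficient $\hat v_\cdot\eta_\cdot$, and split Cauchy--Schwarz against $\mathcal{K}_s^{1/2}$ weighted by $(\vert p_1\vert^2-\vert k_F^\sigma\vert^2)^{\pm 1/2}$. Using (\ref{energy est bog}) and the $\ell^1$-bounds (\ref{sum W_p M_dp^-2}), (\ref{l1 norm of eta_p lemma}) on the convolution $\hat v\ast\eta$, this controls the main $4$-body piece by $N^{-2\alpha}(\mathcal{K}_s+N^{1/3})$. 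The smaller residual $4$-body operators — which have lost a factor of $a$ or a factor of $\eta_k$ summed away — are handled by straightforward position-space Cauchy--Schwarz as in the proof of Lemma \ref{lemma V_22,23,1}, giving $N^{-\alpha}(\mathcal{N}_{ex}+1)$ and $N^{-2/3-\alpha}(\mathcal{N}_{ex}+1)^2$. Finally, to convert the operator bounds for $[\mathcal{V}_3,B]$ into bounds for $\int_0^1 e^{-tB}[\mathcal{V}_3,B]e^{tB}dt$, I would apply Lemma \ref{control eB on N_ex lemma} on the $\mathcal{N}_{ex}$-polynomial factors, Corollary \ref{corollary control eB on K_s} on the $\mathcal{K}_s$ factor, and Lemma \ref{corollary control eB on V_4} on $\mathcal{V}_4$; all absorb their conjugation errors into the same target bound up to constants.

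The main obstacle I anticipate is the bookkeeping in Step 2: writing the six-body residuals in a form amenable to a single Cauchy--Schwarz that simultaneously produces $\mu\mathcal{V}_4$ \emph{and} a power of $(\mathcal{N}_{ex}+1)$ no worse than $3$. One must carefully place the surviving $h_{x,\sigma}$, $g_{x,\sigma}$, and the contraction kernel so that the $v_a$-factor can be paired with two $h$-operators (to feed $\mathcal{V}_4$) while the free $\eta$-factor is integrated in an $L^2$-sense, rather than the $L^\infty$-sense which would lose an extra $N$.
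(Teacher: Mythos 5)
Your plan has a genuine gap in the final step. You propose to bound $[\mathcal{V}_3,B]$ by an expression containing $\mathcal{K}_s$ and $\mathcal{V}_4$, and then apply Corollary~\ref{corollary control eB on K_s} and Lemma~\ref{corollary control eB on V_4} to pass from $[\mathcal{V}_3,B]$ to $\int_0^1 e^{-tB}[\mathcal{V}_3,B]e^{tB}\,dt$, claiming the conjugation errors are absorbable. They are not: the conjugations in (\ref{control eB on K_s}) and (\ref{control eB on V_4}) lose a full factor of $N$, since $e^{-tB}\mathcal{K}_se^{tB}\lesssim\mathcal{K}_s+N(\mathcal{N}_{ex}+1)$ and $e^{-tB}\mathcal{V}_4e^{tB}\lesssim\mathcal{V}_4+N$. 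If your intermediate bound contains $N^{-2\alpha}\mathcal{K}_s$, conjugation produces $N^{1-2\alpha}(\mathcal{N}_{ex}+1)$, which for any admissible $\alpha<1/6$ is far larger than the target coefficient $N^{-\alpha}(\mathcal{N}_{ex}+1)$. Likewise $\mu\mathcal{V}_4$ conjugates to $\mu(\mathcal{V}_4+N)$, and with the choice $\mu=N^{-\gamma}$, $\gamma<\alpha<1/6$, made in the application (Proposition~\ref{p2}), the extra constant $N^{1-\gamma}$ dwarfs $N^{1/3-\alpha/2}$. So the estimate cannot close along your route as stated.

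The paper avoids this precisely by never forming the full commutator $[\mathcal{V}_3,B]$. Instead it applies Newton--Leibniz separately to the two halves $a^*_{p-k,\sigma}a^*_{q+k,\nu}$ and $a_{q,\nu}a_{p,\sigma}$, obtaining the algebraically different but equivalent decomposition (\ref{split E_V_3}): in $\mathcal{E}_{\mathcal{V}_3,1}$ the pair $a^*_{p-k,\sigma}a^*_{q+k,\nu}$ stays \emph{outside} the conjugation, so the $\mathcal{V}_4$ appearing in (\ref{E_V_3,11 bound}) comes from unconjugated $a^*(h_{x,\sigma})a^*(h_{y,\nu})$ in position space, and never needs to be pushed through $e^{\pm tB}$. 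Similarly, the $\mathcal{K}_s$ appearing in (\ref{E_V,21122}) is obtained by first identifying $\mathcal{N}_h[0]$ inside the conjugations, passing the conjugations through via (\ref{control eB on N_h}) — which is not lossy, giving only $\mathcal{N}_h[\delta]+\ell(\mathcal{N}_{ex}+1)$ — and only afterwards converting to $\mathcal{K}_s$ via (\ref{ineqn N_h and N_i}). To make your strategy viable, you would need to reproduce both of these devices: keep one creation pair outside the Duhamel conjugation, and never conjugate $\mathcal{K}_s$ or $\mathcal{V}_4$ directly, substituting the well-behaved conjugations of $\mathcal{N}_h[\cdot]$ and $\mathcal{N}_{ex}$ and the position-space form (\ref{V_4position}) of $\mathcal{V}_4$ with the creation operators left untouched.
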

\begin{proof}
  \par Using the fact that
  \begin{equation*}
  \begin{aligned}
    e^{-B}a^*_{p-k,\sigma}a^*_{q+k,\nu}e^B
    &=a^*_{p-k,\sigma}a^*_{q+k,\nu}+\int_{0}^{1}
    e^{-tB}[a^*_{p-k,\sigma}a^*_{q+k,\nu},B]e^{tB}dt\\
    e^{-B}a_{q,\nu}a_{p,\sigma}e^B
    &=a_{q,\nu}a_{p,\sigma}+\int_{0}^{1}
    e^{-tB}[a_{q,\nu}a_{p,\sigma},B]e^{tB}dt
  \end{aligned}
  \end{equation*}
  we rewrite $e^{-B}\mathcal{V}_3e^B=\mathcal{V}_3+\mathcal{E}_{\mathcal{V}_3}$ with
  $\mathcal{E}_{\mathcal{V}_3}=\mathcal{E}_{\mathcal{V}_3,1}+\mathcal{E}_{\mathcal{V}_3,2}$ given by
  \begin{equation}\label{split E_V_3}
  \begin{aligned}
    \mathcal{E}_{\mathcal{V}_3,1}&=\sum_{k,p,q,\sigma,\nu}
  \hat{v}_ka^*_{p-k,\sigma}a^*_{q+k,\nu}\int_{0}^{1}e^{-tB}
  [a_{q,\nu}a_{p,\sigma},B]e^{tB}dt\\
  &\quad\quad\quad\quad\times
  \chi_{p-k\notin B^{\sigma}_F}\chi_{p\in B^{\sigma}_F}\chi_{q,q+k\notin B^{\nu}_F}
  +h.c.\\
    \mathcal{E}_{\mathcal{V}_3,2}&=\sum_{k,p,q,\sigma,\nu}
  \hat{v}_k\int_{0}^{1}e^{-tB}
  [a^*_{p-k,\sigma}a^*_{q+k,\nu},B]e^{(t-1)B}
  a_{q,\nu}a_{p,\sigma}e^Bdt\\
  &\quad\quad\quad\quad\times
  \chi_{p-k\notin B^{\sigma}_F}\chi_{p\in B^{\sigma}_F}\chi_{q,q+k\notin B^{\nu}_F}
  +h.c.
  \end{aligned}
  \end{equation}
   During the following evaluations, we apply the notation:
   \begin{equation*}
    k_{y_1,y_2,\sigma}(z)\coloneqq\sum_{q\in B^{\sigma}_F}\Big(
    \sum_{p\notin B_F^\sigma}\int_{\Lambda}\eta(x-y_1)
    e^{iqx}e^{ip(y_2-x)}dx\Big)f_{q,\sigma}(z).
  \end{equation*}
  One can check that for $j=1,2$
  \begin{equation*}
    \int_{\Lambda}\Vert k_{y_1,y_2,\sigma}\Vert^2_2dy_j\leq\sum_{q\in B_F^\sigma}
    \Vert\eta\Vert_2^2.
  \end{equation*}
  For the first term in (\ref{split E_V_3}), by direct calculation, we can write $\mathcal{E}_{\mathcal{V}_3,1}=\sum_{j=1}^{2}\mathcal{E}_{\mathcal{V}_3,1j}$ with
  \begin{equation}\label{E_V_3_1 split}
    \begin{aligned}
    \mathcal{E}_{\mathcal{V}_3,11}&=\sum_{\substack{\sigma,\nu\\k,p,q}}
    \sum_{\substack{\tau,\varpi\\l,r,s}}\hat{v}_k\eta_l\delta_{q,r-l}\delta_{\nu,\tau}
    a^*_{p-k,\sigma}a^*_{q+k,\nu}\int_{0}^{1}e^{-tB}a^*_{s+l,\varpi}a_{s,\varpi}a_{r,\tau}
    a_{p,\sigma}e^{tB}dt\\
    &\quad\quad\times
    \chi_{p-k\notin B^{\sigma}_F}\chi_{p\in B^{\sigma}_F}\chi_{q,q+k\notin B^{\nu}_F}
    \chi_{r-l\notin B^{\tau}_F}\chi_{r\in B^{\tau}_F}\chi_{s+l\notin B^{\varpi}_F}
    \chi_{s\in B^{\varpi}_F}+h.c.\\
    \mathcal{E}_{\mathcal{V}_3,12}&=\sum_{\substack{\sigma,\nu\\k,p,q}}
    \sum_{\substack{\tau,\varpi\\l,r,s}}\hat{v}_k\eta_l\delta_{p,r}\delta_{\sigma,\tau}
    a^*_{p-k,\sigma}a^*_{q+k,\nu}\int_{0}^{1}e^{-tB}a^*_{s,\varpi}a_{q,\nu}a_{s+l,\varpi}
    a_{r-l,\tau}e^{tB}dt\\
    &\quad\quad\times
    \chi_{p-k\notin B^{\sigma}_F}\chi_{p\in B^{\sigma}_F}\chi_{q,q+k\notin B^{\nu}_F}
    \chi_{r-l\notin B^{\tau}_F}\chi_{r\in B^{\tau}_F}\chi_{s+l\notin B^{\varpi}_F}
    \chi_{s\in B^{\varpi}_F}+h.c.
    \end{aligned}
  \end{equation}
  Switching to position space, with the help of (\ref{V_4position}) and Lemma \ref{control eB on N_ex lemma}, we have
  \begin{equation}\label{E_V_3,11 bound}
    \begin{aligned}
    \pm \mathcal{E}_{\mathcal{V}_3,11}&=\pm
    \sum_{\sigma,\nu,\varpi}\int_{\Lambda^3}\int_{0}^{1}v_a(x_1-x_3)
    a^*(h_{x_1,\sigma})a^*(h_{x_3,\nu})e^{-tB}a^*(h_{x_4,\varpi})\\
    &\quad\quad\quad\quad
     \times a(g_{x_4,\varpi})a(k_{x_4,x_3,\nu})a(g_{x_1,\sigma})e^{tB}
     dtdx_1dx_3dx_4+h.c.\\
     &\lesssim N^{\frac{3}{2}}\Vert v_a\Vert_1^{\frac{1}{2}}\Vert\eta\Vert_2
     (\mathcal{V}_4+\mathcal{N}_{ex}+1)
     \lesssim\ell^{\frac{1}{2}}(\mathcal{V}_4+\mathcal{N}_{ex}+1),
    \end{aligned}
  \end{equation}
  since $\Vert v_a\Vert_1\lesssim a$. On the other hand, we can bound
  \begin{equation}\label{E_V_3,12 bound}
    \begin{aligned}
    &\pm\mathcal{E}_{\mathcal{V}_3,12}\\
    &=\pm
    \sum_{\sigma,\nu,\varpi}\int_{\Lambda^4}\int_{0}^{1}v_a(x_1-x_3)\eta(x_2-x_4)
    \Big(\sum_{q_1\in B^\sigma_F}e^{-iq_1(x_1-x_2)}\Big)a^*(h_{x_1,\sigma})a^*(h_{x_3,\nu})\\
    &\quad\quad\quad\quad\times
    e^{-tB}a^*(g_{x_4,\varpi})a(h_{x_3,\nu})a(h_{x_4,\varpi})a(h_{x_2,\sigma})e^{tB}
    dtdx_1dx_2dx_3dx_4+h.c.\\
    &\lesssim\mu\mathcal{V}_4+\mu^{-1}N^2\Vert v_a\Vert_1\Vert\eta\Vert_2^2
    (\mathcal{N}_{ex}+1)^3
    \lesssim\mu\mathcal{V}_4+\mu^{-1}N^{-1}\ell
    (\mathcal{N}_{ex}+1)^3.
    \end{aligned}
  \end{equation}
  \par For the second term in (\ref{split E_V_3}), we rewrite $\mathcal{E}_{\mathcal{V}_3,1}=\sum_{j=1}^{3}\mathcal{E}_{\mathcal{V}_3,2j}$ with
  \begin{equation}\label{E_V_3,2 split}
    \begin{aligned}
     \mathcal{E}_{\mathcal{V}_3,21}&=\sum_{\substack{\sigma,\nu\\k,p,q}}
    \sum_{\substack{\tau,\varpi\\l,r,s}}\hat{v}_k\eta_l
    \delta_{p-k,r-l}\delta_{\sigma,\tau}\delta_{q+k,s+l}\delta_{\nu,\varpi}
    \int_{0}^{1}e^{-tB}a^*_{r,\tau}a^*_{s,\varpi}e^{(t-1)B}
    a_{q,\nu}a_{p,\sigma}e^{B}dt\\
    &\quad\quad\times
    \chi_{p-k\notin B^{\sigma}_F}\chi_{p\in B^{\sigma}_F}\chi_{q,q+k\notin B^{\nu}_F}
    \chi_{r-l\notin B^{\tau}_F}\chi_{r\in B^{\tau}_F}\chi_{s+l\notin B^{\varpi}_F}
    \chi_{s\in B^{\varpi}_F}+h.c.\\
    \mathcal{E}_{\mathcal{V}_3,22}&=\sum_{\substack{\sigma,\nu\\k,p,q}}
    \sum_{\substack{\tau,\varpi\\l,r,s}}\hat{v}_k\eta_l
    \delta_{p-k,r-l}\delta_{\sigma,\tau}
    \int_{0}^{1}e^{-tB}a^*_{r,\tau}a^*_{s,\varpi}a^*_{q+k,\nu}a_{s+l,\varpi}e^{(t-1)B}
    a_{q,\nu}a_{p,\sigma}e^{B}dt\\
    &\quad\quad\times(-
    \chi_{p-k\notin B^{\sigma}_F}\chi_{p\in B^{\sigma}_F}\chi_{q,q+k\notin B^{\nu}_F}
    \chi_{r-l\notin B^{\tau}_F}\chi_{r\in B^{\tau}_F}\chi_{s+l\notin B^{\varpi}_F}
    \chi_{s\in B^{\varpi}_F})+h.c.\\
    \mathcal{E}_{\mathcal{V}_3,23}&=\sum_{\substack{\sigma,\nu\\k,p,q}}
    \sum_{\substack{\tau,\varpi\\l,r,s}}\hat{v}_k\eta_l
    \delta_{q+k,s+l}\delta_{\nu,\varpi}
    \int_{0}^{1}e^{-tB}a^*_{r,\tau}a^*_{s,\varpi}a^*_{p-k,\sigma}a_{r-l,\tau}e^{(t-1)B}
    a_{q,\nu}a_{p,\sigma}e^{B}dt\\
    &\quad\quad\times(-
    \chi_{p-k\notin B^{\sigma}_F}\chi_{p\in B^{\sigma}_F}\chi_{q,q+k\notin B^{\nu}_F}
    \chi_{r-l\notin B^{\tau}_F}\chi_{r\in B^{\tau}_F}\chi_{s+l\notin B^{\varpi}_F}
    \chi_{s\in B^{\varpi}_F})+h.c.
    \end{aligned}
  \end{equation}
  Switching to the position space, with the help of Lemma \ref{control eB on N_ex lemma}, we have
  \begin{equation}\label{E_V_3,23 bound}
    \begin{aligned}
    \pm\mathcal{E}_{\mathcal{V}_3,23}&=\pm
    \sum_{\sigma,\nu,\tau}\int_{\Lambda^3}\int_{0}^{1}v_a(x_1-x_3)e^{-tB}
    a^*(g_{x_2,\tau})a^*(k_{x_2,x_3,\nu})a^*(h_{x_1,\sigma})\\
    &\quad\quad\quad\quad\times
    a(h_{x_2,\tau})e^{(t-1)B}a(h_{x_3,\nu})a(g_{x_1,\sigma})e^Bdtdx_1dx_2dx_3+h.c.\\
    &\lesssim N^{\frac{3}{2}}\Vert v_a\Vert_1\Vert\eta\Vert_2
    (\mathcal{N}_{ex}+1)^{\frac{3}{2}}
    \lesssim \ell^{\frac{1}{2}}(\mathcal{N}_{ex}+1),
    \end{aligned}
  \end{equation}
  since $(\mathcal{N}_{ex}+1)^{\frac{1}{2}}\lesssim N^{\frac{1}{2}}$. Similarly,
  \begin{equation}\label{E_V_3,22 bound}
     \begin{aligned}
    \pm\mathcal{E}_{\mathcal{V}_3,22}&=\pm
    \sum_{\sigma,\nu,\varpi}\int_{\Lambda^3}\int_{0}^{1}v_a(x_1-x_3)e^{-tB}
    a^*(k_{x_4,x_1,\sigma})a^*(g_{x_4,\varpi})a^*(h_{x_3,\nu})\\
    &\quad\quad\quad\quad\times
    a(h_{x_4,\varpi})e^{(t-1)B}a(h_{x_3,\nu})a(g_{x_1,\sigma})e^Bdtdx_1dx_3dx_4+h.c.\\
    &\lesssim N^{\frac{3}{2}}\Vert v_a\Vert_1\Vert\eta\Vert_2
    (\mathcal{N}_{ex}+1)^{\frac{3}{2}}
    \lesssim \ell^{\frac{1}{2}}(\mathcal{N}_{ex}+1).
    \end{aligned}
  \end{equation}
  The estimate of $\mathcal{E}_{\mathcal{V}_3,21}$ is more subtle. Switching to the position space, we write
  \begin{equation}\label{E_V_3,21 bound}
     \begin{aligned}
    &\mathcal{E}_{\mathcal{V}_3,21}=
    \sum_{\sigma,\nu}\int_{\Lambda^4}\int_{0}^{1}v_a(x_1-x_3)
    \eta(x_2-x_4)\Big(\sum_{p_1\notin B_F^\sigma}e^{ip_1(x_1-x_2)}\Big)
    \Big(\sum_{p_2\notin B_F^\nu}e^{ip_2(x_3-x_4)}\Big)\\
    &\quad\quad\times
    e^{-tB}a^*(g_{x_2,\sigma})a^*(g_{x_4,\nu})
    e^{(t-1)B}a(h_{x_3,\nu})a(g_{x_1,\sigma})e^Bdtdx_1dx_2dx_3dx_4+h.c.
    \end{aligned}
  \end{equation}
  Using the fact that
  \begin{equation*}
    \sum_{p_1\notin B_F^\sigma}\sum_{p_2\notin B_F^\nu}
    =\sum_{p_1,p_2}-\sum_{p_1}\sum_{p_2\in B_F^\nu}
    -\sum_{p_2}\sum_{p_1\in B_F^\sigma}+
     \sum_{p_1\in B_F^\sigma}\sum_{p_2\in B_F^\nu},
  \end{equation*}
  we rewrite $\mathcal{E}_{\mathcal{V}_3,21}=\sum_{j=1}^{4}\mathcal{E}_{\mathcal{V}_3,21j}$ with
  \begin{equation}\label{E_V_3,21 split}
    \begin{aligned}
    &\mathcal{E}_{\mathcal{V}_3,211}=
    \sum_{\sigma,\nu}\int_{\Lambda^2}\int_{0}^{1}v_a(x_1-x_3)
    \eta(x_1-x_3)\\
    &\quad\quad\times
    e^{-tB}a^*(g_{x_1,\sigma})a^*(g_{x_3,\nu})
    e^{(t-1)B}a(h_{x_3,\nu})a(g_{x_1,\sigma})e^Bdtdx_1dx_3+h.c.\\
    &\mathcal{E}_{\mathcal{V}_3,212}=-
    \sum_{\sigma,\nu}\int_{\Lambda^3}\int_{0}^{1}v_a(x_1-x_3)
    \eta(x_1-x_4)
    \Big(\sum_{p_2\in B_F^\nu}e^{ip_2(x_3-x_4)}\Big)\\
    &\quad\quad\times
    e^{-tB}a^*(g_{x_1,\sigma})a^*(g_{x_4,\nu})
    e^{(t-1)B}a(h_{x_3,\nu})a(g_{x_1,\sigma})e^Bdtdx_1dx_3dx_4+h.c.\\
    &\mathcal{E}_{\mathcal{V}_3,213}=-
    \sum_{\sigma,\nu}\int_{\Lambda^3}\int_{0}^{1}v_a(x_1-x_3)
    \eta(x_2-x_3)\Big(\sum_{p_1\in B_F^\sigma}e^{ip_1(x_1-x_2)}\Big)
    \\
    &\quad\quad\times
    e^{-tB}a^*(g_{x_2,\sigma})a^*(g_{x_3,\nu})
    e^{(t-1)B}a(h_{x_3,\nu})a(g_{x_1,\sigma})e^Bdtdx_1dx_2dx_3+h.c.\\
    &\mathcal{E}_{\mathcal{V}_3,214}=
    \sum_{\sigma,\nu}\int_{\Lambda^4}\int_{0}^{1}v_a(x_1-x_3)
    \eta(x_2-x_4)\Big(\sum_{p_1\in B_F^\sigma}e^{ip_1(x_1-x_2)}\Big)
    \Big(\sum_{p_2\in B_F^\nu}e^{ip_2(x_3-x_4)}\Big)\\
    &\quad\quad\times
    e^{-tB}a^*(g_{x_2,\sigma})a^*(g_{x_4,\nu})
    e^{(t-1)B}a(h_{x_3,\nu})a(g_{x_1,\sigma})e^Bdtdx_1dx_2dx_3dx_4+h.c.
    \end{aligned}
  \end{equation}
  Notice that for $j=3,4$
  \begin{equation*}
    \int_{\Lambda}\Big\vert\sum_{p_2\in B_F^\nu}e^{ip_2(x_3-x_4)}\Big\vert^2 dx_j
    \leq N_\nu.
  \end{equation*}
  Therefore, we can bound with the help of Lemma \ref{control eB on N_ex lemma} that
   \begin{equation}\label{E_V_3,212}
    \begin{aligned}
    \pm\mathcal{E}_{\mathcal{V}_3,212}\lesssim
    N^2\Vert v_a\Vert_1\Vert\eta\Vert_2(\mathcal{N}_{ex}+1)
    \lesssim\ell^{\frac{1}{2}}(\mathcal{N}_{ex}+1).
    \end{aligned}
  \end{equation}
  Similarly
  \begin{equation}\label{E_V_3,213,214}
     \begin{aligned}
    \pm\mathcal{E}_{\mathcal{V}_3,213}&\lesssim
    N^2\Vert v_a\Vert_1\Vert\eta\Vert_2(\mathcal{N}_{ex}+1)
    \lesssim\ell^{\frac{1}{2}}(\mathcal{N}_{ex}+1)\\
    \pm\mathcal{E}_{\mathcal{V}_3,214}&\lesssim
    N^\frac{5}{2}\Vert v_a\Vert_1\Vert\eta\Vert_1(\mathcal{N}_{ex}+1)
    \lesssim N^{\frac{1}{2}}\ell^{2}(\mathcal{N}_{ex}+1)
    \end{aligned}
  \end{equation}
  For $\mathcal{E}_{\mathcal{V}_3,211}$, we first switch back to the momentum space to find
  \begin{equation}\label{E_V_3,211}
    \begin{aligned}
    \mathcal{E}_{\mathcal{V}_3,211}&=\int_{0}^{1}
    \sum_{k,p,q,\sigma,\nu}\Big(\sum_{m}\hat{v}_{k-m}\eta_m\Big)
    e^{-tB}a^*_{p-k,\sigma}a^*_{q+k,\nu}e^{(t-1)B}a_{q,\nu}a_{p,\sigma}e^Bdt\\
    &\quad\quad\quad\quad\times
    \chi_{p-k,p\in B_F^\sigma}\chi_{q+k\in B_F^\nu}\chi_{q\notin B_F^\nu}+h.c.
    \end{aligned}
  \end{equation}
  We once again use the fact
  \begin{equation*}
    e^{(t-1)B}a_{q,\nu}a_{p,\sigma}e^{(1-t)B}
    =a_{q,\nu}a_{p,\sigma}+\int_{0}^{1-t}
    e^{-sB}[a_{q,\nu}a_{p,\sigma},B]e^{sB}ds
  \end{equation*}
  to rewrite $\mathcal{E}_{\mathcal{V}_3,211}=\sum_{j=1}^{3}\mathcal{E}_{\mathcal{V}_3,211j}$ with
  \begin{equation}\label{E_V_3,211 split}
  \begin{aligned}
    &\mathcal{E}_{\mathcal{V}_3,2111}=\int_{0}^{1}
    \sum_{k,p,q,\sigma,\nu}\Big(\sum_{m}\hat{v}_{k-m}\eta_m\Big)
    e^{-tB}a^*_{p-k,\sigma}a^*_{q+k,\nu}a_{q,\nu}a_{p,\sigma}e^{tB}dt\\
    &\quad\quad\quad\quad\times
    \chi_{p-k,p\in B_F^\sigma}\chi_{q+k\in B_F^\nu}\chi_{q\notin B_F^\nu}+h.c.\\
   &\mathcal{E}_{\mathcal{V}_3,2112}=\int_{0}^{1}\int_{0}^{1-t}
   \sum_{\substack{\sigma,\nu\\k,p,q}}
    \sum_{\substack{\tau,\varpi\\l,r,s}}\Big(\sum_{m}\hat{v}_{k-m}\eta_m\Big)
    \eta_l\delta_{q,r-l}\delta_{\nu,\tau}\\
    &\quad\quad\times
    e^{-tB}a^*_{p-k,\sigma}a^*_{q+k,\nu}
    e^{-sB}a^*_{s+l,\varpi}a_{s,\varpi}a_{r,\tau}
    a_{p,\sigma}e^{(s+t)B}dsdt\\
    &\quad\quad\times
    \chi_{p-k,p\in B^{\sigma}_F}\chi_{q+k\in B_F^\nu}\chi_{q\notin B^{\nu}_F}
    \chi_{r-l\notin B^{\tau}_F}\chi_{r\in B^{\tau}_F}\chi_{s+l\notin B^{\varpi}_F}
    \chi_{s\in B^{\varpi}_F}+h.c.\\
   & \mathcal{E}_{\mathcal{V}_3,2113}=\int_{0}^{1}\int_{0}^{1-t}
   \sum_{\substack{\sigma,\nu\\k,p,q}}
    \sum_{\substack{\tau,\varpi\\l,r,s}}\Big(\sum_{m}\hat{v}_{k-m}\eta_m\Big)
    \eta_l\delta_{p,r}\delta_{\sigma,\tau}\\
    &\quad\quad\times
    e^{-tB}a^*_{p-k,\sigma}a^*_{q+k,\nu}e^{-sB}a^*_{s,\varpi}a_{q,\nu}a_{s+l,\varpi}
    a_{r-l,\tau}e^{(s+t)B}dsdt\\
    &\quad\quad\times
     \chi_{p-k,p\in B^{\sigma}_F}\chi_{q+k\in B_F^\nu}\chi_{q\notin B^{\nu}_F}
    \chi_{r-l\notin B^{\tau}_F}\chi_{r\in B^{\tau}_F}\chi_{s+l\notin B^{\varpi}_F}
    \chi_{s\in B^{\varpi}_F}+h.c.
  \end{aligned}
  \end{equation}
  Since $\Vert\eta\Vert_\infty\leq1$, $\mathcal{E}_{\mathcal{V}_3,2111}$ can be bounded similar to bounding $\mathcal{V}_1$ in Lemma \ref{lemma V_22,23,1}, with the help of Lemma \ref{control eB on N_ex lemma}. We state the result directly without further details:
  \begin{equation}\label{E_V_3,2111}
    \pm\mathcal{E}_{\mathcal{V}_3,2111}\lesssim N^{-\frac{1}{6}}(\mathcal{K}_s+1).
  \end{equation}
  For $\mathcal{E}_{\mathcal{V}_3,2112}$, we write $\mathcal{E}_{\mathcal{V}_3,2112}=
  \sum_{j=1}^{3}\mathcal{E}_{\mathcal{V}_3,2112j}$ in the position space with
  \begin{equation}\label{E_V_3,2112 split}
    \begin{aligned}
    &\mathcal{E}_{\mathcal{V}_3,21121}=
    \sum_{\sigma,\nu,\varpi}\int_{\Lambda^3}\int_{0}^{1}\int_{0}^{1-t}
    v_a(x_1-x_3)\eta(x_1-x_3)\eta(x_3-x_4)\\
    &\quad\quad\quad\quad\times e^{-tB}a^*(g_{x_1,\sigma})a^*(g_{x_3,\nu})e^{-sB}
    a^*(h_{x_4,\varpi})\\
    &\quad\quad\quad\quad\times
    a(g_{x_4,\varpi})a(g_{x_3,\nu})a(g_{x_1,\sigma})e^{(s+t)B}dsdtdx_1dx_3dx_4+h.c.\\
     &\mathcal{E}_{\mathcal{V}_3,21122}=
    \sum_{\sigma,\nu,\varpi}\int_{\Lambda^4}\int_{0}^{1}\int_{0}^{1-t}
    v_a(x_1-x_3)\eta(x_1-x_3)\eta(x_2-x_4)\Big(\sum_{p_3\in B^\nu_F}e^{ip_3(x_2-x_3)}\Big)\\
    &\quad\quad\quad\quad\times e^{-tB}a^*(g_{x_1,\sigma})a^*(g_{x_3,\nu})e^{-sB}
    a^*(H_{x_4,\varpi}[0])\\
    &\quad\quad\quad\quad\times
    a(g_{x_4,\varpi})a(g_{x_2,\nu})a(g_{x_1,\sigma})e^{(s+t)B}dsdtdx_1dx_2dx_3dx_4+h.c.\\
     &\mathcal{E}_{\mathcal{V}_3,21123}=
    \sum_{\sigma,\nu,\varpi}\int_{\Lambda^4}\int_{0}^{1}\int_{0}^{1-t}
    v_a(x_1-x_3)\eta(x_1-x_3)\eta(x_2-x_4)\Big(\sum_{p_3\in B^\nu_F}e^{ip_3(x_2-x_3)}\Big)\\
    &\quad\quad\quad\quad\times e^{-tB}a^*(g_{x_1,\sigma})a^*(g_{x_3,\nu})e^{-sB}
    a^*(L_{x_4,\varpi}[0])a(g_{x_4,\varpi})\\
    &\quad\quad\quad\quad\times
    a(g_{x_2,\nu})a(g_{x_1,\sigma})e^{(s+t)B}dsdtdx_1dx_2dx_3dx_4+h.c.
    \end{aligned}
  \end{equation}
  Using Lemmas \ref{control eB on N_ex lemma} and \ref{variant lem 7.1 lemma}, we can bound
  \begin{equation}\label{E_V_3, 21121}
    \pm\mathcal{E}_{\mathcal{V}_3,21121}\lesssim
    N^{\frac{5}{2}}\Vert v_a\Vert_1\Vert\eta\Vert_1\Vert\eta\Vert_\infty(\mathcal{N}_{ex}+1)
    \lesssim N^{\frac{1}{2}}\ell^2(\mathcal{N}_{ex}+1).
  \end{equation}
  Using Lemmas \ref{lemma ineqn K_s} and \ref{control eB on N_ex lemma}, we can bound
  \begin{equation}\label{E_V,21122}
    \begin{aligned}
    \pm\mathcal{E}_{\mathcal{V}_3,21122}\lesssim
    N^{3}\Vert v_a\Vert_1\Vert\eta\Vert_1\Vert\eta\Vert_\infty
    \big(\mathcal{N}_h[0]+1+\ell\mathcal{N}_{ex}\big)
    \lesssim N^{-2\alpha}(\mathcal{K}_s+N^{\frac{1}{3}}),
    \end{aligned}
  \end{equation}
  and
  \begin{equation}\label{E_V,21123}
    \begin{aligned}
    \pm\mathcal{E}_{\mathcal{V}_3,21123}\lesssim
    N^{\frac{5}{2}}N^{\frac{1}{3}}\Vert v_a\Vert_1\Vert\eta\Vert_1\Vert\eta\Vert_\infty
    \big(N^{-\frac{1}{6}}\mathcal{N}_{ex}+N^{\frac{1}{6}}\big)
    \lesssim N^{-2\alpha}(\mathcal{N}_{ex}+N^{\frac{1}{3}}).
    \end{aligned}
  \end{equation}
  For $\mathcal{E}_{\mathcal{V}_3,2113}$, we write $\mathcal{E}_{\mathcal{V}_3,2112}=
  \sum_{j=1}^{2}\mathcal{E}_{\mathcal{V}_3,2113j}$ in the position space with
  \begin{equation}\label{E_V_3,2113 split}
    \begin{aligned}
    &\mathcal{E}_{\mathcal{V}_3,21131}
    =\sum_{\sigma,\nu,\varpi}\int_{\Lambda^4}\int_{0}^{1}\int_{0}^{1-t}
    v_a(x_1-x_3)\eta(x_1-x_3)\eta(x_2-x_4)\Big(\sum_{q_1\in
    B_F^\sigma}e^{iq_1(x_2-x_1)}\Big)\\
    &\quad\quad\quad\quad
    \times e^{-tB}a^*(g_{x_1,\sigma})a^*(g_{x_3,\nu})e^{-sB}a^*(g_{x_4,\varpi})\\
    &\quad\quad\quad\quad
    \times a(h_{x_3,\nu})a(H_{x_4,\varpi}[2\alpha])a(h_{x_2,\sigma})e^{(s+t)B}
    dsdtdx_1dx_2dx_3dx_4+h.c.\\
    &\mathcal{E}_{\mathcal{V}_3,21132}
    =\sum_{\sigma,\nu,\varpi}\int_{\Lambda^4}\int_{0}^{1}\int_{0}^{1-t}
    v_a(x_1-x_3)\eta(x_1-x_3)\eta(x_2-x_4)\Big(\sum_{q_1\in
    B_F^\sigma}e^{iq_1(x_2-x_1)}\Big)\\
    &\quad\quad\quad\quad
    \times e^{-tB}a^*(g_{x_1,\sigma})a^*(g_{x_3,\nu})e^{-sB}a^*(g_{x_4,\varpi})\\
    &\quad\quad\quad\quad
    \times a(h_{x_3,\nu})a(L_{x_4,\varpi}[2\alpha])a(h_{x_2,\sigma})e^{(s+t)B}
    dsdtdx_1dx_2dx_3dx_4+h.c.
    \end{aligned}
  \end{equation}
  Using Lemmas \ref{lemma ineqn K_s} and \ref{control eB on N_ex lemma}, we can bound
  \begin{equation}\label{E_V_3 21131}
    \begin{aligned}
    \pm\mathcal{E}_{\mathcal{V}_3,21131}&\lesssim N^2
    \Vert v_a\Vert_1\Vert\eta\Vert_2\Vert\eta\Vert_\infty
    \big(\theta\mathcal{N}_h[2\alpha]+\theta\ell(\mathcal{N}_{ex}+1)+\theta^{-1}
    (\mathcal{N}_{ex}+1)^2\big)\\
    &\lesssim N^{-2\alpha}\mathcal{K}_s+N^{-\alpha}
    (\mathcal{N}_{ex}+1)+N^{-\frac{2}{3}-\alpha}(\mathcal{N}_{ex}+1)^2
    \end{aligned}
  \end{equation}
  where we choose $\theta=N^{\frac{1}{2}+\frac{\alpha}{2}}$. Similarly,
  \begin{equation}\label{E_V_3 21132}
    \begin{aligned}
    \pm\mathcal{E}_{\mathcal{V}_3,21132}&\lesssim N^2N^{\frac{1}{3}+\alpha}
    \Vert v_a\Vert_1\Vert\eta\Vert_1\Vert\eta\Vert_\infty
    (\mathcal{N}_{ex}+1)\\
    &\lesssim N^{\frac{1}{3}+\alpha}\ell^2(\mathcal{N}_{ex}+1).
    \end{aligned}
  \end{equation}
  Combining (\ref{split E_V_3}-\ref{E_V_3 21132}), we conclude Lemma \ref{control eB on V_3 lemma}.
\end{proof}

\begin{lemma}\label{cal com V_21' lemma}
  \begin{equation}\label{cal com V_21'}
    \begin{aligned}
    &\int_{0}^{1}\int_{0}^{t}e^{-sB}[\mathcal{V}_{21}^\prime,B]e^{sB}dsdt=
    \sum_{k,p,q,\sigma,\nu}W_k\eta_k\chi_{p-k\notin B^{\sigma}_F}\chi_{q+k\notin B^{\nu}_F}\chi_{p\in B^{\sigma}_F}\chi_{q\in B^{\nu}_F}\\
    &\quad\quad\quad\quad\quad\quad\quad\quad\quad
    -\sum_{k,p,q,\sigma}W_k\eta_{k+q-p}\chi_{p-k,q+k\notin B^{\sigma}_F}\chi_{p,q\in B^{\sigma}_F}\chi_{p\neq q}+\mathcal{E}_{\mathcal{V}_{21}^\prime},
    \end{aligned}
  \end{equation}
  and
  \begin{equation}\label{cal com Omega}
    \begin{aligned}
    &\int_{0}^{1}\int_{0}^{t}e^{-sB}[\Omega,B]e^{sB}dsdt\\
    &\quad\quad\quad\quad\quad
    =-\sum_{k,p,q,\sigma,\nu}\eta_k^2k(q-p)(1-\chi_{p-k\notin B^{\sigma}_F}\chi_{q+k\notin B^{\nu}_F})\chi_{p\in B^{\sigma}_F}\chi_{q\in B^{\nu}_F}\\
    &\quad\quad\quad\quad\quad
    -\sum_{k,p,q,\sigma}\eta_{k+q-p}\eta_kk(q-p)\chi_{p-k,q+k\notin B^{\sigma}_F}\chi_{p,q\in B^{\sigma}_F}\chi_{p\neq q}+\mathcal{E}_{\Omega},
    \end{aligned}
  \end{equation}
  where
  \begin{equation}\label{est E_V_21'}
  \begin{aligned}
    \pm(\mathcal{E}_{\mathcal{V}_{21}^\prime}+\mathcal{E}_{\Omega})\lesssim
    &\big(N^{-\frac{2}{3}+3\gamma}+N^{-\frac{3}{2}\alpha}\big)(\mathcal{N}_{ex}+1)\\
    +&\big(N^{-\frac{2}{3}-2\alpha}+N^{-\frac{7}{9}+\alpha+\frac{7}{3}\gamma}\big)
   (\mathcal{N}_{ex}+1)^2\\
    +&\big(N^{-\alpha-\gamma}+N^{-\frac{1}{2}+\alpha+2\gamma}\big)
    \mathcal{K}_s+N^{\frac{1}{3}-\frac{\alpha}{2}}.
  \end{aligned}
  \end{equation}
  with $0<\alpha<\frac{1}{6}$ and $0<\gamma<\alpha$.
\end{lemma}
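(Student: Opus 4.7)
The plan is to expand the inner commutators using the quartic commutator formula (\ref{quartic commu}), identify the main scalar contributions via careful Wick contractions against the Fermi vacuum, and bound the remaining operator residues through position-space estimates analogous to those developed in Lemmas \ref{lemma V_22,23,1}--\ref{corollary control eB on V_4}.

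First, writing $B=\frac{1}{2}(A-A^*)$, each of $[\mathcal{V}_{21}',B]$ and $[\Omega,B]$ decomposes into four commutators between $a^*a^*aa$-type operators. Applying (\ref{quartic commu}) to each yields two quartic contributions (of the form $a^*a^*aa$ with shifted momenta) and four sextic contributions of the form $a^*a^*a^*aaa$. The scalar main terms arise from the quartic contributions after further normal-ordering on the Fermi vacuum: the \emph{direct} contraction channel fixes the $B$-index $l=k$ and yields coefficient $W_k\eta_k$, while the \emph{exchange} channel (requiring $\sigma=\nu$) fixes $l=p-q-k$ and yields $-W_k\eta_{k+q-p}$; analogous direct/exchange channels for $[\Omega,B]$ produce the $\eta_k^2 k(q-p)$ and $\eta_{k+q-p}\eta_k k(q-p)$ scalars. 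The factor $\tfrac12$ from the definition of $B$ and the factor $\int_0^1\!\int_0^t ds\,dt=\tfrac12$ combine with the doubling coming from the Hermitian-conjugate summands in $\mathcal{V}_{21}'$, $\Omega$, and $B$ to produce exactly the coefficients stated in (\ref{cal com V_21'})--(\ref{cal com Omega}). Many putative Wick scalars vanish because the Fermi-ball indicators of the two operators become mutually incompatible under the delta-function substitutions; only the two stated scalar sums survive.

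Second, the residual quartic operators (not absorbed into the scalar main terms) and all four sextic operators contribute to $\mathcal{E}_{\mathcal{V}_{21}'}$ and $\mathcal{E}_\Omega$. The strategy is to transform to position space using the kernels $g_{x,\sigma}, h_{x,\sigma}$ from (\ref{g_x,sigma h_x,sigma}) and the momentum projections $H_{x,\sigma}[\delta_2], L_{x,\sigma}[\delta_2], S_{x,\sigma}[\delta_1], I_{x,\sigma}[\delta_1]$ from (\ref{tools1}), with cutoff parameters $\delta_1,\delta_2$ to be optimized; Cauchy--Schwarz is then applied with weights drawn from $\mathcal{K}_s$, $\mathcal{V}_4$, and the various particle-number operators. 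The coefficient-sum estimates of Lemmas \ref{L1 Linfty est lemma} and \ref{energy est bog lem} convert the momentum-space sums into concrete bounds, while the conjugation by $e^{sB}$ over $s\in[0,t]$ and the $t$-integration are absorbed using Lemmas \ref{control eB on N_ex lemma}, \ref{variant lem 7.1 lemma}, Corollary \ref{corollary control eB on K_s}, and Lemma \ref{corollary control eB on V_4}, at the cost of only $\ell^{1/2}$ or $\ell$ factors that are harmless under the choice $\ell=N^{-1/3-\alpha}$. Optimizing $\delta_1,\delta_2$ against $\alpha$ and $\gamma$ yields the two-term $\mathcal{K}_s$-coefficient $N^{-\alpha-\gamma}+N^{-1/2+\alpha+2\gamma}$ in (\ref{est E_V_21'}).

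The hardest step will be the combinatorial bookkeeping for the scalar main term extraction: each of the four commutators $[\mathcal{V}_{21,\pm}',A_\pm]$ produces multiple Wick contractions whose compatibility with the full set of Fermi-ball indicators must be checked case by case, and the correct cancellations between direct/exchange channels and their Hermitian conjugates are essential for the precise coefficients in (\ref{cal com V_21'})--(\ref{cal com Omega}). A secondary obstacle is $[\Omega,B]$, whose coefficient $\eta_k k(q-p)$ is larger than $W_k$ by a factor $\sim N^{1/3}$; this forces sharper use of $\mathcal{K}_s$ in bounding the sextic residues and is precisely what produces the less favourable exponent $N^{-1/2+\alpha+2\gamma}$ (as opposed to $N^{-\alpha-\gamma}$ from $[\mathcal{V}_{21}',B]$) in the final estimate (\ref{est E_V_21'}).
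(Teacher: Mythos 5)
Your proposal captures the correct high-level skeleton: apply the quartic commutator formula (\ref{quartic commu}), extract scalars from the quartic piece by normal ordering (the $k=l$ direct channel and the $\sigma=\nu$ exchange channel, with the $\tfrac12\cdot\tfrac12\cdot 2=\tfrac12\cdot1$ bookkeeping you describe), push the residues to position space, and control the $e^{sB}$-conjugation with the number-operator lemmas. All of this matches the paper's decomposition $[\mathcal{V}_{21}',B]=\Xi_1+\Xi_2+\Xi_3$ with the main terms coming from $\Xi_{111}$ and $\Xi_{121}$.

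However, there is a genuine gap in the treatment of the hardest sextic residue, which in the paper is $\Xi_3$ (the one where the commutator delta-function fixes $\delta_{p,r}$ and leaves a full sextic $a^*_{r-l,\sigma}a^*_{s+l,\varpi}(a^*_{q,\nu}a_{s,\varpi}-a_{s,\varpi}a^*_{q,\nu})a_{q+k,\nu}a_{p-k,\sigma}$). Your proposal says the conjugation $e^{-sB}(\cdot)e^{sB}$ can be "absorbed" by Lemmas \ref{control eB on N_ex lemma}, \ref{variant lem 7.1 lemma}, Corollary \ref{corollary control eB on K_s}, and Lemma \ref{corollary control eB on V_4}. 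That is not how the paper handles $\Xi_3$: it writes a \emph{second} Newton--Leibniz/Duhamel expansion $e^{-sB}\Xi_3 e^{sB}=M_1+M_2+M_3$, where $M_1$ carries only the middle factor $(a^*_{q,\nu}a_{s,\varpi}-a_{s,\varpi}a^*_{q,\nu})$ under conjugation and $M_2,M_3$ come from expanding the outer $a^*a^*$ and $aa$ pieces. This iterated expansion is what isolates the $\delta_{q,s}$-contraction that turns $(a^*a-aa^*)$ into a scalar plus a normal-ordered piece, and is essential for the subsequent momentum-shell decomposition $M_{11},\dots,M_{144}$ that produces the sharp $\mathcal{K}_s$-coefficients; a direct operator bound on $\Xi_3$ followed by Gronwall on the whole conjugation would be too lossy.

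A second, smaller issue is the attribution of the two $\mathcal{K}_s$-exponents. You say the unfavorable $N^{-\frac12+\alpha+2\gamma}$ is forced by $[\Omega,B]$ because its coefficient $\eta_k k(q-p)$ exceeds $W_k$ by $\sim N^{1/3}$. In fact the paper explicitly compensates for this: in the remark concluding the $\Omega$-analysis it notes $N^{1/3}\Vert\nabla\eta\Vert_1\lesssim aN^{1/3}\ell\ll a=\Vert W\Vert_1$ and the analogous domination for the $\ell^2$-weighted sums in Lemma \ref{energy est bog lem}, so the $[\Omega,B]$ estimates are \emph{not} worse than the $[\mathcal{V}_{21}',B]$ ones. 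Both $N^{-\alpha-\gamma}\mathcal{K}_s$ and $N^{-\frac12+\alpha+2\gamma}\mathcal{K}_s$ already arise inside $\Xi_3$ of $[\mathcal{V}_{21}',B]$: the first from $M_{11},M_{12},M_{13}$ and the second from the high-high shell piece $M_{141}$ (see (\ref{M_14 bound})).
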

\begin{proof}
  \par We first prove (\ref{cal com V_21'}) and bound $\mathcal{E}_{\mathcal{V}_{21}^\prime}$. We will see later that (\ref{cal com Omega}) follows similarly. By Lemma \ref{lem commutator}, we can write
  \begin{equation*}
    [\mathcal{V}_{21}^\prime,B]=\Xi_{1}+\Xi_2+\Xi_3
  \end{equation*}
  with
  \begin{equation}\label{[V_21' B] XI}
   \begin{aligned}
    &\Xi_1=\sum_{\substack{\sigma,\nu\\k,p,q}}\sum_{l,r,s}W_k\eta_l
    \delta_{p-k,r-l}\delta_{q+k,s+l}a^*_{p,\sigma}a^*_{q,\nu}a_{s,\nu}a_{r,\sigma}\\
    &\quad\quad\quad\quad\times
    \chi_{p-k\notin B_F^\sigma}\chi_{p,r\in B_F^\sigma}\chi_{q+k\notin B_F^\nu}
    \chi_{q,s\in B_F^\nu}+h.c.\\
    &\Xi_2=\sum_{\substack{\sigma,\nu\\k,p,q}}\sum_{\tau,l,r,s}2W_k\eta_l
    \delta_{p-k,s+l}a^*_{p,\sigma}a^*_{q,\nu}a^*_{r-l,\tau}
    a_{q+k,\nu}a_{s,\sigma}a_{r,\tau}\\
    &\quad\quad\quad\quad\times
    \chi_{p-k\notin B_F^\sigma}\chi_{p,s\in B_F^\sigma}\chi_{q+k\notin B_F^\nu}
    \chi_{q\in B_F^\nu}\chi_{r-l\notin B^\tau_F}\chi_{r\in B_F^\tau}+h.c.\\
    &\Xi_3=\sum_{\substack{\sigma,\nu\\k,p,q}}\sum_{\varpi,l,r,s}W_k\eta_l
    \delta_{p,r}a^*_{r-l,\sigma}a^*_{s+l,\varpi}(a^*_{q,\nu}a_{s,\varpi}-a_{s,\varpi}a^*_{q,\nu})
    a_{q+k,\nu}a_{p-k,\sigma}\\
    &\quad\quad\quad\quad\times
    \chi_{p-k,r-l\notin B_F^\sigma}\chi_{p\in B_F^\sigma}\chi_{q+k\notin B_F^\nu}
    \chi_{q\in B_F^\nu}\chi_{s+l\notin B^\varpi_F}\chi_{s\in B^\varpi_F}+h.c.
   \end{aligned}
  \end{equation}
  We then analyse (\ref{[V_21' B] XI}) term by term.
  \begin{flushleft}
    \textbf{Analysis of $\Xi_1$:}
  \end{flushleft}
  We first split $\Xi_1=\Xi_{11}+\Xi_{12}$ using $1=\chi_{k=l}+\chi_{k\neq l}$. For $\Xi_{11}$, we furthermore split it using
  \begin{equation*}
  \begin{aligned}
    &a^*_{p,\sigma}a^*_{q,\nu}a_{q,\nu}a_{p,\sigma}\\
    =&1-(a_{p,\sigma}a_{p,\sigma}^*+a_{q,\nu}a^*_{q,\nu})+a_{p,\sigma}a_{q,\nu}
    a^*_{q,\nu}a^*_{p,\sigma}+\delta_{p,q}\delta_{\sigma,\nu}(a_{q,\nu}a^*_{p,\sigma}
    +a_{p,\sigma}a^*_{q,\nu}-1)
  \end{aligned}
  \end{equation*}
  That is, we write $\Xi_{11}=\sum_{j=1}^{4}\Xi_{11j}$ with
  \begin{equation}\label{Xi_11j}
    \begin{aligned}
    &\Xi_{111}=\sum_{k,p,q,\sigma,\nu}2W_k\eta_k\chi_{p-k\notin B^{\sigma}_F}\chi_{q+k\notin B^{\nu}_F}\chi_{p\in B^{\sigma}_F}\chi_{q\in B^{\nu}_F}\\
    &\Xi_{112}=-\sum_{k,p,q,\sigma,\nu}4W_k\eta_ka_{p,\sigma}a_{p,\sigma}^*
    \chi_{p-k\notin B^{\sigma}_F}\chi_{q+k\notin B^{\nu}_F}\chi_{p\in B^{\sigma}_F}
    \chi_{q\in B^{\nu}_F}\\
    &\Xi_{113}=\sum_{k,p,q,\sigma,\nu}2W_k\eta_ka_{p,\sigma}a_{q,\nu}
    a^*_{q,\nu}a^*_{p,\sigma}
    \chi_{p-k\notin B^{\sigma}_F}\chi_{q+k\notin B^{\nu}_F}\chi_{p\in B^{\sigma}_F}
    \chi_{q\in B^{\nu}_F}\\
    &\Xi_{114}=\sum_{k,p,\sigma}2W_k\eta_k(2a_{p,\sigma}a^*_{p,\sigma}-1)
    \chi_{p-k\notin B^{\sigma}_F}\chi_{q+k\notin B^{\nu}_F}\chi_{p\in B^{\sigma}_F}
    \chi_{q\in B^{\nu}_F}
    \end{aligned}
      \end{equation}
    Notice that for $p\in B^{\sigma}_F$ and $q\in B^{\nu}_F$, $\chi_{p-k\notin B^{\sigma}_F}\chi_{q+k\notin B^{\nu}_F}\neq 1$ implies $\vert k\vert\lesssim N^{\frac{1}{3}}$. Therefore for $p\in B^{\sigma}_F$ and $q\in B^{\nu}_F$ fixed, using (\ref{est of eta_0}) and (\ref{sum_pW_peta_p 3dscatt}), we have
    \begin{equation}\label{part sumW_keta_k}
      \Big\vert\sum_{k}W_k\eta_k(\chi_{p-k\notin B^{\sigma}_F}\chi_{q+k\notin B^{\nu}_F}-1)\Big\vert\lesssim Na^2\ell^2\ll a^2\ell^{-1}.
    \end{equation}
    Combining (\ref{part sumW_keta_k}) with (\ref{sum_pW_peta_p 3dscatt}), we have
    \begin{equation}\label{part sumW_keta_k 2}
      \Big\vert\sum_{k}W_k\eta_k\chi_{p-k\notin B^{\sigma}_F}\chi_{q+k\notin B^{\nu}_F}\Big\vert\lesssim  a^2\ell^{-1}.
    \end{equation}
    Using (\ref{number of exicited particles ops2}) and (\ref{part sumW_keta_k 2}), we can bound
    \begin{equation}\label{Xi_112-114 bound}
      \begin{aligned}
      &\pm\Xi_{112}\lesssim Na^2\ell^{-1}\mathcal{N}_{ex}
      =N^{-\frac{2}{3}+\alpha}\mathcal{N}_{ex}\\
       &\pm\Xi_{113}\lesssim Na^2\ell^{-1}\mathcal{N}_{ex}
      =N^{-\frac{2}{3}+\alpha}\mathcal{N}_{ex}\\
       &\pm\Xi_{114}\lesssim Na^2\ell^{-1}+a^2\ell^{-1}\mathcal{N}_{ex}
      \lesssim N^{-\frac{2}{3}+\alpha}
      \end{aligned}
    \end{equation}
    \par For $\Xi_{12}$, we similarly rewrite it by $\Xi_{12}=\sum_{j=1}^{3}\Xi_{12j}$ with
    \begin{equation}\label{Xi_12j}
      \begin{aligned}
      &\Xi_{121}=-\sum_{k,p,q,\sigma}2W_k\eta_{k+q-p}\chi_{p-k,q+k\notin B^{\sigma}_F}\chi_{p,q\in B^{\sigma}_F}\chi_{p\neq q}\\
      &\Xi_{122}=\sum_{k,p,q,\sigma}4W_k\eta_{k+q-p}a_{p,\sigma}a_{p,\sigma}^*
      \chi_{p-k,q+k\notin B^{\sigma}_F}\chi_{p,q\in B^{\sigma}_F}\chi_{p\neq q}\\
      &\Xi_{123}=\sum_{k,p,q,\sigma,\nu}\sum_{l,r,s}W_k\eta_l
      a_{r,\sigma}a_{s,\nu}a^*_{q,\nu}a^*_{p,\sigma}
      \delta_{p-k,r-l}\delta_{q+k,s+l}\chi_{k\neq l}\\
    &\quad\quad\quad\quad\times
    \chi_{p-k\notin B_F^\sigma}\chi_{p,r\in B_F^\sigma}\chi_{q+k\notin B_F^\nu}
    \chi_{q,s\in B_F^\nu}+h.c.
      \end{aligned}
    \end{equation}
    Notice that similar to (\ref{part sumW_keta_k 2}), we can bound
    \begin{equation}\label{par sum W_keta_k3}
       \Big\vert\sum_{k}W_k\eta_{k+q-p}\chi_{p-k,q+k\notin B^{\sigma}_F}\Big\vert\lesssim  a^2\ell^{-1}.
    \end{equation}
    Therefore, we have
    \begin{equation}\label{Xi_122}
      \pm\Xi_{122}\lesssim Na^2\ell^{-1}\mathcal{N}_{ex}
      =N^{-\frac{2}{3}+\alpha}\mathcal{N}_{ex}.
    \end{equation}
    For $\Xi_{123}$, we further decompose it by $\Xi_{123}=\sum_{j=1}^{3}\Xi_{123j}$, where
    \begin{equation}\label{Xi_123j}
      \begin{aligned}
      &\Xi_{1231}=-\sum_{k,p,q,\sigma,\nu}2W_k\eta_ka_{p,\sigma}a_{q,\nu}
    a^*_{q,\nu}a^*_{p,\sigma}
    \chi_{p-k\notin B^{\sigma}_F}\chi_{q+k\notin B^{\nu}_F}\chi_{p\in B^{\sigma}_F}
    \chi_{q\in B^{\nu}_F}\\
    &\Xi_{1232}=\sum_{k,p,q,\sigma,\nu}\sum_{l,r,s}W_k\eta_l
      a_{r,\sigma}a_{s,\nu}a^*_{q,\nu}a^*_{p,\sigma}
      \delta_{p-k,r-l}\delta_{q+k,s+l}
   \chi_{p,r\in B_F^\sigma}\chi_{q,s\in B_F^\nu}+h.c.\\
   &\Xi_{1233}=\sum_{k,p,q,\sigma,\nu}\sum_{l,r,s}W_k\eta_l
      a_{r,\sigma}a_{s,\nu}a^*_{q,\nu}a^*_{p,\sigma}
      \delta_{p-k,r-l}\delta_{q+k,s+l}\\
    &\quad\quad\quad\quad\times
    (\chi_{p-k\notin B_F^\sigma}\chi_{q+k\notin B_F^\nu}-1)\chi_{p,r\in B_F^\sigma}
    \chi_{q,s\in B_F^\nu}+h.c.
      \end{aligned}
    \end{equation}
    Similar to (\ref{Xi_112-114 bound}), we have
    \begin{equation}\label{Xi_1231}
      \pm\Xi_{1231}\lesssim Na^2\ell^{-1}\mathcal{N}_{ex}
      =N^{-\frac{2}{3}+\alpha}\mathcal{N}_{ex}.
    \end{equation}
    Switching to the position space, we attain
    \begin{equation}\label{Xi_1232}
      \begin{aligned}
      \pm\Xi_{1232}&=\pm\sum_{\sigma,\nu}\int_{\Lambda^2}\eta(x-y)W(x-y)
      a(g_{x,\sigma})a(g_{y,\nu})a^*(g_{y,\nu})a^*(g_{x,\sigma})dxdy+h.c.\\
      &\lesssim N\Vert\eta\Vert_2^2\Vert W\Vert_2^2\mathcal{N}_{ex}
      \lesssim N^{-\frac{2}{3}+\alpha}\mathcal{N}_{ex}.
      \end{aligned}
    \end{equation}
    Notice that for $p\in B^{\sigma}_F$ and $q\in B^{\nu}_F$, $\chi_{p-k\notin B^{\sigma}_F}\chi_{q+k\notin B^{\nu}_F}\neq 1$ implies $\vert k\vert\lesssim N^{\frac{1}{3}}$. We bound it for $\psi\in\mathcal{H}^{\wedge N}(\{N_{\varsigma_i}\})$,
    \begin{equation}\label{Xi_1233}
      \begin{aligned}
      \vert\langle\Xi_{1233}\psi,\psi\rangle\vert&\lesssim
      \sum_{\substack{p,q,r,\sigma,\nu\\ \vert k\vert\lesssim N^{\frac{1}{3}}}}
      \vert W_k\vert \vert\eta_{k+r-p}\vert \Vert a^*_{q+p-r,\nu}a^*_{r,\sigma}\psi\Vert
      \Vert a^*_{q,\nu}a^*_{p,\sigma}\psi\Vert\\
      &\quad\quad\quad\quad\quad\quad\quad\quad
      \times\chi_{p,r\in B_F^\sigma}\chi_{q,q+p-r\in B_F^\nu}\\
      &\lesssim N^2a^2\ell^2 (\mathcal{N}_{ex}+1)^2\lesssim N^{-\frac{2}{3}-2\alpha}
      (\mathcal{N}_{ex}+1)^2.
      \end{aligned}
    \end{equation}
    Combining (\ref{Xi_11j}-\ref{Xi_1233}), we conclude that
    \begin{equation}\label{Xi_1}
      \begin{aligned}
    &\int_{0}^{1}\int_{0}^{t}e^{-sB}\Xi_1e^{sB}dsdt=
    \sum_{k,p,q,\sigma,\nu}W_k\eta_k\chi_{p-k\notin B^{\sigma}_F}\chi_{q+k\notin B^{\nu}_F}\chi_{p\in B^{\sigma}_F}\chi_{q\in B^{\nu}_F}\\
    &\quad\quad\quad\quad\quad\quad\quad\quad\quad
    -\sum_{k,p,q,\sigma}W_k\eta_{k+q-p}\chi_{p-k,q+k\notin B^{\sigma}_F}\chi_{p,q\in B^{\sigma}_F}\chi_{p\neq q}+\mathcal{E}_{\Xi_1},
      \end{aligned}
    \end{equation}
    and by Lemma \ref{control eB on N_ex lemma}, we have
    \begin{equation}\label{E_XI_1}
      \pm\mathcal{E}_{\Xi_1}\lesssim N^{-\frac{2}{3}+\alpha}(\mathcal{N}_{ex}+1)
      +N^{-\frac{2}{3}-2\alpha}
      (\mathcal{N}_{ex}+1)^2.
    \end{equation}
    \begin{flushleft}
    \textbf{Analysis of $\Xi_2$:}
  \end{flushleft}
  Switching to the position space, we have
  \begin{equation*}
    \begin{aligned}
    \Xi_{2}&=2\sum_{\sigma,\nu,\tau}\int_{\Lambda^4}W(x_1-x_3)\eta(x_2-x_4)
    \Big(\sum_{p_1\notin B_F^\sigma }e^{ip_1(x_1-x_4)}\Big)\\
    &\quad\quad\times a^*(h_{x_2,\tau})
    a^*(g_{x_1,\sigma})a^*(g_{x_3,\nu})
    a(g_{x_4,\sigma})a(g_{x_2,\tau})a(h_{x_3,\nu})dx_1dx_2dx_3dx_4+h.c.
    \end{aligned}
  \end{equation*}
  We further split it into two parts $\Xi_2=\Xi_{21}+\Xi_{22}$ with
  \begin{equation}\label{Xi_2 split}
    \begin{aligned}
    \Xi_{21}&=2\sum_{\sigma,\nu,\tau}\int_{\Lambda^4}W(x_1-x_3)\eta(x_2-x_1)
     a^*(h_{x_2,\tau})
    a^*(g_{x_1,\sigma})a^*(g_{x_3,\nu})\\
    &\quad\quad\times
    a(g_{x_1,\sigma})a(g_{x_2,\tau})a(h_{x_3,\nu})dx_1dx_2dx_3+h.c.\\
     \Xi_{22}&=2\sum_{\sigma,\nu,\tau}\int_{\Lambda^4}W(x_1-x_3)\eta(x_2-x_4)
    \Big(\sum_{p_1\in B_F^\sigma }e^{ip_1(x_1-x_4)}\Big)\\
    &\quad\quad\times a^*(h_{x_2,\tau})
    a^*(g_{x_1,\sigma})a^*(g_{x_3,\nu})
    a(g_{x_4,\sigma})a(g_{x_2,\tau})a(h_{x_3,\nu})dx_1dx_2dx_3dx_4+h.c.
    \end{aligned}
  \end{equation}
  Then we bound them respectively by
  \begin{equation}\label{Xi_21 22 bound}
    \begin{aligned}
  &\pm\Xi_{21}\lesssim N^2\Vert W\Vert_1\Vert\eta\Vert_1\mathcal{N}_{ex}\lesssim
  N^{-\frac{2}{3}-2\alpha}\mathcal{N}_{ex}\\
  &\pm\Xi_{21}\lesssim N^{\frac{5}{2}}\Vert W\Vert_1\Vert\eta\Vert_1\mathcal{N}_{ex}\lesssim
  N^{-\frac{1}{6}-2\alpha}\mathcal{N}_{ex}
  \end{aligned}
  \end{equation}
  By Lemma \ref{control eB on N_ex lemma}, we have
  \begin{equation}\label{Xi_2}
    \pm\int_{0}^{1}\int_{0}^{t}e^{-sB}\Xi_2e^{sB}dsdt\lesssim
    N^{-\frac{1}{6}-2\alpha}(\mathcal{N}_{ex}+1).
  \end{equation}
   \begin{flushleft}
    \textbf{Analysis of $\Xi_3$:}
  \end{flushleft}
  Using the fact that
  \begin{equation*}
    \begin{aligned}
    e^{-sB}a^*_{r-l,\sigma}a^*_{s+l,\varpi}e^{sB}&=
    a^*_{r-l,\sigma}a^*_{s+l,\varpi}+\int_{0}^{s}e^{-mB}
    [a^*_{r-l,\sigma}a^*_{s+l,\varpi},B]e^{mB}dm\\
    e^{-sB}a_{q+k,\nu}a_{p-k,\sigma}e^{sB}&=
    a_{q+k,\nu}a_{p-k,\sigma}+\int_{0}^{s}e^{-mB}
    [a_{q+k,\nu}a_{p-k,\sigma},B]e^{mB}dm
    \end{aligned}
  \end{equation*}
  We write $ e^{-sB}\Xi_3e^{sB}=M_1+M_2+M_3$ with
  \begin{equation}\label{M split}
    \begin{aligned}
    &M_1=\sum_{\substack{\sigma,\nu\\k,p,q}}\sum_{\varpi,l,r,s}W_k\eta_l
    \delta_{p,r}a^*_{r-l,\sigma}a^*_{s+l,\varpi}\\
    &\quad\quad\quad\quad\times
    e^{-sB}(a^*_{q,\nu}a_{s,\varpi}-a_{s,\varpi}a^*_{q,\nu})e^{sB}
    a_{q+k,\nu}a_{p-k,\sigma}\\
    &\quad\quad\quad\quad\times
    \chi_{p-k,r-l\notin B_F^\sigma}\chi_{p\in B_F^\sigma}\chi_{q+k\notin B_F^\nu}
    \chi_{q\in B_F^\nu}\chi_{s+l\notin B^\varpi_F}\chi_{s\in B^\varpi_F}+h.c.\\
     &M_2=\sum_{\substack{\sigma,\nu\\k,p,q}}\sum_{\varpi,l,r,s}W_k\eta_l
    \delta_{p,r}\int_{0}^{s}e^{-sB}a^*_{r-l,\sigma}a^*_{s+l,\varpi}\\
    &\quad\quad\quad\quad\times
    (a^*_{q,\nu}a_{s,\varpi}-a_{s,\varpi}a^*_{q,\nu})e^{(s-m)B}
    [a_{q+k,\nu}a_{p-k,\sigma},B]e^{mB}dm\\
    &\quad\quad\quad\quad\times
    \chi_{p-k,r-l\notin B_F^\sigma}\chi_{p\in B_F^\sigma}\chi_{q+k\notin B_F^\nu}
    \chi_{q\in B_F^\nu}\chi_{s+l\notin B^\varpi_F}\chi_{s\in B^\varpi_F}+h.c.\\
     &M_3=\sum_{\substack{\sigma,\nu\\k,p,q}}\sum_{\varpi,l,r,s}W_k\eta_l
    \delta_{p,r}\int_{0}^{s}e^{-mB}[a^*_{r-l,\sigma}a^*_{s+l,\varpi},B]\\
    &\quad\quad\quad\quad\times
    e^{(m-s)B}(a^*_{q,\nu}a_{s,\varpi}-a_{s,\varpi}a^*_{q,\nu})e^{sB}
    a_{q+k,\nu}a_{p-k,\sigma}dm\\
    &\quad\quad\quad\quad\times
    \chi_{p-k,r-l\notin B_F^\sigma}\chi_{p\in B_F^\sigma}\chi_{q+k\notin B_F^\nu}
    \chi_{q\in B_F^\nu}\chi_{s+l\notin B^\varpi_F}\chi_{s\in B^\varpi_F}+h.c.
    \end{aligned}
  \end{equation}
  We then analyse (\ref{M split}) term by term.
   \begin{flushleft}
    \textbf{Analysis of $M_1$:}
  \end{flushleft}
  We can write $M_1$ by
  \begin{equation}\label{M_1 rewrt}
    \begin{aligned}
    M_1&=\sum_{\substack{\sigma,\nu,\varpi\\p_2,q_2}}\int_{\Lambda^3}W_{p_2-q_2}e^{i(q_2-p_2)x_3}
   \chi_{p_2\notin B_F^\nu}\chi_{q_2\in B_F^\nu}\eta(x_2-x_4)
    \Big(\sum_{q_1\in B_F^\sigma}e^{-iq_1(x_2-x_3)}\Big)\\
    &\quad\quad\times
    a^*(h_{x_2,\sigma})a^*(h_{x_4,\varpi})e^{-sB}
    (a^*_{q_2,\nu}a(g_{x_4,\varpi})-a(g_{x_4,\varpi})a^*_{q_2,\nu})
    e^{sB}\\
    &\quad\quad\times
    a_{p_2,\nu}a(h_{x_3,\sigma})dx_2dx_3dx_4+h.c.
    \end{aligned}
  \end{equation}
  For some $u>0$ to be determined and $\delta=\frac{1}{3}+u$, we decompose $M_1=\sum_{j=1}^{4}M_{1j}$ with
  \begin{equation}\label{M_1 rewrt 2}
    \begin{aligned}
     M_{11}&=\sum_{\substack{\sigma,\nu,\varpi\\p_2,q_2}}\int_{\Lambda^3}
     W_{p_2-q_2}e^{i(q_2-p_2)x_3}
   \chi_{p_2\in P_{F,\delta}^\nu}\chi_{q_2\in B_F^\nu}\eta(x_2-x_4)
    \Big(\sum_{q_1\in B_F^\sigma}e^{-iq_1(x_2-x_3)}\Big)\\
    &\quad\quad\times
    a^*(h_{x_2,\sigma})a^*(H_{x_4,\varpi}[\delta])e^{-sB}
    (a^*_{q_2,\nu}a(g_{x_4,\varpi})-a(g_{x_4,\varpi})a^*_{q_2,\nu})
    e^{sB}\\
    &\quad\quad\times
    a_{p_2,\nu}a(h_{x_3,\sigma})dx_2dx_3dx_4+h.c.\\
     M_{12}&=\sum_{\substack{\sigma,\nu,\varpi\\p_2,q_2}}\int_{\Lambda^3}
     W_{p_2-q_2}e^{i(q_2-p_2)x_3}
   \chi_{p_2\in P_{F,\delta}^\nu}\chi_{q_2\in B_F^\nu}\eta(x_2-x_4)
    \Big(\sum_{q_1\in B_F^\sigma}e^{-iq_1(x_2-x_3)}\Big)\\
    &\quad\quad\times
    a^*(h_{x_2,\sigma})a^*(L_{x_4,\varpi}[\delta])e^{-sB}
    (a^*_{q_2,\nu}a(g_{x_4,\varpi})-a(g_{x_4,\varpi})a^*_{q_2,\nu})
    e^{sB}\\
    &\quad\quad\times
    a_{p_2,\nu}a(h_{x_3,\sigma})dx_2dx_3dx_4+h.c.\\
     M_{13}&=\sum_{\substack{\sigma,\nu,\varpi\\p_2,q_2}}\int_{\Lambda^3}
     W_{p_2-q_2}e^{i(q_2-p_2)x_3}
   \chi_{p_2\in A_{F,\delta}^\nu}\chi_{q_2\in B_F^\nu}\eta(x_2-x_4)
    \Big(\sum_{q_1\in B_F^\sigma}e^{-iq_1(x_2-x_3)}\Big)\\
    &\quad\quad\times
    a^*(h_{x_2,\sigma})a^*(H_{x_4,\varpi}[\delta])e^{-sB}
    (a^*_{q_2,\nu}a(g_{x_4,\varpi})-a(g_{x_4,\varpi})a^*_{q_2,\nu})
    e^{sB}\\
    &\quad\quad\times
    a_{p_2,\nu}a(h_{x_3,\sigma})dx_2dx_3dx_4+h.c.\\
    M_{14}&=\sum_{\substack{\sigma,\nu,\varpi\\p_2,q_2}}\int_{\Lambda^3}
     W_{p_2-q_2}e^{i(q_2-p_2)x_3}
   \chi_{p_2\in A_{F,\delta}^\nu}\chi_{q_2\in B_F^\nu}\eta(x_2-x_4)
    \Big(\sum_{q_1\in B_F^\sigma}e^{-iq_1(x_2-x_3)}\Big)\\
    &\quad\quad\times
    a^*(h_{x_2,\sigma})a^*(L_{x_4,\varpi}[\delta])e^{-sB}
    (a^*_{q_2,\nu}a(g_{x_4,\varpi})-a(g_{x_4,\varpi})a^*_{q_2,\nu})
    e^{sB}\\
    &\quad\quad\times
    a_{p_2,\nu}a(h_{x_3,\sigma})dx_2dx_3dx_4+h.c.
    \end{aligned}
  \end{equation}
  We adopt the notation $k_{p,x,\sigma}$:
  \begin{equation*}
    k_{p,x,\sigma}(z)\coloneqq\sum_{q\in B_F^\sigma}W_{q-p}e^{i(q-p)x}f_{q,\sigma}(z),
  \end{equation*}
  on which by (\ref{norm bound  of creation and annihilation}), we have
  \begin{equation*}
    \Vert a(k_{p,x,\sigma})\Vert^2\leq\Vert k_{p,x,\sigma}\Vert^2
    = \sum_{q\in B_F^\sigma}\vert W_{q-p}\vert^2.
  \end{equation*}
  Similar to the proof of Lemma \ref{contol V_21'Omega [K,B]}, we can bound
  \begin{equation}\label{M_11 bound}
    \begin{aligned}
    &\pm M_{11}\\
    &=\pm\sum_{\substack{\sigma,\nu,\varpi,p_2}}\int_{\Lambda^3}
   \chi_{p_2\in P_{F,\delta}^\nu}\eta(x_2-x_4)
    \Big(\sum_{q_1\in B_F^\sigma}e^{-iq_1(x_2-x_3)}\Big)
    \\
    &\quad\quad\times
   a^*(h_{x_2,\sigma})a^*(H_{x_4,\varpi}[\delta])
    e^{-sB}(a^*(k_{p_2,x_3,\nu})a(g_{x_4,\varpi})-a(g_{x_4,\varpi})a^*(k_{p_2,x_3,\nu}))
    \\
    &\quad\quad\times e^{sB} a_{p_2,\nu}a(h_{x_3,\sigma}) dx_2dx_3dx_4+h.c.\\
    &\lesssim N\Vert\eta\Vert_2\Big(\sum_{\nu,q_2\in B_F^\nu}\sum_{p_2\in P_{F,\delta}^\nu}
    \frac{\vert W_{q_2-p_2}\vert^2}{\vert p_2\vert^2-\vert k_F^\nu\vert^2}\Big)^\frac{1}{2}
    \big(\theta(\mathcal{N}_{ex}+1)\mathcal{N}_h[\delta]
    +\theta^{-1}(\mathcal{N}_{ex}+1)\mathcal{K}_s\big)\\
    &\lesssim N^{\frac{1}{6}+\frac{\alpha}{2}-\frac{3}{2}u}\mathcal{K}_s
    \end{aligned}
  \end{equation}
  where we have used (\ref{energy est bog u}) in Lemma \ref{energy est bog lem} and we choose
  $\theta=N^{\frac{1}{3}+u}$. We then choose for some $0<\gamma<\alpha$,
  \begin{equation}\label{choose u}
    u=\frac{1}{9}+\alpha+\frac{2}{3}\gamma,
  \end{equation}
  such that
  \begin{equation}\label{M_11 bound true}
    \pm M_{11}\lesssim N^{-\alpha-\gamma}\mathcal{K}_s.
  \end{equation}
  For $M_{12}$, using Lemma \ref{control eB on N_ex lemma}, we bound
  \begin{equation}\label{M_12 bound}
    \begin{aligned}
    &\pm M_{12}\\
    &=\pm\sum_{\substack{\sigma,\nu,\varpi,p_2}}\int_{\Lambda^3}
   \chi_{p_2\in P_{F,\delta}^\nu}\eta(x_2-x_4)
    \Big(\sum_{q_1\in B_F^\sigma}e^{-iq_1(x_2-x_3)}\Big)
    \\
    &\quad\quad\times
   a^*(h_{x_2,\sigma})a^*(L_{x_4,\varpi}[\delta])
    e^{-sB}(a^*(k_{p_2,x_3,\nu})a(g_{x_4,\varpi})-a(g_{x_4,\varpi})a^*(k_{p_2,x_3,\nu}))
    \\
    &\quad\quad\times e^{sB} a_{p_2,\nu}a(h_{x_3,\sigma}) dx_2dx_3dx_4+h.c.\\
    &\lesssim N^{\frac{3}{2}+\frac{3}{2}u}\Vert\eta\Vert_1\Big(\sum_{\nu,q_2\in B_F^\nu}\sum_{p_2\in P_{F,\delta}^\nu}
    \frac{\vert W_{q_2-p_2}\vert^2}{\vert p_2\vert^2-\vert k_F^\nu\vert^2}\Big)^\frac{1}{2}
    \big(\theta(\mathcal{N}_{ex}+1)^2
    +\theta^{-1}\mathcal{K}_s\big)\\
    &\lesssim N^{-\alpha-\gamma}\mathcal{K}_s+N^{-\frac{7}{9}+\alpha+\frac{7}{3}\gamma}
    (\mathcal{N}_{ex}+1)^2
    \end{aligned}
  \end{equation}
  where we choose $\theta=N^{-\frac{7}{18}+\alpha+\frac{5}{3}\gamma}$.
   \par For $M_{13}$, we evaluate it in the position space,
  \begin{equation}\label{M_13 bound}
    \begin{aligned}
    &\pm M_{13}\\
    &=\pm\sum_{\sigma,\nu,\varpi}\int_{\Lambda^4}W(x_1-x_3)\eta(x_2-x_4)
    \Big(\sum_{q_1\in B_F^\sigma}e^{-iq_1(x_2-x_3)}\Big)
    \\
    &\quad\quad\times
   a^*(h_{x_2,\sigma})a^*(H_{x_4,\varpi}[\delta])
    e^{-sB}(a^*(g_{x_1,\nu})a(g_{x_4,\varpi})-a(g_{x_4,\varpi})a^*(g_{x_1,\nu}))
    \\
    &\quad\quad\times e^{sB} a(L_{x_1,\nu}[\delta])a(h_{x_3,\sigma})dx_1dx_2dx_3dx_4+h.c.\\
    &\lesssim N^{2+\frac{3}{2}u}\Vert W\Vert_1\Vert\eta\Vert_2
    \big(\theta(\mathcal{N}_{ex}+1)^2
    +\theta^{-1}\mathcal{N}_h[\delta]\big)\\
    &\lesssim N^{-\alpha-\gamma}\mathcal{K}_s+N^{-\frac{8}{9}+\alpha+\frac{5}{3}\gamma}
    (\mathcal{N}_{ex}+1)^2
    \end{aligned}
  \end{equation}
  where we choose $\theta=N^{-\frac{13}{18}+\frac{3}{2}\alpha+\frac{5}{3}\gamma-\frac{3}{2}u}$. For $M_{14}$, we furthermore split it by $M_{14}=\sum_{j=1}^{4}M_{14j}$, with
  \begin{equation}\label{M_14 split}
    \begin{aligned}
    & M_{141}=\sum_{\sigma,\nu,\varpi}\int_{\Lambda^4}W(x_1-x_3)\eta(x_2-x_4)
    \Big(\sum_{q_1\in B_F^\sigma}e^{-iq_1(x_2-x_3)}\Big)
    \\
    &\quad\quad\times
   a^*(H_{x_2,\sigma}[1/3])a^*(L_{x_4,\varpi}[\delta])
    e^{-sB}(a^*(g_{x_1,\nu})a(g_{x_4,\varpi})-a(g_{x_4,\varpi})a^*(g_{x_1,\nu}))
    \\
    &\quad\quad\times e^{sB} a(L_{x_1,\nu}[\delta])a(H_{x_3,\sigma}[1/3])
    dx_1dx_2dx_3dx_4+h.c.\\
     & M_{142}=\sum_{\sigma,\nu,\varpi}\int_{\Lambda^4}W(x_1-x_3)\eta(x_2-x_4)
    \Big(\sum_{q_1\in B_F^\sigma}e^{-iq_1(x_2-x_3)}\Big)
    \\
    &\quad\quad\times
   a^*(L_{x_2,\sigma}[1/3])a^*(L_{x_4,\varpi}[\delta])
    e^{-sB}(a^*(g_{x_1,\nu})a(g_{x_4,\varpi})-a(g_{x_4,\varpi})a^*(g_{x_1,\nu}))
    \\
    &\quad\quad\times e^{sB} a(L_{x_1,\nu}[\delta])a(H_{x_3,\sigma}[1/3])
    dx_1dx_2dx_3dx_4+h.c.\\
     & M_{143}=\sum_{\sigma,\nu,\varpi}\int_{\Lambda^4}W(x_1-x_3)\eta(x_2-x_4)
    \Big(\sum_{q_1\in B_F^\sigma}e^{-iq_1(x_2-x_3)}\Big)
    \\
    &\quad\quad\times
   a^*(H_{x_2,\sigma}[1/3])a^*(L_{x_4,\varpi}[\delta])
    e^{-sB}(a^*(g_{x_1,\nu})a(g_{x_4,\varpi})-a(g_{x_4,\varpi})a^*(g_{x_1,\nu}))
    \\
    &\quad\quad\times e^{sB} a(L_{x_1,\nu}[\delta])a(L_{x_3,\sigma}[1/3])
    dx_1dx_2dx_3dx_4+h.c.\\
     & M_{144}=\sum_{\sigma,\nu,\varpi}\int_{\Lambda^4}W(x_1-x_3)\eta(x_2-x_4)
    \Big(\sum_{q_1\in B_F^\sigma}e^{-iq_1(x_2-x_3)}\Big)
    \\
    &\quad\quad\times
   a^*(L_{x_2,\sigma}[1/3])a^*(L_{x_4,\varpi}[\delta])
    e^{-sB}(a^*(g_{x_1,\nu})a(g_{x_4,\varpi})-a(g_{x_4,\varpi})a^*(g_{x_1,\nu}))
    \\
    &\quad\quad\times e^{sB} a(L_{x_1,\nu}[\delta])a(L_{x_3,\sigma}[1/3])
    dx_1dx_2dx_3dx_4+h.c.
    \end{aligned}
  \end{equation}
  Notice the fact that $\mathcal{N}_{l}[\delta]\leq\mathcal{N}_{ex}$,  we bound (\ref{M_14 split}) respectively by
  \begin{equation}\label{M_14 bound}
    \begin{aligned}
    &\pm M_{141}\lesssim N^{\frac{5}{2}+3u}\Vert W\Vert_1\Vert\eta\Vert_1
    \mathcal{N}_{h}[1/3]\lesssim N^{-\frac{1}{2}+\alpha+2\gamma}\mathcal{K}_s\\
    &\pm M_{142}\lesssim N^{\frac{5}{2}+\frac{3}{2}u}\Vert W\Vert_1\Vert\eta\Vert_1
    (\theta\mathcal{N}_l[\delta]+\theta^{-1}\mathcal{N}_h[1/3])
    \lesssim N^{-\alpha-\gamma}\mathcal{K}_s+N^{-\frac{2}{3}+3\gamma}\mathcal{N}_{ex}\\
    &\pm M_{143}\lesssim N^{\frac{5}{2}+\frac{3}{2}u}\Vert W\Vert_1\Vert\eta\Vert_1
    (\theta\mathcal{N}_l[\delta]+\theta^{-1}\mathcal{N}_h[1/3])
    \lesssim N^{-\alpha-\gamma}\mathcal{K}_s+N^{-\frac{2}{3}+3\gamma}\mathcal{N}_{ex}\\
    &\pm M_{144}\lesssim N^{\frac{5}{2}}\Vert W\Vert_1\Vert\eta\Vert_1
    \mathcal{N}_l[\delta]
    \lesssim N^{-\frac{1}{6}-2\alpha}\mathcal{N}_{ex}
    \end{aligned}
  \end{equation}
  where we choose $\theta=N^{-\frac{2}{3}+\frac{\alpha}{2}+2\gamma}$. Combining (\ref{M_1 rewrt}-\ref{M_14 bound}), we conclude that
  \begin{equation}\label{M_1 bound}
    \begin{aligned}
    \pm M_1\lesssim& \big(N^{-\frac{2}{3}+3\gamma}+N^{-\frac{1}{6}-2\alpha}\big)
    \mathcal{N}_{ex}+\big(N^{-\alpha-\gamma}+N^{-\frac{1}{2}+\alpha+2\gamma}\big)
    \mathcal{K}_s\\
    &+N^{-\frac{7}{9}+\alpha+\frac{7}{3}\gamma}(\mathcal{N}_{ex}+1)^2
    \end{aligned}
  \end{equation}
  with $0<\gamma<\alpha<\frac{1}{6}$.
   \begin{flushleft}
    \textbf{Analysis of $M_2$:}
  \end{flushleft}
  By a direct calculation, we can write $M_2=\sum_{j=1}^{3}M_{2j}$. For $M_{21}$, we define it by
  \begin{equation}\label{M_2 split}
    \begin{aligned}
    &M_{21}=\sum_{\substack{\sigma,\nu\\k,p,q}}\sum_{\varpi,l,r,s}\sum_{l_1,r_1,s_1}
    W_k\eta_l\eta_{l_1}
    \delta_{p,r}\delta_{p-k,r_1-l_1}\delta_{q+k,s_1+l_1}
    \\
    &\quad\quad\times\int_{0}^{s}e^{-sB}a^*_{r-l,\sigma}a^*_{s+l,\varpi}
    (a^*_{q,\nu}a_{s,\varpi}-a_{s,\varpi}a^*_{q,\nu})e^{(s-m)B}
    a_{s_1,\nu}a_{r_1,\sigma}e^{mB}dm\\
    &\quad\quad\times
    \chi_{p-k,r-l\notin B_F^\sigma}\chi_{p,r_1\in B_F^\sigma}\chi_{q+k\notin B_F^\nu}
    \chi_{q,s_1\in B_F^\nu}\chi_{s+l\notin B^\varpi_F}\chi_{s\in B^\varpi_F}+h.c.
    \end{aligned}
  \end{equation}
  Switching to the position space, we rewrite $M_{21}=\sum_{j=1}^{4}M_{21j}$ with
  \begin{equation}\label{M_21 split}
    \begin{aligned}
    &M_{211}=\sum_{\sigma,\nu,\varpi}\int_{0}^{s}\int_{\Lambda^4}
    W(x_1-x_4)\eta(x_2-x_5)\eta(x_1-x_4)\Big(\sum_{q_1\in B_F^\sigma}
    e^{-iq_1(x_2-x_1)}\Big)\\
    &\quad\quad\times
    e^{-sB}a^*(h_{x_2,\sigma})a^*(h_{x_5,\varpi})
    (a^*(g_{x_4,\nu})a(g_{x_5,\varpi})-a(g_{x_5,\varpi})a^*(g_{x_4,\nu}))\\
    &\quad\quad\times e^{(s-m)B}a(g_{x_4,\nu})a(g_{x_1,\sigma})e^{mB}dx_1dx_2dx_4dx_5dm+h.c.\\
    &M_{212}=-\sum_{\sigma,\nu,\varpi}\int_{0}^{s}\int_{\Lambda^5}
    W(x_1-x_4)\eta(x_2-x_5)\eta(x_1-x_6)\\
    &\quad\quad\times \Big(\sum_{q_1\in B_F^\sigma}
    e^{-iq_1(x_2-x_1)}\Big)\Big(\sum_{p_2\in B_F^\nu}
    e^{-ip_2(x_4-x_6)}\Big)\\
    &\quad\quad\times
    e^{-sB}a^*(h_{x_2,\sigma})a^*(h_{x_5,\varpi})
    (a^*(g_{x_4,\nu})a(g_{x_5,\varpi})-a(g_{x_5,\varpi})a^*(g_{x_4,\nu}))\\
    &\quad\quad\times e^{(s-m)B}a(g_{x_6,\nu})a(g_{x_1,\sigma})e^{mB}
    dx_1dx_2dx_4dx_5dx_6dm+h.c.\\
    &M_{213}=-\sum_{\sigma,\nu,\varpi}\int_{0}^{s}\int_{\Lambda^5}
    W(x_1-x_4)\eta(x_2-x_5)\eta(x_3-x_4)\\
    &\quad\quad\times \Big(\sum_{q_1\in B_F^\sigma}
    e^{-iq_1(x_2-x_1)}\Big)\Big(\sum_{p_1\in B_F^\sigma}
    e^{-ip_1(x_1-x_3)}\Big)\\
    &\quad\quad\times
    e^{-sB}a^*(h_{x_2,\sigma})a^*(h_{x_5,\varpi})
    (a^*(g_{x_4,\nu})a(g_{x_5,\varpi})-a(g_{x_5,\varpi})a^*(g_{x_4,\nu}))\\
    &\quad\quad\times e^{(s-m)B}a(g_{x_4,\nu})a(g_{x_3,\sigma})e^{mB}
    dx_1dx_2dx_3dx_4dx_5dm+h.c.\\
    &M_{214}=\sum_{\sigma,\nu,\varpi}\int_{0}^{s}\int_{\Lambda^6}
    W(x_1-x_4)\eta(x_2-x_5)\eta(x_3-x_6)\\
    &\quad\quad\times \Big(\sum_{q_1\in B_F^\sigma}
    e^{-iq_1(x_2-x_1)}\Big)\Big(\sum_{p_1\in B_F^\sigma}
    e^{-ip_1(x_1-x_3)}\Big)\Big(\sum_{p_2\in B_F^\nu}
    e^{-ip_2(x_4-x_6)}\Big)\\
    &\quad\quad\times
    e^{-sB}a^*(h_{x_2,\sigma})a^*(h_{x_5,\varpi})
    (a^*(g_{x_4,\nu})a(g_{x_5,\varpi})-a(g_{x_5,\varpi})a^*(g_{x_4,\nu}))\\
    &\quad\quad\times e^{(s-m)B}a(g_{x_6,\nu})a(g_{x_3,\sigma})e^{mB}
    dx_1dx_2dx_3dx_4dx_5dx_6dm+h.c.
    \end{aligned}
  \end{equation}
  By Lemma \ref{control eB on N_ex lemma}, we can bound
  \begin{equation}\label{M_212-214 bound}
    \begin{aligned}
    &\pm M_{212}\lesssim N^3\Vert W\Vert_1\Vert\eta\Vert_2^2(\mathcal{N}_{ex}+1)
    \lesssim N^{-\frac{1}{3}-\alpha}(\mathcal{N}_{ex}+1)\\
    &\pm M_{213}\lesssim N^3\Vert W\Vert_1\Vert\eta\Vert_2^2(\mathcal{N}_{ex}+1)
    \lesssim N^{-\frac{1}{3}-\alpha}(\mathcal{N}_{ex}+1)\\
    &\pm M_{214}\lesssim N^{\frac{7}{2}}\Vert W\Vert_1\Vert\eta\Vert_2\Vert\eta\Vert_1(\mathcal{N}_{ex}+1)
    \lesssim N^{-\frac{1}{3}-\frac{5}{2}\alpha}(\mathcal{N}_{ex}+1)
    \end{aligned}
  \end{equation}
  For $M_{211}$, we choose $\delta=\frac{\alpha}{2}+\gamma$ such that $0<\delta<\frac{1}{3}$, and further split it by $M_{211}=\sum_{j=1}^{4}M_{211j}$ using the fact
  \begin{equation}\label{M_211  split}
    \begin{aligned}
    a^*(h_{x_2,\sigma})a^*(h_{x_5,\varpi})=&
    a^*(H_{x_2,\sigma}[\delta])a^*(H_{x_5,\varpi}[\delta])+
    a^*(H_{x_2,\sigma}[\delta])a^*(L_{x_5,\varpi}[\delta])\\&+
    a^*(L_{x_2,\sigma}[\delta])a^*(H_{x_5,\varpi}[\delta])+
    a^*(L_{x_2,\sigma}[\delta])a^*(L_{x_5,\varpi}[\delta])
    \end{aligned}
  \end{equation}
  Notice the direct inequality $\mathcal{N}_h[\delta]\leq\mathcal{N}_{ex}$, we can use Lemma \ref{control eB on N_ex lemma} to bound for $j=2,3,4$
  \begin{equation}\label{M_211 bound}
    \begin{aligned}
    \pm M_{2111}&\lesssim N^{\frac{5}{2}}\Vert W\Vert_1\Vert\eta\Vert_2
    \Vert\eta\Vert_\infty
    \big(\mathcal{N}_h[\delta]+1+\ell(\mathcal{N}_{ex}+1)\big)\\
    &\lesssim
    N^{-\alpha-\gamma}\mathcal{K}_s+N^{-\frac{3}{2}\alpha}\mathcal{N}_{ex}
    +N^{\frac{1}{3}-\frac{\alpha}{2}}\\
    \pm M_{211j}&\lesssim N^{\frac{5}{2}+\frac{1}{3}+\frac{\delta}{2}}\Vert
    W\Vert_1\Vert\eta\Vert_1
    \Vert\eta\Vert_\infty
    \big(\theta(\mathcal{N}_{ex}+1)+\theta^{-1}\big)\\
    &\lesssim
    N^{-3\alpha+\gamma}(\mathcal{N}_{ex}+1)
    +N^{\frac{1}{3}-\frac{\alpha}{2}}
    \end{aligned}
  \end{equation}
  where we choose $\theta=N^{-\frac{1}{6}-\frac{5}{4}\alpha+\frac{\gamma}{2}}$.
  \par For $M_{22}$ and $M_{23}$, we have
  \begin{equation}\label{M_22 M_23 write}
    \begin{aligned}
    &M_{22}=-\sum_{\substack{\sigma,\nu\\k,p,q}}\sum_{\varpi,l,r,s}\sum_{\varpi_1,l_1,r_1,s_1}
    W_k\eta_l\eta_{l_1}
    \delta_{p,r}\delta_{p-k,r_1-l_1}
    \\
    &\quad\quad\times\int_{0}^{s}e^{-sB}a^*_{r-l,\sigma}a^*_{s+l,\varpi}
    (a^*_{q,\nu}a_{s,\varpi}-a_{s,\varpi}a^*_{q,\nu})e^{(s-m)B}\\
    &\quad\quad\times a^*_{s_1+l_1,\varpi_1}a_{q+k,\nu}
    a_{s_1,\varpi_1}a_{r_1,\sigma}e^{mB}dm
    \chi_{p-k,r-l\notin B_F^\sigma}\chi_{p,r_1\in B_F^\sigma}\\
    &\quad\quad\times
    \chi_{q+k\notin B_F^\nu}
    \chi_{q\in B_F^\nu}\chi_{s+l\notin B^\varpi_F}\chi_{s\in B^\varpi_F}
    \chi_{s_1+l_1\notin B_F^{\varpi_1}}\chi_{s_1\in B_F^{\varpi_1}}+h.c.\\
    &M_{23}=-\sum_{\substack{\sigma,\nu\\k,p,q}}\sum_{\varpi,l,r,s}\sum_{\tau_1,l_1,r_1,s_1}
    W_k\eta_l\eta_{l_1}
    \delta_{p,r}\delta_{q+k,s_1+l_1}
    \\
    &\quad\quad\times\int_{0}^{s}e^{-sB}a^*_{r-l,\sigma}a^*_{s+l,\varpi}
    (a^*_{q,\nu}a_{s,\varpi}-a_{s,\varpi}a^*_{q,\nu})e^{(s-m)B}\\
    &\quad\quad\times a^*_{r_1-l_1,\tau_1}a_{p-k,\sigma}
    a_{s_1,\nu}a_{r_1,\tau_1}e^{mB}dm
    \chi_{p-k,r-l\notin B_F^\sigma}\chi_{p\in B_F^\sigma}\\
    &\quad\quad\times
    \chi_{q+k\notin B_F^\nu}
    \chi_{q,s_1\in B_F^\nu}\chi_{s+l\notin B^\varpi_F}\chi_{s\in B^\varpi_F}
    \chi_{r_1-l_1\notin B_F^{\tau_1}}\chi_{r_1\in B_F^{\tau_1}}+h.c.
    \end{aligned}
  \end{equation}
  Switching to the position space, we have
  \begin{equation}\label{M_22 M_23}
    \begin{aligned}
    &M_{22}=-\sum_{\sigma,\nu,\varpi,\varpi_1}\int_{0}^{s}\int_{\Lambda^6}
    W(x_1-x_4)\eta(x_2-x_5)\eta(x_3-x_6)\\
    &\quad\quad\times \Big(\sum_{q_1\in B_F^\sigma}
    e^{-iq_1(x_2-x_1)}\Big)\Big(\sum_{p_1\notin B_F^\sigma}
    e^{-ip_1(x_1-x_3)}\Big)\\
    &\quad\quad\times
    e^{-sB}a^*(h_{x_2,\sigma})a^*(h_{x_5,\varpi})
    (a^*(g_{x_4,\nu})a(g_{x_5,\varpi})-a(g_{x_5,\varpi})a^*(g_{x_4,\nu}))\\
    &\quad\quad\times
    e^{(s-m)B}a^*(h_{x_6,\varpi_1})a(h_{x_4,\nu})a(g_{x_{6},\varpi_1})a(g_{x_3,\sigma})
    e^{mB}+h.c.\\
    &M_{23}=-\sum_{\sigma,\nu,\varpi,\tau_1}\int_{0}^{s}\int_{\Lambda^6}
    W(x_1-x_4)\eta(x_2-x_5)\eta(x_3-x_6)\\
    &\quad\quad\times \Big(\sum_{q_1\in B_F^\sigma}
    e^{-iq_1(x_2-x_1)}\Big)\Big(\sum_{p_1\notin B_F^\sigma}
    e^{-ip_2(x_4-x_6)}\Big)\\
    &\quad\quad\times
    e^{-sB}a^*(h_{x_2,\sigma})a^*(h_{x_5,\varpi})
    (a^*(g_{x_4,\nu})a(g_{x_5,\varpi})-a(g_{x_5,\varpi})a^*(g_{x_4,\nu}))\\
    &\quad\quad\times
    e^{(s-m)B}a^*(h_{x_3,\tau_1})a(h_{x_1,\sigma})a(g_{x_{6},\nu})a(g_{x_3,\tau_1})
    e^{mB}+h.c.
    \end{aligned}
  \end{equation}
  We can use the equality $\sum_{p_j\notin B_F^\sharp}=\sum_{p_j}-\sum_{p_j\in B_F^\sharp}$ for suitable subscript $j$ and spin $\sharp$ to split $M_{22}$ and $M_{23}$. Using Lemmas \ref{control eB on N_ex lemma} and \ref{variant lem 7.1 lemma}, we can bound them by
  \begin{equation}\label{M_22,M23 bound}
    \pm M_{22},\pm M_{23}\lesssim N^3\Vert W\Vert_1\Vert\eta\Vert_2\Vert\eta\Vert_1
    (\mathcal{N}_{ex}+1)^2\lesssim N^{-\frac{5}{6}-\frac{5}{2}\alpha}(\mathcal{N}_{ex}+1)^2.
  \end{equation}
  Combining (\ref{M_2 split}-\ref{M_22,M23 bound}), we conclude that
  \begin{equation}\label{M_2 bound}
    \begin{aligned}
    \pm M_2&\lesssim N^{-\alpha-\gamma}\mathcal{K}_s+N^{-\frac{3}{2}\alpha}\mathcal{N}_{ex}
    +N^{\frac{1}{3}-\frac{\alpha}{2}}\\
    &+N^{-\frac{5}{6}-\frac{5}{2}\alpha}(\mathcal{N}_{ex}+1)^2.
    \end{aligned}
  \end{equation}
  \begin{flushleft}
    \textbf{Analysis of $M_3$:}
  \end{flushleft}
  Via a direct calculation, we can write $M_3=\sum_{j=1}^{3}M_{3j}$, where
  \begin{equation}\label{M_31}
    \begin{aligned}
    &M_{31}=\sum_{\substack{\sigma,\nu\\k,p,q}}\sum_{\varpi,l,r,s}
    \sum_{l_1,r_1,s_1}W_k\eta_l\eta_{l_1}\delta_{p,r}\delta_{r-l,r_1-l_1}\delta_{s+l,s_1+l_1}\\
    &\quad\quad\times
    \int_{0}^{s}e^{-mB}a^*_{r_1,\sigma}a^*_{s_1,\varpi}e^{(m-s)B}
    (a^*_{q,\nu}a_{s,\varpi}-a_{s,\varpi}a^*_{q,\nu})e^{sB}a_{q+k,\nu}a_{p-k,\sigma}dm\\
    &\quad\quad\times \chi_{p-k\notin B_F^\sigma}\chi_{p,r_1\in B_F^\sigma}
    \chi_{q+k\notin B_F^\nu}
    \chi_{q\in B_F^\nu}\chi_{r-l\notin B^\sigma_F}\chi_{s+l\notin B_F^\varpi}
    \chi_{s,s_1\in B_F^\varpi}+h.c.\\
    &M_{32}=-\sum_{\substack{\sigma,\nu\\k,p,q}}\sum_{\varpi,l,r,s}
    \sum_{\varpi_1,l_1,r_1,s_1}
    W_k\eta_l\eta_{l_1}\delta_{p,r}\delta_{r-l,r_1-l_1}
    \int_{0}^{s}e^{-mB}a^*_{r_1,\sigma}a^*_{s_1,\varpi_1}
    \\
    &\quad\quad\times a^*_{s+l,\varpi}a_{s_1+l_1,\varpi}
    e^{(m-s)B}
    (a^*_{q,\nu}a_{s,\varpi}-a_{s,\varpi}a^*_{q,\nu})e^{sB}a_{q+k,\nu}a_{p-k,\sigma}dm
    \\
    &\quad\quad\times \chi_{p-k\notin B_F^\sigma}\chi_{p,r_1\in B_F^\sigma}\chi_{q+k\notin B_F^\nu}
    \chi_{q\in B_F^\nu}\chi_{r-l\notin B^\sigma_F}\chi_{s+l\notin B_F^\varpi}
    \chi_{s\in B_F^\varpi}\\
    &\quad\quad\times\chi_{s_1+l_1\notin B_F^{\varpi_1}}
    \chi_{s_1\in B_F^{\varpi_1}}+h.c.\\
    &M_{33}=-\sum_{\substack{\sigma,\nu\\k,p,q}}\sum_{\varpi,l,r,s}
    \sum_{\tau_1,l_1,r_1,s_1}
    W_k\eta_l\eta_{l_1}\delta_{p,r}\delta_{s+l,s_1+l_1}
    \int_{0}^{s}e^{-mB}a^*_{r_1,\tau_1}a^*_{s_1,\varpi}
    \\
    &\quad\quad\times a^*_{r-l,\sigma}a_{r_1-l_1,\tau_1}
    e^{(m-s)B}
    (a^*_{q,\nu}a_{s,\varpi}-a_{s,\varpi}a^*_{q,\nu})e^{sB}a_{q+k,\nu}a_{p-k,\sigma}dm
    \\
    &\quad\quad\times \chi_{p-k\notin B_F^\sigma}\chi_{p\in B_F^\sigma}\chi_{q+k\notin B_F^\nu}
    \chi_{q\in B_F^\nu}\chi_{r-l\notin B^\sigma_F}\chi_{s+l\notin B_F^\varpi}
    \chi_{s,s_1\in B_F^\varpi}\\
    &\quad\quad\times\chi_{r_1-l_1\notin B_F^{\tau_1}}
    \chi_{r_1\in B_F^{\tau_1}}+h.c.
    \end{aligned}
  \end{equation}
  For $M_{31}$, switching to the position space, we have
  \begin{equation}\label{M_31 pos}
    \begin{aligned}
    &M_{31}=\sum_{\sigma,\nu,\varpi,p_1}
    \chi_{p_1\notin B_F^\sigma}\int_{0}^{s}\int_{\Lambda^5}
    \eta(x_2-x_5)\eta(x_3-x_6)
    \Big(\sum_{q_1\in B_F^\sigma}W_{q_1-p_1}e^{-iq_1(x_2-x_4)}\Big)\\
    &\quad\quad\times\Big(\sum_{p_3\notin B_F^\sigma}e^{ip_3(x_2-x_3)}\Big)
    \Big(\sum_{p_4\notin B_F^\varpi}e^{ip_4(x_5-x_6)}\Big)\\
    &\quad\quad\times e^{-mB}a^*(g_{x_3,\sigma})a^*(g_{x_6,\varpi})e^{(m-s)B}
    (a^*(g_{x_4,\nu})a(g_{x_5,\varpi})-a(g_{x_5,\varpi})a^*(g_{x_4,\nu}))\\
    &\quad\quad\times e^{sB}a(h_{x_4,\nu})a_{p_1,\sigma}e^{-ip_1x_4}
    dx_2dx_3dx_4dx_5dx_6dm+h.c.
    \end{aligned}
  \end{equation}
  Using
  \begin{equation*}
     \sum_{p_3\notin B_F^\sigma}\sum_{p_4\notin B_F^\varpi}=\sum_{p_3,p_4}-
  \sum_{p_4}\sum_{p_3\in B_F^\sigma}-\sum_{p_3}\sum_{p_4\in B_F^\varpi}
  +\sum_{p_3\in B_F^\sigma}\sum_{p_4\in B_F^\varpi},
  \end{equation*}
 we can split $M_{31}=\sum_{j=1}^{4}M_{31j}$. Also notice that for $j=2,4$
 \begin{equation*}
   \int_{\Lambda}\Big\vert\sum_{q_1\in B_F^\sigma}W_{q_1-p_1}e^{-iq_1(x_2-x_4)}\Big\vert^2
   dx_j=\sum_{q_1\in B_F^\sigma}\vert W_{q_1-p_1}\vert^2.
 \end{equation*}
 Therefore, we bound
 \begin{equation}\label{M_31 bound}
   \begin{aligned}
   \pm M_{311}&\lesssim N^2\Vert\eta\Vert_2^2\Big(
   \sum_{\sigma,q_1\in B_F^\sigma}\sum_{p_1\notin B_F^\sigma}\frac{\vert W_{q_1-p_1}\vert^2}
   {\vert p_1\vert^2-\vert k_F^\sigma\vert^2}\Big)^{\frac{1}{2}}
   (\mathcal{K}_s+\mathcal{N}_{ex}+1)\\
   &\lesssim N^{-\frac{2}{3}-\frac{\alpha}{2}}(\mathcal{K}_s+1)\\
   \pm M_{312},\pm M_{313}&\lesssim N^{\frac{5}{2}}\Vert\eta\Vert_2\Vert\eta\Vert_1
   \Big(\sum_{\sigma,q_1\in B_F^\sigma}\sum_{p_1\notin B_F^\sigma}\frac{\vert W_{q_1-p_1}\vert^2}
   {\vert p_1\vert^2-\vert k_F^\sigma\vert^2}\Big)^{\frac{1}{2}}
   (\mathcal{K}_s+\mathcal{N}_{ex}+1)\\
   &\lesssim N^{-\frac{2}{3}-2\alpha}(\mathcal{K}_s+1)\\
   \pm M_{314}&\lesssim N^3\Vert\eta\Vert_1^2\Big(
   \sum_{\sigma,q_1\in B_F^\sigma}\sum_{p_1\notin B_F^\sigma}\frac{\vert W_{q_1-p_1}\vert^2}
   {\vert p_1\vert^2-\vert k_F^\sigma\vert^2}\Big)^{\frac{1}{2}}
   (\mathcal{K}_s+\mathcal{N}_{ex}+1)\\
   &\lesssim N^{-\frac{2}{3}-\frac{7}{2}\alpha}(\mathcal{K}_s+1)
   \end{aligned}
 \end{equation}
 where we have used (\ref{energy est bog}) in Lemma \ref{energy est bog lem}.
 \par For $M_{32}$ and $M_{33}$, switching to the position space, we have
  %

 \begin{equation}\label{M_32 M_33 pos}
   \begin{aligned}
   &M_{32}=\sum_{\sigma,\nu,\varpi,\varpi_1,p_1}
    \chi_{p_1\notin B_F^\sigma}\int_{0}^{s}\int_{\Lambda^5}
    \eta(x_2-x_5)\eta(x_3-x_6)
    \Big(\sum_{q_1\in B_F^\sigma}W_{q_1-p_1}e^{-iq_1(x_2-x_4)}\Big)\\
    &\quad\quad\times
    \Big(\sum_{p_3\notin B_F^\sigma}e^{ip_3(x_2-x_3)}\Big)
    e^{-mB}a^*(g_{x_3,\sigma})a^*(g_{x_6,\varpi_1})a^*(h_{x_5,\varpi})a(h_{x_6,\varpi_1})\\
    &\quad\quad\times e^{(m-s)B}
    (a(g_{x_5,\varpi})a^*(g_{x_4,\nu})-a^*(g_{x_4,\nu})a(g_{x_5,\varpi}))\\
    &\quad\quad\times e^{sB}a(h_{x_4,\nu})a_{p_1,\sigma}e^{-ip_1x_4}
    dx_2dx_3dx_4dx_5dx_6dm+h.c.\\
    &M_{33}=\sum_{\sigma,\nu,\varpi,\tau_1,p_1}
    \chi_{p_1\notin B_F^\sigma}\int_{0}^{s}\int_{\Lambda^5}
    \eta(x_2-x_5)\eta(x_3-x_6)
    \Big(\sum_{q_1\in B_F^\sigma}W_{q_1-p_1}e^{-iq_1(x_2-x_4)}\Big)\\
    &\quad\quad\times
    \Big(\sum_{p_4\notin B_F^\varpi}e^{ip_4(x_5-x_6)}\Big)
    e^{-mB}a^*(g_{x_3,\tau_1})a^*(g_{x_6,\varpi})a^*(h_{x_2,\sigma})a(h_{x_3,\tau_1})\\
    &\quad\quad\times e^{(m-s)B}
    (a(g_{x_5,\varpi})a^*(g_{x_4,\nu})-a^*(g_{x_4,\nu})a(g_{x_5,\varpi}))\\
    &\quad\quad\times e^{sB}a(h_{x_4,\nu})a_{p_1,\sigma}e^{-ip_1x_4}
    dx_2dx_3dx_4dx_5dx_6dm+h.c.
   \end{aligned}
 \end{equation}
 We use the equality $\sum_{p_j\notin B_F^\sharp}=\sum_{p_j}-\sum_{p_j\in B_F^\sharp}$ for suitable subscript $j$ and spin $\sharp$ to decompose $M_{32}$ and $M_{33}$. Using Lemmas \ref{control eB on N_ex lemma} and \ref{variant lem 7.1 lemma}, we can bound them by
 \begin{equation}\label{M_32 M_33 bound}
 \begin{aligned}
   \pm M_{32},\pm M_{33}&\lesssim N^{\frac{5}{2}}\Vert\eta\Vert_1^2\Big(
   \sum_{\sigma,q_1\in B_F^\sigma}\sum_{p_1\notin B_F^\sigma}\frac{\vert W_{q_1-p_1}\vert^2}
   {\vert p_1\vert^2-\vert k_F^\sigma\vert^2}\Big)^{\frac{1}{2}}
   \big(\theta^{-1} \mathcal{K}_s+\theta(\mathcal{N}_{ex}+1)^3\big)\\
   &\lesssim N^{-\alpha-\gamma}\mathcal{K}_s+N^{-\frac{7}{3}-6\alpha+\gamma}
   (\mathcal{N}_{ex}+1)^3
 \end{aligned}
 \end{equation}
 where we choose $\theta=N^{-\frac{7}{6}-\frac{5}{2}\alpha+\gamma}$. Combining (\ref{M_31}-\ref{M_32 M_33 bound}), we conclude that
 \begin{equation}\label{M_3 bound}
   \begin{aligned}
   \pm M_3\lesssim N^{-\alpha-\gamma}\mathcal{K}_s+N^{-\frac{7}{3}-6\alpha+\gamma}
   (\mathcal{N}_{ex}+1)^3+N^{-\frac{2}{3}-\frac{\alpha}{2}}.
   \end{aligned}
 \end{equation}

 \par Combining (\ref{M split}), (\ref{M_1 bound}), (\ref{M_2 bound}) and (\ref{M_3 bound}), we conclude that
 \begin{equation}\label{Xi_3 bound}
   \begin{aligned}
   \pm\int_{0}^{1}\int_{0}^{t}e^{-sB}\Xi_3e^{eB}dsdt&\lesssim
   \big(N^{-\frac{2}{3}+2\gamma}+N^{-\frac{3}{2}\alpha}\big)(\mathcal{N}_{ex}+1)\\
   &+\big(N^{-\alpha-\gamma}+N^{-\frac{1}{2}+\alpha+2\gamma}\big)\mathcal{K}_s\\
   &+N^{-\frac{7}{9}+\alpha+\frac{7}{3}\gamma}(\mathcal{N}_{ex}+1)^2+N^{\frac{1}{3}
   -\frac{\alpha}{2}}.
   \end{aligned}
 \end{equation}
 \begin{flushleft}
   \textbf{Conclusion of Lemma \ref{cal com V_21' lemma}:}
 \end{flushleft}
 Combining (\ref{Xi_1}), (\ref{Xi_2}) and (\ref{Xi_3 bound}), we reach (\ref{cal com V_21'}), with $\mathcal{E}_{\mathcal{V}_{21}^\prime}$ satisfies the bound (\ref{est E_V_21'}).
 \par The proof of (\ref{cal com Omega}) is almost the same as the proof of (\ref{cal com V_21'}). We can similarly rewrite
 \begin{equation*}
   [\Omega,B]=\tilde{\Xi}_1+\tilde{\Xi}_2+\tilde{\Xi}_3,
 \end{equation*}
 where each of $\tilde{\Xi}_j$ is defined by replacing the coefficient $W_k$ in $\Xi_j$ defined in (\ref{[V_21' B] XI}) by $\eta_kk(q-p)$. The estimates to $\tilde{\Xi}_2$ and $\tilde{\Xi}_3$ follow directly from above discussions. One only needs to notice that since $\vert q-p\vert\lesssim N^{\frac{1}{3}}$ and $\ell=N^{-\frac{1}{3}-\alpha}$, we have
 \begin{equation*}
   \begin{aligned}
   \Vert W\Vert_1\lesssim a,&\quad N^{\frac{1}{3}}\Vert\nabla\eta\Vert_1\lesssim
   aN^{\frac{1}{3}}\ell\ll a\\
   \sum_{p_1\notin B_F^\sigma}\frac{\vert W_{q_1-p_1}\vert^2}
   {\vert p_1\vert^2-\vert k_F^\sigma\vert^2}\lesssim a^2\ell^{-1},&\quad
   N^{\frac{2}{3}}\sum_{p_1\notin B_F^\sigma}\frac{\vert \eta_{q_1-p_1}(q_1-p_1)\vert^2}
   {\vert p_1\vert^2-\vert k_F^\sigma\vert^2}
   \lesssim a^2N^{\frac{2}{3}}\ell\ll a^2\ell^{-1}
   \end{aligned}
 \end{equation*}
 and for some $\delta>\frac{1}{3}$,
 \begin{equation*}
 \begin{aligned}
   &\sum_{p_1\in P_{F,\delta}^\sigma}\frac{\vert W_{q_1-p_1}\vert^2}
   {\vert p_1\vert^2-\vert k_F^\sigma\vert^2}\lesssim a^2\ell^{-2}N^{-\delta}\\
   &N^{\frac{2}{3}}\sum_{p_1\in P_{F,\delta}^\sigma}\frac{\vert \eta_{q_1-p_1}(q_1-p_1)\vert^2}
   {\vert p_1\vert^2-\vert k_F^\sigma\vert^2}
   \lesssim a^2N^{\frac{2}{3}}N^{-\delta}\ll a^2\ell^{-2}N^{-\delta}
 \end{aligned}
 \end{equation*}
 \par We need to deal with $\tilde{\Xi}_{1}$. For the evaluation of $\tilde{\Xi}_{11}$, we additionally notice that for $p\in B_F^\sigma$ and $q\in B_F^\nu$,
 \begin{equation*}
   \begin{aligned}
   \sum_{k}\eta_k^2k(q-p)\chi_{p-k\notin B_F^\sigma}\chi_{q+k\notin B_F^\nu}=
   -\sum_{k}\eta_k^2k(q-p)(1-\chi_{p-k\notin B_F^\sigma}\chi_{q+k\notin B_F^\nu}).
   \end{aligned}
 \end{equation*}
 By (\ref{est of eta_0}), we can bound
 \begin{equation}\label{eta_peta_p}
   \begin{aligned}
   \Big\vert\sum_{k}\eta_k^2k(q-p)(1-\chi_{p-k\notin B_F^\sigma}\chi_{q+k\notin B_F^\nu})\Big\vert
   \lesssim N^{\frac{2}{3}}\sum_{\vert k\vert\lesssim N^{\frac{1}{3}}} a^2\ell^4
   \lesssim N^{\frac{5}{3}}a^2\ell^4.
   \end{aligned}
 \end{equation}
 Comparing (\ref{eta_peta_p}) with (\ref{part sumW_keta_k 2}), it is easy to find we can evaluate $\tilde{\Xi}_{11}$ in the way we bound $\Xi_{11}$.
 \par For the evaluation of $\tilde{\Xi}_{12}$, we similarly rewrite it by $\tilde{\Xi}_{12}=\sum_{j=1}^{3}\tilde{\Xi}_{12j}$ with
    \begin{equation}\label{Xi_12j tilde}
      \begin{aligned}
      &\tilde{\Xi}_{121}=-\sum_{k,p,q,\sigma}2\eta_k\eta_{k+q-p}k(q-p)\chi_{p-k,q+k\notin B^{\sigma}_F}\chi_{p,q\in B^{\sigma}_F}\chi_{p\neq q}\\
      &\tilde{\Xi}_{122}=\sum_{k,p,q,\sigma}4\eta_k\eta_{k+q-p}k(q-p)a_{p,\sigma}a_{p,\sigma}^*
      \chi_{p-k,q+k\notin B^{\sigma}_F}\chi_{p,q\in B^{\sigma}_F}\chi_{p\neq q}\\
      &\tilde{\Xi}_{123}=\sum_{k,p,q,\sigma,\nu}\sum_{l,r,s}\eta_k\eta_lk(q-p)
      a_{r,\sigma}a_{s,\nu}a^*_{q,\nu}a^*_{p,\sigma}
      \delta_{p-k,r-l}\delta_{q+k,s+l}\chi_{k\neq l}\\
    &\quad\quad\quad\quad\times
    \chi_{p-k\notin B_F^\sigma}\chi_{p,r\in B_F^\sigma}\chi_{q+k\notin B_F^\nu}
    \chi_{q,s\in B_F^\nu}+h.c.
      \end{aligned}
    \end{equation}
    Notice that for $p,q\in B_F^\sigma$,
 \begin{equation*}
   \begin{aligned}
   &\sum_{k}\eta_k\eta_{k+q-p}k(q-p)\chi_{p-k,q+k\notin B_F^\sigma}\\
   =&\sum_{k}\eta_k\eta_{k+q-p}(k+q-p)(q-p)\chi_{p-k,q+k\notin B_F^\sigma}
   -\sum_{k}\eta_k\eta_{k+q-p}(q-p)^2\chi_{p-k,q+k\notin B_F^\sigma}\\
   =&-\sum_{\tilde{k}}\eta_{\tilde{k}+q-p}
   \eta_{\tilde{k}}\tilde{k}(q-p)\chi_{p-\tilde{k},q+\tilde{k}\notin B_F^\sigma}
   -\sum_{k}\eta_k\eta_{k+q-p}(q-p)^2\chi_{p-k,q+k\notin B_F^\sigma}
   \end{aligned}
 \end{equation*}
 where we let $\tilde{k}=-(k+q-p)$ in the last equality.
 \par Therefore, since $\vert q-p\vert\lesssim N^{\frac{1}{3}}$, we can bound
 \begin{equation}\label{etaeta2}
   \Big\vert\sum_{k}\eta_k\eta_{k+q-p}(q-p)^2\chi_{p-k,q+k\notin B_F^\sigma}\Big\vert
   \lesssim N^{\frac{2}{3}}\Vert\eta\Vert^2_2\lesssim N^{\frac{2}{3}}a^2\ell.
 \end{equation}
 Hence we have
    \begin{equation}\label{Xi_122 tilde}
      \pm\tilde{\Xi}_{122}\lesssim N^{\frac{5}{3}}a^2\ell\mathcal{N}_{ex}
      =N^{-\frac{2}{3}-\alpha}\mathcal{N}_{ex}.
    \end{equation}
    For $\tilde{\Xi}_{123}$, we continue by rewriting it by $\tilde{\Xi}_{123}=\sum_{j=1}^{3}\tilde{\Xi}_{123j}$ with
    \begin{equation}\label{Xi_123j tilde}
      \begin{aligned}
      &\tilde{\Xi}_{1231}=-\sum_{k,p,q,\sigma,\nu}2\eta_k^2k(q-p)a_{p,\sigma}a_{q,\nu}
    a^*_{q,\nu}a^*_{p,\sigma}
    \chi_{p-k\notin B^{\sigma}_F}\chi_{q+k\notin B^{\nu}_F}\chi_{p\in B^{\sigma}_F}
    \chi_{q\in B^{\nu}_F}\\
    &\tilde{\Xi}_{1232}=\sum_{k,p,q,\sigma,\nu}\sum_{l,r,s}\eta_k\eta_lk(q-p)
      a_{r,\sigma}a_{s,\nu}a^*_{q,\nu}a^*_{p,\sigma}
      \delta_{p-k,r-l}\delta_{q+k,s+l}\\
    &\quad\quad\quad\quad\times
   \chi_{p,r\in B_F^\sigma}\chi_{q,s\in B_F^\nu}+h.c.\\
   &\tilde{\Xi}_{1233}=\sum_{k,p,q,\sigma,\nu}\sum_{l,r,s}\eta_k\eta_lk(q-p)
      a_{r,\sigma}a_{s,\nu}a^*_{q,\nu}a^*_{p,\sigma}
      \delta_{p-k,r-l}\delta_{q+k,s+l}\\
    &\quad\quad\quad\quad\times
    (\chi_{p-k\notin B_F^\sigma}\chi_{q+k\notin B_F^\nu}-1)\chi_{p,r\in B_F^\sigma}
    \chi_{q,s\in B_F^\nu}+h.c.
      \end{aligned}
    \end{equation}
    Similar to $\tilde{\Xi}_{11}$, we have
    \begin{equation}\label{Xi_1231 tilde}
      \pm\tilde{\Xi}_{1231}\lesssim N^{\frac{8}{3}}a^2\ell^{4}\mathcal{N}_{ex}
      =N^{-\frac{2}{3}-4\alpha}\mathcal{N}_{ex}.
    \end{equation}
    Switching to the position space, we know that for some universal constant $C$,
    \begin{equation}\label{Xi_1232 tilde}
      \begin{aligned}
      &\tilde{\Xi}_{1232}=C\sum_{\sigma,\nu}\int_{\Lambda^2}\eta(x-y)\nabla_{x}\eta(x-y)
      a(g_{x,\sigma})a(g_{y,\nu})a^*(\nabla_{y}g_{y,\nu})a^*(g_{x,\sigma})dxdy+h.c.\\
      &=-C\sum_{\sigma,\nu}\int_{\Lambda^2}\nabla_{x}\eta(x-y)\eta(x-y)
      a(g_{x,\sigma})a(g_{y,\nu})a^*(\nabla_{y}g_{y,\nu})a^*(g_{x,\sigma})dxdy+h.c.\\
      &-C\sum_{\sigma,\nu}\int_{\Lambda^2}\eta(x-y)\eta(x-y)
      a(\nabla_{x}g_{x,\sigma})a(g_{y,\nu})a^*(\nabla_{y}g_{y,\nu})a^*(g_{x,\sigma})dxdy+h.c.\\
      &-C\sum_{\sigma,\nu}\int_{\Lambda^2}\eta(x-y)\eta(x-y)
      a(g_{x,\sigma})a(g_{y,\nu})a^*(\nabla_{y}g_{y,\nu})a^*(\nabla_{x}g_{x,\sigma})dxdy+h.c.
      \end{aligned}
    \end{equation}
    Therefore we bound
    \begin{equation}\label{Xi_1232 tilde bound}
      \pm\tilde{\Xi}_{1232}\lesssim N^{\frac{5}{3}}\Vert\eta\Vert_2^2\mathcal{N}_{ex}
      \lesssim N^{-\frac{2}{3}-\alpha}\mathcal{N}_{ex}
    \end{equation}
    since we have
    \begin{equation}\label{nable a(g_x)}
      \sum_\sigma\int_{\Lambda}a(\nabla_{x}g_{x,\sigma})a^*(\nabla_{x}g_{x,\sigma})dx
      =\sum_{\sigma,k\in B_F^\sigma}\vert k\vert^2 a_{k,\sigma}a^*_{k,\sigma}\lesssim
      N^{\frac{2}{3}}\mathcal{N}_{ex}.
    \end{equation}
    Notice that for $p\in B^{\sigma}_F$ and $q\in B^{\nu}_F$, $\chi_{p-k\notin B^{\sigma}_F}\chi_{q+k\notin B^{\nu}_F}\neq 1$ implies $\vert k\vert\lesssim N^{\frac{1}{3}}$. We can bound for $\psi\in\mathcal{H}^{\wedge N}(\{N_{\varsigma_i}\})$,
    \begin{equation}\label{Xi_1233 tilde}
      \begin{aligned}
      \vert\langle\tilde{\Xi}_{1233}\psi,\psi\rangle\vert&\lesssim N^{\frac{2}{3}}
      \sum_{\substack{p,q,r,\sigma,\nu\\ \vert k\vert\lesssim N^{\frac{1}{3}}}}
      \vert \eta_k\vert \vert\eta_{k+r-p}\vert \Vert a^*_{q+p-r,\nu}a^*_{r,\sigma}\psi\Vert
      \Vert a^*_{q,\nu}a^*_{p,\sigma}\psi\Vert\\
      &\quad\quad\quad\quad\quad\quad\quad\quad
      \times\chi_{p,r\in B_F^\sigma}\chi_{q,q+p-r\in B_F^\nu}\\
      &\lesssim N^{\frac{8}{3}}a^2\ell^4 (\mathcal{N}_{ex}+1)^2\lesssim N^{-\frac{2}{3}-4\alpha}
      (\mathcal{N}_{ex}+1)^2.
      \end{aligned}
    \end{equation}
    Therefore, we conclude the proof of Lemma \ref{cal com V_21' lemma}.
\end{proof}

\begin{lemma}\label{cal com V_21 lemma}
   \begin{equation}\label{cal com V_21}
    \begin{aligned}
    \int_{0}^{1}\int_{t}^{1}e^{-sB}[\mathcal{V}_{21},B]e^{sB}dsdt=&
    \frac{1}{2}\sum_{k}\hat{v}_k\eta_k\sum_{\sigma\neq\nu}N_\sigma N_\nu\\
    &-\sum_{k}\hat{v}_k\eta_k\sum_{\sigma\neq\nu}N_\sigma \mathcal{N}_{ex,\nu}
    +\mathcal{E}_{\mathcal{V}_{21}},
    \end{aligned}
  \end{equation}
  and
  \begin{equation}\label{control Gamma}
    \int_{0}^{1}e^{-tB}\Gamma e^{tB}dt=\mathcal{E}_{\Gamma},
  \end{equation}
  where
  \begin{equation}\label{est E_V_21}
  \begin{aligned}
    \pm(\mathcal{E}_{\mathcal{V}_{21}}+\mathcal{E}_{\Gamma})\lesssim&
    N^{-\frac{3}{2}\alpha}(\mathcal{N}_{ex}+1)
    +N^{-\frac{2}{3}-\frac{\alpha}{2}}(\mathcal{N}_{ex}+1)^2\\
    +&\big(N^{-\alpha-\gamma}+N^{-\frac{1}{6}}\big)
    \mathcal{K}_s+N^{-\gamma}\mathcal{V}_4+N^{\frac{1}{3}-\frac{\alpha}{2}}.
  \end{aligned}
  \end{equation}
  with $0<\alpha<\frac{1}{6}$ and $0<\gamma<\alpha$.
\end{lemma}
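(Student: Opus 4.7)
The proof has two threads corresponding to the two identities \eqref{cal com V_21} and \eqref{control Gamma}, both resting on the scattering equation \eqref{eqn of eta_p rewrt}. The key algebraic observation for \eqref{control Gamma} is that the four quartic operators $\Omega_m$, $\Theta_m$, $\mathcal V_{21}$ and $\mathcal V_{21}^\prime$---where $\Omega_m$ is the mass part of $[\mathcal K,B]$ extracted in \eqref{[K,A]} and $\Theta_m$ is the main part of $[\mathcal V_4,B]$ in \eqref{Theta}---share the same operator skeleton $(a^*_{p-k,\sigma}a^*_{q+k,\nu}a_{q,\nu}a_{p,\sigma}+\text{h.c.})\chi_{p-k\notin B_F^\sigma}\chi_{q+k\notin B_F^\nu}\chi_{p\in B_F^\sigma}\chi_{q\in B_F^\nu}$, with respective coefficients $\eta_k\lvert k\rvert^2$, $\tfrac12\sum_l\hat v_{k-l}\eta_l$, $\tfrac12\hat v_k$, and $W_k$. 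The scattering equation \eqref{eqn of eta_p rewrt} asserts exactly that these four coefficients sum to zero, so $\Omega_m+\Theta_m+\mathcal V_{21}-\mathcal V_{21}^\prime\equiv 0$ operatorially and $\Gamma=\Theta_d+\Theta_r$.

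For the identity \eqref{cal com V_21}, the plan is to compute $[\mathcal V_{21},B]$ via Lemma~\ref{lem commutator} and decompose it as $\Xi_1^{\hat v}+\Xi_2^{\hat v}+\Xi_3^{\hat v}$ in parallel with \eqref{[V_21' B] XI}, with coefficient $\tfrac12\hat v_k\eta_l$ in place of $W_k\eta_l$. The main contributions come from the diagonal sub-piece of $\Xi_1^{\hat v}$ (frequencies $k=l$): Fermionic normal-ordering rewrites $a^*_{p,\sigma}a^*_{q,\nu}a_{q,\nu}a_{p,\sigma}$ as $1-a_{p,\sigma}a^*_{p,\sigma}-a_{q,\nu}a^*_{q,\nu}+a_{p,\sigma}a_{q,\nu}a^*_{q,\nu}a^*_{p,\sigma}$ together with the Pauli correction $\delta_{p,q}\delta_{\sigma,\nu}(2a_{p,\sigma}a^*_{p,\sigma}-1)$, and the same Pauli cancellation between same-spin direct and exchange contributions that one encounters in the derivation of $\mathcal V_0$ in Lemma~\ref{lemma V_0}---compare \eqref{cal V_01}--\eqref{cal V_02}---leaves only the two stated $\sigma\neq\nu$ main terms $\tfrac12\sum_k\hat v_k\eta_k\sum_{\sigma\neq\nu}N_\sigma N_\nu$ and $-\sum_k\hat v_k\eta_k\sum_{\sigma\neq\nu}N_\sigma\mathcal N_{ex,\nu}$. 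Replacing the indicator $\chi_{p-k\notin B_F^\sigma}\chi_{q+k\notin B_F^\nu}$ by $1$ in these extractions costs a $\lvert k\rvert\lesssim N^{1/3}$-localized remainder which, thanks to $\lvert\hat v_k\rvert\lesssim a$ and \eqref{est of eta_0}, is absorbed into $\mathcal E_{\mathcal V_{21}}$.

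The error pieces, namely the off-diagonal $k\neq l$ part of $\Xi_1^{\hat v}$ together with $\Xi_2^{\hat v}$ and $\Xi_3^{\hat v}$, are controlled by repeating the position-space estimates of Lemma~\ref{cal com V_21' lemma} under the substitution $W_k\rightsquigarrow\tfrac12\hat v_k$. Since $\lvert\hat v_k\rvert\lesssim a$ holds uniformly in $k$ while $\lvert W_k\rvert\lesssim a\ell^{-2}\lvert k\rvert^{-2}$ enjoys faster $\lvert k\rvert$-decay at the cost of an $\ell^{-2}$ overall size, the substitution makes the $\ell^{-1}$ factors appearing in the $W_k$-analogues of \eqref{energy est bog}--\eqref{energy est bog u} harmless and yields error bounds matching or improving upon those in \eqref{est E_V_21'}. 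Conjugation by $e^{sB}$ via Lemmas \ref{control eB on N_ex lemma}, \ref{variant lem 7.1 lemma}, \ref{corollary control eB on K_s} and Corollary~\ref{corollary control eB on V_4}, followed by integration in $t,s$, produces the half of \eqref{est E_V_21} corresponding to $\mathcal E_{\mathcal V_{21}}$.

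Finally, for $\mathcal E_\Gamma$, the piece $\Theta_d$ is bounded as in \eqref{Theta d bound true} by $\ell^2(\mathcal V_4+N)=N^{-2/3-2\alpha}\mathcal V_4+N^{1/3-2\alpha}$, comfortably within the stated target since $\gamma<\alpha<1/6$. \emph{The main technical obstacle} is the sextic operator $\Theta_r$, for which the crude bound $\mathcal V_4+N\ell$ from \eqref{Theta_r bound} contains an unacceptable $N^{2/3-\alpha}$. Beating it requires the same refined strategy used for $M_1$, $M_2$, $M_3$ in the proof of Lemma~\ref{cal com V_21' lemma}: introduce cutoff parameters $\delta_j\sim\tfrac13+u$ as in \eqref{M_1 rewrt 2}--\eqref{M_14 bound}, split each of the three excited slots of $\Theta_r$ according to the $H/L$ and $I/S$ projections of \eqref{tools1}--\eqref{tools2}, control $H$-slots by $\mathcal K_s$ through \eqref{ineqn N_h and N_i}, $L$ and $S$ slots by the Fermi-surface counting \eqref{ineqn N_l and N_s}, and $I$-slots by trivial $L^2$-bounds, pairing with the pointwise bounds $\lvert\hat v_k\rvert\lesssim a$, $\lvert\eta_l\rvert\lesssim a\ell^2$ and the $\lVert\eta\rVert_p$ estimates \eqref{est of eta and eta_perp}--\eqref{L1 grad eta and laplace eta}. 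Tuning $u$ as in \eqref{choose u} reproduces each of the five powers on the right-hand side of \eqref{est E_V_21}. A concluding conjugation by $e^{tB}$ via Lemmas \ref{control eB on N_ex lemma}, \ref{corollary control eB on K_s} and Corollary~\ref{corollary control eB on V_4} and integration in $t$ closes the argument.
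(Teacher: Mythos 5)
Your treatment of \eqref{cal com V_21} matches the paper's: write $[\mathcal V_{21},B]$ as $\hat\Xi_1+\hat\Xi_2+\hat\Xi_3$ mirroring \eqref{[V_21' B] XI} with $W_k\mapsto\tfrac12\hat v_k$, extract the two $\sigma\neq\nu$ main terms from the $k=l$ sub-piece of $\hat\Xi_1$ after Fermionic normal ordering, and use the same-spin Pauli cancellation (as in \eqref{cal V_01}--\eqref{cal V_02}) to remove the $\sigma=\nu$ contributions; the remaining pieces are estimated as in Lemma~\ref{cal com V_21' lemma} and conjugated by $e^{sB}$ via Lemmas~\ref{control eB on N_ex lemma}, \ref{variant lem 7.1 lemma}, Corollary~\ref{corollary control eB on K_s} and Lemma~\ref{corollary control eB on V_4}. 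Your algebraic observation that the scattering equation \eqref{eqn of eta_p rewrt} forces $\Omega_m+\Theta_m+\mathcal V_{21}-\mathcal V_{21}^\prime\equiv 0$, hence $\Gamma=\Theta_d+\Theta_r$, is also exactly the paper's first step for \eqref{control Gamma}.

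Where you depart from the paper is in the treatment of $\Theta_r$. You diagnose the crude bound $\pm\Theta_r\lesssim\mathcal V_4+N\ell$ of \eqref{Theta_r bound} as an ``obstacle'' and propose the heavy $M_1,M_2,M_3$-type machinery of Lemma~\ref{cal com V_21' lemma}, with $H/L/S/I$ momentum splittings and a tuned parameter $u$. This is overkill. The $N\ell$ only appears because \eqref{Theta_r bound} substitutes the operator bound $(\mathcal N_{ex}+1)^2\lesssim N^2$ at the end; the paper instead redoes the Cauchy--Schwarz with weight $\theta=N^{-2/3-\alpha/2}$ and \emph{keeps} $(\mathcal N_{ex}+1)^2$ as an operator, giving
\begin{equation*}
\pm\Theta_r\lesssim \theta\,\mathcal V_4+\theta^{-1}N^{-1}\ell\,(\mathcal N_{ex}+1)^2
= N^{-\frac{2}{3}-\frac{\alpha}{2}}\big(\mathcal V_4+(\mathcal N_{ex}+1)^2\big).
\end{equation*}
Together with \eqref{Theta d bound true} this yields \eqref{Gamma bound}, and conjugation by $e^{tB}$ via Lemma~\ref{control eB on N_ex lemma} and Lemma~\ref{corollary control eB on V_4} then produces the $N^{-2/3-\alpha/2}(\mathcal N_{ex}+1)^2$ term and the constant $N^{1/3-\alpha/2}$ of \eqref{est E_V_21}, with the $N^{-2/3-\alpha/2}\mathcal V_4$ piece absorbed into $N^{-\gamma}\mathcal V_4$. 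No momentum splitting or cutoff tuning is needed for $\Gamma$ at all. Your refined strategy would also produce a valid bound, but it misses the cheap observation and makes the argument substantially longer than necessary.
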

\begin{proof}
  \par We first prove (\ref{cal com V_21}) and bound $\mathcal{E}_{\mathcal{V}_{21}}$. Similar to the evaluation of (\ref{cal com V_21'}), we have
   \begin{equation*}
    [\mathcal{V}_{21}^\prime,B]=\hat{\Xi}_{1}+\hat{\Xi}_2+\hat{\Xi}_3
  \end{equation*}
  with
  \begin{equation}\label{[V_21 B] XI}
   \begin{aligned}
    &\hat{\Xi}_1=\sum_{\substack{\sigma,\nu\\k,p,q}}\sum_{l,r,s}\frac{1}{2}\hat{v}_k\eta_l
    \delta_{p-k,r-l}\delta_{q+k,s+l}a^*_{p,\sigma}a^*_{q,\nu}a_{s,\nu}a_{r,\sigma}\\
    &\quad\quad\quad\quad\times
    \chi_{p-k\notin B_F^\sigma}\chi_{p,r\in B_F^\sigma}\chi_{q+k\notin B_F^\nu}
    \chi_{q,s\in B_F^\nu}+h.c.\\
    &\hat{\Xi}_2=\sum_{\substack{\sigma,\nu\\k,p,q}}\sum_{\tau,l,r,s}\hat{v}_k\eta_l
    \delta_{p-k,s+l}a^*_{p,\sigma}a^*_{q,\nu}a^*_{r-l,\tau}
    a_{q+k,\nu}a_{s,\sigma}a_{r,\tau}\\
    &\quad\quad\quad\quad\times
    \chi_{p-k\notin B_F^\sigma}\chi_{p,s\in B_F^\sigma}\chi_{q+k\notin B_F^\nu}
    \chi_{q\in B_F^\nu}\chi_{r-l\notin B^\tau_F}\chi_{r\in B_F^\tau}+h.c.\\
    &\hat{\Xi}_3=\sum_{\substack{\sigma,\nu\\k,p,q}}\sum_{\varpi,l,r,s}\frac{1}{2}\hat{v}_k\eta_l
    \delta_{p,r}a^*_{r-l,\sigma}a^*_{s+l,\varpi}(a^*_{q,\nu}a_{s,\varpi}-a_{s,\varpi}a^*_{q,\nu})
    a_{q+k,\nu}a_{p-k,\sigma}\\
    &\quad\quad\quad\quad\times
    \chi_{p-k,r-l\notin B_F^\sigma}\chi_{p\in B_F^\sigma}\chi_{q+k\notin B_F^\nu}
    \chi_{q\in B_F^\nu}\chi_{s+l\notin B^\varpi_F}\chi_{s\in B^\varpi_F}+h.c.
   \end{aligned}
  \end{equation}
  \begin{flushleft}
    \textbf{Analysis of $\hat{\Xi}_1$:}
  \end{flushleft}
  Similar to the analysis of $\Xi_1$ in Lemma \ref{cal com V_21' lemma}, we split $\hat{\Xi}_1=\hat{\Xi}_{11}+\hat{\Xi}_{12}$ using $1=\chi_{k=l}+\chi_{k\neq l}$, and similar to (\ref{Xi_11j}), we write $\hat{\Xi}_{11}=\sum_{j=1}^{4}\hat{\Xi}_{11j}$ with
  \begin{equation}\label{hat Xi_11j}
    \begin{aligned}
    &\hat{\Xi}_{111}=\sum_{k,p,q,\sigma,\nu}\hat{v}_k\eta_k\chi_{p-k\notin B^{\sigma}_F}\chi_{q+k\notin B^{\nu}_F}\chi_{p\in B^{\sigma}_F}\chi_{q\in B^{\nu}_F}\\
    &\hat{\Xi}_{112}=-\sum_{k,p,q,\sigma,\nu}2\hat{v}_k\eta_ka_{p,\sigma}a_{p,\sigma}^*
    \chi_{p-k\notin B^{\sigma}_F}\chi_{q+k\notin B^{\nu}_F}\chi_{p\in B^{\sigma}_F}
    \chi_{q\in B^{\nu}_F}\\
    &\hat{\Xi}_{113}=\sum_{k,p,q,\sigma,\nu}\hat{v}_k\eta_ka_{p,\sigma}a_{q,\nu}
    a^*_{q,\nu}a^*_{p,\sigma}
    \chi_{p-k\notin B^{\sigma}_F}\chi_{q+k\notin B^{\nu}_F}\chi_{p\in B^{\sigma}_F}
    \chi_{q\in B^{\nu}_F}\\
    &\hat{\Xi}_{114}=\sum_{k,p,\sigma}\hat{v}_k\eta_k(2a_{p,\sigma}a^*_{p,\sigma}-1)
    \chi_{p-k\notin B^{\sigma}_F}\chi_{q+k\notin B^{\nu}_F}\chi_{p\in B^{\sigma}_F}
    \chi_{q\in B^{\nu}_F}
    \end{aligned}
      \end{equation}
 Notice that for $p\in B^{\sigma}_F$ and $q\in B^{\nu}_F$, $\chi_{p-k\notin B^{\sigma}_F}\chi_{q+k\notin B^{\nu}_F}\neq 1$ implies $\vert k\vert\lesssim N^{\frac{1}{3}}$. Therefore for $p\in B^{\sigma}_F$ and $q\in B^{\nu}_F$ fixed, similar to (\ref{part sumW_keta_k}), we have
    \begin{equation}\label{part sumv_keta_k}
      \Big\vert\sum_{k}\hat{v}_k\eta_k(\chi_{p-k\notin B^{\sigma}_F}\chi_{q+k\notin B^{\nu}_F}-1)\Big\vert\lesssim Na^2\ell^2.
    \end{equation}
Also recall from (\ref{有用的屎}), we have $\vert\sum_{k}\hat{v}_k\eta_k\vert\lesssim
 a$, then through estimates similar to $\Xi_{11}$, we have
\begin{equation}\label{hat Xi_11 bound}
  \hat{\Xi}_{11}=\sum_{k}\hat{v}_k\eta_k\sum_{\sigma,\nu}N_\sigma N_\nu
    -2\sum_{k}\hat{v}_k\eta_k\sum_{\sigma,\nu}N_\sigma \mathcal{N}_{ex,\nu}
    +\mathcal{E}_{\hat{\Xi}_{11}},
\end{equation}
with
\begin{equation}\label{hat E_Xi_11 bound}
  \pm\mathcal{E}_{\hat{\Xi}_{11}}\lesssim N^{-1}\mathcal{N}_{ex}^2+N^{\frac{1}{3}-2\alpha}.
\end{equation}
\par For $\hat{\Xi}_{12}$, we write it by $\hat{\Xi}_{12}=\sum_{j=1}^{3}\hat{\Xi}_{12j}$ similar to (\ref{Xi_12j}) with
    \begin{equation}\label{hat Xi_12j}
      \begin{aligned}
      &\hat{\Xi}_{121}=-\sum_{k,p,q,\sigma}\hat{v}_k\eta_{k+q-p}\chi_{p-k,q+k\notin B^{\sigma}_F}\chi_{p,q\in B^{\sigma}_F}\chi_{p\neq q}\\
      &\hat{\Xi}_{122}=\sum_{k,p,q,\sigma}2\hat{v}_k\eta_{k+q-p}a_{p,\sigma}a_{p,\sigma}^*
      \chi_{p-k,q+k\notin B^{\sigma}_F}\chi_{p,q\in B^{\sigma}_F}\chi_{p\neq q}\\
      &\hat{\Xi}_{123}=\sum_{k,p,q,\sigma,\nu}\sum_{l,r,s}\frac{1}{2}\hat{v}_k\eta_l
      a_{r,\sigma}a_{s,\nu}a^*_{q,\nu}a^*_{p,\sigma}
      \delta_{p-k,r-l}\delta_{q+k,s+l}\chi_{k\neq l}\\
    &\quad\quad\quad\quad\times
    \chi_{p-k\notin B_F^\sigma}\chi_{p,r\in B_F^\sigma}\chi_{q+k\notin B_F^\nu}
    \chi_{q,s\in B_F^\nu}+h.c.
      \end{aligned}
    \end{equation}
    Notice that similar to (\ref{v_k-v_0}), we can bound for $p,q\in B^{\sigma}_F$
    \begin{equation}\label{sum_k v_keta_k+q-p-eta_k}
      \begin{aligned}
      \Big\vert\sum_{k}\hat{v}_k(\eta_{k+q-p}-\eta_k)\Big\vert
      &=\Big\vert\int_{\Lambda}v_a(x)\eta(x)\big(e^{-i(q-p)x}-1\big)dx\Big\vert\\
      &\lesssim \vert q-p\vert^2\int_{B_{a}}\vert v_a(x)\eta(x)\vert\vert x\vert^2dx\\
      &\lesssim N^{\frac{2}{3}}a^3,
      \end{aligned}
    \end{equation}
    and similar to (\ref{part sumv_keta_k}), we have
     \begin{equation}\label{part sumv_keta_k 2}
      \Big\vert\sum_{k}\hat{v}_k\eta_{k+q-p}(\chi_{p-k,q+k\notin B^{\sigma}_F}-1)\Big\vert\lesssim Na^2\ell^2.
    \end{equation}
    By (\ref{sum_k v_keta_k+q-p-eta_k}) and (\ref{part sumv_keta_k 2}), we have
    \begin{equation}\label{Xi_12 part 1}
      \hat{\Xi}_{121}+\hat{\Xi}_{122}=-\sum_{k}\hat{v}_k\eta_k\sum_{\sigma}N_\sigma^2
    +2\sum_{k}\hat{v}_k\eta_k\sum_{\sigma}N_\sigma \mathcal{N}_{ex,\sigma}
    +\mathcal{E}_{\hat{\Xi}_{12}},
    \end{equation}
    with
    \begin{equation}\label{hat E_Xi_12 bound}
  \pm\mathcal{E}_{\hat{\Xi}_{12}}\lesssim N^{\frac{1}{3}-2\alpha}.
\end{equation}
 As for $\hat{\Xi}_{123}$, we decompose it by $\hat{\Xi}_{123}=\sum_{j=1}^{3}\hat{\Xi}_{123j}$ similar to (\ref{Xi_123j}) with
    \begin{equation}\label{hat Xi_123j}
      \begin{aligned}
      &\hat{\Xi}_{1231}=-\sum_{k,p,q,\sigma,\nu}\hat{v}_k\eta_ka_{p,\sigma}a_{q,\nu}
    a^*_{q,\nu}a^*_{p,\sigma}
    \chi_{p-k\notin B^{\sigma}_F}\chi_{q+k\notin B^{\nu}_F}\chi_{p\in B^{\sigma}_F}
    \chi_{q\in B^{\nu}_F}\\
    &\hat{\Xi}_{1232}=\sum_{k,p,q,\sigma,\nu}\sum_{l,r,s}\frac{1}{2}\hat{v}_k\eta_l
      a_{r,\sigma}a_{s,\nu}a^*_{q,\nu}a^*_{p,\sigma}
      \delta_{p-k,r-l}\delta_{q+k,s+l}
   \chi_{p,r\in B_F^\sigma}\chi_{q,s\in B_F^\nu}+h.c.\\
   &\hat{\Xi}_{1233}=\sum_{k,p,q,\sigma,\nu}\sum_{l,r,s}\frac{1}{2}\hat{v}_k\eta_l
      a_{r,\sigma}a_{s,\nu}a^*_{q,\nu}a^*_{p,\sigma}
      \delta_{p-k,r-l}\delta_{q+k,s+l}\\
    &\quad\quad\quad\quad\times
    (\chi_{p-k\notin B_F^\sigma}\chi_{q+k\notin B_F^\nu}-1)\chi_{p,r\in B_F^\sigma}
    \chi_{q,s\in B_F^\nu}+h.c.
      \end{aligned}
    \end{equation}
    By $\vert\sum_{k}\hat{v}_k\eta_k\vert\lesssim a$ and (\ref{part sumv_keta_k}), we bound
    \begin{equation}\label{hat Xi_1231}
      \pm\hat{\Xi}_{1231}\lesssim a\mathcal{N}_{ex}^2.
    \end{equation}
    Similar to (\ref{Xi_1233}), we have
    \begin{equation}\label{hat Xi_1233}
      \pm\hat{\Xi}_{1233}\lesssim N^{-\frac{2}{3}-2\alpha}(\mathcal{N}_{ex}+1)^2.
    \end{equation}
    Switching to the position space, since $\Vert\eta\Vert_{\infty}\leq 1$, $\hat{\Xi}_{1232}$ can be bounded similar to (\ref{est R_0 true})
    \begin{equation}\label{hat Xi_1232}
      \begin{aligned}
      \pm\Xi_{1232}&=\pm\sum_{\sigma,\nu}\int_{\Lambda^2}\eta(x-y)v_a(x-y)
      a(g_{x,\sigma})a(g_{y,\nu})a^*(g_{y,\nu})a^*(g_{x,\sigma})dxdy\\
      &
      \lesssim N^{-\frac{1}{6}}\mathcal{N}_{ex}+N^{\frac{1}{6}}\mathcal{N}_i[0].
      \end{aligned}
    \end{equation}
    Collecting (\ref{hat Xi_11 bound}), (\ref{hat E_Xi_11 bound}), (\ref{Xi_12 part 1}), (\ref{hat E_Xi_12 bound}) and (\ref{hat Xi_1231}-\ref{hat Xi_1233}), with the help of Lemmas \ref{lemma ineqn K_s} and \ref{control eB on N_ex lemma}, we conclude that
    \begin{equation}\label{hat Xi_1}
      \begin{aligned}
    \int_{0}^{1}\int_{t}^{1}e^{-sB}\hat{\Xi}_1e^{sB}dsdt=&
    \frac{1}{2}\sum_{k}\hat{v}_k\eta_k\sum_{\sigma\neq\nu}N_\sigma N_\nu\\
    &-\sum_{k}\hat{v}_k\eta_k\sum_{\sigma\neq\nu}N_\sigma \mathcal{N}_{ex,\nu}
    +\mathcal{E}_{\hat{\Xi}_1},
    \end{aligned}
    \end{equation}
    with
    \begin{equation}\label{hat E_Xi_1}
      \pm\mathcal{E}_{\hat{\Xi}_1}\lesssim N^{-\frac{1}{6}}\mathcal{K}_s+N^{-\frac{2}{3}
      -2\alpha}(\mathcal{N}_{ex}+1)^2+N^{\frac{1}{3}-2\alpha}.
    \end{equation}

   \begin{flushleft}
    \textbf{Analysis of $\hat{\Xi}_2$:}
  \end{flushleft}
  The estimate to $\hat{\Xi}_2$ is mostly the same as (\ref{Xi_2}), we only need to replace the estimate $\Vert W\Vert_1\lesssim a$ by $\Vert v_a\Vert\lesssim a$. Thus we have
  \begin{equation}\label{hat Xi_2}
    \pm\int_{0}^{1}\int_{t}^{1}e^{-sB}\hat{\Xi}_2e^{sB}dsdt\lesssim
    N^{-\frac{1}{6}-2\alpha}(\mathcal{N}_{ex}+1).
  \end{equation}

  \begin{flushleft}
    \textbf{Analysis of $\hat{\Xi}_3$:}
  \end{flushleft}
    Using the fact that
  \begin{equation*}
    \begin{aligned}
    e^{-sB}a_{q+k,\nu}a_{p-k,\sigma}e^{sB}=
    a_{q+k,\nu}a_{p-k,\sigma}+\int_{0}^{s}e^{-mB}
    [a_{q+k,\nu}a_{p-k,\sigma},B]e^{mB}dm
    \end{aligned}
  \end{equation*}
  We write $ e^{-sB}\hat{\Xi}_3e^{sB}=\hat{M}_1+\hat{M}_2$ with
  \begin{equation}\label{hat M split}
    \begin{aligned}
    &\hat{M}_1=\sum_{\substack{\sigma,\nu\\k,p,q}}\sum_{\varpi,l,r,s}\frac{1}{2}
    \hat{v}_k\eta_l
    \delta_{p,r}e^{-sB}a^*_{r-l,\sigma}a^*_{s+l,\varpi}\\
    &\quad\quad\quad\quad\times
    (a^*_{q,\nu}a_{s,\varpi}-a_{s,\varpi}a^*_{q,\nu})e^{sB}
    a_{q+k,\nu}a_{p-k,\sigma}\\
    &\quad\quad\quad\quad\times
    \chi_{p-k,r-l\notin B_F^\sigma}\chi_{p\in B_F^\sigma}\chi_{q+k\notin B_F^\nu}
    \chi_{q\in B_F^\nu}\chi_{s+l\notin B^\varpi_F}\chi_{s\in B^\varpi_F}+h.c.\\
     &\hat{M}_2=\sum_{\substack{\sigma,\nu\\k,p,q}}\sum_{\varpi,l,r,s}\frac{1}{2}
    \hat{v}_k\eta_l
    \delta_{p,r}\int_{0}^{s}e^{-sB}a^*_{r-l,\sigma}a^*_{s+l,\varpi}\\
    &\quad\quad\quad\quad\times
    (a^*_{q,\nu}a_{s,\varpi}-a_{s,\varpi}a^*_{q,\nu})e^{(s-m)B}
    [a_{q+k,\nu}a_{p-k,\sigma},B]e^{mB}dm\\
    &\quad\quad\quad\quad\times
    \chi_{p-k,r-l\notin B_F^\sigma}\chi_{p\in B_F^\sigma}\chi_{q+k\notin B_F^\nu}
    \chi_{q\in B_F^\nu}\chi_{s+l\notin B^\varpi_F}\chi_{s\in B^\varpi_F}+h.c.
    \end{aligned}
  \end{equation}
  The estimate to $\hat{M}_2$ is almost the same as (\ref{M_2 bound}), we merely need to replace the estimate $\Vert W\Vert_1\lesssim a$ by $\Vert v_a\Vert\lesssim a$. Thus we have
  \begin{equation}\label{hat M_2 bound}
    \begin{aligned}
    \pm \hat{M}_2&\lesssim N^{-\alpha-\gamma}\mathcal{K}_s+N^{-\frac{3}{2}\alpha}\mathcal{N}_{ex}
    +N^{\frac{1}{3}-\frac{\alpha}{2}}\\
    &+N^{-\frac{5}{6}-\frac{5}{2}\alpha}(\mathcal{N}_{ex}+1)^2.
    \end{aligned}
  \end{equation}
  for some $0<\gamma<\alpha<\frac{1}{6}$.
  \par For the estimate of $\hat{M}_1$, we rewrite it in the position space by $\hat{M}_{1}=\hat{M}_{11}+\hat{M}_{12}$ with
  \begin{equation}\label{hat M_1j}
    \begin{aligned}
    &\hat{M}_{11}=\sum_{\sigma,\nu,\varpi}\frac{1}{2}\int_{\Lambda^4}v_a(x_1-x_3)\eta(x_2-x_4)
    \Big(\sum_{q_1\in B_F^\sigma}e^{-iq(x_2-x_3)}\Big)\\
    &\quad\quad\times e^{-sB}a^*(h_{x_2,\sigma})a^*(H_{x_4,\varpi}[\delta])
    (a^*(g_{x_1,\nu})a(g_{x_4,\varpi})-a(g_{x_4,\varpi})a^*(g_{x_1,\nu}))\\
    &\quad\quad\times e^{sB}a(h_{x_1,\nu})a(h_{x_3,\sigma})dx_1dx_2dx_3dx_4+h.c.\\
    &\hat{M}_{12}=\sum_{\sigma,\nu,\varpi}\frac{1}{2}\int_{\Lambda^4}v_a(x_1-x_3)\eta(x_2-x_4)
    \Big(\sum_{q_1\in B_F^\sigma}e^{-iq(x_2-x_3)}\Big)\\
    &\quad\quad\times e^{-sB}a^*(h_{x_2,\sigma})a^*(L_{x_4,\varpi}[\delta])
    (a^*(g_{x_1,\nu})a(g_{x_4,\varpi})-a(g_{x_4,\varpi})a^*(g_{x_1,\nu}))\\
    &\quad\quad\times e^{sB}a(h_{x_1,\nu})a(h_{x_3,\sigma})dx_1dx_2dx_3dx_4+h.c.
    \end{aligned}
  \end{equation}
  where we choose $\delta=\frac{1}{3}+\gamma$. Using (\ref{ineqn N_h delta>1/3}), (\ref{jjbound}) and (\ref{V_4position}), we bound
  \begin{equation}\label{hat M_1j bound}
    \begin{aligned}
    \pm\hat{M}_{11}&\lesssim N^{\frac{3}{2}}\Vert v_a\Vert_1^{\frac{1}{2}}\Vert\eta\Vert_2
    \big(\theta_1\mathcal{V}_4+\theta_1^{-1}\mathcal{N}_{h}[\delta_2](\mathcal{N}_{ex}+1)
    +\theta_1^{-1}\ell(\mathcal{N}_{ex}+1)^2\big)\\
    &\lesssim N^{-\gamma}\mathcal{V}_4+N^{-\alpha-\gamma}\mathcal{K}_s
    +N^{-\frac{2}{3}-2\alpha+\gamma}(\mathcal{N}_{ex}+1)^2\\
    \pm\hat{M}_{12}&\lesssim N^{2+\frac{3}{2}\gamma}\Vert v_a\Vert_1^{\frac{1}{2}}\Vert\eta\Vert_1
    \big(\theta_2\mathcal{V}_4+\theta_2^{-1}(\mathcal{N}_{ex}+1)\big)\\
    &\lesssim N^{-\gamma}\mathcal{V}_4+N^{-\frac{1}{3}+4\gamma-4\alpha}(\mathcal{N}_{ex}+1)
    \end{aligned}
  \end{equation}
  where we choose $\theta_1=N^{\frac{1}{6}+\frac{1}{2}\alpha-\gamma}$ and $\theta_2=N^{\frac{1}{6}-\frac{5}{2}\gamma+2\alpha}$. We collect (\ref{hat M_1j bound}) and (\ref{hat M_2 bound}) to reach
  \begin{equation}\label{hat Xi_3}
  \begin{aligned}
    \pm\int_{0}^{1}\int_{t}^{1}e^{-sB}\hat{\Xi}_3e^{sB}dsdt&\lesssim
    N^{-\gamma}\mathcal{V}_4+N^{-\alpha-\gamma}\mathcal{K}_s
    +N^{-\frac{3}{2}\alpha}(\mathcal{N}_{ex}+1)\\
    &
    +N^{-\frac{2}{3}-2\alpha+\gamma}(\mathcal{N}_{ex}+1)^2+N^{\frac{1}{3}-\frac{\alpha}{2}}
  \end{aligned}
  \end{equation}
  for some $0<\gamma<\alpha<\frac{1}{6}$.
  \begin{flushleft}
    \textbf{Conclusion of Lemma \ref{cal com V_21 lemma}:}
  \end{flushleft}
  \par With (\ref{hat Xi_1}), (\ref{hat E_Xi_1}), (\ref{hat Xi_2}) and (\ref{hat Xi_3}), we reach (\ref{cal com V_21}) and $\mathcal{E}_{\mathcal{V}_{21}}$ satisfies the bound (\ref{est E_V_21}).
  \par For the estimate of (\ref{control Gamma}), we collect (\ref{eqn of eta_p rewrt}), (\ref{define V_21' and Omega}), (\ref{define Gamma}), (\ref{[K,A]}) and (\ref{[V_4,B]}) to reach
  \begin{equation}\label{Gamma}
    \Gamma=\Theta_d+\Theta_r
  \end{equation}
  where $\Theta_d$ and $\Theta_r$ are defined in (\ref{Theta}). From (\ref{Theta d bound true}) and (\ref{Theta_r bound}), we have
  \begin{equation}\label{Gamma bound}
    \pm\Gamma\lesssim N^{-\frac{2}{3}-\frac{\alpha}{2}}\big(\mathcal{V}_4+
    (\mathcal{N}_{ex}+1)^2\big)+N^{\frac{1}{3}-2\alpha}.
  \end{equation}
  By Lemmas \ref{control eB on N_ex lemma} and \ref{corollary control eB on V_4}, we have
  \begin{equation}\label{Gamma bound true}
    \pm\int_{0}^{1}e^{-tB}\Gamma e^{tB}dt\lesssim N^{-\frac{2}{3}-\frac{\alpha}{2}}\big(\mathcal{V}_4+
    (\mathcal{N}_{ex}+1)^2\big)+N^{\frac{1}{3}-\frac{\alpha}{2}}.
  \end{equation}
  Therefore we conclude (\ref{est E_V_21}).
\end{proof}

\begin{proof}[Proof of Proposition \ref{p1}]
  \par Proposition \ref{p1} is deduced by combining Lemmas \ref{cal eB on V_0,1,22,23 lemma}-\ref{cal com V_21 lemma} and equation (\ref{define G_N}).
\end{proof}

\section{Cubic Renormalization}\label{cub}
\par In this section, we analyse the excitation Hamiltonian $\mathcal{J}_N$ defined in (\ref{define J_N 0}) and prove Proposition \ref{p2}. Recall (\ref{define B' 0}-\ref{A^nu_F,kappa 0}),
\begin{equation}\label{define B'}
  B^\prime=A^\prime-{A^\prime}^*
\end{equation}
with
\begin{equation}\label{define A'}
  A^\prime=\sum_{k,p,q,\sigma,\nu}
  \eta_ka^*_{p-k,\sigma}a^*_{q+k,\nu}a_{q,\nu}a_{p,\sigma}\chi_{p-k\notin B^{\sigma}_F}\chi_{q+k\notin B^{\nu}_F}\chi_{q\in A^\nu_{F,\kappa}}
  \chi_{p\in B^{\sigma}_F},
\end{equation}
and
\begin{equation}\label{A^nu_F,kappa}
  A_{F,\kappa}^\nu=\{k\in2\pi\mathbb{Z}^3,\,k_F^\nu<\vert k\vert\leq k_F^\nu+N^\kappa\}
\end{equation}
where $-\frac{11}{48}<\kappa<0$ is a cut-off parameter to be determined. Here we introduce $\kappa$ due to technical reasons, it helps us separate the energy exchanges between different momentum levels. Using (\ref{define G_N p1}) and Newton-Leibniz formula,  we rewrite $\mathcal{J}_N$ by
\begin{equation}\label{define J_N}
  \begin{aligned}
  \mathcal{J}_N\coloneqq e^{-B^\prime}\mathcal{G}_Ne^{B^\prime}=&C_{\mathcal{G}_N}+\mathcal{K}
  +\mathcal{V}_4+e^{-B^\prime}\big(Q_{\mathcal{G}_N}+\mathcal{V}_{21}^\prime
  +\Omega+\mathcal{E}_{\mathcal{G}_N}\big)e^{B^\prime}\\
  &+\int_{0}^{1}e^{-tB\prime}\Gamma^\prime e^{tB^\prime}dt
  +\int_{0}^{1}\int_{t}^{1}e^{-sB^\prime}[\mathcal{V}_3,B^\prime]e^{sB^\prime}dsdt.
  \end{aligned}
\end{equation}
with
\begin{equation}\label{define Gamma'}
  \Gamma^\prime=[\mathcal{K}+\mathcal{V}_4,B^\prime]+\mathcal{V}_3.
\end{equation}
We prove Proposition \ref{p2} by analysing each term on the right hand side of (\ref{define J_N}) up to some error terms that may be discarded in the limit of large $N$. We collect corresponding results and detailed proofs in Lemmas \ref{cal eB' on Q_G_N and E_G_N}-\ref{cal [V_3,B'] lemma}.
\par To control some of these error terms that may not contribute to the ground state energy up to the second order, we first prove Lemma \ref{control eB' on N_ex lemma} and Corollary \ref{control eB' on K,V_4} which control the action of $e^{B^\prime}$ on $\mathcal{N}_{ex}$, $\mathcal{K}_s$ and $\mathcal{V}_4$, and are fairly useful in the up-coming discussion. The next lemma, Lemma \ref{control eB' on N_ex lemma}, collects estimates that control the action of $e^{B^\prime}$ on the particle number operator defined in Section \ref{number of p ops}.
\begin{lemma}\label{control eB' on N_ex lemma}
   Let $\ell$ be chosen as in (\ref{choose l}). Then for $n\in\frac{1}{2}\mathbb{Z}$, $\vert t\vert\leq1$, we have
  \begin{align}
    e^{-tB^\prime}(\mathcal{N}_{ex}+1)^ne^{tB^\prime} & \lesssim (\mathcal{N}_{ex}+1)^n \label{control eB' on N_ex}\\
    \pm\big(e^{-tB^\prime}(\mathcal{N}_{ex}+1)^ne^{tB^\prime}
    -(\mathcal{N}_{ex}+1)^n\big) & \lesssim
    N^{-\frac{1}{3}+\frac{\kappa}{2}-\frac{\alpha}{2}}
    (\mathcal{N}_{ex}+1)^n\label{control diff eB' on N_ex}
  \end{align}
  (\ref{control eB' on     N_ex}) and (\ref{control diff eB' on N_ex}) still hold if we replace the $\mathcal{N}_{ex}$ on the left hand side by $\mathcal{N}_{ex,\sigma}$.
\end{lemma}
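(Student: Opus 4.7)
The plan is to mirror the structure of the proof of Lemma \ref{control eB on N_ex lemma}, with one critical difference coming from the extra cut-off $\chi_{q\in A^\nu_{F,\kappa}}$ in the definition of $A'$. First I would compute the commutator $[\mathcal{N}_{ex},A']$. By the same application of Lemma \ref{lem commutator} that gave $[\mathcal{N}_{ex},A]=2A$, one obtains $[\mathcal{N}_{ex},A']=2A'$ (the restriction on $q$ is preserved by the commutator since $q\in A^\nu_{F,\kappa}\subset B^\nu_F$), and hence $[\mathcal{N}_{ex},B']=A'+{A'}^*$.

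Next I would bound $|\langle A'\psi,\psi\rangle|$ for $\psi\in\mathcal{H}^{\wedge N}(\{N_{\varsigma_i}\})$. Switching to position space, introduce the truncated field $\tilde{g}_{y,\nu}(z)=\sum_{q\in A^\nu_{F,\kappa}}e^{iqy}f_{q,\nu}(z)$, so that
\begin{equation*}
A'=\sum_{\sigma,\nu}\int_{\Lambda^2}\eta(x-y)a^*(h_{x,\sigma})a^*(h_{y,\nu})a(\tilde{g}_{y,\nu})a(g_{x,\sigma})\,dxdy.
\end{equation*}
By (\ref{ineqn N_l and N_s}) with $\delta_2=\kappa>-\tfrac{11}{48}$, we have $\Vert\tilde{g}_{y,\nu}\Vert_2^2\leq\#A^\nu_{F,\kappa}\lesssim N^{2/3+\kappa}$. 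Arguing as in (\ref{Lemma7.1 1}) but inserting $(\mathcal{N}_{ex}+3)^{\pm 1/2}$ and using Cauchy--Schwarz, one arrives at
\begin{equation*}
|\langle A'\psi,\psi\rangle|\lesssim \bigl(N\cdot\#A^\nu_{F,\kappa}\bigr)^{1/2}\Vert\eta\Vert_2\,\langle(\mathcal{N}_{ex}+1)\psi,\psi\rangle\lesssim N^{5/6+\kappa/2}\Vert\eta\Vert_2\,\langle(\mathcal{N}_{ex}+1)\psi,\psi\rangle.
\end{equation*}
Combined with $\Vert\eta\Vert_2\lesssim a\ell^{1/2}\ell^{1/2}\cdot a^{-1/2}\cdot\ldots = N^{-7/6-\alpha/2}$ from (\ref{est of eta and eta_perp}) and $\ell=N^{-1/3-\alpha}$, this gives
\begin{equation*}
|\langle A'\psi,\psi\rangle|\lesssim N^{-1/3+\kappa/2-\alpha/2}\,\langle(\mathcal{N}_{ex}+1)\psi,\psi\rangle.
\end{equation*}

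With this key estimate in hand, the remaining steps are routine. Applying Gronwall's inequality to $\frac{d}{dt}\langle e^{-tB'}(\mathcal{N}_{ex}+1)e^{tB'}\psi,\psi\rangle=\langle e^{-tB'}[\mathcal{N}_{ex},B']e^{tB'}\psi,\psi\rangle$ yields (\ref{control eB' on N_ex}) for $n=1$, and then Newton--Leibniz together with the $n=1$ bound produces the difference estimate (\ref{control diff eB' on N_ex}) for $n=1$. To extend to arbitrary $n\in\tfrac{1}{2}\mathbb{Z}$, I would follow the induction scheme used in \cite{me} for the analogous step in Lemma \ref{control eB on N_ex lemma}, bounding $[(\mathcal{N}_{ex}+1)^n,B']$ by repeatedly commuting factors of $(\mathcal{N}_{ex}+1)^{1/2}$ through $A'$ and ${A'}^*$, each commutation producing a controlled shift thanks to (\ref{control eB' on N_ex}) applied inductively. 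Replacing $\mathcal{N}_{ex}$ by $\mathcal{N}_{ex,\sigma}$ follows the same template, since $[\mathcal{N}_{ex,\sigma},A']$ only retains the summands with prescribed spin, and the same $\#A^\nu_{F,\kappa}$ bound applies.

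The only delicate point, and hence the main thing to keep straight, is the careful bookkeeping of which of the two $a$-fields in $A'$ carries the truncation $\tilde{g}_{y,\nu}$: it must be the one corresponding to $q$, not $p$, since only $q$ is restricted to the thin shell $A^\nu_{F,\kappa}$. Once this is done, everything reduces to the same Cauchy--Schwarz plus $L^2$-bound on $\eta$ template as in Lemma \ref{control eB on N_ex lemma}, with the scalar gain $N^{(2/3+\kappa)/2}/N^{1/2}=N^{-1/6+\kappa/2}$ relative to the quadratic case accounting exactly for the improvement from $\ell^{1/2}=N^{-1/6-\alpha/2}$ to $N^{-1/3+\kappa/2-\alpha/2}$.
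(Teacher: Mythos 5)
Your computational core is right and matches the paper's proof: you pass to position space, recognize the single $q$-variable as restricted to the thin shell $A^\nu_{F,\kappa}$, use $\#A^\nu_{F,\kappa}\lesssim N^{2/3+\kappa}$ from (\ref{ineqn N_l and N_s}), and run the same Cauchy--Schwarz template as in (\ref{Lemma7.1 1}) to get $|\langle A'\psi,\psi\rangle|\lesssim N^{5/6+\kappa/2}\Vert\eta\Vert_2\langle(\mathcal{N}_{ex}+1)\psi,\psi\rangle$, then conclude by Gronwall. The final bound $N^{-1/3+\kappa/2-\alpha/2}$ is correct.

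However, your commutator step contains a genuine conceptual error. You assert $A^\nu_{F,\kappa}\subset B^\nu_F$ and on that basis claim $[\mathcal{N}_{ex},A']=2A'$. This is false: by the definition (\ref{A^nu_F,kappa 0}), $A^\nu_{F,\kappa}=\{k:k_F^\nu<|k|\leq k_F^\nu+N^\kappa\}$ is a thin shell \emph{just outside} the Fermi ball, so $A^\nu_{F,\kappa}\cap B^\nu_F=\emptyset$. In $A'$ there are therefore \emph{three} operators on excited modes ($a^*_{p-k,\sigma}$, $a^*_{q+k,\nu}$, $a_{q,\nu}$) and only one on an interior mode ($a_{p,\sigma}$), which is precisely why $B'$ is the \emph{cubic} renormalization. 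The net effect on $\mathcal{N}_{ex}$ is $+1+1-1+0=+1$, giving $[\mathcal{N}_{ex},A']=A'$, not $2A'$. Note also that your two stated identities are mutually inconsistent: $[\mathcal{N}_{ex},A']=2A'$ together with $B'=A'-{A'}^*$ would give $[\mathcal{N}_{ex},B']=2(A'+{A'}^*)$, not the $A'+{A'}^*$ you then write down. The correct commutator $[\mathcal{N}_{ex},A']=A'$ does yield $[\mathcal{N}_{ex},B']=A'+{A'}^*$, so your later formula happens to be right, but for the wrong reason. This factor of $2$ is benign for the final asymptotic estimate, but the confusion about where $q$ lives is exactly the structural point that distinguishes the cubic from the quadratic renormalization, so it is worth fixing: the field you denote $\tilde g_{y,\nu}$ is a piece of $h_{y,\nu}$ (excited), not a piece of $g_{y,\nu}$ (interior). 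The paper writes it as $a(L_{y,\nu}[\kappa])$ to make this clear.
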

\begin{proof}
  \par We prove the lemma for $n=1$. For the rest of the proof, one can see \cite[Lemma 8.1]{me}. By Lemma \ref{lem commutator}, we have $[\mathcal{N}_{ex},A^\prime]=A^\prime$. Switching to the position space, we can bound
  \begin{equation}\label{A' bound}
    \begin{aligned}
    \pm A^\prime&=\pm\sum_{\sigma,\nu}\int_{\Lambda^2}
    \eta(x-y)a^*(h_{x,\sigma})a^*(h_{y,\nu})a(L_{y,\nu}[\kappa])a(g_{x,\sigma})dxdy\\
    &\lesssim \Vert\eta\Vert_2 N^{\frac{5}{6}+\frac{\kappa}{2}}(\mathcal{N}_{ex}+1)
    \lesssim N^{-\frac{1}{3}+\frac{\kappa}{2}-\frac{\alpha}{2}}(\mathcal{N}_{ex}+1),
    \end{aligned}
  \end{equation}
  where we have used the fact $\Vert a(L_{y,\nu}[\kappa])\Vert^2\lesssim N^{\frac{2}{3}+\kappa}$ from (\ref{ineqn N_l and N_s}). Thence (\ref{control eB' on N_ex}) and (\ref{control diff eB' on N_ex}) follows by Newton-Leibniz fomula and Gronwall's inequality.
\end{proof}

\par The next lemma is an important preliminary that allows us to bound the action of $e^{B^\prime}$ on the excited kinetic energy operator $\mathcal{K}_s$ defined in (\ref{K_s}) and the excited potential energy operator $\mathcal{V}_4$ defined in (\ref{split V detailed}). 
\begin{lemma}\label{cal [K,V_4,B'] lemma}
We have
  \begin{equation}\label{cal [K,B']}
    \begin{aligned}
    [\mathcal{K},B^\prime]=&\sum_{k,p,q,\sigma,\nu}
  2\vert k\vert^2\eta_k(a^*_{p-k,\sigma}a^*_{q+k,\nu}a_{q,\nu}a_{p,\sigma}+h.c.)\\
  &\quad\quad\quad\quad
  \times\chi_{p-k\notin B^{\sigma}_F}\chi_{q+k\notin B^{\nu}_F}\chi_{q\in A^\nu_{F,\kappa}}
  \chi_{p\in B^{\sigma}_F}+\mathcal{E}_{[\mathcal{K},B^\prime]}
    \end{aligned}
  \end{equation}
  where
  \begin{equation}\label{E_[K,B']}
    \pm\mathcal{E}_{[\mathcal{K},B^\prime]}\lesssim
    N^{-\frac{1}{6}-\frac{\alpha}{2}+\frac{\kappa}{2}}\mathcal{K}_s
    +N^{-\frac{1}{6}-\alpha}(\mathcal{N}_{ex}+1).
  \end{equation}
  Moreover,
  \begin{equation}\label{control [K,B']}
    \pm [\mathcal{K},B^\prime]\lesssim \mathcal{K}_s+N^{\frac{\kappa}{2}}\ln N
    (\mathcal{N}_{ex}+1).
  \end{equation}
  On the other hand,
  \begin{equation}\label{cal [V_4,B']}
  \begin{aligned}
    [\mathcal{V}_4,B^\prime]=&\sum_{k,p,q,\sigma,\nu}
 \Big(\sum_{l}\hat{v}_{k-l}\eta_l\Big)
 (a^*_{p-k,\sigma}a^*_{q+k,\nu}a_{q,\nu}a_{p,\sigma}+h.c.)\\
  &\quad\quad\quad\quad
  \times\chi_{p-k\notin B^{\sigma}_F}\chi_{q+k\notin B^{\nu}_F}\chi_{q\in A^\nu_{F,\kappa}}
  \chi_{p\in B^{\sigma}_F}+\mathcal{E}_{[\mathcal{V}_4,B^\prime]}
  \end{aligned}
  \end{equation}
  where
  \begin{equation}\label{E_[V_4,B']}
    \pm\mathcal{E}_{[\mathcal{V}_4,B^\prime]}\lesssim N^{-\gamma}\mathcal{V}_4
    +N^{-\frac{2}{3}-2{\alpha}}(\mathcal{N}_{ex}+1)
    +N^{-\frac{5}{3}+\kappa-\alpha+\gamma}(\mathcal{N}_{ex}+1)^3
  \end{equation}
  for $0<\gamma<\alpha<\frac{1}{6}$. Moreover,
  \begin{equation}\label{control [V_4,B']}
    \pm[\mathcal{V}_4,B^\prime]\lesssim \mathcal{V}_4+(\mathcal{N}_{ex}+1)
    +N^{-\frac{5}{3}+\kappa-\alpha+\gamma}(\mathcal{N}_{ex}+1)^3.
  \end{equation}
\end{lemma}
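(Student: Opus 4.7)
Both identities come from unfolding $B^\prime=A^\prime-(A^\prime)^*$ and applying the commutator formulas of Lemma \ref{lem commutator}, so $[\mathcal{K},B^\prime]$ reduces to $[\mathcal{K},A^\prime]-[\mathcal{K},(A^\prime)^*]$ and similarly for $\mathcal{V}_4$. The strategy parallels the quadratic analysis in Lemma \ref{contol V_21'Omega [K,B]} and Lemma \ref{corollary control eB on V_4}, with the crucial new feature that the outer cut-off $\chi_{q\in A^\nu_{F,\kappa}}$ confines one summation momentum to a spherical shell of thickness $N^\kappa$, whose cardinality is at most $N^{2/3+\kappa+\varepsilon}$ by (\ref{ineqn N_l and N_s}).

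\textbf{Step 1: commutator with $\mathcal{K}$.} By (\ref{quadratic commu}), $[\mathcal{K},A^\prime]$ produces the same quartic operator as $A^\prime$ with the prefactor $(|p-k|^2+|q+k|^2-|q|^2-|p|^2)\eta_k = 2|k|^2\eta_k + 2\eta_k k\cdot(q-p)$. The first piece is exactly the main term in (\ref{cal [K,B']}); the second piece has the algebraic shape of the operator $\Omega$ from (\ref{define V_21' and Omega}), now further restricted to $q\in A^\nu_{F,\kappa}$. I bound this exactly as in the proof of (\ref{est Omega}), swapping the Fermi-ball sum over $q$ for the thin-shell sum, which yields the improvement $N^{-1/3+\kappa/2}$ needed for (\ref{E_[K,B']}); the Cauchy--Schwarz split between $\mathcal{K}_s$ and $\mathcal{N}_{ex}$ is optimized as in Lemma \ref{contol V_21'Omega [K,B]}. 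For the crude bound (\ref{control [K,B']}), I use (\ref{energy est bog laplace eta}) to absorb $2|k|^2\eta_k$ against $\mathcal{K}_s$ via a position-space Cauchy--Schwarz with kernel $k_{p,x,\sigma}(z)=\sum_{q\in A^\nu_{F,\kappa}}\eta_{q-p}|q-p|^2 e^{iqx}f_{q,\sigma}(z)$ of norm controlled by $\#A^\nu_{F,\kappa}$, producing the $N^{\kappa/2}\ln N$ factor.

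\textbf{Step 2: commutator with $\mathcal{V}_4$.} Using (\ref{quartic commu}), $[\mathcal{V}_4,A^\prime]$ decomposes into a quartic part obtained when the two annihilation momenta of $\mathcal{V}_4$ contract with the two creation momenta of $A^\prime$, and a sextic remainder. The quartic part, after summing the internal $\mathcal{V}_4$--momentum $l$, delivers the coefficient $\sum_l \hat{v}_{k-l}\eta_l$, giving the main term in (\ref{cal [V_4,B']}). The discrepancy between the raw projections produced by Lemma \ref{lem commutator} and the declared projections $\chi_{p-k\notin B^\sigma_F}\chi_{q+k\notin B^\nu_F}$ is handled by writing $1=\chi_{p-l\notin B^\sigma_F}\chi_{q+l\notin B^\nu_F}+\text{diff}$ and estimating the difference term as in $\Theta_d$ of Lemma \ref{corollary control eB on V_4}, where the factor $\sum_{|l|\lesssim N^{1/3}}|\eta_l|\lesssim \ell^2$ is again gained. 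The sextic remainder is converted to position space, introducing the standard kernel $k_{y_1,y_2,\sigma}$ from Lemma \ref{corollary control eB on V_4}; a Cauchy--Schwarz that pairs $\mathcal{V}_4$ against $(\mathcal{N}_{ex}+1)^3$ (with a parameter $\theta=N^{-\gamma/2}$-type) combined with the factor $\#A^\nu_{F,\kappa}\lesssim N^{2/3+\kappa+\varepsilon}$ recovers the $N^{-5/3+\kappa-\alpha+\gamma}(\mathcal{N}_{ex}+1)^3$ tail of (\ref{E_[V_4,B']}). The crude bound (\ref{control [V_4,B']}) comes from combining the main term via $\sum_k\bigl|\sum_l \hat{v}_{k-l}\eta_l\bigr|^2\lesssim \|v_a(1+\eta)\|_2^2\lesssim a$ with $\mathcal{V}_4+(\mathcal{N}_{ex}+1)$.

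\textbf{Main obstacle.} The delicate point is the sextic remainder in Step 2: one must carefully split the various projection characteristics into position-space integrable kernels while preserving the cut-off $\chi_{q\in A^\nu_{F,\kappa}}$ to extract the factor $N^\kappa$, and choose the Cauchy--Schwarz weight $\mu$ so that $\mu\mathcal{V}_4+\mu^{-1}\cdot(\text{sextic})$ matches the stated error exactly. Achieving the sharp $N^{-\gamma}\mathcal{V}_4$ term in (\ref{E_[V_4,B']}) requires tracking the $N^{1/3}$ penalty from adding a third $a$-operator outside the Fermi ball against the shell-cardinality gain $N^{2/3+\kappa+\varepsilon}$ and the coefficient $L^1$-bound $\|\eta\|_1\lesssim a\ell^2$, exactly as in the parallel step in Lemma \ref{cal com V_21' lemma}; if any of these is mishandled the declared bound degrades.
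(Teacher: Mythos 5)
Your high-level plan is correct and matches the paper's route: unfold $B^\prime = A^\prime-(A^\prime)^*$, apply Lemma \ref{lem commutator}, read off the quartic main terms with coefficients $2|k|^2\eta_k$ and $\sum_l\hat v_{k-l}\eta_l$, convert the remainders to position space, and exploit the thin-shell cardinality $\#A^\nu_{F,\kappa}\lesssim N^{2/3+\kappa}$. However, there are three concrete gaps in the execution.

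First, for (\ref{control [K,B']}) the paper does \emph{not} obtain both terms $\mathcal{K}_s$ and $N^{\kappa/2}\ln N\,(\mathcal{N}_{ex}+1)$ from a single kernel. It splits the free creation momentum $\chi_{q+k\notin B^\nu_F}=\chi_{q+k\in P^\nu_{F,\delta}}+\chi_{q+k\in A^\nu_{F,\delta}}$ with the specific threshold $\delta=\tfrac13+\kappa$; the high-momentum piece yields $\mathcal{K}_s$ via (\ref{energy est bog laplace eta}) (with the kernel carrying $q_1\in B^\sigma_F$, the \emph{inner} momentum, not the shell), while the low-momentum piece yields $N^{\kappa/2}\ln N\,\mathcal{N}_{ex}$ where the $\ln N$ comes from $\|\Delta\eta\|_1\lesssim a\ln(\ell a^{-1})$, not from (\ref{energy est bog laplace eta}). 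Your proposed kernel $\sum_{q\in A^\nu_{F,\kappa}}\eta_{q-p}|q-p|^2 e^{iqx}f_{q,\sigma}(z)$ misplaces the shell restriction onto the kernel index rather than onto the annihilation operator $a(L_{y,\nu}[\kappa])$, so it would not produce the correct balance between the two terms. A similar high/low split (with $\delta=0$, and an extra division of $k(q-p)$ into the $kq$ and $kp$ pieces, giving kernels $k^{(1)}_{p,x,\sigma},k^{(2)}_{p,x,\sigma}$) is needed for the bound (\ref{E_[K,B']}) on $\Omega^\prime$; simply ``running the proof of (\ref{est Omega}) with the thin-shell sum'' does not deliver the stated exponents $N^{-1/6-\alpha/2+\kappa/2}$ and $N^{-1/6-\alpha}$ and does not produce the $(\mathcal{N}_{ex}+1)$ term instead of the constant $N^{2/3+\beta}\ell$ appearing there.

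Second, the estimate $\sum_k\bigl|\sum_l\hat v_{k-l}\eta_l\bigr|^2\lesssim\|v_a(1+\eta)\|_2^2\lesssim a$ that you invoke for the crude bound (\ref{control [V_4,B']}) is false: since $\|v_a\|_2^2\sim a^{-1}$ and $0<c\leq 1+\eta\leq1$, one has $\|v_a(1+\eta)\|_2^2\lesssim a^{-1}$, not $a$. The correct small quantity is $\sup_k|\sum_l\hat v_{k-l}\eta_l|\lesssim a$, but the paper actually bounds $\Theta_m^\prime$ directly in position space via $N^{1/2}\|v_a\|_1^{1/2}\|\eta\|_\infty\lesssim1$, which is the cleaner route. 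Third, a minor but load-bearing point: the sextic remainder from Lemma \ref{lem commutator} decomposes into three distinct terms $\Theta^\prime_{r,1},\Theta^\prime_{r,2},\Theta^\prime_{r,3}$, and only the third requires $(\mathcal{N}_{ex}+1)^3$ (the other two close with $(\mathcal{N}_{ex}+1)^2$); treating them all uniformly with a single ``$\theta=N^{-\gamma/2}$-type'' Cauchy--Schwarz does not reproduce the paper's choice $\theta=N^{-\gamma}$ nor the stated tails.
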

\begin{proof}
\par By Lemma \ref{lem commutator}, we reach (\ref{cal [V_4,B']}) with
\begin{equation}\label{cal [V_4,B']define}
  \begin{aligned}
  [\mathcal{V}_4,B^\prime]\coloneqq\Theta_m^\prime+\Theta_d^\prime+\sum_{j=1}^{3}
  \Theta_{r,j}^\prime
  \end{aligned}
\end{equation}
where we let $\mathcal{E}_{[\mathcal{V}_4,B^\prime]}\coloneqq\Theta_d^\prime+\sum_{j=1}^{3}
  \Theta_{r,j}^\prime$, and
  \begin{equation}\label{cal [V_4,B'] detailed}
    \begin{aligned}
    \Theta_m^\prime&=\sum_{k,p,q,\sigma,\nu}
 \Big(\sum_{l}\hat{v}_{k-l}\eta_l\Big)
 (a^*_{p-k,\sigma}a^*_{q+k,\nu}a_{q,\nu}a_{p,\sigma}+h.c.)\\
  &\quad\quad\quad\quad
  \times\chi_{p-k\notin B^{\sigma}_F}\chi_{q+k\notin B^{\nu}_F}\chi_{q\in A^\nu_{F,\kappa}}
  \chi_{p\in B^{\sigma}_F}\\
   \Theta_d^\prime&=\sum_{l,k,p,q,\sigma,\nu}
 \hat{v}_{k-l}\eta_l
 (a^*_{p-k,\sigma}a^*_{q+k,\nu}a_{q,\nu}a_{p,\sigma}+h.c.)\\
  &\quad\quad\quad\quad
  \times\chi_{p-k\notin B^{\sigma}_F}\chi_{q+k\notin B^{\nu}_F}\chi_{q\in A^\nu_{F,\kappa}}
  \chi_{p\in B^{\sigma}_F}(\chi_{p-l\notin B^{\sigma}_F}\chi_{q+l\notin B^{\nu}_F}-1)\\
  \Theta_{r,1}^\prime&=\sum_{\substack{k,p,q\\\sigma,\nu}}\sum_{\varpi,l,r,s}
  -\hat{v}_k\eta_l\delta_{p,r-l}(a^*_{p-k,\sigma}a^*_{q+k,\nu}a^*_{s+l,\varpi}
  a_{q,\nu}a_{s,\varpi}a_{r,\sigma}+h.c.)\\
  &\quad\quad\quad\quad\times
  \chi_{p-k,p\notin B^{\sigma}_F}\chi_{q+k,q\notin B^{\nu}_F}\chi_{r\in B_F^\sigma}
  \chi_{s+l\notin B_F^\varpi}\chi_{s\in A^\varpi_{F,\kappa}}\\
  \Theta_{r,2}^\prime&=\sum_{\substack{k,p,q\\\sigma,\nu}}\sum_{\tau,l,r,s}
  \hat{v}_k\eta_l\delta_{p,s+l}(a^*_{p-k,\sigma}a^*_{q+k,\nu}a^*_{r-l,\tau}
  a_{q,\nu}a_{s,\sigma}a_{r,\tau}+h.c.)\\
  &\quad\quad\quad\quad\times
  \chi_{p-k,p\notin B^{\sigma}_F}\chi_{q+k,q\notin B^{\nu}_F}\chi_{r-l\notin B_F^\tau}
  \chi_{r\in B_F^\tau}
  \chi_{s\in A^\sigma_{F,\kappa}}\\
  \Theta_{r,3}^\prime&=\sum_{\substack{k,p,q\\\sigma,\nu}}\sum_{\tau,l,r,s}
  -\hat{v}_k\eta_l\delta_{p-k,s}(a^*_{r-l,\tau}a^*_{s+l,\sigma}a^*_{q+k,\nu}
  a_{q,\nu}a_{p,\sigma}a_{r,\tau}+h.c.)\\
  &\quad\quad\quad\quad\times \chi_{p-k\in A^\sigma_{F,\kappa}}
  \chi_{p\notin B^{\sigma}_F}\chi_{q+k,q\notin B^{\nu}_F}\chi_{r-l\notin B_F^\tau}
  \chi_{r\in B_F^\tau}\chi_{s+l\notin B_F^\sigma}
    \end{aligned}
  \end{equation}
  Switching to the position space, we bound, from (\ref{V_4position}), that
  \begin{equation}\label{Theta'_m bound}
    \begin{aligned}
    \pm\Theta_m^\prime&=\pm\sum_{\sigma,\nu}\int_{\Lambda^2}
    v_a(x-y)\eta(x-y)a^*(h_{x,\sigma})a^*(h_{y,\nu})\\
    &\quad\quad\quad\quad\times a(L_{y,\nu}[\kappa])a(g_{x,\sigma})dxdy
    +h.c.\\
    &\lesssim N^{\frac{1}{2}}\Vert v_a\Vert_1^{\frac{1}{2}}\Vert\eta\Vert_\infty
    (\mathcal{V}_4+\mathcal{N}_{l}[\kappa])\lesssim\mathcal{V}_4+\mathcal{N}_{ex}.
    \end{aligned}
  \end{equation}
  For the second term (the difference part), notice that for any $p\in B_F^\sigma$ and $q\in A_{F,\kappa}^\nu$, we have $\vert p\vert, \vert q\vert\lesssim N^{\frac{1}{3}}$. Therefore $(\chi_{p-l\notin B^{\sigma}_F}\chi_{q+l\notin B^{\nu}_F}-1)\neq0$ implies $\vert l\vert\lesssim N^{\frac{1}{3}}$. We bound it, as in (\ref{Theta d bound true}), that
  \begin{equation}\label{Theta' d bound}
    \pm\Theta_{d}^\prime
    \lesssim\sum_{\sigma,\nu}\sum_{\vert l\vert\lesssim N^{\frac{1}{3}}}
    \vert\eta_l\vert(\mathcal{V}_4+\mathcal{N}_{ex})
    \lesssim N^{-\frac{2}{3}-2\alpha}(\mathcal{V}_4+\mathcal{N}_{ex}).
  \end{equation}
  To bound $\Theta_{r,1}^\prime$ and $\Theta_{r,2}^\prime$, we adopt the notation
 \begin{equation*}
 \begin{aligned}
    k_{y_1,y_2,\sigma}(z)&\coloneqq\sum_{q\in B^{\sigma}_F}\Big(
    \sum_{p\notin B_F^\sigma}\int_{\Lambda}\eta(x-y_1)
    e^{iqx}e^{ip(y_2-x)}dx\Big)f_{q,\sigma}(z)\\
     k_{y_1,y_2,\sigma}^{(\kappa)}(z)&\coloneqq\sum_{q\in A^{\sigma}_{F,\kappa}}\Big(
    \sum_{p\notin B_F^\sigma}\int_{\Lambda}\eta(x-y_1)
    e^{iqx}e^{ip(y_2-x)}dx\Big)f_{q,\sigma}(z).
 \end{aligned}
  \end{equation*}
  One can check that for $j=1,2$
  \begin{equation*}
  \begin{aligned}
    \int_{\Lambda}\Vert k_{y_1,y_2,\sigma}\Vert^2_2dy_j\leq\sum_{q\in B_F^\sigma}
    \Vert\eta\Vert_2^2,\quad
    \int_{\Lambda}\Vert k_{y_1,y_2,\sigma}^{(\kappa)}\Vert^2_2dy_j\leq
    \sum_{q\in A^{\sigma}_{F,\kappa}}
    \Vert\eta\Vert_2^2
  \end{aligned}
  \end{equation*}
  Thus we have
  \begin{equation}\label{Theta'_r,1 and 2}
    \begin{aligned}
    &\Theta_{r,1}^\prime=-\sum_{\sigma,\nu,\varpi}\int_{\Lambda^3}v_{a}(x_1-x_3)
    a^*(h_{x_1,\sigma})a^*(h_{x_3,\nu})a^*(h_{x_4,\varpi})\\
    &\quad\quad\quad\quad
    \times a(h_{x_3,\nu})a(L_{x_4,\varpi}[\kappa])a(k_{x_4,x_1,\sigma})
    dx_1dx_3dx_4+h.c.\\
    &\Theta_{r,2}^\prime=\sum_{\sigma,\nu,\tau}\int_{\Lambda^3}v_{a}(x_1-x_3)
    a^*(h_{x_1,\sigma})a^*(h_{x_3,\nu})a^*(h_{x_2,\tau})\\
    &\quad\quad\quad\quad
    \times a(h_{x_3,\nu})a(k^{(\kappa)}_{x_2,x_1,\sigma})a(g_{x_2,\tau})
    dx_1dx_2dx_3+h.c.
    \end{aligned}
  \end{equation}
  Hence we bound them by
  \begin{equation}\label{Theta'_r,1 and 2 bound}
  \begin{aligned}
    \pm\Theta_{r,1}^\prime,\pm\Theta_{r,2}^\prime
    &\lesssim \theta\mathcal{V}_4+\theta^{-1}N^{\frac{5}{3}+\kappa}\Vert v_a\Vert_1
    \Vert\eta\Vert_2^2(\mathcal{N}_{ex}+1)^2\\
    &\lesssim
    N^{-\gamma}\mathcal{V}_4+N^{-\frac{5}{3}+\kappa-\alpha+\gamma}(\mathcal{N}_{ex}+1)^2,
  \end{aligned}
  \end{equation}
  where we choose $\theta=N^{-\gamma}$ for $0<\gamma<\alpha<\frac{1}{6}$. For $\Theta_{r,3}^\prime$, we have
  \begin{equation}\label{Theta'_r 3}
    \begin{aligned}
    \Theta_{r,3}^\prime&=-\sum_{\sigma,\nu,\tau}\int_{\Lambda^4}v_a(x_1-x_3)\eta(x_2-x_4)
    \Big(\sum_{p_1\in A^\sigma_{F,\kappa}}e^{-ip_1(x_4-x_1)}\Big)
    a^*(h_{x_2,\tau})a^*(h_{x_4,\sigma})\\
    &\quad\quad\quad\quad\times
    a^*(h_{x_3,\nu})a(h_{x_3,\nu})a(h_{x_1,\sigma})a(g_{x_2,\tau})dx_1dx_2dx_3dx_4+h.c.
    \end{aligned}
  \end{equation}
  and we bound it by
   \begin{equation}\label{Theta'_r,3 bound}
  \begin{aligned}
    \pm\Theta_{r,3}^\prime
    &\lesssim \theta\mathcal{V}_4+\theta^{-1}N^{\frac{5}{3}+\kappa}\Vert v_a\Vert_1
    \Vert\eta\Vert_2^2(\mathcal{N}_{ex}+1)^3\\
    &\lesssim
    N^{-\gamma}\mathcal{V}_4+N^{-\frac{5}{3}+\kappa-\alpha+\gamma}(\mathcal{N}_{ex}+1)^3,
  \end{aligned}
  \end{equation}
  where we choose $\theta=N^{-\gamma}$ for $0<\gamma<\alpha<\frac{1}{6}$. Collecting (\ref{cal [V_4,B']define}), (\ref{Theta'_m bound}), (\ref{Theta' d bound}), (\ref{Theta'_r,1 and 2 bound}) and (\ref{Theta'_r,3 bound}), we reach (\ref{E_[V_4,B']}) and (\ref{control [V_4,B']}).

  \par By Lemma \ref{lem commutator}, we reach (\ref{cal [K,B']}) with
  \begin{equation}\label{cal [K,B'] define}
    [\mathcal{K},B^\prime]\coloneqq \Omega_m^\prime+\Omega^\prime
  \end{equation}
  where we let $\mathcal{E}_{[\mathcal{K},B^\prime]}\coloneqq \Omega^\prime$, and
  \begin{equation}\label{E_[K,B'] define}
    \begin{aligned}
    \Omega_m^\prime=&\sum_{k,p,q,\sigma,\nu}
  2\vert k\vert^2\eta_k(a^*_{p-k,\sigma}a^*_{q+k,\nu}a_{q,\nu}a_{p,\sigma}+h.c.)\\
  &\quad\quad\quad\quad
  \times\chi_{p-k\notin B^{\sigma}_F}\chi_{q+k\notin B^{\nu}_F}\chi_{q\in A^\nu_{F,\kappa}}
  \chi_{p\in B^{\sigma}_F}\\
  \Omega^\prime=&\sum_{k,p,q,\sigma,\nu}
  2 k(q-p)\eta_k(a^*_{p-k,\sigma}a^*_{q+k,\nu}a_{q,\nu}a_{p,\sigma}+h.c.)\\
  &\quad\quad\quad\quad
  \times\chi_{p-k\notin B^{\sigma}_F}\chi_{q+k\notin B^{\nu}_F}\chi_{q\in A^\nu_{F,\kappa}}
  \chi_{p\in B^{\sigma}_F}
    \end{aligned}
  \end{equation}
  The bounds on $\Omega_m^\prime$ and $\Omega^\prime$ are more subtle. In fact, for the bound of $\Omega^\prime$, we first use the relation $\chi_{q+k\notin B^{\nu}_F}=\chi_{q+k\in P^{\nu}_{F,0}}+\chi_{q+k\in A^{\nu}_{F,0}}$, to rewrite it by $\Omega^\prime=\Omega^\prime_1+\Omega^\prime_2$. We adopt the notation for $p\notin B_F^\sigma$:
  \begin{equation*}
    \begin{aligned}
    k^{(1)}_{p,x,\sigma}(z)&\coloneqq\sum_{q\in B_F^\sigma}(q-p)\eta_{q-p}\nabla_x
    e^{i(q-p)x}f_{q,\sigma}(z)\\
    k^{(2)}_{p,x,\sigma}(z)&\coloneqq\sum_{q\in B_F^\sigma}(q-p)\eta_{q-p}
    e^{i(q-p)x}f_{q,\sigma}(z)
    \end{aligned}
  \end{equation*}
  and denote any universal constant by $C$. Switching to the position space, we have
  \begin{equation}\label{Omega'_1}
    \begin{aligned}
    \Omega^\prime_1&=C\sum_{\sigma,\nu,p_1}\chi_{p_1\notin B_F^\sigma}\int_{\Lambda}
    a^*_{p_1,\sigma}a^*(H_{x,\nu}[0])a(L_{x,\nu}[\kappa])a(k^{(1)}_{p_1,x,\sigma})dx+h.c.\\
    &+C\sum_{\sigma,\nu,p_1}\chi_{p_1\notin B_F^\sigma}\int_{\Lambda}
    a^*_{p_1,\sigma}a^*(H_{x,\nu}[0])a(\nabla_{x}L_{x,\nu}[\kappa])a(k^{(2)}_{p_1,x,\sigma})dx
    +h.c.\\
    &\eqqcolon \Omega^\prime_{11}+\Omega^\prime_{12}.
    \end{aligned}
  \end{equation}
  Notice that
  \begin{equation*}
    \begin{aligned}
    \Vert a(k^{(1)}_{p_1,x,\sigma})\Vert^2&=\sum_{q_1\in B_F^\sigma}\vert q_1-p_1\vert^2
    \eta_{q_1-p_1}^2\vert q_1\vert^2\lesssim N^{\frac{2}{3}}\sum_{q_1\in B_F^\sigma}\vert q_1-p_1\vert^2 \eta_{q_1-p_1}^2\\
     \Vert a(k^{(2)}_{p_1,x,\sigma})\Vert^2&=\sum_{q_1\in B_F^\sigma}\vert q_1-p_1\vert^2
    \eta_{q_1-p_1}^2
    \end{aligned}
  \end{equation*}
  For $\psi\in\mathcal{H}^{\wedge N}(\{N_{\varsigma_i}\})$, we can bound $\Omega_{11}^\prime$ and $\Omega_{12}^\prime$ similarly to (\ref{V_21,1' bound}). For $\Omega_{11}^\prime$,
  \begin{equation}\label{Omega_11' bound 1st}
    \begin{aligned}
   &\vert\langle\Omega_{11}^\prime\psi,\psi\rangle\vert\\
   &\lesssim \Big(\sum_{\sigma,\nu,p_1}\chi_{p_1\notin B_F^{\sigma}}
   \big(\vert p_1\vert^2-\vert k_F^\sigma\vert^2\big)
    \\
    &\quad\quad\int_{\Lambda}\langle a^*(H_{x,\nu}[0])
    (a(H_{x,\nu}[0])(\mathcal{N}_h[0]+1)^{-1}a_{p_1,\sigma}\psi,a_{p_1,\sigma}\psi\rangle
    dx\Big)^{\frac{1}{2}}\\
    &\times\Big(\sum_{\sigma,\nu,p_1}\chi_{p_1\notin B_F^{\sigma}}
   \big(\vert p_1\vert^2-\vert k_F^\sigma\vert^2\big)^{-1}\int_{\Lambda} \Vert (\mathcal{N}_h[0]+2)^{\frac{1}{2}}a(L_{x,\nu}[\kappa])a(k_{p_1,x,\sigma})
    \psi\Vert^2 dx\Big)^{\frac{1}{2}}
 \end{aligned}
  \end{equation}
  Since $\kappa<0$, we have $[\mathcal{N}_h[0],a(L_{x,\nu}[\kappa])]=0$, and hence
  \begin{equation}\label{Omega_11' bound 2nd}
    \begin{aligned}
    &\sum_{\sigma,\nu,p_1}\chi_{p_1\notin B_F^{\sigma}}
   \big(\vert p_1\vert^2-\vert k_F^\sigma\vert^2\big)^{-1}\int_{\Lambda} \Vert (\mathcal{N}_h[0]+2)^{\frac{1}{2}}a(L_{x,\nu}[\kappa])a(k_{p_1,x,\sigma})
    \psi\Vert^2 dx\\
    &\lesssim\sum_{\sigma,\nu,p_1}\chi_{p_1\notin B_F^{\sigma}}
   \big(\vert p_1\vert^2-\vert k_F^\sigma\vert^2\big)^{-1}\int_{\Lambda}
    \Vert a(k_{p_1,x,\sigma})\Vert^2
    \Vert (\mathcal{N}_h[0]+2)^{\frac{1}{2}}a(L_{x,\nu}[\kappa])
    \psi\Vert^2 dx\\
    &\lesssim\sum_{\sigma,\nu}\sum_{p_1\notin B_F^{\sigma},q_1\in B_F^\sigma}
    \frac{\vert q_1\vert^2\vert q_1-p_1\vert^2\eta_{q_1-p_1}^2}
    {\vert p_1\vert^2-\vert k_F^\sigma\vert^2}
    \int_{\Lambda}
    \Vert (\mathcal{N}_h[0]+2)^{\frac{1}{2}}a(L_{x,\nu}[\kappa])
    \psi\Vert^2 dx\\
    &\lesssim a^2N^{\frac{5}{3}}\ell\sum_{\nu}
    \int_{\Lambda}\big(\Vert a(L_{x,\nu}[\kappa])
    \mathcal{N}_{h}[0]^\frac{1}{2}\psi\Vert^2+\Vert a(L_{x,\nu}[\kappa])\psi\Vert^2\big)dx\\
    &\lesssim a^2N^{\frac{5}{3}}\ell\big\langle\big(N^{\frac{1}{3}+\kappa}\mathcal{K}_s
    +\mathcal{N}_l[\kappa]\big)\psi,\psi\rangle\lesssim N^{-\frac{1}{3}-\alpha+\kappa}
    \langle\mathcal{K}_s\psi,\psi\rangle
    \end{aligned}
  \end{equation}
   where we have used (\ref{energy est bog}) in Lemma \ref{energy est bog lem}. Combining (\ref{Omega_11' bound 1st}) and (\ref{Omega_11' bound 2nd}), we conclude that
   \begin{equation}\label{Omega_11' bound}
     \pm\Omega_{11}^\prime\lesssim N^{-\frac{1}{6}-\frac{\alpha}{2}+\frac{\kappa}{2}}
    \mathcal{K}_s.
   \end{equation}
   Notice that $q\in A^\nu_{F,\kappa}$ also implies $\vert q\vert\lesssim N^{\frac{1}{3}}$. Therefore
  \begin{equation*}
    \Vert a(\nabla_{x}L_{x,\nu}[\kappa])\Vert^2\lesssim N^{\frac{2}{3}}\#A^\nu_{F,\kappa}
    \lesssim N^{\frac{4}{3}+\kappa}.
  \end{equation*}
  Moreover,
  \begin{equation*}
    \int_{\Lambda}a^*(\nabla_{x}L_{x,\nu}[\kappa])a(\nabla_{x}L_{x,\nu}[\kappa])dx
    =\sum_{q\in A^\nu_{F,\kappa}}\vert q\vert^2 a_{q,\nu}^*a_{q,\nu}\lesssim N^{\frac{2}{3}}\mathcal{N}_{ex}.
  \end{equation*}
  Therefore, $\Omega_{12}^\prime$ also satisfies the bound (\ref{Omega_11' bound}). Thence we attain
  \begin{equation}\label{Omega_1' bound}
    \begin{aligned}
    \pm\Omega^\prime_1\lesssim N^{-\frac{1}{6}-\frac{\alpha}{2}+\frac{\kappa}{2}}
    \mathcal{K}_s.
    \end{aligned}
  \end{equation}
  For $\Omega_2^\prime$, we write it in the position space
  \begin{equation}\label{Omega_2'pos}
    \begin{aligned}
    \Omega_2^\prime&=C\sum_{\sigma,\nu}\int_{\Lambda^2}\nabla\eta(x-y)
    a^*(h_{x,\sigma})a^*(L_{y,\nu}[0])a(L_{y,\nu}[\kappa])a(\nabla_xg_{x,\sigma})dxdy\\
    &+C\sum_{\sigma,\nu}\int_{\Lambda^2}\nabla\eta(x-y)
    a^*(h_{x,\sigma})a^*(L_{y,\nu}[0])a(\nabla_yL_{y,\nu}[\kappa])a(g_{x,\sigma})dxdy
    +h.c.
    \end{aligned}
  \end{equation}
  and bound it directly
  \begin{equation}\label{Omega_2' bound}
    \pm\Omega_2^\prime\lesssim N^{\frac{1}{3}}\Vert\nabla\eta\Vert_1N^{\frac{5}{6}}
    (\mathcal{N}_{ex}+\mathcal{N}_l[\kappa])\lesssim N^{-\frac{1}{6}-\alpha}\mathcal{N}_{ex}.
  \end{equation}
  Therefore, combining (\ref{Omega_1' bound}) and (\ref{Omega_2' bound}), we conclude that
  \begin{equation}\label{Omega' bound}
    \pm\Omega^\prime\lesssim N^{-\frac{1}{6}-\frac{\alpha}{2}+\frac{\kappa}{2}}
    \mathcal{K}_s+N^{-\frac{1}{6}-\alpha}\mathcal{N}_{ex}
  \end{equation}
  which is (\ref{E_[K,B']}).
  \par For the bound of $\Omega_m^\prime$, we also use the relation $\chi_{q+k\notin B^{\nu}_F}=\chi_{q+k\in P^{\nu}_{F,\delta}}+\chi_{q+k\in A^{\nu}_{F,\delta}}$ with $\delta=\frac{1}{3}+\kappa$ (recall that $-\frac{11}{48}<\kappa<0$), to rewrite it by $\Omega_m^\prime=\Omega^\prime_{m,1}+\Omega^\prime_{m,2}$. Similar to (\ref{Omega_11' bound 1st}) and (\ref{Omega_11' bound 2nd}), we can bound $\Omega_{m,1}^\prime$, using (\ref{energy est bog laplace eta}) in Lemma \ref{energy est bog lem}, by
  \begin{equation}\label{Omega_m 1 bound}
    \pm\Omega_{m,1}^\prime\lesssim \mathcal{K}_s.
  \end{equation}
  Using (\ref{L1 grad eta and laplace eta}), we can bound $\Omega_{m,2}^\prime$ similar to (\ref{Omega_2' bound}):
  \begin{equation}\label{Omega_m 2 bound}
    \pm\Omega_{m,2}^\prime\lesssim N^{1+\frac{\kappa}{2}}\Vert\Delta\eta\Vert_1\mathcal{N}_{ex}
     \lesssim N^{\frac{\kappa}{2}}\ln N\mathcal{N}_{ex}.
  \end{equation}
Combining (\ref{cal [K,B'] define}), (\ref{Omega' bound}), (\ref{Omega_m 1 bound}) and (\ref{Omega_m 2 bound}), we reach (\ref{control [K,B']}).
\end{proof}

\par Combining Lemmas \ref{cal [K,V_4,B'] lemma} and \ref{control eB' on N_ex lemma}, we deduce the following corollary via Gronwall's inequality.
\begin{corollary}\label{control eB' on K,V_4}
For $\vert t\vert\leq 1$, we have
  \begin{align}
  e^{-tB^\prime}\mathcal{K}_se^{tB^\prime} & \lesssim \mathcal{K}_s+N^{\frac{\kappa}{2}}\ln N
    (\mathcal{N}_{ex}+1) \label{control eB' on K_s}\\
  e^{-tB^\prime}\mathcal{V}_4e^{tB^\prime} & \lesssim  \mathcal{V}_4+(\mathcal{N}_{ex}+1)
    +N^{-\frac{5}{3}+\kappa-\alpha+\gamma}(\mathcal{N}_{ex}+1)^3\label{control eB' on V_4}
  \end{align}
\end{corollary}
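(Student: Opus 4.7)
The plan is to derive both bounds by a standard Gronwall argument applied to the derivative of the conjugated operator, using the commutator estimates (\ref{control [K,B']}) and (\ref{control [V_4,B']}) together with the particle-number bound (\ref{control eB' on N_ex}). Concretely, for any operator $\mathcal{O}$ we have $\frac{d}{dt}(e^{-tB^\prime}\mathcal{O}e^{tB^\prime})=e^{-tB^\prime}[\mathcal{O},B^\prime]e^{tB^\prime}$, and since $\mathcal{K}_s$ and $\mathcal{K}$ differ only by the scalar $\sum_{\sigma,k\in B_F^\sigma}|k|^2$, we also have $[\mathcal{K}_s,B^\prime]=[\mathcal{K},B^\prime]$.

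For the first inequality, I would start from
\begin{equation*}
\frac{d}{dt}\langle e^{-tB^\prime}\mathcal{K}_s e^{tB^\prime}\psi,\psi\rangle
=\langle e^{-tB^\prime}[\mathcal{K},B^\prime]e^{tB^\prime}\psi,\psi\rangle
\end{equation*}
for $\psi\in\mathcal{H}^{\wedge N}(\{N_{\varsigma_i}\})$, then apply (\ref{control [K,B']}) from Lemma \ref{cal [K,V_4,B'] lemma} to get
\begin{equation*}
\pm\frac{d}{dt}\langle e^{-tB^\prime}\mathcal{K}_s e^{tB^\prime}\psi,\psi\rangle
\lesssim \langle e^{-tB^\prime}\mathcal{K}_se^{tB^\prime}\psi,\psi\rangle
+N^{\frac{\kappa}{2}}\ln N\,\langle e^{-tB^\prime}(\mathcal{N}_{ex}+1)e^{tB^\prime}\psi,\psi\rangle.
\end{equation*}
Invoking (\ref{control eB' on N_ex}) with $n=1$ from Lemma \ref{control eB' on N_ex lemma} bounds the second term by $N^{\kappa/2}\ln N\langle(\mathcal{N}_{ex}+1)\psi,\psi\rangle$, and Gronwall's inequality then yields (\ref{control eB' on K_s}).

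The second inequality is analogous: starting from $\frac{d}{dt}\langle e^{-tB^\prime}\mathcal{V}_4 e^{tB^\prime}\psi,\psi\rangle=\langle e^{-tB^\prime}[\mathcal{V}_4,B^\prime]e^{tB^\prime}\psi,\psi\rangle$ and applying (\ref{control [V_4,B']}), one obtains a bound of the form
\begin{equation*}
\pm\frac{d}{dt}\langle e^{-tB^\prime}\mathcal{V}_4 e^{tB^\prime}\psi,\psi\rangle
\lesssim \langle e^{-tB^\prime}\mathcal{V}_4 e^{tB^\prime}\psi,\psi\rangle
+\langle e^{-tB^\prime}(\mathcal{N}_{ex}+1)e^{tB^\prime}\psi,\psi\rangle
+N^{-\frac{5}{3}+\kappa-\alpha+\gamma}\langle e^{-tB^\prime}(\mathcal{N}_{ex}+1)^3 e^{tB^\prime}\psi,\psi\rangle.
\end{equation*}
Applying (\ref{control eB' on N_ex}) with $n=1$ and $n=3$ controls the last two terms by $\langle(\mathcal{N}_{ex}+1)\psi,\psi\rangle$ and $\langle(\mathcal{N}_{ex}+1)^3\psi,\psi\rangle$ respectively, and Gronwall's inequality gives (\ref{control eB' on V_4}).

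There is no real obstacle here since all the technical work has been done in Lemmas \ref{cal [K,V_4,B'] lemma} and \ref{control eB' on N_ex lemma}; the only thing to verify is that the right-hand sides of (\ref{control [K,B']}) and (\ref{control [V_4,B']}) are stable under conjugation by $e^{tB^\prime}$ up to the claimed controlling operators, which is exactly what the particle-number estimate delivers. The implicit constants produced by Gronwall are universal on $|t|\leq 1$, so the final bounds are uniform in $t$ in that range.
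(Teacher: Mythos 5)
Your proof is correct and is essentially the paper's own argument: the paper states the corollary immediately after Lemma \ref{cal [K,V_4,B'] lemma} with the one-line justification ``Combining Lemmas \ref{cal [K,V_4,B'] lemma} and \ref{control eB' on N_ex lemma}, we deduce the following corollary via Gronwall's inequality,'' which is exactly the derivative-plus-Gronwall computation you spell out, including the observation that $[\mathcal{K}_s,B^\prime]=[\mathcal{K},B^\prime]$ and the conjugation-stability of $(\mathcal{N}_{ex}+1)^n$ supplied by (\ref{control eB' on N_ex}).
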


\par The next four lemmas, Lemmas \ref{cal eB' on Q_G_N and E_G_N}-\ref{cal [V_3,B'] lemma}, collect the results about the right hand side of (\ref{define J_N}).
\begin{lemma}\label{cal eB' on Q_G_N and E_G_N}
We have
  \begin{equation}\label{cal e^B' on Q_G_N}
    e^{-B^\prime}Q_{\mathcal{G}_N}e^{B^\prime}=Q_{\mathcal{G}_N}
    +\mathcal{E}_{Q_{\mathcal{G}_N}}
  \end{equation}
  where
  \begin{equation}\label{bound E_Q_G_N}
    \pm\mathcal{E}_{Q_{\mathcal{G}_N}}\lesssim
    N^{-\frac{1}{3}+\frac{\kappa}{2}-\frac{\alpha}{2}}(\mathcal{N}_{ex}+1).
  \end{equation}
  On the other hand,
  \begin{equation}\label{bound eB' on E_G_N}
  \begin{aligned}
    \pm e^{-B^\prime}\mathcal{E}_{\mathcal{G}_N}e^{B^\prime}\lesssim &
    \big(N^{-\frac{1}{6}}+N^{-\alpha-\gamma}
    +N^{-\frac{1}{2}+\alpha+2\gamma}\big)
    \big(\mathcal{K}_s+N^{\frac{\kappa}{2}}\ln N
    (\mathcal{N}_{ex}+1)\big)\\
    &+\big(\mu+N^{-\gamma}\big)\big(\mathcal{V}_4+(\mathcal{N}_{ex}+1)
    +N^{-\frac{5}{3}+\kappa-\alpha+\gamma}(\mathcal{N}_{ex}+1)^3\big)\\
    &+P_{\mathcal{G}_N}(\mathcal{N}_{ex})+N^{\frac{1}{3}-\frac{\alpha}{2}}
   \end{aligned}
  \end{equation}
  with $0<\mu<1$, $0<\gamma<\alpha$ and $P_{\mathcal{G}_N}(\mathcal{N}_{ex})$ given in (\ref{bound P_G_N}).
\end{lemma}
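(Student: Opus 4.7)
The plan is to treat the two statements separately and reduce both to the a-priori estimates already established for the cubic renormalization, namely Lemma \ref{control eB' on N_ex lemma} and Corollary \ref{control eB' on K,V_4}, together with the pointwise bound on $A^\prime$ obtained in (\ref{A' bound}).

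For (\ref{cal e^B' on Q_G_N}), I would start from Newton--Leibniz, writing
\begin{equation*}
e^{-B^\prime}Q_{\mathcal{G}_N}e^{B^\prime}=Q_{\mathcal{G}_N}+\int_0^1 e^{-tB^\prime}[Q_{\mathcal{G}_N},B^\prime]\,e^{tB^\prime}\,dt.
\end{equation*}
Recall from (\ref{define C_G_N and Q_G_N p1}) that $Q_{\mathcal{G}_N}=-\sum_k\hat{v}_k\eta_k\sum_{\sigma\neq\nu}N_\sigma\mathcal{N}_{ex,\nu}$, and by (\ref{有用的屎}) the prefactor $\sum_k\hat v_k\eta_k$ is of order $a$, so the overall coefficient $a\cdot N_\sigma$ is $O(1)$. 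Using Lemma \ref{lem commutator}, $[\mathcal{N}_{ex,\nu},A^\prime]$ equals $A^\prime$ restricted to the spin indices that touch $\nu$; so $[Q_{\mathcal{G}_N},B^\prime]$ is a bounded linear combination of terms structurally identical to $A^\prime-(A^\prime)^*$. The estimate (\ref{A' bound}) in the proof of Lemma \ref{control eB' on N_ex lemma} then gives $\pm[Q_{\mathcal{G}_N},B^\prime]\lesssim N^{-\frac{1}{3}+\frac{\kappa}{2}-\frac{\alpha}{2}}(\mathcal{N}_{ex}+1)$, and conjugating by $e^{\pm tB^\prime}$ using (\ref{control eB' on N_ex}) preserves this bound, which yields (\ref{bound E_Q_G_N}).

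For (\ref{bound eB' on E_G_N}), I would proceed term by term, applying $e^{-B^\prime}\cdot e^{B^\prime}$ to the bound (\ref{bound E_G_N}) on $\mathcal{E}_{\mathcal{G}_N}$ from Proposition \ref{p1}. The polynomial contributions in $\mathcal{N}_{ex}$ coming from $P_{\mathcal{G}_N}(\mathcal{N}_{ex})$ are invariant in size under conjugation thanks to (\ref{control eB' on N_ex}); the $\mathcal{K}_s$-contribution is transformed via (\ref{control eB' on K_s}) into $\mathcal{K}_s+N^{\frac{\kappa}{2}}\ln N(\mathcal{N}_{ex}+1)$; and the $\mathcal{V}_4$-contribution is transformed via (\ref{control eB' on V_4}) into $\mathcal{V}_4+(\mathcal{N}_{ex}+1)+N^{-\frac{5}{3}+\kappa-\alpha+\gamma}(\mathcal{N}_{ex}+1)^3$. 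Reassembling these pieces with the coefficients from (\ref{bound E_G_N}) gives exactly the right-hand side of (\ref{bound eB' on E_G_N}).

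The only subtle point, which I would highlight as the main obstacle, is verifying that $[Q_{\mathcal{G}_N},B^\prime]$ can indeed be controlled by the same position-space Cauchy--Schwarz argument used for $A^\prime$ in (\ref{A' bound}): since $\mathcal{N}_{ex,\nu}$ couples $B^\prime$ only through the projection onto spin $\nu$ on the ``hole'' side, the resulting operator has precisely the same kernel structure $\eta(x-y)\,a^\ast(h_{x,\sigma})a^\ast(h_{y,\nu})a(L_{y,\nu}[\kappa])a(g_{x,\sigma})$ (up to constants bounded by $\sum_k\vert\hat v_k\eta_k\vert N\lesssim 1$), so the $L^\infty$ bound on $L_{y,\nu}[\kappa]$ given by (\ref{ineqn N_l and N_s}) still applies and produces the claimed $N^{-\frac{1}{3}+\frac{\kappa}{2}-\frac{\alpha}{2}}$ factor. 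Once this is checked, everything else is a mechanical assembly.
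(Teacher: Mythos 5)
Your proof is correct and takes essentially the same route as the paper, which states the lemma is a direct consequence of Proposition \ref{p1}, Lemma \ref{control eB' on N_ex lemma}, and Corollary \ref{control eB' on K,V_4}. For the $Q_{\mathcal{G}_N}$ part you could have simply quoted the $\mathcal{N}_{ex,\nu}$ version of (\ref{control diff eB' on N_ex}) directly (since $\vert\sum_k\hat{v}_k\eta_k\vert N_\sigma\lesssim 1$) rather than re-deriving it via Newton--Leibniz and $[\mathcal{N}_{ex,\nu},B^\prime]$; also note that the commutator $[\mathcal{N}_{ex,\nu},A^\prime]$ picks out the sub-sum of $A^\prime$ with $\sigma=\nu$ (the $\nu$ contributions on the creation and annihilation sides cancel), not the terms ``touching $\nu$'', though the conclusion that it has the same kernel structure and satisfies the bound (\ref{A' bound}) is unchanged.
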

\begin{proof}
  This is a direct consequence of Proposition \ref{p1}, Lemma \ref{control eB' on N_ex lemma} and Corollary \ref{control eB' on K,V_4}.
\end{proof}

\begin{lemma}\label{control eB'on V_21' Omega lemma}
We have
  \begin{equation}\label{control eB'on V_21' Omega}
    e^{-B^\prime}\big(\mathcal{V}_{21}^\prime
  +\Omega\big)e^{B^\prime}=\mathcal{V}_{21}^\prime
  +\Omega+\mathcal{E}_{(\mathcal{V}_{21}^\prime
  +\Omega)}
  \end{equation}
  with
  \begin{equation}\label{bound E_V_21'+Omega}
  \begin{aligned}
    \pm\mathcal{E}_{(\mathcal{V}_{21}^\prime
  +\Omega)}\lesssim& N^{-\frac{1}{3}-2\alpha+\frac{\kappa}{2}}(\mathcal{N}_{ex}+1)
  +N^{-\frac{1}{3}-\frac{\alpha}{2}+\frac{\kappa}{2}}(\mathcal{N}_{ex}+1)^{\frac{3}{2}}\\
  &N^{-\frac{4}{3}+\alpha+\gamma+\kappa}(\mathcal{N}_{ex}+1)^2
  +N^{-\alpha-\gamma}\big(\mathcal{K}_s+N^{\frac{\kappa}{2}}\ln N(\mathcal{N}_{ex}+1)\big)\\
  &+N^{\frac{1}{6}+\alpha+\frac{\kappa}{2}}
  \end{aligned}
  \end{equation}
  and the further assumption that $-\frac{11}{48}<\kappa\leq -2(\alpha+\gamma)$.
\end{lemma}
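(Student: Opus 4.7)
The plan is to expand by Newton--Leibniz
\begin{equation*}
e^{-B'}(\mathcal{V}_{21}^\prime+\Omega)e^{B'}
=\mathcal{V}_{21}^\prime+\Omega
+\int_{0}^{1}e^{-tB'}[\mathcal{V}_{21}^\prime+\Omega,B']e^{tB'}dt,
\end{equation*}
so it suffices to control the commutators $[\mathcal{V}_{21}^\prime,B']$ and $[\Omega,B']$ and then conjugate by $e^{tB'}$ using Lemma \ref{control eB' on N_ex lemma} and Corollary \ref{control eB' on K,V_4}. Applying the quartic commutator identity \eqref{quartic commu} decomposes each commutator into three pieces of the same structural type appearing in Lemma \ref{cal com V_21' lemma}: a quartic piece $\Xi_1'$ coming from full contractions (which after further bookkeeping produces pure number-operator contributions plus subleading quartic remainders), a quartic piece $\Xi_2'$ from a single contraction (sextic reduced to quartic), and a sextic residue $\Xi_3'$. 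The crucial new input relative to Lemma \ref{cal com V_21' lemma} is the extra projection $\chi_{q\in A_{F,\kappa}^\nu}$ inside $B'$ (cf. \eqref{define A'}), which via \eqref{ineqn N_l and N_s} gives $\#A_{F,\kappa}^\nu\lesssim N^{2/3+\kappa+\varepsilon}$ and, in position space, replaces factors $a(g_{y,\nu})$ coming from $B$ by $a(L_{y,\nu}[\kappa])$ with $\Vert L_{y,\nu}[\kappa]\Vert^2\lesssim N^{2/3+\kappa}$. This is the source of the $N^{\kappa/2}$ improvements in \eqref{bound E_V_21'+Omega} compared with \eqref{est E_V_21'}.

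First I would treat $\Xi_1'$: using $1=\chi_{k=l}+\chi_{k\neq l}$ exactly as in \eqref{Xi_11j}--\eqref{Xi_123j}, every term either reduces to a sum like $\sum_k W_k\eta_k$ or $\sum_k\eta_k^2k\cdot(q-p)$ over a range restricted by $\chi_{q\in A_{F,\kappa}^\nu}$ --- the constant-in-$\mathcal{N}_{ex}$ part vanishes because the restricted momentum region is disjoint from the identity contribution that would reproduce $\mathcal{V}_{21}^\prime$ or $\Omega$ themselves, so only error terms survive. The pointwise bounds \eqref{part sumW_keta_k 2}, \eqref{par sum W_keta_k3} and \eqref{eta_peta_p} combined with the counting bound on $A_{F,\kappa}^\nu$ yield contributions bounded by $N^{-1/3-2\alpha+\kappa/2}(\mathcal{N}_{ex}+1)$ and $N^{1/6+\alpha+\kappa/2}$. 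The term $\Xi_2'$ is quartic-to-sextic reduced and handled exactly as \eqref{Xi_2 split}--\eqref{Xi_2}, but now the $L^1$--$L^\infty$ estimate picks up a factor $N^{2/3+\kappa}$ in place of $N$ from the $A_{F,\kappa}^\nu$ projection, producing the $(\mathcal{N}_{ex}+1)^{3/2}$ contribution.

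The main obstacle, as always, is the sextic residue $\Xi_3'$. I would repeat the decomposition $e^{-tB'}\Xi_3'e^{tB'}=M_1+M_2+M_3$ of \eqref{M split}, but with the improved projection in $B'$. For $M_1$, the analogue of \eqref{M_1 rewrt 2}--\eqref{M_14 bound} goes through by splitting the intermediate momentum $p_2$ via $\chi_{p_2\in P_{F,\delta}^\nu}+\chi_{p_2\in A_{F,\delta}^\nu}$ with $\delta=\frac{1}{3}+u$ and optimizing $u$: the quadratic form estimates \eqref{energy est bog u} and Lemma \ref{energy est bog lem} then feed the $\mathcal{K}_s$ control, and the restriction $\chi_{s\in A_{F,\kappa}^\varpi}$ in $B'$ produces the additional factor $N^{\kappa/2}$ that renders the $\mathcal{K}_s$ coefficient $N^{-\alpha-\gamma}$. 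For $M_2$ and $M_3$, I would again switch to position space as in \eqref{M_22 M_23 pos}--\eqref{M_32 M_33 bound}, now applying Lemma \ref{variant lem 7.1 lemma} to absorb one $a(L_{y,\nu}[\kappa])$-type factor into a number-operator bound; this is precisely where $N^{-4/3+\alpha+\gamma+\kappa}(\mathcal{N}_{ex}+1)^2$ appears. The same scheme applies verbatim to $[\Omega,B']$ upon replacing $W_k$ by $\eta_k k\cdot(q-p)$ and using $N^{1/3}\Vert\nabla\eta\Vert_1\lesssim a$ from \eqref{L1 grad eta and laplace eta} together with \eqref{energy est bog} in place of \eqref{L1&L2 norm of W 3dscatt}.

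The hard part is keeping the booking of cutoffs and Cauchy--Schwarz splittings tight enough that the $N^{\kappa/2}$ gain is preserved in every term of \eqref{bound E_V_21'+Omega}; the constraint $\kappa\leq -2(\alpha+\gamma)$ is exactly what is needed so that the $(\mathcal{N}_{ex}+1)^2$ coefficient $N^{-4/3+\alpha+\gamma+\kappa}$ does not dominate the kinetic error, and so that the final $\mathcal{K}_s$ coefficient can be absorbed by the $N^{-\alpha-\gamma}\mathcal{K}_s$ term appearing already in Proposition \ref{p1}. Finally, one Gronwall argument on the time-integrated commutator, combined with \eqref{control eB' on N_ex}, \eqref{control eB' on K_s} and \eqref{control eB' on V_4}, transfers the bounds from $\Xi_1'+\Xi_2'+\Xi_3'$ to $\mathcal{E}_{(\mathcal{V}_{21}^\prime+\Omega)}$ and completes the proof.
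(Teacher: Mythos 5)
Your Newton--Leibniz opening and the idea of applying Lemma \ref{control eB' on N_ex lemma} and Corollary \ref{control eB' on K,V_4} at the end are correct. But the middle of the argument has two substantive problems.

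First, you assert that $[\mathcal{V}_{21}^\prime,B']$ decomposes ``into three pieces of the same structural type appearing in Lemma \ref{cal com V_21' lemma}''. This is not the case. Lemma \ref{cal com V_21' lemma} concerns $[\mathcal{V}_{21}^\prime,B]$ where $B$ is the \emph{quadratic} renormalization: in $A$, both $p$ and $q$ lie inside the Fermi ball, and this forces a three-term grouping $\Xi_1+\Xi_2+\Xi_3$ after applying \eqref{quartic commu}. Here instead $B'$ is the \emph{cubic} renormalization: in $A'$, the index $q$ is in the annulus $A_{F,\kappa}^\nu$, hence outside the Fermi ball, and the constraint pattern is $\chi_{p-k,q+k\notin B_F^\sigma\!,B_F^\nu}\chi_{q\in A_{F,\kappa}^\nu}\chi_{p\in B_F^\sigma}$. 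This admits contraction types that are impossible for $B$ (e.g. $\delta_{q+k,s}$, where an excited $\mathcal{V}_{21}^\prime$-index matches an annulus index of $B'$), and the paper's proof consequently uses a \emph{five}-piece decomposition $\Pi_1+\cdots+\Pi_5$ in \eqref{[V_21',B']}, not three. The term $\Pi_4$, which produces the $(\mathcal{N}_{ex}+1)^{3/2}$ contribution in \eqref{bound E_V_21'+Omega}, has no counterpart in the $\Xi_j$ structure you imported from Lemma \ref{cal com V_21' lemma}.

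Second, and more consequentially, your plan to repeat the nested expansion $e^{-tB'}\Xi_3'e^{tB'}=M_1+M_2+M_3$ from \eqref{M split} is not needed here and overcomplicates the argument. The whole point of the cubic renormalization is that the annulus projection produces a factor $\|a(L_{y,\nu}[\kappa])\|^2\lesssim N^{2/3+\kappa}$ in place of $\|a(g_{y,\nu})\|^2\lesssim N$ in \emph{every} position-space factorization. Because of this gain, all five pieces $\Pi_1,\ldots,\Pi_5$ can be bounded \emph{directly} (the hardest one, $\Pi_5$, only requires a single split $a(h_{x_3,\nu})=a(H_{x_3,\nu}[\delta])+a(L_{x_3,\nu}[\delta])$ and \eqref{energy est bog}), and the conjugation by $e^{tB'}$ is done once at the very end of the proof. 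By contrast, the nested expansion in Lemma \ref{cal com V_21' lemma} was forced precisely because $\Xi_3$ (for the quadratic $B$) could not be bounded without further commuting; there was no $\kappa$ gain available there. If you were to introduce the $M_j$ hierarchy here, you would not only be doing unnecessary work, you would also risk losing the $N^{\kappa/2}$ factor in the intermediate steps, since the sub-commutator $[a_{q+k,\nu}a_{p-k,\sigma},B']$ does not automatically carry an annulus projection on every momentum it produces.

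Finally, the remark that ``the constant-in-$\mathcal{N}_{ex}$ part vanishes'' is misleading. The quartic piece $\Pi_1$ \emph{does} produce a pure constant, namely a sum of the form $\sum W_k\,\xi$-type coefficients over the annulus, and this constant does not vanish; it simply happens to be small of size $N^{\frac{1}{6}+\alpha+\frac{\kappa}{2}}$, which is the last line of \eqref{bound E_V_21'+Omega}. Stating that it vanishes would make it look as if a cancellation must be verified, whereas all that is really required is a size estimate via \eqref{L1&L2 norm of W 3dscatt} and \eqref{ineqn N_l and N_s}.
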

\begin{proof}
  By Newton-Leibniz formula, we immediately obtain (\ref{control eB'on V_21' Omega}) if we put
  \begin{equation}\label{E_V_21' Omega}
    \mathcal{E}_{(\mathcal{V}_{21}^\prime
  +\Omega)}=\int_{0}^{1}e^{-tB}[\mathcal{V}_{21}^\prime+\Omega,B]e^{tB}dt.
  \end{equation}
  We first bound $[\mathcal{V}_{21}^\prime,B]$, then the bound of $[\Omega,B]$ follows similarly. By Lemma \ref{lem commutator}, we have $[\mathcal{V}_{21}^\prime,B]=\sum_{j=1}^{5}\Pi_j$ where
  \begin{equation}\label{[V_21',B']}
    \begin{aligned}
    &\Pi_1=\sum_{\substack{\sigma,\nu\\k,p,q}}\sum_{l,r,s}2W_k\eta_l
    \delta_{p-k,r-l}\delta_{q+k,s+l}a^*_{p,\sigma}a^*_{q,\nu}a_{s,\nu}a_{r,\sigma}\\
    &\quad\quad\times\chi_{p-k\notin B_F^\sigma}\chi_{p\in B_F^\sigma}\chi_{q+k\notin B_F^\nu}\chi_{q\in B_F^\nu}\chi_{r\in B_F^\sigma}\chi_{s\in A_{F,\kappa}^\nu}+h.c.\\
    &\Pi_2=-\sum_{\substack{\sigma,\nu\\k,p,q}}\sum_{\varpi,l,r,s}2W_k\eta_l
    \delta_{p-k,r-l}a^*_{p,\sigma}a^*_{q,\nu}a^*_{s+l,\varpi}a_{q+k,\nu}a_{s,\varpi}a_{r,\sigma}\\
    &\quad\quad\times\chi_{p-k\notin B_F^\sigma}\chi_{p\in B_F^\sigma}\chi_{q+k\notin B_F^\nu}\chi_{q\in B_F^\nu}\chi_{r\in B_F^\sigma}\chi_{s+l\notin B_F^\varpi}\chi_{s\in A_{F,\kappa}^\varpi}+h.c.\\
    &\Pi_3=-\sum_{\substack{\sigma,\nu\\k,p,q}}\sum_{\tau,l,r,s}2W_k\eta_l
    \delta_{q+k,s+l}a^*_{p,\sigma}a^*_{q,\nu}a^*_{r-l,\tau}
    a_{p-k,\sigma}a_{s,\nu}a_{r,\tau}\\
    &\quad\quad\times\chi_{p-k\notin B_F^\sigma}\chi_{p\in B_F^\sigma}\chi_{q+k\notin B_F^\nu}\chi_{q\in B_F^\nu}\chi_{r-l\notin B_F^\tau}\chi_{r\in B_F^\tau}\chi_{s\in A_{F,\kappa}^\nu}+h.c.\\
    &\Pi_4=\sum_{\substack{\sigma,\nu\\k,p,q}}\sum_{\tau,l,r,s}2W_k\eta_l
    \delta_{q+k,s}a^*_{r-l,\tau}a^*_{s+l,\nu}a^*_{p-k,\sigma}a_{r,\tau}a_{q,\nu}a_{p,\sigma}\\
    &\quad\quad\times
    \chi_{p-k\notin B_F^\sigma}\chi_{p\in B_F^\sigma}\chi_{q+k\in  A_{F,\kappa}^\nu}
    \chi_{q\in B_F^\nu}\chi_{r-l\notin B_F^\tau}\chi_{r\in B_F^\tau}\chi_{s+l\notin B_F^\nu}
    +h.c.\\
    &\Pi_5=\sum_{\substack{\sigma,\nu\\k,p,q}}\sum_{\varpi,l,r,s}2W_k\eta_l
    \delta_{p,r}a^*_{r-l,\sigma}a^*_{s+l,\varpi}a^*_{q,\nu}
    a_{s,\varpi}a_{q+k,\nu}a_{p-k,\sigma}\\
    &\quad\quad\times\chi_{p-k\notin B_F^\sigma}\chi_{p\in B_F^\sigma}\chi_{q+k\notin B_F^\nu}\chi_{q\in B_F^\nu}\chi_{r-l\notin B_F^\sigma}\chi_{s+l\notin B_F^\varpi}\chi_{s\in A_{F,\kappa}^\varpi}+h.c.
    \end{aligned}
  \end{equation}
  We then analyse (\ref{[V_21',B']}) term by term.
  \begin{flushleft}
    \textbf{Analysis of $\Pi_1$:}
  \end{flushleft}
  Switching to the position space, we have
  \begin{equation}\label{Pi_1}
    \begin{aligned}
    \Pi_1&=2\sum_{\sigma,\nu}\int_{\Lambda^4}W(x_1-x_3)\eta(x_2-x_4)
    \Big(\sum_{p_1\notin B_F^\sigma}e^{ip_1(x_1-x_2)}\Big)
    \Big(\sum_{p_2\notin B_F^\nu}e^{ip_2(x_3-x_4)}\Big)\\
    &\quad\quad\times a^*(g_{x_1,\sigma}) a^*(g_{x_3,\nu})
    a(L_{x_4,\nu}[\kappa])a(g_{x_2,\sigma})dx_1dx_2dx_3dx_4+h.c.
    \end{aligned}
  \end{equation}
  Using
  \begin{equation*}
     \sum_{p_1\notin B_F^\sigma}\sum_{p_2\notin B_F^\nu}=
  \sum_{p_1,p_2}-\sum_{p_1}\sum_{p_2\in B_F^\nu}-\sum_{p_2}\sum_{p_1\in B_F^\sigma}
  +\sum_{p_1\in B_F^\sigma}\sum_{p_2\in B_F^\nu},
  \end{equation*}
  we rewrite $\Pi_1=\sum_{j=1}^{4}\Pi_{1j}$. Using the estimates in Section \ref{scattering eqn sec}, we can bound them easily by
  \begin{equation}\label{Pi_1 bound}
    \begin{aligned}
    \pm\Pi_{11}&\lesssim N^{\frac{11}{6}+\frac{\kappa}{2}}\Vert W\Vert_2\Vert\eta\Vert_2\lesssim N^{\frac{1}{6}+\alpha+\frac{\kappa}{2}}\\
    \pm\Pi_{12},\pm\Pi_{13}&\lesssim N^{\frac{7}{3}+\frac{\kappa}{2}}\Vert W\Vert_1\Vert\eta\Vert_2\lesssim N^{\frac{1}{6}-\frac{\alpha}{2}+\frac{\kappa}{2}}\\
    \pm\Pi_{14}&\lesssim N^{\frac{17}{6}+\frac{\kappa}{2}}\Vert W\Vert_1\Vert\eta\Vert_1\lesssim N^{\frac{1}{6}-2\alpha+\frac{\kappa}{2}}
    \end{aligned}
  \end{equation}
  Together, we have
  \begin{equation}\label{Pi_1 bound true}
    \pm\Pi_1\lesssim N^{\frac{1}{6}+\alpha+\frac{\kappa}{2}}.
  \end{equation}
  \begin{flushleft}
    \textbf{Analysis of $\Pi_2$:}
  \end{flushleft}
   Switching to the position space, we have
  \begin{equation}\label{Pi_2}
    \begin{aligned}
    \Pi_2&=-2\sum_{\sigma,\nu,\varpi}\int_{\Lambda^4}W(x_1-x_3)\eta(x_2-x_4)
    \Big(\sum_{p_1\notin B_F^\sigma}e^{ip_1(x_1-x_2)}\Big)
    a^*(g_{x_1,\sigma}) a^*(g_{x_3,\nu})\\
    &\quad\quad\times a^*(h_{x_4,\varpi})
    a(h_{x_3,\nu})a(L_{x_4,\varpi}[\kappa])a(g_{x_2,\sigma})dx_1dx_2dx_3dx_4+h.c.
    \end{aligned}
  \end{equation}
  Using $\sum_{p_1\notin B_F^\sigma}=\sum_{p_1}-\sum_{p_1\in B_F^\sigma}$, we can write $\Pi_2=\Pi_{21}+\Pi_{22}$. Therefore we can bound them easily and hence reach
  \begin{equation}\label{Pi_2 bound}
    \pm\Pi_2\lesssim N^{\frac{7}{3}+\frac{\kappa}{2}}\Vert W\Vert_1\Vert\eta\Vert_1
    (\mathcal{N}_{ex}+1)\lesssim N^{-\frac{1}{3}-2\alpha+\frac{\kappa}{2}}
    (\mathcal{N}_{ex}+1).
  \end{equation}

  \begin{flushleft}
    \textbf{Analysis of $\Pi_3$:}
  \end{flushleft}
   Switching to the position space, we have
  \begin{equation}\label{Pi_3}
    \begin{aligned}
    \Pi_3&=-2\sum_{\sigma,\nu,\tau}\int_{\Lambda^4}W(x_1-x_3)\eta(x_2-x_4)
    \Big(\sum_{p_2\notin B_F^\nu}e^{ip_2(x_3-x_4)}\Big)
    a^*(g_{x_1,\sigma}) a^*(g_{x_3,\nu})\\
    &\quad\quad\times a^*(h_{x_2,\tau})
    a(h_{x_1,\sigma})a(L_{x_4,\nu}[\kappa])a(g_{x_2,\tau})dx_1dx_2dx_3dx_4+h.c.
    \end{aligned}
  \end{equation}
  Similar to the bound of $\Pi_2$, we have
  \begin{equation}\label{Pi_3 bound}
    \pm\Pi_3\lesssim N^{\frac{7}{3}+\frac{\kappa}{2}}\Vert W\Vert_1\Vert\eta\Vert_1
    (\mathcal{N}_{ex}+1)\lesssim N^{-\frac{1}{3}-2\alpha+\frac{\kappa}{2}}
    (\mathcal{N}_{ex}+1).
  \end{equation}

   \begin{flushleft}
    \textbf{Analysis of $\Pi_4$:}
  \end{flushleft}
  Switching to the position space, we have
  \begin{equation}\label{Pi_4}
    \begin{aligned}
    \Pi_4&=2\sum_{\sigma,\nu,\tau}\int_{\Lambda^4}W(x_1-x_3)\eta(x_2-x_4)
    \Big(\sum_{p_2\in A_{F,\kappa}^\nu}e^{ip_2(x_3-x_4)}\Big)
    a^*(h_{x_2,\tau}) a^*(h_{x_4,\nu})\\
    &\quad\quad\times a^*(h_{x_1,\sigma})
    a(g_{x_2,\tau})a(g_{x_3,\nu})a(g_{x_1,\sigma})dx_1dx_2dx_3dx_4+h.c.
    \end{aligned}
  \end{equation}
  Notice that by (\ref{ineqn N_l and N_s}), for $j=3,4$
  \begin{equation*}
    \int_{\Lambda}\Big\vert\sum_{p_2\in A_{F,\kappa}^\nu}e^{ip_2(x_3-x_4)}\Big\vert^2dx_j
    =\#A_{F,\kappa}^\nu\lesssim N^{\frac{2}{3}+\kappa}.
  \end{equation*}
  Hence we can bound $\Pi_4$ by
  \begin{equation}\label{Pi_4 bound}
    \pm\Pi_4\lesssim N^{\frac{11}{6}+\frac{\kappa}{2}}\Vert W\Vert_1\Vert\eta\Vert_2
    (\mathcal{N}_{ex}+1)^{\frac{3}{2}}\lesssim N^{-\frac{1}{3}+\frac{\kappa}{2}-\frac{\alpha}{2}}(\mathcal{N}_{ex}+1)^{\frac{3}{2}}.
  \end{equation}

   \begin{flushleft}
    \textbf{Analysis of $\Pi_5$:}
  \end{flushleft}
  Switching to the position space, we have
  \begin{equation}\label{Pi_5}
    \begin{aligned}
    \Pi_5&=2\sum_{\sigma,\nu,\varpi}\int_{\Lambda^4}W(x_1-x_3)\eta(x_2-x_4)
    \Big(\sum_{q_1\in B_{F}^\sigma}e^{iq_1(x_1-x_2)}\Big)
    a^*(h_{x_2,\sigma}) a^*(h_{x_4,\varpi})\\
    &\quad\quad\times a^*(g_{x_3,\nu})
    a(L_{x_4,\varpi}[\kappa])a(h_{x_3,\nu})a(h_{x_1,\sigma})dx_1dx_2dx_3dx_4+h.c.
    \end{aligned}
  \end{equation}
  For $-\frac{11}{48}<\delta\leq\frac{1}{3}$ to be determined later, we use $a(h_{x_3,\nu})=
  a(H_{x_3,\nu}[\delta])+a(L_{x_3,\nu}[\delta])$ to rewrite $\Pi_5=\Pi_{51}+\Pi_{52}$. For $\Pi_{52}$, we can bound it easily by
  \begin{equation}\label{Pi_52 bound}
    \begin{aligned}
    \pm\Pi_{52}\lesssim N^{\frac{5}{3}+\frac{\kappa}{2}+\frac{\delta}{2}}
    \Vert W\Vert_1\Vert\eta\Vert_2 (\mathcal{N}_{ex}+1)^{\frac{3}{2}}
    \lesssim N^{-\frac{1}{2}-\frac{\alpha}{2}+\frac{\kappa}{2}+\frac{\delta}{2}}
    (\mathcal{N}_{ex}+1)^{\frac{3}{2}}.
    \end{aligned}
  \end{equation}
  For the estimate of $\Pi_{51}$, we rewrite it by
  \begin{equation}\label{Pi_51re}
    \begin{aligned}
    \Pi_{51}&=2\sum_{\sigma,\nu,\varpi,p_1}\chi_{p_1\notin B_F^\sigma}e^{-ip_1x_3}
    \int_{\Lambda^3}\eta(x_2-x_4)
    \Big(\sum_{q_1\in B_{F}^\sigma}W_{q_1-p_1}e^{iq_1(x_3-x_2)}\Big)
    \\
    &\quad\quad\times a^*(h_{x_2,\sigma}) a^*(h_{x_4,\varpi})a^*(g_{x_3,\nu})
    a(L_{x_4,\varpi}[\kappa])a(H_{x_3,\nu}[\delta])a_{p_1,\sigma}dx_2dx_3dx_4+h.c.
    \end{aligned}
  \end{equation}
  Thence we can bound it using (\ref{energy est bog}), by
  \begin{equation}\label{Pi_51bound}
    \begin{aligned}
    \pm\Pi_{51}&\lesssim N^{\frac{5}{6}+\frac{\kappa}{2}}\Vert\eta\Vert_2
    \Big(\sum_{\sigma}\sum_{q_1\in B_F^\sigma,p_1\notin B_F^\sigma}
    \frac{\vert W_{q_1-p_1}\vert^2}{\vert p_1\vert^2-\vert k_F^\sigma\vert^2}\Big)^{\frac{1}{2}}\\
    &\quad\quad\quad\quad\times
    \big(\theta^{-1}\mathcal{K}_s+\theta(\mathcal{N}_{ex}+1)^2(\mathcal{N}_h[\delta]+1)\big)\\
    &\lesssim \big(N^{-\alpha-\gamma}+N^{\frac{1}{3}+\kappa+\alpha+\gamma
    -\delta}\big)\mathcal{K}_s+N^{-\frac{4}{3}+\kappa+\alpha+\gamma}(\mathcal{N}_{ex}+1)^2
    \end{aligned}
  \end{equation}
  where we choose $\theta=N^{-\frac{2}{3}+\frac{\kappa}{2}+\alpha+\gamma}$. With the further requirement that $\kappa\leq-2(\alpha+\gamma)$, we choose $\delta=\frac{1}{3}+\kappa+2\alpha+2\gamma$, such that $-\frac{11}{48}<\delta\leq\frac{1}{3}$. Combining (\ref{Pi_51bound}) and (\ref{Pi_52 bound}), we reach
  \begin{equation}\label{Pi_5 bound}
    \pm\Pi_5\lesssim N^{-\alpha-\gamma}\mathcal{K}_s
    +N^{-\frac{4}{3}+\kappa+\alpha+\gamma}(\mathcal{N}_{ex}+1)^2
    +N^{-\frac{1}{3}+\kappa+\frac{\alpha}{2}+\gamma}
    (\mathcal{N}_{ex}+1)^{\frac{3}{2}}.
  \end{equation}

   \begin{flushleft}
    \textbf{Conclusion of Lemma \ref{control eB'on V_21' Omega lemma}:}
  \end{flushleft}
  Collecting (\ref{Pi_1 bound true}), (\ref{Pi_2 bound}), (\ref{Pi_3 bound}), (\ref{Pi_4 bound}) and (\ref{Pi_5 bound}), with the further assumption that $\kappa\leq-2(\alpha+\gamma)$, we arrive at
  \begin{equation}\label{[V_21' B'] bound}
    \begin{aligned}
    \pm[\mathcal{V}_{21}^\prime,B^\prime]\lesssim& N^{-\frac{1}{3}-2\alpha+\frac{\kappa}{2}}(\mathcal{N}_{ex}+1)
  +N^{-\frac{1}{3}-\frac{\alpha}{2}+\frac{\kappa}{2}}(\mathcal{N}_{ex}+1)^{\frac{3}{2}}\\
  &N^{-\frac{4}{3}+\alpha+\gamma+\kappa}(\mathcal{N}_{ex}+1)^2
  +N^{-\alpha-\gamma}\mathcal{K}_s+N^{\frac{1}{6}+\alpha+\frac{\kappa}{2}}
  \end{aligned}
  \end{equation}
  \par For the estimate of $[\Omega,B^\prime]$, we can similarly write $[\Omega,B^\prime]=\sum_{j=1}^{5}\tilde{\Pi}_j$, with the coefficient $W_k$ being replaced by $\eta_kk(q-p)$. The bound of each term, except $\tilde{\Pi}_{11}$, follows similarly to the above discussion. We only need to, in the corresponding estimates, replace the estimate $\Vert W\Vert_1\lesssim a$ by $N^{\frac{1}{3}}\Vert\nabla\eta\Vert_1\lesssim N^{\frac{1}{3}}\ell a$, and use the second inequality in (\ref{energy est bog}) rather than the first one. The only difference is the estimate of $\tilde{\Pi}_{11}$, which is
  \begin{equation}\label{tiled Pi_11}
    \begin{aligned}
    \tilde{\Pi}_{11}&=C\sum_{\sigma,\nu}\int_{\Lambda^2}\nabla_{x_1}\eta(x_1-x_3)\eta(x_1-x_3)\\
    &\quad\quad\times a^*(\nabla_{x_1} g_{x_1,\sigma}) a^*(g_{x_3,\nu})
    a(L_{x_3,\nu}[\kappa])a(g_{x_1,\sigma})dx_1dx_3+h.c.
    \end{aligned}
  \end{equation}
  where we denote an arbitrary universal constant by $C$. Integrating by parts yields
  \begin{equation}\label{tilde Pi_11}
    \begin{aligned}
    \tilde{\Pi}_{11}&=-C\sum_{\sigma,\nu}\int_{\Lambda^2}\eta(x_1-x_3)\nabla_{x_1}\eta(x_1-x_3)\\
    &\quad\quad\times a^*(\nabla_{x_1} g_{x_1,\sigma}) a^*(g_{x_3,\nu})
    a(L_{x_3,\nu}[\kappa])a(g_{x_1,\sigma})dx_1dx_3+h.c.\\
    &-C\sum_{\sigma,\nu}\int_{\Lambda^2}\vert\eta(x_1-x_3)\vert^2\\
    &\quad\quad\times a^*(\Delta_{x_1} g_{x_1,\sigma}) a^*(g_{x_3,\nu})
    a(L_{x_3,\nu}[\kappa])a(g_{x_1,\sigma})dx_1dx_3+h.c.\\
    &-C\sum_{\sigma,\nu}\int_{\Lambda^2}\vert\eta(x_1-x_3)\vert^2\\
    &\quad\quad\times a^*(\nabla_{x_1} g_{x_1,\sigma}) a^*(g_{x_3,\nu})
    a(L_{x_3,\nu}[\kappa])a(\nabla_{x_1} g_{x_1,\sigma})dx_1dx_3+h.c.
    \end{aligned}
  \end{equation}
  Combining (\ref{tiled Pi_11}) and (\ref{tilde Pi_11}), we have
  \begin{equation*}
  \begin{aligned}
    \tilde{\Pi}_{11}&=
    -C\sum_{\sigma,\nu}\int_{\Lambda^2}\vert\eta(x_1-x_3)\vert^2\\
    &\quad\quad\times a^*(\Delta_{x_1} g_{x_1,\sigma}) a^*(g_{x_3,\nu})
    a(L_{x_3,\nu}[\kappa])a(g_{x_1,\sigma})dx_1dx_3+h.c.\\
    &-C\sum_{\sigma,\nu}\int_{\Lambda^2}\vert\eta(x_1-x_3)\vert^2\\
    &\quad\quad\times a^*(\nabla_{x_1} g_{x_1,\sigma}) a^*(g_{x_3,\nu})
    a(L_{x_3,\nu}[\kappa])a(\nabla_{x_1} g_{x_1,\sigma})dx_1dx_3+h.c.
  \end{aligned}
  \end{equation*}
  and we bound $\tilde{\Pi}_{11}$ by
  \begin{equation}\label{tilde Pi_11 bound}
    \tilde{\Pi}_{11}\lesssim N^{\frac{5}{2}+\frac{\kappa}{2}}\Vert\eta\Vert_1\lesssim
    N^{-\frac{1}{6}-2\alpha+\frac{\kappa}{2}}.
  \end{equation}
  \par Therefore, applying Lemma \ref{control eB' on N_ex lemma} and Corollary \ref{control eB' on K,V_4} to (\ref{[V_21' B'] bound}), we reach (\ref{bound E_V_21'+Omega}).
\end{proof}

\begin{lemma}\label{bound eB' on Gamma' lemma}
Recall that $0<\gamma<\alpha<\frac{1}{6}$, we have
  \begin{equation}\label{bound eB' on Gamma'}
    \begin{aligned}
    \pm\int_{0}^{1}e^{-tB\prime}\Gamma^\prime e^{tB^\prime}dt
    \lesssim& \big(N^{-\frac{1}{6}+\frac{\alpha}{2}+\frac{\kappa}{2}}+
    N^{-\frac{1}{3}-\kappa+\gamma}\big)
    \big(\mathcal{K}_s+N^{\frac{\kappa}{2}}\ln N
    (\mathcal{N}_{ex}+1)\big)\\
    &+N^{-\gamma}\big(\mathcal{V}_4+(\mathcal{N}_{ex}+1)\big)
    +N^{-\frac{5}{3}+\kappa-\alpha+\gamma}(\mathcal{N}_{ex}+1)^3.
    \end{aligned}
  \end{equation}
\end{lemma}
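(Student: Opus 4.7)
My plan is to exhibit a cancellation between the quartic main parts of the two commutators in $\Gamma^\prime$ and the near-Fermi-surface part of $\mathcal{V}_3$, driven by the scattering identity (\ref{eqn of eta_p rewrt}). Concretely, Lemma~\ref{cal [K,V_4,B'] lemma} decomposes $[\mathcal{K},B^\prime]=M_K+\mathcal{E}_{[\mathcal{K},B^\prime]}$ and $[\mathcal{V}_4,B^\prime]=M_V+\mathcal{E}_{[\mathcal{V}_4,B^\prime]}$, where $M_K$ carries coefficient $2|k|^2\eta_k$, $M_V$ carries $\sum_l\hat{v}_{k-l}\eta_l$, and both have identical characteristic functions $\chi_{p-k\notin B^\sigma_F}\chi_{q+k\notin B^\nu_F}\chi_{q\in A^\nu_{F,\kappa}}\chi_{p\in B^\sigma_F}$. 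The scattering equation gives $2|k|^2\eta_k+\sum_l\hat{v}_{k-l}\eta_l=2W_k-\hat{v}_k$, so $M_K+M_V=2\mathcal{W}-\tilde{\mathcal{V}}_3$, where $\mathcal{W}$ carries coefficient $W_k$ and $\tilde{\mathcal{V}}_3$ carries $\hat{v}_k$ on the same characteristic set. Splitting $\chi_{q\notin B^\nu_F}=\chi_{q\in A^\nu_{F,\kappa}}+\chi_{q\in P^\nu_{F,\kappa}}$ inside $\mathcal{V}_3$ yields $\mathcal{V}_3=\mathcal{V}_3^{(A)}+\mathcal{V}_3^{(P)}$; the near part $\mathcal{V}_3^{(A)}$ coincides exactly with $\tilde{\mathcal{V}}_3$, giving
\begin{equation*}
\Gamma^\prime=2\mathcal{W}+\mathcal{V}_3^{(P)}+\mathcal{E}_{[\mathcal{K},B^\prime]}+\mathcal{E}_{[\mathcal{V}_4,B^\prime]}.
\end{equation*}

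\textbf{Bounding $\mathcal{V}_3^{(P)}$.} In position space $\mathcal{V}_3^{(P)}=\sum_{\sigma,\nu}\int v_a(x-y)a^*(h_{x,\sigma})a^*(h_{y,\nu})a(H_{y,\nu}[\kappa])a(g_{x,\sigma})dxdy+h.c.$, where $H_{y,\nu}[\kappa]$ projects onto momenta above $k_F^\nu+N^\kappa$, so by Lemma~\ref{lemma ineqn K_s} one has $\mathcal{N}_h[\kappa]\lesssim N^{-1/3-\kappa}\mathcal{K}_s$. A Cauchy-Schwarz with weight $\theta=N^{-\gamma}$, pairing $a^*(h_x)a^*(h_y)$ (controlled by $\mathcal{V}_4$ via (\ref{V_4position})) against $a(H_y[\kappa])a(g_x)$ (controlled using $\Vert g_x\Vert^2=N_\sigma$, $N\Vert v_a\Vert_1\lesssim 1$, and the above $\mathcal{N}_h[\kappa]$ bound) gives $\pm\mathcal{V}_3^{(P)}\lesssim N^{-\gamma}\mathcal{V}_4+N^{-1/3-\kappa+\gamma}\mathcal{K}_s$.

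\textbf{Bounding $\mathcal{W}$.} This is the main difficulty. Its position-space form
\begin{equation*}
\mathcal{W}=\sum_{\sigma,\nu}\int W(x-y)a^*(h_{x,\sigma})a^*(h_{y,\nu})a(L_{y,\nu}[\kappa])a(g_{x,\sigma})dxdy+h.c.
\end{equation*}
parallels the structure of $\mathcal{V}_{21}^\prime$ treated in Lemma~\ref{contol V_21'Omega [K,B]}, except that the Fermi-ball index $q$ has been relocated to the thin shell $A^\nu_{F,\kappa}$ of cardinality $\lesssim N^{2/3+\kappa}$ rather than $N$. I will follow the three-piece decomposition used there: split $a^*(h_y)=a^*(H_y[\delta_2])+a^*(L_y[\delta_2])$ and $a(g_x)=a(I_x[\delta_1])+a(S_x[\delta_1])$. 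For the $H_y[\delta_2]$ piece, reuse the Cauchy-Schwarz leading to (\ref{V_21,1' bound}) with $N$ replaced by $\#A^\nu_{F,\kappa}\lesssim N^{2/3+\kappa}$ and invoke (\ref{energy est bog}) from Lemma~\ref{energy est bog lem}, producing $\lesssim N^{-1/2+\alpha/2+\kappa/2}(\mathcal{K}_s)^{1/2}(\mathcal{N}_h[\delta_2]+1)^{1/2}$ which AM-GM converts to $N^{-1/6+\alpha/2+\kappa/2}\mathcal{K}_s$ plus subleading. The two low-momentum pieces are controlled by combining the cardinality bound (\ref{ineqn N_l and N_s}), the inner estimate $\mathcal{N}_i[\delta_1]\lesssim N^{-1/3-\delta_1}\mathcal{K}_s$, and $\Vert W\Vert_1\lesssim a\sim N^{-1}$; choosing $\delta_1,\delta_2$ in analogy with Lemma~\ref{contol V_21'Omega [K,B]} makes every remaining contribution subordinate to the target.

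\textbf{Conjugation and assembly.} The integral $\int_0^1 e^{-tB^\prime}(\cdot)e^{tB^\prime}dt$ is handled by Lemma~\ref{control eB' on N_ex lemma} (for $(\mathcal{N}_{ex}+1)^n$) and Corollary~\ref{control eB' on K,V_4} (for $\mathcal{K}_s$ and $\mathcal{V}_4$); in particular Corollary~\ref{control eB' on K,V_4} converts any bare $\mathcal{K}_s$ into $\mathcal{K}_s+N^{\kappa/2}\ln N(\mathcal{N}_{ex}+1)$ and any $\mathcal{V}_4$ into $\mathcal{V}_4+(\mathcal{N}_{ex}+1)+N^{-5/3+\kappa-\alpha+\gamma}(\mathcal{N}_{ex}+1)^3$, which exactly matches the right-hand side of (\ref{bound eB' on Gamma'}). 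The errors $\mathcal{E}_{[\mathcal{K},B^\prime]}$ and $\mathcal{E}_{[\mathcal{V}_4,B^\prime]}$ are already bounded by (\ref{E_[K,B']}) and (\ref{E_[V_4,B']}) within the claimed estimate under the standing assumption $\kappa\le-2(\alpha+\gamma)$, with the cubic $(\mathcal{N}_{ex}+1)^3$ contribution tracing back to (\ref{E_[V_4,B']}). The principal obstacle is the sharp bound on $\mathcal{W}$: because the ``inside'' variable $q$ now lies in the excitation sector all four operators in $\mathcal{W}$ become excitation-like, so the Cauchy-Schwarz of Lemma~\ref{contol V_21'Omega [K,B]} must be rebalanced carefully to convert the $N^{\kappa/2}$ shell saving into the coefficient $N^{-1/6+\alpha/2+\kappa/2}$ in front of $\mathcal{K}_s$ without generating spurious $(\mathcal{N}_{ex}+1)^2$ or higher excitation powers beyond what appears on the right-hand side.
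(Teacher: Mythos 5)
Your proposal is correct and follows essentially the same route as the paper: the same scattering-identity cancellation $2|k|^2\eta_k+\sum_l\hat v_{k-l}\eta_l+\hat v_k=2W_k$ reduces $\Gamma^\prime$ to $\Gamma^\prime_1+\Gamma^\prime_2$ plus the two commutator errors, $\Gamma^\prime_2$ is handled by the identical weighted Cauchy--Schwarz with $\theta=N^{-\gamma}$ and $\mathcal{N}_h[\kappa]\lesssim N^{-1/3-\kappa}\mathcal{K}_s$, and the conjugation is absorbed via Lemma~\ref{control eB' on N_ex lemma} and Corollary~\ref{control eB' on K,V_4}. The only cosmetic difference is that you bound the $W_k$-weighted quartic term by the $\mathcal{V}_{21}^\prime$ template of Lemma~\ref{contol V_21'Omega [K,B]} while the paper invokes the $\Omega^\prime$ template of Lemma~\ref{cal [K,V_4,B'] lemma}; both rest on the same high/low splitting plus the first estimate in (\ref{energy est bog}) and yield the required $N^{-1/6+\alpha/2+\kappa/2}\mathcal{K}_s$ plus subleading terms.
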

\begin{proof}
  \par By Lemma \ref{cal [K,V_4,B'] lemma} and (\ref{eqn of eta_p rewrt}), we have
  \begin{equation}\label{Gamma' re}
    \Gamma^\prime=\mathcal{E}_{[\mathcal{K},B^\prime]}+\mathcal{E}_{[\mathcal{V}_4,B^\prime]}
    +\Gamma^\prime_1+\Gamma^\prime_2
  \end{equation}
  with
  \begin{equation}\label{Gamma'_j}
    \begin{aligned}
    &\Gamma^\prime_1=\sum_{k,p,q,\sigma,\nu}
  2W_k(a^*_{p-k,\sigma}a^*_{q+k,\nu}a_{q,\nu}a_{p,\sigma}+h.c.)\chi_{p-k\notin B^{\sigma}_F}\chi_{q+k\notin B^{\nu}_F}\chi_{q\in A^\nu_{F,\kappa}}
  \chi_{p\in B^{\sigma}_F}\\
    &\Gamma^\prime_2=\sum_{k,p,q,\sigma,\nu}
  \hat{v}_k(a^*_{p-k,\sigma}a^*_{q+k,\nu}a_{q,\nu}a_{p,\sigma}+h.c.)\chi_{p-k\notin B^{\sigma}_F}\chi_{q+k\notin B^{\nu}_F}\chi_{q\in P^\nu_{F,\kappa}}
  \chi_{p\in B^{\sigma}_F}
    \end{aligned}
  \end{equation}
  The bound of $\Gamma_1^\prime$ is analogous to the bound of $\Omega^\prime$ in Lemma \ref{cal [K,V_4,B'] lemma} (see (\ref{Omega_11' bound}) and (\ref{Omega_2' bound})). We only need to, in the corresponding estimates, replace the estimate $N^{\frac{1}{3}}\Vert\nabla\eta\Vert_1\lesssim N^{\frac{1}{3}}\ell a$ by $\Vert W\Vert_1\lesssim a$, and use the first inequality in (\ref{energy est bog}) rather than the second one. We state the result omitting further details:
  \begin{equation}\label{Gamma'_1 bound}
    \pm\Gamma^\prime_1\lesssim N^{-\frac{1}{6}+\frac{\alpha}{2}+\frac{\kappa}{2}}\mathcal{K}_s
    +N^{-\frac{1}{6}}(\mathcal{N}_{ex}+1).
  \end{equation}
  For the bound of $\Gamma_2^\prime$, we write it in the position space:
  \begin{equation}\label{Gamma'2}
    \begin{aligned}
    \Gamma_2^\prime=\int_{\Lambda^2}v_a(x-y)a^*(h_{x,\sigma})a^*(h_{y,\nu})
    a(H_{y,\nu}[\kappa])a(g_{x,\sigma})dxdy+h.c.
    \end{aligned}
  \end{equation}
  and bound it by
  \begin{equation}\label{Gamma'2 bound}
    \pm\Gamma_2^\prime\lesssim N^{\frac{1}{2}}\Vert v_a\Vert_1^\frac{1}{2}
    \big(\theta\mathcal{V}_{4}+\theta^{-1}\mathcal{N}_h[\kappa]\big)
    \lesssim N^{-\gamma}\mathcal{V}_4+N^{-\frac{1}{3}-\kappa+\gamma}\mathcal{K}_s
  \end{equation}
  where we choose $\theta=N^{-\gamma}$. Combining (\ref{E_[K,B']}), (\ref{E_[V_4,B']}), (\ref{Gamma'_1 bound}) and (\ref{Gamma'2 bound}), we have
  \begin{equation}\label{Gamma' bound}
    \begin{aligned}
    \pm\Gamma^\prime
    \lesssim& \big(N^{-\frac{1}{6}+\frac{\alpha}{2}+\frac{\kappa}{2}}+
    N^{-\frac{1}{3}-\kappa+\gamma}\big)
    \mathcal{K}_s+N^{-\frac{1}{6}}
    (\mathcal{N}_{ex}+1)\\
    &+N^{-\gamma}\mathcal{V}_4
    +N^{-\frac{5}{3}+\kappa-\alpha+\gamma}(\mathcal{N}_{ex}+1)^3.
    \end{aligned}
  \end{equation}
  Applying Lemma \ref{control eB' on N_ex lemma} and Corollary \ref{control eB' on K,V_4} to (\ref{Gamma' bound}), we reach (\ref{bound eB' on Gamma'}).
\end{proof}

\begin{lemma}\label{cal [V_3,B'] lemma}
We have
  \begin{equation}\label{cal [V_3,B']}
    \begin{aligned}
    \int_{0}^{1}\int_{t}^{1}e^{-sB^\prime}[\mathcal{V}_3,B^\prime]e^{sB^\prime}dsdt
    =\sum_{k}\hat{v}_k\eta_k\sum_{\sigma\neq\nu}N_\sigma \mathcal{N}_{ex,\nu}
    +\mathcal{E}_{[\mathcal{V}_3,B^\prime]}
    \end{aligned}
  \end{equation}
  where
  \begin{equation}\label{bound E_[V_3,B']}
    \begin{aligned}
    \pm\mathcal{E}_{[\mathcal{V}_3,B^\prime]}
    \lesssim& \big(N^{-\frac{1}{3}-\kappa}+
    N^{\kappa+\gamma-\alpha}\big)
    \big(\mathcal{K}_s+N^{\frac{\kappa}{2}}\ln N
    (\mathcal{N}_{ex}+1)\big)\\
    &+N^{-\gamma}\big(\mathcal{V}_4+(\mathcal{N}_{ex}+1)\big)
    +N^{-\frac{3}{2}+\frac{\kappa}{2}-\frac{\alpha}{2}}(\mathcal{N}_{ex}+1)^3\\
    &+\big(N^{-1}+N^{-\frac{5}{6}-\frac{\alpha}{2}+\frac{\kappa}{2}}
    +N^{-\frac{2}{3}-\alpha+\kappa+\gamma}\big)
    (\mathcal{N}_{ex}+1)^2.
    \end{aligned}
  \end{equation}
\end{lemma}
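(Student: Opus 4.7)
lemma}]
The plan is to expand the inner commutator $[\mathcal{V}_3,B^\prime]=[\mathcal{V}_3,A^\prime]-[\mathcal{V}_3,(A^\prime)^*]$ via the quartic commutator identity (\ref{quartic commu}) of Lemma \ref{lem commutator}, producing a quartic ``doubly-contracted'' piece and sextic ``singly-contracted'' pieces, in direct analogy with the decomposition $[\mathcal{V}_{21},B]=\hat{\Xi}_1+\hat{\Xi}_2+\hat{\Xi}_3$ in the proof of Lemma \ref{cal com V_21 lemma}. The momentum constraints in $\mathcal{V}_3$ (namely $p\in B_F^\sigma$, $p-k\notin B_F^\sigma$, $q,q+k\notin B_F^\nu$) together with those in $B^\prime$ ($p^\prime\in B_F^{\sigma^\prime}$, $q^\prime\in A_{F,\kappa}^{\nu^\prime}$, $p^\prime-k^\prime,q^\prime+k^\prime\notin B_F$) will forbid many contractions; only contractions pairing a $B_F$-momentum with a $B_F$-momentum, or an excited momentum with an excited momentum, can contribute.

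First I would isolate the main quadratic piece from $[\mathcal{V}_3,(A^\prime)^*]$. The contraction $p=r$, $\sigma=\tau$ combined with $q+k=s$, $\nu=\varpi$ (with $s\in A_{F,\kappa}^\nu$) yields terms of the form $\hat{v}_k\eta_{k+q-p}\,a^*_{p,\sigma}a_{p,\sigma}a^*_{q+k,\nu}a_{q+k,\nu}$ after reordering via (\ref{anticommutator}); summing over the appropriate indices and using $\hat{v}_{k+q-p}-\hat{v}_k=O(a^3|q-p|^2)$ exactly as in (\ref{sum_k v_keta_k+q-p-eta_k}), together with the shell identification $q+k\in A_{F,\kappa}^\nu$ and the definitions (\ref{define op N_sigma})--(\ref{number of exicited particles ops}), one recovers $\sum_k\hat v_k\eta_k\sum_{\sigma\neq\nu}N_\sigma\mathcal{N}_{ex,\nu}$ up to a finite sum of negligible remainders controlled by Lemma \ref{l1 lemma} and (\ref{part sumv_keta_k}), mirroring the analysis of $\hat{\Xi}_{11}$ and $\hat{\Xi}_{12}$.

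All remaining pieces are absorbed into $\mathcal{E}_{[\mathcal{V}_3,B^\prime]}$. The quartic ``off-diagonal'' contractions produce terms structurally identical to $\hat{\Xi}_{1232}$, $\hat{\Xi}_{1233}$ and can be bounded in position space using $\Vert\eta\Vert_2^2\lesssim a^2\ell$ and $\Vert v_a\Vert_1\lesssim a$, giving the $(\mathcal{N}_{ex}+1)^2$ contributions in (\ref{bound E_[V_3,B']}). The sextic ``singly-contracted'' pieces split naturally into two types: those where $\mathcal{V}_3$'s excited leg survives (analogous to $\hat{M}_1$ in Lemma \ref{cal com V_21 lemma}), and those where the Newton--Leibniz remainder $[a_{q+k}a_{p-k},B']$ appears (analogous to $\hat{M}_2$). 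For the former, switching to position space and splitting $a(h)=a(H[\delta])+a(L[\delta])$ with $\delta=\tfrac{1}{3}+\gamma$, precisely as done in (\ref{hat M_1j}), yields the competing bound $\theta\mathcal{V}_4+\theta^{-1}\mathcal{N}_h[\delta](\mathcal{N}_{ex}+1)$, and balancing $\theta=N^{-\gamma}$ produces both the $N^{-\gamma}\mathcal{V}_4$ and the $N^{\kappa+\gamma-\alpha}\mathcal{K}_s$ terms (after accounting for the extra $A_{F,\kappa}^\nu$ factor $N^{\frac{2}{3}+\kappa}$). For the latter, the same machinery used in bounding $\hat{M}_2$ in (\ref{hat M_2 bound}), combined with Lemma \ref{energy est bog lem}, delivers the remaining $(\mathcal{N}_{ex}+1)^3$ and $(\mathcal{N}_{ex}+1)^{3/2}$ contributions.

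Finally, to pass from the bounds on $[\mathcal{V}_3,B^\prime]$ to bounds on the conjugated integral, I would apply Lemma \ref{control eB' on N_ex lemma} for particle number operators and Corollary \ref{control eB' on K,V_4} for $\mathcal{K}_s$ and $\mathcal{V}_4$ under the action of $e^{sB^\prime}$, which is responsible for the $N^{\frac{\kappa}{2}}\ln N(\mathcal{N}_{ex}+1)$ corrections appearing in (\ref{bound E_[V_3,B']}). The main obstacle will be the careful bookkeeping of the sextic terms: unlike in Lemma \ref{cal com V_21 lemma}, the Newton--Leibniz step here must be iterated on the $e^{sB^\prime}$ conjugation of $a^*_{r,\tau}a^*_{s,\varpi}$-type pairs without re-introducing the $Q_{\mathcal{G}_N}$-sized quadratic contribution that we just used to cancel it, so one must verify that each singly-contracted residue is genuinely sub-leading under the constraint $-\frac{11}{48}<\kappa\leq-2(\alpha+\gamma)$ imposed in Proposition \ref{p2}.
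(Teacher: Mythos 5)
Your plan correctly identifies that $[\mathcal{V}_3,B^\prime]$ splits into quartic doubly-contracted pieces (which contain the desired quadratic contribution) and sextic singly-contracted pieces (which go into the error), and you correctly anticipate that Lemma~\ref{control eB' on N_ex lemma} and Corollary~\ref{control eB' on K,V_4} are used to handle the $e^{sB^\prime}$ conjugation. However, there are two substantive structural gaps.

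First, the quadratic term $\sum_k\hat v_k\eta_k\sum_{\sigma\neq\nu}N_\sigma\mathcal{N}_{ex,\nu}$ is \emph{not} produced by a single contraction with coefficient $\hat v_k\eta_{k+q-p}$ as you describe. In the paper's decomposition $[\mathcal{V}_3,B^\prime]=\sum_{j=1}^{12}\Upsilon_j$, \emph{two} quartic pieces contribute: $\Upsilon_1$ (contracting across both excited legs of $\mathcal{V}_3$ with $A^\prime$'s excited legs, yielding $2\sum_k\hat v_k\eta_k\sum_{\sigma,\nu}N_\sigma\mathcal{N}_{ex,\nu}$ from its diagonal $k=l$ part), and $\Upsilon_2$ (a spin-exchange contraction, whose diagonal part $\delta_{q,s}\delta_{\sigma,\nu}$ gives $-2\sum_k\hat v_k\eta_k\sum_\sigma N_\sigma\mathcal{N}_{ex,\sigma}$). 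The $\sum_{\sigma\neq\nu}$ structure arises entirely from the cancellation of the $\sigma=\nu$ diagonal between these two pieces, after which the double time integral supplies the factor $\tfrac12$. If you do not arrange this cancellation explicitly, you will produce a spurious $\sum_{\sigma=\nu}$ contribution of the same order as the main term, which does not match $Q_{\mathcal{G}_N}$.

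Second, your proposal to Newton--Leibniz-expand the $e^{sB^\prime}$ conjugation of the sextic pieces (producing $\hat M_1,\hat M_2$-type terms as in Lemma~\ref{cal com V_21 lemma}) misreads the structure of the proof of this lemma. The paper does \emph{not} iterate Newton--Leibniz here: all ten sextic commutator pieces $\Upsilon_3,\dots,\Upsilon_{12}$ are algebraic expressions computed once via Lemma~\ref{lem commutator} and bounded directly in position space, and the conjugation by $e^{sB^\prime}$ is handled at the very end via the Gronwall-type Lemma~\ref{control eB' on N_ex lemma} and Corollary~\ref{control eB' on K,V_4}. The reason the direct bound suffices here — whereas $\hat\Xi_3$ in Lemma~\ref{cal com V_21 lemma} genuinely needs the iterated expansion — is that $B^\prime$ carries the $A_{F,\kappa}^\nu$ shell cutoff, which supplies an extra factor $N^{\frac{2}{3}+\kappa}$ on the relevant $a(L_{x,\nu}[\kappa])$ legs and makes the sextic terms small by themselves. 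Exploiting this cutoff directly is essential; mimicking the uncut-off $B$ analysis would both complicate the argument and risk re-introducing the worse bounds that the cutoff is designed to avoid. Your concluding concern about re-introducing a $Q_{\mathcal{G}_N}$-sized contribution through iterated Newton--Leibniz is therefore a non-issue once the direct route is taken, and it points to the underlying confusion about which analogy is operative.
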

\begin{proof}
  \par By Lemma \ref{lem commutator}, we have
  \begin{equation}\label{Upsilon}
    [\mathcal{V}_3,B^\prime]=\sum_{j=1}^{12}\Upsilon_j
  \end{equation}
  with
  \begin{align*}
    &\Upsilon_1=\sum_{\substack{\sigma,\nu\\k,p,q}}\sum_{l,r,s}
    \hat{v}_k\eta_l\delta_{p-k,r-l}\delta_{q+k,s+l}
    a^*_{p,\sigma}a^*_{q,\nu}a_{s,\nu}a_{r,\sigma}\\
    &\quad\quad\times
    \chi_{p-k\notin B_F^\sigma}\chi_{p\in B_F^\sigma}\chi_{q+k,q\notin B_F^\nu}
  \chi_{r\in B_F^\sigma}\chi_{s\in A_{F,\kappa}^\nu}+h.c.\\
  &\Upsilon_2=-\sum_{\substack{\sigma,\nu\\k,p,q}}\sum_{l,r,s}
  \hat{v}_k\eta_l\delta_{p-k,s+l}\delta_{q+k,r-l}
    a^*_{p,\sigma}a^*_{q,\nu}a_{s,\sigma}a_{r,\nu}\\
    &\quad\quad\times
    \chi_{p-k\notin B_F^\sigma}\chi_{p\in B_F^\sigma}\chi_{q+k,q\notin B_F^\nu}
  \chi_{r\in B_F^\nu}\chi_{s\in A_{F,\kappa}^\sigma}+h.c.\\
  &\Upsilon_3=\sum_{\substack{\sigma,\nu\\k,p,q}}\sum_{\varpi,l,r,s}
  \hat{v}_k\eta_l\delta_{q,r-l}
  a^*_{p-k,\sigma}a^*_{q+k,\nu}a^*_{s+l,\varpi}a_{p,\sigma}a_{s,\varpi}a_{r,\nu}\\
  &\quad\quad\times
  \chi_{p-k\notin B_F^\sigma}\chi_{p\in B_F^\sigma}\chi_{q+k,q\notin B_F^\nu}
  \chi_{r\in B_F^\nu}\chi_{s+l\notin B^\varpi_F}\chi_{s\in A_{F,\kappa}^\varpi}+h.c.\\
  &\Upsilon_4=-\sum_{\substack{\sigma,\nu\\k,p,q}}\sum_{\tau,l,r,s}
  \hat{v}_k\eta_l\delta_{q,s+l}
  a^*_{p-k,\sigma}a^*_{q+k,\nu}a^*_{r-l,\tau}a_{p,\sigma}a_{s,\nu}a_{r,\tau}\\
  &\quad\quad\times
  \chi_{p-k\notin B_F^\sigma}\chi_{p\in B_F^\sigma}\chi_{q+k,q\notin B_F^\nu}
  \chi_{r-l\notin B_F^\tau}\chi_{r\in B_F^\tau}\chi_{s\in A_{F,\kappa}^\nu}+h.c.\\
  &\Upsilon_5=\sum_{\substack{\sigma,\nu\\k,p,q}}\sum_{\tau,l,r,s}
  \hat{v}_k\eta_l\delta_{q+k,s}
  a^*_{r-l,\tau}a^*_{s+l,\nu}a^*_{p-k,\sigma}a_{r,\tau}a_{q,\nu}a_{p,\sigma}\\
  &\quad\quad\times
  \chi_{p-k\notin B_F^\sigma}\chi_{p\in B_F^\sigma}\chi_{q+k\in A_{F,\kappa}^\nu}
  \chi_{q\notin B_F^\nu}\chi_{r-l\notin B_F^\tau}\chi_{r\in B_F^\tau}
  \chi_{s+l\notin B^\nu_F}+h.c.\\
  &\Upsilon_6=-\sum_{\substack{\sigma,\nu\\k,p,q}}\sum_{\tau,l,r,s}
  \hat{v}_k\eta_l\delta_{p-k,s}
  a^*_{r-l,\tau}a^*_{s+l,\sigma}a^*_{q+k,\nu}a_{r,\tau}a_{q,\nu}a_{p,\sigma}\\
  &\quad\quad\times
  \chi_{p-k\in A_{F,\kappa}^\sigma}\chi_{p\in B_F^\sigma}\chi_{q+k,q\notin B_F^\nu}
  \chi_{r-l\notin B_F^\tau}\chi_{r\in B_F^\tau}\chi_{s+l\notin B^\sigma_F}+h.c.\\
  &\Upsilon_7=\sum_{\substack{\sigma,\nu\\k,p,q}}\sum_{\varpi,l,r,s}
  \hat{v}_k\eta_l\delta_{q+k,r-l}
  a^*_{p,\sigma}a^*_{q,\nu}a^*_{s+l,\varpi}a_{p-k,\sigma}a_{s,\varpi}a_{r,\nu}\\
  &\quad\quad\times
  \chi_{p-k\notin B_F^\sigma}\chi_{p\in B_F^\sigma}\chi_{q+k,q\notin B_F^\nu}
  \chi_{r\in B_F^\nu}\chi_{s+l\notin B^\varpi_F}\chi_{s\in A_{F,\kappa}^\varpi}+h.c.\\
  &\Upsilon_8=-\sum_{\substack{\sigma,\nu\\k,p,q}}\sum_{\varpi,l,r,s}
  \hat{v}_k\eta_l\delta_{p-k,r-l}
  a^*_{p,\sigma}a^*_{q,\nu}a^*_{s+l,\varpi}a_{q+k,\nu}a_{s,\varpi}a_{r,\sigma}\\
  &\quad\quad\times
  \chi_{p-k\notin B_F^\sigma}\chi_{p\in B_F^\sigma}\chi_{q+k,q\notin B_F^\nu}
  \chi_{r\in B_F^\sigma}\chi_{s+l\notin B^\varpi_F}\chi_{s\in A_{F,\kappa}^\varpi}+h.c.\\
  &\Upsilon_9=\sum_{\substack{\sigma,\nu\\k,p,q}}\sum_{\tau,l,r,s}
  \hat{v}_k\eta_l\delta_{p-k,s+l}
  a^*_{p,\sigma}a^*_{q,\nu}a^*_{r-l,\tau}a_{q+k,\nu}a_{s,\sigma}a_{r,\tau}\\
  &\quad\quad\times
  \chi_{p-k\notin B_F^\sigma}\chi_{p\in B_F^\sigma}\chi_{q+k,q\notin B_F^\nu}
  \chi_{r-l\notin B_F^\tau}\chi_{r\in B_F^\tau}\chi_{s\in A_{F,\kappa}^\sigma}+h.c.\\
  &\Upsilon_{10}=-\sum_{\substack{\sigma,\nu\\k,p,q}}\sum_{\tau,l,r,s}
  \hat{v}_k\eta_l\delta_{q+k,s+l}
  a^*_{p,\sigma}a^*_{q,\nu}a^*_{r-l,\tau}a_{p-k,\sigma}a_{s,\nu}a_{r,\tau}\\
  &\quad\quad\times
  \chi_{p-k\notin B_F^\sigma}\chi_{p\in B_F^\sigma}\chi_{q+k,q\notin B_F^\nu}
  \chi_{r-l\notin B_F^\tau}\chi_{r\in B_F^\tau}\chi_{s\in A_{F,\kappa}^\nu}+h.c.\\
  &\Upsilon_{11}=\sum_{\substack{\sigma,\nu\\k,p,q}}\sum_{\varpi,l,r,s}
  \hat{v}_k\eta_l\delta_{p,r}
  a^*_{r-l,\sigma}a^*_{s+l,\varpi}a^*_{q,\nu}a_{s,\varpi}a_{q+k,\nu}a_{p-k,\sigma}\\
  &\quad\quad\times
  \chi_{p-k\notin B_F^\sigma}\chi_{p\in B_F^\sigma}\chi_{q+k,q\notin B_F^\nu}
  \chi_{r-l\notin B_F^\sigma}\chi_{s+l\notin B^\varpi_F}\chi_{s\in A_{F,\kappa}^\varpi}+h.c.\\
  &\Upsilon_{12}=-\sum_{\substack{\sigma,\nu\\k,p,q}}\sum_{\tau,l,r,s}
  \hat{v}_k\eta_l\delta_{q,s}
  a^*_{r-l,\tau}a^*_{s+l,\nu}a_{r,\tau}a^*_{p,\sigma}a_{q+k,\nu}a_{p-k,\sigma}\\
  &\quad\quad\times
  \chi_{p-k\notin B_F^\sigma}\chi_{p\in B_F^\sigma}\chi_{q+k\notin B_F^\nu}\chi_{q\in A_{F,\kappa}^\nu}
  \chi_{r-l\notin B_F^\tau}\chi_{r\in B_F^\tau}\chi_{s+l\notin B^\nu_F}+h.c.
  \end{align*}
We then analyse (\ref{Upsilon}) term by term.
\begin{flushleft}
  \textbf{Analysis of $\Upsilon_1\&\Upsilon_2$:}
\end{flushleft}
Using $1=\chi_{k=l}+\chi_{k\neq l}$, we rewrite $\Upsilon_1=\Upsilon_{1,1}+\Upsilon_{1,2}$. For $\Upsilon_{1,1}$, we further decompose it by $\Upsilon_{1,1}=\sum_{j=1}^{3}\Upsilon_{1,1j}$ with
\begin{equation}\label{Upsilon_11j}
  \begin{aligned}
  &\Upsilon_{1,11}=2\sum_{\substack{\sigma,\nu\\k,p,q}}\hat{v}_k\eta_k
  a^*_{p,\sigma}a_{p,\sigma}a^*_{q,\nu}a_{q,\nu}
  \chi_{p\in B_F^\sigma}
  \chi_{q\notin B_{F}^\nu}\\
  &\Upsilon_{1,12}=-2\sum_{\substack{\sigma,\nu\\k,p,q}}\hat{v}_k\eta_k
  a^*_{p,\sigma}a_{p,\sigma}a^*_{q,\nu}a_{q,\nu}
  \chi_{p\in B_F^\sigma}
  \chi_{q\in P_{F,\kappa}^\nu}\\
  &\Upsilon_{1,13}=2\sum_{\substack{\sigma,\nu\\k,p,q}}\hat{v}_k\eta_k
  a^*_{p,\sigma}a_{p,\sigma}a^*_{q,\nu}a_{q,\nu}
  (\chi_{p-k\notin B^\sigma_F}\chi_{q+k\notin B_F^\nu}-1)\chi_{p\in B_F^\sigma}
  \chi_{q\in A_{F,\kappa}^\nu}
  \end{aligned}
\end{equation}
Using (\ref{cal N_sigma}) and (\ref{number of exicited particles ops}), we have
\begin{equation}\label{Upsilon111}
\begin{aligned}
  \Upsilon_{1,11}&=2\sum_{k}\hat{v}_k\eta_k\sum_{\sigma,\nu}(N_\sigma
  -\mathcal{N}_{ex,\sigma}) \mathcal{N}_{ex,\nu}\\
  &=2\sum_{k}\hat{v}_k\eta_k\sum_{\sigma,\nu}N_\sigma\mathcal{N}_{ex,\nu}
  +\mathcal{E}_{\Upsilon_{1,11}},
\end{aligned}
\end{equation}
where
\begin{equation}\label{Upsilon 111 error bound}
  \pm\mathcal{E}_{\Upsilon_{1,11}}\lesssim a\mathcal{N}_{ex}^2,
\end{equation}
since $\vert\sum_{k}\hat{v}_k\eta_k\vert\lesssim a$.
\par Recall the definition (\ref{tools2}) and inequality (\ref{ineqn N_h and N_i}) in Lemma \ref{lemma ineqn K_s}, we also have
\begin{equation}\label{Upsilon 112}
  \pm\Upsilon_{1,12}=\mp2\sum_{k}\hat{v}_k\eta_k\sum_{\sigma,\nu}(N_\sigma
  -\mathcal{N}_{ex,\sigma}) \mathcal{N}_{h,\nu}[\kappa]\lesssim N^{-\frac{1}{3}-\kappa}\mathcal{K}_s,
\end{equation}
since obviously $\pm(N_\sigma-\mathcal{N}_{ex,\sigma})\lesssim N$.
\par For $\Upsilon_{1,13}$, recalling that for any $p\in B_F^\sigma$ and $q\in A_{F,\kappa}^\nu$, we have $\vert p\vert, \vert q\vert\lesssim N^{\frac{1}{3}}$. Therefore $(\chi_{p-k\notin B^{\sigma}_F}\chi_{q+k\notin B^{\nu}_F}-1)\neq0$ implies $\vert k\vert\lesssim N^{\frac{1}{3}}$. Since $\vert \hat{v}_k\vert\lesssim a$ and $\vert\eta_k\vert\lesssim a\ell^2$, we can bound it by
\begin{equation}\label{Upsilon 113}
  \pm\Upsilon_{1,13}\lesssim N^2a^2\ell^2\mathcal{N}_{l}[\kappa]\lesssim N^{-\frac{2}{3}-2\alpha}\mathcal{N}_{ex}.
\end{equation}
\par For $\Upsilon_{1,2}$, we further decompose it by $\Upsilon_{1,2}=\Upsilon_{1,21}+\Upsilon_{1,22}$ with
\begin{equation}\label{Upsilon_12j}
  \begin{aligned}
  &\Upsilon_{1,21}=-\sum_{\substack{\sigma,\nu\\k,p,q}}\sum_{l,r,s}
    \hat{v}_k\eta_l\delta_{p-k,r-l}\delta_{q+k,s+l}
    a_{r,\sigma}a^*_{q,\nu}a_{s,\nu}a^*_{p,\sigma}\\
    &\quad\quad\times
    \chi_{p-k\notin B_F^\sigma}\chi_{p\in B_F^\sigma}\chi_{q+k,q\notin B_F^\nu}
  \chi_{r\in B_F^\sigma}\chi_{s\in A_{F,\kappa}^\nu}+h.c.\\
   &\Upsilon_{1,22}=\sum_{\substack{\sigma,\nu\\k,p,q}}\sum_{l,r,s}
    \hat{v}_k\eta_l\delta_{p-k,r-l}\delta_{q+k,s+l}\chi_{k=l}
    a_{r,\sigma}a^*_{q,\nu}a_{s,\nu}a^*_{p,\sigma}\\
    &\quad\quad\times
    \chi_{p-k\notin B_F^\sigma}\chi_{p\in B_F^\sigma}\chi_{q+k,q\notin B_F^\nu}
  \chi_{r\in B_F^\sigma}\chi_{s\in A_{F,\kappa}^\nu}+h.c.
  \end{aligned}
\end{equation}
Obviously,
\begin{equation}\label{Upsilon_122}
  \pm\Upsilon_{1,22}\lesssim a\mathcal{N}_{ex}\mathcal{N}_{l}[\kappa]
  \lesssim a\mathcal{N}_{ex}^{2}.
\end{equation}
Switching to the position space, we have
\begin{equation}\label{Upsilon_121}
  \begin{aligned}
  \Upsilon_{1,21}&=-\sum_{\sigma,\nu}\int_{\Lambda^4}v_a(x_1-x_3)\eta(x_2-x_4)
  \Big(\sum_{p_1\notin B_F^\sigma}e^{ip_1(x_2-x_1)}\Big)
  \Big(\sum_{p_2\notin B_F^\nu}e^{ip_2(x_4-x_3)}\Big)\\
  &\quad\quad\times a(g_{x_2,\sigma})a^*(h_{x_3,\nu})a(L_{x_4,\nu}[\kappa])
  a^*(g_{x_1,\sigma})dx_1dx_2dx_3dx_4+h.c.
  \end{aligned}
\end{equation}
Using
  \begin{equation*}
     \sum_{p_1\notin B_F^\sigma}\sum_{p_2\notin B_F^\nu}=
  \sum_{p_1,p_2}-\sum_{p_1}\sum_{p_2\in B_F^\nu}-\sum_{p_2}\sum_{p_1\in B_F^\sigma}
  +\sum_{p_1\in B_F^\sigma}\sum_{p_2\in B_F^\nu},
  \end{equation*}
  we rewrite $\Upsilon_{1,21}=\sum_{j=1}^{4}\Upsilon_{1,21j}$. Recall (\ref{useful eqn1}), we can bound them by
  \begin{equation}\label{Upsilon121 bound}
    \begin{aligned}
    \pm\Upsilon_{1,211}&\lesssim N^{\frac{5}{6}+\frac{\kappa}{2}}\Vert v_a\Vert_1\Vert\eta\Vert_{\infty}(\mathcal{N}_{ex}+1)
    \lesssim N^{-\frac{1}{6}+\frac{\kappa}{2}}(\mathcal{N}_{ex}+1)\\
    \pm\Upsilon_{1,212},\pm\Upsilon_{1213}&\lesssim N^{\frac{4}{3}+\frac{\kappa}{2}}\Vert v_a\Vert_1\Vert\eta\Vert_{2}(\mathcal{N}_{ex}+1)
    \lesssim N^{-\frac{5}{6}+\frac{\kappa}{2}-\frac{\alpha}{2}}(\mathcal{N}_{ex}+1)\\
    \pm\Upsilon_{1,214}&\lesssim N^{\frac{11}{6}+\frac{\kappa}{2}}\Vert v_a\Vert_1\Vert\eta\Vert_{1}(\mathcal{N}_{ex}+1)
    \lesssim N^{-\frac{5}{6}+\frac{\kappa}{2}-2\alpha}(\mathcal{N}_{ex}+1)
    \end{aligned}
  \end{equation}
  Therefore
  \begin{equation}\label{Upsilon121 bound true}
    \pm\Upsilon_{1,21}\lesssim N^{-\frac{1}{6}+\frac{\kappa}{2}}(\mathcal{N}_{ex}+1).
  \end{equation}
  \par The evaluation of $\Upsilon_2$ goes similarly. We use $1=\delta_{q,s}\delta_{\sigma,\nu}+(1-\delta_{q,s}\delta_{\sigma,\nu})$ to rewrite $\Upsilon_2=\Upsilon_{21}+\Upsilon_{22}$. Similar to the analysis of $\Upsilon_{1,13}$ defined in (\ref{Upsilon_11j}), we calculate
  \begin{equation}\label{Upsilon 21}
    \Upsilon_{21}=-2\sum_{\sigma,k,p,q}\hat{v}_{k+p-q}\eta_{k}
    a^*_{p,\sigma}a^*_{q,\sigma}a_{q,\sigma}a_{p,\sigma}\chi_{p\in B_F^\sigma}
    \chi_{q\in A_{F,\kappa}^\sigma}+\mathcal{E}_{\Upsilon_{21}},
  \end{equation}
  where
  \begin{equation}\label{Upsilon 21 error}
    \pm\mathcal{E}_{\Upsilon_{21}}\lesssim N^{-\frac{2}{3}-2\alpha}\mathcal{N}_{ex}.
  \end{equation}
  Notice that for any $p\in B_F^\sigma$ and $q\in A_{F,\kappa}^\nu$, we have $\vert p\vert, \vert q\vert\lesssim N^{\frac{1}{3}}$, and therefore
  \begin{equation}\label{faklvsndfjd}
    \begin{aligned}
    \vert\hat{v}_{k+p-q}-\hat{v}_k\vert\lesssim\int_{B_a}\vert p-q\vert \vert x\vert v_a(x)
    dx\lesssim N^{\frac{1}{3}}a^2.
    \end{aligned}
  \end{equation}
  Combining (\ref{faklvsndfjd}) with Lemma \ref{l1 lemma}, we know that for any $p\in B_F^\sigma$ and $q\in A_{F,\kappa}^\nu$ fixed,
  \begin{equation}\label{fudsfsdkf}
    \Big\vert\sum_{k}(\hat{v}_{k+p-q}-\hat{v}_k)\eta_k\Big\vert\lesssim
    a^2N^{\frac{1}{3}}.
  \end{equation}
  Therefore we can rewrite
  \begin{equation}\label{Upsilon_21 bound}
    \Upsilon_{21}=-2\sum_{k}\hat{v}_k\eta_k\sum_{\sigma}N_\sigma\mathcal{N}_{ex,\sigma}
    +\mathcal{E}_{\Upsilon_{21}},
  \end{equation}
  where
  \begin{equation}\label{Upsilon 21 error true}
    \pm\mathcal{E}_{\Upsilon_{21}}\lesssim N^{-\frac{2}{3}}\mathcal{N}_{ex}
    +N^{-1}\mathcal{N}_{ex}^2+N^{-\frac{1}{3}-\kappa}\mathcal{K}_s.
  \end{equation}
  For the estimate of $\Upsilon_{22}$, it is analogous to the estimates of (\ref{Upsilon_12j}), hence we state the result without further details:
  \begin{equation}\label{Upsilon_22}
    \pm\Upsilon_{22}\lesssim N^{-\frac{1}{6}+\frac{\kappa}{2}}(\mathcal{N}_{ex}+1).
  \end{equation}
  \par Collecting the evaluations above, together with Lemma \ref{control eB' on N_ex lemma} and Corollary \ref{control eB' on K,V_4}, we obtain
  \begin{equation}\label{Upsilon 1and 2}
     \begin{aligned}
    \int_{0}^{1}\int_{t}^{1}e^{-sB^\prime}(\Upsilon_1+\Upsilon_2)e^{sB^\prime}dsdt
    =\sum_{k}\hat{v}_k\eta_k\sum_{\sigma\neq\nu}N_\sigma \mathcal{N}_{ex,\nu}
    +\mathcal{E}_{\Upsilon_{1\&2}}
    \end{aligned}
  \end{equation}
  with
  \begin{equation}\label{Upsilon 1and 2 error}
 \begin{aligned}
    \pm\mathcal{E}_{\Upsilon_{1\&2}}\lesssim& N^{-\frac{1}{6}+\frac{\kappa}{2}}
    (\mathcal{N}_{ex}+1)
    +N^{-1}(\mathcal{N}_{ex}+1)^2\\
    &+N^{-\frac{1}{3}-\kappa}\big(\mathcal{K}_s+N^{\frac{\kappa}{2}}\ln N(\mathcal{N}_{ex}+1)\big).
 \end{aligned}
  \end{equation}
 \begin{flushleft}
  \textbf{Analysis of $\Upsilon_3\&\Upsilon_4$:}
\end{flushleft}
Switching to the position space, we have
\begin{equation}\label{Upsilon 3and4}
  \begin{aligned}
  &\Upsilon_3=\sum_{\sigma,\nu,\varpi}\int_{\Lambda^4}v_a(x_1-x_3)\eta(x_2-x_4)
  \Big(\sum_{p_5\notin B^\nu_F}e^{-ip_5(x_3-x_2)}\Big)a^*(h_{x_1,\sigma})a^*(h_{x_3,\nu})\\
  &\quad\quad\times a^*(h_{x_4,\varpi})
  a(g_{x_1,\sigma})a(L_{x_4,\varpi}[\kappa])a(g_{x_2,\nu})
  dx_1dx_2dx_3dx_4+h.c.\\
  &\Upsilon_4=-\sum_{\sigma,\nu,\tau}\int_{\Lambda^4}v_a(x_1-x_3)\eta(x_2-x_4)
  \Big(\sum_{p_5\notin B^\nu_F}e^{-ip_5(x_3-x_4)}\Big)a^*(h_{x_1,\sigma})a^*(h_{x_3,\nu})\\
  &\quad\quad\times a^*(h_{x_2,\tau})
  a(g_{x_1,\sigma})a(L_{x_4,\nu}[\kappa])a(g_{x_2,\tau})
  dx_1dx_2dx_3dx_4+h.c.
  \end{aligned}
\end{equation}
Using $\sum_{p_5\notin B_F^\nu}=\sum_{p_5}-\sum_{p_5\in B_F^\nu}$, we bound them by
\begin{equation}\label{Upsilon3and 4bound}
\begin{aligned}
  \pm\Upsilon_3,\pm\Upsilon_4&\lesssim \big(N^{\frac{4}{3}+\frac{\kappa}{2}}
  \Vert v_a\Vert_1^\frac{1}{2}\Vert\eta\Vert_2+N^{\frac{11}{6}+\frac{\kappa}{2}}
  \Vert v_a\Vert_1^\frac{1}{2}\Vert\eta\Vert_1\big)(\mathcal{V}_4+\mathcal{N}_{ex}+1)\\
  &\lesssim N^{-\frac{1}{3}-\frac{\alpha}{2}+\frac{\kappa}{2}}
  (\mathcal{V}_4+\mathcal{N}_{ex}+1).
\end{aligned}
\end{equation}

\begin{flushleft}
  \textbf{Analysis of $\Upsilon_5\&\Upsilon_6$:}
\end{flushleft}
Switching to the position space, we have
\begin{equation}\label{Upsilon5and6}
  \begin{aligned}
  &\Upsilon_5=\sum_{\sigma,\nu,\tau}\int_{\Lambda^4}v_a(x_1-x_3)\eta(x_2-x_4)
  \Big(\sum_{p_2\in A_{F,\kappa}^\nu}e^{-ip_2(x_4-x_3)}\Big)
  a^*(h_{x_2,\tau})a^*(h_{x_4,\nu})\\
  &\quad\quad\times
  a^*(h_{x_1,\sigma})a(g_{x_2,\tau})a(h_{x_3,\nu})a(g_{x_1,\sigma})
  dx_1dx_2dx_3dx_4+h.c.\\
  &\Upsilon_6=-\sum_{\sigma,\nu,\tau}\int_{\Lambda^4}v_a(x_1-x_3)\eta(x_2-x_4)
  \Big(\sum_{p_1\in A_{F,\kappa}^\sigma}e^{-ip_2(x_4-x_1)}\Big)
  a^*(h_{x_2,\tau})a^*(h_{x_4,\sigma})\\
  &\quad\quad\times
  a^*(h_{x_3,\nu})a(g_{x_2,\tau})a(h_{x_3,\nu})a(g_{x_1,\sigma})
  dx_1dx_2dx_3dx_4+h.c.
  \end{aligned}
\end{equation}
We bound them by
\begin{equation}\label{Upsilon5and6bound}
  \pm\Upsilon_5,\pm\Upsilon_6\lesssim
  N^{\frac{4}{3}+\frac{\kappa}{2}}
  \Vert v_a\Vert_1\Vert\eta\Vert_2(\mathcal{N}_{ex}+1)^2\lesssim
  N^{-\frac{5}{6}-\frac{\alpha}{2}+\frac{\kappa}{2}}(\mathcal{N}_{ex}+1)^2.
\end{equation}

\begin{flushleft}
  \textbf{Analysis of $\Upsilon_7$\textendash$\Upsilon_{10}$:}
\end{flushleft}
Switching to the position space, we have
\begin{equation}\label{Upsilon7-10}
  \begin{aligned}
  &\Upsilon_7=\sum_{\sigma,\nu,\varpi}\int_{\Lambda^4}v_a(x_1-x_3)\eta(x_2-x_4)
  \Big(\sum_{p_2\notin B_F^\nu}e^{ip_2(x_2-x_3)}\Big)
  a^*(g_{x_1,\sigma})a^*(h_{x_3,\nu})\\
  &\quad\quad\times
  a^*(h_{x_4,\varpi})a(h_{x_1,\sigma})a(L_{x_4,\varpi}[\kappa])a(g_{x_2,\nu})
  dx_1dx_2dx_3dx_4+h.c.\\
  &\Upsilon_8=-\sum_{\sigma,\nu,\varpi}\int_{\Lambda^4}v_a(x_1-x_3)\eta(x_2-x_4)
  \Big(\sum_{p_1\notin B_F^\sigma}e^{ip_1(x_2-x_1)}\Big)
  a^*(g_{x_1,\sigma})a^*(h_{x_3,\nu})\\
  &\quad\quad\times
  a^*(h_{x_4,\varpi})a(h_{x_3,\nu})a(L_{x_4,\varpi}[\kappa])a(g_{x_2,\sigma})
  dx_1dx_2dx_3dx_4+h.c.\\
  &\Upsilon_9=\sum_{\sigma,\nu,\tau}\int_{\Lambda^4}v_a(x_1-x_3)\eta(x_2-x_4)
  \Big(\sum_{p_1\notin B_F^\sigma}e^{ip_1(x_4-x_1)}\Big)
  a^*(g_{x_1,\sigma})a^*(h_{x_3,\nu})\\
  &\quad\quad\times
  a^*(h_{x_2,\tau})a(h_{x_3,\nu})a(L_{x_4,\sigma}[\kappa])a(g_{x_2,\tau})
  dx_1dx_2dx_3dx_4+h.c.\\
  &\Upsilon_{10}=-\sum_{\sigma,\nu,\tau}\int_{\Lambda^4}v_a(x_1-x_3)\eta(x_2-x_4)
  \Big(\sum_{p_2\notin B_F^\nu}e^{ip_2(x_4-x_3)}\Big)
  a^*(g_{x_1,\sigma})a^*(h_{x_3,\nu})\\
  &\quad\quad\times
  a^*(h_{x_2,\tau})a(h_{x_1,\sigma})a(L_{x_4,\nu}[\kappa])a(g_{x_2,\tau})
  dx_1dx_2dx_3dx_4+h.c.
  \end{aligned}
\end{equation}
Using $\sum_{p_i\notin B_F^\sharp}=\sum_{p_i}-\sum_{p_i\in B_F^\sharp}$ for the suitable subscript $i$ and spin $\sharp$, we can bound for $j=7,8,9,10$:
\begin{equation}\label{Upsilon7-10bound}
  \begin{aligned}
  \Upsilon_j&\lesssim\big(N^{\frac{4}{3}+\frac{\kappa}{2}}
  \Vert v_a\Vert_1\Vert\eta\Vert_2+N^{\frac{11}{6}+\frac{\kappa}{2}}
  \Vert v_a\Vert_1\Vert\eta\Vert_1\big)(\mathcal{N}_{ex}+1)^{\frac{3}{2}}\\
  &\lesssim N^{-\frac{5}{6}-\frac{\alpha}{2}+\frac{\kappa}{2}}
  (\mathcal{N}_{ex}+1)^{\frac{3}{2}}.
  \end{aligned}
\end{equation}

\begin{flushleft}
  \textbf{Analysis of $\Upsilon_{11}$:}
\end{flushleft}
Switching to the position space, we have
\begin{equation}\label{Upsilon11}
  \begin{aligned}
  &\Upsilon_{11}=\sum_{\sigma,\nu,\varpi}\int_{\Lambda^4}v_a(x_1-x_3)\eta(x_2-x_4)
  \Big(\sum_{q_1\in B_F^\sigma}e^{iq_1(x_1-x_2)}\Big)
  a^*(h_{x_2,\sigma})a^*(h_{x_4,\varpi})\\
  &\quad\quad\times a^*(h_{x_3,\nu})
  a(L_{x_4,\varpi}[\kappa])a(h_{x_3,\nu})a(h_{x_1,\sigma})
  dx_1dx_2dx_3dx_4+h.c.
  \end{aligned}
\end{equation}
We bound it by
\begin{equation}\label{Upsilon11bound}
  \begin{aligned}
  \Upsilon_{11}&\lesssim N^{\frac{5}{6}+\frac{\kappa}{2}}\Vert v_a\Vert_1^\frac{1}{2}
  \Vert\eta\Vert_2\big(\theta\mathcal{V}_4+\theta^{-1}
  (\mathcal{N}_{ex}+1)^3\big)\\
  &\lesssim N^{-\frac{1}{6}-\frac{\alpha}{2}+\frac{\kappa}{2}}\mathcal{V}_4
  +N^{-\frac{9}{6}-\frac{\alpha}{2}+\frac{\kappa}{2}}(\mathcal{N}_{ex}+1)^3,
  \end{aligned}
\end{equation}
where we choose $\theta=N^{\frac{2}{3}}$.

\begin{flushleft}
  \textbf{Analysis of $\Upsilon_{12}$:}
\end{flushleft}
Switching to the position space, we have
\begin{equation}\label{Upsilon12}
  \begin{aligned}
  &\Upsilon_{12}=-\sum_{\sigma,\nu,\tau}\int_{\Lambda^4}v_a(x_1-x_3)\eta(x_2-x_4)
  \Big(\sum_{p_5\in A_{F,\kappa}^\nu}e^{ip_5(x_3-x_4)}\Big)
  a^*(h_{x_2,\tau})a^*(h_{x_4,\nu})\\
  &\quad\quad\times
  a(g_{x_2,\tau})a^*(g_{x_1,\sigma})a(h_{x_3,\nu})a(h_{x_1,\sigma})
  dx_1dx_2dx_3dx_4+h.c.
  \end{aligned}
\end{equation}
We bound it by
\begin{equation}\label{Upsilon12bound}
\begin{aligned}
  \pm\Upsilon_{12}&\lesssim N^{\frac{4}{3}+\frac{\kappa}{2}}
  \Vert v_a\Vert_1^{\frac{1}{2}}\Vert\eta\Vert_2
  \big(\theta\mathcal{V}_4+\theta^{-1}
  (\mathcal{N}_{ex}+1)^2\big)\\
  &\lesssim N^{-\gamma}\mathcal{V}_4+N^{-\frac{2}{3}-\alpha+\kappa+\gamma}
  (\mathcal{N}_{ex}+1)^2,
\end{aligned}
\end{equation}
where we choose $\theta=N^{\frac{1}{3}+\frac{\alpha}{2}-\frac{\kappa}{2}-\gamma}$.

\begin{flushleft}
  \textbf{Conclusion of Lemma \ref{cal [V_3,B'] lemma}:}
\end{flushleft}
We close the proof of Lemma \ref{cal [V_3,B'] lemma} by collecting all the above results regarding $\Upsilon_{1}$\textendash$\Upsilon_{12}$.
\end{proof}

\begin{proof}[Proof of Proposition \ref{p2}]
  \par We set $\mu=N^{-\gamma}$ and $\kappa=-2(\alpha+\gamma)$, with the further requirement that $0<\gamma<\alpha<\frac{11}{192}$. Then Proposition \ref{p2} is deduced by combining Lemmas \ref{cal eB' on Q_G_N and E_G_N}-\ref{cal [V_3,B'] lemma} and equation (\ref{define J_N}).
\end{proof}

\section{Bogoliubov Transformation}\label{bog}
\par In this section, we do the Bogoliubov transformation and analyse the excitation Hamiltonian $\mathcal{Z}_N$ defined in (\ref{define Z_N 0}) and prove Proposition \ref{p3}. Recall (\ref{define B tilde 0}-\ref{define xi_k,q,p,nu,sigma 0})
\begin{equation}\label{define B tilde}
  \tilde{B}=\frac{1}{2}(\tilde{A}-\tilde{A}^*)
\end{equation}
with
\begin{equation}\label{define A tilde}
  \tilde{A}=\sum_{k,p,q,\sigma,\nu}
  \xi_{k,q,p}^{\nu,\sigma}a^*_{p-k,\sigma}a^*_{q+k,\nu}a_{q,\nu}a_{p,\sigma}\chi_{p-k\notin B^{\sigma}_F}\chi_{q+k\notin B^{\nu}_F}\chi_{p\in B^{\sigma}_F}\chi_{q\in B^{\nu}_F}.
\end{equation}
and the coefficients as in Section \ref{coeffsub}, we choose
\begin{equation}\label{define xi_k,q,p,nu,sigma}
  \xi_{k,q,p}^{\nu,\sigma}=\frac{-\big(W_k+\eta_kk(q-p)\big)}
  {\frac{1}{2}\big(\vert q+k\vert^2+\vert p-k\vert^2-\vert q\vert^2-\vert p\vert^2\big)}
  \chi_{p-k\notin B^{\sigma}_F}\chi_{q+k\notin B^{\nu}_F}\chi_{p\in B^{\sigma}_F}\chi_{q\in B^{\nu}_F}.
\end{equation}
For $p\in B_F^\sigma$ and $q\in B_F^\nu$, we let
\begin{equation}\label{define xi}
  \xi_{q,p}^{\nu,\sigma}(x)=\sum_{k} \xi_{k,q,p}^{\nu,\sigma}e^{ikx}.
\end{equation}
Using (\ref{cal J_N p2}) and Newton-Leibniz formula,  we rewrite $\mathcal{Z}_N$ by
\begin{equation}\label{define Z_N}
  \begin{aligned}
  \mathcal{Z}_N\coloneqq e^{-\tilde{B}}\mathcal{J}_Ne^{\tilde{B}}=
  &C_{\mathcal{G}_N}
  +\mathcal{K}+e^{-\tilde{B}}\mathcal{V}_{4}e^{\tilde{B}}
  +e^{-\tilde{B}}\mathcal{E}_{\mathcal{J}_N}e^{\tilde{B}}
  +\int_{0}^{1}e^{-t\tilde{B}}\tilde{\Gamma}e^{t\tilde{B}}dt\\
  &+\int_{0}^{1}\int_{t}^{1}e^{-s\tilde{B}}[\mathcal{V}_{21}^\prime
  +\Omega,\tilde{B}]e^{s\tilde{B}}dsdt
  \end{aligned}
\end{equation}
with
\begin{equation}\label{define Gamme tilde}
  \tilde{\Gamma}=[\mathcal{K},\tilde{B}]+\mathcal{V}_{21}^\prime
  +\Omega.
\end{equation}
We point out that we choose $\xi_{k,q,p}^{\nu,\sigma}$ so that $\tilde{\Gamma}=0$, as we will see that in Lemma \ref{control eBtiled on K_s lemma}. We then prove Proposition \ref{p3} by analysing each terms on the right hand side of (\ref{define Z_N}) up to some error terms that may be discarded in the large $N$ limit. Corresponding results and detailed proofs are collected in Lemmas \ref{control eBtiled on K_s lemma}, \ref{control eBtilde on E_J_N lemma} and \ref{cal com bog lemma}. To control some of these error terms that may not contribute to the ground state energy up to the second order, we first prove the following lemmas which control the action of $e^{\tilde{B}}$ on $\mathcal{N}_{ex}$, $\mathcal{K}_s$ and $\mathcal{V}_4$, and are fairly useful in the up-coming discussion. Different from previous sections, in which we first control the particle number operator, we first establish the estimate controlling the action of $e^{\tilde{B}}$ on the excited momentum energy operator $\mathcal{K}_s$ in Lemma \ref{control eBtiled on K_s lemma}.
\begin{lemma}\label{control eBtiled on K_s lemma}
We have
  \begin{equation}\label{cal Gammatilde}
    \tilde{\Gamma}=0.
  \end{equation}
  Hence for $\vert t\vert\leq 1$,
  \begin{equation}\label{control eBtiled on K_s}
    e^{-t\tilde{B}}\mathcal{K}_se^{t\tilde{B}}\lesssim \mathcal{K}_s+N^{\frac{1}{3}+\alpha}.
  \end{equation}
\end{lemma}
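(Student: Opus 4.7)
The plan is to prove the two statements in sequence, using the defining equation of $\xi_{k,q,p}^{\nu,\sigma}$ for the first and a Gronwall-Duhamel argument for the second.

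First I would establish $\tilde{\Gamma} = 0$ by direct computation. Since $\mathcal{K} = \sum_{k,\sigma} |k|^2 a^*_{k,\sigma} a_{k,\sigma}$ is a sum of number operators weighted by $|k|^2$, Lemma~\ref{lem commutator}(2) gives
\begin{equation*}
[\mathcal{K}, a^*_{p-k,\sigma} a^*_{q+k,\nu} a_{q,\nu} a_{p,\sigma}] = \bigl(|p-k|^2 + |q+k|^2 - |q|^2 - |p|^2\bigr) a^*_{p-k,\sigma} a^*_{q+k,\nu} a_{q,\nu} a_{p,\sigma}.
\end{equation*}
Hence $[\mathcal{K}, \tilde{A}] = \sum \xi_{k,q,p}^{\nu,\sigma} \bigl(|p-k|^2 + |q+k|^2 - |q|^2 - |p|^2\bigr) a^*_{p-k,\sigma} a^*_{q+k,\nu} a_{q,\nu} a_{p,\sigma} \chi\cdots$, and the identity used to define $\xi_{k,q,p}^{\nu,\sigma}$ in Section~\ref{coeffsub}, namely
\begin{equation*}
\frac{1}{2}\bigl(|q+k|^2 + |p-k|^2 - |q|^2 - |p|^2\bigr)\xi_{k,q,p}^{\nu,\sigma} + W_k + \eta_k k(q-p) = 0,
\end{equation*}
turns this into $[\mathcal{K}, \tilde{A}] = -2 \sum (W_k + \eta_k k(q-p)) a^*_{p-k,\sigma} a^*_{q+k,\nu} a_{q,\nu} a_{p,\sigma} \chi\cdots$. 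Taking $[\mathcal{K},\tilde{B}] = \tfrac{1}{2}([\mathcal{K},\tilde{A}] - [\mathcal{K},\tilde{A}^*]) = \tfrac{1}{2}([\mathcal{K},\tilde{A}] + [\mathcal{K},\tilde{A}]^*)$ and comparing with the definitions of $\mathcal{V}_{21}^\prime$ and $\Omega$ in (\ref{define V_21' and Omega}) yields $[\mathcal{K}, \tilde{B}] = -(\mathcal{V}_{21}^\prime + \Omega)$, so $\tilde{\Gamma} = 0$.

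Next, for the bound (\ref{control eBtiled on K_s}), I would exploit that $[\mathcal{K}_s, \tilde{B}] = [\mathcal{K}, \tilde{B}] = -(\mathcal{V}_{21}^\prime + \Omega)$, since $\mathcal{K}_s$ and $\mathcal{K}$ differ by a scalar. The Duhamel identity gives
\begin{equation*}
e^{-t\tilde{B}} \mathcal{K}_s e^{t\tilde{B}} = \mathcal{K}_s - \int_0^t e^{-s\tilde{B}} (\mathcal{V}_{21}^\prime + \Omega) e^{s\tilde{B}} ds.
\end{equation*}
Applying Lemma~\ref{contol V_21'Omega [K,B]} with a parameter $\beta > 0$ chosen sufficiently small (so that $N^{-\beta}$ and $N^{2\alpha+5\beta-1/3}$, $N^{5\beta-4\alpha-1/3}$ are all small, while $\ell^{-1}N^\beta = N^{1/3+\alpha+\beta}$ and $N^{2/3+\beta}\ell = N^{1/3-\alpha+\beta}$ are at most of order $N^{1/3+\alpha}$ up to a harmless factor) furnishes the operator inequality
\begin{equation*}
\pm(\mathcal{V}_{21}^\prime + \Omega) \lesssim \varepsilon \mathcal{K}_s + N^{1/3+\alpha}
\end{equation*}
for some small $\varepsilon > 0$. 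Conjugation preserves this operator bound, so
\begin{equation*}
\pm e^{-s\tilde{B}}(\mathcal{V}_{21}^\prime + \Omega) e^{s\tilde{B}} \lesssim \varepsilon e^{-s\tilde{B}} \mathcal{K}_s e^{s\tilde{B}} + N^{1/3+\alpha}.
\end{equation*}
Substituting into the Duhamel identity and applying Gronwall's inequality in $t$ yields $e^{-t\tilde{B}} \mathcal{K}_s e^{t\tilde{B}} \lesssim \mathcal{K}_s + N^{1/3+\alpha}$, as desired.

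The main subtlety, and the step that required the careful design of $\xi_{k,q,p}^{\nu,\sigma}$ in Section~\ref{coeffsub}, is the exact cancellation $\tilde{\Gamma}=0$; once this is in place the remainder is a standard Duhamel-plus-Gronwall argument, provided that the $p,q$-dependent Bogoliubov coefficients behave sufficiently well that Lemma~\ref{contol V_21'Omega [K,B]} (which concerns only the coefficients $W_k$ and $\eta_k k(q-p)$) can absorb $\mathcal{V}_{21}^\prime + \Omega$ into $\varepsilon\mathcal{K}_s$ with only a mild $N^{1/3+\alpha}$ remainder.
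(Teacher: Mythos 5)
Your derivation of $\tilde{\Gamma}=0$ is exactly the paper's argument: apply Lemma~\ref{lem commutator}(2) to $[\mathcal{K},\tilde{A}]$, use the defining relation $\tfrac{1}{2}\bigl(|q+k|^2+|p-k|^2-|q|^2-|p|^2\bigr)\xi_{k,q,p}^{\nu,\sigma}+W_k+\eta_k k(q-p)=0$, and match against the definitions of $\mathcal{V}_{21}'$ and $\Omega$ in (\ref{define V_21' and Omega}). The Duhamel-plus-Gronwall framework for (\ref{control eBtiled on K_s}) is also the paper's route, using $[\mathcal{K}_s,\tilde{B}]=[\mathcal{K},\tilde{B}]$ and Lemma~\ref{contol V_21'Omega [K,B]}.

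The one flaw is in how you invoke Lemma~\ref{contol V_21'Omega [K,B]}. You take $\beta>0$ to force a small coefficient $\varepsilon$ on $\mathcal{K}_s$, and then dismiss the resulting constant terms $\ell^{-1}N^{\beta}=N^{1/3+\alpha+\beta}$ and $N^{2/3+\beta}\ell=N^{1/3-\alpha+\beta}$ as ``$N^{1/3+\alpha}$ up to a harmless factor''. That factor is $N^{\beta}$, which for any fixed $\beta>0$ is unbounded in $N$; your argument therefore only yields $e^{-t\tilde{B}}\mathcal{K}_s e^{t\tilde{B}}\lesssim\mathcal{K}_s+N^{1/3+\alpha+\beta}$, which is strictly weaker than the lemma's claim. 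The underlying misconception is that Gronwall requires the coefficient of $\mathcal{K}_s$ on the right of $\pm(\mathcal{V}_{21}'+\Omega)\lesssim C\,\mathcal{K}_s+\cdots$ to be small. It does not: an $O(1)$ constant $C$ merely produces an $e^{C|t|}$ prefactor, which is a harmless $N$-independent constant for $|t|\leq1$. The correct choice is therefore $\beta=0$ (admissible since $2\alpha<1/3$), giving directly $\pm(\mathcal{V}_{21}'+\Omega)\lesssim\mathcal{K}_s+N^{1/3+\alpha}$ and hence the stated bound — which is exactly what the paper does.
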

\begin{proof}
  \par By Lemma \ref{lem commutator} and the definition of $\xi_{k,q,p}^{\nu,\sigma}$ in (\ref{define xi_k,q,p,nu,sigma}), we immediately know that
  \begin{equation}\label{trump}
    [\mathcal{K},\tilde{B}]=\frac{1}{2}\big([\mathcal{K},\tilde{A}]
    +[\mathcal{K},\tilde{A}]^*\big)=-\mathcal{V}_{21}^\prime-\Omega.
  \end{equation}
  Therefore we have $\tilde{\Gamma}=0$.
  \par Moreover, by (\ref{K_s}), we have $[\mathcal{K}_s,\tilde{B}]=[\mathcal{K},\tilde{B}]$. Therefore, using (\ref{est V_21'}) and (\ref{est Omega}) in Lemma \ref{contol V_21'Omega [K,B]} with $\beta=0$ (recall that we have assumed $0<\alpha<\frac{11}{192}$ in Propositon \ref{p3}), together with Gronwall's inequality, we reach (\ref{control eBtiled on K_s}).

\end{proof}

\par Lemma \ref{control eBtilde on N_ex lemma} controls the action of $e^{\tilde{B}}$ on $\mathcal{N}_{ex}$:
\begin{lemma}\label{control eBtilde on N_ex lemma}
For $\vert t\vert\leq 1$, we have
  \begin{equation}\label{control eBtilde on N_ex}
    e^{-t\tilde{B}}(\mathcal{N}_{ex}+1)e^{t\tilde{B}}
    \lesssim (\mathcal{N}_{ex}+1)+N^{-\frac{2}{33}+\varepsilon}
    \big(\mathcal{K}_s+N^{\frac{1}{3}+\alpha}\big)
  \end{equation}
  for some $\varepsilon>0$ small enough but fixed.
\end{lemma}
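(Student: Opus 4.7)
The plan is to run a Gronwall argument analogous to those in Lemmas 7.1 and 8.1, with the real work concentrated in a quantitative operator bound on $\tilde{A}$. First I would compute the commutator: by Lemma 3.5(1), applied exactly as in (7.3) to the coefficient field $\xi$ in place of $\eta$, one gets $[\mathcal{N}_{ex},\tilde{A}]=2\tilde{A}$, hence
\[
[\mathcal{N}_{ex},\tilde{B}]=\tilde{A}+\tilde{A}^{\ast}.
\]
Consequently, by Gronwall's inequality and Lemma 8.1 (to reabsorb $e^{-t\tilde{B}}\mathcal{K}_{s}e^{t\tilde{B}}$ into $\mathcal{K}_{s}+N^{1/3+\alpha}$), it is enough to establish the operator bound
\[
\pm\tilde{A}\lesssim(\mathcal{N}_{ex}+1)+N^{-\frac{2}{33}+\varepsilon}\bigl(\mathcal{K}_{s}+N^{\frac{1}{3}+\alpha}\bigr).
\]

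To prove this, I would start from the straightforward Cauchy--Schwarz inequality
\[
|\langle\tilde{A}\psi,\psi\rangle|\leq\sum_{k,p,q,\sigma,\nu}|\xi_{k,q,p}^{\nu,\sigma}|\,\|a_{q,\nu}a_{p,\sigma}\psi\|\,\|a_{q+k,\nu}a_{p-k,\sigma}\psi\|
\]
and decompose the momentum sum via the partitions from Section 3.3. Using cut-off parameters $\delta_{1},\delta_{2}$ (to be optimized), I would split $\chi_{p\in B_{F}^{\sigma}}=\chi_{p\in\underline{B}_{F,\delta_{1}}^{\sigma}}+\chi_{p\in\underline{A}_{F,\delta_{1}}^{\sigma}}$ and $\chi_{p-k\notin B_{F}^{\sigma}}=\chi_{p-k\in P_{F,\delta_{2}}^{\sigma}}+\chi_{p-k\in A_{F,\delta_{2}}^{\sigma}}$, and symmetrically for the $q/q+k$ slot. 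On the ``bulk'' pieces, where $p$ is deep inside the Fermi ball and $p-k$ is well outside, I would Cauchy--Schwarz with a weight equal to either $|k_{F}^{\sigma}|^{2}-|p|^{2}$ or $|p-k|^{2}-|k_{F}^{\sigma}|^{2}$; the inverse-weight side then feeds into the weighted $\ell^{2}$ estimates (4.12) of Lemma 4.8, while the weighted side is bounded by an expectation of $\mathcal{K}_{s}\cdot(\mathcal{N}_{ex}+1)$, yielding a contribution of the desired form.

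The main obstacle, and the source of the concrete exponent $\tfrac{2}{33}$, will be the ``boundary layer'' pieces, where $p$ lies in $\underline{A}_{F,\delta_{1}}^{\sigma}$ or $p-k$ in $A_{F,\delta_{2}}^{\sigma}$: no spectral gap is available to feed a kinetic weight, so I would control these terms by combining the unweighted estimate $\sum_{p,k}|\xi_{k,q,p}^{\nu,\sigma}|^{2}\lesssim a^{2}N^{2/3+\varepsilon}$ of Lemma 4.8(i) with the sharp shell-volume bounds $\#A_{F,\delta_{2}}^{\sigma},\#\underline{A}_{F,\delta_{1}}^{\sigma}\lesssim N^{2/3+\delta+\varepsilon}$ from (3.25). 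This step is what breaks the naive scaling and forces optimization; the complication is further aggravated by the $p,q$-dependence of $\xi_{k,q,p}^{\nu,\sigma}$, which, unlike the $k$-only dependence of $\eta_{k}$ used in Section 7, prevents a clean position-space factorization of $\tilde{A}$ as a single convolution. Balancing these bulk and boundary contributions against each other in $\delta_{1},\delta_{2}$ is what produces the exponent $-\tfrac{2}{33}+\varepsilon$.

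Finally I would collect the regional estimates, use $\mathcal{N}_{ex}\lesssim\mathcal{K}_{s}$ where convenient to homogenize error terms, plug the resulting bound into the Gronwall differential inequality for $t\mapsto\langle(\mathcal{N}_{ex}+1)e^{t\tilde{B}}\psi,e^{t\tilde{B}}\psi\rangle$, and apply Lemma 8.1 to absorb the time-dependent $\mathcal{K}_{s}$ into $\mathcal{K}_{s}+N^{1/3+\alpha}$, closing the estimate on $|t|\leq 1$.
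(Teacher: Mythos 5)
Your outline matches the paper's proof in all essential respects: the commutator identity $[\mathcal{N}_{ex},\tilde{B}]=\tilde{A}+\tilde{A}^*$, a decomposition into momentum shells near and far from the Fermi surface, the weighted and unweighted $\ell^2$ estimates on $\xi$ from Lemma~\ref{xi l2 lemma} together with the shell-volume bounds from Lemma~\ref{lemma ineqn K_s}, a parameter optimization producing the exponent $-\tfrac{2}{33}$, and finally a Gronwall closure using Lemma~\ref{control eBtiled on K_s lemma}. The concrete decomposition is slightly different in form: the paper decomposes only the two inner momenta $p,q$ with a single parameter $d$ into $\Psi_1,\dots,\Psi_4$, and then performs a secondary split on the excited momentum $q+k$ via $H_{y,\nu}[\delta]$ vs $L_{y,\nu}[\delta]$ inside each $\Psi_j$, whereas you split all four of $p,p-k,q,q+k$ symmetrically; both are workable, and the balance $-\tfrac{1}{3}+3d=-\tfrac{2}{3}d$, $d=\tfrac{1}{11}$, gives the same constant. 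There is, however, one imprecision worth flagging: you write that in the ``boundary layer'' pieces, defined by ``$p$ in $\underline{A}^\sigma_{F,\delta_1}$ \emph{or} $p-k$ in $A^\sigma_{F,\delta_2}$,'' no spectral gap is available. That is too broad. If only $p$ lies in the thin shell but $p-k\in P^\sigma_{F,\delta_2}$, the gap $|p-k|^2-|k_F^\sigma|^2\gtrsim N^{\frac13+\delta_2}$ is still available, and indeed even in the genuinely surface-dominated piece $\Psi_1$ the paper still extracts a kinetic gain from the excited momentum $q+k$ via $\mathcal{N}_h[\delta]\lesssim N^{-\frac13-\delta}\mathcal{K}_s$, reserving the pure shell-volume $\ell^2$ argument for the subpiece where $q+k$ also lies in a thin shell ($L_{y,\nu}[\delta]$ with $\delta$ negative). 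If you literally dropped the kinetic contribution in every case containing a near-surface momentum, the $\Psi_1$-type term would not close. One more technical point: because $\xi^{\nu,\sigma}_{k,q,p}$ depends on $p$ and $q$ (unlike $\eta_k$), the paper keeps the momentum $p$ discrete and packages the rest into the $q_1$-indexed kernel $k^{(q_1,\sigma)}_{x,y,\nu}$; a pure momentum-space Cauchy--Schwarz, as you outline, can be made to work but requires choosing carefully in which summation variable the $|k_F^\sigma|^2-|p|^2$ or $|p-k|^2-|k_F^\sigma|^2$ weight is inserted, otherwise the remaining $k$-sum diverges.
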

\begin{proof}
  \par By Lemma \ref{lem commutator}, we know that $[\mathcal{N}_{ex},\tilde{B}]=\tilde{A}+\tilde{A}^*$. In order to prove (\ref{control eBtilde on N_ex}), we would like to prove
  \begin{equation}\label{aim}
    \pm(\tilde{A}+\tilde{A}^*)\lesssim (\mathcal{N}_{ex}+1)+N^{-\frac{2}{33}
    +\varepsilon}\mathcal{K}_s
  \end{equation}
  for some $\varepsilon>0$ small enough but fixed. Then (\ref{control eBtilde on N_ex}) follows by Gronwall's inequality and Lemma \ref{control eBtiled on K_s lemma}.
  \par Let $0< d<\frac{1}{3}$ to be determined later. We decompose $[\mathcal{N}_{ex},\tilde{B}]$ by
  \begin{equation}\label{decompose Atilde}
    \begin{aligned}
    \Psi_1&=\sum_{k,p,q,\sigma,\nu}
  \xi_{k,q,p}^{\nu,\sigma}a^*_{p-k,\sigma}a^*_{q+k,\nu}a_{q,\nu}a_{p,\sigma}\chi_{p-k\notin B^{\sigma}_F}\chi_{q+k\notin B^{\nu}_F}\chi_{p\in \underline{A}^{\sigma}_{F,d}}\chi_{q\in \underline{A}^{\nu}_{F,d}}+h.c.\\
  \Psi_2&=\sum_{k,p,q,\sigma,\nu}
  \xi_{k,q,p}^{\nu,\sigma}a^*_{p-k,\sigma}a^*_{q+k,\nu}a_{q,\nu}a_{p,\sigma}\chi_{p-k\notin B^{\sigma}_F}\chi_{q+k\notin B^{\nu}_F}\chi_{p\in \underline{A}^{\sigma}_{F,d}}\chi_{q\in \underline{B}^{\nu}_{F,d}}+h.c.\\
  \Psi_3&=\sum_{k,p,q,\sigma,\nu}
  \xi_{k,q,p}^{\nu,\sigma}a^*_{p-k,\sigma}a^*_{q+k,\nu}a_{q,\nu}a_{p,\sigma}\chi_{p-k\notin B^{\sigma}_F}\chi_{q+k\notin B^{\nu}_F}\chi_{p\in \underline{B}^{\sigma}_{F,d}}\chi_{q\in \underline{A}^{\nu}_{F,d}}+h.c.\\
  \Psi_4&=\sum_{k,p,q,\sigma,\nu}
  \xi_{k,q,p}^{\nu,\sigma}a^*_{p-k,\sigma}a^*_{q+k,\nu}a_{q,\nu}a_{p,\sigma}\chi_{p-k\notin B^{\sigma}_F}\chi_{q+k\notin B^{\nu}_F}\chi_{p\in \underline{B}^{\sigma}_{F,d}}\chi_{q\in \underline{B}^{\nu}_{F,d}}+h.c.
    \end{aligned}
  \end{equation}
  To bound $\Psi_1$, we adopt the notation
  \begin{equation*}
    k_{x_1,x_2,\nu}^{(q_1,\sigma)}(z)\coloneqq
    \sum_{q_2\in\underline{A}_{F,d}^\nu}\overline{\xi_{q_2,q_1}^{\nu,\sigma}}(x_1-x_2)
    e^{iq_2x_2}f_{q_2,\nu}(z).
  \end{equation*}
  It is easy to verify that, for $j=1,2$
  \begin{equation*}
    \int_{\Lambda}\Vert a(k_{x_1,x_2,\nu}^{(q_1,\sigma)})\Vert^2dx_j\leq
    \int_{\Lambda}\Vert k_{x_1,x_2,\nu}^{(q_1,\sigma)}\Vert^2dx_j
    \leq \sum_{q_2\in\underline{A}_{F,d}^\nu}\Vert\xi_{q_2,q_1}^{\nu,\sigma}\Vert^2
    =\sum_{q_2\in\underline{A}_{F,d}^\nu}\sum_{k}\vert\xi_{k,q_2,q_1}^{\nu,\sigma}\vert^2.
  \end{equation*}
  Thus we can write $\Psi_1$ by
  \begin{equation}\label{Psi_1}
    \begin{aligned}
    \Psi_1&=\sum_{\sigma,\nu,q_1}\chi_{q_1\in\underline{A}_{F,d}^\sigma}\int_{\Lambda^2}
    e^{-iq_1x}a^*(h_{x,\sigma})a^*(h_{y,\nu})a(k_{x,y,\nu}^{(q_1,\sigma)})a_{q_1,\sigma}dxdy+h.c.
    \end{aligned}
  \end{equation}
  Let $-\frac{11}{48}<\delta\leq\frac{1}{3}$ to be determined, we continue by decomposing $\Psi_1=\Psi_{11}+\Psi_{12}$ using the relation $a^*(h_{y,\nu})=a^*(H_{y,\nu}[\delta])+a^*(L_{y,\nu}[\delta])$. For $\psi\in\mathcal{H}^{\wedge N}(\{N_{\varsigma_i}\})$, we bound $\Psi_{11}$ with the help of (\ref{ineqn N_l and N_s}) in Lemma \ref{lemma ineqn K_s} and (\ref{xi l2}) in Lemma \ref{xi l2 lemma}:
  \begin{equation}\label{bound Psi_11}
    \begin{aligned}
      \vert\langle\Psi_{11}\psi,\psi\rangle\vert&\lesssim \sum_{\sigma,\nu,q_1}\chi_{q_1\in\underline{A}_{F,d}^\sigma}\int_{\Lambda^2}
    \Vert (\mathcal{N}_{ex}+2)^{-\frac{1}{2}}a(h_{x,\sigma})a(H_{y,\nu}[\delta])\psi\Vert\\
    &\quad\quad\quad\quad\quad\quad\quad\quad\times
    \Vert (\mathcal{N}_{ex}+2)^{\frac{1}{2}}a(k_{x,y,\nu}^{(q_1,\sigma)})a_{q_1,\sigma}\psi
    \Vert dxdy\\
    &\lesssim\Big( \sum_{\sigma,q_1}\chi_{q_1\in\underline{A}_{F,d}^\sigma}
    \langle\mathcal{N}_h[\delta]\psi,\psi\rangle\Big)^{\frac{1}{2}}\\
    &\quad\quad\times
    \Big(\sum_{\substack{\sigma,\nu\\k,q_1,q_2}}\chi_{q_1\in\underline{A}_{F,d}^\sigma}
    \chi_{q_2\in\underline{A}_{F,d}^\nu}\vert\xi_{k,q_2,q_1}^{\nu,\sigma}\vert^2
    \langle(\mathcal{N}_{ex}+1)\psi,\psi\rangle\Big)^{\frac{1}{2}}\\
    &\lesssim \langle(\mathcal{N}_{ex}+1)\psi,\psi\rangle
    +N^{-\frac{1}{3}+2d-\delta+\varepsilon}
    \langle\mathcal{K}_s\psi,\psi\rangle.
    \end{aligned}
  \end{equation}
Using (\ref{ineqn N_l and N_s}) in Lemma \ref{lemma ineqn K_s} and (\ref{xi l2}) in Lemma \ref{xi l2 lemma}, we also bound $\Psi_{12}$ by
\begin{equation}\label{Psi_12 bound}
  \begin{aligned}
   \vert\langle\Psi_{12}\psi,\psi\rangle\vert&\lesssim \sum_{\sigma,\nu,q_1}\chi_{q_1\in\underline{A}_{F,d}^\sigma}\int_{\Lambda^2}
    \Vert (\mathcal{N}_{ex}+2)^{-\frac{1}{2}}a(h_{x,\sigma})a_{q_1,\sigma}^*\psi\Vert\\
    &\quad\quad\quad\quad\quad\quad\quad\quad\times
    \Vert (\mathcal{N}_{ex}+2)^{\frac{1}{2}}a^*(L_{y,\nu}[\delta])
    a(k_{x,y,\nu}^{(q_1,\sigma)})\psi
    \Vert dxdy\\
    &\lesssim\Big( \sum_{\sigma,q_1}
    \langle a_{q_1,\sigma}a_{q_1,\sigma}^*\psi,\psi\rangle\Big)^{\frac{1}{2}}\\
    &\quad\quad\times
    \Big(N^{\frac{2}{3}+\delta}
    \sum_{\substack{\sigma,\nu\\k,q_1,q_2}}\chi_{q_1\in\underline{A}_{F,d}^\sigma}
    \chi_{q_2\in\underline{A}_{F,d}^\nu}\vert\xi_{k,q_2,q_1}^{\nu,\sigma}\vert^2
    \langle(\mathcal{N}_{ex}+1)\psi,\psi\rangle\Big)^{\frac{1}{2}}\\
    &\lesssim N^{\frac{\delta}{2}+\frac{d}{2}+\frac{\varepsilon}{2}}
    \langle(\mathcal{N}_{ex}+1)\psi,\psi\rangle.
  \end{aligned}
\end{equation}
With (\ref{bound Psi_11}) and (\ref{Psi_12 bound}), we choose $\delta=-d-\varepsilon$, and we conclude that
\begin{equation}\label{Psi_1 bound}
  \pm\Psi_1\lesssim (\mathcal{N}_{ex}+1)+N^{-\frac{1}{3}+3d+2\varepsilon}\mathcal{K}_s.
\end{equation}
\par The estimates of $\Psi_j$ for $j=2,3,4$ are similar, we take $\Psi_4$ for instance. We adopt the notation
\begin{equation*}
    \tilde{k}_{x_1,x_2,\nu}^{(q_1,\sigma)}(z)\coloneqq
    \sum_{q_2\in\underline{B}_{F,d}^\nu}\overline{\xi_{q_2,q_1}^{\nu,\sigma}}(x_1-x_2)
    e^{iq_2x_2}f_{q_2,\nu}(z).
  \end{equation*}
  It is easy to verify for $j=1,2$
  \begin{equation*}
    \int_{\Lambda}\Vert a(\tilde{k}_{x_1,x_2,\nu}^{(q_1,\sigma)})\Vert^2dx_j\leq
    \int_{\Lambda}\Vert \tilde{k}_{x_1,x_2,\nu}^{(q_1,\sigma)}\Vert^2dx_j
    \leq \sum_{q_2\in\underline{B}_{F,d}^\nu}\Vert\xi_{q_2,q_1}^{\nu,\sigma}\Vert^2
    =\sum_{q_2\in\underline{B}_{F,d}^\nu}\sum_{k}\vert\xi_{k,q_2,q_1}^{\nu,\sigma}\vert^2.
  \end{equation*}
  Switching to the position space, we have
  \begin{equation}\label{Psi_4}
    \begin{aligned}
    \Psi_4&=\sum_{\sigma,\nu,q_1}\chi_{q_1\in\underline{B}_{F,d}^\sigma}\int_{\Lambda^2}
    e^{-iq_1x}a^*(h_{x,\sigma})a^*(h_{y,\nu})
    a(\tilde{k}_{x,y,\nu}^{(q_1,\sigma)})a_{q_1,\sigma}dxdy+h.c.
    \end{aligned}
  \end{equation}
Let $-\frac{11}{48}<\delta^\prime\leq\frac{1}{3}$ to be determined, we decompose
$\Psi_4=\Psi_{41}+\Psi_{42}$ using $a^*(h_{y,\nu})=a^*(H_{y,\nu}[\delta^\prime])+a^*(L_{y,\nu}[\delta^\prime])$. We bound $\Psi_{41}$ via (\ref{xi l2 cut off}) in Lemma \ref{xi l2 lemma}:
 \begin{equation}\label{bound Psi_41}
    \begin{aligned}
      \vert\langle\Psi_{41}\psi,\psi\rangle\vert&\lesssim \sum_{\sigma,\nu,q_1}\chi_{q_1\in\underline{B}_{F,d}^\sigma}\int_{\Lambda^2}
      \Vert (\mathcal{N}_{ex}+3)^{\frac{1}{2}}(\mathcal{N}_{h}[\delta^\prime]+3)^{\frac{1}{2}}
    a(k_{x,y,\nu}^{(q_1,\sigma)})\psi\Vert\\
    &\quad\quad\times
     \Vert
     (\mathcal{N}_{ex}+3)^{-\frac{1}{2}}(\mathcal{N}_{h}[\delta^\prime]+3)^{-\frac{1}{2}}
    a(H_{y,\nu}[\delta^\prime])a(h_{x,\sigma})a_{q_1,\sigma}^*\psi\Vert
     dxdy\\
      &\lesssim\Big(\sum_{\substack{\sigma,\nu\\k,q_1,q_2}}\frac{
      \chi_{q_1\in\underline{B}_{F,d}^\sigma}
    \chi_{q_2\in\underline{B}_{F,d}^\nu}\vert\xi_{k,q_2,q_1}^{\nu,\sigma}\vert^2}
    {\big(\vert k_F^\sigma\vert^2-\vert q_1\vert^2\big)}
    \langle(\mathcal{N}_{ex}+1)
    (\mathcal{N}_{h}[\delta^\prime]+1)\psi,\psi\rangle\Big)^{\frac{1}{2}}\\
    &\quad\quad\times
    \Big( \sum_{\sigma,q_1}\big(\vert k_F^\sigma\vert^2-\vert q_1\vert^2\big)
    \langle a_{q_1,\sigma}a^*_{q_1,\sigma}\psi,\psi\rangle\Big)^{\frac{1}{2}}
    \\
    &\lesssim N^{-\frac{2}{3}-\frac{d}{2}+\frac{\delta^\prime}{2}}
    \langle(\mathcal{N}_{ex}+1)\psi,\psi\rangle
    +N^{-\frac{\delta^\prime}{2}-\frac{d}{2}}
    \langle\mathcal{K}_s\psi,\psi\rangle.
    \end{aligned}
  \end{equation}
 We also use (\ref{xi l2 cut off}) in Lemma \ref{xi l2 lemma} to bound $\Psi_{42}$ by
 \begin{equation}\label{Psi_42 bound}
   \begin{aligned}
   \vert\langle\Psi_{42}\psi,\psi\rangle\vert&\lesssim \sum_{\sigma,\nu,q_1}\chi_{q_1\in\underline{B}_{F,d}^\sigma}\int_{\Lambda^2}
      \Vert (\mathcal{N}_{ex}+2)^{\frac{1}{2}}a^*(L_{y,\nu}[\delta^\prime])
    a(k_{x,y,\nu}^{(q_1,\sigma)})\psi\Vert\\
    &\quad\quad\times
     \Vert
     (\mathcal{N}_{ex}+2)^{-\frac{1}{2}}
    a(h_{x,\sigma})a_{q_1,\sigma}^*\psi\Vert
     dxdy\\
      &\lesssim\Big(N^{\frac{2}{3}+\delta^\prime}
      \sum_{\substack{\sigma,\nu\\k,q_1,q_2}}\frac{
      \chi_{q_1\in\underline{B}_{F,d}^\sigma}
    \chi_{q_2\in\underline{B}_{F,d}^\nu}\vert\xi_{k,q_2,q_1}^{\nu,\sigma}\vert^2}
    {\big(\vert k_F^\sigma\vert^2-\vert q_1\vert^2\big)}
    \langle(\mathcal{N}_{ex}+1)
    \psi,\psi\rangle\Big)^{\frac{1}{2}}\\
    &\quad\quad\times
    \Big( \sum_{\sigma,q_1}\big(\vert k_F^\sigma\vert^2-\vert q_1\vert^2\big)
    \langle a_{q_1,\sigma}a^*_{q_1,\sigma}\psi,\psi\rangle\Big)^{\frac{1}{2}}
    \\
    &\lesssim
    \langle(\mathcal{N}_{ex}+1)\psi,\psi\rangle
    +N^{{\delta^\prime}-d}
    \langle\mathcal{K}_s\psi,\psi\rangle.
   \end{aligned}
 \end{equation}
 We optimize (\ref{bound Psi_41}) and  (\ref{Psi_42 bound}) by choosing $\delta^\prime=\frac{d}{3}$, and we conclude that
 \begin{equation}\label{Psi_4 bound}
   \pm\Psi_4\lesssim (\mathcal{N}_{ex}+1)+N^{-\frac{2}{3}d}\mathcal{K}_{s}.
 \end{equation}
 We optimize (\ref{Psi_1 bound}) and (\ref{Psi_4 bound}) by choosing $d=\frac{1}{11}$ and set $\varepsilon^\prime=2\varepsilon>0$ small enough but fixed. It is easy to verify $-\frac{11}{48}<\delta,\delta^\prime<\frac{1}{3}$. Then we reach (\ref{aim}) and hence (\ref{control eBtilde on N_ex}).

 %
\end{proof}

\par Though Lemma \ref{control eBtilde on V_4 lemma} is a rough estimate of the action of $e^{\tilde{B}}$ on $\mathcal{V}_4$, It is useful in the proof of the energy upper bound in Section \ref{proof core}.
\begin{lemma}\label{control eBtilde on V_4 lemma}
We have
\begin{equation}\label{cal [V_4,Btilde]}
\begin{aligned}
  \pm[\mathcal{V}_4,\tilde{B}]\lesssim& \mathcal{V}_4+N^{-\frac{4}{3}+\varepsilon}
  (\mathcal{N}_{ex}+1)^3\\
  &+\big(N^{-\frac{1}{3}+\varepsilon}+N^{-\frac{1}{3}+2\alpha}\big)(\mathcal{N}_{ex}+1)
\end{aligned}
\end{equation}
for any $\varepsilon>0$ small enough but fixed. Hence for $\vert t\vert\leq1$,
  \begin{equation}\label{control eBtilde on V_4}
  \begin{aligned}
    e^{-t\tilde{B}}\mathcal{V}_4e^{t\tilde{B}}
    \lesssim& \mathcal{V}_4+N^{-\frac{4}{3}+\varepsilon}\int_{0}^{t}
  e^{-s\tilde{B}}(\mathcal{N}_{ex}+1)^3e^{s\tilde{B}}ds\\
    &+\big(N^{-\frac{1}{3}+\varepsilon}+N^{-\frac{1}{3}+2\alpha}\big)(\mathcal{N}_{ex}+1)\\
    &+N^{-\frac{2}{33}+\varepsilon}
    \big(N^{-\frac{1}{3}+\varepsilon}+N^{-\frac{1}{3}+2\alpha}\big)
    \big(\mathcal{K}_s+N^{\frac{1}{3}+\alpha}\big).
  \end{aligned}
  \end{equation}
\end{lemma}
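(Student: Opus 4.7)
The plan is to parallel the proof of Lemma \ref{corollary control eB on V_4} (the analogous statement for the quadratic renormalization $e^B$), replacing the scalar coefficient $\eta_k$ by the three-momentum Bogoliubov coefficient $\xi_{k,q,p}^{\nu,\sigma}$ and using the bounds collected in Section \ref{coeffsub}. First I would apply Lemma \ref{lem commutator} to expand
\begin{equation*}
  [\mathcal{V}_4,\tilde{B}] = \tilde{\Theta}_m + \tilde{\Theta}_d + \tilde{\Theta}_r,
\end{equation*}
in direct analogy with (\ref{[V_4,B]})--(\ref{Theta}): the main quartic piece $\tilde{\Theta}_m$ comes from the double-contraction between $\mathcal{V}_4$ and $\tilde{B}$ (coefficient $\sum_{l}\hat{v}_{k-l}\xi^{\nu,\sigma}_{l,q,p}$), the difference quartic piece $\tilde{\Theta}_d$ arises from the mismatch between the indicator functions $\chi_{p-k\notin B_F^\sigma}\chi_{q+k\notin B_F^\nu}$ in $\mathcal{V}_4$ and those hidden inside $\xi$, and the sextic residue $\tilde{\Theta}_r$ collects the single-contraction terms.

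For $\tilde{\Theta}_m$, after switching to position space and writing $\mathcal{V}_4$ as in (\ref{V_4position}), I would use $\|v_a\|_1\lesssim a$ together with the $\ell^1$-in-$k$ bound on $\xi$ deduced from Lemma \ref{L1 Linfty est lemma} (which gives $\sum_k|\xi_{k,q,p}^{\nu,\sigma}|\lesssim a\ell^{-1} = aN^{1/3+\alpha}$) to obtain a pointwise control $\pm\tilde{\Theta}_m \lesssim \mathcal{V}_4 + N^{-1/3+2\alpha}(\mathcal{N}_{ex}+1)$, mirroring (\ref{Theta_m bound}). For $\tilde{\Theta}_d$, the support constraint $p\in B_F^\sigma$, $q\in B_F^\nu$ together with $(\chi_{p-l\notin B_F^\sigma}\chi_{q+l\notin B_F^\nu}-1)\neq 0$ forces $|l|\lesssim N^{1/3}$, so the pointwise estimate on $|\xi_{l,q,p}^{\nu,\sigma}|$ from (\ref{4.8 1st}) and Lemma \ref{N^1/3+epsilon lemma} will give $\pm\tilde{\Theta}_d \lesssim N^{-1/3+\varepsilon}(\mathcal{V}_4+\mathcal{N}_{ex}+1)$, paralleling (\ref{Theta d bound true}).

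The main work is the sextic residue $\tilde{\Theta}_r$. Here I would switch to position space using the kernels analogous to those defined in the proof of Lemma \ref{corollary control eB on V_4}, but now the Hilbert-Schmidt control must be performed with respect to the full $\ell^2$ norm of $\xi$ summed over the Fermi ball. Precisely, in place of $N^2\|v_a\|_1\|\eta\|_2^2\lesssim N\ell$ used in (\ref{Theta_r bound}), I would invoke the sharp estimate (\ref{xi l2}) of Lemma \ref{xi l2 lemma},
\begin{equation*}
  \sum_{\sigma,\nu}\sum_{q\in B_F^\nu}\sum_{p\in B_F^\sigma}\sum_k \big|\xi_{k,q,p}^{\nu,\sigma}\big|^2 \lesssim a^2 N^{4/3+\varepsilon},
\end{equation*}
so that after Cauchy-Schwarz the sextic residue is controlled by $\mathcal{V}_4 + N^{-4/3+\varepsilon}(\mathcal{N}_{ex}+1)^3$. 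Combining the three pieces yields (\ref{cal [V_4,Btilde]}).

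Finally, to pass from the commutator bound (\ref{cal [V_4,Btilde]}) to the conjugation estimate (\ref{control eBtilde on V_4}), I would differentiate in $t$ and apply Grönwall's inequality to $\mathcal{V}_4$: the $\mathcal{V}_4$ term on the right is absorbed by Grönwall, the linear $(\mathcal{N}_{ex}+1)$ term is controlled by Lemma \ref{control eBtilde on N_ex lemma} (which produces the $N^{-2/33+\varepsilon}(\mathcal{K}_s+N^{1/3+\alpha})$ contribution), and the cubic $(\mathcal{N}_{ex}+1)^3$ term is simply kept under the integral sign since Lemma \ref{control eBtilde on N_ex lemma} does not control higher moments of $\mathcal{N}_{ex}$. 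The main obstacle is the sextic residue: unlike the scalar $\eta$ case where $\|\eta\|_2^2$ is a single explicit quantity, here the summation of $|\xi_{k,q,p}^{\nu,\sigma}|^2$ over the Fermi balls is genuinely delicate and its $N^{2/3+\varepsilon}$-sharpness in Lemma \ref{xi l2 lemma}, including the logarithmic loss from lattice points on the Fermi surface, is exactly what produces the $N^{-4/3+\varepsilon}$ prefactor of the cubic term in the final statement.
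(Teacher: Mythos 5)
Your overall strategy coincides with the paper's: expand the commutator via Lemma \ref{lem commutator} into a quartic contraction part and a sextic residue, control the quartic part with the $\ell^1$-in-$k$ bound on $\xi$ from Lemma \ref{L1 Linfty est lemma} and the $\ell^2$ bound (\ref{xi l2}) from Lemma \ref{xi l2 lemma}, control the sextic residue with (\ref{xi l2}), and then run Gr\"onwall keeping the cubic term under the integral and conjugating the linear term via Lemma \ref{control eBtilde on N_ex lemma}. The bookkeeping differs slightly (the paper uses only two pieces $\tilde{\Theta}_1+\tilde{\Theta}_2$ and obtains the two rates $N^{-1/3+2\alpha}$ and $N^{-1/3+\varepsilon}$ by subsequently splitting $\sum_{p_4\notin B_F^\nu}=\sum_{p_4}-\sum_{p_4\in B_F^\nu}$ inside $\tilde{\Theta}_1$, rather than by an $m$/$d$ split of the coefficient), but this is cosmetic.

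There is, however, one step where your argument as written would fail: the claim that the quartic main term is bounded ``pointwise, mirroring (\ref{Theta_m bound}).'' In (\ref{Theta_m bound}) the coefficient $\eta(x-y)$ is independent of the Fermi-ball momenta, so one may pull out $\Vert\eta\Vert_\infty$ and pay $\Vert a(g_{y,\nu})\Vert^2\Vert a(g_{x,\sigma})\Vert^2\leq N^2$ on the other side. For $\tilde{B}$ the coefficient $\xi_{k,q,p}^{\nu,\sigma}$ depends on $p$ and $q$, so the analogous quartic is $\sum_{p,q}\int v_a(x-y)\,\xi_{q,p}^{\nu,\sigma}(x-y)e^{-ipx-iqy}a^*(h_{x,\sigma})a^*(h_{y,\nu})\,a_{q,\nu}a_{p,\sigma}\,dxdy+h.c.$, and any Cauchy--Schwarz that extracts $\sup_{p,q}\sum_k|\xi_{k,q,p}|\lesssim a\ell^{-1}$ while summing the Fermi-ball sides crudely produces either a prefactor $\sum_{p,q}\Vert\xi_{q,p}\Vert_\infty\sim N^{4/3+\alpha}$ in front of $\mathcal{V}_4$ or a counter-term of order $N^{5/3+2\alpha}$ --- both fatal. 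The repair is exactly what the paper does in (\ref{notation})--(\ref{boundtildeTheta11}): package one Fermi-ball sum into the annihilation operator $a(k_{x_1,x_3,\sigma}^{(q_4,\nu)})$, so that by $\Vert a(f)\Vert\leq\Vert f\Vert_2$ the $\ell^1$-in-$k$ bound enters as a \emph{sum of squares} over $q_3$, i.e.\ $\Vert a(k_{x_1,x_3,\sigma}^{(q_4,\nu)})\Vert^2\leq\sum_{q_3}\big(\sum_k|\xi_{k,q_4,q_3}^{\nu,\sigma}|\big)^2\lesssim N a^2\ell^{-2}=N^{-1/3+2\alpha}$ (one factor of $N$, not two), and then convert the remaining sum over $q_4\in B_F^\nu$ into $\mathcal{N}_{ex}+2$ via the anticommutation identity $\sum_{q_4}a_{q_4,\nu}a^*_{q_4,\nu}\leq\mathcal{N}_{ex}+2$ on the range of $a(h_{x_2,\nu})a(h_{x_1,\sigma})$; this last observation is what produces the factor $(\mathcal{N}_{ex}+1)$ rather than $N$ in (\ref{cal [V_4,Btilde]}). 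The same caveat applies to your $\tilde{\Theta}_d$. Your treatment of the sextic residue and of the Gr\"onwall step is otherwise consistent with the paper's.
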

\begin{proof}
  \par By Lemma \ref{lem commutator}, we have
  \begin{equation}\label{com V_4 eBtilde}
    [\mathcal{V}_4,\tilde{B}]=\tilde{\Theta}_1+\tilde{\Theta}_2
  \end{equation}
  with
  \begin{equation}\label{Thetatilde}
   \begin{aligned}
    \tilde{\Theta}_1&=\frac{1}{2}\sum_{l,k,p,q,\sigma,\nu}\hat{v}_{k}\xi_{l,s,r}^{\nu,\sigma}
    (a^*_{p-k,\sigma}a^*_{q+k,\nu}a_{s,\nu}a_{r,\sigma}+h.c.)\\
    &\quad\quad\times
    \chi_{p-k,p\notin B^{\sigma}_F}\chi_{q+k,q\notin B^{\nu}_F}
    \chi_{r\in B_F^\sigma}\chi_{s\in B_F^\nu}
    \delta_{p,r-l}\delta_{q,s+l}
    \\
    \tilde{\Theta}_2&=-\sum_{k,p,q,l,r,s,\sigma,\nu,\varpi}
    \hat{v}_{k}\xi_{l,s,r}^{\varpi,\sigma}
    (a^*_{p-k,\sigma}a^*_{q+k,\nu}a^*_{s+l,\varpi}a_{q,\nu}
    a_{s,\varpi}a_{r,\sigma}+h.c.)\\
    &\quad\quad\times
    \chi_{p-k,p\notin B^{\sigma}_F}\chi_{q+k,q\notin B^{\nu}_F}
    \chi_{r\in B_F^\sigma}
    \chi_{s+l\notin B_F^\varpi}\chi_{s\in B_F^\varpi}\delta_{p,r-l}
    \end{aligned}
  \end{equation}
  Switching to the position space, we have
  \begin{equation}\label{Thetatilde1}
    \begin{aligned}
    \tilde{\Theta}_1&=\frac{1}{2}\sum_{\substack{\sigma,\nu\\p_3,p_4,q_3,q_4}}
    \chi_{p_3\notin B_F^\sigma}\chi_{p_4\notin B_F^\nu}\chi_{q_3\in B_F^\sigma}
    \chi_{q_4\in B_F^\nu}
    \int_{\Lambda^3}v_a(x_1-x_2) \xi_{q_3-p_3,q_4,q_3}^{\nu,\sigma}\\
    &\quad\quad\times
   e^{ip_3(x_3-x_1)}e^{ip_4(x_3-x_2)}
    e^{-i(q_3+q_4)x_3}
    a^*(h_{x_1,\sigma})a^*(h_{x_2,\nu})a_{q_4,\nu}a_{q_3,\sigma}+h.c.
    \end{aligned}
  \end{equation}
  We adopt the notation
  \begin{equation}\label{notation}
    k_{x_1,x_3,\sigma}^{(q_4,\nu)}(z)\coloneqq
    \sum_{q_3\in B_F^\sigma}\Big(\sum_{p_3\notin B_F^\sigma}
    \xi_{q_3-p_3,q_4,q_3}^{\nu,\sigma}e^{-ip_3(x_3-x_1)}\Big)e^{iq_3x_3}f_{q_3,\sigma}(z).
  \end{equation}
  By (\ref{xi l2}) in Lemma \ref{xi l2 lemma}, we can check that for $j=1,3$
  \begin{equation}\label{12thetatilde}
    \begin{aligned}
    \int_{\Lambda}\Vert a(k_{x_1,x_3,\sigma}^{(q_4,\nu)})\Vert^2dx_j
    &\leq \sum_{q_3\in B_F^\sigma}\int_{\Lambda}\Big\vert
    \sum_{p_3\notin B_F^\sigma}
    \xi_{q_3-p_3,q_4,q_3}^{\nu,\sigma}e^{-ip_3(x_3-x_1)}\Big\vert^2dx_j\\
    &\leq \sum_{q_3\in B_F^\sigma}\sum_k\vert \xi_{k,q_4,q_3}^{\nu,\sigma}\vert^2
    \lesssim N^{-\frac{4}{3}+\varepsilon},
    \end{aligned}
  \end{equation}
  for $\varepsilon>0$ small but fixed. On the other hand, by (\ref{L1 Linfty est}) and (\ref{too naive}) in Lemma \ref{L1 Linfty est lemma} and the definition (\ref{define xi_k,q,p,nu,sigma}), we have
  \begin{equation}\label{11thetatilde}
    \begin{aligned}
    \Vert a(k_{x_1,x_3,\sigma}^{(q_4,\nu)})\Vert^2
    &\leq \sum_{q_3\in B_F^\sigma}\Big\vert
    \sum_{p_3\notin B_F^\sigma}
    \xi_{q_3-p_3,q_4,q_3}^{\nu,\sigma}e^{-ip_3(x_3-x_1)}\Big\vert^2\\
    &\leq  \sum_{q_3\in B_F^\sigma}\Big(
    \sum_{k}\vert
    \xi_{k,q_4,q_3}^{\nu,\sigma}\vert\Big)^2
    \lesssim Na^2\ell^{-2}=N^{-\frac{1}{3}+2\alpha}.
    \end{aligned}
  \end{equation}
  With the notation above, we rewrite $\tilde{\Theta}_1$ as
  \begin{equation}\label{Thetatilde1re}
     \begin{aligned}
    \tilde{\Theta}_1&=\frac{1}{2}\sum_{\substack{\sigma,\nu,q_4}}
    \chi_{q_4\in B_F^\nu}
    \int_{\Lambda^3}v_a(x_1-x_2) \Big(\sum_{p_4\notin B_F^\nu}e^{ip_4(x_3-x_2)}\Big) \\
    &\quad\quad\times
    e^{-iq_4x_3}
    a^*(h_{x_1,\sigma})a^*(h_{x_2,\nu})a_{q_4,\nu}a(k_{x_1,x_3,\sigma}^{(q_4,\nu)})+h.c.
    \end{aligned}
  \end{equation}
  Using $\sum_{p_4\notin B_F^\nu}=\sum_{p_4}-\sum_{p_4\in B_F^\nu}$, we decompose $\tilde{\Theta}_1=\tilde{\Theta}_{11}+\tilde{\Theta}_{12}$. For $\tilde{\Theta}_{11}$, we bound for any $\psi\in\mathcal{H}^{\wedge N}(\{N_{\varsigma_i}\})$:
  \begin{equation}\label{Thetatilde11 bound}
    \begin{aligned}
    \vert\langle\tilde{\Theta}_{11}\psi,\psi\rangle\vert
    &\lesssim\sum_{\substack{\sigma,\nu,q_4}}
    \chi_{q_4\in B_F^\nu}
    \int_{\Lambda^2}v_a(x_1-x_2)
    \Vert (\mathcal{N}_{ex}+3)^{\frac{1}{2}}a(k_{x_1,x_2,\sigma}^{(\nu)})
    \psi\Vert\\
    &\quad\quad\times
    \Vert(\mathcal{N}_{ex}+3)^{-\frac{1}{2}}a^*_{q_4,\nu}
    a(h_{x_2,\nu})a(h_{x_1,\sigma})\psi\Vert dx_1dx_2
    \end{aligned}
  \end{equation}
  Notice that since $a(h_{x_2,\nu})a(h_{x_1,\sigma})\psi\in\mathcal{H}^{\wedge (N-2)}$, and on the subspace of $\mathcal{H}^{\wedge (N-2)}$ generated by $a(h_{x_2,\nu})a(h_{x_1,\sigma})\psi$, it is clear that
  \begin{equation*}
    N_{\nu}-2\leq\mathcal{N}_{\nu}\leq N_{\nu}-1,
  \end{equation*}
  where $\mathcal{N}_{\nu}$ has been defined in (\ref{define op N_sigma}). Hence on the subspace of $\mathcal{H}^{\wedge (N-2)}$ generated by $a(h_{x_2,\nu})a(h_{x_1,\sigma})\psi$, we have
  \begin{equation*}
    \sum_{q_4\in B_F^\nu}a_{q_4,\nu}a^*_{q_4,\nu}
    =N_{\nu}-\sum_{q_4\in B_F^\nu}a^*_{q_4,\nu}a_{q_4,\nu}
    \leq \mathcal{N}_{ex,\nu}+2\leq\mathcal{N}_{ex}+2.
  \end{equation*}
  With the above fact, and recall (\ref{V_4position}) and (\ref{11thetatilde}), we can bound
  \begin{equation}\label{boundtildeTheta11}
    \begin{aligned}
    \vert\langle\tilde{\Theta}_{11}\psi,\psi\rangle\vert
    &\lesssim\Big(N^{-\frac{1}{3}+2\alpha}\sum_{\substack{\sigma,\nu,q_4}}
    \chi_{q_4\in B_F^\nu}
    \int_{\Lambda^2}v_a(x_1-x_2)
    \Vert (\mathcal{N}_{ex}+1)^{\frac{1}{2}}\psi\Vert^2\Big)^{\frac{1}{2}}\\
    &\quad\quad\times\Big(\sum_{\sigma,\nu}\int_{\Lambda^2}v_a(x_1-x_2)
    \Vert a(h_{x_2,\nu})a(h_{x_1,\sigma})\psi\Vert^2 \Big)^{\frac{1}{2}}\\
    &\lesssim
    \langle\mathcal{V}_{4}\psi,\psi\rangle
    +N^{\frac{2}{3}+2\alpha}\Vert v_a\Vert_1\langle(\mathcal{N}_{ex}+1)\psi,\psi\rangle\\
    &\lesssim \langle\mathcal{V}_{4}\psi,\psi\rangle
    +N^{-\frac{1}{3}+2\alpha}\langle(\mathcal{N}_{ex}+1)\psi,\psi\rangle.
    \end{aligned}
  \end{equation}
  Similarly, we use (\ref{12thetatilde}) to bound $\tilde{\Theta}_{12}$ by
  \begin{equation}\label{tildeTheta12}
    \begin{aligned}
    \vert\langle\tilde{\Theta}_{12}\psi,\psi\rangle\vert
    &\lesssim\sum_{\substack{\sigma,\nu,q_4}}
    \chi_{q_4\in B_F^\nu}
    \int_{\Lambda^3}v_a(x_1-x_2)
    \Big\vert\sum_{p_4\in B_F^\nu}e^{ip_4(x_3-x_2)}\Big\vert\\
     &\quad\quad\times
     \Vert(\mathcal{N}_{ex}+3)^{-\frac{1}{2}}a^*_{q_4,\nu}
    a(h_{x_2,\nu})a(h_{x_1,\sigma})\psi\Vert
     \\
    &\quad\quad\times
    \Vert (\mathcal{N}_{ex}+3)^{\frac{1}{2}}a(k_{x_1,x_3,\sigma}^{(q_4,\nu)})
    \psi\Vert
     dx_1dx_2\\
     &\lesssim\Big(\sum_{\substack{\sigma,\nu,q_4}}
    \chi_{q_4\in B_F^\nu}
    \int_{\Lambda^3}v_a(x_1-x_2)\Vert a(k_{x_1,x_3,\sigma}^{(q_4,\nu)})\Vert^2
    \Vert (\mathcal{N}_{ex}+1)^{\frac{1}{2}}\psi\Vert^2\Big)^{\frac{1}{2}}\\
    &\quad\quad\times\Big(\sum_{\sigma,\nu}\int_{\Lambda^3}
    \Big\vert\sum_{p_4\in B_F^\nu}e^{ip_4(x_3-x_2)}\Big\vert^2
    v_a(x_1-x_2)
    \Vert a(h_{x_2,\nu})a(h_{x_1,\sigma})\psi\Vert^2 \Big)^{\frac{1}{2}}\\
    &\lesssim \langle\mathcal{V}_{4}\psi,\psi\rangle
    +N^{-\frac{1}{3}+\varepsilon}\langle(\mathcal{N}_{ex}+1)\psi,\psi\rangle.
    \end{aligned}
  \end{equation}
  For $\tilde{\Theta}_2$, we use the above notation to write
  \begin{equation}\label{Thetatilde2}
     \begin{aligned}
    \tilde{\Theta}_2&=-\sum_{\substack{\sigma,\nu,\varpi,q_4}}
    \chi_{q_4\in B_F^\varpi}
    \int_{\Lambda^3}v_a(x_1-x_2) e^{-iq_4x_3}
    a^*(h_{x_1,\sigma})a^*(h_{x_2,\nu})a^*(h_{x_3,\varpi}) \\
    &\quad\quad\times
    a(h_{x_2,\nu})a_{q_4,\varpi}a(k_{x_1,x_3,\sigma}^{(q_4,\varpi)})+h.c.
    \end{aligned}
  \end{equation}
  Similar to (\ref{tildeTheta12}), we can bound $\tilde{\Theta}_2$ by
  \begin{equation}\label{tildeTheta2}
    \begin{aligned}
    &\vert\langle\tilde{\Theta}_{2}\psi,\psi\rangle\vert
    \lesssim\sum_{\substack{\sigma,\nu,\varpi,q_4}}
    \chi_{q_4\in B_F^\varpi}
    \int_{\Lambda^3}v_a(x_1-x_2)
    \Vert (\mathcal{N}_{ex}+4)a(k_{x_1,x_3,\sigma}^{(q_4,\varpi)})a(h_{x_2,\nu})
    \psi\Vert\\
     &\quad\quad\times
     \Vert(\mathcal{N}_{ex}+4)^{-1}a(h_{x_3,\varpi})a^*_{q_4,\varpi}
    a(h_{x_2,\nu})a(h_{x_1,\sigma})\psi\Vert
     dx_1dx_2dx_3\\
     &\lesssim\Big(\sum_{\sigma,\nu}\int_{\Lambda^2}
    v_a(x_1-x_2)
    \Vert a(h_{x_2,\nu})a(h_{x_1,\sigma})\psi\Vert^2 \Big)^{\frac{1}{2}}\\
    &\quad\quad\times
    \Big(\sum_{\substack{\sigma,\nu,\varpi,q_4}}
    \chi_{q_4\in B_F^\varpi}
    \int_{\Lambda^3}v_a(x_1-x_2)\Vert a(k_{x_1,x_3,\sigma}^{(q_4,\varpi)})\Vert^2
    \Vert a(h_{x_2,\nu})(\mathcal{N}_{ex}+3)\psi\Vert^2\Big)^{\frac{1}{2}}
    \\
    &\lesssim \langle\mathcal{V}_{4}\psi,\psi\rangle
    +N^{-\frac{4}{3}+\varepsilon}\langle(\mathcal{N}_{ex}+1)^3\psi,\psi\rangle.
    \end{aligned}
  \end{equation}
  Combining (\ref{boundtildeTheta11}), (\ref{tildeTheta12}) and (\ref{tildeTheta2}) we reach (\ref{cal [V_4,Btilde]}), and (\ref{control eBtilde on V_4}) follows via Lemma \ref{control eBtilde on N_ex lemma} and Gronwall's inequality.
\end{proof}

\par As a direct consequence of (\ref{bound E_J_N p2}) together with Lemmas \ref{control eBtiled on K_s lemma} and \ref{control eBtilde on N_ex lemma}, we have
\begin{lemma}\label{control eBtilde on E_J_N lemma}
Under the requirement that $0<\gamma<\alpha<\frac{11}{192}$, and $\varepsilon>0$ small enough
  \begin{equation}\label{control eBtilde on E_J_N}
    \begin{aligned}
    e^{-\tilde{B}}\mathcal{E}_{\mathcal{J}_N}e^{\tilde{B}}
    \lesssim& e^{-\tilde{B}}P_{\mathcal{J}_N}(\mathcal{N}_{ex})e^{\tilde{B}}
    +N^{-\gamma}\big(e^{-\tilde{B}}\mathcal{V}_{4}e^{\tilde{B}}
    +(\mathcal{N}_{ex}+1)\big)\\
    &+\big(N^{-\alpha-\gamma}+N^{-\frac{1}{3}+2\alpha+3\gamma}\big)\mathcal{K}_s
    +N^{\frac{1}{3}-\frac{\alpha}{2}}+N^{\frac{1}{3}-\gamma}+N^{3\alpha+3\gamma}.
    \end{aligned}
  \end{equation}
\end{lemma}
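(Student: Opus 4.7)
The plan is to conjugate by $e^{\tilde{B}}$ the operator bound on $\pm\mathcal{E}_{\mathcal{J}_N}$ already established in Proposition \ref{p2}, and then treat the resulting conjugated terms one by one using the two conjugation estimates that have just been proved, namely Lemma \ref{control eBtiled on K_s lemma} for the kinetic piece and Lemma \ref{control eBtilde on N_ex lemma} for the number-of-excitations piece. Because $\tilde{B}$ is skew-adjoint so that $e^{\tilde{B}}$ is unitary, conjugation preserves operator inequalities, and the lemma reduces to a bookkeeping exercise on the exponents.

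Concretely, I would leave the terms $e^{-\tilde{B}}P_{\mathcal{J}_N}(\mathcal{N}_{ex})e^{\tilde{B}}$ and $N^{-\gamma}e^{-\tilde{B}}\mathcal{V}_4e^{\tilde{B}}$ untouched, since they appear in the target bound (\ref{control eBtilde on E_J_N}). To the term $N^{-\gamma}e^{-\tilde{B}}(\mathcal{N}_{ex}+1)e^{\tilde{B}}$ I apply Lemma \ref{control eBtilde on N_ex lemma} to obtain
\begin{equation*}
  N^{-\gamma}(\mathcal{N}_{ex}+1)+N^{-\gamma-\frac{2}{33}+\varepsilon}\big(\mathcal{K}_s+N^{\frac{1}{3}+\alpha}\big).
\end{equation*}
To the kinetic contribution $\big(N^{-\alpha-\gamma}+N^{-\frac{1}{3}+2\alpha+3\gamma}\big)e^{-\tilde{B}}\mathcal{K}_se^{\tilde{B}}$ I apply Lemma \ref{control eBtiled on K_s lemma} to obtain
\begin{equation*}
  \big(N^{-\alpha-\gamma}+N^{-\frac{1}{3}+2\alpha+3\gamma}\big)\mathcal{K}_s
  +N^{\frac{1}{3}-\gamma}+N^{3\alpha+3\gamma}.
\end{equation*}
The additive constant $N^{\frac{1}{3}-\frac{\alpha}{2}}$ is carried through unchanged.

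To close the argument I would verify that the two correction pieces produced by Lemma \ref{control eBtilde on N_ex lemma} can be absorbed into terms already present. Under the standing assumption $\alpha<\frac{11}{192}<\frac{2}{33}$ from Proposition \ref{p3}, one has for $\varepsilon>0$ small enough the inequalities $-\gamma-\frac{2}{33}+\varepsilon<-\alpha-\gamma$ and $\frac{1}{3}-\gamma-\frac{2}{33}+\alpha+\varepsilon<\frac{1}{3}-\gamma$, so that $N^{-\gamma-\frac{2}{33}+\varepsilon}\mathcal{K}_s$ is dominated by $N^{-\alpha-\gamma}\mathcal{K}_s$ and the corresponding constant by $N^{\frac{1}{3}-\gamma}$. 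Summing all contributions yields exactly (\ref{control eBtilde on E_J_N}). There is no genuine obstacle in this proof; the only subtlety is the arithmetic check $\alpha<\frac{2}{33}$, which is precisely the role played by tightening the restriction on $\alpha$ in the hypotheses of Proposition \ref{p3} beyond those needed for Proposition \ref{p2}.
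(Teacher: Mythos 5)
Your proposal is correct and is exactly the argument the paper intends: the paper states the lemma as a direct consequence of the bound (\ref{bound E_J_N p2}) from Proposition \ref{p2} together with Lemmas \ref{control eBtiled on K_s lemma} and \ref{control eBtilde on N_ex lemma}, which is precisely your conjugation-and-bookkeeping route, and your exponent checks (in particular $\alpha<\frac{11}{192}<\frac{2}{33}$, which already follows from the hypotheses of Proposition \ref{p2} rather than needing the extra tightening in Proposition \ref{p3}) are accurate.
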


\par Lemma \ref{cal com bog lemma} establishes the evaluation of the last term of (\ref{define Z_N}).
\begin{lemma}\label{cal com bog lemma}
Under the assumption that $0<\gamma<\alpha<\frac{1}{168}$,
  \begin{equation}\label{cal com bog}
  \begin{aligned}
    &\int_{0}^{1}\int_{t}^{1}e^{-s\tilde{B}}[\mathcal{V}_{21}^\prime
  +\Omega,\tilde{B}]e^{s\tilde{B}}dsdt\\
  =&\sum_{k,p,q,\sigma,\nu}\big(W_k+\eta_kk(q-p)\big)\xi_{k,q,p}^{\nu,\sigma}
  \chi_{p-k\notin B^{\sigma}_F}\chi_{q+k\notin B^{\nu}_F}
  \chi_{p\in B^{\sigma}_F}\chi_{q\in B^{\nu}_F}\\
  &-\sum_{k,p,q,\sigma}\big(W_k+\eta_kk(q-p)\big)\xi_{(k+q-p),p,q}^{\sigma,\sigma}
  \chi_{p-k,q+k\notin B^{\sigma}_F}\chi_{p,q\in B^{\sigma}_F}\chi_{p\neq q}\\
  &+\mathcal{E}_{[\mathcal{V}_{21}^\prime
  +\Omega,\tilde{B}]}
  \end{aligned}
  \end{equation}
  where
  \begin{equation}\label{bound E com bog}
    \begin{aligned}
    \pm\mathcal{E}_{[\mathcal{V}_{21}^\prime
  +\Omega,\tilde{B}]}\lesssim
   &\big(N^{-\alpha-\gamma}+
  N^{-\frac{1}{21}-\frac{2}{33}+\frac{33}{7}\alpha+\frac{23}{7}\gamma+\varepsilon}
  \big)
  \big(\mathcal{K}_s+N^{\frac{1}{3}+\alpha}\big)\\
  &N^{-\frac{1}{21}+\frac{33}{7}\alpha+\frac{23}{7}\gamma+\varepsilon}(\mathcal{N}_{ex}+1).
   \end{aligned}
  \end{equation}
  for any $\varepsilon>0$ small enough but fixed.
\end{lemma}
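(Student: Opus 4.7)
The plan is to mirror the structure of the proof of Lemma \ref{cal com V_21' lemma} (which handled $[\mathcal{V}_{21}^\prime, B]$), but with the coefficients $W_k$ and $\eta_l$ replaced by $W_k+\eta_k k(q-p)$ and $\xi_{l,s,r}^{\nu,\sigma}$ respectively, and with the conjugation by $e^{s B}$ replaced by conjugation by $e^{s\tilde B}$, which is controlled by Lemmas \ref{control eBtiled on K_s lemma} and \ref{control eBtilde on N_ex lemma}. First I would use Lemma \ref{lem commutator} to expand $[\mathcal{V}_{21}^\prime+\Omega,\tilde B]$ into three classes, analogous to $\Xi_1,\Xi_2,\Xi_3$ in \eqref{[V_21' B] XI}: a ``Bogoliubov'' quartic class $\tilde\Xi_1$ (where all four annihilation/creation operators contract with the four in $\tilde B$), a propagation class $\tilde\Xi_2$, and a sextic residue class $\tilde\Xi_3$.

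Second, the two stated sums are extracted from $\tilde\Xi_1$ exactly as in \eqref{Xi_1}--\eqref{E_XI_1}. Splitting $1=\chi_{k=l}+\chi_{k\neq l}$, anti-commuting $a^*_{p,\sigma}a^*_{q,\nu}a_{q,\nu}a_{p,\sigma}$ to put it in normal order, and keeping only the scalar (i.e. projection onto the $N$-particle subspace) piece: the ``direct'' pairing $\chi_{k=l}$ produces $\sum_{k,p,q,\sigma,\nu}(W_k+\eta_k k(q-p))\,\xi_{k,q,p}^{\nu,\sigma}\chi_{p-k\notin B_F^\sigma}\chi_{q+k\notin B_F^\nu}\chi_{p\in B_F^\sigma}\chi_{q\in B_F^\nu}$, which is the first stated sum; the ``exchange'' pairing identifies $p-k=r-l$, $q+k=s+l$ with $k\neq l$ and both particles of the same spin $\sigma$, producing exactly the second stated sum involving $\xi_{k+q-p,p,q}^{\sigma,\sigma}$.

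Third, everything not of this scalar form must be absorbed into $\mathcal{E}_{[\mathcal{V}_{21}^\prime+\Omega,\tilde B]}$. The sub-leading pieces of $\tilde\Xi_1$ (containing $a_{p,\sigma}a_{p,\sigma}^*$ or $a_p a_q a_q^* a_p^*$, as in $\Xi_{11j}$ and $\Xi_{123j}$) and the $\chi_{k\neq l}$ ``long-range'' piece $\Xi_{1233}$ are bounded via the $L^2$ estimate \eqref{xi l2} and the off-diagonal estimate \eqref{energy est bog true} from Lemmas \ref{xi l2 lemma} and \ref{energy est bog lem}, together with the bounds \eqref{L1 Linfty est}, \eqref{too naive} for $\xi$. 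For $\tilde\Xi_2$ and $\tilde\Xi_3$ I would do the same high-low momentum decomposition employed in Section \ref{number of p ops} and in the treatment of $M_1,M_2,M_3$ from \eqref{M_1 rewrt}--\eqref{M_3 bound}: split each ``outside'' momentum into $P_{F,\delta}^\sigma$ (use \eqref{xi l2 cut off} against $\mathcal{K}_s$) or $A_{F,\delta}^\sigma$ (use cardinality \eqref{ineqn N_l and N_s} against operator norms), then choose the $\delta$'s optimally in terms of $\alpha,\gamma$. Finally the conjugation by $e^{s\tilde B}$ is handled via Lemma \ref{control eBtiled on K_s lemma} (for $\mathcal{K}_s$) and Lemma \ref{control eBtilde on N_ex lemma} (for powers of $\mathcal{N}_{ex}$), applying Gronwall where an $(\mathcal{N}_{ex}+1)^n$ factor appears.

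The main obstacle is the sextic class $\tilde\Xi_3$, which carries the coefficient $(W_k+\eta_k k(q-p))\,\xi_{l,s,r}^{\nu,\sigma}$. The extra $p,q$-dependence of $\xi$ (absent for $\eta_k$) forbids treating it as a single position-space function, so every estimate has to keep the sum in an ``inner'' position before applying Cauchy--Schwarz against $\xi$'s $L^2$ norm. This is where the precise arithmetic exponent $-\tfrac{1}{21}+\tfrac{33}{7}\alpha+\tfrac{23}{7}\gamma+\varepsilon$ will arise: three cut-off parameters in the high-low decomposition of $\tilde\Xi_3$ must be balanced simultaneously against three momentum sums (two outside, one inside the Fermi ball), and this optimization produces the coefficient $-1/21$, which in turn forces the hypothesis $\alpha<1/168$ to guarantee a net negative exponent after accounting for the worst-case $\frac{33}{7}\alpha+\frac{23}{7}\gamma$ loss.
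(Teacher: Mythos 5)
Your proposal matches the paper's actual proof: the commutator $[\mathcal{V}_{21}^\prime+\Omega,\tilde B]$ is expanded via Lemma \ref{lem commutator} (the paper regroups into four terms $\Phi_1,\dots,\Phi_4$ rather than your three, a trivial rearrangement), the two scalar sums come from the $\chi_{k=l}$ (direct) and $\chi_{k\neq l}$ (exchange) pairings in the fully contracted piece exactly as you describe, the $p,q$-dependence of $\xi$ is handled by the same kernel-plus-Cauchy--Schwarz device you anticipate, and the worst error comes from the sextic piece whose multi-parameter high-low optimization ($\omega=\tfrac13+u$, $\varrho$, $r$, $r'$, $\vartheta$) produces exactly the exponent $-\tfrac1{21}+\tfrac{33}{7}\alpha+\tfrac{23}{7}\gamma$ and hence $\alpha<\tfrac1{168}$. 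The only cosmetic difference is that the paper does not re-iterate Duhamel inside the commutator (as was done for $\Xi_3\to M_1,M_2,M_3$ in Lemma \ref{cal com V_21' lemma}) but bounds $\Phi_2,\Phi_3,\Phi_4$ directly on $\mathcal{H}^{\wedge N}(\{N_{\varsigma_i}\})$ and conjugates once at the end via Lemmas \ref{control eBtiled on K_s lemma} and \ref{control eBtilde on N_ex lemma} — which your closing remark about handling $(\mathcal{N}_{ex}+1)^n$ via Gronwall already captures.
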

\begin{proof}
  \par We first evaluate $[\mathcal{V}_{21}^\prime,\tilde{B}]$, then $[\Omega,\tilde{B}]$ follows analogously. By Lemma \ref{lem commutator}, We have
  \begin{equation}\label{Phi}
   [\mathcal{V}_{21}^\prime,\tilde{B}]=\sum_{j=1}^{4}\Phi_j
  \end{equation}
  with
  \begin{equation}\label{Phi_j}
    \begin{aligned}
     &\Phi_1=\sum_{\substack{\sigma,\nu\\k,p,q}}\sum_{l,r,s}W_k\xi_{l,s,r}^{\nu,\tau}
    \delta_{p-k,r-l}\delta_{q+k,s+l}a^*_{p,\sigma}a^*_{q,\nu}a_{s,\nu}a_{r,\sigma}\\
    &\quad\quad\quad\quad\times
    \chi_{p-k\notin B_F^\sigma}\chi_{p,r\in B_F^\sigma}\chi_{q+k\notin B_F^\nu}
    \chi_{q,s\in B_F^\nu}+h.c.\\
    &\Phi_2=\sum_{\substack{\sigma,\nu\\k,p,q}}\sum_{\tau,l,r,s}2W_k\xi_{l,s,r}^{\sigma,\tau}
    \delta_{p-k,s+l}a^*_{p,\sigma}a^*_{q,\nu}a^*_{r-l,\tau}
    a_{q+k,\nu}a_{s,\sigma}a_{r,\tau}\\
    &\quad\quad\quad\quad\times
    \chi_{p-k\notin B_F^\sigma}\chi_{p,s\in B_F^\sigma}\chi_{q+k\notin B_F^\nu}
    \chi_{q\in B_F^\nu}\chi_{r-l\notin B^\tau_F}\chi_{r\in B_F^\tau}+h.c.\\
    &\Phi_3=\sum_{\substack{\sigma,\nu\\k,p,q}}\sum_{l,r,s}W_k\xi_{l,s,r}^{\nu,\sigma}
    \delta_{p,r}\delta_{q,s}a^*_{r-l,\sigma}a^*_{s+l,\nu}
    a_{q+k,\nu}a_{p-k,\sigma}\\
    &\quad\quad\quad\quad\times
    \chi_{p-k,r-l\notin B_F^\sigma}\chi_{p\in B_F^\sigma}\chi_{q+k,s+l\notin B_F^\nu}
    \chi_{q\in B_F^\nu}+h.c.\\
    &\Phi_4=\sum_{\substack{\sigma,\nu\\k,p,q}}\sum_{\varpi,l,r,s}-2W_k\xi_{l,s,r}^{\varpi,\sigma}
    \delta_{p,r}a^*_{r-l,\sigma}a^*_{s+l,\varpi}a_{s,\varpi}a^*_{q,\nu}
    a_{q+k,\nu}a_{p-k,\sigma}\\
    &\quad\quad\quad\quad\times
    \chi_{p-k,r-l\notin B_F^\sigma}\chi_{p\in B_F^\sigma}\chi_{q+k\notin B_F^\nu}
    \chi_{q\in B_F^\nu}\chi_{s+l\notin B^\varpi_F}\chi_{s\in B^\varpi_F}+h.c.
    \end{aligned}
  \end{equation}
  Similarly, we have $[\Omega,\tilde{B}]=\sum_{j=1}^{4}\tilde{\Phi}_j$, with $W_k$ in $\Phi_j$ being replaced by $\eta_kk(q-p)$ (they are indeed similar). We then evaluate (\ref{Phi_j}) term by term.
  \begin{flushleft}
    \textbf{Analysis of $\Phi_1$:}
  \end{flushleft}
  Using $1=\chi_{k=l}+\chi_{k\neq l}$, we decompose $\Psi_1=\Psi_{11}+\Psi_{12}$. For $\Psi_{11}$, we furthermore split it using
  \begin{equation*}
  \begin{aligned}
    &a^*_{p,\sigma}a^*_{q,\nu}a_{q,\nu}a_{p,\sigma}\\
    =&1-(a_{p,\sigma}a_{p,\sigma}^*+a_{q,\nu}a^*_{q,\nu})+a_{p,\sigma}a_{q,\nu}
    a^*_{q,\nu}a^*_{p,\sigma}+\delta_{p,q}\delta_{\sigma,\nu}(a_{q,\nu}a^*_{p,\sigma}
    +a_{p,\sigma}a^*_{q,\nu}-1)
  \end{aligned}
  \end{equation*}
  That is, we write $\Phi_{11}=\sum_{j=1}^{4}\Phi_{11j}$ with (notice that $\xi_{k,q,p}^{\nu,\sigma}=\xi_{-k,p,q}^{\sigma,\nu}$):
  \begin{equation}\label{Psi_11j}
    \begin{aligned}
    &\Phi_{111}=\sum_{k,p,q,\sigma,\nu}2W_k\xi_{k,q,p}^{\nu,\sigma}
    \chi_{p-k\notin B^{\sigma}_F}\chi_{q+k\notin B^{\nu}_F}\chi_{p\in B^{\sigma}_F}\chi_{q\in B^{\nu}_F}\\
    &\Phi_{112}=-\sum_{k,p,q,\sigma,\nu}4W_k\xi_{k,q,p}^{\nu,\sigma}
    a_{p,\sigma}a_{p,\sigma}^*
    \chi_{p-k\notin B^{\sigma}_F}\chi_{q+k\notin B^{\nu}_F}\chi_{p\in B^{\sigma}_F}
    \chi_{q\in B^{\nu}_F}\\
    &\Phi_{113}=\sum_{k,p,q,\sigma,\nu}2W_k\xi_{k,q,p}^{\nu,\sigma}a_{p,\sigma}a_{q,\nu}
    a^*_{q,\nu}a^*_{p,\sigma}
    \chi_{p-k\notin B^{\sigma}_F}\chi_{q+k\notin B^{\nu}_F}\chi_{p\in B^{\sigma}_F}
    \chi_{q\in B^{\nu}_F}\\
    &\Phi_{114}=\sum_{k,p,\sigma}2W_k\xi_{k,p,p}^{\sigma,\sigma}(2a_{p,\sigma}a^*_{p,\sigma}-1)
    \chi_{p-k\notin B^{\sigma}_F}\chi_{p\in B^{\sigma}_F}
    \end{aligned}
      \end{equation}
    Using (\ref{energy est bog}) in Lemma \ref{energy est bog lem}, we easily bound for $j=2,3,4$:
    \begin{equation}\label{Phi_11j bound234}
      \pm\Phi_{11j}\lesssim Na^2\ell^{-1}(\mathcal{N}_{ex}+1)\leq N^{-\frac{2}{3}+\alpha}
      (\mathcal{N}_{ex}+1).
    \end{equation}
    \par For $\Phi_{12}$, we similarly decompose it by $\Phi_{12}=\sum_{j=1}^{3}\Phi_{12j}$ with
    \begin{equation}\label{Phi_12j}
      \begin{aligned}
      &\Phi_{121}=-\sum_{k,p,q,\sigma}2W_k\xi_{(k+q-p),p,q}^{\sigma,\sigma}\chi_{p-k,q+k\notin B^{\sigma}_F}\chi_{p,q\in B^{\sigma}_F}\chi_{p\neq q}\\
      &\Phi_{122}=\sum_{k,p,q,\sigma}4W_k\xi_{(k+q-p),p,q}^{\sigma,\sigma}
      a_{p,\sigma}a_{p,\sigma}^*
      \chi_{p-k,q+k\notin B^{\sigma}_F}\chi_{p,q\in B^{\sigma}_F}\chi_{p\neq q}\\
      &\Phi_{123}=\sum_{k,p,q,\sigma,\nu}\sum_{l,r,s}W_k\xi_{l,s,r}^{\nu,\sigma}
      a_{r,\sigma}a_{s,\nu}a^*_{q,\nu}a^*_{p,\sigma}
      \delta_{p-k,r-l}\delta_{q+k,s+l}\chi_{k\neq l}\\
    &\quad\quad\quad\quad\times
    \chi_{p-k\notin B_F^\sigma}\chi_{p,r\in B_F^\sigma}\chi_{q+k\notin B_F^\nu}
    \chi_{q,s\in B_F^\nu}+h.c.
      \end{aligned}
    \end{equation}
    Using (\ref{energy est bog}) in Lemma \ref{energy est bog lem}, we easily bound
    \begin{equation}\label{Phi_122 bound}
      \pm\Phi_{122}\lesssim Na^2\ell^{-1}(\mathcal{N}_{ex}+1)\leq N^{-\frac{2}{3}+\alpha}
      (\mathcal{N}_{ex}+1).
    \end{equation}
    For $\Phi_{123}$, we use$\chi_{k\neq l}=1-\chi_{k=l}$ to further decompose it by $\Phi_{123}=\Phi_{1231}+\Phi_{1232}$. It is easy to find $\Phi_{1232}=-\Phi_{113}$ and hence
    \begin{equation}\label{Phi1232bound}
      \pm\Phi_{1232}\lesssim N^{-\frac{2}{3}+\alpha}(\mathcal{N}_{ex}+1).
    \end{equation}
    For $\Phi_{1231}$, we write it out as
    \begin{equation}\label{Phi_1231}
      \begin{aligned}
      \Phi_{1231}&=\sum_{\substack{\sigma,\nu\\q_3,q_4}}
      \chi_{q_3\in B_F^\sigma}\chi_{q_4\in B_F^\nu}\int_{\Lambda^4}W(x_1-x_3)
      \xi_{q_4,q_3}^{\nu,\sigma}(x_2-x_4)\\
      &\quad\quad\times
      \Big(\sum_{p_1\notin B_F^\sigma}e^{ip_1(x_2-x_1)}\Big)
      \Big(\sum_{p_2\notin B_F^\nu}e^{ip_2(x_4-x_3)}\Big)
      e^{-iq_4x_4}e^{-iq_3x_2}\\
      &\quad\quad\times
      a_{q_3,\sigma}a_{q_4,\nu}a^*(g_{x_3,\nu})a^*(g_{x_1,\sigma})dx_1dx_2dx_3dx_4+h.c.\\
      &=\sum_{\substack{\sigma,\nu\\q_3,q_4}}
      \chi_{q_3\in B_F^\sigma}\chi_{q_4\in B_F^\nu}\int_{\Lambda^2}W(x_1-x_3)\\
      &\quad\quad\times
      \Big(\sum_{p_2\notin B_F^\nu}\xi_{p_2-q_4,q_4,q_3}^{\nu,\sigma}
      e^{ip_2(x_1-x_3)}\Big)e^{-i(q_3+q_4)x_1}\\
      &\quad\quad\times
      a_{q_3,\sigma}a_{q_4,\nu}a^*(g_{x_3,\nu})a^*(g_{x_1,\sigma})dx_1dx_3+h.c.
      \end{aligned}
    \end{equation}
    Using the notation defined in (\ref{notation}) (notice that $\xi_{p_2-q_4,q_4,q_3}^{\nu,\sigma}=\xi_{q_4-p_2,q_3,q_4}^{\sigma,\nu}$), we can write it as
    \begin{equation}\label{Phi_1231re}
      \begin{aligned}
      \Phi_{1231}&=\sum_{\substack{\sigma,\nu,q_3}}
      \chi_{q_3\in B_F^\sigma}\int_{\Lambda^2}W(x_1-x_3)e^{-iq_3x_1}\\
      &\quad\quad\times
      a_{q_3,\sigma}a(k_{x_3,x_1,\nu}^{(q_3,\sigma)})
      a^*(g_{x_3,\nu})a^*(g_{x_1,\sigma})dx_1dx_3+h.c.
      \end{aligned}
    \end{equation}
    Using (\ref{number of exicited particles ops2}) and (\ref{11thetatilde}), we bound it, for $\psi\in\mathcal{H}^{\wedge N}
    (\{N_{\varsigma_i}\})$ that
    \begin{equation}\label{Psi1231bound}
      \begin{aligned}
      \vert\langle\Phi_{1231}\psi,\psi\rangle\vert
      &\lesssim \sum_{\substack{\sigma,\nu,q_3}}
      \chi_{q_3\in B_F^\sigma}\int_{\Lambda^2}W(x_1-x_3)
      \Vert a^*(k_{x_3,x_1,\nu}^{(q_3,\sigma)})a_{q_3,\sigma}^*\psi\Vert\\
      &\quad\quad\times
      \Vert a^*(g_{x_3,\nu})a^*(g_{x_1,\sigma})\psi\Vert dx_1dx_3\\
      &\lesssim N^{\frac{5}{6}+\alpha}\Vert W\Vert_1
      \langle(\mathcal{N}_{ex}+1)\psi,\psi\rangle
      \lesssim N^{-\frac{1}{6}+\alpha}
      \langle(\mathcal{N}_{ex}+1)\psi,\psi\rangle.
      \end{aligned}
    \end{equation}
    \vspace{1em}
    \par For $[\Omega,\tilde{B}]$, $\tilde{\Phi}_1$ can be evaluated analogously. We only need to replace the estimate $\Vert W\Vert_1\lesssim a$ by $N^{\frac{1}{3}}\Vert\nabla\eta\Vert_1\lesssim aN^{-\alpha}$. Collecting all the results above, we conclude that
    \begin{equation}\label{Phi_1 conclusion}
      \begin{aligned}
      &\int_{0}^{1}\int_{t}^{1}e^{-s\tilde{B}}(\Phi_1+\tilde{\Phi}_1)e^{s\tilde{B}}dsdt\\
  =&\sum_{k,p,q,\sigma,\nu}\big(W_k+\eta_kk(q-p)\big)\xi_{k,q,p}^{\nu,\sigma}
  \chi_{p-k\notin B^{\sigma}_F}\chi_{q+k\notin B^{\nu}_F}
  \chi_{p\in B^{\sigma}_F}\chi_{q\in B^{\nu}_F}\\
  &-\sum_{k,p,q,\sigma}\big(W_k+\eta_kk(q-p)\big)\xi_{(k+q-p),p,q}^{\sigma,\sigma}
  \chi_{p-k,q+k\notin B^{\sigma}_F}\chi_{p,q\in B^{\sigma}_F}\chi_{p\neq q}\\
  &+\mathcal{E}_{\Phi_1+\tilde{\Phi}_1},
      \end{aligned}
    \end{equation}
    and by Lemma \ref{control eBtilde on N_ex lemma} we have
    \begin{equation}\label{E_phi_1}
      \pm\mathcal{E}_{\Phi_1+\tilde{\Phi}_1}\lesssim
      N^{-\frac{1}{6}+\alpha}\big((\mathcal{N}_{ex}+1)+N^{-\frac{2}{33}+\varepsilon}
      (\mathcal{K}_s+N^{\frac{1}{3}+\alpha})\big)
    \end{equation}
    for some $\varepsilon>0$ small enough but fixed.

\begin{flushleft}
    \textbf{Analysis of $\Phi_2$:}
\end{flushleft}
Using the notation defined in (\ref{notation}), we write $\Phi_2$ as
\begin{equation}\label{Phi_2}
  \begin{aligned}
  \Phi_2&=\sum_{\substack{\sigma,\nu,\tau\\
  q_3,q_4}}\chi_{q_3\in B_F^\tau}\chi_{q_4\in B_F^\sigma}\int_{\Lambda^3}
  2W(x_1-x_3)e^{-i(q_3+q_4)x_2}
  \Big(\sum_{p_1\notin B_F^\sigma}\xi_{p_1-q_4,q_4,q_3}^{\sigma,\tau}e^{ip_1(x_2-x_1)}\Big)\\
  &\quad\quad\times
  a^*(g_{x_1,\sigma})a^*(g_{x_3,\nu})a^*(h_{x_2,\tau})a(h_{x_3,\nu})
  a_{q_4,\sigma}a_{q_3,\tau}dx_1dx_2dx_3+h.c.\\
  &=\sum_{\substack{\sigma,\nu,\tau,q_3}}\chi_{q_3\in B_F^\tau}\int_{\Lambda^3}
  2W(x_1-x_3)e^{-iq_3x_2}\\
  &\quad\quad\times
  a^*(g_{x_1,\sigma})a^*(g_{x_3,\nu})a^*(h_{x_2,\tau})a(h_{x_3,\nu})
  a(k_{x_1,x_2,\sigma}^{(q_3,\tau)})a_{q_3,\tau}dx_1dx_2dx_3+h.c.
  \end{aligned}
\end{equation}
For $0<d<\frac{1}{3}$ to be determined later, we use the fact that
\begin{equation*}
\begin{aligned}
  \chi_{q_3\in B_F^\tau}=
  \chi_{q_3\in \underline{A}_{F,d}^\tau}+
  \chi_{q_3\in \underline{B}_{F,d}^\tau}
\end{aligned}
\end{equation*}
to decompose $\Phi_2=\Phi_{21}+\Phi_{22}$.
\par For $\Phi_{21}$, we further decompose it into $\Phi_{21}=\sum_{j=1}^{4}\Phi_{21j}$ using the fact that
\begin{equation*}
\begin{aligned}
  a^*(h_{x_2,\tau})a(h_{x_3,\nu})
  =&a^*(H_{x_2,\tau}[\delta])a(H_{x_3,\nu}[\delta^\prime])
  +a^*(H_{x_2,\tau}[\delta])a(L_{x_3,\nu}[\delta^\prime])\\
  &+a^*(L_{x_2,\tau}[\delta])a(H_{x_3,\nu}[\delta^\prime])
  +a^*(L_{x_2,\tau}[\delta])a(L_{x_3,\nu}[\delta^\prime]),
\end{aligned}
\end{equation*}
where $-\frac{11}{48}<\delta,\delta^\prime\leq\frac{1}{3}$ to be determined later. For $\Phi_{211}$ we can use (\ref{ineqn N_h and N_i}), (\ref{ineqn N_l and N_s}) and (\ref{12thetatilde}) to bound for any $\psi\in\mathcal{H}^{\wedge N}(\{N_{\varsigma_i}\})$:
\begin{equation}\label{Phi211 bound}
  \begin{aligned}
  \vert\langle\Phi_{211}\psi,\psi\rangle\vert&\lesssim
  \sum_{\substack{\sigma,\nu,\tau,q_3}}\chi_{q_3\in\underline{A}_{F,d}^\tau}\int_{\Lambda^3}
  W(x_1-x_3)\Vert a(g_{x_3,\nu})a(g_{x_1,\sigma})a(H_{x_2,\tau}[\delta])\psi\Vert\\
  &\quad\quad\times
  \Vert a(k_{x_1,x_2,\sigma}^{(q_3,\tau)})a_{q_3,\tau}a(H_{x_3,\nu}[\delta^\prime])\psi\Vert
  dx_1dx_2dx_3\\
  &\lesssim\Big(\sum_{\substack{\sigma,\nu,\tau,q_3}}
  \chi_{q_3\in\underline{A}_{F,d}^\tau}\int_{\Lambda^3}
  W(x_1-x_3)N^2\Vert a(H_{x_2,\tau}[\delta])\psi\Vert^2\Big)^{\frac{1}{2}}\\
   &\quad\quad\times
  \Big(\sum_{\substack{\sigma,\nu,\tau,q_3}}
  \chi_{q_3\in\underline{A}_{F,d}^\tau}\int_{\Lambda^2}
  W(x_1-x_3)N^{-\frac{4}{3}+\varepsilon}
  \Vert a(H_{x_3,\nu}[\delta^\prime])\psi\Vert^2\Big)^{\frac{1}{2}}\\
  &\lesssim N^{\frac{2}{3}-\frac{\delta}{2}-\frac{\delta^\prime}{2}+\frac{\varepsilon}{2}+d}
  \Vert W\Vert_1\langle\mathcal{K}_s\psi,\psi\rangle
  \lesssim N^{-\frac{1}{3}-\frac{\delta}{2}-\frac{\delta^\prime}{2}+\frac{\varepsilon}{2}+d}
  \langle\mathcal{K}_s\psi,\psi\rangle.
  \end{aligned}
\end{equation}
For the estimate of $\Phi_{213}$, since $\Vert a(L_{x_2,\tau}[\delta])\Vert^2\lesssim N^{\frac{2}{3}+\delta}$, we have
\begin{equation}\label{Phi213bound1st}
  \begin{aligned}
  \vert\langle\Phi_{213}\psi,\psi\rangle\vert&\lesssim
  \sum_{\substack{\sigma,\nu,\tau,q_3}}\chi_{q_3\in\underline{A}_{F,d}^\tau}\int_{\Lambda^3}
  W(x_1-x_3)\Vert a^*_{q_3,\tau}a(g_{x_3,\nu})a(g_{x_1,\sigma})\psi\Vert\\
  &\quad\quad\times\Vert a^*(L_{x_2,\tau}[\delta])a(k_{x_1,x_2,\sigma}^{(q_3,\tau)})
  a(H_{x_3,\nu}[\delta^\prime])\psi\Vert dx_1dx_2dx_3\\
  &\lesssim\Big(\sum_{\sigma,\nu,\tau,q_3}\chi_{q_3\in B_F^\tau}
  \int_{\Lambda^3}W(x_1-x_3)\Vert a^*_{q_3,\tau}a(g_{x_3,\nu})a(g_{x_1,\sigma})\psi\Vert^2
  \Big)^{\frac{1}{2}}\\
  &\quad\quad\times
  \Big(\sum_{\substack{\sigma,\nu,\tau,q_3}}\chi_{q_3\in\underline{A}_{F,d}^\tau}\int_{\Lambda^2}
  W(x_1-x_3)N^{-\frac{2}{3}+\delta+\varepsilon}
  \Vert a(H_{x_3,\nu}[\delta^\prime])\psi\Vert^2\Big)^{\frac{1}{2}}
  \end{aligned}
\end{equation}
Notice that since $a(g_{x_3,\nu})a(g_{x_1,\sigma})\psi\in\mathcal{H}^{\wedge (N-2)}$, and on the subspace of $\mathcal{H}^{\wedge (N-2)}$ generated by $a(g_{x_3,\nu})a(g_{x_1,\sigma})\psi$, since by Lemma \ref{lem commutator}, we have
\begin{equation*}
  \mathcal{N}a(g_{x_3,\nu})a(g_{x_1,\sigma})\psi=a(g_{x_3,\nu})a(g_{x_1,\sigma})(\mathcal{N}-2)
  \psi=a(g_{x_3,\nu})a(g_{x_1,\sigma})(N-2)
  \psi.
\end{equation*}
Therefore, on the subspace of $\mathcal{H}^{\wedge (N-2)}$ generated by $a(g_{x_3,\nu})a(g_{x_1,\sigma})\psi$, we have
\begin{equation}\label{fun fact}
  \sum_{\tau,q_3\in B_F^\tau}a_{q_3,\tau}a^*_{q_3,\tau}
  =N-\sum_{\tau,q_3\in B_F^\tau}a^*_{q_3,\tau}a_{q_3,\tau}
  =\mathcal{N}_{ex}+2.
\end{equation}
Plugging (\ref{fun fact}) into (\ref{Phi213bound1st}), we have
\begin{equation}\label{Phi213bound}
  \begin{aligned}
  \vert\langle\Phi_{213}\psi,\psi\rangle\vert&\lesssim
  N^{1+\frac{\delta}{2}+\frac{d}{2}+\frac{\varepsilon}{2}}\Vert W\Vert_1
  \langle(\mathcal{N}_{ex}+1)\psi,\psi\rangle^{\frac{1}{2}}
  \langle\mathcal{N}_h[\delta^\prime]\psi,\psi\rangle^{\frac{1}{2}}\\
  &\lesssim N^{-\frac{1}{6}+\frac{\delta}{2}+\frac{\delta^\prime}{2}
  +\frac{d}{2}+\frac{\varepsilon}{2}}\langle(\mathcal{N}_{ex}+1)\psi,\psi\rangle
  +N^{-\frac{1}{6}+\frac{\delta}{2}-\frac{3}{2}\delta^\prime
  +\frac{d}{2}+\frac{\varepsilon}{2}}
  \langle\mathcal{K}_s\psi,\psi\rangle.
  \end{aligned}
\end{equation}
For $\Phi_{212}$ and $\Phi_{214}$, we first switch back to the momentum space and use (\ref{anticommutator}) to write
\begin{equation}\label{Phi_212Phi214re}
  \Phi_{212}+\Phi_{214}=\check{\Phi}_{212}+\check{\Phi}_{214}+
  \check{\Phi}_{21,r1}+\check{\Phi}_{21,r2}
\end{equation}
with
\begin{equation}\label{checkphi}
  \begin{aligned}
  \check{\Phi}_{212}&=\sum_{\substack{\sigma,\nu\\k,p,q}}\sum_{\tau,l,r,s}
  2W_k\xi_{l,s,r}^{\sigma,\tau}
    \delta_{p-k,s+l}a^*_{p,\sigma}a^*_{r-l,\tau}
    a_{q+k,\nu}a_{s,\sigma}a_{r,\tau}a^*_{q,\nu}\\
    &\quad\quad\quad\quad\times
    \chi_{p-k\notin B_F^\sigma}\chi_{p,s\in B_F^\sigma}\chi_{q+k\in A_{F,\delta^\prime}^\nu}
    \chi_{q\in B_F^\nu}\chi_{r-l\in P^\tau_{F,\delta}}\chi_{r\in \underline{A}_{F,d}^\tau}+h.c.\\
 \check{\Phi}_{214}&=\sum_{\substack{\sigma,\nu\\k,p,q}}\sum_{\tau,l,r,s}
 2W_k\xi_{l,s,r}^{\sigma,\tau}
    \delta_{p-k,s+l}a^*_{p,\sigma}a^*_{r-l,\tau}
    a_{q+k,\nu}a_{s,\sigma}a_{r,\tau}a^*_{q,\nu}\\
    &\quad\quad\quad\quad\times
    \chi_{p-k\notin B_F^\sigma}\chi_{p,s\in B_F^\sigma}\chi_{q+k\in A_{F,\delta^\prime}^\nu}
    \chi_{q\in B_F^\nu}\chi_{r-l\in A^\tau_{F,\delta}}\chi_{r\in \underline{A}_{F,d}^\tau}+h.c.\\
 \check{\Phi}_{21,r1}&=\sum_{\substack{\sigma,k,p,q}}\sum_{\tau,l,r,s}
 2W_k\xi_{l,s,r}^{\sigma,\tau}
    \delta_{p-k,s+l}\delta_{q,s}a^*_{p,\sigma}a^*_{r-l,\tau}
    a_{q+k,\sigma}a_{r,\tau}\\
    &\quad\quad\quad\quad\times
    \chi_{p-k\notin B_F^\sigma}\chi_{p,q\in B_F^\sigma}\chi_{q+k\in A_{F,\delta^\prime}^\sigma}
    \chi_{r-l\in B^\tau_{F}}\chi_{r\in \underline{A}_{F,d}^\tau}+h.c.\\
 \check{\Phi}_{21,r2}&=-\sum_{\substack{\sigma,k,p,q}}\sum_{\tau,l,r,s}
 2W_k\xi_{l,s,r}^{\sigma,\tau}
    \delta_{p-k,s+l}\delta_{q,r}a^*_{p,\sigma}a^*_{r-l,\tau}
    a_{q+k,\tau}a_{s,\sigma}\\
    &\quad\quad\quad\quad\times
    \chi_{p-k\notin B_F^\sigma}\chi_{p,s\in B_F^\sigma}\chi_{q+k\in A_{F,\delta^\prime}^\tau}
    \chi_{q\in \underline{A}_{F,d}^\tau}\chi_{r-l\in B^\tau_{F}}+h.c.
  \end{aligned}
\end{equation}
We bound $\check{\Phi}_{212}$ by
\begin{equation}\label{Phi212bound}
  \begin{aligned}
  \vert\langle\check{\Phi}_{212}\psi,\psi\rangle\vert
  &\lesssim \sum_{\sigma,\nu,\tau,q_3}\chi_{q_3\in\underline{A}_{F,d}^\tau}\int_{\Lambda^3}
  W(x_1-x_3)\Vert a^*(L_{x_3,\nu}[\delta^\prime])a(g_{x_1,\sigma})
  a(H_{x_2,\tau}[\delta])\psi\Vert\\
  &\quad\quad\times
  \Vert a(k_{x_1,x_2,\sigma}^{q_3,\tau})a_{q_3,\tau}a^*(g_{x_3,\nu})\psi\Vert
  dx_1dx_2dx_3\\
  &\lesssim\Big(\sum_{\sigma,\nu,\tau,q_3}\chi_{q_3\in\underline{A}_{F,d}^\tau}\int_{\Lambda^3}
  W(x_1-x_3)N^{\frac{5}{3}+\delta^\prime}\Vert
  a(H_{x_2,\tau}[\delta])\psi\Vert^2\Big)^{\frac{1}{2}}\\
  &\quad\quad\times
  \Big(\sum_{\sigma,\nu,\tau,q_3}\chi_{q_3\in\underline{A}_{F,d}^\tau}\int_{\Lambda^2}
  W(x_1-x_3)N^{-\frac{4}{3}+\varepsilon}
  \Vert a^*(g_{x_3,\nu})\psi\Vert^2 \Big)^{\frac{1}{2}}\\
  &\lesssim N^{\frac{5}{6}+d+\frac{\delta^\prime}{2}+\frac{\varepsilon}{2}}\Vert W\Vert_1
  \langle(\mathcal{N}_{ex}+1)\psi,\psi\rangle^{\frac{1}{2}}
  \langle\mathcal{N}_h[\delta]\psi,\psi\rangle^{\frac{1}{2}}\\
  &\lesssim N^{-\frac{1}{3}-\frac{\delta}{2}+\frac{3}{2}\delta^\prime
  +\frac{\varepsilon}{2}+d}\langle(\mathcal{N}_{ex}+1)\psi,\psi\rangle
  +N^{-\frac{1}{3}-\frac{\delta}{2}-\frac{\delta^\prime}{2}
  +\frac{\varepsilon}{2}+d}
  \langle\mathcal{K}_s\psi,\psi\rangle.
  \end{aligned}
\end{equation}
With an observation similar to (\ref{fun fact}), we bound $\check{\Phi}_{214}$ by
\begin{equation}\label{Phi214bound}
  \begin{aligned}
  \vert\langle\check{\Phi}_{214}\psi,\psi\rangle\vert
  &\lesssim \sum_{\sigma,\nu,\tau,q_3}\chi_{q_3\in\underline{A}_{F,d}^\tau}\int_{\Lambda^3}
  W(x_1-x_3)\Vert a(L_{x_2,\tau}[\delta])a_{q_3,\tau}^*a(g_{x_1,\sigma})
  \psi\Vert\\
  &\quad\quad\times
  \Vert a(L_{x_3,\nu}[\delta^\prime])a(k_{x_1,x_2,\sigma}^{q_3,\tau})a^*(g_{x_3,\nu})\psi\Vert
  dx_1dx_2dx_3\\
  &\lesssim\Big(\sum_{\sigma,\nu,\tau,q_3}\chi_{q_3\in B_{F}^\tau}\int_{\Lambda^3}
  W(x_1-x_3)N^{\frac{2}{3}+\delta}\Vert a_{q_3,\tau}^*a(g_{x_1,\sigma})
  \psi\Vert^2\Big)^\frac{1}{2}\\
  &\quad\quad\times
  \Big(\sum_{\sigma,\nu,\tau,q_3}\chi_{q_3\in\underline{A}_{F,d}^\tau}\int_{\Lambda^2}
  W(x_1-x_3)N^{-\frac{2}{3}+\delta^\prime+\varepsilon}
  \Vert a^*(g_{x_3,\nu})\psi\Vert
  \Big)^\frac{1}{2}\\
  &\lesssim N^{\frac{5}{6}+\frac{\delta}{2}+\frac{\delta^\prime}{2}
  +\frac{d}{2}+\frac{\varepsilon}{2}}\Vert W\Vert_1\langle(\mathcal{N}_{ex}+1)\psi,\psi\rangle\\
  &\lesssim N^{-\frac{1}{6}+\frac{\delta}{2}+\frac{\delta^\prime}{2}
  +\frac{d}{2}+\frac{\varepsilon}{2}}\langle(\mathcal{N}_{ex}+1)\psi,\psi\rangle.
  \end{aligned}
\end{equation}
For the third term of (\ref{checkphi}), we first introduce the notation
\begin{equation*}
\begin{aligned}
  &k_{x_1,x_2,x_3,\tau}^{(\sigma)}(z)
  \\
  &\coloneqq\sum_{q_3\in\underline{A}_{F,d}^{\tau}}
  \Big(\sum_{p_1\notin B_F^\sigma}\sum_{q_2\in B_F^\sigma}
  \xi_{p_1-q_2,q_2,q_3}^{\sigma,\tau}e^{-iq_2(x_3-x_2)}e^{-ip_1(x_2-x_1)}\Big)
  e^{iq_3x_2}f_{q_3,\tau}(z)
\end{aligned}
\end{equation*}
By (\ref{xi l2}) in Lemma \ref{xi l2 lemma}, we have
\begin{equation}\label{hehe}
  \begin{aligned}
  &\int_{\Lambda}\Vert a(k_{x_1,x_2,x_3,\tau}^{(\sigma)})\Vert^2dx_2\\
  &\leq\int_{\Lambda}\sum_{q_3\in\underline{A}_{F,d}^{\tau}}
  \Big\vert\sum_{p_1\notin B_F^\sigma}\sum_{q_2\in B_F^\sigma}
  \xi_{p_1-q_2,q_2,q_3}^{\sigma,\tau}e^{iq_2(x_3-x_2)}e^{ip_1(x_2-x_1)}\Big\vert^2dx_2\\
  &\leq\int_{\Lambda}\sum_{q_3\in\underline{A}_{F,d}^{\tau}}
  \Big\vert\sum_{k}\sum_{q_2\in B_F^\sigma}\Big(\chi_{q_2+k\notin B_F^\sigma}
  \xi_{k,q_2,q_3}^{\sigma,\tau}e^{iq_2(x_3-x_1)}\Big)e^{ik(x_2-x_1)}\Big\vert^2dx_2\\
  &=\sum_{q_3\in\underline{A}_{F,d}^{\tau}}
  \sum_{k}\Big\vert\sum_{q_2\in B_F^\sigma}\Big(\chi_{q_2+k\notin B_F^\sigma}
  \xi_{k,q_2,q_3}^{\sigma,\tau}e^{iq_2(x_3-x_1)}\Big)\Big\vert^2\\
  &\leq \sum_{q_3\in\underline{A}_{F,d}^{\tau}}
  \sum_{q_2,q_2^\prime\in B_F^\sigma}\sum_{k}\big\vert\xi_{k,q_2,q_3}^{\sigma,\tau}\big\vert
  \big\vert\xi_{k,q_2^\prime,q_3}^{\sigma,\tau}\big\vert
  \lesssim N^{\frac{1}{3}+d+\varepsilon}.
  \end{aligned}
\end{equation}
With the above notation, we write $\check{\Phi}_{21,r1}$ by
\begin{equation}\label{checkPhi_21r1}
  \begin{aligned}
  \check{\Phi}_{21,r1}=&2\sum_{\sigma,\tau}\int_{\Lambda^3}W(x_1-x_3)a^*(g_{x_1,\sigma})
  a^*(h_{x_2,\tau})\\
  &\quad\quad\quad\quad\times
  a(L_{x_3,\sigma}[\delta^\prime])a(k_{x_1,x_2,x_3,\tau}^{(\sigma)})
  dx_1dx_2dx_3+h.c.
  \end{aligned}
\end{equation}
By (\ref{hehe}), we have
\begin{equation}\label{checkPhi_21r1bound}
  \pm\check{\Phi}_{21,r1}\lesssim
  N^{\frac{2}{3}+\frac{d}{2}+\frac{\varepsilon}{2}}\Vert W\Vert_1
  (\mathcal{N}_{ex}+\mathcal{N}_l[\delta^\prime])
  \lesssim N^{-\frac{1}{3}+\frac{d}{2}+\frac{\varepsilon}{2}}\mathcal{N}_{ex}.
\end{equation}
For the last term of (\ref{checkphi}), we introduce the notation
\begin{equation*}
\begin{aligned}
  &\tilde{k}_{x_1,x_2,x_3,\sigma}^{(\tau)}(z)
  \\
  &\coloneqq\sum_{q_4\in B_F^\sigma}
  \Big(\sum_{p_1\notin B_F^\sigma}\sum_{q_2\in \underline{A}_{F,d}^{\tau}}
  \xi_{p_1-q_4,q_4,q_2}^{\sigma,\tau}e^{-iq_2(x_3-x_2)}e^{-ip_1(x_2-x_1)}\Big)
  e^{iq_4x_2}f_{q_4,\sigma}(z)
\end{aligned}
\end{equation*}
Similar to (\ref{hehe}), we can bound
\begin{equation}\label{hehe2}
  \begin{aligned}
  \int_{\Lambda}\Vert a(\tilde{k}_{x_1,x_2,x_3,\sigma}^{(\tau)})\Vert^2dx_2
  \leq \sum_{q_4\in B_F^\sigma}
  \sum_{q_2,q_2^\prime\in \underline{A}_{F,d}^{\tau}}
  \sum_{k}\big\vert\xi_{k,q_4,q_2}^{\sigma,\tau}\big\vert^2
  \lesssim N^{2d+\varepsilon}.
  \end{aligned}
\end{equation}
With the above notation, we write $\check{\Phi}_{21,r2}$ by
\begin{equation}\label{checkPhi_21r2}
  \begin{aligned}
  \check{\Phi}_{21,r2}=&2\sum_{\sigma,\tau}\int_{\Lambda^3}W(x_1-x_3)a^*(g_{x_1,\sigma})
  a^*(h_{x_2,\tau})\\
  &\quad\quad\quad\quad\times
  a(L_{x_3,\tau}[\delta^\prime])a(k_{x_1,x_2,x_3,\sigma}^{(\tau)})
  dx_1dx_2dx_3+h.c.
  \end{aligned}
\end{equation}
By (\ref{hehe2}), we have
\begin{equation}\label{checkPhi_21r2bound}
  \pm\check{\Phi}_{21,r2}\lesssim
  N^{\frac{1}{2}+d+\frac{\varepsilon}{2}}\Vert W\Vert_1
  (\mathcal{N}_{ex}+\mathcal{N}_l[\delta^\prime])
  \lesssim N^{-\frac{1}{2}+d+\frac{\varepsilon}{2}}\mathcal{N}_{ex}.
\end{equation}
Collecting (\ref{Phi211 bound}-\ref{checkPhi_21r2bound}), and setting
\begin{equation*}
  \delta=-\frac{1}{6}+\frac{1}{33}+\frac{d}{2},\quad
  \delta^\prime=\frac{1}{33},
\end{equation*}
we conclude that
\begin{equation}\label{Phi_21bound}
  \pm\Phi_{21}\lesssim N^{-\frac{1}{4}-\frac{1}{33}+\frac{3}{4}d+\frac{\varepsilon}{2}}
  \mathcal{K}_s
  +N^{-\frac{1}{4}+\frac{1}{33}+\frac{3}{4}d+\frac{\varepsilon}{2}}
  (\mathcal{N}_{ex}+1).
\end{equation}
Notice that since we demand $0<d<\frac{1}{3}$, we can check that $-\frac{11}{48}<\delta,\delta^\prime\leq\frac{1}{3}$. We will optimize (\ref{Phi_21bound}) in $d$ when we finish (\ref{Phi_22bound}).
\par For the estimate of $\Phi_{22}$, we notice that by (\ref{anticommutator}), we have
\begin{equation*}
  a_{q_3,\tau}a^*(g_{x,\sigma})+a^*(g_{x,\sigma})a_{q_3,\tau}
  =\delta_{\sigma,\tau}e^{iq_3x}.
\end{equation*}
Therefore, we decompose $\Phi_{22}=\sum_{j=1}^{3}\Phi_{22j}$ with
\begin{equation}\label{Phi_22j}
  \begin{aligned}
  \Phi_{221}&=-\sum_{\substack{\sigma,\nu,\tau,q_3}}\chi_{q_3\in \underline{B}_{F,d}^\tau}\int_{\Lambda^3}
  2W(x_1-x_3)e^{-iq_3x_2}\\
  &\quad\quad\times
 a_{q_3,\tau}a^*(g_{x_1,\sigma})a^*(g_{x_3,\nu})a^*(h_{x_2,\tau})a(h_{x_3,\nu})
  a(k_{x_1,x_2,\sigma}^{(q_3,\tau)})dx_1dx_2dx_3+h.c.\\
  \Phi_{222}&=-\sum_{\substack{\sigma,\tau,q_3}}\chi_{q_3\in \underline{B}_{F,d}^\tau}\int_{\Lambda^3}
  2W(x_1-x_3)e^{-iq_3(x_2-x_3)}\\
  &\quad\quad\times
 a^*(g_{x_1,\sigma})a^*(h_{x_2,\tau})a(h_{x_3,\tau})
  a(k_{x_1,x_2,\sigma}^{(q_3,\tau)})dx_1dx_2dx_3+h.c.\\
  \Phi_{223}&=\sum_{\substack{\sigma,\nu,q_3}}\chi_{q_3\in \underline{B}_{F,d}^\sigma}\int_{\Lambda^3}
  2W(x_1-x_3)e^{-iq_3(x_2-x_1)}\\
  &\quad\quad\times
 a^*(g_{x_3,\nu})a^*(h_{x_2,\sigma})a(h_{x_3,\nu})
  a(k_{x_1,x_2,\sigma}^{(q_3,\sigma)})dx_1dx_2dx_3+h.c.
  \end{aligned}
\end{equation}
Since $\#\underline{B}_{F,d}^\tau\leq N_{\tau}\leq N$, we then bound the last two terms of (\ref{Phi_22j}) using (\ref{12thetatilde}):
\begin{equation}\label{Phi_222223bound}
  \pm\Phi_{222},\pm\Phi_{223}\lesssim N^{\frac{5}{6}+\frac{\varepsilon}{2}}
  \Vert W\Vert_1\mathcal{N}_{ex}\lesssim N^{-\frac{1}{6}+\frac{\varepsilon}{2}}
  \mathcal{N}_{ex}.
\end{equation}
For the estimate of $\Phi_{221}$, we use the fact that $a(h_{x_3,\nu})=a(H_{x_3,\nu}[\delta^\prime])+a(L_{x_3,\nu}[\delta^\prime])$ (with $\delta^\prime=\frac{1}{33}$ as chosen above) to decompose $\Phi_{221}=\Phi_{2211}+\Phi_{2212}$. Using (\ref{12thetatilde}), and (\ref{xi l2 cut off}) in Lemma \ref{xi l2 lemma}, we bound them respectively by
\begin{equation}\label{Phi_2211bound}
  \begin{aligned}
  &\vert\langle\Phi_{2211}\psi,\psi\rangle\vert\lesssim
  \sum_{\substack{\sigma,\nu,\tau,q_3}}\chi_{q_3\in \underline{B}_{F,d}^\tau}\int_{\Lambda^3}
  W(x_1-x_3)\Vert (\mathcal{N}_{ex}+2)^{-\frac{1}{2}}a(h_{x_2,\tau})a_{q_3,\tau}^*\psi\Vert\\
  &\quad\quad\quad\quad\quad\quad\times
  \Vert a^*(g_{x_1,\sigma})a^*(g_{x_3,\nu})a(k_{x_1,x_2,\sigma}^{(q_3,\tau)})
  (\mathcal{N}_{ex}+2)^{\frac{1}{2}}
  a(H_{x_3,\nu}[\delta^\prime])\psi\Vert\\
  &\lesssim\Big(\sum_{\substack{\sigma,\nu,\tau,q_3}}\int_{\Lambda^2}
  W(x_1-x_3)\big(\vert k_F^\tau\vert^2-\vert q_3\vert^2\big)
  \Vert a_{q_3,\tau}^*\psi\Vert^2\Big)^{\frac{1}{2}}\\
  &\quad\quad\times
  \Big(\sum_{\substack{\sigma,\nu,\tau\\q_4\in B_F^\sigma}}\sum_{\substack{q_3\in \underline{B}_{F,d}^\tau\\k}}\frac{N^{2}\vert\xi_{k,q_4,q_3}^{\sigma,\tau}\vert^2}{\vert k_F^\tau\vert^2-\vert q_3\vert^2}
  \int_{\Lambda^2}
  W(x_1-x_3)\Vert
  (\mathcal{N}_{ex}+1)^{\frac{1}{2}}
  a(H_{x_3,\nu}[\delta^\prime])\psi\Vert^2\Big)^{\frac{1}{2}}\\
  &\lesssim N^{\frac{7}{6}-\frac{d}{2}}\Vert W\Vert_1
  \langle\mathcal{K}_s\psi,\psi\rangle^{\frac{1}{2}}
  \langle\mathcal{N}_h[\delta^\prime]\psi,\psi\rangle^{\frac{1}{2}}
  \lesssim N^{-\frac{d}{2}-\frac{1}{66}}\langle\mathcal{K}_s\psi,\psi\rangle
  \end{aligned}
\end{equation}
and
\begin{equation}\label{Phi_2212bound}
  \begin{aligned}
  &\vert\langle\Phi_{2212}\psi,\psi\rangle\vert\lesssim
  \sum_{\substack{\sigma,\nu,\tau,q_3}}\chi_{q_3\in \underline{B}_{F,d}^\tau}\int_{\Lambda^3}
  W(x_1-x_3)\Vert (\mathcal{N}_{ex}+2)^{-\frac{1}{2}}a(h_{x_2,\tau})a_{q_3,\tau}^*\psi\Vert\\
  &\quad\quad\quad\quad\quad\quad\times
  \Vert a^*(g_{x_1,\sigma})a^*(g_{x_3,\nu})a(k_{x_1,x_2,\sigma}^{(q_3,\tau)})
  a(L_{x_3,\nu}[\delta^\prime])(\mathcal{N}_{ex}+1)^{\frac{1}{2}}
  \psi\Vert\\
  &\lesssim\Big(\sum_{\substack{\sigma,\nu,\tau,q_3}}\int_{\Lambda^2}
  W(x_1-x_3)\big(\vert k_F^\tau\vert^2-\vert q_3\vert^2\big)
  \Vert a_{q_3,\tau}^*\psi\Vert^2\Big)^{\frac{1}{2}}\\
  &\quad\quad\times
  \Big(\sum_{\substack{\sigma,\nu,\tau\\q_4\in B_F^\sigma}}\sum_{\substack{q_3\in \underline{B}_{F,d}^\tau\\k}}\frac{N^{\frac{8}{3}+\delta^\prime}
  \vert\xi_{k,q_4,q_3}^{\sigma,\tau}\vert^2}{\vert k_F^\tau\vert^2-\vert q_3\vert^2}
  \int_{\Lambda^2}
  W(x_1-x_3)\Vert
  (\mathcal{N}_{ex}+1)^{\frac{1}{2}}
  \psi\Vert^2\Big)^{\frac{1}{2}}\\
  &\lesssim N^{1-\frac{d}{2}+\frac{\delta^\prime}{2}}\Vert W\Vert_1
  \langle\mathcal{K}_s\psi,\psi\rangle^{\frac{1}{2}}
  \langle(\mathcal{N}_{ex}+1)\psi,\psi\rangle^{\frac{1}{2}}\\
  &
  \lesssim N^{-\frac{d}{2}-\frac{1}{66}}\langle\mathcal{K}_s\psi,\psi\rangle
  +N^{-\frac{d}{2}+\frac{1}{22}}\langle(\mathcal{N}_{ex}+1)\psi,\psi\rangle
  \end{aligned}
\end{equation}
Collecting (\ref{Phi_22j}-\ref{Phi_2212bound}), we conclude that
\begin{equation}\label{Phi_22bound}
  \pm\Phi_{22}\lesssim N^{-\frac{d}{2}-\frac{1}{66}}\mathcal{K}_s
  +\big(N^{-\frac{d}{2}+\frac{1}{22}}+N^{-\frac{1}{6}+\frac{\varepsilon}{2}}\big)
  (\mathcal{N}_{ex}+1).
\end{equation}
\vspace{1em}
\par Putting (\ref{Phi_21bound}) and (\ref{Phi_22bound}) together, we choose $d=\frac{1}{5}+\frac{4}{5}\times\frac{1}{66}$ to optimize, and we conclude that
\begin{equation}\label{Phi_2bound}
  \pm\Phi_2\lesssim N^{-\frac{1}{10}-\frac{7}{5}\times\frac{1}{66}+\frac{\varepsilon}{2}}
  \mathcal{K}_s+N^{-\frac{1}{10}+\frac{13}{5}\times\frac{1}{66}+\frac{\varepsilon}{2}}
  (\mathcal{N}_{ex}+1).
\end{equation}
\par For $[\Omega,\tilde{B}]$, $\tilde{\Phi}_2$ can be bounded similarly. We only need to replace the estimate $\Vert W\Vert_1\lesssim a$ by $N^{\frac{1}{3}}\Vert\nabla\eta\Vert_1\lesssim aN^{-\alpha}$. By Lemmas \ref{control eBtiled on K_s lemma} and \ref{control eBtilde on N_ex lemma}, we obtain
\begin{equation}\label{Phi_2boundfinale}
\begin{aligned}
  \int_{0}^{1}\int_{t}^{1}e^{-s\tilde{B}}(\Phi_2+\tilde{\Phi}_2)e^{s\tilde{B}}dsdt
  \lesssim& N^{-\frac{1}{10}-\frac{7}{5}\times\frac{1}{66}+\varepsilon}
  \big(\mathcal{K}_s+N^{\frac{1}{3}+\alpha}\big)\\
  &+N^{-\frac{1}{10}+\frac{13}{5}\times\frac{1}{66}+\varepsilon}
  (\mathcal{N}_{ex}+1)
\end{aligned}
\end{equation}
for some $\varepsilon>0$ small enough but fixed.

\begin{flushleft}
    \textbf{Analysis of $\Phi_3$:}
\end{flushleft}
We first introduce the notation $\zeta_{k,q,p}^{\nu,\sigma}$:
\begin{equation}\label{define zeta_k,q,p,nu,sigma}
  \zeta_{k,q,p}^{\nu,\sigma}=\frac{-\big(W_k+\eta_kk\big((q-k)-(p+k)\big)\big)}
  {\frac{1}{2}\big(\vert q\vert^2+\vert p\vert^2-\vert q-k\vert^2-\vert p+k\vert^2\big)}
  \chi_{p\notin B^{\sigma}_F}\chi_{q\notin B^{\nu}_F}\chi_{p+k\in B^{\sigma}_F}\chi_{q-k\in B^{\nu}_F}.
\end{equation}
It is easy to verify that
\begin{equation}\label{relation}
  \xi_{k,q,p}^{\nu,\sigma}=\zeta_{k,q+k,p-k}^{\nu,\sigma},\quad
  \zeta_{k,q,p}^{\nu,\sigma}=\xi_{k,q-k,p+k}^{\nu,\sigma}.
\end{equation}
Using the fact that $\chi_{p-k\notin B_F^\sigma}=\chi_{p-k\in A_{F,1/3}^\sigma}+\chi_{p-k\in P_{F,1/3}^\sigma}$, we decompose $\Phi_{3}=\Phi_{31}+\Phi_{32}$. Using the notation (\ref{define zeta_k,q,p,nu,sigma}), we can write
\begin{equation}\label{Phi_3j}
  \begin{aligned}
  &\Phi_{31}=\sum_{\substack{\sigma,\nu\\k,p,q}}\sum_{l,r,s}W_k
  \zeta_{l,s+l,r-l}^{\nu,\sigma}
    \delta_{p,r}\delta_{q,s}a^*_{r-l,\sigma}a^*_{s+l,\nu}
    a_{q+k,\nu}a_{p-k,\sigma}\\
    &\quad\quad\quad\quad\times
    \chi_{p-k\in A_{F,1/3}^\sigma}
    \chi_{r-l\notin B_F^\sigma}\chi_{p\in B_F^\sigma}\chi_{q+k,s+l\notin B_F^\nu}
    \chi_{q\in B_F^\nu}+h.c.\\
    &\Phi_{32}=\sum_{\substack{\sigma,\nu\\k,p,q}}\sum_{l,r,s}W_k
    \zeta_{l,s+l,r-l}^{\nu,\sigma}
    \delta_{p,r}\delta_{q,s}a^*_{r-l,\sigma}a^*_{s+l,\nu}
    a_{q+k,\nu}a_{p-k,\sigma}\\
    &\quad\quad\quad\quad\times
    \chi_{p-k\in P_{F,1/3}^\sigma}
    \chi_{r-l\notin B_F^\sigma}\chi_{p\in B_F^\sigma}\chi_{q+k,s+l\notin B_F^\nu}
    \chi_{q\in B_F^\nu}+h.c.
  \end{aligned}
\end{equation}
For $\Phi_{31}$, we rewrite it by
\begin{equation*}
  \begin{aligned}
  \Phi_{31}&=\sum_{\substack{\sigma,\nu\\p_3,p_4,q_1,q_2}}
  \chi_{q_1\in B_F^\sigma}\chi_{q_2\in B_F^\nu}
  \chi_{p_3\notin B_F^\sigma}\chi_{p_4\notin B_F^\nu}
  \int_{\Lambda^3}e^{-i(q_1-p_3-p_4+q_2)x_2}\zeta_{p_4-q_2,p_4,p_3}^{\nu,\sigma}\\
  &\quad\quad\times
  e^{iq_1x_1}e^{iq_2x_3}W(x_1-x_3)
  a_{p_3,\sigma}^*a^*_{p_4,\nu}a(h_{x_3,\nu})a(L_{x_1,\sigma}[1/3])dx_1dx_2dx_3+h.c.\\
  &=\sum_{\substack{\sigma,\nu\\p_3,p_4,q_2}}
  \chi_{q_2\in B_F^\nu}
  \chi_{p_3\notin B_F^\sigma}\chi_{p_4\notin B_F^\nu}
  \int_{\Lambda^2}W(x_1-x_3)\zeta_{p_4-q_2,p_4,p_3}^{\nu,\sigma}\\
  &\quad\quad\times
  e^{i(p_3+p_4)x_1}e^{iq_2(x_3-x_1)}
  a_{p_3,\sigma}^*a^*_{p_4,\nu}a(h_{x_3,\nu})a(L_{x_1,\sigma}[1/3])dx_1dx_3+h.c.
  \end{aligned}
\end{equation*}
We then introduce the notation
\begin{equation}\label{notation j}
  \begin{aligned}
  j_{x_1,x_3,\nu}^{(p_3,\sigma)}(z)\coloneqq
  \sum_{p_4\notin B_F^\nu}\Big(\sum_{q_2\in B_F^\nu}\zeta_{p_4-q_2,p_4,p_3}^{\nu,\sigma}
  e^{iq_2(x_3-x_1)}\Big)e^{ip_4x_1}f_{p_4,\nu}(z).
  \end{aligned}
\end{equation}
By (\ref{xi l2}) in Lemma \ref{xi l2 lemma}, we know that for $j=1,3$
\begin{equation}\label{jl2}
  \begin{aligned}
  \sum_{p_3\notin B_F^\sigma}\int_{\Lambda}\Vert a(j_{x_1,x_3,\nu}^{(p_3,\sigma)})\Vert^2dx_j&\leq
  \sum_{p_3\notin B_F^\sigma}\sum_{p_4\notin B_F^\nu}\int_{\Lambda}
  \Big\vert\sum_{q_2\in B_F^\nu}\zeta_{p_4-q_2,p_4,p_3}^{\nu,\sigma}
  e^{iq_2(x_3-x_1)}\Big\vert^2dx_j\\
  &\leq\sum_{p_3\notin B_F^\sigma}\sum_{p_4\notin B_F^\nu}
  \sum_{k}\vert\zeta_{k,p_4,p_3}^{\nu,\sigma}\vert^2\\
  &=\sum_{q_3\in B_F^\sigma}\sum_{q_4\in B_F^\nu}
  \sum_{k}\vert\xi_{k,q_4,q_3}^{\nu,\sigma}\vert^2\lesssim N^{-\frac{1}{3}+\varepsilon}.
  \end{aligned}
\end{equation}
By the definition of $\zeta_{k,q,p}^{\nu,\sigma}$ in (\ref{define zeta_k,q,p,nu,sigma}) and the proof of Lemma \ref{L1 Linfty est lemma}, we can check that
\begin{equation}\label{jlinfty}
  \begin{aligned}
  &\sum_{p_3\notin B_F^\sigma}\Vert a(j_{x_1,x_3,\nu}^{(p_3,\sigma)})\Vert^2\leq
  \sum_{p_3\notin B_F^\sigma}\sum_{p_4\notin B_F^\nu}
  \Big\vert\sum_{q_2\in B_F^\nu}\zeta_{p_4-q_2,p_4,p_3}^{\nu,\sigma}
  e^{iq_2(x_3-x_1)}\Big\vert^2\\
  &\leq\sum_{p_3\notin B_F^\sigma}\sum_{p_4\notin B_F^\nu}
  \Big(\sum_{k}\vert\zeta_{k,p_4,p_3}^{\nu,\sigma}\vert\Big)^2
  \leq\sum_{p_3\notin B_F^\sigma}\sum_{p_4\notin B_F^\nu}
  \sum_{k,k^\prime}\vert\zeta_{k,p_4,p_3}^{\nu,\sigma}\vert
  \vert\zeta_{k^\prime,p_4,p_3}^{\nu,\sigma}\vert\\
  &\lesssim\sum_{p_3\notin B_F^\sigma}\sum_{p_4\notin B_F^\nu}
  \sum_{k,k^\prime}\frac{\big\vert W_{k}+N^{\frac{1}{3}}\eta_{k}\vert k\vert\big\vert
  \chi_{p_4-k\in B_F^\nu}}
  {\vert p_4\vert^2-\vert p_4-k\vert^2}
  \frac{\big\vert W_{k^\prime}+N^{\frac{1}{3}}\eta_{k^\prime}\vert k^\prime\vert\big\vert
  \chi_{p_3+k\in B_F^\sigma}}
  {\vert p_3\vert^2-\vert p_3+k\vert^2}\\
  &\lesssim\Big(\sum_{q_4\in B_F^\nu}\sum_{k}\frac{\big\vert W_{k}+N^{\frac{1}{3}}\eta_{k}\vert k\vert\big\vert
  \chi_{q_4+k\notin B_F^\nu}}
  {\vert q_4+k\vert^2-\vert q_4\vert^2}\Big)
  \Big(\sum_{q_3\in B_F^\sigma}\sum_{k}\frac{\big\vert W_{k}+N^{\frac{1}{3}}\eta_{k}\vert k\vert\big\vert
  \chi_{q_3-k\notin B_F^\sigma}}
  {\vert q_3-k\vert^2-\vert q_3\vert^2}\Big)\\
  &\lesssim (Na\ell^{-1})^2\lesssim N^{\frac{2}{3}+2\alpha}.
  \end{aligned}
\end{equation}
With the notation given in (\ref{notation j}), we have
\begin{equation}\label{Phi_31re}
  \begin{aligned}
  \Phi_{31}&=\sum_{\substack{\sigma,\nu,p_3}}
  \chi_{p_3\notin B_F^\sigma}
  \int_{\Lambda^2}W(x_1-x_3)e^{ip_3x_1}\\
  &\quad\quad\times
  a_{p_3,\sigma}^*a^*(j_{x_1,x_3,\nu}^{(p_3,\sigma)})
  a(h_{x_3,\nu})a(L_{x_1,\sigma}[1/3])dx_1dx_3+h.c.
  \end{aligned}
\end{equation}
For $\psi\in\mathcal{H}^{\wedge N}(\{N_{\varsigma}\})$, we bound it via (\ref{jlinfty}):
\begin{equation}\label{Phi_31bound}
  \begin{aligned}
  \vert\langle\Phi_{31}\psi,\psi\rangle\vert&\lesssim\sum_{\substack{\sigma,\nu,p_3}}
  \chi_{p_3\notin B_F^\sigma}
  \int_{\Lambda^2}W(x_1-x_3)\Vert a_{p_3,\sigma}\psi\Vert\\
  &\quad\quad\times
  \Vert a^*(j_{x_1,x_3,\nu}^{(p_3,\sigma)})a(L_{x_1,\sigma}[1/3])a(h_{x_3,\nu})\psi\Vert dx_1dx_3\\
  &\lesssim\Big(\sum_{\substack{\sigma,\nu,p_3}}
  \chi_{p_3\notin B_F^\sigma}\int_{\Lambda^2}
  W(x_1-x_3)\Vert a_{p_3,\sigma}\psi\Vert^2\Big)^{\frac{1}{2}}\\
  &\quad\quad\times
  \Big(\sum_{\substack{\sigma,\nu}}\int_{\Lambda^2}
  W(x_1-x_3)N^{\frac{5}{3}+2\alpha}\Vert a(h_{x_3,\nu})\psi\Vert^2\Big)^{\frac{1}{2}}\\
  &\lesssim N^{\frac{5}{6}+\alpha}\Vert W\Vert_1
  \langle\mathcal{N}_{ex}\psi,\psi\rangle
  \lesssim N^{-\frac{1}{6}+\alpha}
  \langle\mathcal{N}_{ex}\psi,\psi\rangle.
  \end{aligned}
\end{equation}
As for $\Phi_{32}$, using (\ref{notation j}), and notice that $\zeta_{k,q,p}^{\nu,\sigma}=\zeta_{-k,p,q}^{\sigma,\nu}$, we have
\begin{equation}\label{Phi_32}
  \begin{aligned}
  \Phi_{32}&=\sum_{\substack{\sigma,\nu\\p_3,p_4,q_1,q_2}}
  \chi_{q_1\in B_F^\sigma}\chi_{q_2\in B_F^\nu}
  \chi_{p_3\notin B_F^\sigma}\chi_{p_4\notin B_F^\nu}
  \int_{\Lambda^3}e^{-i(q_1-p_3-p_4+q_2)x_2}\zeta_{p_4-q_2,p_4,p_3}^{\nu,\sigma}\\
  &\quad\quad\times
  e^{iq_1x_1}e^{iq_2x_3}W(x_1-x_3)
  a_{p_3,\sigma}^*a^*_{p_4,\nu}a(h_{x_3,\nu})a(H_{x_1,\sigma}[1/3])dx_1dx_2dx_3+h.c.\\
  &=\sum_{\substack{\sigma,\nu\\p_2,p_3,p_4,q_1,q_2}}
  \chi_{q_1\in B_F^\sigma}\chi_{q_2\in B_F^\nu}
  \chi_{p_3\notin B_F^\sigma}\chi_{p_2,p_4\notin B_F^\nu}
  \int_{\Lambda^2}W_{q_2-p_2}e^{iq_2(x_1-x_2)}\zeta_{q_1-p_3,p_4,p_3}^{\nu,\sigma}\\
  &\quad\quad\times
  e^{iq_1(x_1-x_2)}e^{ip_3x_2}e^{-ip_2x_1}e^{ip_4x_2}
  a_{p_3,\sigma}^*a^*_{p_4,\nu}a_{p_2,\nu}a(H_{x_1,\sigma}[1/3])dx_1dx_2+h.c.\\
  &=\sum_{\substack{\sigma,\nu\\p_2,p_4}}
  \chi_{p_2,p_4\notin B_F^\nu}\int_{\Lambda^2}
  \Big(\sum_{q_2\in B_F^\nu}W_{q_2-p_2}e^{iq_2(x_1-x_2)}\Big)\\
  &\quad\quad\times e^{-ip_2x_1}e^{ip_4x_2}
  a^*(j_{x_2,x_1,\sigma}^{(p_4,\nu)})a_{p_4,\nu}^*a_{p_2,\nu}a(H_{x_1,\sigma}[1/3])
  dx_1dx_2+h.c.
  \end{aligned}
\end{equation}
For technical reasons, we decompose $j_{x_1,x_3,\nu}^{(p_3,\sigma)}$ by
\begin{equation*}
  \begin{aligned}
  j_{x_1,x_3,\nu}^{(p_3,\sigma),h}(z)&\coloneqq
  \sum_{p_4\in P_{F,1/3}^\nu}\Big(\sum_{q_2\in B_F^\nu}\zeta_{p_4-q_2,p_4,p_3}^{\nu,\sigma}
  e^{iq_2(x_3-x_1)}\Big)e^{ip_4x_1}f_{p_4,\nu}(z)\\
  j_{x_1,x_3,\nu}^{(p_3,\sigma),l}(z)&\coloneqq
  \sum_{p_4\in A_{F,1/3}^\nu}\Big(\sum_{q_2\in B_F^\nu}\zeta_{p_4-q_2,p_4,p_3}^{\nu,\sigma}
  e^{iq_2(x_3-x_1)}\Big)e^{ip_4x_1}f_{p_4,\nu}(z)
  \end{aligned}
\end{equation*}
It is easy to verify that both $a(j_{x_1,x_3,\nu}^{(p_3,\sigma),h})$ and $a(j_{x_1,x_3,\nu}^{(p_3,\sigma),l})$ also satisfy (\ref{jl2}) and (\ref{jlinfty}). Notice that by Lemma \ref{lem commutator} (or equivalently by the Fock space formalism (\ref{rewrite psi for fock space}) and formulae (\ref{formulae of crea and annihi ops})), we obtain
\begin{equation}\label{OBS}
  \begin{aligned}
  &(\mathcal{N}_{h}[1/3]+2)^{\frac{1}{2}}a_{p_4,\nu}\chi_{p_4\in P_{F,1/3}^\nu}
  =a_{p_4,\nu}\chi_{p_4\in P_{F,1/3}^\nu}(\mathcal{N}_{h}[1/3]+1)^{\frac{1}{2}},\\
  &(\mathcal{N}_{h}[1/3]+2)^{\frac{1}{2}}a_{p_4,\nu}\chi_{p_4\in A_{F,1/3}^\nu}
  =a_{p_4,\nu}\chi_{p_4\in A_{F,1/3}^\nu}(\mathcal{N}_{h}[1/3]+2)^{\frac{1}{2}}.\\
  &(\mathcal{N}_{h}[1/3]+2)^{-\frac{1}{2}}a^*(j_{x_2,x_1,\sigma}^{(p_4,\nu),h})
  =a^*(j_{x_2,x_1,\sigma}^{(p_4,\nu),h})(\mathcal{N}_{h}[1/3]+3)^{-\frac{1}{2}},\\
  &(\mathcal{N}_{h}[1/3]+2)^{-\frac{1}{2}}a^*(j_{x_2,x_1,\sigma}^{(p_4,\nu),l})
  =a^*(j_{x_2,x_1,\sigma}^{(p_4,\nu),l})(\mathcal{N}_{h}[1/3]+2)^{-\frac{1}{2}}.
  \end{aligned}
\end{equation}
Since
\begin{equation*}
  \chi_{p_4\notin B_F^\nu}a^*(j_{x_2,x_1,\sigma}^{(p_4,\nu)})
  =\big(\chi_{p_4\in P_{F,1/3}^\nu}+\chi_{p_4\in A_{F,1/3}^\nu}\big)
  \big(a^*(j_{x_2,x_1,\sigma}^{(p_4,\nu),h})+a^*(j_{x_2,x_1,\sigma}^{(p_4,\nu),l})\big),
\end{equation*}
we can decompose $\Phi_{32}=\sum_{j=1}^{4}\Phi_{32j}$. Similar observations will also be used during the analysis of $\Phi_4$. We omit further redundant details but write directly the result:
\begin{equation}\label{Phi_32bound}
  \begin{aligned}
  &\vert\langle\Phi_{32}\psi,\psi\rangle\vert\\
  &\lesssim\sum_{\substack{\sigma,\nu\\p_2,p_4}}
  \chi_{p_2,p_4\notin B_F^\nu}\int_{\Lambda^2}
  \Big(\sum_{q_2\in B_F^\nu}W_{q_2-p_2}e^{iq_2(x_1-x_2)}\Big)
  \Vert(\mathcal{N}_{h}[1/3]+2)^{\frac{1}{2}}a_{p_4,\nu}\psi\Vert\\
  &\quad\quad\times
  \Vert (\mathcal{N}_h[1/3]+2)^{-\frac{1}{2}}a^*(j_{x_2,x_1,\sigma}^{(p_4,\nu)})
  a(H_{x_1,\sigma}[1/3])a_{p_2,\nu}\psi\Vert dx_1dx_2\\
  &\lesssim\Big(\sum_{\substack{\sigma,\nu\\p_2,p_4}}
  \chi_{p_2,p_4\notin B_F^\nu}\int_{\Lambda^2}
  \Big\vert\sum_{q_2\in B_F^\nu}W_{q_2-p_2}e^{iq_2(x_1-x_2)}\Big\vert^2
  \frac{\Vert(\mathcal{N}_{h}[1/3]+2)^{\frac{1}{2}}a_{p_4,\nu}\psi\Vert^2}
  {\vert p_2\vert^2-\vert k_F^\nu\vert^2}\Big)^{\frac{1}{2}}\\
  &\quad\quad\times
  \Big(\sum_{\substack{\sigma,\nu\\p_2,p_4}}
  \chi_{p_2,p_4\notin B_F^\nu}\int_{\Lambda^2}
  \big(\vert p_2\vert^2-\vert k_F^\nu\vert^2\big)
  \Vert a^*(j_{x_2,x_1,\sigma}^{(p_4,\nu)})\Vert^2\\
  &\quad\quad\quad\quad\quad\quad\times
  \Vert (\mathcal{N}_h[1/3]+2)^{-\frac{1}{2}}
  a(H_{x_1,\sigma}[1/3])a_{p_2,\nu}\psi\Vert^2\Big)^{\frac{1}{2}}\\
  &\lesssim N^{-\frac{1}{2}+\frac{\alpha}{2}+\frac{\varepsilon}{2}}
  \langle(\mathcal{N}_{h}[1/3]+1)\mathcal{N}_{ex}\psi,\psi\rangle^{\frac{1}{2}}
  \langle\mathcal{K}_s\psi,\psi\rangle^{\frac{1}{2}}
  \lesssim N^{-\frac{1}{3}+\frac{\alpha}{2}+\frac{\varepsilon}{2}}
   \langle\mathcal{K}_s\psi,\psi\rangle
  \end{aligned}
\end{equation}
where we have used (\ref{jl2}) and the first inequality in (\ref{energy est bog}) in Lemma \ref{energy est bog lem}.
\vspace{1em}
\par For $[\Omega,\tilde{B}]$, $\tilde{\Phi}_3$ can be analysed similarly with a slight modification of (\ref{notation j}). For the rest of the estimate, we only need to replace the estimate $\Vert W\Vert_1\lesssim a$ by $N^{\frac{1}{3}}\Vert\nabla\eta\Vert_1\lesssim aN^{-\alpha}$, and use the second inequality in (\ref{energy est bog}) in Lemma \ref{energy est bog lem} rather that the first one. Collecting (\ref{Phi_31bound}) and (\ref{Phi_32bound}), we conclude via Lemmas \ref{control eBtiled on K_s lemma} and \ref{control eBtilde on N_ex lemma}:
\begin{equation}\label{Phi_3 bound}
\begin{aligned}
  \pm\int_{0}^{1}\int_{t}^{1}e^{-s\tilde{B}}
  (\Phi_3+\tilde{\Phi}_3)e^{s\tilde{B}}dsdt
  \lesssim& N^{-\frac{1}{6}+\alpha}(\mathcal{N}_{ex}+1)\\
  &+N^{-\frac{1}{6}-\frac{2}{33}+\alpha+\varepsilon}\big(\mathcal{K}_s+N^{\frac{1}{3}+\alpha}
  \big)
\end{aligned}
\end{equation}
for some $\varepsilon>0$ arbitrary but small.

\begin{flushleft}
    \textbf{Analysis of $\Phi_4$:}
\end{flushleft}
Using the notation defined in (\ref{notation j}), we write $\Phi_4$ by
\begin{equation}\label{Phi_4}
  \begin{aligned}
  \Phi_4&=-2\sum_{\sigma,\nu,\varpi,p_4}\chi_{p_4\notin B_F^\varpi}\int_{\Lambda^3}
  W(x_1-x_3)e^{ip_4x_2}a^*(j_{x_2,x_1,\sigma}^{(p_4,\varpi)})a^*_{p_4,\varpi}\\
  &\quad\quad\times
  a(g_{x_2,\varpi})a^*(g_{x_3,\nu})a(h_{x_3,\nu})a(h_{x_1,\sigma})dx_1dx_2dx_3+h.c.
  \end{aligned}
\end{equation}
We let $\omega=\frac{1}{3}+u$ for some $0<u\leq\frac{1}{6}$ to be determined. Using the fact that
\begin{equation*}
  \begin{aligned}
  a(h_{x_3,\nu})a(h_{x_1,\sigma})=&
  a(H_{x_3,\nu}[\omega])a(H_{x_1,\sigma}[\omega])
  +a(H_{x_3,\nu}[\omega])a(L_{x_1,\sigma}[\omega])\\
  &+a(L_{x_3,\nu}[\omega])a(H_{x_1,\sigma}[\omega])
  +a(L_{x_3,\nu}[\omega])a(L_{x_1,\sigma}[\omega])
  \end{aligned}
\end{equation*}
we decompose $\Phi_4=\sum_{j=1}^{4}\Phi_{4j}$. For $\Phi_{41}$, we first introduce the notation
\begin{equation*}
  b_{x_1,\nu,p_2}(z)\coloneqq\sum_{q_2\in B_F^\nu}W_{q_2-p_2}e^{iq_2x_1}f_{q_2,\nu}.
\end{equation*}
It can be easily verified that
\begin{equation}\label{bl2}
  \Vert a(b_{x_1,\nu,p_2})\Vert^2\leq \sum_{q_2\in B_F^\nu}\vert W_{q_2-p_2}\vert^2.
\end{equation}
We then rewrite $\Phi_{41}$ by
\begin{equation}\label{Phi_41}
  \begin{aligned}
  \Phi_{41}&=-2\sum_{\substack{\sigma,\nu,\varpi\\p_2,p_4}}\chi_{p_2\in P_{F,\omega}^\nu}
  \chi_{p_4\notin B_F^\varpi}\int_{\Lambda^2}e^{-ip_2x_1}e^{ip_4x_2}
  a^*(j_{x_2,x_1,\sigma}^{(p_4,\varpi)})a^*_{p_4,\varpi}\\
  &\quad\quad\times
  a(g_{x_2,\varpi})a^*(b_{x_1,\nu,p_2})a_{p_2,\nu}a(H_{x_1,\sigma}[\omega])dx_1dx_2+h.c.
  \end{aligned}
\end{equation}
With an observation similar to (\ref{OBS}), we bound it for any $\psi\in\mathcal{H}^{\wedge N}(\{N_{\varsigma_i}\})$:
\begin{equation}\label{Phi_41 bound}
  \begin{aligned}
  &\vert\langle\Phi_{41}\psi,\psi\rangle\vert\\
  &\lesssim\sum_{\substack{\sigma,\nu,\varpi\\p_2,p_4}}\chi_{p_2\in P_{F,\omega}^\nu}
  \chi_{p_4\notin B_F^\varpi}\int_{\Lambda^2}
  \Vert a(b_{x_1,\nu,p_2})a^*(g_{x_2,\varpi})(\mathcal{N}_h[\omega]+2)^{\frac{1}{2}}
  a_{p_4,\varpi}\psi\Vert\\
  &\quad\quad\times
  \Vert(\mathcal{N}_h[\omega]+2)^{-\frac{1}{2}}a^*(j_{x_2,x_1,\sigma}^{p_4,\varpi})
  a(H_{x_1,\sigma}[\omega])a_{p_2,\nu}\psi\Vert dx_1dx_2\\
  &\lesssim\Big(\sum_{\substack{\sigma,\nu,\varpi\\p_2,p_4}}\chi_{p_2\in P_{F,\omega}^\nu}
  \chi_{p_4\notin B_F^\varpi}\int_{\Lambda^2}
  \sum_{q_2\in B_F^\nu}\frac{N\vert W_{q_2-p_2}\vert^2}{\vert p_2\vert^2-
  \vert k_F^\nu\vert^2}
  \Vert (\mathcal{N}_h[\omega]+2)^{\frac{1}{2}}
  a_{p_4,\varpi}\psi\Vert^2\Big)^{\frac{1}{2}}\\
  &\quad\quad\times
  \Big(\sum_{\substack{\sigma,\nu,\varpi\\p_2,p_4}}\chi_{p_2\in P_{F,\omega}^\nu}
  \chi_{p_4\notin B_F^\varpi}\int_{\Lambda^2}
  \big(\vert p_2\vert^2-\vert k_F^\nu\vert^2\big)
  \Vert a^*(j_{x_2,x_1,\sigma}^{p_4,\varpi})\Vert^2\\
  &\quad\quad\quad\quad\quad\quad\times
   \Vert(\mathcal{N}_h[\omega]+2)^{-\frac{1}{2}}
  a(H_{x_1,\sigma}[\omega])a_{p_2,\nu}\psi\Vert^2\Big)^{\frac{1}{2}}\\
  &\lesssim N^{\frac{1}{6}+\alpha+\frac{\varepsilon}{2}-\frac{\omega}{2}}
  \langle(\mathcal{N}_h[\omega]+2)\mathcal{N}_{ex}\psi,\psi\rangle^{\frac{1}{2}}
  \langle\mathcal{K}_s\psi,\psi\rangle^{\frac{1}{2}}
  \lesssim N^{\frac{1}{6}+\alpha+\frac{\varepsilon}{2}-\frac{3}{2}u}
  \langle\mathcal{K}_s\psi,\psi\rangle
  \end{aligned}
\end{equation}
where we have used (\ref{jl2}), (\ref{bl2}) and (\ref{energy est bog u}) in Lemma \ref{energy est bog lem}, and we have set $\omega=\frac{1}{3}+u$ with $0<u<\frac{1}{6}$. We then choose
\begin{equation}\label{choose ubog}
  u=\frac{1}{9}+\frac{4}{3}\alpha+\frac{2}{3}\gamma+\frac{1}{3}\varepsilon
\end{equation}
and we further demand $0<\gamma<\alpha<\frac{1}{42}$ and $\varepsilon>0$ small enough but fixed, such that $0<u<\frac{1}{6}$. Thence we conclude that
\begin{equation}\label{Phi_41boundfinale}
  \pm\Phi_{41}\lesssim N^{-\alpha-\gamma}\mathcal{K}_s.
\end{equation}
\par The evaluations of $\Phi_{4j}$ for $j=2,3,4$ are similar, and we take $\Phi_{44}$ for example. We take $0<\varrho<\frac{1}{3}$ to be determined. Using the fact that $\chi_{p_4\notin B_F^\varpi}=\chi_{p_4\in A_{F,\varrho}^\varpi}+\chi_{p_4\in P_{F,\varrho}^\varpi}$, we can decompose $\Phi_{44}$ by
\begin{equation}\label{Phi_44j}
  \begin{aligned}
  \Phi_{441}&=2\sum_{\sigma,\nu,\varpi,p_4}\chi_{p_4\in A_{F,\varrho}^\varpi}\int_{\Lambda^3}
  W(x_1-x_3)e^{ip_4x_2}a^*_{p_4,\varpi}a(g_{x_2,\varpi})a^*(g_{x_3,\nu})\\
  &\quad\quad\times
  a^*(j_{x_2,x_1,\sigma}^{(p_4,\varpi)})
  a(L_{x_3,\nu}[\omega])a(L_{x_1,\sigma}[\omega])dx_1dx_2dx_3+h.c.\\
  \Phi_{442}&=2\sum_{\sigma,\nu,\varpi,p_4}\chi_{p_4\in P_{F,\varrho}^\varpi}\int_{\Lambda^3}
  W(x_1-x_3)e^{ip_4x_2}a^*_{p_4,\varpi}a(g_{x_2,\varpi})a^*(g_{x_3,\nu})\\
  &\quad\quad\times
  a^*(j_{x_2,x_1,\sigma}^{(p_4,\varpi)})
  a(L_{x_3,\nu}[\omega])a(L_{x_1,\sigma}[\omega])dx_1dx_2dx_3+h.c.
  \end{aligned}
\end{equation}
For the bound of $\Phi_{441}$, we let $-\frac{11}{48}<r,r^\prime<\frac{1}{3}$ to be chosen later, and introduce the notation
\begin{equation}\label{define Mbog}
  M_{x,\sigma}[r,\omega](z)=\sum_{k\in A_{F,\omega}^\sigma\backslash A_{F,r}^\sigma}e^{ikx}
  f_{k,\sigma}(z).
\end{equation}
It is easy to notice that $L_{x,\sigma}[\omega]=L_{x,\sigma}[r]+ M_{x,\sigma}[r,\omega]$, and by Lemma \ref{lemma ineqn K_s},
\begin{equation}\label{ineq M}
  \int_{\Lambda}\sum_{\sigma}a^*(M_{x,\sigma}[r,\omega])a(M_{x,\sigma}[r,\omega])dx
  \leq \mathcal{N}_{h}[r]\lesssim N^{-\frac{1}{3}-r}\mathcal{K}_s.
\end{equation}
On the other hand, by (\ref{jl2}) and (\ref{useful bound}-\ref{xi l2preposition}) in Lemma \ref{xi l2 lemma}, we have
\begin{equation}\label{jl2var}
  \begin{aligned}
  &\sum_{p_4\in A_{F,\varrho}^\varpi}\int_{\Lambda}\Vert a^*(j_{x_2,x_1,\sigma}^{(p_4,\varpi)})\Vert^2dx_2\leq
  \sum_{p_4\in A_{F,\varrho}^\varpi}\sum_{p_3\notin B_F^\sigma}\int_{\Lambda}
  \Big\vert\sum_{q_1\in B_F^\sigma}\zeta_{q_1-p_3,p_4,p_3}^{\varpi,\sigma}
  e^{iq_2(x_1-x_2)}\Big\vert^2dx_2\\
  &\leq\sum_{p_4\in A_{F,\varrho}^\varpi}\sum_{p_3\notin B_F^\sigma}
  \sum_{k}\vert\zeta_{k,p_4,p_3}^{\varpi,\sigma}\vert^2
  \leq\sum_{p_4\in A_{F,\varrho}^\varpi}\sum_{q_3\in B_F^\sigma}
  \sum_{k}\frac{a^2\chi_{q_3-k\notin B_F^\sigma}}
  {\big(\vert q_3-k\vert^2-\vert q_3\vert^2\big)^2}\lesssim N^{-\frac{2}{3}
  +\varrho+\varepsilon}.
  \end{aligned}
\end{equation}
With (\ref{define Mbog}-\ref{jl2var}) prepared above, we decompose $\Phi_{441}=\sum_{j=1}^{4}
\Phi_{441j}$ using the fact that
\begin{equation*}
  \begin{aligned}
  a(g_{x_2,\varpi})a(L_{x_1,\sigma}[\omega])=&
  a(I_{x_2,\varpi}[r^\prime])a(M_{x_1,\sigma}[r,\omega])
  +a(I_{x_2,\varpi}[r^\prime])a(L_{x_1,\sigma}[r])\\
  &+a(S_{x_2,\varpi}[r^\prime])a(M_{x_1,\sigma}[r,\omega])
  +a(S_{x_2,\varpi}[r^\prime])a(L_{x_1,\sigma}[r]).
  \end{aligned}
\end{equation*}
For $\Phi_{4411}$, using (\ref{ineq M}) and (\ref{jl2var}), and recall that $\omega=\frac{1}{3}+u$ with $u$ defined in (\ref{choose ubog}), we bound it by
\begin{equation}\label{Phi_4411bound}
  \begin{aligned}
  \vert\langle\Phi_{4411}\psi,\psi\rangle\vert&\lesssim
  \sum_{\sigma,\nu,\varpi,p_4}\chi_{p_4\in A_{F,\varrho}^\varpi}\int_{\Lambda^3}
  W(x_1-x_3)\Vert a(g_{x_3,\nu})a_{p_4,\varpi}a^*(I_{x_2,\varpi}[r^\prime])\psi\Vert\\
  &\quad\quad\times
  \Vert a^*(j_{x_2,x_1,\sigma}^{(p_4,\varpi)})
  a(L_{x_3,\nu}[\omega])a(M_{x_1,\sigma}[r,\omega])\psi\Vert dx_1dx_2dx_3\\
  &\lesssim\Big(\sum_{\sigma,\nu,\varpi,p_4}\chi_{p_4\in A_{F,\varrho}^\varpi}\int_{\Lambda^3}
  W(x_1-x_3)N\Vert a^*(I_{x_2,\varpi}[r^\prime])\psi\Vert^2\Big)^{\frac{1}{2}}\\
  &\quad\quad\times\Big(\sum_{\sigma,\nu,\varpi}\int_{\Lambda^2}
  W(x_1-x_3)N^{-\frac{2}{3}+\varrho+3\omega+\varepsilon}
  \Vert a(M_{x_1,\sigma}[r,\omega])\psi\Vert^2\Big)^{\frac{1}{2}}\\
  &\lesssim N^{\frac{2}{3}+\varrho+\frac{3}{2}u-\frac{r}{2}-\frac{r^\prime}{2}+\frac{\varepsilon}{2}}
  \Vert W\Vert_1\langle\mathcal{K}_s\psi,\psi\rangle
  \lesssim
  N^{-\frac{1}{3}+\varrho+\frac{3}{2}u-\frac{r}{2}-\frac{r^\prime}{2}+\frac{\varepsilon}{2}}
  \langle\mathcal{K}_s\psi,\psi\rangle.
  \end{aligned}
\end{equation}
For $\Phi_{4412}$, we need to use the fact that $\mathcal{N}_l[\omega]\leq\mathcal{N}_{ex}$:
\begin{equation}\label{Phi_4412bound}
  \begin{aligned}
  \vert\langle\Phi_{4412}\psi,\psi\rangle\vert&\lesssim
  \sum_{\sigma,\nu,\varpi,p_4}\chi_{p_4\in A_{F,\varrho}^\varpi}\int_{\Lambda^3}
  W(x_1-x_3)\Vert a(g_{x_3,\nu})a_{p_4,\varpi}a^*(I_{x_2,\varpi}[r^\prime])\psi\Vert\\
  &\quad\quad\times
  \Vert a^*(j_{x_2,x_1,\sigma}^{(p_4,\varpi)})a(L_{x_1,\sigma}[r])
  a(L_{x_3,\nu}[\omega])\psi\Vert dx_1dx_2dx_3\\
  &\lesssim\Big(\sum_{\sigma,\nu,\varpi,p_4}\chi_{p_4\in A_{F,\varrho}^\varpi}\int_{\Lambda^3}
  W(x_1-x_3)N\Vert a^*(I_{x_2,\varpi}[r^\prime])\psi\Vert^2\Big)^{\frac{1}{2}}\\
  &\quad\quad\times\Big(\sum_{\sigma,\nu,\varpi}\int_{\Lambda^2}
  W(x_1-x_3)N^{\varrho+r+\varepsilon}
  \Vert a(L_{x_3,\sigma}[\omega])\psi\Vert^2\Big)^{\frac{1}{2}}\\
  &\lesssim N^{\frac{2}{3}+\varrho+\frac{r}{2}-\frac{r^\prime}{2}+\frac{\varepsilon}{2}}
  \Vert W\Vert_1\langle\mathcal{K}_s\psi,\psi\rangle^\frac{1}{2}
  \langle\mathcal{N}_{ex}\psi,\psi\rangle^{\frac{1}{2}}\\
  &\lesssim
  N^{-\frac{1}{3}+\varrho+\frac{3}{2}u-\frac{r}{2}-\frac{r^\prime}{2}+\frac{\varepsilon}{2}}
  \langle\mathcal{K}_s\psi,\psi\rangle+
   N^{-\frac{1}{3}+\varrho-\frac{3}{2}u+\frac{3}{2}r-\frac{r^\prime}{2}+\frac{\varepsilon}{2}}
  \langle\mathcal{N}_{ex}\psi,\psi\rangle.
  \end{aligned}
\end{equation}
Similarly, we have
\begin{equation}\label{Phi_4413bound}
  \begin{aligned}
  \vert\langle\Phi_{4413}\psi,\psi\rangle\vert&\lesssim
  \sum_{\sigma,\nu,\varpi,p_4}\chi_{p_4\in A_{F,\varrho}^\varpi}\int_{\Lambda^3}
  W(x_1-x_3)\Vert a(g_{x_3,\nu})a^*(S_{x_2,\varpi}[r^\prime])a_{p_4,\varpi}\psi\Vert\\
  &\quad\quad\times
  \Vert a^*(j_{x_2,x_1,\sigma}^{(p_4,\varpi)})
  a(L_{x_3,\nu}[\omega])a(M_{x_1,\sigma}[r,\omega])\psi\Vert dx_1dx_2dx_3\\
  &\lesssim\Big(\sum_{\sigma,\nu,\varpi,p_4}\chi_{p_4\in B_{F}^\varpi}\int_{\Lambda^3}
  W(x_1-x_3)N^{\frac{5}{3}+r^\prime}\Vert a_{p_4,\varpi}\psi\Vert^2\Big)^{\frac{1}{2}}\\
  &\quad\quad\times\Big(\sum_{\sigma,\nu,\varpi}\int_{\Lambda^2}
  W(x_1-x_3)N^{-\frac{2}{3}+\varrho+3\omega+\varepsilon}
  \Vert a(M_{x_1,\sigma}[r,\omega])\psi\Vert^2\Big)^{\frac{1}{2}}\\
  &\lesssim N^{\frac{5}{6}+\frac{\varrho}{2}+\frac{3}{2}u+
  \frac{r^\prime}{2}-\frac{r}{2}+\frac{\varepsilon}{2}}
  \Vert W\Vert_1\langle\mathcal{K}_s\psi,\psi\rangle^\frac{1}{2}
  \langle\mathcal{N}_{ex}\psi,\psi\rangle^{\frac{1}{2}}\\
  &\lesssim
  N^{-\frac{1}{6}+\frac{\varrho}{2}+3u-\frac{3}{2}r
  +\frac{r^\prime}{2}+\frac{\varepsilon}{2}}
  \langle\mathcal{K}_s\psi,\psi\rangle+
   N^{-\frac{1}{6}+\frac{\varrho}{2}+\frac{r}{2}+\frac{r^\prime}{2}+\frac{\varepsilon}{2}}
  \langle\mathcal{N}_{ex}\psi,\psi\rangle
  \end{aligned}
\end{equation}
and
\begin{equation}\label{Phi_4414bound}
  \begin{aligned}
  \vert\langle\Phi_{4414}\psi,\psi\rangle\vert&\lesssim
  \sum_{\sigma,\nu,\varpi,p_4}\chi_{p_4\in A_{F,\varrho}^\varpi}\int_{\Lambda^3}
  W(x_1-x_3)\Vert a(g_{x_3,\nu})a^*(S_{x_2,\varpi}[r^\prime])a_{p_4,\varpi}\psi\Vert\\
  &\quad\quad\times
  \Vert a^*(j_{x_2,x_1,\sigma}^{(p_4,\varpi)})a(L_{x_1,\sigma}[r])
  a(L_{x_3,\nu}[\omega])\psi\Vert dx_1dx_2dx_3\\
  &\lesssim\Big(\sum_{\sigma,\nu,\varpi,p_4}\chi_{p_4\in B_{F}^\varpi}\int_{\Lambda^3}
  W(x_1-x_3)N^{\frac{5}{3}+r^\prime}\Vert a_{p_4,\varpi}\psi\Vert^2\Big)^{\frac{1}{2}}\\
  &\quad\quad\times\Big(\sum_{\sigma,\nu,\varpi}\int_{\Lambda^2}
  W(x_1-x_3)N^{\varrho+r+\varepsilon}
  \Vert a(L_{x_3,\nu}[\omega])\psi\Vert^2\Big)^{\frac{1}{2}}\\
  &\lesssim N^{\frac{5}{6}+\frac{\varrho}{2}+
  \frac{r^\prime}{2}+\frac{r}{2}+\frac{\varepsilon}{2}}
  \Vert W\Vert_1
  \langle\mathcal{N}_{ex}\psi,\psi\rangle\lesssim
   N^{-\frac{1}{6}+\frac{\varrho}{2}+\frac{r}{2}+\frac{r^\prime}{2}+\frac{\varepsilon}{2}}
  \langle\mathcal{N}_{ex}\psi,\psi\rangle
  \end{aligned}
\end{equation}
We then choose
\begin{equation}\label{choose rrprime}
  \begin{aligned}
  r&=-\frac{1}{4}+\frac{3}{4}\varrho+\frac{9}{4}u+\alpha+\gamma+\frac{\varepsilon}{2}
  =\frac{3}{4}\varrho+4\alpha+\frac{5}{2}\gamma+\frac{5}{4}\varepsilon\\
  r^\prime&=-\frac{5}{12}+\frac{5}{4}\varrho+\frac{3}{4}u+\alpha+\gamma+\frac{\varepsilon}{2}
  =-\frac{1}{3}+\frac{5}{4}\varrho+2\alpha+\frac{3}{2}\gamma+\frac{3}{4}\varepsilon
  \end{aligned}
\end{equation}
We will verify later that we indeed have $-\frac{11}{48}<r,r^\prime<\frac{1}{3}$. Combining (\ref{Phi_4411bound}-\ref{choose rrprime}), we conclude that
\begin{equation}\label{Phi_441bound}
  \pm\Phi_{441}\lesssim N^{-\alpha-\gamma}\mathcal{K}_s
  +N^{-\frac{1}{3}+\frac{3}{2}\varrho+3\alpha+2\gamma+\frac{3}{2}\varepsilon}\mathcal{N}_{ex}.
\end{equation}
\par For $\Phi_{442}$, we follow the notation given in (\ref{define Mbog}) and decompose $\Phi_{442}=\Phi_{4421}+\Phi_{4422}$ using the fact that
\begin{equation*}
  a(L_{x_1,\sigma})=a(M_{x_1,\sigma}[\vartheta,\omega])+a(L_{x_1,\sigma}[\vartheta]),
\end{equation*}
where $-\frac{11}{48}<\vartheta<\frac{1}{3}$ to be determined later. Moreover, by the second inequality of (\ref{xi l2 cut off}) in Lemma \ref{xi l2 lemma} (since we demand $0<\varrho<\frac{1}{3}$), together with (\ref{jl2}), we have
\begin{equation}\label{jl2cut}
  \begin{aligned}
  &\sum_{p_4\in P_{F,\varrho}^\varpi}
  \frac{1}{\vert p_4\vert^2-\vert k_F^\varpi\vert^2}
  \int_{\Lambda}\Vert a^*(j_{x_2,x_1,\sigma}^{(p_4,\varpi)})\Vert^2dx_2
  \leq\sum_{p_4\in P_{F,\varrho}^\varpi}\sum_{p_3\notin B_F^\sigma}
  \sum_{k}\frac{\vert\zeta_{k,p_4,p_3}^{\varpi,\sigma}\vert^2}
  {\vert p_4\vert^2-\vert k_F^\varpi\vert^2}\\
  &\leq\sum_{q_3\in B_F^\sigma}\sum_{q_4\in B_F^\varpi}
  \sum_{k}\frac{\vert\xi_{k,q_4,q_3}^{\varpi,\sigma}\vert^2
  \chi_{q_4+k\in P_{F,\varrho}^\varpi}}
  {\vert q_4+k\vert^2-\vert k_F^\varpi\vert^2}
  \lesssim N^{-\frac{2}{3}-\varrho}\ln N.
  \end{aligned}
\end{equation}
Therefore, by (\ref{ineq M}) and (\ref{jl2cut}), and recall that $\omega=\frac{1}{3}+u$ with $u>0$ chosen as in (\ref{choose ubog}), we have
\begin{equation}\label{Phi_4421bound}
  \begin{aligned}
  \vert\langle\Phi_{4421}\psi&,\psi\rangle\vert\lesssim
  \sum_{\sigma,\nu,\varpi,p_4}\chi_{p_4\in P_{F,\varrho}^\varpi}
  \int_{\Lambda^3}W(x_1-x_3)\Vert a(g_{x_3,\nu})a^*(g_{x_2,\varpi})a_{p_4,\varpi}\psi\Vert\\
  &\quad\quad\quad\quad\times
  \Vert a^*(j_{x_2,x_1,\sigma}^{p_4,\varpi})a(L_{x_3,\nu}[\omega])a(M_{x_1,\sigma}
  [\vartheta,\omega])\psi\Vert dx_1dx_2dx_3\\
  &\lesssim\Big(\sum_{\sigma,\nu,\varpi,p_4}\chi_{p_4\in P_{F,\varrho}^\varpi}
  \int_{\Lambda^3}W(x_1-x_3)N^2
  \big(\vert p_4\vert^2-\vert k_F^\varpi\vert^2\big)
  \Vert a_{p_4,\varpi}\psi\Vert^2\Big)^{\frac{1}{2}}\\
  &\quad\quad\times
  \Big(\sum_{\sigma,\nu,\varpi}
  \int_{\Lambda^3}W(x_1-x_3)N^{-\frac{2}{3}-\varrho+3\omega}\ln N
  \Vert a(M_{x_1,\sigma}
  [\vartheta,\omega])\psi\Vert^2\Big)^{\frac{1}{2}}\\
  &\lesssim N^{1+\frac{3}{2}u-\frac{\vartheta}{2}-\frac{\varrho}{2}}
  (\ln N)^{\frac{1}{2}}\Vert W\Vert_1
  \langle\mathcal{K}_s\psi,\psi\rangle
  \lesssim N^{\frac{3}{2}u-\frac{\vartheta}{2}-\frac{\varrho}{2}}(\ln N)^{\frac{1}{2}}
  \langle\mathcal{K}_s\psi,\psi\rangle.
  \end{aligned}
\end{equation}
Notice that $\mathcal{N}_l[\omega]\leq\mathcal{N}_{ex}$, we have
\begin{equation}\label{Phi_4422bound}
  \begin{aligned}
  \vert\langle\Phi_{4422}\psi&,\psi\rangle\vert\lesssim
  \sum_{\sigma,\nu,\varpi,p_4}\chi_{p_4\in P_{F,\varrho}^\varpi}
  \int_{\Lambda^3}W(x_1-x_3)\Vert a(g_{x_3,\nu})a^*(g_{x_2,\varpi})a_{p_4,\varpi}\psi\Vert\\
  &\quad\quad\quad\quad\times
  \Vert a^*(j_{x_2,x_1,\sigma}^{p_4,\varpi})a(L_{x_1,\sigma}
  [\vartheta])a(L_{x_3,\nu}[\omega])\psi\Vert dx_1dx_2dx_3\\
  &\lesssim\Big(\sum_{\sigma,\nu,\varpi,p_4}\chi_{p_4\in P_{F,\varrho}^\varpi}
  \int_{\Lambda^3}W(x_1-x_3)N^2
  \big(\vert p_4\vert^2-\vert k_F^\varpi\vert^2\big)
  \Vert a_{p_4,\varpi}\psi\Vert^2\Big)^{\frac{1}{2}}\\
  &\quad\quad\times
  \Big(\sum_{\sigma,\nu,\varpi}
  \int_{\Lambda^3}W(x_1-x_3)N^{\vartheta-\varrho}\ln N
  \Vert a(L_{x_3,\nu}[\omega])\psi\Vert^2\Big)^{\frac{1}{2}}\\
  &\lesssim N^{1+\frac{\vartheta}{2}-\frac{\varrho}{2}}
  (\ln N)^{\frac{1}{2}}\Vert W\Vert_1
  \langle\mathcal{K}_s\psi,\psi\rangle^\frac{1}{2}
  \langle\mathcal{N}_{ex}\psi,\psi\rangle^\frac{1}{2}\\
  &\lesssim N^{\frac{3}{2}u-\frac{\vartheta}{2}-\frac{\varrho}{2}}(\ln N)^{\frac{1}{2}}
  \langle\mathcal{K}_s\psi,\psi\rangle
  +N^{-\frac{3}{2}u+\frac{3}{2}\vartheta-\frac{\varrho}{2}}(\ln N)^{\frac{1}{2}}
  \langle\mathcal{N}_{ex}\psi,\psi\rangle.
  \end{aligned}
\end{equation}
We then choose
\begin{equation}\label{choose theta}
  \vartheta=3u-\varrho+2\alpha+2\gamma+2\varepsilon
  =\frac{1}{3}-\varrho+6\alpha+4\gamma+4\varepsilon.
\end{equation}
We will verify later that we indeed have $-\frac{11}{48}<\vartheta<\frac{1}{3}$. Combining (\ref{Phi_4421bound}-\ref{choose theta}), we conclude that
\begin{equation}\label{Phi_442bound}
  \pm\Phi_{442}\lesssim N^{-\alpha-\gamma}\mathcal{K}_s
  +N^{\frac{1}{3}-2\varrho+7\alpha+5\gamma+5\varepsilon}\mathcal{N}_{ex}.
\end{equation}
To optimize (\ref{Phi_441bound}) and (\ref{Phi_442bound}) (we demand $\varepsilon>0$ small enough), we choose
\begin{equation}\label{choose rho}
  \varrho=\frac{2}{7}\Big(\frac{2}{3}+4\alpha+3\gamma\Big).
\end{equation}
Since we have required $0<\gamma<\alpha<\frac{1}{42}$, indeed $0<\varrho<\frac{1}{3}$, plugging (\ref{choose rho}) into (\ref{choose rrprime}) and (\ref{choose theta}), we have
\begin{equation}\label{rrprimetheta}
  \begin{aligned}
  r&=\frac{1}{7}+\frac{34}{7}\alpha+\frac{22}{7}\gamma+\frac{5}{4}\varepsilon\\
  r^\prime&=-\frac{2}{21}+\frac{24}{7}\alpha+\frac{18}{7}\gamma+\frac{3}{4}\varepsilon\\
  \vartheta&=\frac{1}{7}+\frac{34}{7}\alpha+\frac{22}{7}\gamma+4\varepsilon
  \end{aligned}
\end{equation}
and we can verify that indeed $-\frac{11}{48}<r,r^\prime,\vartheta<\frac{1}{3}$ since we demand $\varepsilon>0$ small enough and fixed. We then conclude that
\begin{equation}\label{Phi_4bound}
  \pm\Phi_4\lesssim N^{-\alpha-\gamma}\mathcal{K}_s+
  N^{-\frac{1}{21}+\frac{33}{7}\alpha+\frac{23}{7}\gamma+\varepsilon}\mathcal{N}_{ex}
\end{equation}
with the further requirement that $0<\gamma<\alpha<\frac{1}{168}$, and for some $\varepsilon>0$ small enough but fixed.
\vspace{1em}
\par  For $[\Omega,\tilde{B}]$, $\tilde{\Phi}_4$ can be analysed similarly with a slight modification of (\ref{notation j}). For the rest of the estimate, we only need to replace the estimate $\Vert W\Vert_1\lesssim a$ by $N^{\frac{1}{3}}\Vert\nabla\eta\Vert_1\lesssim aN^{-\alpha}$, and use the second inequality in (\ref{energy est bog u}) in Lemma \ref{energy est bog lem} rather that the first one. By (\ref{Phi_4bound}), we conclude via Lemmas \ref{control eBtiled on K_s lemma} and \ref{control eBtilde on N_ex lemma}:
\begin{equation}\label{Phi_4 bound}
\begin{aligned}
  \pm\int_{0}^{1}\int_{t}^{1}e^{-s\tilde{B}}
  (\Phi_4+\tilde{\Phi}_4)e^{s\tilde{B}}&dsdt
  \lesssim N^{-\frac{1}{21}+\frac{33}{7}\alpha+\frac{23}{7}\gamma+\varepsilon}
  (\mathcal{N}_{ex}+1)\\
  &+\big(N^{-\alpha-\gamma}+
  N^{-\frac{1}{21}-\frac{2}{33}+\frac{33}{7}\alpha+\frac{23}{7}\gamma+\varepsilon}
  \big)\big(\mathcal{K}_s+N^{\frac{1}{3}+\alpha}
  \big)
\end{aligned}
\end{equation}
with the further requirement that $0<\gamma<\alpha<\frac{1}{168}$, and for some $\varepsilon>0$ arbitrary but small.

\begin{flushleft}
  \textbf{Conclusion of Lemma \ref{cal com bog lemma}:}
\end{flushleft}
Collecting (\ref{Phi_1 conclusion}), (\ref{E_phi_1}), (\ref{Phi_2boundfinale}), (\ref{Phi_3 bound}) and (\ref{Phi_4 bound}), we close the proof of Lemma \ref{cal com bog lemma}.
\end{proof}

\begin{proof}[Proof of Proposition \ref{p3}]
  \par Proposition \ref{p3} is deduced by combining Lemmas \ref{control eBtiled on K_s lemma}-\ref{cal com bog lemma} and equation (\ref{define Z_N}).
\end{proof}

\section*{Acknowledgments}
X.Chen is supported by NSF grant DMS-2406620, and Z. Zhang is partially supported by the National Key R\&D Program of China under Grant 2023YFA1008801 and NSF of China under Grant 12288101. We deeply appreciate the delightful and fruitful discussion with C.Brennecke and B.Schlein, who gave us many kind and constructive advices and comments, and different view points to the mathematics and physics.

\appendix
\setcounter{equation}{0}

\renewcommand\theequation{A.\arabic{equation}} 
\section{Analysis of (\ref{2nd})}\label{analysis}
\par We can rewrite
\begin{equation}\label{app2}
  \int_{\vert q\vert<r_0}\int_{\vert p\vert<r_0}\int_{k\in\mathbb{R}^3}
      \Big(\frac{1}{\vert k\vert^2}-
      \frac{2\chi_{p-k\notin B^{\sigma}_F}\chi_{q+k\notin B^{\nu}_F}}
    {\big(\vert q+k\vert^2+\vert p-k\vert^2-\vert q\vert^2-\vert p\vert^2\big)}\Big)
    =I_1+I_2
\end{equation}
with
\begin{equation}\label{app3}
  \begin{aligned}
  I_1\coloneqq&\int_{\vert q\vert<r_0}\int_{\vert p\vert<r_0}\int_{k\in\mathbb{R}^3}
      \Big(\frac{1}{\vert k\vert^2}-
      \frac{2}
    {\big(\vert q+k\vert^2+\vert p-k\vert^2-\vert q\vert^2-\vert p\vert^2\big)}\Big)\\
  I_2\coloneqq&\int_{\vert q\vert<r_0}\int_{\vert p\vert<r_0}\int_{k\in\mathbb{R}^3}
      \Big(\frac{2(1-\chi_{p-k\notin B^{\sigma}_F}\chi_{q+k\notin B^{\nu}_F})}
    {\big(\vert q+k\vert^2+\vert p-k\vert^2-\vert q\vert^2-\vert p\vert^2\big)}\Big)
  \end{aligned}
\end{equation}
$I_2$ is exactly the integral in the middle of (\ref{2nd}). In fact, we use
\begin{equation}\label{app4}
  1-\chi_{p-k\notin B^{\sigma}_F}\chi_{q+k\notin B^{\nu}_F}=
  \chi_{p-k\in B^{\sigma}_F}+\chi_{q+k\in B^{\nu}_F}-\chi_{p-k\in B^{\sigma}_F}\chi_{q+k\in B^{\nu}_F}
\end{equation}
to write $I_2=I_{21}+I_{22}+I_{23}$. By symmetry, it is easy to check that $I_{21}=I_{22}$ and $I_{23}=0$. By a simple change of variables we conclude that
\begin{equation}\label{app5}
  I_2= -4\int_{\vert x_1\vert<r_0}
        \int_{\vert x_2\vert<r_0}
        \int_{\vert x_3\vert<r_0}
        \int_{x_4}
        \frac{\delta(x_1+x_2=x_3+x_4)}{\vert x_1\vert^2+\vert x_2\vert^2
        -\vert x_3\vert^2-\vert x_4\vert^2}.
\end{equation}
On the other hand, notice that
\begin{equation}\label{app1}
  \frac{1}{2}\big(\vert q+k\vert^2+\vert p-k\vert^2-\vert q\vert^2-\vert p\vert^2\big)
  =\vert k\vert^2+k\cdot(q-p)
  =\Big\vert k+\frac{q-p}{2}\Big\vert^2-\Big\vert\frac{q-p}{2}\Big\vert^2.
\end{equation}
Since the Fourier transform of a radial function $f$ in $\mathbb{R}^n$ is given by
\begin{equation}\label{fourier transform radial}
 \widehat{f}(\xi)=\frac{2\pi}{\vert\xi\vert^{\frac{n-1}{2}}}
 \int_{0}^{\infty}f(r)J_{\frac{n}{2}-1}\big(2\pi\vert\xi\vert r\big)r^{\frac{n}{2}}dr.
\end{equation}
If we denote the distribution for $p,q$ fixed and $x\in\mathbb{R}^3$
\begin{equation}\label{define F distribution}
  F(x)\coloneqq \frac{1}{\vert x\vert^2}-
      \frac{2}
    {\big(\vert q+x\vert^2+\vert p-x\vert^2-\vert q\vert^2-\vert p\vert^2\big)}
\end{equation}
Using (\ref{app1}) and (\ref{fourier transform radial}) (in the sense of improper integral), and the fact that for $a>0$
\begin{equation}\label{app0000}
  \int_{0}^{\infty}\frac{r\sin(ar)}{b^2-r^2}dr=-\frac{\pi}{2}\cos(ab),
\end{equation}
we can calculate $\widehat{F}(\xi)$
\begin{equation}\label{fourier F}
  \widehat{F}(\xi)=\frac{\pi}{\vert\xi\vert}\Big(1-e^{\pi i(q-p)\xi}
  \cos\big(\pi\vert q-p\vert\cdot\vert\xi\vert\big)\Big).
\end{equation}
Since the limit
\begin{equation}\label{appp}
  \lim_{\vert\xi\vert\to0}\re \widehat{F}(\xi)=0
\end{equation}
exists, and obviously $I_1$ is real-valued, then we can check that
\begin{equation}\label{apdpfp}
  I_1=\re I_1=\lim_{\vert\xi\vert\to0}\re \widehat{F}(\xi)=0.
\end{equation}
This can be proved by testing $\widehat{F}$ with a family of approximation identity and taking the real part. We omit further details.

\renewcommand\theequation{B.\arabic{equation}} 
\section{Proof of (\ref{claim appB})}\label{claim}
\par Recall that
\begin{equation}\label{f_s_N}
  f_{s_N}=\bigwedge_{\sigma,k\in B_F^{\sigma}}f_{k,\sigma},
\end{equation}
where in Appendix \ref{claim}, $s_N$ always stands for the unique element of $\mathcal{P}^{(N)}$ defined in Section \ref{define P^(N)}. By direct calculation, we have
\begin{equation}\label{B1}
  \begin{aligned}
  \langle\mathcal{N}_{ex}^ne^{t\tilde{B}}f_{s_N},e^{t\tilde{B}}f_{s_N}\rangle
  &=\sum_{m_1,m_2=0}^\infty\frac{t^{(m_1+m_2)}}{m_1!m_2!}
  \langle\mathcal{N}_{ex}^nB^{m_1}f_{s_N},B^{m_2}f_{s_N}\rangle.
  \end{aligned}
\end{equation}
Recall the definition of $\tilde{B}$ in (\ref{define B tilde}), (\ref{define A tilde}) and (\ref{define xi_k,q,p,nu,sigma}), if we adopt the notations
\begin{equation}\label{define mathbbA}
  \mathbb{A}_{1}=\tilde{A},\quad\mathbb{A}_{-1}=-\tilde{A}^*,
\end{equation}
and we let for $m\in\mathbb{N}$,
\begin{equation}\label{define d_m}
 d_m=(j_1,j_2,\dots,j_m)\in\{+1,-1\}^m.
\end{equation}
Moreover, we let $d_0=(0)$. We define operations on $d_m$:
\begin{equation}\label{define op d_m}
  \delta_+(d_m)=\sum_{j_i>0}j_i,\quad\delta_-(d_m)=-\sum_{j_i<0}j_i,
\end{equation}
and
\begin{equation}\label{define op d_m sum}
  \delta(d_m)=\sum_{i=1}^{m}j_i=\delta_+(d_m)-\delta_-(d_m).
\end{equation}
Particularly, $\delta(d_0)=0$. We also let
\begin{equation}\label{define mathbbA(d_m)}
  \mathbb{A}(d_m)=\mathbb{A}_{j_1}\mathbb{A}_{j_2}\cdots\mathbb{A}_{j_m}.
\end{equation}
In particular, $\mathbb{A}(d_0)\coloneqq \mathbb{I}$ is the identical operator. If we adopt the notation
\begin{equation}\label{d_m conj}
  d_m^*=(-j_m,-j_{m-1},\dots,-j_1),
\end{equation}
we have
\begin{equation}\label{mathbbAconj}
  \mathbb{A}^*(d_m)=(-1)^m\mathbb{A}(d_m^*).
\end{equation}
 With notations above, we have
\begin{equation}\label{B^m}
  B^m=\frac{1}{2^m}\sum_{d_m}\mathbb{A}(d_m),
\end{equation}
and by (\ref{B1}), we obtain
\begin{equation}\label{BB1}
  \begin{aligned}
   &\langle\mathcal{N}_{ex}^ne^{t\tilde{B}}f_{s_N},e^{t\tilde{B}}f_{s_N}\rangle
  =\sum_{m_1,m_2=0}^\infty\frac{(-1)^{m_2}t^{(m_1+m_2)}}{m_1!m_2!}
  \langle {B^{m_2}}^*\mathcal{N}_{ex}^nB^{m_1}f_{s_N},f_{s_N}\rangle\\
  &=\sum_{m_1,m_2=0}^\infty\frac{1}{m_1!m_2!}\Big(\frac{t}{2}\Big)^{(m_1+m_2)}
  \sum_{d_{m_1},d_{m_2}}
  \langle \mathbb{A}(d^*_{m_2})\mathcal{N}_{ex}^n\mathbb{A}(d_{m_1})f_{s_N},f_{s_N}\rangle.
  \end{aligned}
\end{equation}
By (\ref{formulae of crea and annihi ops}) and (\ref{define A tilde}), we can check that
\begin{equation}\label{BBB1}
  \begin{aligned}
  &\langle\mathcal{N}_{ex}^ne^{t\tilde{B}}f_{s_N},e^{t\tilde{B}}f_{s_N}\rangle\\
  &=\sum_{m_1,m_2=0}^\infty
  \sum_{\substack{d_{m_1},d_{m_2}\\\delta(d_{m_2}^*,d_{m_1})=0}}
  \frac{\big(2\delta(d_{m_1})\big)^n}{m_1!m_2!}
  \Big(\frac{t}{2}\Big)^{(m_1+m_2)}
  \langle \mathbb{A}(d_{m_2}^*,d_{m_1})f_{s_N},f_{s_N}\rangle\\
  &=\sum_{m=1}^\infty\sum_{\substack{m_1,m_2\\m_1+m_2=2m}}
  \sum_{\substack{d_{m_1},d_{m_2}\\\delta(d_{m_2}^*,d_{m_1})=0}}
  \frac{\big(2\delta(d_{m_1})\big)^n}{m_1!m_2!}\Big(\frac{t}{2}\Big)^{2m}
  \langle \mathbb{A}(d_{m_2}^*,d_{m_1})f_{s_N},f_{s_N}\rangle.
  \end{aligned}
\end{equation}
Using (\ref{useful bound}-\ref{xi l2preposition}) in Lemma \ref{xi l2 lemma}, together with (\ref{formulae of crea and annihi ops}) and especially, we notice that $a_{k,\sigma}^2=0$, we obtain the following rough bound for some specific constant $c_0$:
\begin{equation}\label{rough bound}
  \begin{aligned}
  &\sum_{\substack{d_{m_1},d_{m_2}\\\delta(d_{m_2}^*,d_{m_1})=0}}
  \big(2\delta(d_{m_1})\big)^n
  \vert\langle \mathbb{A}(d_{m_2}^*,d_{m_1})f_{s_N},f_{s_N}\rangle\vert\\
  &\leq(4m)^nC_{2m}^m(2m)!\Big(\sum_{\substack{\sigma,\nu}}
  \sum_{p\in B_F^\sigma,q\in B_F^\nu}\sum_k\vert\xi_{k,q,p}^{\nu,\sigma}\vert^2\Big)^m\\
  &\leq(4m)^nC_{2m}^m(2m)!c_0^{2m}N^{-\frac{m}{3}+m\varepsilon}.
  \end{aligned}
\end{equation}
Since
\begin{equation*}
  (2m)!=(2m)!!(2m-1)!!\leq (2m)!!^2\leq 2^{2m}m!^2.
\end{equation*}
We attain
\begin{equation}\label{rough bound more}
  \begin{aligned}
  \langle\mathcal{N}_{ex}^ne^{t\tilde{B}}f_{s_N},e^{t\tilde{B}}f_{s_N}\rangle
  &\leq \sum_{m=1}^\infty\sum_{\substack{m_1,m_2\\m_1+m_2=2m}}
  \big(c_0^2\vert t\vert^2N^{-\frac{1}{3}+\varepsilon}\big)^m (4m)^n\frac{(2m)!}{m_1!m_2!}\\
  &\leq \sum_{m=1}^\infty (4c_0^2N^{-\frac{1}{3}+\varepsilon}\big)^m (4m)^n
  \lesssim N^{-\frac{1}{3}+\varepsilon}\lesssim 1.
  \end{aligned}
\end{equation}
Hence we reach (\ref{claim appB}).

\bibliographystyle{abbrv}
\bibliography{ref}

\begin{thebibliography}{10}

\bibitem{workshopfermi}
Mini-workshop: {M}athematics of many-body fermionic systems.
\newblock {\em Oberwolfach Rep.}, 20(4):2809--2849, 2023.
\newblock Abstracts from the mini-workshop held October 29--November 4, 2023,
  Organized by Nikolai Leopold, Phan Th\`anh Nam and Chiara Saffirio.

\bibitem{soviet}
A.~A. Abrikosov and I.~M. Khalatnikov.
\newblock Concerning a model for a non-ideal fermi gas.
\newblock {\em Soviet Phys. JETP}.

\bibitem{meanfieldfermi2019}
N.~Benedikter, P.~T. Nam, M.~Porta, B.~Schlein, and R.~Seiringer.
\newblock Optimal upper bound for the correlation energy of a {F}ermi gas in
  the mean-field regime.
\newblock {\em Comm. Math. Phys.}, 374(3):2097--2150, 2020.

\bibitem{meanfieldfermi2021}
N.~Benedikter, P.~T. Nam, M.~Porta, B.~Schlein, and R.~Seiringer.
\newblock Correlation energy of a weakly interacting {F}ermi gas.
\newblock {\em Invent. Math.}, 225(3):885--979, 2021.

\bibitem{Seiringerfermi}
N.~Benedikter, M.~Porta, B.~Schlein, and R.~Seiringer.
\newblock Correlation energy of a weakly interacting {F}ermi gas with large
  interaction potential.
\newblock {\em Arch. Ration. Mech. Anal.}, 247(4):Paper No. 65, 57, 2023.

\bibitem{BEC2018}
C.~Boccato, C.~Brennecke, S.~Cenatiempo, and B.~Schlein.
\newblock Complete {B}ose-{E}instein condensation in the {G}ross-{P}itaevskii
  regime.
\newblock {\em Comm. Math. Phys.}, 359(3):975--1026, 2018.

\bibitem{2018Bogoliubov}
C.~Boccato, C.~Brennecke, S.~Cenatiempo, and B.~Schlein.
\newblock Bogoliubov theory in the {G}ross-{P}itaevskii limit.
\newblock {\em Acta Math.}, 222(2):219--335, 2019.

\bibitem{boccatoBrenCena2020optimal}
C.~Boccato, C.~Brennecke, S.~Cenatiempo, and B.~Schlein.
\newblock Optimal rate for {B}ose-{E}instein condensation in the
  {G}ross-{P}itaevskii regime.
\newblock {\em Comm. Math. Phys.}, 376(2):1311--1395, 2020.

\bibitem{beyondGP}
C.~Brennecke, M.~Caporaletti, and B.~Schlein.
\newblock Excitation spectrum of {B}ose gases beyond the {G}ross-{P}itaevskii
  regime.
\newblock {\em Rev. Math. Phys.}, 34(9):Paper No. 2250027, 61, 2022.

\bibitem{dynamicChenHol}
X.~Chen and J.~Holmer.
\newblock The derivation of the {$\mathbb{T}^3$} energy-critical {NLS} from
  quantum many-body dynamics.
\newblock {\em Invent. Math.}, 217(2):433--547, 2019.

\bibitem{me}
X.~Chen, J.~Wu, and Z.~Zhang.
\newblock The second order 2d behaviors of a 3d bose gases in the
  gross-pitaevskii regime, 2024.
\newblock arXiv: 2401.15540, 142pp.

\bibitem{Phanfermimeanfield}
M.~R. Christiansen, C.~Hainzl, and P.~T. Nam.
\newblock The random phase approximation for interacting {F}ermi gases in the
  mean-field regime.
\newblock {\em Forum Math. Pi}, 11:Paper No. e32, 131, 2023.

\bibitem{christiansen2024correlationenergyelectrongas}
M.~R. Christiansen, C.~Hainzl, and P.~T. Nam.
\newblock The correlation energy of the electron gas in the mean-field regime,
  2024.
\newblock arXiv: 2405.01386, 60pp.

\bibitem{Linniksergodicmethod}
J.~S. Ellenberg, P.~Michel, and A.~Venkatesh.
\newblock Linnik's ergodic method and the distribution of integer points on
  spheres.
\newblock In {\em Automorphic representations and {$L$}-functions}, volume~22
  of {\em Tata Inst. Fundam. Res. Stud. Math.}, pages 119--185. Tata Inst.
  Fund. Res., Mumbai, 2013.

\bibitem{dy2006}
L.~Erd\H{o}s, B.~Schlein, and H.-T. Yau.
\newblock Derivation of the {G}ross-{P}itaevskii hierarchy for the dynamics of
  {B}ose-{E}instein condensate.
\newblock {\em Comm. Pure Appl. Math.}, 59(12):1659--1741, 2006.

\bibitem{dy2007}
L.~Erd\H{o}s, B.~Schlein, and H.-T. Yau.
\newblock Derivation of the cubic non-linear {S}chr\"odinger equation from
  quantum dynamics of many-body systems.
\newblock {\em Invent. Math.}, 167(3):515--614, 2007.

\bibitem{dy2009}
L.~Erd\H{o}s, B.~Schlein, and H.-T. Yau.
\newblock Rigorous derivation of the {G}ross-{P}itaevskii equation with a large
  interaction potential.
\newblock {\em J. Amer. Math. Soc.}, 22(4):1099--1156, 2009.

\bibitem{dy2010}
L.~Erd\H{o}s, B.~Schlein, and H.-T. Yau.
\newblock Derivation of the {G}ross-{P}itaevskii equation for the dynamics of
  {B}ose-{E}instein condensate.
\newblock {\em Ann. of Math. (2)}, 172(1):291--370, 2010.

\bibitem{dilutefermiBog}
M.~Falconi, E.~L. Giacomelli, C.~Hainzl, and M.~Porta.
\newblock The dilute {F}ermi gas via {B}ogoliubov theory.
\newblock {\em Ann. Henri Poincar\'e}, 22(7):2283--2353, 2021.

\bibitem{bose2ndthermo1}
S.~r. Fournais and J.~P. Solovej.
\newblock The energy of dilute {B}ose gases.
\newblock {\em Ann. of Math. (2)}, 192(3):893--976, 2020.

\bibitem{bose2ndthermo2}
S.~r. Fournais and J.~P. Solovej.
\newblock The energy of dilute {B}ose gases {II}: the general case.
\newblock {\em Invent. Math.}, 232(2):863--994, 2023.

\bibitem{fermiupper}
E.~L. Giacomelli.
\newblock An optimal upper bound for the dilute {F}ermi gas in three
  dimensions.
\newblock {\em J. Funct. Anal.}, 285(8):Paper No. 110073, 73, 2023.

\bibitem{giacomelli2024}
E.~L. Giacomelli.
\newblock An optimal lower bound for the low density fermi gas in three
  dimensions, 2024.
\newblock arXiv: 2410.08904, 38pp.

\bibitem{2024huangyangformulalowdensityfermi}
E.~L. Giacomelli, C.~Hainzl, P.~T. Nam, and R.~Seiringer.
\newblock The huang-yang formula for the low-density fermi gas: upper bound,
  2024.
\newblock arXiv: 2409.17914, 48pp.

\bibitem{haberberger2023}
F.~Haberberger, C.~Hainzl, P.~T. Nam, R.~Seiringer, and A.~Triay.
\newblock The free energy of dilute bose gases at low temperatures, 2024.
\newblock arXiv: 2304.02405, 68pp.

\bibitem{haberberger2024}
F.~Haberberger, C.~Hainzl, B.~Schlein, and A.~Triay.
\newblock Upper bound for the free energy of dilute bose gases at low
  temperature, 2024.
\newblock arXiv: 2405.03378, 54pp.

\bibitem{bogsimplified}
C.~Hainzl, B.~Schlein, and A.~Triay.
\newblock Bogoliubov theory in the {G}ross-{P}itaevskii limit: a simplified
  approach.
\newblock {\em Forum Math. Sigma}, 10:Paper No. e90, 39, 2022.

\bibitem{HeathBrown+1999+883+892}
D.~Heath-Brown.
\newblock {\em Lattice points in the sphere}, pages 883--892.
\newblock De Gruyter, Berlin, Boston, 1999.

\bibitem{huangyang}
K.~Huang and C.~N. Yang.
\newblock Quantum-mechanical many-body problem with hard-sphere interaction.
\newblock {\em Phys. Rev.}, 105:767--775, Feb 1957.

\bibitem{polarizedupper}
A.~r. B.~k. Lauritsen and R.~Seiringer.
\newblock Ground state energy of the dilute spin-polarized {F}ermi gas: upper
  bound via cluster expansion.
\newblock {\em J. Funct. Anal.}, 286(7):Paper No. 110320, 98, 2024.

\bibitem{polarizedlower}
A.~r. B.~k. Lauritsen and R.~Seiringer.
\newblock Pressure of a dilute spin-polarized {F}ermi gas: lower bound.
\newblock {\em Forum Math. Sigma}, 12:Paper No. e78, 37, 2024.

\bibitem{LHY106}
T.~D. Lee, K.~Huang, and C.~N. Yang.
\newblock Eigenvalues and eigenfunctions of a bose system of hard spheres and
  its low-temperature properties.
\newblock {\em Phys. Rev.}, 106:1135--1145, Jun 1957.

\bibitem{TDLEECNYANG112}
T.~D. Lee and C.~N. Yang.
\newblock Low-temperature behavior of a dilute bose system of hard spheres. i.
  equilibrium properties.
\newblock {\em Phys. Rev.}, 112:1419--1429, Dec 1958.

\bibitem{spectrum}
M.~Lewin, P.~T. Nam, S.~Serfaty, and J.~P. Solovej.
\newblock Bogoliubov spectrum of interacting {B}ose gases.
\newblock {\em Comm. Pure Appl. Math.}, 68(3):413--471, 2015.

\bibitem{BEC2002}
E.~H. Lieb and R.~Seiringer.
\newblock Proof of {B}ose-{E}instein condensation for dilute trapped gases.
\newblock {\em Phys. Rev. Lett.}, 88:170409, Apr 2002.

\bibitem{2005fermiphy}
E.~H. Lieb, R.~Seiringer, and J.~P. Solovej.
\newblock Ground-state energy of the low-density fermi gas.
\newblock {\em Phys. Rev. A}, 71:053605, May 2005.

\bibitem{lieb2005mathematics}
E.~H. Lieb, R.~Seiringer, J.~P. Solovej, and J.~Yngvason.
\newblock {\em {The mathematics of the Bose gas and its condensation}},
  volume~34.
\newblock Springer Science \& Business Media, 2005.

\bibitem{meanfield2011seiR}
R.~Seiringer.
\newblock The excitation spectrum for weakly interacting bosons.
\newblock {\em Comm. Math. Phys.}, 306(2):565--578, 2011.

\end{thebibliography}

\end{document}